 \DeclareTextCommandDefault{\nobreakspace}{\leavevmode\nobreak\ }
\newcommand{\ba}{\begin{array}}
\newcommand{\ea}{\end{array}}
\newcommand{\refa}[1]{{\color{red}    {#1}\ifthenelse{\equal{#1}{}}{}{\,}(1)}}
\newcommand{\refb}[1]{{\color{blue}   {#1}\ifthenelse{\equal{#1}{}}{}{\,}(2)}}
\newcommand{\refc}[1]{{\color{violet} {#1}\ifthenelse{\equal{#1}{}}{}{\,}(3)}}
\newcommand{\refd}[1]{{\color{purple} {#1}\ifthenelse{\equal{#1}{}}{}{\,}(4)}}
\newcommand{\refe}[1]{{\color{cyan}   {#1}\ifthenelse{\equal{#1}{}}{}{\,}(5)}}
\newcommand{\reff}[1]{{\color{magenta}{#1}\ifthenelse{\equal{#1}{}}{}{\,}(6)}}
\newcommand{\refg}[1]{{\color{brown}  {#1}\ifthenelse{\equal{#1}{}}{}{\,}(7)}}
\newcommand{\refh}[1]{{\color{orange} {#1}\ifthenelse{\equal{#1}{}}{}{\,}(8)}}
\newcommand{\refs}[1]{{\color{gray} {#1}\ifthenelse{\equal{#1}{}}{}{\,}($\star$)}}
   \newcommand\SkipToFmtEnd{}%
   \newcommand\EndFmtInput{}%
   \long\def\SkipToFmtEnd#1\EndFmtInput{}%
\newcommand\ReadOnlyOnce[1]{\@ifundefined{#1}{\@namedef{#1}{}}\SkipToFmtEnd}
\DeclareFontFamily{OT1}{cmtex}{}
\DeclareFontShape{OT1}{cmtex}{m}{n}
  {<5><6><7><8>cmtex8
   <9>cmtex9
   <10><10.95><12><14.4><17.28><20.74><24.88>cmtex10}{}
\DeclareFontShape{OT1}{cmtex}{m}{it}
  {<-> ssub * cmtt/m/it}{}
\DeclareFontShape{OT1}{cmtt}{bx}{n}
  {<5><6><7><8>cmtt8
   <9>cmbtt9
   <10><10.95><12><14.4><17.28><20.74><24.88>cmbtt10}{}
\DeclareFontShape{OT1}{cmtex}{bx}{n}
  {<-> ssub * cmtt/bx/n}{}
\newcommand{\Conid}[1]{\mathit{#1}}
\newcommand{\Varid}[1]{\mathit{#1}}
\newcommand{\anonymous}{\kern0.06em \vbox{\hrule\@width.5em}}
\renewcommand{\geq}{\geqslant}
\newdimen\mathindent\mathindent\leftmargini}%
\def\resethooks{%
  \global\let\SaveRestoreHook\empty
  \global\let\ColumnHook\empty}
\newcommand*{\savecolumns}[1][default]%
  {\g@addto@macro\SaveRestoreHook{\savecolumns[#1]}}
\newcommand*{\restorecolumns}[1][default]%
  {\g@addto@macro\SaveRestoreHook{\restorecolumns[#1]}}
\newcommand*{\aligncolumn}[2]%
  {\g@addto@macro\ColumnHook{\column{#1}{#2}}}
\newcommand{\onelinecommentchars}{\quad-{}- }
\newcommand{\commentbeginchars}{\enskip\{-}
\newcommand{\commentendchars}{-\}\enskip}
\newcommand{\visiblecomments}{%
  \let\onelinecomment=\onelinecommentchars
  \let\commentbegin=\commentbeginchars
  \let\commentend=\commentendchars}
\newcommand{\invisiblecomments}{%
  \let\onelinecomment=\empty
  \let\commentbegin=\empty
  \let\commentend=\empty}
\newlength{\blanklineskip}
\newcommand{\hsindent}[1]{\quad}%
\let\hspre\empty
\let\hspost\empty
\newcommand{\hsnewpar}[1]%
  {{\parskip=0pt\parindent=0pt\par\vskip #1\noindent}}
\newcommand{\hscodestyle}{}
\newcommand{\sethscode}[1]%
  {\expandafter\let\expandafter\hscode\csname #1\endcsname
   \expandafter\let\expandafter\endhscode\csname end#1\endcsname}
   \let\hspre\(\let\hspost\)%
   \let\hspre\(\let\hspost\)%
\newcommand{\plainhs}{\sethscode{plainhscode}}
\def\codeframewidth{\arrayrulewidth}
   \let\endoflinesave=\\
   \framedhslinecorrect\endoflinesave{.5ex}\hline
\newcommand{\framedhslinecorrect}[2]%
  {#1[#2]}
\def\column##1##2{}%
   \newcommand\>[1][]{}\newcommand\<[1][]{}\newcommand\\[1][]{}%
   \def\fromto##1##2##3{##3}%
\let\orighscode=\hscode
   \let\origendhscode=\endhscode
   \def\endhscode{\def\hscode{\endgroup\def\@currenvir{hscode}\\}\begingroup}
\def\hscode{\endgroup\def\@currenvir{hscode}}}%
   \global\let\hscode=\orighscode
   \global\let\endhscode=\origendhscode}%
\let\HaskellResetHook\empty
\newcommand*{\AtHaskellReset}[1]{%
  \g@addto@macro\HaskellResetHook{#1}}
\newcommand*{\HaskellReset}{\HaskellResetHook}
\newcommand\hsforall{\global\let\hsdot=\hsperiodonce}
\newcommand*\hsperiodonce[2]{#2\global\let\hsdot=\hscompose}
\newcommand*\hscompose[2]{#1}
\newlist{qwq}{itemize}{1}
\setlist[qwq]{label={}, nosep, leftmargin=1em}
\newcommand{\indentbegin}{\begin{qwq} \item}
\newcommand{\indentend}{\end{qwq}}
\newtheorem{theorem}{Theorem}
\newtheorem{lemma}{Lemma}
\begin{document}

\journaltitle{Preprint submitted to JFP}
\cpr{Cambridge University Press}
\doival{}
\jnlDoiYr{2023}

\totalpg{\pageref{lastpage01}}

\title{From High to Low: Simulating Nondeterminism and State with State}

\begin{authgrp}
\author{Wenhao Tang}
\affiliation{The University of Edinburgh \\
        (\email{wenhao.tang@ed.ac.uk})}
\author{Tom Schrijvers}
\affiliation{KU Leuven Department of Computer Science \\
        (\email{tom.schrijvers@kuleuven.be})}
\end{authgrp}

\begin{abstract}

  Some effects are considered to be higher-level than others.
  High-level effects provide expressive and succinct abstraction of
  programming concepts, while low-level effects allow more fine-grained
  control over program execution and resources.
  Yet, often it is desirable to write programs using the convenient abstraction
  offered by high-level effects, and meanwhile still benefit
  from the optimisations enabled by low-level effects.
  One solution is to translate high-level effects to low-level ones.

  This paper studies how algebraic effects and handlers allow us to simulate
  high-level effects in terms of low-level effects.
  In particular, we focus on the interaction between state and
  nondeterminism known as the local state, as provided by Prolog. We map
  this high-level semantics in successive steps
  onto a low-level composite state effect, similar to that managed by Prolog's Warren Abstract
  Machine.
  We first give a translation from the high-level local-state semantics to
  the low-level global-state semantics, by explicitly restoring state updates on
  backtracking. Next, we eliminate nondeterminsm altogether in favor of a
  lower-level state containing a choicepoint stack.
  Then we avoid copying the state by restricting ourselves to incremental, reversible state updates.
  We show how these updates can be stored on a trail stack with another state effect.
  We prove the correctness of all our steps using program calculation where
  the fusion laws of effect handlers play a central role.

\end{abstract}

\maketitle

\section{Introduction}
\label{sec:introduction}

The trade-off between ``high-level'' and ``low-level'' styles of programming is almost as old as the field of computer sciences itself.
In a high-level style of programming, we lean on abstractions to make our
programs easier to read and write, and less error-prone.  We pay for this
comfort by giving up precise control over the underlying machinery; we forego
optimisation opportunities or have to trust a (usually opaque) compiler to
perform low-level optimisations for us. For performance-sensitive applications, compiler optimisations
are not reliable enough; instead we often resort to lower-level programming
techniques ourselves.  Although they allow a
fine-grained control over program execution and the implementation of
optimisation techniques, they tend to be harder to write and not compose very well.  This is an
important trade-off to take into account when choosing an appropriate programming language
for implementing an application.

Maybe surprisingly, as they are rarely described in this way, there is a
similar pattern for side-effects within programming languages: some effects can
be described as ``lower-level'' than others.  We say that an effect is
lower-level than another effect when the lower-level effect can simulate the
higher-level effect. In other words, it is possible to write a
program using lower-level effects that has identical semantics to the same program with
higher-level effects.
Yet, due to the lack of abstraction of low-level effects, writing a faithful
simulation requires careful discipline and is quite error-prone.

This article investigates how we can construct programs that are most
naturally expressed with a high-level effect, but where we still want access to
the optimisation opportunities of a lower-level effect. In particular,
inspired by Prolog and Constraint Programming systems, we investigate programs
that rely on the high-level interaction between the nondeterminism and state
effects which we call \emph{local state}. Following low-level implementation
techniques for these systems, like the Warren Abstract Machine (WAM)
\citep{AICPub641:1983,AitKaci91}, we show how these can be simulated in terms of the low-level
\emph{global state} interaction of state and nondeterminism, and finally by state alone. This
allows us to incorporate typical optimisations like exploiting mutable state
for efficient backtracking based on \emph{trailing} as opposed to copying or recomputing
the state from scratch ~\citep{Schulte:ICLP:1999}.

Our approach is based on algebraic effects and handlers
~\citep{PlotkinP03, Plotkin09, Plotkin13} to cleanly separate the
syntax and semantics of effects.
For programs written with high-level effects and interpreted by their
handlers, we can define a general translation handler to transform
these high-level effects to low-level effects, and then interpret the
translated programs with the handlers of low-level effects.

Of particular interest is the way we reason about the correctness of our
approach. There has been much debate in the literature on different equational
reasoning approaches for effectful computations. \citet{Hutton08} break the
abstraction boundaries and use the actual implementation in their equational
reasoning approach. \citet{Gibbons11} promote an alternative, law-based
approach to preserve abstraction boundaries and combine axiomatic with
equational reasoning. In an earlier version of this work \citep{Pauwels19}, we
have followed the latter, law-based approach for reasoning about the
correctness of simulating local state with global state. However, we have found
that approach to be unsatisfactory because it incorporates elements
that are usually found in the syntactic approach for reasoning about
programming languages \citep{Felleisen94}, leading to more boilerplate and
complication in the proofs: notions of contextual equivalence and explicit
manipulation of program contexts. Hence, for that reason we return to the
implementation-based reasoning approach, which we believe works well with
algebraic effects and handlers. Indeed, we prove all of our simulations
correct using equational reasoning techniques, exploiting in particular the
fusion property of handlers ~\citep{Wu15,Gibbons00}.

After introducing the reader to the appropriate background material and motivating the problem (\Cref{sec:background}
and \Cref{sec:overview}),
this paper makes the following contributions:

\begin{itemize}
	\item We distinguish between local-state and global-state semantics,
              and simulate the former in terms of the latter (\Cref{sec:local-global}).
	\item We simulate nondeterminism using a state that consists of a choicepoint stack (\Cref{sec:nondeterminism-state}).
	\item We combine the previous two simulations and merge the two states into
		  a single state effect (\Cref{sec:combination}).
	\item By only allowing incremental, reversible updates to the state we can avoid holding on to multiple copies of the state (\Cref{sec:undo}).
	\item By storing the incremental updates in a trail stack state, we can restore them in batch when backtracking (\Cref{sec:trail-stack}).
	\item We prove all simulations correct using equational reasoning techniques and the fusion law for handlers in particular
        (\Cref{app:local-global}, \Cref{app:nondet-state}, \Cref{app:states-state}, \Cref{app:final-simulate}, \Cref{app:modify-local-global} and \Cref{app:immutable-trail-stack}).
\end{itemize}
Finally, we discuss related work (\Cref{sec:related-work}) and
conclude (\Cref{sec:conclusion}).
\Cref{tab:overview} gives an overview of the simulations of high-level
effects with low-level effects we implemented and proved in the paper.
Throughout the paper, we use Haskell as a means to illustrate
our findings with code.

\begin{table}[h]
\begin{center}
\begin{tabular}{ c|c|c }
 Translations & Descriptions & Correctness \\
 \hline
 \ensuremath{\Varid{local2global}} & local state to global state (\S\ref{sec:local2global}) &\Cref{thm:local-global}\\
 \ensuremath{\Varid{nondet2state}} & nondeterminism to state (\S\ref{sec:nondet2state}) &\Cref{thm:nondet-state}\\
 \ensuremath{\Varid{states2state}} & multiple states to a single state (\S\ref{sec:multiple-states}) &\Cref{thm:states-state}\\
 \ensuremath{\Varid{local2global_M}} & local state to global state with reversible updates (\S\ref{sec:local2globalM}) &\Cref{thm:modify-local-global}\\
 \ensuremath{\Varid{local2trail}} & local state to global state with trail stacks (\S\ref{sec:local2trail}) &\Cref{thm:trail-local-global}\\
\end{tabular}
\end{center}
\caption{Overview of translations from high-level effects to low-level effects in the paper.}
\label{tab:overview}
\end{table}

\section{Background and Motivation}
\label{sec:background}

This section summarises the main prerequisites for equational
reasoning with effects and motivates our translations from high-level
effects to low-level effects.
We discuss the two central effects of this paper: state and
nondeterminism.

\subsection{Functors and Monads}
\label{sec:functors-and-monads}

\paragraph*{Functors}\
In Haskell, a functor \ensuremath{\Varid{f}\mathbin{::}\mathbin{\ast}\to \mathbin{\ast}} instantiates the functor type class, which has a single
functor mapping operation.\indentbegin \begin{hscode}\SaveRestoreHook
\column{B}{@{}>{\hspre}l<{\hspost}@{}}%
\column{3}{@{}>{\hspre}l<{\hspost}@{}}%
\column{5}{@{}>{\hspre}l<{\hspost}@{}}%
\column{E}{@{}>{\hspre}l<{\hspost}@{}}%
\>[3]{}\mathbf{class}\;\Conid{Functor}\;\Varid{f}\;\mathbf{where}{}\<[E]%
\\
\>[3]{}\hsindent{2}{}\<[5]%
\>[5]{}\Varid{fmap}\mathbin{::}(\Varid{a}\to \Varid{b})\to \Varid{f}\;\Varid{a}\to \Varid{f}\;\Varid{b}{}\<[E]%
\ColumnHook
\end{hscode}\resethooks
\indentend Furthermore, a functor should satisfy the following two functor laws:
\begin{alignat}{2}
    &\mbox{\bf identity}:\quad &
    \ensuremath{\Varid{fmap}\;\Varid{id}} &= \ensuremath{\Varid{id}}\mbox{~~,} \label{eq:functor-identity}\\
    &\mbox{\bf composition}:~ &
    \ensuremath{\Varid{fmap}\;(\Varid{f}\hsdot{\circ }{.}\Varid{g})} &= \ensuremath{\Varid{fmap}\;\Varid{f}\hsdot{\circ }{.}\Varid{fmap}\;\Varid{g}} \mbox{~~.} \label{eq:functor-composition}
\end{alignat}
We sometimes use the operator \ensuremath{\mathbin{\langle\hspace{1.6pt}\mathclap{\raisebox{0.1pt}{\scalebox{1}{\$}}}\hspace{1.6pt}\rangle}} as an alias for \ensuremath{\Varid{fmap}}.
\indentbegin \begin{hscode}\SaveRestoreHook
\column{B}{@{}>{\hspre}l<{\hspost}@{}}%
\column{3}{@{}>{\hspre}l<{\hspost}@{}}%
\column{E}{@{}>{\hspre}l<{\hspost}@{}}%
\>[3]{}(\mathbin{\langle\hspace{1.6pt}\mathclap{\raisebox{0.1pt}{\scalebox{1}{\$}}}\hspace{1.6pt}\rangle})\mathbin{::}\Conid{Functor}\;\Varid{f}\Rightarrow (\Varid{a}\to \Varid{b})\to \Varid{f}\;\Varid{a}\to \Varid{f}\;\Varid{b}{}\<[E]%
\\
\>[3]{}(\mathbin{\langle\hspace{1.6pt}\mathclap{\raisebox{0.1pt}{\scalebox{1}{\$}}}\hspace{1.6pt}\rangle})\mathrel{=}\Varid{fmap}{}\<[E]%
\ColumnHook
\end{hscode}\resethooks
\indentend 
\paragraph*{Monads}\
Monadic side-effects~\citep{Moggi91}, the main focus of this paper,
are those that can dynamically determine what happens next.
A monad \ensuremath{\Varid{m}\mathbin{::}\mathbin{\ast}\to \mathbin{\ast}} is a functor instantiates the monad type class,
which has two operations return (\ensuremath{\Varid{\eta}}) and bind (\ensuremath{>\!\!>\!\!=}).
\indentbegin \begin{hscode}\SaveRestoreHook
\column{B}{@{}>{\hspre}l<{\hspost}@{}}%
\column{3}{@{}>{\hspre}l<{\hspost}@{}}%
\column{5}{@{}>{\hspre}l<{\hspost}@{}}%
\column{13}{@{}>{\hspre}l<{\hspost}@{}}%
\column{E}{@{}>{\hspre}l<{\hspost}@{}}%
\>[3]{}\mathbf{class}\;\Conid{Functor}\;\Varid{m}\Rightarrow \Conid{Monad}\;\Varid{m}\;\mathbf{where}{}\<[E]%
\\
\>[3]{}\hsindent{2}{}\<[5]%
\>[5]{}\Varid{\eta}{}\<[13]%
\>[13]{}\mathbin{::}\Varid{a}\to \Varid{m}\;\Varid{a}{}\<[E]%
\\
\>[3]{}\hsindent{2}{}\<[5]%
\>[5]{}(>\!\!>\!\!=){}\<[13]%
\>[13]{}\mathbin{::}\Varid{m}\;\Varid{a}\to (\Varid{a}\to \Varid{m}\;\Varid{b})\to \Varid{m}\;\Varid{b}{}\<[E]%
\ColumnHook
\end{hscode}\resethooks
\indentend Furthermore, a monad should satisfy the following three monad laws:
\begin{alignat}{2}
    &\mbox{\bf return-bind}:\quad &
    \ensuremath{\Varid{\eta}\;\Varid{x}>\!\!>\!\!=\Varid{f}} &= \ensuremath{\Varid{f}\;\Varid{x}}\mbox{~~,} \label{eq:monad-ret-bind}\\
    &\mbox{\bf bind-return}:~ &
    \ensuremath{\Varid{m}>\!\!>\!\!=\Varid{\eta}} &= \ensuremath{\Varid{m}} \mbox{~~,} \label{eq:monad-bind-ret}\\
    &\mbox{\bf associativity}:~ &
    \ensuremath{(\Varid{m}>\!\!>\!\!=\Varid{f})>\!\!>\!\!=\Varid{g}} &= \ensuremath{\Varid{m}>\!\!>\!\!=(\lambda \Varid{x}\to \Varid{f}\;\Varid{x}>\!\!>\!\!=\Varid{g})} \mbox{~~.}
    \label{eq:monad-assoc}
\end{alignat}
Haskell supports \ensuremath{\mathbf{do}} blocks as syntactic sugar for monadic computations.
For example, \ensuremath{\mathbf{do}\;\Varid{x}\leftarrow \Varid{m};\Varid{f}\;\Varid{x}} is translated to \ensuremath{\Varid{m}>\!\!>\!\!=\Varid{f}}.
Two convenient derived operators are \ensuremath{>\!\!>} and \ensuremath{\mathbin{\langle\hspace{1.6pt}\mathclap{\raisebox{0.1pt}{\scalebox{1}{$\ast$}}}\hspace{1.6pt}\rangle}}.\footnote{We
deviate from the type class hierarchy of \ensuremath{\Conid{Functor}}, \ensuremath{\Conid{Applicative}} and \ensuremath{\Conid{Monad}}
that can be found in Haskell's standard library because its additional complexity
is not needed in this article.}
\indentbegin \begin{hscode}\SaveRestoreHook
\column{B}{@{}>{\hspre}l<{\hspost}@{}}%
\column{3}{@{}>{\hspre}l<{\hspost}@{}}%
\column{E}{@{}>{\hspre}l<{\hspost}@{}}%
\>[3]{}(>\!\!>)\mathbin{::}\Conid{Monad}\;\Varid{m}\Rightarrow \Varid{m}\;\Varid{a}\to \Varid{m}\;\Varid{b}\to \Varid{m}\;\Varid{b}{}\<[E]%
\\
\>[3]{}\Varid{m}_{1}>\!\!>\Varid{m}_{2}\mathrel{=}\Varid{m}_{1}>\!\!>\!\!=\lambda \anonymous \to \Varid{m}_{2}{}\<[E]%
\\[\blanklineskip]%
\>[3]{}(\mathbin{\langle\hspace{1.6pt}\mathclap{\raisebox{0.1pt}{\scalebox{1}{$\ast$}}}\hspace{1.6pt}\rangle})\mathbin{::}\Conid{Monad}\;\Varid{m}\Rightarrow \Varid{m}\;(\Varid{a}\to \Varid{b})\to \Varid{m}\;\Varid{a}\to \Varid{m}\;\Varid{b}{}\<[E]%
\\
\>[3]{}\Varid{mf}\mathbin{\langle\hspace{1.6pt}\mathclap{\raisebox{0.1pt}{\scalebox{1}{$\ast$}}}\hspace{1.6pt}\rangle}\Varid{mx}\mathrel{=}\Varid{mf}>\!\!>\!\!=\lambda \Varid{f}\to \Varid{mx}>\!\!>\!\!=\lambda \Varid{x}\to \Varid{\eta}\;(\Varid{f}\;\Varid{x}){}\<[E]%
\ColumnHook
\end{hscode}\resethooks
\indentend %
\subsection{Nondeterminism and State}
\label{sec:nondeterminism}
\label{sec:state}

Following both the approaches of \citet{Hutton08} and of \citet{Gibbons11}, we introduce
effects as subclasses of the \ensuremath{\Conid{Monad}} type class.

\paragraph*{Nondeterminism}\
The first monadic effect we introduce is nondeterminism. We define a
subclass \ensuremath{\Conid{MNondet}} of \ensuremath{\Conid{Monad}} to capture the nondeterministic
interfaces as follows:
\indentbegin \begin{hscode}\SaveRestoreHook
\column{B}{@{}>{\hspre}l<{\hspost}@{}}%
\column{3}{@{}>{\hspre}l<{\hspost}@{}}%
\column{10}{@{}>{\hspre}l<{\hspost}@{}}%
\column{E}{@{}>{\hspre}l<{\hspost}@{}}%
\>[B]{}\mathbf{class}\;\Conid{Monad}\;\Varid{m}\Rightarrow \Conid{MNondet}\;\Varid{m}\;\mathbf{where}{}\<[E]%
\\
\>[B]{}\hsindent{3}{}\<[3]%
\>[3]{}\Varid{\varnothing}{}\<[10]%
\>[10]{}\mathbin{::}\Varid{m}\;\Varid{a}{}\<[E]%
\\
\>[B]{}\hsindent{3}{}\<[3]%
\>[3]{}(\talloblong){}\<[10]%
\>[10]{}\mathbin{::}\Varid{m}\;\Varid{a}\to \Varid{m}\;\Varid{a}\to \Varid{m}\;\Varid{a}{}\<[E]%
\ColumnHook
\end{hscode}\resethooks
\indentend Here, \ensuremath{\Varid{\varnothing}} denotes failures and \ensuremath{(\talloblong)} denotes nondeterministic
choices.  Instances of the \ensuremath{\Conid{MNondet}} interface should satisfy the
following four laws:
\footnote{
One might expect additional laws such as idempotence or commutativity.
As argued by \cite{Kiselyov:15:Laws}, these laws differ depending on
how the monad is used and how it should interact with other effects.
We choose to present a minimal setting for nondeterminism here.}

\begin{alignat}{2}
    &\mbox{\bf identity}:\quad &
      \ensuremath{\Varid{\varnothing}\mathbin{\talloblong}\Varid{m}} ~=~ & \ensuremath{\Varid{m}} ~=~ \ensuremath{\Varid{m}\mathbin{\talloblong}\Varid{\varnothing}}\mbox{~~,}
      \label{eq:mzero}\\
    &\mbox{\bf associativity}:~ &
      \ensuremath{(\Varid{m}\mathbin{\talloblong}\Varid{n})\mathbin{\talloblong}\Varid{k}}~ &=~ \ensuremath{\Varid{m}\mathbin{\talloblong}(\Varid{n}\mathbin{\talloblong}\Varid{k})} \mbox{~~,}
      \label{eq:mplus-assoc}\\
    &\mbox{\bf right-distributivity}:~ &
      \ensuremath{(\Varid{m}_{1}\mathbin{\talloblong}\Varid{m}_{2})>\!\!>\!\!=\Varid{f}} ~&=~ \ensuremath{(\Varid{m}_{1}>\!\!>\!\!=\Varid{f})\mathbin{\talloblong}(\Varid{m}_{2}>\!\!>\!\!=\Varid{f})} \mbox{~~,}
      \label{eq:mplus-dist}\\
    &\mbox{\bf left-identity}:\quad &
      \ensuremath{\Varid{\varnothing}>\!\!>\!\!=\Varid{f}} ~&=~ \ensuremath{\Varid{\varnothing}} \label{eq:mzero-zero} \mbox{~~.}
\end{alignat}
The first two laws state that \ensuremath{(\talloblong)} and \ensuremath{\Varid{\varnothing}} should form a monoid,
i.e., \ensuremath{(\talloblong)} should be associative with \ensuremath{\Varid{\varnothing}} as its neutral element.
The last two laws show that \ensuremath{(>\!\!>\!\!=)} is right-distributive
over \ensuremath{(\talloblong)} and that \ensuremath{\Varid{\varnothing}} cancels bind on the left.

The approach of \citet{Gibbons11} is to reason about effectful
programs using an axiomatic characterisation given by these laws. It
does not rely on the specific implementation of any particular
instance of \ensuremath{\Conid{MNondet}}.
In contrast, \citet{Hutton08} reason directly in terms of a particular
instance.  In the case of \ensuremath{\Conid{MNondet}}, the quintessential instance is
the list monad, which extends the conventional \ensuremath{\Conid{Monad}} instance for lists.

\begin{minipage}{0.5\textwidth}
\indentbegin \begin{hscode}\SaveRestoreHook
\column{B}{@{}>{\hspre}l<{\hspost}@{}}%
\column{3}{@{}>{\hspre}l<{\hspost}@{}}%
\column{10}{@{}>{\hspre}c<{\hspost}@{}}%
\column{10E}{@{}l@{}}%
\column{13}{@{}>{\hspre}l<{\hspost}@{}}%
\column{E}{@{}>{\hspre}l<{\hspost}@{}}%
\>[B]{}\mathbf{instance}\;\Conid{MNondet}\;[\mskip1.5mu \mskip1.5mu]\;\mathbf{where}{}\<[E]%
\\
\>[B]{}\hsindent{3}{}\<[3]%
\>[3]{}\Varid{\varnothing}{}\<[10]%
\>[10]{}\mathrel{=}{}\<[10E]%
\>[13]{}[\mskip1.5mu \mskip1.5mu]{}\<[E]%
\\
\>[B]{}\hsindent{3}{}\<[3]%
\>[3]{}(\talloblong){}\<[10]%
\>[10]{}\mathrel{=}{}\<[10E]%
\>[13]{}(+\!\!+){}\<[E]%
\ColumnHook
\end{hscode}\resethooks
\indentend \end{minipage}
\begin{minipage}{0.5\textwidth}\indentbegin \begin{hscode}\SaveRestoreHook
\column{B}{@{}>{\hspre}l<{\hspost}@{}}%
\column{3}{@{}>{\hspre}l<{\hspost}@{}}%
\column{5}{@{}>{\hspre}l<{\hspost}@{}}%
\column{16}{@{}>{\hspre}l<{\hspost}@{}}%
\column{17}{@{}>{\hspre}l<{\hspost}@{}}%
\column{E}{@{}>{\hspre}l<{\hspost}@{}}%
\>[3]{}\mathbf{instance}\;\Conid{Monad}\;[\mskip1.5mu \mskip1.5mu]\;\mathbf{where}{}\<[E]%
\\
\>[3]{}\hsindent{2}{}\<[5]%
\>[5]{}\Varid{\eta}\;\Varid{x}{}\<[16]%
\>[16]{}\mathrel{=}[\mskip1.5mu \Varid{x}\mskip1.5mu]{}\<[E]%
\\
\>[3]{}\hsindent{2}{}\<[5]%
\>[5]{}\Varid{xs}>\!\!>\!\!=\Varid{f}{}\<[17]%
\>[17]{}\mathrel{=}\Varid{concatMap}\;\Varid{f}\;\Varid{xs}{}\<[E]%
\ColumnHook
\end{hscode}\resethooks
\indentend \end{minipage}

\paragraph*{State}\
The signature for the state effect has two operations:
a \ensuremath{\Varid{get}} operation that reads and returns the state,
and a \ensuremath{\Varid{put}} operation that modifies the state, overwriting it with the given
value, and returns nothing.
Again, we define a subclass \ensuremath{\Conid{MState}} of \ensuremath{\Conid{Monad}} to capture its
interfaces.

\indentbegin \begin{hscode}\SaveRestoreHook
\column{B}{@{}>{\hspre}l<{\hspost}@{}}%
\column{5}{@{}>{\hspre}l<{\hspost}@{}}%
\column{E}{@{}>{\hspre}l<{\hspost}@{}}%
\>[B]{}\mathbf{class}\;\Conid{Monad}\;\Varid{m}\Rightarrow \Conid{MState}\;\Varid{s}\;\Varid{m}\mid \Varid{m}\to \Varid{s}\;\mathbf{where}{}\<[E]%
\\
\>[B]{}\hsindent{5}{}\<[5]%
\>[5]{}\Varid{get}\mathbin{::}\Varid{m}\;\Varid{s}{}\<[E]%
\\
\>[B]{}\hsindent{5}{}\<[5]%
\>[5]{}\Varid{put}\mathbin{::}\Varid{s}\to \Varid{m}\;(){}\<[E]%
\ColumnHook
\end{hscode}\resethooks
\indentend These operations are regulated by the following four laws:
\begin{alignat}{2}
    &\mbox{\bf put-put}:\quad &
    \ensuremath{\Varid{put}\;\Varid{s}>\!\!>\Varid{put}\;\Varid{s'}} &= \ensuremath{\Varid{put}\;\Varid{s'}}~~\mbox{,} \label{eq:put-put}\\
    &\mbox{\bf put-get}:~ &
    \ensuremath{\Varid{put}\;\Varid{s}>\!\!>\Varid{get}} &= \ensuremath{\Varid{put}\;\Varid{s}>\!\!>\Varid{\eta}\;\Varid{s}} ~~\mbox{,} \label{eq:put-get}\\
    &\mbox{\bf get-put}:~ &
    \ensuremath{\Varid{get}>\!\!>\!\!=\Varid{put}} &= \ensuremath{\Varid{\eta}\;()} ~~\mbox{,} \label{eq:get-put}\\
    &\mbox{\bf get-get}:\quad &
    \ensuremath{\Varid{get}>\!\!>\!\!=(\lambda \Varid{s}\to \Varid{get}>\!\!>\!\!=\Varid{k}\;\Varid{s})} &= \ensuremath{\Varid{get}>\!\!>\!\!=(\lambda \Varid{s}\to \Varid{k}\;\Varid{s}\;\Varid{s})}
    ~~\mbox{.} \label{eq:get-get}
\end{alignat}
The standard instance of \ensuremath{\Conid{MState}} is the state monad \ensuremath{\Conid{State}\;\Varid{s}}.

\begin{minipage}[t]{0.4\textwidth}
\indentbegin \begin{hscode}\SaveRestoreHook
\column{B}{@{}>{\hspre}l<{\hspost}@{}}%
\column{3}{@{}>{\hspre}l<{\hspost}@{}}%
\column{10}{@{}>{\hspre}l<{\hspost}@{}}%
\column{E}{@{}>{\hspre}l<{\hspost}@{}}%
\>[B]{}\mathbf{newtype}\;\Conid{State}\;\Varid{s}\;\Varid{a}\mathrel{=}{}\<[E]%
\\
\>[B]{}\hsindent{3}{}\<[3]%
\>[3]{}\Conid{State}\;\{\mskip1.5mu \Varid{run_{State}}\mathbin{::}\Varid{s}\to (\Varid{a},\Varid{s})\mskip1.5mu\}{}\<[E]%
\\
\>[B]{}\mathbf{instance}\;\Conid{MState}\;\Varid{s}\;(\Conid{State}\;\Varid{s})\;\mathbf{where}{}\<[E]%
\\
\>[B]{}\hsindent{3}{}\<[3]%
\>[3]{}\Varid{get}{}\<[10]%
\>[10]{}\mathrel{=}\Conid{State}\;(\lambda \Varid{s}\to (\Varid{s},\Varid{s})){}\<[E]%
\\
\>[B]{}\hsindent{3}{}\<[3]%
\>[3]{}\Varid{put}\;\Varid{s}{}\<[10]%
\>[10]{}\mathrel{=}\Conid{State}\;(\lambda \anonymous \to ((),\Varid{s})){}\<[E]%
\ColumnHook
\end{hscode}\resethooks
\indentend \end{minipage}
\begin{minipage}[t]{0.6\textwidth}
\indentbegin \begin{hscode}\SaveRestoreHook
\column{B}{@{}>{\hspre}l<{\hspost}@{}}%
\column{3}{@{}>{\hspre}l<{\hspost}@{}}%
\column{12}{@{}>{\hspre}l<{\hspost}@{}}%
\column{28}{@{}>{\hspre}l<{\hspost}@{}}%
\column{E}{@{}>{\hspre}l<{\hspost}@{}}%
\>[B]{}\mathbf{instance}\;\Conid{Monad}\;(\Conid{State}\;\Varid{s})\;\mathbf{where}{}\<[E]%
\\
\>[B]{}\hsindent{3}{}\<[3]%
\>[3]{}\Varid{\eta}\;\Varid{x}\mathrel{=}\Conid{State}\;(\lambda \Varid{s}\to (\Varid{x},\Varid{s})){}\<[E]%
\\
\>[B]{}\hsindent{3}{}\<[3]%
\>[3]{}\Varid{m}>\!\!>\!\!=\Varid{f}{}\<[12]%
\>[12]{}\mathrel{=}\Conid{State}\;(\lambda \Varid{s}\to {}\<[28]%
\>[28]{}\mathbf{let}\;(\Varid{x},\Varid{s'})\mathrel{=}\Varid{run_{State}}\;\Varid{m}\;\Varid{s}{}\<[E]%
\\
\>[28]{}\mathbf{in}\;\Varid{run_{State}}\;(\Varid{f}\;\Varid{x})\;\Varid{s'}){}\<[E]%
\ColumnHook
\end{hscode}\resethooks
\indentend \end{minipage}

\subsection{The N-queens Puzzle}
\label{sec:motivation-and-challenges}

The n-queens problem used here is an adapted and simplified version from that of
\cite{Gibbons11}.
The aim of the puzzle is to place $n$ queens on a $n \times n$ chess board such
that no two queens can attack each other.
Given $n$, we number the rows and columns by \ensuremath{[\mskip1.5mu \mathrm{1}\mathinner{\ldotp\ldotp}\Varid{n}\mskip1.5mu]}.
Since all queens should be placed on distinct rows and distinct columns, a
potential solution can be represented by a permutation \ensuremath{\Varid{xs}} of the list \ensuremath{[\mskip1.5mu \mathrm{1}\mathinner{\ldotp\ldotp}\Varid{n}\mskip1.5mu]},
such that \ensuremath{\Varid{xs}\mathbin{!!}\Varid{i}\mathrel{=}\Varid{j}} denotes that the queen on the \ensuremath{\Varid{i}}th column is placed on
the \ensuremath{\Varid{j}}th row.
Using this representation, queens cannot be put on the same row or column.

\paragraph*{A Naive Algorithm}\
We have the following naive nondeterministic algorithm for n-queens.
\indentbegin \begin{hscode}\SaveRestoreHook
\column{B}{@{}>{\hspre}l<{\hspost}@{}}%
\column{E}{@{}>{\hspre}l<{\hspost}@{}}%
\>[B]{}\Varid{queens_{naive}}\mathbin{::}\Conid{MNondet}\;\Varid{m}\Rightarrow \Conid{Int}\to \Varid{m}\;[\mskip1.5mu \Conid{Int}\mskip1.5mu]{}\<[E]%
\\
\>[B]{}\Varid{queens_{naive}}\;\Varid{n}\mathrel{=}\Varid{choose}\;(\Varid{permutations}\;[\mskip1.5mu \mathrm{1}\mathinner{\ldotp\ldotp}\Varid{n}\mskip1.5mu])>\!\!>\!\!=\Varid{filtr}\;\Varid{valid}{}\<[E]%
\ColumnHook
\end{hscode}\resethooks
\indentend The program \ensuremath{\Varid{queens_{naive}}\;\mathrm{4}\mathbin{::}[\mskip1.5mu [\mskip1.5mu \Conid{Int}\mskip1.5mu]\mskip1.5mu]} gives as result \ensuremath{[\mskip1.5mu [\mskip1.5mu \mathrm{2},\mathrm{4},\mathrm{1},\mathrm{3}\mskip1.5mu],[\mskip1.5mu \mathrm{3},\mathrm{1},\mathrm{4},\mathrm{2}\mskip1.5mu]\mskip1.5mu]}.  The program uses a generate-and-test strategy: it
generates all permutations of queens as candiate solutions, and then
tests which ones are valid.

The function \ensuremath{\Varid{permutations}\mathbin{::}[\mskip1.5mu \Varid{a}\mskip1.5mu]\to [\mskip1.5mu [\mskip1.5mu \Varid{a}\mskip1.5mu]\mskip1.5mu]} from \ensuremath{\Conid{\Conid{Data}.List}} computes
all the permutations of its input.
The function \ensuremath{\Varid{choose}} implemented as follows nondeterministically
picks an element from a list.

\indentbegin \begin{hscode}\SaveRestoreHook
\column{B}{@{}>{\hspre}l<{\hspost}@{}}%
\column{E}{@{}>{\hspre}l<{\hspost}@{}}%
\>[B]{}\Varid{choose}\mathbin{::}\Conid{MNondet}\;\Varid{m}\Rightarrow [\mskip1.5mu \Varid{a}\mskip1.5mu]\to \Varid{m}\;\Varid{a}{}\<[E]%
\\
\>[B]{}\Varid{choose}\mathrel{=}\Varid{foldr}\;((\talloblong)\hsdot{\circ }{.}\Varid{\eta})\;\Varid{\varnothing}{}\<[E]%
\ColumnHook
\end{hscode}\resethooks
\indentend 
The function \ensuremath{\Varid{filtr}\;\Varid{p}\;\Varid{x}} returns \ensuremath{\Varid{x}} if \ensuremath{\Varid{p}\;\Varid{x}} holds, and fails otherwise.
\indentbegin \begin{hscode}\SaveRestoreHook
\column{B}{@{}>{\hspre}l<{\hspost}@{}}%
\column{E}{@{}>{\hspre}l<{\hspost}@{}}%
\>[B]{}\Varid{filtr}\mathbin{::}\Conid{MNondet}\;\Varid{m}\Rightarrow (\Varid{a}\to \Conid{Bool})\to \Varid{a}\to \Varid{m}\;\Varid{a}{}\<[E]%
\\
\>[B]{}\Varid{filtr}\;\Varid{p}\;\Varid{x}\mathrel{=}\mathbf{if}\;\Varid{p}\;\Varid{x}\;\mathbf{then}\;\Varid{\eta}\;\Varid{x}\;\mathbf{else}\;\Varid{\varnothing}{}\<[E]%
\ColumnHook
\end{hscode}\resethooks
\indentend 
The pure function \ensuremath{\Varid{valid}\mathbin{::}[\mskip1.5mu \Conid{Int}\mskip1.5mu]\to \Conid{Bool}} determines whether the
input is a valid solution. 
\indentbegin \begin{hscode}\SaveRestoreHook
\column{B}{@{}>{\hspre}l<{\hspost}@{}}%
\column{15}{@{}>{\hspre}l<{\hspost}@{}}%
\column{E}{@{}>{\hspre}l<{\hspost}@{}}%
\>[B]{}\Varid{valid}\mathbin{::}[\mskip1.5mu \Conid{Int}\mskip1.5mu]\to \Conid{Bool}{}\<[E]%
\\
\>[B]{}\Varid{valid}\;[\mskip1.5mu \mskip1.5mu]{}\<[15]%
\>[15]{}\mathrel{=}\Conid{True}{}\<[E]%
\\
\>[B]{}\Varid{valid}\;(\Varid{q}\mathbin{:}\Varid{qs}){}\<[15]%
\>[15]{}\mathrel{=}\Varid{valid}\;\Varid{qs}\mathrel{\wedge}\Varid{safe}\;\Varid{q}\;\mathrm{1}\;\Varid{qs}{}\<[E]%
\ColumnHook
\end{hscode}\resethooks
\indentend A solution is valid when each queen is \ensuremath{\Varid{safe}} with respect
to the subsequent queens:
\indentbegin \begin{hscode}\SaveRestoreHook
\column{B}{@{}>{\hspre}l<{\hspost}@{}}%
\column{19}{@{}>{\hspre}l<{\hspost}@{}}%
\column{E}{@{}>{\hspre}l<{\hspost}@{}}%
\>[B]{}\Varid{safe}\mathbin{::}\Conid{Int}\to \Conid{Int}\to [\mskip1.5mu \Conid{Int}\mskip1.5mu]\to \Conid{Bool}{}\<[E]%
\\
\>[B]{}\Varid{safe}\;\anonymous \;\anonymous \;[\mskip1.5mu \mskip1.5mu]{}\<[19]%
\>[19]{}\mathrel{=}\Conid{True}{}\<[E]%
\\
\>[B]{}\Varid{safe}\;\Varid{q}\;\Varid{n}\;(\Varid{q}_{1}\mathbin{:}\Varid{qs}){}\<[19]%
\>[19]{}\mathrel{=}\Varid{and}\;[\mskip1.5mu \Varid{q}\not\equiv \Varid{q}_{1},\Varid{q}\not\equiv \Varid{q}_{1}\mathbin{+}\Varid{n},\Varid{q}\not\equiv \Varid{q}_{1}\mathbin{-}\Varid{n},\Varid{safe}\;\Varid{q}\;(\Varid{n}\mathbin{+}\mathrm{1})\;\Varid{qs}\mskip1.5mu]{}\<[E]%
\ColumnHook
\end{hscode}\resethooks
\indentend The call \ensuremath{\Varid{safe}\;\Varid{q}\;\Varid{n}\;\Varid{qs}} checks whether the current queen \ensuremath{\Varid{q}} is on a different ascending and descending diagonal
than the other queens \ensuremath{\Varid{qs}}, where \ensuremath{\Varid{n}} is the number of columns that \ensuremath{\Varid{q}} is apart from the first queen \ensuremath{\Varid{q}_{1}} in \ensuremath{\Varid{qs}}.

Although this generate-and-test approach works and is quite intuitive, it is not very efficient.
For example, all solutions of the form \ensuremath{(\mathrm{1}\mathbin{:}\mathrm{2}\mathbin{:}\Varid{qs})} are invalid because the first two queens are on the same diagonal.
However, the algorithm still needs to generate and test all $(n-2)!$ candidate solutions of this form.

\paragraph*{A Backtracking Algorithm}\
We can fuse the two phases of the naive algorithm to obtain a more
efficient algorithm, where both generating candidates and checking for
validity happens in a single pass.  The idea is to move to a
state-based backtracking implementation that allows early pruning of
branches that are invalid.
In particular, when placing the new queen in the next column, we make
sure that it is only placed in positions that are valid with respect
to the previously placed queens.

We use a state \ensuremath{(\Conid{Int},[\mskip1.5mu \Conid{Int}\mskip1.5mu])} to contain the current column and the
previously placed queens.  The backtracking algorithm of n-queens is
implemented as follows.
\indentbegin \begin{hscode}\SaveRestoreHook
\column{B}{@{}>{\hspre}l<{\hspost}@{}}%
\column{3}{@{}>{\hspre}l<{\hspost}@{}}%
\column{14}{@{}>{\hspre}l<{\hspost}@{}}%
\column{23}{@{}>{\hspre}l<{\hspost}@{}}%
\column{E}{@{}>{\hspre}l<{\hspost}@{}}%
\>[B]{}\Varid{queens}\mathbin{::}(\Conid{MState}\;(\Conid{Int},[\mskip1.5mu \Conid{Int}\mskip1.5mu])\;\Varid{m},\Conid{MNondet}\;\Varid{m})\Rightarrow \Conid{Int}\to \Varid{m}\;[\mskip1.5mu \Conid{Int}\mskip1.5mu]{}\<[E]%
\\
\>[B]{}\Varid{queens}\;\Varid{n}\mathrel{=}\Varid{loop}\;\mathbf{where}{}\<[E]%
\\
\>[B]{}\hsindent{3}{}\<[3]%
\>[3]{}\Varid{loop}\mathrel{=}\mathbf{do}\;{}\<[14]%
\>[14]{}(\Varid{c},\Varid{sol})\leftarrow \Varid{get}{}\<[E]%
\\
\>[14]{}\mathbf{if}\;\Varid{c}\geq \Varid{n}\;\mathbf{then}\;\Varid{\eta}\;\Varid{sol}{}\<[E]%
\\
\>[14]{}\mathbf{else}\;\mathbf{do}\;{}\<[23]%
\>[23]{}\Varid{r}\leftarrow \Varid{choose}\;[\mskip1.5mu \mathrm{1}\mathinner{\ldotp\ldotp}\Varid{n}\mskip1.5mu]{}\<[E]%
\\
\>[23]{}\Varid{guard}\;(\Varid{safe}\;\Varid{r}\;\mathrm{1}\;\Varid{sol}){}\<[E]%
\\
\>[23]{}\Varid{s}\leftarrow \Varid{get}{}\<[E]%
\\
\>[23]{}\Varid{put}\;(\Varid{s}\mathbin{\oplus}\Varid{r}){}\<[E]%
\\
\>[23]{}\Varid{loop}{}\<[E]%
\ColumnHook
\end{hscode}\resethooks
\indentend 
The function \ensuremath{\Varid{guard}} fails when the input is false.
\indentbegin \begin{hscode}\SaveRestoreHook
\column{B}{@{}>{\hspre}l<{\hspost}@{}}%
\column{13}{@{}>{\hspre}l<{\hspost}@{}}%
\column{E}{@{}>{\hspre}l<{\hspost}@{}}%
\>[B]{}\Varid{guard}\mathbin{::}\Conid{MNondet}\;\Varid{m}\Rightarrow \Conid{Bool}\to \Varid{m}\;(){}\<[E]%
\\
\>[B]{}\Varid{guard}\;\Conid{True}{}\<[13]%
\>[13]{}\mathrel{=}\Varid{\eta}\;(){}\<[E]%
\\
\>[B]{}\Varid{guard}\;\Conid{False}\mathrel{=}\Varid{\varnothing}{}\<[E]%
\ColumnHook
\end{hscode}\resethooks
\indentend %
The function \ensuremath{\Varid{s}\mathbin{\oplus}\Varid{r}} updates the state with a new queen placed on
row \ensuremath{\Varid{r}} in the next column.
\indentbegin \begin{hscode}\SaveRestoreHook
\column{B}{@{}>{\hspre}l<{\hspost}@{}}%
\column{8}{@{}>{\hspre}l<{\hspost}@{}}%
\column{E}{@{}>{\hspre}l<{\hspost}@{}}%
\>[B]{}(\mathbin{\oplus}){}\<[8]%
\>[8]{}\mathbin{::}(\Conid{Int},[\mskip1.5mu \Conid{Int}\mskip1.5mu])\to \Conid{Int}\to (\Conid{Int},[\mskip1.5mu \Conid{Int}\mskip1.5mu]){}\<[E]%
\\
\>[B]{}(\mathbin{\oplus})\;{}\<[8]%
\>[8]{}(\Varid{c},\Varid{sol})\;\Varid{r}\mathrel{=}(\Varid{c}\mathbin{+}\mathrm{1},\Varid{r}\mathbin{:}\Varid{sol}){}\<[E]%
\ColumnHook
\end{hscode}\resethooks
\indentend %

The above monadic version of \ensuremath{\Varid{queens}} essentially assume that each
searching branch has its own state; we do not need to explicitly
restore the state when backtracking.
Though it is a convenient high-level programming assumption for
programmers, it causes obstacles to low-level implementations and
optimisations.
In the following sections, we investigate how low-level implementation
and optimisation techniques, such as those found in Prolog's Warren
Abstract Machine and Constraint Programming systems, can be
incorporated and proved correct.

\section{Algebraic Effects and Handlers}
\label{sec:overview}

This section introduces algebraic effects and handlers, the approach
we use to define syntax, semantics, and simulations for effects.
Comparing to giving concrete monad implementations for effects,
algebraic effects and handlers allow us to easily provide different
interpretations for the same effects due to the clear separation of
syntax and semantics.
As a result, we can smoothly specify translations from high-level
effects to low-level effects as handlers of these high-level effects,
and then compose them with the handlers of low-level effects to
interpret high-level programs.
Algebraic effects and handlers also provide us with a modular way to
combine our translations with other effects, and a useful tool, the
fusion property, to prove the correctness of translations.

\subsection{Free Monads and Their Folds}
\label{sec:free-monads-and-their-folds}

We implement algebraic effects and handlers as free monads and their folds.

\paragraph*{Free Monads}\
Free monads are gaining popularity for their use in algebraic effects \citep{Plotkin02, PlotkinP03}
and handlers \citep{Plotkin09, Plotkin13},
which elegantly separate syntax and semantics of effectful
operations.
A free monad, the syntax of an effectful program,
can be captured generically in Haskell.
\indentbegin \begin{hscode}\SaveRestoreHook
\column{B}{@{}>{\hspre}l<{\hspost}@{}}%
\column{E}{@{}>{\hspre}l<{\hspost}@{}}%
\>[B]{}\mathbf{data}\;\Conid{Free}\;\Varid{f}\;\Varid{a}\mathrel{=}\Conid{Var}\;\Varid{a}\mid \Conid{Op}\;(\Varid{f}\;(\Conid{Free}\;\Varid{f}\;\Varid{a})){}\<[E]%
\ColumnHook
\end{hscode}\resethooks
\indentend This data type is a form of abstract syntax tree (AST) consisting of leaves (\ensuremath{\Conid{Var}\;\Varid{a}})
and internal nodes (\ensuremath{\Conid{Op}\;(\Varid{f}\;(\Conid{Free}\;\Varid{f}\;\Varid{a}))}), whose branching structure is
determined by the functor \ensuremath{\Varid{f}}.
This functor is also known as the \emph{signature} of operations.

\paragraph*{A Fold Recursion Scheme}\
Free monads %
come equipped with a fold recursion scheme.
\indentbegin \begin{hscode}\SaveRestoreHook
\column{B}{@{}>{\hspre}l<{\hspost}@{}}%
\column{23}{@{}>{\hspre}c<{\hspost}@{}}%
\column{23E}{@{}l@{}}%
\column{26}{@{}>{\hspre}l<{\hspost}@{}}%
\column{E}{@{}>{\hspre}l<{\hspost}@{}}%
\>[B]{}\Varid{fold}\mathbin{::}\Conid{Functor}\;\Varid{f}\Rightarrow (\Varid{a}\to \Varid{b})\to (\Varid{f}\;\Varid{b}\to \Varid{b})\to \Conid{Free}\;\Varid{f}\;\Varid{a}\to \Varid{b}{}\<[E]%
\\
\>[B]{}\Varid{fold}\;\Varid{gen}\;\Varid{alg}\;(\Conid{Var}\;\Varid{x}){}\<[23]%
\>[23]{}\mathrel{=}{}\<[23E]%
\>[26]{}\Varid{gen}\;\Varid{x}{}\<[E]%
\\
\>[B]{}\Varid{fold}\;\Varid{gen}\;\Varid{alg}\;(\Conid{Op}\;\Varid{op}){}\<[23]%
\>[23]{}\mathrel{=}{}\<[23E]%
\>[26]{}\Varid{alg}\;(\Varid{fmap}\;(\Varid{fold}\;\Varid{gen}\;\Varid{alg})\;\Varid{op}){}\<[E]%
\ColumnHook
\end{hscode}\resethooks
\indentend This fold interprets an AST structure of type \ensuremath{\Conid{Free}\;\Varid{f}\;\Varid{a}} into some
semantic domain \ensuremath{\Varid{b}}. It does so compositionally using a generator
\ensuremath{\Varid{gen}\mathbin{::}\Varid{a}\to \Varid{b}} for the leaves and an algebra \ensuremath{\Varid{alg}\mathbin{::}\Varid{f}\;\Varid{b}\to \Varid{b}} for the internal
nodes; together these are also known as a \emph{handler}.

The monad instance of \ensuremath{\Conid{Free}} is straightforwardly implemented with fold.
\indentbegin \begin{hscode}\SaveRestoreHook
\column{B}{@{}>{\hspre}l<{\hspost}@{}}%
\column{5}{@{}>{\hspre}l<{\hspost}@{}}%
\column{14}{@{}>{\hspre}l<{\hspost}@{}}%
\column{E}{@{}>{\hspre}l<{\hspost}@{}}%
\>[B]{}\mathbf{instance}\;\Conid{Functor}\;\Varid{f}\Rightarrow \Conid{Monad}\;(\Conid{Free}\;\Varid{f})\;\mathbf{where}{}\<[E]%
\\
\>[B]{}\hsindent{5}{}\<[5]%
\>[5]{}\Varid{\eta}{}\<[14]%
\>[14]{}\mathrel{=}\Conid{Var}{}\<[E]%
\\
\>[B]{}\hsindent{5}{}\<[5]%
\>[5]{}\Varid{m}>\!\!>\!\!=\Varid{f}{}\<[14]%
\>[14]{}\mathrel{=}\Varid{fold}\;\Varid{f}\;\Conid{Op}\;\Varid{m}{}\<[E]%
\ColumnHook
\end{hscode}\resethooks
\indentend 

Under certain conditions folds can be fused with functions that are
composed with them~\citep{Wu15, Gibbons00}.
This gives rise to the following laws:
\begin{alignat}{2}
    &\mbox{\bf fusion-pre}: &
    \ensuremath{\Varid{fold}\;(\Varid{gen}\hsdot{\circ }{.}\Varid{h})\;\Varid{alg}} &= \ensuremath{\Varid{fold}\;\Varid{gen}\;\Varid{alg}\hsdot{\circ }{.}\Varid{fmap}\;\Varid{h}}\mbox{~~,} \label{eq:fusion-pre}\\
    &\mbox{\bf fusion-post}: &
    \ensuremath{\Varid{h}\hsdot{\circ }{.}\Varid{fold}\;\Varid{gen}\;\Varid{alg}} &= \ensuremath{\Varid{fold}\;(\Varid{h}\hsdot{\circ }{.}\Varid{gen})\;\Varid{alg'}} \text{ with } \ensuremath{\Varid{h}\hsdot{\circ }{.}\Varid{alg}\mathrel{=}\Varid{alg'}\hsdot{\circ }{.}\Varid{fmap}\;\Varid{h}} \label{eq:fusion-post}\mbox{~~,}\\
    &\mbox{\bf fusion-post'}: &
    \ensuremath{\Varid{h}\hsdot{\circ }{.}\Varid{fold}\;\Varid{gen}\;\Varid{alg}} &= \ensuremath{\Varid{fold}\;(\Varid{h}\hsdot{\circ }{.}\Varid{gen})\;\Varid{alg'}} \label{eq:fusion-post-strong}\\
    & & \span\qquad \text{ with } \ensuremath{\Varid{h}\hsdot{\circ }{.}\Varid{alg}\hsdot{\circ }{.}\Varid{fmap}\;\Varid{f}\mathrel{=}\Varid{alg'}\hsdot{\circ }{.}\Varid{fmap}\;\Varid{h}\hsdot{\circ }{.}\Varid{fmap}\;\Varid{f}} \text{ and } \ensuremath{\Varid{f}\mathrel{=}\Varid{fold}\;\Varid{gen}\;\Varid{alg}} \mbox{~~.}\nonumber
\end{alignat}
These three fusion laws turn out to be essential in the further proofs of this paper.

\paragraph*{Nondeterminism}\
Instead of using a concrete monad like \ensuremath{\Conid{List}}, we use the free monad
\ensuremath{\Conid{Free}\;\Varid{Nondet_{F}}} over the signature \ensuremath{\Varid{Nondet_{F}}} following algebraic effects.
\indentbegin \begin{hscode}\SaveRestoreHook
\column{B}{@{}>{\hspre}l<{\hspost}@{}}%
\column{18}{@{}>{\hspre}l<{\hspost}@{}}%
\column{E}{@{}>{\hspre}l<{\hspost}@{}}%
\>[B]{}\mathbf{data}\;\Varid{Nondet_{F}}\;\Varid{a}{}\<[18]%
\>[18]{}\mathrel{=}\Conid{Fail}\mid \Conid{Or}\;\Varid{a}\;\Varid{a}{}\<[E]%
\ColumnHook
\end{hscode}\resethooks
\indentend This signatures gives rise to a trivial \ensuremath{\Conid{MNondet}} instance:
\indentbegin \begin{hscode}\SaveRestoreHook
\column{B}{@{}>{\hspre}l<{\hspost}@{}}%
\column{3}{@{}>{\hspre}l<{\hspost}@{}}%
\column{14}{@{}>{\hspre}l<{\hspost}@{}}%
\column{E}{@{}>{\hspre}l<{\hspost}@{}}%
\>[B]{}\mathbf{instance}\;\Conid{MNondet}\;(\Conid{Free}\;\Varid{Nondet_{F}})\;\mathbf{where}{}\<[E]%
\\
\>[B]{}\hsindent{3}{}\<[3]%
\>[3]{}\Varid{\varnothing}{}\<[14]%
\>[14]{}\mathrel{=}\Conid{Op}\;\Conid{Fail}{}\<[E]%
\\
\>[B]{}\hsindent{3}{}\<[3]%
\>[3]{}(\talloblong)\;\Varid{p}\;\Varid{q}{}\<[14]%
\>[14]{}\mathrel{=}\Conid{Op}\;(\Conid{Or}\;\Varid{p}\;\Varid{q}){}\<[E]%
\ColumnHook
\end{hscode}\resethooks
\indentend With this representation the 
\textbf{right-distributivity} law and the \textbf{left-identity} law 
follow trivially from the definition of \ensuremath{(>\!\!>\!\!=)} for the free monad.

In contrast, the \textbf{identity} and \textbf{associativity} laws are not satisfied on the nose. Indeed,
\ensuremath{\Conid{Op}\;(\Conid{Or}\;\Conid{Fail}\;\Varid{p})} is for instance a different abstract syntax tree than \ensuremath{\Varid{p}}.
Yet, these syntactic differences do not matter as long as their interpretation
is the same. This is where the handlers come in; the meaning they assign
to effectful programs should respect the laws.
We have the following \ensuremath{\Varid{h_{ND}}} handler which interprets the free monad in
terms of lists.
\indentbegin \begin{hscode}\SaveRestoreHook
\column{B}{@{}>{\hspre}l<{\hspost}@{}}%
\column{3}{@{}>{\hspre}l<{\hspost}@{}}%
\column{5}{@{}>{\hspre}l<{\hspost}@{}}%
\column{21}{@{}>{\hspre}l<{\hspost}@{}}%
\column{E}{@{}>{\hspre}l<{\hspost}@{}}%
\>[B]{}\Varid{h_{ND}}\mathbin{::}\Conid{Free}\;\Varid{Nondet_{F}}\;\Varid{a}\to [\mskip1.5mu \Varid{a}\mskip1.5mu]{}\<[E]%
\\
\>[B]{}\Varid{h_{ND}}\mathrel{=}\Varid{fold}\;\Varid{gen_{ND}}\;\Varid{alg_{ND}}{}\<[E]%
\\
\>[B]{}\hsindent{3}{}\<[3]%
\>[3]{}\mathbf{where}{}\<[E]%
\\
\>[3]{}\hsindent{2}{}\<[5]%
\>[5]{}\Varid{gen_{ND}}\;\Varid{x}{}\<[21]%
\>[21]{}\mathrel{=}[\mskip1.5mu \Varid{x}\mskip1.5mu]{}\<[E]%
\\
\>[3]{}\hsindent{2}{}\<[5]%
\>[5]{}\Varid{alg_{ND}}\;\Conid{Fail}{}\<[21]%
\>[21]{}\mathrel{=}[\mskip1.5mu \mskip1.5mu]{}\<[E]%
\\
\>[3]{}\hsindent{2}{}\<[5]%
\>[5]{}\Varid{alg_{ND}}\;(\Conid{Or}\;\Varid{p}\;\Varid{q}){}\<[21]%
\>[21]{}\mathrel{=}\Varid{p}+\!\!+\Varid{q}{}\<[E]%
\ColumnHook
\end{hscode}\resethooks
\indentend 
With this handler, the \textbf{identity} and \textbf{associativity} laws are 
satisfied up to handling as follows:
\begin{equation*}
\begin{array}{r@{}c@{}l}
      \ensuremath{\Varid{h_{ND}}\;(\Varid{\varnothing}\mathbin{\talloblong}\Varid{m})} & ~=~ & \ensuremath{\Varid{h_{ND}}\;\Varid{m}} ~=~ \ensuremath{\Varid{h_{ND}}\;(\Varid{m}\mathbin{\talloblong}\Varid{\varnothing})} \\
      \ensuremath{\Varid{h_{ND}}\;((\Varid{m}\mathbin{\talloblong}\Varid{n})\mathbin{\talloblong}\Varid{o})} & ~=~ & \ensuremath{\Varid{h_{ND}}\;(\Varid{m}\mathbin{\talloblong}(\Varid{n}\mathbin{\talloblong}\Varid{o}))}
\end{array}
\end{equation*}
In fact, two stronger \textit{contextual} equalities hold:
\begin{equation*}
\begin{array}{r@{}c@{}l}
      \ensuremath{\Varid{h_{ND}}\;((\Varid{\varnothing}\mathbin{\talloblong}\Varid{m})>\!\!>\!\!=\Varid{k})} & ~=~ & \ensuremath{\Varid{h_{ND}}\;(\Varid{m}>\!\!>\!\!=\Varid{k})} ~=~ \ensuremath{\Varid{h_{ND}}\;((\Varid{m}\mathbin{\talloblong}\Varid{\varnothing})>\!\!>\!\!=\Varid{k})} \\
      \ensuremath{\Varid{h_{ND}}\;(((\Varid{m}\mathbin{\talloblong}\Varid{n})\mathbin{\talloblong}\Varid{o})>\!\!>\!\!=\Varid{k})} & ~=~ & \ensuremath{\Varid{h_{ND}}\;((\Varid{m}\mathbin{\talloblong}(\Varid{n}\mathbin{\talloblong}\Varid{o}))>\!\!>\!\!=\Varid{k})}
\end{array}
\end{equation*}
These equations state that the intepretations of the left- and right-hand sides are
indistinguishable even when put in a larger program context \ensuremath{>\!\!>\!\!=\Varid{k}}. 
They follow from the definitions of \ensuremath{\Varid{h_{ND}}} and \ensuremath{(>\!\!>\!\!=)}, as well as the associativity
and identity proprerties of \ensuremath{(+\!\!+)}.

We obtain the two non-contextual equations as a corollary by choosing \ensuremath{\Varid{k}\mathrel{=}\Varid{\eta}}.

\paragraph*{State}\
Again, instead of using the concrete \ensuremath{\Conid{State}} monad in \Cref{sec:state},
we model states via the free monad \ensuremath{\Conid{Free}\;(\Varid{State_{F}}\;\Varid{s})} over the state
signature.

\indentbegin \begin{hscode}\SaveRestoreHook
\column{B}{@{}>{\hspre}l<{\hspost}@{}}%
\column{18}{@{}>{\hspre}l<{\hspost}@{}}%
\column{E}{@{}>{\hspre}l<{\hspost}@{}}%
\>[B]{}\mathbf{data}\;\Varid{State_{F}}\;\Varid{s}\;\Varid{a}{}\<[18]%
\>[18]{}\mathrel{=}\Conid{Get}\;(\Varid{s}\to \Varid{a})\mid \Conid{Put}\;\Varid{s}\;\Varid{a}{}\<[E]%
\ColumnHook
\end{hscode}\resethooks
\indentend This state signature gives the following \ensuremath{\Conid{MState}\;\Varid{s}} instance:
\indentbegin \begin{hscode}\SaveRestoreHook
\column{B}{@{}>{\hspre}l<{\hspost}@{}}%
\column{3}{@{}>{\hspre}l<{\hspost}@{}}%
\column{10}{@{}>{\hspre}c<{\hspost}@{}}%
\column{10E}{@{}l@{}}%
\column{13}{@{}>{\hspre}l<{\hspost}@{}}%
\column{E}{@{}>{\hspre}l<{\hspost}@{}}%
\>[B]{}\mathbf{instance}\;\Conid{MState}\;\Varid{s}\;(\Conid{Free}\;(\Varid{State_{F}}\;\Varid{s}))\;\mathbf{where}{}\<[E]%
\\
\>[B]{}\hsindent{3}{}\<[3]%
\>[3]{}\Varid{get}{}\<[10]%
\>[10]{}\mathrel{=}{}\<[10E]%
\>[13]{}\Conid{Op}\;(\Conid{Get}\;\Varid{\eta}){}\<[E]%
\\
\>[B]{}\hsindent{3}{}\<[3]%
\>[3]{}\Varid{put}\;\Varid{s}{}\<[10]%
\>[10]{}\mathrel{=}{}\<[10E]%
\>[13]{}\Conid{Op}\;(\Conid{Put}\;\Varid{s}\;(\Varid{\eta}\;())){}\<[E]%
\ColumnHook
\end{hscode}\resethooks
\indentend 
The following handler \ensuremath{\Varid{h_{State}^\prime}} maps this free monad to the \ensuremath{\Conid{State}\;\Varid{s}} monad.
\indentbegin \begin{hscode}\SaveRestoreHook
\column{B}{@{}>{\hspre}l<{\hspost}@{}}%
\column{3}{@{}>{\hspre}l<{\hspost}@{}}%
\column{5}{@{}>{\hspre}l<{\hspost}@{}}%
\column{20}{@{}>{\hspre}l<{\hspost}@{}}%
\column{24}{@{}>{\hspre}l<{\hspost}@{}}%
\column{E}{@{}>{\hspre}l<{\hspost}@{}}%
\>[B]{}\Varid{h_{State}^\prime}\mathbin{::}\Conid{Free}\;(\Varid{State_{F}}\;\Varid{s})\;\Varid{a}\to \Conid{State}\;\Varid{s}\;\Varid{a}{}\<[E]%
\\
\>[B]{}\Varid{h_{State}^\prime}\mathrel{=}\Varid{fold}\;\Varid{gen_{S}^\prime}\;\Varid{alg_{S}^\prime}{}\<[E]%
\\
\>[B]{}\hsindent{3}{}\<[3]%
\>[3]{}\mathbf{where}{}\<[E]%
\\
\>[3]{}\hsindent{2}{}\<[5]%
\>[5]{}\Varid{gen_{S}^\prime}\;\Varid{x}{}\<[24]%
\>[24]{}\mathrel{=}\Conid{State}\mathbin{\$}\lambda \Varid{s}\to (\Varid{x},\Varid{s}){}\<[E]%
\\
\>[3]{}\hsindent{2}{}\<[5]%
\>[5]{}\Varid{alg_{S}^\prime}\;(\Conid{Get}\;{}\<[20]%
\>[20]{}\Varid{k}){}\<[24]%
\>[24]{}\mathrel{=}\Conid{State}\mathbin{\$}\lambda \Varid{s}\to \Varid{run_{State}}\;(\Varid{k}\;\Varid{s})\;\Varid{s}{}\<[E]%
\\
\>[3]{}\hsindent{2}{}\<[5]%
\>[5]{}\Varid{alg_{S}^\prime}\;(\Conid{Put}\;\Varid{s'}\;{}\<[20]%
\>[20]{}\Varid{k}){}\<[24]%
\>[24]{}\mathrel{=}\Conid{State}\mathbin{\$}\lambda \Varid{s}\to \Varid{run_{State}}\;\Varid{k}\;\Varid{s'}{}\<[E]%
\ColumnHook
\end{hscode}\resethooks
\indentend 
It is easy to verify that the four state laws hold contextually up to interpretation with \ensuremath{\Varid{h_{State}^\prime}}.

\subsection{Modularly Combining Effects}
\label{sec:combining-effects}

Combining multiple effects is relatively easy in the axiomatic approach based
on type classes. By imposing multiple constraints on the monad \ensuremath{\Varid{m}}, e.g.,
\ensuremath{(\Conid{MState}\;\Varid{s}\;\Varid{m},\Conid{MNondet}\;\Varid{m})}, we can express that \ensuremath{\Varid{m}} should support both state
and nondeterminism and respect their associated laws. In practice, this is
often insufficient: we usually require additional laws that govern the
interactions between the combined effects. We discuss possible interaction
laws between state and nondeterminism in details in Section
\ref{sec:local-global}.

\paragraph*{The Coproduct Operator for Combining Effects}\
To combine the syntax of effects given by free monads,
we need to define a coproduct operator \ensuremath{\mathrel{{:}{+}{:}}} for signatures.
\indentbegin \begin{hscode}\SaveRestoreHook
\column{B}{@{}>{\hspre}l<{\hspost}@{}}%
\column{E}{@{}>{\hspre}l<{\hspost}@{}}%
\>[B]{}\mathbf{data}\;(\Varid{f}\mathrel{{:}{+}{:}}\Varid{g})\;\Varid{a}\mathrel{=}\Conid{Inl}\;(\Varid{f}\;\Varid{a})\mid \Conid{Inr}\;(\Varid{g}\;\Varid{a}){}\<[E]%
\ColumnHook
\end{hscode}\resethooks
\indentend %
Note that given two functors \ensuremath{\Varid{f}} and \ensuremath{\Varid{g}}, it is obvious that \ensuremath{\Varid{f}\mathrel{{:}{+}{:}}\Varid{g}}
is again a functor.  This coproduct operator allows a modular
definition of the signatures of combined effects.  For instance, we
can encode programs with both state and nondeterminism as effects
using the data type \ensuremath{\Conid{Free}\;(\Varid{State_{F}}\mathrel{{:}{+}{:}}\Varid{Nondet_{F}})\;\Varid{a}}.  The coproduct also
has a neutral element \ensuremath{\Varid{Nil_{F}}}, representing the empty effect set.
\indentbegin \begin{hscode}\SaveRestoreHook
\column{B}{@{}>{\hspre}l<{\hspost}@{}}%
\column{E}{@{}>{\hspre}l<{\hspost}@{}}%
\>[B]{}\mathbf{data}\;\Varid{Nil_{F}}\;\Varid{a}\mbox{\onelinecomment  no constructors}{}\<[E]%
\ColumnHook
\end{hscode}\resethooks
\indentend 
We define the following two instances, which allow us to compose state
effects with any other effect functor \ensuremath{\Varid{f}}, and nondeterminism effects
with any other effect functors \ensuremath{\Varid{f}} and \ensuremath{\Varid{g}}, respectively.  As a
result, it is easy to see that \ensuremath{\Conid{Free}\;(\Varid{State_{F}}\;\Varid{s}\mathrel{{:}{+}{:}}\Varid{Nondet_{F}}\mathrel{{:}{+}{:}}\Varid{f})}
supports both state and nondeterminism for any functor \ensuremath{\Varid{f}}.

\indentbegin \begin{hscode}\SaveRestoreHook
\column{B}{@{}>{\hspre}l<{\hspost}@{}}%
\column{3}{@{}>{\hspre}l<{\hspost}@{}}%
\column{5}{@{}>{\hspre}l<{\hspost}@{}}%
\column{14}{@{}>{\hspre}l<{\hspost}@{}}%
\column{16}{@{}>{\hspre}l<{\hspost}@{}}%
\column{E}{@{}>{\hspre}l<{\hspost}@{}}%
\>[B]{}\mathbf{instance}\;(\Conid{Functor}\;\Varid{f})\Rightarrow \Conid{MState}\;\Varid{s}\;(\Conid{Free}\;(\Varid{State_{F}}\;\Varid{s}\mathrel{{:}{+}{:}}\Varid{f}))\;\mathbf{where}{}\<[E]%
\\
\>[B]{}\hsindent{5}{}\<[5]%
\>[5]{}\Varid{get}{}\<[14]%
\>[14]{}\mathrel{=}\Conid{Op}\mathbin{\$}\Conid{Inl}\mathbin{\$}\Conid{Get}\;\Varid{\eta}{}\<[E]%
\\
\>[B]{}\hsindent{5}{}\<[5]%
\>[5]{}\Varid{put}\;\Varid{x}{}\<[14]%
\>[14]{}\mathrel{=}\Conid{Op}\mathbin{\$}\Conid{Inl}\mathbin{\$}\Conid{Put}\;\Varid{x}\;(\Varid{\eta}\;()){}\<[E]%
\\[\blanklineskip]%
\>[B]{}\mathbf{instance}\;(\Conid{Functor}\;\Varid{f},\Conid{Functor}\;\Varid{g})\Rightarrow \Conid{MNondet}\;(\Conid{Free}\;(\Varid{g}\mathrel{{:}{+}{:}}\Varid{Nondet_{F}}\mathrel{{:}{+}{:}}\Varid{f}))\;\mathbf{where}{}\<[E]%
\\
\>[B]{}\hsindent{3}{}\<[3]%
\>[3]{}\Varid{\varnothing}{}\<[16]%
\>[16]{}\mathrel{=}\Conid{Op}\mathbin{\$}\Conid{Inr}\mathbin{\$}\Conid{Inl}\;\Conid{Fail}{}\<[E]%
\\
\>[B]{}\hsindent{3}{}\<[3]%
\>[3]{}\Varid{x}\mathbin{\talloblong}\Varid{y}{}\<[16]%
\>[16]{}\mathrel{=}\Conid{Op}\mathbin{\$}\Conid{Inr}\mathbin{\$}\Conid{Inl}\;(\Conid{Or}\;\Varid{x}\;\Varid{y}){}\<[E]%
\ColumnHook
\end{hscode}\resethooks
\indentend 

\paragraph*{Modularly Combining Effect Handlers}\
In order to interpret composite signatures, we use the forwarding approach of
\citet{Schrijvers19}. This way the handlers can be modularly composed: they
only need to know about the part of the syntax their effect is handling, and
forward the rest of the syntax to other handlers.

A mediator \ensuremath{(\mathbin{\#})} is used to separate the algebra \ensuremath{\Varid{alg}} for the handled effects and
the forwarding algebra \ensuremath{\Varid{fwd}} for the unhandled effects.
\indentbegin \begin{hscode}\SaveRestoreHook
\column{B}{@{}>{\hspre}l<{\hspost}@{}}%
\column{14}{@{}>{\hspre}l<{\hspost}@{}}%
\column{24}{@{}>{\hspre}c<{\hspost}@{}}%
\column{24E}{@{}l@{}}%
\column{27}{@{}>{\hspre}l<{\hspost}@{}}%
\column{E}{@{}>{\hspre}l<{\hspost}@{}}%
\>[B]{}(\mathbin{\#})\mathbin{::}(\Varid{f}\;\Varid{a}\to \Varid{b})\to (\Varid{g}\;\Varid{a}\to \Varid{b})\to (\Varid{f}\mathrel{{:}{+}{:}}\Varid{g})\;\Varid{a}\to \Varid{b}{}\<[E]%
\\
\>[B]{}(\Varid{alg}\mathbin{\#}\Varid{fwd})\;{}\<[14]%
\>[14]{}(\Conid{Inl}\;\Varid{op}){}\<[24]%
\>[24]{}\mathrel{=}{}\<[24E]%
\>[27]{}\Varid{alg}\;\Varid{op}{}\<[E]%
\\
\>[B]{}(\Varid{alg}\mathbin{\#}\Varid{fwd})\;{}\<[14]%
\>[14]{}(\Conid{Inr}\;\Varid{op}){}\<[24]%
\>[24]{}\mathrel{=}{}\<[24E]%
\>[27]{}\Varid{fwd}\;\Varid{op}{}\<[E]%
\ColumnHook
\end{hscode}\resethooks
\indentend 
The handlers for state and nondeterminism we have given earlier require a bit of
adjustment to be used in the composite setting since they only consider the signature
of their own effects.
We need to interpret the free monads into composite domains,
\ensuremath{\Conid{StateT}\;(\Conid{Free}\;\Varid{f})\;\Varid{a}} and \ensuremath{\Conid{Free}\;\Varid{f}\;[\mskip1.5mu \Varid{a}\mskip1.5mu]}, respectively.
Here \ensuremath{\Conid{StateT}} is the state transformer from the Monad Transformer Library \citep{mtl}.\indentbegin \begin{hscode}\SaveRestoreHook
\column{B}{@{}>{\hspre}l<{\hspost}@{}}%
\column{3}{@{}>{\hspre}l<{\hspost}@{}}%
\column{E}{@{}>{\hspre}l<{\hspost}@{}}%
\>[3]{}\mathbf{newtype}\;\Conid{StateT}\;\Varid{s}\;\Varid{m}\;\Varid{a}\mathrel{=}\Conid{StateT}\;\{\mskip1.5mu \Varid{run_{StateT}}\mathbin{::}\Varid{s}\to \Varid{m}\;(\Varid{a},\Varid{s})\mskip1.5mu\}{}\<[E]%
\ColumnHook
\end{hscode}\resethooks
\indentend The new handlers, into these composite domains, are defined as follows:
\indentbegin \begin{hscode}\SaveRestoreHook
\column{B}{@{}>{\hspre}l<{\hspost}@{}}%
\column{3}{@{}>{\hspre}l<{\hspost}@{}}%
\column{5}{@{}>{\hspre}l<{\hspost}@{}}%
\column{19}{@{}>{\hspre}l<{\hspost}@{}}%
\column{23}{@{}>{\hspre}l<{\hspost}@{}}%
\column{E}{@{}>{\hspre}l<{\hspost}@{}}%
\>[B]{}\Varid{h_{State}}\mathbin{::}\Conid{Functor}\;\Varid{f}\Rightarrow \Conid{Free}\;(\Varid{State_{F}}\;\Varid{s}\mathrel{{:}{+}{:}}\Varid{f})\;\Varid{a}\to \Conid{StateT}\;\Varid{s}\;(\Conid{Free}\;\Varid{f})\;\Varid{a}{}\<[E]%
\\
\>[B]{}\Varid{h_{State}}\mathrel{=}\Varid{fold}\;\Varid{gen_{S}}\;(\Varid{alg_{S}}\mathbin{\#}\Varid{fwd_{S}}){}\<[E]%
\\
\>[B]{}\hsindent{3}{}\<[3]%
\>[3]{}\mathbf{where}{}\<[E]%
\\
\>[3]{}\hsindent{2}{}\<[5]%
\>[5]{}\Varid{gen_{S}}\;\Varid{x}{}\<[23]%
\>[23]{}\mathrel{=}\Conid{StateT}\mathbin{\$}\lambda \Varid{s}\to \Varid{\eta}\;(\Varid{x},\Varid{s}){}\<[E]%
\\
\>[3]{}\hsindent{2}{}\<[5]%
\>[5]{}\Varid{alg_{S}}\;(\Conid{Get}\;{}\<[19]%
\>[19]{}\Varid{k}){}\<[23]%
\>[23]{}\mathrel{=}\Conid{StateT}\mathbin{\$}\lambda \Varid{s}\to \Varid{run_{StateT}}\;(\Varid{k}\;\Varid{s})\;\Varid{s}{}\<[E]%
\\
\>[3]{}\hsindent{2}{}\<[5]%
\>[5]{}\Varid{alg_{S}}\;(\Conid{Put}\;\Varid{s'}\;{}\<[19]%
\>[19]{}\Varid{k}){}\<[23]%
\>[23]{}\mathrel{=}\Conid{StateT}\mathbin{\$}\lambda \Varid{s}\to \Varid{run_{StateT}}\;\Varid{k}\;\Varid{s'}{}\<[E]%
\\
\>[3]{}\hsindent{2}{}\<[5]%
\>[5]{}\Varid{fwd_{S}}\;\Varid{op}{}\<[23]%
\>[23]{}\mathrel{=}\Conid{StateT}\mathbin{\$}\lambda \Varid{s}\to \Conid{Op}\mathbin{\$}\Varid{fmap}\;(\lambda \Varid{k}\to \Varid{run_{StateT}}\;\Varid{k}\;\Varid{s})\;\Varid{op}{}\<[E]%
\ColumnHook
\end{hscode}\resethooks
\indentend \indentbegin \begin{hscode}\SaveRestoreHook
\column{B}{@{}>{\hspre}l<{\hspost}@{}}%
\column{3}{@{}>{\hspre}l<{\hspost}@{}}%
\column{5}{@{}>{\hspre}l<{\hspost}@{}}%
\column{7}{@{}>{\hspre}c<{\hspost}@{}}%
\column{7E}{@{}l@{}}%
\column{10}{@{}>{\hspre}l<{\hspost}@{}}%
\column{22}{@{}>{\hspre}l<{\hspost}@{}}%
\column{E}{@{}>{\hspre}l<{\hspost}@{}}%
\>[B]{}\Varid{h_{ND+f}}\mathbin{::}\Conid{Functor}\;\Varid{f}\Rightarrow \Conid{Free}\;(\Varid{Nondet_{F}}\mathrel{{:}{+}{:}}\Varid{f})\;\Varid{a}\to \Conid{Free}\;\Varid{f}\;[\mskip1.5mu \Varid{a}\mskip1.5mu]{}\<[E]%
\\
\>[B]{}\Varid{h_{ND+f}}{}\<[7]%
\>[7]{}\mathrel{=}{}\<[7E]%
\>[10]{}\Varid{fold}\;\Varid{gen_{ND+f}}\;(\Varid{alg_{ND+f}}\mathbin{\#}\Varid{fwd_{ND+f}}){}\<[E]%
\\
\>[B]{}\hsindent{3}{}\<[3]%
\>[3]{}\mathbf{where}{}\<[E]%
\\
\>[3]{}\hsindent{2}{}\<[5]%
\>[5]{}\Varid{gen_{ND+f}}{}\<[22]%
\>[22]{}\mathrel{=}\Conid{Var}\hsdot{\circ }{.}\Varid{\eta}{}\<[E]%
\\
\>[3]{}\hsindent{2}{}\<[5]%
\>[5]{}\Varid{alg_{ND+f}}\;\Conid{Fail}{}\<[22]%
\>[22]{}\mathrel{=}\Conid{Var}\;[\mskip1.5mu \mskip1.5mu]{}\<[E]%
\\
\>[3]{}\hsindent{2}{}\<[5]%
\>[5]{}\Varid{alg_{ND+f}}\;(\Conid{Or}\;\Varid{p}\;\Varid{q}){}\<[22]%
\>[22]{}\mathrel{=}\Varid{liftM2}\;(+\!\!+)\;\Varid{p}\;\Varid{q}{}\<[E]%
\\
\>[3]{}\hsindent{2}{}\<[5]%
\>[5]{}\Varid{fwd_{ND+f}}\;\Varid{op}{}\<[22]%
\>[22]{}\mathrel{=}\Conid{Op}\;\Varid{op}{}\<[E]%
\ColumnHook
\end{hscode}\resethooks
\indentend 

Also, the empty signature \ensuremath{\Varid{Nil_{F}}} has a trivial associated handler.
\indentbegin \begin{hscode}\SaveRestoreHook
\column{B}{@{}>{\hspre}l<{\hspost}@{}}%
\column{E}{@{}>{\hspre}l<{\hspost}@{}}%
\>[B]{}\Varid{h_{Nil}}\mathbin{::}\Conid{Free}\;\Varid{Nil_{F}}\;\Varid{a}\to \Varid{a}{}\<[E]%
\\
\>[B]{}\Varid{h_{Nil}}\;(\Conid{Var}\;\Varid{x})\mathrel{=}\Varid{x}{}\<[E]%
\ColumnHook
\end{hscode}\resethooks
\indentend 
\section{Modelling Local State with Global State}
\label{sec:local-global}

This section studies two flavors of effect interaction between state and
nondeterminism: local-state and global-state semantics.  Local
state is a higher-level effect than globale state.
In a program with local state, each nondeterministic branch has its own local
copy of the state.  This is a convenient programming abstraction provided
by many systems that solve search problems, e.g., Prolog.  In contrast, global
state linearly threads a single state through the nondeterministic
branches. This can be interesting for performance reasons: we can limit memory
use by avoiding multiple copies, and perform in-place updates to reduce allocation
and garbage collection, and to improve locality.

In this section, we first formally characterise local-state and
global-state semantics, and then define a translation from the former
to the latter which uses the mechanism of nondeterminism to store
previous states and insert backtracking branches.

\subsection{Local-State Semantics}
\label{sec:local-state}

When a branch of a nondeterministic computation runs into a dead end and
the continuation is picked up at the most recent branching point,
any alterations made to the state by the terminated branch are invisible to
the continuation.
We refer to this semantics as \emph{local-state semantics}.
\citet{Gibbons11} also call it backtrackable state.

\paragraph*{The Local-State Laws}\
The following two laws characterise the local-state semantics for a monad
with state and nondeterminism:
\begin{alignat}{2}
    &\mbox{\bf put-right-identity}:\quad &
    \ensuremath{\Varid{put}\;\Varid{s}>\!\!>\Varid{\varnothing}} &= \ensuremath{\Varid{\varnothing}}~~\mbox{,} \label{eq:put-right-identity}\\
    &\mbox{\bf put-left-distributivity}:~ &
    \ensuremath{\Varid{put}\;\Varid{s}>\!\!>(\Varid{m}_{1}\mathbin{\talloblong}\Varid{m}_{2})} &= \ensuremath{(\Varid{put}\;\Varid{s}>\!\!>\Varid{m}_{1})\mathbin{\talloblong}(\Varid{put}\;\Varid{s}>\!\!>\Varid{m}_{2})} ~~\mbox{.} \label{eq:put-left-dist}
\end{alignat}

The equation (\ref{eq:put-right-identity}) expresses that \ensuremath{\Varid{\varnothing}} is the right
identity of \ensuremath{\Varid{put}}; it annihilates state updates.  The other law expresses that
\ensuremath{\Varid{put}} distributes from the left in \ensuremath{(\talloblong)}.

These two laws only focus on the interaction between \ensuremath{\Varid{put}} and
nondeterminism. The following laws for \ensuremath{\Varid{get}} can be derived from other
laws. The proof can be found in \Cref{app:local-law}.
\begin{alignat}{2}
    &\mbox{\bf get-right-identity}:\quad &
    \ensuremath{\Varid{get}>\!\!>\Varid{\varnothing}} &= \ensuremath{\Varid{\varnothing}}~~\mbox{,} \label{eq:get-right-identity}\\
    &\mbox{\bf get-left-distributivity}:~ &
    \ensuremath{\Varid{get}>\!\!>\!\!=(\lambda \Varid{x}\to \Varid{k}_{1}\;\Varid{x}\mathbin{\talloblong}\Varid{k}_{2}\;\Varid{x})} &= \ensuremath{(\Varid{get}>\!\!>\!\!=\Varid{k}_{1})\mathbin{\talloblong}(\Varid{get}>\!\!>\!\!=\Varid{k}_{2})} ~~\mbox{.} \label{eq:get-left-dist}
\end{alignat}

If we take these four equations together with the left-identity and right-distributivity
laws of nondeterminism, we can say that
that nondeterminism and state ``commute'';
if a \ensuremath{\Varid{get}} or \ensuremath{\Varid{put}} precedes a \ensuremath{\Varid{\varnothing}} or \ensuremath{\mathbin{\talloblong}}, we
can exchange their order (and vice versa).

\paragraph*{Implementation}\
Implementation-wise, the laws
imply that each nondeterministic branch has its own copy of the state.
For instance, \Cref{eq:put-left-dist} gives us\indentbegin \begin{hscode}\SaveRestoreHook
\column{B}{@{}>{\hspre}l<{\hspost}@{}}%
\column{3}{@{}>{\hspre}l<{\hspost}@{}}%
\column{E}{@{}>{\hspre}l<{\hspost}@{}}%
\>[3]{}\Varid{put}\;\mathrm{42}\;(\Varid{put}\;\mathrm{21}\mathbin{\talloblong}\Varid{get})\mathrel{=}(\Varid{put}\;\mathrm{42}>\!\!>\Varid{put}\;\mathrm{21})\mathbin{\talloblong}(\Varid{put}\;\mathrm{42}>\!\!>\Varid{get}){}\<[E]%
\ColumnHook
\end{hscode}\resethooks
\indentend %
The state we \ensuremath{\Varid{get}} in the second branch is still \ensuremath{\mathrm{42}}, despite the
\ensuremath{\Varid{put}\;\mathrm{21}} in the first branch.

One implementation satisfying the laws is\indentbegin \begin{hscode}\SaveRestoreHook
\column{B}{@{}>{\hspre}l<{\hspost}@{}}%
\column{3}{@{}>{\hspre}l<{\hspost}@{}}%
\column{E}{@{}>{\hspre}l<{\hspost}@{}}%
\>[3]{}\mathbf{type}\;\Conid{Local}\;\Varid{s}\;\Varid{m}\;\Varid{a}\mathrel{=}\Varid{s}\to \Varid{m}\;(\Varid{a},\Varid{s}){}\<[E]%
\ColumnHook
\end{hscode}\resethooks
\indentend where \ensuremath{\Varid{m}} is a nondeterministic monad, the simplest structure of which is a list.
This implementation is exactly that of \ensuremath{\Conid{StateT}\;\Varid{s}\;\Varid{m}\;\Varid{a}}
in the Monad Transformer Library \citep{mtl} which we have introduced in
\Cref{sec:combining-effects}.

With effect handling \citep{Kiselyov15, Wu14}, we get the local state semantics when
we run the state handler before the nondeterminism handler:
\indentbegin \begin{hscode}\SaveRestoreHook
\column{B}{@{}>{\hspre}l<{\hspost}@{}}%
\column{E}{@{}>{\hspre}l<{\hspost}@{}}%
\>[B]{}\Varid{h_{Local}}\mathbin{::}\Conid{Functor}\;\Varid{f}\Rightarrow \Conid{Free}\;(\Varid{State_{F}}\;\Varid{s}\mathrel{{:}{+}{:}}\Varid{Nondet_{F}}\mathrel{{:}{+}{:}}\Varid{f})\;\Varid{a}\to (\Varid{s}\to \Conid{Free}\;\Varid{f}\;[\mskip1.5mu \Varid{a}\mskip1.5mu]){}\<[E]%
\\
\>[B]{}\Varid{h_{Local}}\mathrel{=}\Varid{fmap}\;(\Varid{fmap}\;(\Varid{fmap}\;\Varid{fst})\hsdot{\circ }{.}\Varid{h_{ND+f}})\hsdot{\circ }{.}\Varid{run_{StateT}}\hsdot{\circ }{.}\Varid{h_{State}}{}\<[E]%
\ColumnHook
\end{hscode}\resethooks
\indentend 
In the case where the remaining signature is empty (\ensuremath{\Varid{f}\mathrel{=}\Varid{Nil_{F}}}), we get:
\indentbegin \begin{hscode}\SaveRestoreHook
\column{B}{@{}>{\hspre}l<{\hspost}@{}}%
\column{3}{@{}>{\hspre}l<{\hspost}@{}}%
\column{E}{@{}>{\hspre}l<{\hspost}@{}}%
\>[3]{}\Varid{fmap}\;\Varid{h_{Nil}}\hsdot{\circ }{.}\Varid{h_{Local}}\mathbin{::}\Conid{Free}\;(\Varid{State_{F}}\;\Varid{s}\mathrel{{:}{+}{:}}\Varid{Nondet_{F}}\mathrel{{:}{+}{:}}\Varid{Nil_{F}})\;\Varid{a}\to (\Varid{s}\to [\mskip1.5mu \Varid{a}\mskip1.5mu]){}\<[E]%
\ColumnHook
\end{hscode}\resethooks
\indentend %
Here, the result type \ensuremath{(\Varid{s}\to [\mskip1.5mu \Varid{a}\mskip1.5mu])} differs from \ensuremath{\Varid{s}\to [\mskip1.5mu (\Varid{a},\Varid{s})\mskip1.5mu]} in that it produces
only a list of results (\ensuremath{[\mskip1.5mu \Varid{a}\mskip1.5mu]}) and not pairs of results and their final state
(\ensuremath{[\mskip1.5mu (\Varid{a},\Varid{s})\mskip1.5mu]}). The latter is needed for \ensuremath{\Conid{Local}\;\Varid{s}\;\Varid{m}} to have the structure of a monad, in particular
to support the modular composition of computations with \ensuremath{(>\!\!>\!\!=)}. Such is not needed for the carriers of handlers, because
the composition of computations is taken care of by the \ensuremath{(>\!\!>\!\!=)} operator of the free monad.

\subsection{Global-State Semantics}
\label{sec:global-state}

Alternatively, one can choose a semantics where state reigns over nondeterminism.
In this case of non-backtrackable state, alterations to the state persist over
backtracks. Because only a single state is shared over all branches of
nondeterministic computation, we call this state the \emph{global-state
semantics}.

\paragraph*{The Global-State Law}\
The global-state semantics sets apart non-backtrackable state from
backtrackable state.
In addition to the general laws for nondeterminism
((\ref{eq:mzero}) -- (\ref{eq:mzero-zero})) and state
((\ref{eq:put-put}) -- (\ref{eq:get-get})), we provide a \emph{global-state law}
to govern the interaction between nondeterminism and state.
\begin{alignat}{2}
    &\mbox{\bf put-or}:\quad &
    \ensuremath{(\Varid{put}\;\Varid{s}>\!\!>\Varid{m})\mathbin{\talloblong}\Varid{n}} &= \ensuremath{\Varid{put}\;\Varid{s}>\!\!>(\Varid{m}\mathbin{\talloblong}\Varid{n})}~~\mbox{.} \label{eq:put-or}
\end{alignat}

This law allows lifting a \ensuremath{\Varid{put}} operation from the left branch of a
nondeterministic choice.
For instance, if \ensuremath{\Varid{m}\mathrel{=}\Varid{\varnothing}} in the left-hand side of the equation,
then under local-state semantics
(laws (\ref{eq:mzero}) and (\ref{eq:put-right-identity}))
the left-hand side becomes equal to \ensuremath{\Varid{n}},
whereas under global-state semantics
(laws (\ref{eq:mzero}) and (\ref{eq:put-or}))
the equation simplifies to \ensuremath{\Varid{put}\;\Varid{s}>\!\!>\Varid{n}}.

\paragraph*{Implementation}\
Figuring out a correct implementation for the global-state monad is tricky.
One might believe that \ensuremath{\Conid{Global}\;\Varid{s}\;\Varid{m}\;\Varid{a}\mathrel{=}\Varid{s}\to (\Varid{m}\;\Varid{a},\Varid{s})}
is a natural implementation of such a monad.
However, the usual naive implementation of \ensuremath{(>\!\!>\!\!=)} for it does not satisfy
right-distributivity (\ref{eq:mplus-dist}) and is therefore not even a monad.
The type \ensuremath{\Conid{ListT}\;(\Conid{State}\;\Varid{s})} from the Monad Transformer Library \citep{mtl}
expands to essentially the same implementation with
monad \ensuremath{\Varid{m}} instantiated by the list monad.
This implementation has the same flaws.
More careful implementations of \ensuremath{\Conid{ListT}} (often referred to as
``\ensuremath{\Conid{ListT}} done right'') satisfying right-distributivity
(\ref{eq:mplus-dist}) and other monad laws have been proposed by
\citet{Volkov14, Gale}.
The following implementation is essentially that of Gale.
\indentbegin \begin{hscode}\SaveRestoreHook
\column{B}{@{}>{\hspre}l<{\hspost}@{}}%
\column{E}{@{}>{\hspre}l<{\hspost}@{}}%
\>[B]{}\mathbf{newtype}\;\Conid{Global}\;\Varid{s}\;\Varid{a}\mathrel{=}\Conid{Gl}\;\{\mskip1.5mu \Varid{runGl}\mathbin{::}\Varid{s}\to (\Conid{Maybe}\;(\Varid{a},\Conid{Global}\;\Varid{s}\;\Varid{a}),\Varid{s})\mskip1.5mu\}{}\<[E]%
\ColumnHook
\end{hscode}\resethooks
\indentend The \ensuremath{\Conid{Maybe}} in this type indicates that a computation may fail to produce a
result. However, since the \ensuremath{\Varid{s}} is outside of the \ensuremath{\Conid{Maybe}}, a modified state
is returned even if the computation fails.
This \ensuremath{\Conid{Global}\;\Varid{s}\;\Varid{a}} type is an instance of the \ensuremath{\Conid{MState}} and \ensuremath{\Conid{MNondet}}
typeclasses.

\begin{minipage}{0.5\textwidth}
\indentbegin \begin{hscode}\SaveRestoreHook
\column{B}{@{}>{\hspre}l<{\hspost}@{}}%
\column{5}{@{}>{\hspre}l<{\hspost}@{}}%
\column{9}{@{}>{\hspre}l<{\hspost}@{}}%
\column{18}{@{}>{\hspre}l<{\hspost}@{}}%
\column{24}{@{}>{\hspre}l<{\hspost}@{}}%
\column{29}{@{}>{\hspre}c<{\hspost}@{}}%
\column{29E}{@{}l@{}}%
\column{31}{@{}>{\hspre}l<{\hspost}@{}}%
\column{34}{@{}>{\hspre}l<{\hspost}@{}}%
\column{E}{@{}>{\hspre}l<{\hspost}@{}}%
\>[B]{}\mathbf{instance}\;\Conid{MNondet}\;(\Conid{Global}\;\Varid{s})\;\mathbf{where}{}\<[E]%
\\
\>[B]{}\hsindent{5}{}\<[5]%
\>[5]{}\Varid{\varnothing}{}\<[18]%
\>[18]{}\mathrel{=}\Conid{Gl}\;(\lambda \Varid{s}\to {}\<[31]%
\>[31]{}(\Conid{Nothing},\Varid{s})){}\<[E]%
\\
\>[B]{}\hsindent{5}{}\<[5]%
\>[5]{}\Varid{p}\mathbin{\talloblong}\Varid{q}{}\<[18]%
\>[18]{}\mathrel{=}\Conid{Gl}\;(\lambda \Varid{s}\to {}\<[31]%
\>[31]{}\mathbf{case}\;\Varid{runGl}\;\Varid{p}\;\Varid{s}\;\mathbf{of}{}\<[E]%
\\
\>[5]{}\hsindent{4}{}\<[9]%
\>[9]{}(\Conid{Nothing},{}\<[24]%
\>[24]{}\Varid{t}){}\<[29]%
\>[29]{}\to {}\<[29E]%
\>[34]{}\Varid{runGl}\;\Varid{q}\;\Varid{t}{}\<[E]%
\\
\>[5]{}\hsindent{4}{}\<[9]%
\>[9]{}(\Conid{Just}\;(\Varid{x},\Varid{r}),{}\<[24]%
\>[24]{}\Varid{t}){}\<[29]%
\>[29]{}\to {}\<[29E]%
\>[34]{}(\Conid{Just}\;(\Varid{x},\Varid{r}\mathbin{\talloblong}\Varid{q}),\Varid{t})){}\<[E]%
\ColumnHook
\end{hscode}\resethooks
\indentend \end{minipage}
\begin{minipage}{0.5\textwidth}
\indentbegin \begin{hscode}\SaveRestoreHook
\column{B}{@{}>{\hspre}l<{\hspost}@{}}%
\column{5}{@{}>{\hspre}l<{\hspost}@{}}%
\column{12}{@{}>{\hspre}c<{\hspost}@{}}%
\column{12E}{@{}l@{}}%
\column{15}{@{}>{\hspre}l<{\hspost}@{}}%
\column{19}{@{}>{\hspre}l<{\hspost}@{}}%
\column{24}{@{}>{\hspre}c<{\hspost}@{}}%
\column{24E}{@{}l@{}}%
\column{25}{@{}>{\hspre}c<{\hspost}@{}}%
\column{25E}{@{}l@{}}%
\column{28}{@{}>{\hspre}l<{\hspost}@{}}%
\column{29}{@{}>{\hspre}l<{\hspost}@{}}%
\column{40}{@{}>{\hspre}l<{\hspost}@{}}%
\column{41}{@{}>{\hspre}l<{\hspost}@{}}%
\column{50}{@{}>{\hspre}l<{\hspost}@{}}%
\column{51}{@{}>{\hspre}l<{\hspost}@{}}%
\column{E}{@{}>{\hspre}l<{\hspost}@{}}%
\>[B]{}\mathbf{instance}\;\Conid{MState}\;\Varid{s}\;(\Conid{Global}\;\Varid{s})\;\mathbf{where}{}\<[E]%
\\
\>[B]{}\hsindent{5}{}\<[5]%
\>[5]{}\Varid{get}{}\<[12]%
\>[12]{}\mathrel{=}{}\<[12E]%
\>[15]{}\Conid{Gl}\;{}\<[19]%
\>[19]{}(\lambda \Varid{s}{}\<[24]%
\>[24]{}\to {}\<[24E]%
\>[28]{}(\Conid{Just}\;(\Varid{s},{}\<[40]%
\>[40]{}\Varid{\varnothing}),{}\<[50]%
\>[50]{}\Varid{s})){}\<[E]%
\\
\>[B]{}\hsindent{5}{}\<[5]%
\>[5]{}\Varid{put}\;\Varid{s}{}\<[12]%
\>[12]{}\mathrel{=}{}\<[12E]%
\>[15]{}\Conid{Gl}\;{}\<[19]%
\>[19]{}(\lambda \anonymous {}\<[25]%
\>[25]{}\to {}\<[25E]%
\>[29]{}(\Conid{Just}\;((),{}\<[41]%
\>[41]{}\Varid{\varnothing}),{}\<[51]%
\>[51]{}\Varid{s})){}\<[E]%
\ColumnHook
\end{hscode}\resethooks
\indentend \end{minipage}

Failure, of course, returns an empty continuation and an unmodified state.
Branching first exhausts the left branch before switching to the right branch.

Effect handlers \citep{Kiselyov15, Wu14} also provide implementations
that match our intuition of non-backtrackable computations.
The global-state semantics can be implemented by simply switching the
order of the two effect handlers compared to the local state handler
\ensuremath{\Varid{h_{Local}}}.
\indentbegin \begin{hscode}\SaveRestoreHook
\column{B}{@{}>{\hspre}l<{\hspost}@{}}%
\column{E}{@{}>{\hspre}l<{\hspost}@{}}%
\>[B]{}\Varid{h_{Global}}\mathbin{::}(\Conid{Functor}\;\Varid{f})\Rightarrow \Conid{Free}\;(\Varid{State_{F}}\;\Varid{s}\mathrel{{:}{+}{:}}\Varid{Nondet_{F}}\mathrel{{:}{+}{:}}\Varid{f})\;\Varid{a}\to (\Varid{s}\to \Conid{Free}\;\Varid{f}\;[\mskip1.5mu \Varid{a}\mskip1.5mu]){}\<[E]%
\\
\>[B]{}\Varid{h_{Global}}\mathrel{=}\Varid{fmap}\;(\Varid{fmap}\;\Varid{fst})\hsdot{\circ }{.}\Varid{run_{StateT}}\hsdot{\circ }{.}\Varid{h_{State}}\hsdot{\circ }{.}\Varid{h_{ND+f}}\hsdot{\circ }{.}\Varid{(\Leftrightarrow)}{}\<[E]%
\ColumnHook
\end{hscode}\resethooks
\indentend This also runs a single state through a nondeterministic computation.
The \ensuremath{\Varid{(\Leftrightarrow)}} isomorphism swaps the order of two functors in the
co-product signature of the free monad in order to let \ensuremath{\Varid{h_{Local}}} and
\ensuremath{\Varid{h_{Global}}} have the same type signature.
\indentbegin \begin{hscode}\SaveRestoreHook
\column{B}{@{}>{\hspre}l<{\hspost}@{}}%
\column{27}{@{}>{\hspre}l<{\hspost}@{}}%
\column{47}{@{}>{\hspre}l<{\hspost}@{}}%
\column{E}{@{}>{\hspre}l<{\hspost}@{}}%
\>[B]{}\Varid{(\Leftrightarrow)}\mathbin{::}(\Conid{Functor}\;\Varid{f}_{1},\Conid{Functor}\;\Varid{f}_{2},\Conid{Functor}\;\Varid{f})\Rightarrow \Conid{Free}\;(\Varid{f}_{1}\mathrel{{:}{+}{:}}\Varid{f}_{2}\mathrel{{:}{+}{:}}\Varid{f})\;\Varid{a}\to \Conid{Free}\;(\Varid{f}_{2}\mathrel{{:}{+}{:}}\Varid{f}_{1}\mathrel{{:}{+}{:}}\Varid{f})\;\Varid{a}{}\<[E]%
\\
\>[B]{}\Varid{(\Leftrightarrow)}\;(\Conid{Var}\;\Varid{x}){}\<[27]%
\>[27]{}\mathrel{=}\Conid{Var}\;\Varid{x}{}\<[E]%
\\
\>[B]{}\Varid{(\Leftrightarrow)}\;(\Conid{Op}\;(\Conid{Inl}\;\Varid{k})){}\<[27]%
\>[27]{}\mathrel{=}(\Conid{Op}\hsdot{\circ }{.}\Conid{Inr}\hsdot{\circ }{.}\Conid{Inl})\;{}\<[47]%
\>[47]{}(\Varid{fmap}\;\Varid{(\Leftrightarrow)}\;\Varid{k}){}\<[E]%
\\
\>[B]{}\Varid{(\Leftrightarrow)}\;(\Conid{Op}\;(\Conid{Inr}\;(\Conid{Inl}\;\Varid{k}))){}\<[27]%
\>[27]{}\mathrel{=}(\Conid{Op}\hsdot{\circ }{.}\Conid{Inl})\;{}\<[47]%
\>[47]{}(\Varid{fmap}\;\Varid{(\Leftrightarrow)}\;\Varid{k}){}\<[E]%
\\
\>[B]{}\Varid{(\Leftrightarrow)}\;(\Conid{Op}\;(\Conid{Inr}\;(\Conid{Inr}\;\Varid{k}))){}\<[27]%
\>[27]{}\mathrel{=}(\Conid{Op}\hsdot{\circ }{.}\Conid{Inr}\hsdot{\circ }{.}\Conid{Inr})\;{}\<[47]%
\>[47]{}(\Varid{fmap}\;\Varid{(\Leftrightarrow)}\;\Varid{k}){}\<[E]%
\ColumnHook
\end{hscode}\resethooks
\indentend %

In the case where the remaining signature is empty (\ensuremath{\Varid{f}\mathrel{=}\Varid{Nil_{F}}}), we get:
\indentbegin \begin{hscode}\SaveRestoreHook
\column{B}{@{}>{\hspre}l<{\hspost}@{}}%
\column{3}{@{}>{\hspre}l<{\hspost}@{}}%
\column{E}{@{}>{\hspre}l<{\hspost}@{}}%
\>[3]{}\Varid{fmap}\;\Varid{h_{Nil}}\hsdot{\circ }{.}\Varid{h_{Global}}\mathbin{::}\Conid{Free}\;(\Varid{State_{F}}\;\Varid{s}\mathrel{{:}{+}{:}}\Varid{Nondet_{F}}\mathrel{{:}{+}{:}}\Varid{Nil_{F}})\;\Varid{a}\to (\Varid{s}\to [\mskip1.5mu \Varid{a}\mskip1.5mu]){}\<[E]%
\ColumnHook
\end{hscode}\resethooks
\indentend %
The carrier type is again simpler than \ensuremath{\Conid{Global}\;\Varid{s}\;\Varid{a}} because it does not have to
support the \ensuremath{(>\!\!>\!\!=)} operator.

\subsection{Simulating Local State with Global State}
\label{sec:transforming-between-local-and-global-state}
\label{sec:local2global}

Both local state and global state have their own laws and semantics.
Also, both interpretations of nondeterminism with state have their own
advantages and disadvantages.

Local-state semantics imply that each nondeterministic branch has its own state.
This may be expensive if the state is represented by data structures, e.g. arrays,
that are costly to duplicate.
For example, when each new state is only slightly different from the previous,
we have a wasteful duplication of information.

Global-state semantics, however, threads a single state through the entire
computation without making any implicit copies.
Consequently, it is easier to control resource usage and apply
optimisation strategies in this setting.
However, doing this to a program that has a backtracking structure, and would
be more naturally expressed in a local-state style,
comes at a great loss of clarity.
Furthermore, it is significantly more challenging for programmers to
reason about global-state semantics than local-state semantics.

To resolve this dilemma, we can write our programs in a local-state style
and then translate them to the global-state style to enable low-level optimisations.
In this subsection, we show one systematic program translation that
alters a program written for local-state semantics to a program that,
when interpreted under global-state semantics, behaves exactly the
same as the original program interpreted under local-state semantics.
This translation makes explicit copying of the whole state and relies
on using the nondeterminism mechanism to insert state-restoring
branches.
We will show other translations from local-state semantics to
global-state semantics which avoid the copying and do not rely on
nondeterminism in \Cref{sec:undo} and \Cref{sec:trail-stack}.

\paragraph*{State-Restoring Put}\
Central to the implementation of backtracking in the global state setting is
the backtracking variant of \ensuremath{\Varid{put}}.
Going forward, such a state-restoring \ensuremath{\Varid{put_{R}}} modifies the state as usual,
but, when backtracked over, it restores the old state.

We implement \ensuremath{\Varid{put_{R}}} using both state and nondeterminism as follows:
\indentbegin \begin{hscode}\SaveRestoreHook
\column{B}{@{}>{\hspre}l<{\hspost}@{}}%
\column{E}{@{}>{\hspre}l<{\hspost}@{}}%
\>[B]{}\Varid{put_{R}}\mathbin{::}(\Conid{MState}\;\Varid{s}\;\Varid{m},\Conid{MNondet}\;\Varid{m})\Rightarrow \Varid{s}\to \Varid{m}\;(){}\<[E]%
\\
\>[B]{}\Varid{put_{R}}\;\Varid{s}\mathrel{=}\Varid{get}>\!\!>\!\!=\lambda \Varid{s'}\to \Varid{put}\;\Varid{s}\mathbin{\talloblong}\Varid{side}\;(\Varid{put}\;\Varid{s'}){}\<[E]%
\ColumnHook
\end{hscode}\resethooks
\indentend \label{eq:state-restoring-put}

Here the \ensuremath{\Varid{side}} branch is executed for its side-effect only; it fails
before yielding a result.
\indentbegin \begin{hscode}\SaveRestoreHook
\column{B}{@{}>{\hspre}l<{\hspost}@{}}%
\column{E}{@{}>{\hspre}l<{\hspost}@{}}%
\>[B]{}\Varid{side}\mathbin{::}\Conid{MNondet}\;\Varid{m}\Rightarrow \Varid{m}\;\Varid{a}\to \Varid{m}\;\Varid{b}{}\<[E]%
\\
\>[B]{}\Varid{side}\;\Varid{m}\mathrel{=}\Varid{m}>\!\!>\Varid{\varnothing}{}\<[E]%
\ColumnHook
\end{hscode}\resethooks
\indentend 
Intuitively, the second branch generated by \ensuremath{\Varid{put_{R}}} can be understood
as a backtracking or state-restoring branch.
The \ensuremath{\Varid{put_{R}}\;\Varid{s}} operation changes the state to \ensuremath{\Varid{s}} in the first branch
\ensuremath{\Varid{put}\;\Varid{s}}, and then restores the it to the original state \ensuremath{\Varid{s'}} in the
second branch after we finish all computations in the first branch.
Then, the second branch immediately fails so that we can keep going to
other branches with the original state \ensuremath{\Varid{s'}}.
For example, assuming an arbitrary computation \ensuremath{\Varid{comp}} is placed after
a state-restoring put, we have the following calculation.\indentbegin \begin{hscode}\SaveRestoreHook
\column{B}{@{}>{\hspre}l<{\hspost}@{}}%
\column{3}{@{}>{\hspre}l<{\hspost}@{}}%
\column{6}{@{}>{\hspre}l<{\hspost}@{}}%
\column{E}{@{}>{\hspre}l<{\hspost}@{}}%
\>[6]{}\Varid{put_{R}}\;\Varid{s}>\!\!>\Varid{comp}{}\<[E]%
\\
\>[3]{}\mathrel{=}\mbox{\commentbegin ~  definition of \ensuremath{\Varid{put_{R}}}   \commentend}{}\<[E]%
\\
\>[3]{}\hsindent{3}{}\<[6]%
\>[6]{}(\Varid{get}>\!\!>\!\!=\lambda \Varid{s'}\to \Varid{put}\;\Varid{s}\mathbin{\talloblong}\Varid{side}\;(\Varid{put}\;\Varid{s'}))>\!\!>\Varid{comp}{}\<[E]%
\\
\>[3]{}\mathrel{=}\mbox{\commentbegin ~  right-distributivity (\ref{eq:mplus-dist})   \commentend}{}\<[E]%
\\
\>[3]{}\hsindent{3}{}\<[6]%
\>[6]{}(\Varid{get}>\!\!>\!\!=\lambda \Varid{s'}\to (\Varid{put}\;\Varid{s}>\!\!>\Varid{comp})\mathbin{\talloblong}(\Varid{side}\;(\Varid{put}\;\Varid{s'})>\!\!>\Varid{comp})){}\<[E]%
\\
\>[3]{}\mathrel{=}\mbox{\commentbegin ~  left identity (\ref{eq:mzero-zero})   \commentend}{}\<[E]%
\\
\>[3]{}\hsindent{3}{}\<[6]%
\>[6]{}(\Varid{get}>\!\!>\!\!=\lambda \Varid{s'}\to (\Varid{put}\;\Varid{s}>\!\!>\Varid{comp})\mathbin{\talloblong}\Varid{side}\;(\Varid{put}\;\Varid{s'})){}\<[E]%
\ColumnHook
\end{hscode}\resethooks
\indentend This program saves the current state \ensuremath{\Varid{s'}}, computes \ensuremath{\Varid{comp}} using state \ensuremath{\Varid{s}},
and then restores the saved state \ensuremath{\Varid{s'}}.
Figure \ref{fig:state-restoring-put} shows how the state-passing works
and the flow of execution for a computation after a state-restoring put.

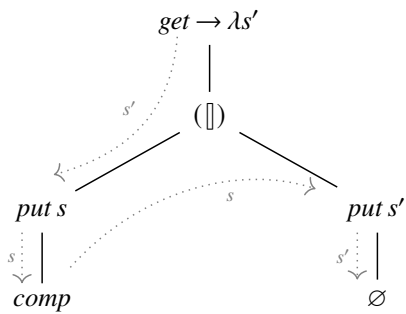
\begin{figure}[ht]
\[\begin{tikzcd}
    & {\ensuremath{\Varid{get}\to \lambda \Varid{s'}}} \\
    & {\ensuremath{(\talloblong)}} \\
    {\ensuremath{\Varid{put}\;\Varid{s}}} && {\ensuremath{\Varid{put}\;\Varid{s'}}} \\
    {\ensuremath{\Varid{comp}}} && {\ensuremath{\Varid{\varnothing}}}
    \arrow[no head, from=1-2, to=2-2]
    \arrow[no head, from=2-2, to=3-1]
    \arrow[no head, from=2-2, to=3-3]
    \arrow[no head, from=3-1, to=4-1]
    \arrow[no head, from=3-3, to=4-3]
    \arrow["{\ensuremath{\Varid{s'}}}"', shift right=5, color={rgb,255:red,128;green,128;blue,128}, curve={height=-18pt}, shorten <=5pt, dotted, from=1-2, to=3-1]
    \arrow["{\ensuremath{\Varid{s}}}"', shift right=3, color={rgb,255:red,128;green,128;blue,128}, dotted, from=3-1, to=4-1]
    \arrow["{\ensuremath{\Varid{s'}}}"', shift right=3, color={rgb,255:red,128;green,128;blue,128}, dotted, from=3-3, to=4-3]
    \arrow["{\ensuremath{\Varid{s}}}"'{pos=0.6}, color={rgb,255:red,128;green,128;blue,128}, curve={height=-30pt}, shorten <=8pt, shorten >=8pt, dotted, from=4-1, to=3-3]
\end{tikzcd}\]
\caption{State-restoring put-operation.}
\label{fig:state-restoring-put}
\end{figure}

Another example of \ensuremath{\Varid{put_{R}}} is shown in Table \ref{tab:state-restoring-put}, where
three programs are run with initial state \ensuremath{\Varid{s}_{0}}.
Note the difference between the final state and the program result for the
state-restoring put.

\begin{table}[h]
\begin{center}
\begin{tabular}{l|ll}
                            & Program result & Final state \\ \hline
\ensuremath{\Varid{\eta}\;\Varid{x}>\!\!>\Varid{get}}           & \ensuremath{\Varid{s}_{0}}           & \ensuremath{\Varid{s}_{0}}        \\
\ensuremath{\Varid{put}\;\Varid{s}>\!\!>\Varid{\eta}\;\Varid{x}>\!\!>\Varid{get}}  & \ensuremath{\Varid{s}}            & \ensuremath{\Varid{s}}         \\
\ensuremath{\Varid{put_{R}}\;\Varid{s}>\!\!>\Varid{\eta}\;\Varid{x}>\!\!>\Varid{get}} & \ensuremath{\Varid{s}}            & \ensuremath{\Varid{s}_{0}}
\end{tabular}
\end{center}
\caption{Comparing \ensuremath{\Varid{put}} and \ensuremath{\Varid{put_{R}}}.}
\label{tab:state-restoring-put}
\end{table}

\paragraph*{Translation with State-Restoring Put}\
The idea is that \ensuremath{\Varid{put_{R}}}, when run with a global state, satisfies laws
(\ref{eq:put-put}) to (\ref{eq:put-left-dist}) --- the state laws and
the local-state laws.
Then, one can take a program written for local-state semantics, replace
all occurrences of \ensuremath{\Varid{put}} by \ensuremath{\Varid{put_{R}}}, and run the program with a global state.
However, to satisfy all of these laws,
care should be taken to replace \emph{all} occurrences of \ensuremath{\Varid{put}}.
Particularly, placing a program in a larger context, where \ensuremath{\Varid{put}} has not been replaced, can change the meaning
of its subprograms.
An example of such a problematic context is \ensuremath{(>\!\!>\Varid{put}\;\Varid{t})}, where the \ensuremath{\Varid{get}}-\ensuremath{\Varid{put}} law
(\ref{eq:get-put}) breaks and programs \ensuremath{\Varid{get}>\!\!>\Varid{put_{R}}} and \ensuremath{\Varid{\eta}\;()} can be
differentiated:

\begin{minipage}{0.7\textwidth}\indentbegin \begin{hscode}\SaveRestoreHook
\column{B}{@{}>{\hspre}l<{\hspost}@{}}%
\column{3}{@{}>{\hspre}l<{\hspost}@{}}%
\column{6}{@{}>{\hspre}l<{\hspost}@{}}%
\column{E}{@{}>{\hspre}l<{\hspost}@{}}%
\>[6]{}(\Varid{get}>\!\!>\Varid{put_{R}})>\!\!>\Varid{put}\;\Varid{t}{}\<[E]%
\\
\>[3]{}\mathrel{=}\mbox{\commentbegin ~  definition of \ensuremath{\Varid{put_{R}}}   \commentend}{}\<[E]%
\\
\>[3]{}\hsindent{3}{}\<[6]%
\>[6]{}(\Varid{get}>\!\!>\!\!=\lambda \Varid{s}\to \Varid{get}>\!\!>\!\!=\lambda \Varid{s}_{0}\to \Varid{put}\;\Varid{s}\mathbin{\talloblong}\Varid{side}\;(\Varid{put}\;\Varid{s}_{0}))>\!\!>\Varid{put}\;\Varid{t}{}\<[E]%
\\
\>[3]{}\mathrel{=}\mbox{\commentbegin ~  get-get (\ref{eq:get-get})   \commentend}{}\<[E]%
\\
\>[3]{}\hsindent{3}{}\<[6]%
\>[6]{}(\Varid{get}>\!\!>\!\!=\lambda \Varid{s}\to \Varid{put}\;\Varid{s}\mathbin{\talloblong}\Varid{side}\;(\Varid{put}\;\Varid{s}))>\!\!>\Varid{put}\;\Varid{t}{}\<[E]%
\\
\>[3]{}\mathrel{=}\mbox{\commentbegin ~  right-distributivity (\ref{eq:mplus-dist})   \commentend}{}\<[E]%
\\
\>[3]{}\hsindent{3}{}\<[6]%
\>[6]{}(\Varid{get}>\!\!>\!\!=\lambda \Varid{s}\to (\Varid{put}\;\Varid{s}>\!\!>\Varid{put}\;\Varid{t})\mathbin{\talloblong}(\Varid{side}\;(\Varid{put}\;\Varid{s}))>\!\!>\Varid{put}\;\Varid{t}){}\<[E]%
\\
\>[3]{}\mathrel{=}\mbox{\commentbegin ~  left-identity (\ref{eq:mzero-zero})   \commentend}{}\<[E]%
\\
\>[3]{}\hsindent{3}{}\<[6]%
\>[6]{}(\Varid{get}>\!\!>\!\!=\lambda \Varid{s}\to (\Varid{put}\;\Varid{s}>\!\!>\Varid{put}\;\Varid{t})\mathbin{\talloblong}\Varid{side}\;(\Varid{put}\;\Varid{s})){}\<[E]%
\\
\>[3]{}\mathrel{=}\mbox{\commentbegin ~  put-put (\ref{eq:put-put})   \commentend}{}\<[E]%
\\
\>[3]{}\hsindent{3}{}\<[6]%
\>[6]{}(\Varid{get}>\!\!>\!\!=\lambda \Varid{s}\to \Varid{put}\;\Varid{t}\mathbin{\talloblong}\Varid{side}\;(\Varid{put}\;\Varid{s})){}\<[E]%
\ColumnHook
\end{hscode}\resethooks
\indentend \end{minipage}%
\begin{minipage}{0.3\textwidth}\indentbegin \begin{hscode}\SaveRestoreHook
\column{B}{@{}>{\hspre}l<{\hspost}@{}}%
\column{3}{@{}>{\hspre}c<{\hspost}@{}}%
\column{3E}{@{}l@{}}%
\column{6}{@{}>{\hspre}l<{\hspost}@{}}%
\column{E}{@{}>{\hspre}l<{\hspost}@{}}%
\>[6]{}\Varid{\eta}\;()>\!\!>\Varid{put}\;\Varid{t}{}\<[E]%
\\
\>[3]{}\mathrel{=}{}\<[3E]%
\>[6]{}\Varid{put}\;\Varid{t}{}\<[E]%
\ColumnHook
\end{hscode}\resethooks
\indentend \end{minipage}

Those two programs do not behave in the same way when \ensuremath{\Varid{s}\not\equiv \Varid{t}}.
Hence, only provided that \emph{all} occurences of \ensuremath{\Varid{put}} in a program are replaced by \ensuremath{\Varid{put_{R}}}
can we simulate local-state semantics. The replacement itself as well as the correctness statement
that incorporates this requirement can be easily epressed with effect handlers. As we will explain
below we need to articulate this global replacement in the
correctness proof. This requires using the {\bf fusion-post'} rule rather than the more widely used {\bf fusion-post} rule.

We realize the global replacement of \ensuremath{\Varid{put}} with
a \ensuremath{\Varid{put_{R}}} with the effect handler \ensuremath{\Varid{local2global}}:
\indentbegin \begin{hscode}\SaveRestoreHook
\column{B}{@{}>{\hspre}l<{\hspost}@{}}%
\column{3}{@{}>{\hspre}l<{\hspost}@{}}%
\column{5}{@{}>{\hspre}l<{\hspost}@{}}%
\column{15}{@{}>{\hspre}l<{\hspost}@{}}%
\column{25}{@{}>{\hspre}l<{\hspost}@{}}%
\column{E}{@{}>{\hspre}l<{\hspost}@{}}%
\>[B]{}\Varid{local2global}{}\<[15]%
\>[15]{}\mathbin{::}\Conid{Functor}\;\Varid{f}{}\<[E]%
\\
\>[15]{}\Rightarrow \Conid{Free}\;(\Varid{State_{F}}\;\Varid{s}\mathrel{{:}{+}{:}}\Varid{Nondet_{F}}\mathrel{{:}{+}{:}}\Varid{f})\;\Varid{a}{}\<[E]%
\\
\>[15]{}\to \Conid{Free}\;(\Varid{State_{F}}\;\Varid{s}\mathrel{{:}{+}{:}}\Varid{Nondet_{F}}\mathrel{{:}{+}{:}}\Varid{f})\;\Varid{a}{}\<[E]%
\\
\>[B]{}\Varid{local2global}\mathrel{=}\Varid{fold}\;\Conid{Var}\;\Varid{alg}{}\<[E]%
\\
\>[B]{}\hsindent{3}{}\<[3]%
\>[3]{}\mathbf{where}{}\<[E]%
\\
\>[3]{}\hsindent{2}{}\<[5]%
\>[5]{}\Varid{alg}\;(\Conid{Inl}\;(\Conid{Put}\;\Varid{t}\;\Varid{k}))\mathrel{=}\Varid{put_{R}}\;\Varid{t}>\!\!>\Varid{k}{}\<[E]%
\\
\>[3]{}\hsindent{2}{}\<[5]%
\>[5]{}\Varid{alg}\;\Varid{p}{}\<[25]%
\>[25]{}\mathrel{=}\Conid{Op}\;\Varid{p}{}\<[E]%
\ColumnHook
\end{hscode}\resethooks
\indentend %

The following theorem shows that the translation \ensuremath{\Varid{local2global}}
preserves the meaning when switching from local-state to global-state
semantics:
\begin{restatable}[]{theorem}{localGlobal}
\label{thm:local-global}\indentbegin \begin{hscode}\SaveRestoreHook
\column{B}{@{}>{\hspre}l<{\hspost}@{}}%
\column{3}{@{}>{\hspre}l<{\hspost}@{}}%
\column{E}{@{}>{\hspre}l<{\hspost}@{}}%
\>[3]{}\Varid{h_{Global}}\hsdot{\circ }{.}\Varid{local2global}\mathrel{=}\Varid{h_{Local}}{}\<[E]%
\ColumnHook
\end{hscode}\resethooks
\indentend \end{restatable}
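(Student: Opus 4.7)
The plan is to prove the equation by program calculation, relying on the fusion laws for folds as the main tool. Both sides are functions of type $\Conid{Free}\;(\Varid{State_{F}}\;\Varid{s}\mathrel{{:}{+}{:}}\Varid{Nondet_{F}}\mathrel{{:}{+}{:}}\Varid{f})\;\Varid{a}\to \Varid{s}\to \Conid{Free}\;\Varid{f}\;[\mskip1.5mu \Varid{a}\mskip1.5mu]$ built out of folds, so the goal is to reduce each side to a single fold and then compare the generator and algebra case by case on the signature.

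My strategy would be to fuse $\Varid{h_{Global}}\hsdot{\circ }{.}\Varid{local2global}$ into a single fold by applying the \textbf{fusion-post'} law (\ref{eq:fusion-post-strong}), and then to show that the resulting fold coincides with $\Varid{h_{Local}}$ expressed in fold form. Fusion-post' is the appropriate variant here because $\Varid{h_{Global}}$ is not a plain algebra morphism over the input signature: it is a composite of $\Varid{(\Leftrightarrow)}$, $\Varid{h_{ND+f}}$, $\Varid{h_{State}}$ and an outer projection, which means that the premise of the simpler \textbf{fusion-post} (\ref{eq:fusion-post}) fails, but the weaker premise of \textbf{fusion-post'} — where equality need only hold on images of the fold itself — still goes through.

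Once fusion is applied, the case analysis splits along the coproduct $\Varid{State_{F}}\;\Varid{s}\mathrel{{:}{+}{:}}\Varid{Nondet_{F}}\mathrel{{:}{+}{:}}\Varid{f}$. For forwarded operations in $\Varid{f}$, both interpretations simply push the operation through unchanged, so the premise is immediate. For the $\Conid{Get}$ case and for the nondeterminism cases $\Conid{Fail}$ and $\Conid{Or}$, $\Varid{local2global}$ acts as the identity algebra and we must show that the nondeterminism-first interpretation of $\Varid{h_{Global}}$ agrees with the state-first interpretation of $\Varid{h_{Local}}$; these reduce by unfolding $\Varid{h_{State}}$, $\Varid{h_{ND+f}}$ and $\Varid{(\Leftrightarrow)}$ and appealing to the standard state and nondeterminism laws from \Cref{sec:nondeterminism}. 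The critical case is $\Conid{Put}$, where $\Varid{local2global}$ installs $\Varid{put_{R}}\;\Varid{t}\sequ \Varid{k}$. Unfolding $\Varid{put_{R}}$ and simplifying with right-distributivity (\ref{eq:mplus-dist}) and left-identity (\ref{eq:mzero-zero}), exactly as in the motivating calculation just before the theorem, reduces the obligation to showing that the state-restoring branch $\Varid{side}\;(\Varid{put}\;\Varid{s'})$ exactly compensates for the single threaded state of $\Varid{h_{Global}}$ and recovers the per-branch state of $\Varid{h_{Local}}$.

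The main obstacle will be this $\Conid{Put}$ case. I expect to need two auxiliary lemmas: one describing how $\Varid{h_{Global}}$ distributes a choice $(\talloblong)$ into the sequential composition of its two branches through the single state, and one stating that a terminal $\Varid{put}\;\Varid{s'}$ in a failing side-branch effectively overwrites the state carried out of the preceding branch. Together these two lemmas witness that, after $\Varid{local2global}$ has rewritten \emph{every} $\Varid{put}$ into a $\Varid{put_{R}}$, the global-state interpretation satisfies the local-state laws (\ref{eq:put-right-identity}) and (\ref{eq:put-left-dist}), which is exactly what is needed to match $\Varid{h_{Local}}$. The fact that the replacement happens \emph{globally} — at every occurrence of $\Varid{put}$ inside $\Varid{k}$ as well — is what rules out the contextual counterexample involving $\Varid{get}\sequ \Varid{put_{R}}\sequ \Varid{put}\;\Varid{t}$ discussed above the theorem, and is also the technical reason fusion-post' rather than fusion-post is required.
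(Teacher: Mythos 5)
Your overall architecture matches the paper's: fuse both sides into single folds, use \textbf{fusion-post'} on \ensuremath{\Varid{h_{Global}}\hsdot{\circ }{.}\Varid{local2global}}, and compare the resulting generators and algebras case by case. You also correctly identify (at the end) the real reason \textbf{fusion-post'} is needed, namely that the fusion condition must be allowed to assume the subterms are already in the image of \ensuremath{\Varid{local2global}}; your earlier claim that it is because \ensuremath{\Varid{h_{Global}}} is a composite of several handlers is not the right explanation --- \ensuremath{\Varid{h_{Local}}} is equally a composite and is fused with the ordinary \textbf{fusion-post} rule.

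The genuine gap is that you have misplaced the crux. You treat \ensuremath{\Conid{Put}} as the critical case and dismiss \ensuremath{\Conid{Or}} as reducing ``by unfolding \ensuremath{\Varid{h_{State}}}, \ensuremath{\Varid{h_{ND+f}}} and \ensuremath{\Varid{(\Leftrightarrow)}} and appealing to the standard state and nondeterminism laws.'' In fact the \ensuremath{\Conid{Put}} case is comparatively easy: the side branch installed by \ensuremath{\Varid{put_{R}}} contains an explicit \ensuremath{\Varid{put}} of the saved state, so the restoration there is syntactically visible and follows from unfolding the handlers. The hard case is \ensuremath{\Conid{Or}\;\Varid{p}\;\Varid{q}}: under global-state semantics the state produced by running \ensuremath{\Varid{p}} is threaded into \ensuremath{\Varid{q}}, whereas under local-state semantics \ensuremath{\Varid{q}} starts from the original state, and there is no explicit restoring code between the two branches. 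Equality here rests on the key lemma that \emph{every} computation in the codomain of \ensuremath{\Varid{local2global}} restores the state to its initial value, which must be proved by structural induction over such terms (and is exactly where the extra \ensuremath{\Varid{fmap}\;\Varid{local2global}} hypothesis of \textbf{fusion-post'} is consumed). Neither of your two proposed auxiliary lemmas --- distributing \ensuremath{(\talloblong)} through the threaded state, and a terminal \ensuremath{\Varid{put}} overwriting the outgoing state --- supplies this global invariant, so as written your plan would stall at the \ensuremath{\Conid{Or}} case. Adding that state-restoration lemma (and its induction) turns your sketch into essentially the paper's proof.
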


\begin{proof}
Both the left-hand side and the right-hand side of the equation consist of
function compositions involving one or more folds.
We apply fold fusion separately on both sides to contract each
into a single fold:
\begin{eqnarray*}
\ensuremath{\Varid{h_{Global}}\hsdot{\circ }{.}\Varid{local2global}} & = & \ensuremath{\Varid{fold}\;\Varid{gen}_{\Varid{LHS}}\;(\Varid{alg}_{\Varid{LHS}}^{\Varid{S}}\mathbin{\#}\Varid{alg}_{\Varid{RHS}}^{\Varid{ND}}\mathbin{\#}\Varid{fwd}_{\Varid{LHS}})} \\
\ensuremath{\Varid{h_{Local}}}& = & \ensuremath{\Varid{fold}\;\Varid{gen}_{\Varid{RHS}}\;(\Varid{alg}_{\Varid{RHS}}^{\Varid{S}}\mathbin{\#}\Varid{alg}_{\Varid{RHS}}^{\Varid{ND}}\mathbin{\#}\Varid{fwd}_{\Varid{RHS}})}
\end{eqnarray*}
We approach this calculationally. That is to say, we do not first postulate
definitions of the unknowns above (\ensuremath{\Varid{alg}_{\Varid{LHS}}^{\Varid{S}}} and so on) and then verify whether
the fusion conditions are satisfied. Instead, we discover the definitions of the unknowns.
We start from the known side of
each fusion condition and perform case analysis on the possible shapes of
input. By simplifying the resulting case-specific expression, and pushing the handler
applications inwards, we end up at a point where we can read off the definition
of the unknown that makes the fusion condition hold for that case.

Finally, we show that both folds are equal by showing that their
corresponding parameters are equal:
\begin{eqnarray*}
\ensuremath{\Varid{gen}_{\Varid{LHS}}} & = & \ensuremath{\Varid{gen}_{\Varid{RHS}}} \\
\ensuremath{\Varid{alg}_{\Varid{LHS}}^{\Varid{S}}} & = & \ensuremath{\Varid{alg}_{\Varid{RHS}}^{\Varid{S}}} \\
\ensuremath{\Varid{alg}_{\Varid{LHS}}^{\Varid{ND}}} & = & \ensuremath{\Varid{alg}_{\Varid{RHS}}^{\Varid{ND}}} \\
\ensuremath{\Varid{fwd}_{\Varid{LHS}}} & = & \ensuremath{\Varid{fwd}_{\Varid{RHS}}}
\end{eqnarray*}

A noteworthy observation is that, for fusing the left-hand side of the equation, we do not use the standard
fusion rule {\bf fusion-post}~(\ref{eq:fusion-post}):
\begin{eqnarray*}
    \ensuremath{\Varid{h_{Global}}\hsdot{\circ }{.}\Varid{fold}\;\Conid{Var}\;\Varid{alg}} & = & \ensuremath{\Varid{fold}\;(\Varid{h_{Global}}\hsdot{\circ }{.}\Conid{Var})\;\Varid{alg'}} \\
     \Leftarrow \qquad
   \ensuremath{\Varid{h_{Global}}\hsdot{\circ }{.}\Varid{alg}} & = & \ensuremath{\Varid{alg'}\hsdot{\circ }{.}\Varid{fmap}\;\Varid{h_{Global}}}
\end{eqnarray*}
where \ensuremath{\Varid{local2global}\mathrel{=}\Varid{fold}\;\Conid{Var}\;\Varid{alg}}. The problem is that we will not find an
appropriate \ensuremath{\Varid{alg'}} such that \ensuremath{\Varid{alg'}\;(\Varid{fmap}\;\Varid{h_{Global}}\;\Varid{t})} restores the state for any
\ensuremath{\Varid{t}} of type \ensuremath{(\Varid{State_{F}}\;\Varid{s}\mathrel{{:}{+}{:}}\Varid{Nondet_{F}}\mathrel{{:}{+}{:}}\Varid{f})\;(\Conid{Free}\;(\Varid{State_{F}}\;\Varid{s}\mathrel{{:}{+}{:}}\Varid{Nondet_{F}}\mathrel{{:}{+}{:}}\Varid{f})\;\Varid{a})}.

Fortunately, we do not need such an \ensuremath{\Varid{alg'}}. As we have already pointed out, we
can assume that the subterms of \ensuremath{\Varid{t}} have already been transformed by
\ensuremath{\Varid{local2global}}, and thus all occurrences of \ensuremath{\Conid{Put}} appear in the \ensuremath{\Varid{put_{R}}}
constellation.

We can incorporate this assumption by using the alternative fusion rule
{\bf fusion-post'}~(\ref{eq:fusion-post-strong}):
\begin{eqnarray*}
    \ensuremath{\Varid{h_{Global}}\hsdot{\circ }{.}\Varid{fold}\;\Conid{Var}\;\Varid{alg}} & = & \ensuremath{\Varid{fold}\;(\Varid{h_{Global}}\hsdot{\circ }{.}\Conid{Var})\;\Varid{alg'}} \\
     \Leftarrow \qquad
   \ensuremath{\Varid{h_{Global}}\hsdot{\circ }{.}\Varid{alg}\hsdot{\circ }{.}\Varid{fmap}\;\Varid{local2global}} & = & \ensuremath{\Varid{alg'}\hsdot{\circ }{.}\Varid{fmap}\;\Varid{h_{Global}}\hsdot{\circ }{.}\Varid{fmap}\;\Varid{local2global}}
\end{eqnarray*}
The additional \ensuremath{\Varid{fmap}\;\Varid{local2global}} in the condition captures the property that
all the subterms have been transformed by \ensuremath{\Varid{local2global}}.

In order to not clutter the proofs, we abstract everywhere over this additional \ensuremath{\Varid{fmap}\;\Varid{local2global}} application, except
for the key lemma which
expresses that the syntactic transformation \ensuremath{\Varid{local2global}} makes sure
that, despite any temporary changes, the computation \ensuremath{\Varid{t}} restores the state
back to its initial value.

We elaborate each of these steps in \Cref{app:local-global}.
\end{proof}

\paragraph*{N-queens with Global-State Semantics}\
Recall the backtracking algorithm \ensuremath{\Varid{queens}} for the n-queens example in
\Cref{sec:motivation-and-challenges}.
It is initially designed to run in the local-state semantics because
every branch maintains its own copy of the state and has no influence
on other branches. We can handle it with \ensuremath{\Varid{h_{Local}}} as follows.
\indentbegin \begin{hscode}\SaveRestoreHook
\column{B}{@{}>{\hspre}l<{\hspost}@{}}%
\column{E}{@{}>{\hspre}l<{\hspost}@{}}%
\>[B]{}\Varid{queens_{Local}}\mathbin{::}\Conid{Int}\to [\mskip1.5mu [\mskip1.5mu \Conid{Int}\mskip1.5mu]\mskip1.5mu]{}\<[E]%
\\
\>[B]{}\Varid{queens_{Local}}\mathrel{=}\Varid{h_{Nil}}\hsdot{\circ }{.}\Varid{flip}\;\Varid{h_{Local}}\;(\mathrm{0},[\mskip1.5mu \mskip1.5mu])\hsdot{\circ }{.}\Varid{queens}{}\<[E]%
\ColumnHook
\end{hscode}\resethooks
\indentend %
With the simulation \ensuremath{\Varid{local2global}}, we can also translate \ensuremath{\Varid{queens}} to
an equivalent program in global-state semantics and handle it with
\ensuremath{\Varid{h_{Global}}}.%
\indentbegin \begin{hscode}\SaveRestoreHook
\column{B}{@{}>{\hspre}l<{\hspost}@{}}%
\column{E}{@{}>{\hspre}l<{\hspost}@{}}%
\>[B]{}\Varid{queens_{Global}}\mathbin{::}\Conid{Int}\to [\mskip1.5mu [\mskip1.5mu \Conid{Int}\mskip1.5mu]\mskip1.5mu]{}\<[E]%
\\
\>[B]{}\Varid{queens_{Global}}\mathrel{=}\Varid{h_{Nil}}\hsdot{\circ }{.}\Varid{flip}\;\Varid{h_{Global}}\;(\mathrm{0},[\mskip1.5mu \mskip1.5mu])\hsdot{\circ }{.}\Varid{local2global}\hsdot{\circ }{.}\Varid{queens}{}\<[E]%
\ColumnHook
\end{hscode}\resethooks
\indentend %

\section{Modelling Nondeterminism with State}
\label{sec:nondeterminism-state}

In the previous section, we have translated the local-state semantics,
a high-level combination of the state and nondeterminism effects, to
the global-state semantics, a low-level combination of the state
and nondeterminism effects.
In this section, we further translate the resulting nondeterminism
component, which is itself a relatively high-level effect, to a
lower-level implementation with the state effect.
Our translation coincides with the fact that, while nondeterminism is
typically modelled using the \ensuremath{\Conid{List}} monad, many efficient
nondeterministic systems, such as Prolog, use a low-level
state-based implementation to implement the nondeterminism mechanism.

\subsection{Simulating Nondeterminism with State}
\label{sec:sim-nondet-state}

The main idea of simulating nondeterminism with state is to explictly
manage
\begin{enumerate}
\item
a list of the \ensuremath{\Varid{results}} found so far, and
\item
a list of yet to be explored branches, which we call a \ensuremath{\Varid{stack}}.
\end{enumerate}

This stack corresponds to the choicepoint stack in Prolog.
When entering one branch, we can push other branches to the stack.
When leaving the branch, we collect its result and pop a new branch
from the stack to continue.

We define a new type \ensuremath{\Conid{S}\;\Varid{a}} consisting of the \ensuremath{\Varid{results}} and \ensuremath{\Varid{stack}}.
\indentbegin \begin{hscode}\SaveRestoreHook
\column{B}{@{}>{\hspre}l<{\hspost}@{}}%
\column{E}{@{}>{\hspre}l<{\hspost}@{}}%
\>[B]{}\mathbf{type}\;\Conid{Comp}\;\Varid{s}\;\Varid{a}\mathrel{=}\Conid{Free}\;(\Varid{State_{F}}\;\Varid{s})\;\Varid{a}{}\<[E]%
\\
\>[B]{}\mathbf{data}\;\Conid{S}\;\Varid{a}\mathrel{=}\Conid{S}\;\{\mskip1.5mu \Varid{results}\mathbin{::}[\mskip1.5mu \Varid{a}\mskip1.5mu],\Varid{stack}\mathbin{::}[\mskip1.5mu \Conid{Comp}\;(\Conid{S}\;\Varid{a})\;()\mskip1.5mu]\mskip1.5mu\}{}\<[E]%
\ColumnHook
\end{hscode}\resethooks
\indentend %
The branches in the stack are represented by computations in the form
of free monads over the \ensuremath{\Varid{State_{F}}} signature.
We do not allow branches to use other effects here to show the idea
more clearly.  In \Cref{sec:nondet2state} we will consider the more
general case where branches can have any effects abstracted by a
functor \ensuremath{\Varid{f}}.

\noindent
\begin{figure}[ht]
\small
\begin{subfigure}[t]{0.3\linewidth}
\indentbegin \begin{hscode}\SaveRestoreHook
\column{B}{@{}>{\hspre}l<{\hspost}@{}}%
\column{3}{@{}>{\hspre}l<{\hspost}@{}}%
\column{5}{@{}>{\hspre}l<{\hspost}@{}}%
\column{7}{@{}>{\hspre}l<{\hspost}@{}}%
\column{14}{@{}>{\hspre}l<{\hspost}@{}}%
\column{E}{@{}>{\hspre}l<{\hspost}@{}}%
\>[B]{}\Varid{pop_S}\mathbin{::}\Conid{Comp}\;(\Conid{S}\;\Varid{a})\;(){}\<[E]%
\\
\>[B]{}\Varid{pop_S}\mathrel{=}\mathbf{do}{}\<[E]%
\\
\>[B]{}\hsindent{3}{}\<[3]%
\>[3]{}\Conid{S}\;\Varid{xs}\;\Varid{stack}\leftarrow \Varid{get}{}\<[E]%
\\
\>[B]{}\hsindent{3}{}\<[3]%
\>[3]{}\mathbf{case}\;\Varid{stack}\;\mathbf{of}{}\<[E]%
\\
\>[3]{}\hsindent{2}{}\<[5]%
\>[5]{}[\mskip1.5mu \mskip1.5mu]{}\<[14]%
\>[14]{}\to \Varid{\eta}\;(){}\<[E]%
\\
\>[3]{}\hsindent{2}{}\<[5]%
\>[5]{}\Varid{p}\mathbin{:}\Varid{ps}{}\<[14]%
\>[14]{}\to \mathbf{do}{}\<[E]%
\\
\>[5]{}\hsindent{2}{}\<[7]%
\>[7]{}\Varid{put}\;(\Conid{S}\;\Varid{xs}\;\Varid{ps});\Varid{p}{}\<[E]%
\ColumnHook
\end{hscode}\resethooks
\indentend \caption{Popping from the stack.}
\label{fig:pop}
\end{subfigure}
\begin{subfigure}[t]{0.3\linewidth}
\indentbegin \begin{hscode}\SaveRestoreHook
\column{B}{@{}>{\hspre}l<{\hspost}@{}}%
\column{3}{@{}>{\hspre}l<{\hspost}@{}}%
\column{9}{@{}>{\hspre}c<{\hspost}@{}}%
\column{9E}{@{}l@{}}%
\column{13}{@{}>{\hspre}l<{\hspost}@{}}%
\column{E}{@{}>{\hspre}l<{\hspost}@{}}%
\>[B]{}\Varid{push_S}{}\<[9]%
\>[9]{}\mathbin{::}{}\<[9E]%
\>[13]{}\Conid{Comp}\;(\Conid{S}\;\Varid{a})\;(){}\<[E]%
\\
\>[9]{}\to {}\<[9E]%
\>[13]{}\Conid{Comp}\;(\Conid{S}\;\Varid{a})\;(){}\<[E]%
\\
\>[9]{}\to {}\<[9E]%
\>[13]{}\Conid{Comp}\;(\Conid{S}\;\Varid{a})\;(){}\<[E]%
\\
\>[B]{}\Varid{push_S}\;\Varid{q}\;\Varid{p}\mathrel{=}\mathbf{do}{}\<[E]%
\\
\>[B]{}\hsindent{3}{}\<[3]%
\>[3]{}\Conid{S}\;\Varid{xs}\;\Varid{stack}\leftarrow \Varid{get}{}\<[E]%
\\
\>[B]{}\hsindent{3}{}\<[3]%
\>[3]{}\Varid{put}\;(\Conid{S}\;\Varid{xs}\;(\Varid{q}\mathbin{:}\Varid{stack})){}\<[E]%
\\
\>[B]{}\hsindent{3}{}\<[3]%
\>[3]{}\Varid{p}{}\<[E]%
\ColumnHook
\end{hscode}\resethooks
\indentend \caption{Pushing to the stack.}
\label{fig:push}
\end{subfigure}
\begin{subfigure}[t]{0.3\linewidth}
\indentbegin \begin{hscode}\SaveRestoreHook
\column{B}{@{}>{\hspre}l<{\hspost}@{}}%
\column{3}{@{}>{\hspre}l<{\hspost}@{}}%
\column{10}{@{}>{\hspre}c<{\hspost}@{}}%
\column{10E}{@{}l@{}}%
\column{14}{@{}>{\hspre}l<{\hspost}@{}}%
\column{E}{@{}>{\hspre}l<{\hspost}@{}}%
\>[B]{}\Varid{append_S}{}\<[10]%
\>[10]{}\mathbin{::}{}\<[10E]%
\>[14]{}\Varid{a}{}\<[E]%
\\
\>[10]{}\to {}\<[10E]%
\>[14]{}\Conid{Comp}\;(\Conid{S}\;\Varid{a})\;(){}\<[E]%
\\
\>[10]{}\to {}\<[10E]%
\>[14]{}\Conid{Comp}\;(\Conid{S}\;\Varid{a})\;(){}\<[E]%
\\
\>[B]{}\Varid{append_S}\;\Varid{x}\;\Varid{p}\mathrel{=}\mathbf{do}{}\<[E]%
\\
\>[B]{}\hsindent{3}{}\<[3]%
\>[3]{}\Conid{S}\;\Varid{xs}\;\Varid{stack}\leftarrow \Varid{get}{}\<[E]%
\\
\>[B]{}\hsindent{3}{}\<[3]%
\>[3]{}\Varid{put}\;(\Conid{S}\;(\Varid{xs}+\!\!+[\mskip1.5mu \Varid{x}\mskip1.5mu])\;\Varid{stack}){}\<[E]%
\\
\>[B]{}\hsindent{3}{}\<[3]%
\>[3]{}\Varid{p}{}\<[E]%
\ColumnHook
\end{hscode}\resethooks
\indentend \caption{Appending a result.}
\label{fig:append}
\end{subfigure}%
\caption{Auxiliary functions \ensuremath{\Varid{pop_S}}, \ensuremath{\Varid{push_S}} and \ensuremath{\Varid{append_S}}.}
\label{fig:pop-push-append}
\end{figure}

For brevity, instead of defining a new stack effect capturing the
stack operations like pop and push, we implement stack operations with
the state effect.
We define three auxiliary functions in \Cref{fig:pop-push-append} to
interact with the stack in \ensuremath{\Conid{S}\;\Varid{a}}:
\begin{itemize}
\item
The function \ensuremath{\Varid{pop_S}} removes and executes the top element of the stack.
\item
The function \ensuremath{\Varid{push_S}} pushes a branch into the stack.
\item
The function \ensuremath{\Varid{append_S}} adds a result to the current results.
\end{itemize}

Now, everything is in place to define a simulation function
\ensuremath{\Varid{nondet2state_S}} that interprets nondeterministic programs
represented by the free monad \ensuremath{\Conid{Free}\;\Varid{Nondet_{F}}\;\Varid{a}} as state-wrapped
programs represented by the free monad \ensuremath{\Conid{Free}\;(\Varid{State_{F}}\;(\Conid{S}\;\Varid{a}))\;()}.
\indentbegin \begin{hscode}\SaveRestoreHook
\column{B}{@{}>{\hspre}l<{\hspost}@{}}%
\column{3}{@{}>{\hspre}l<{\hspost}@{}}%
\column{5}{@{}>{\hspre}l<{\hspost}@{}}%
\column{19}{@{}>{\hspre}l<{\hspost}@{}}%
\column{E}{@{}>{\hspre}l<{\hspost}@{}}%
\>[B]{}\Varid{nondet2state_S}\mathbin{::}\Conid{Free}\;\Varid{Nondet_{F}}\;\Varid{a}\to \Conid{Free}\;(\Varid{State_{F}}\;(\Conid{S}\;\Varid{a}))\;(){}\<[E]%
\\
\>[B]{}\Varid{nondet2state_S}\mathrel{=}\Varid{fold}\;\Varid{gen}\;\Varid{alg}{}\<[E]%
\\
\>[B]{}\hsindent{3}{}\<[3]%
\>[3]{}\mathbf{where}{}\<[E]%
\\
\>[3]{}\hsindent{2}{}\<[5]%
\>[5]{}\Varid{gen}\;\Varid{x}{}\<[19]%
\>[19]{}\mathrel{=}\Varid{append_S}\;\Varid{x}\;\Varid{pop_S}{}\<[E]%
\\
\>[3]{}\hsindent{2}{}\<[5]%
\>[5]{}\Varid{alg}\;\Conid{Fail}{}\<[19]%
\>[19]{}\mathrel{=}\Varid{pop_S}{}\<[E]%
\\
\>[3]{}\hsindent{2}{}\<[5]%
\>[5]{}\Varid{alg}\;(\Conid{Or}\;\Varid{p}\;\Varid{q}){}\<[19]%
\>[19]{}\mathrel{=}\Varid{push_S}\;\Varid{q}\;\Varid{p}{}\<[E]%
\ColumnHook
\end{hscode}\resethooks
\indentend The generator of this handler records a new result and then pops the
next branch from the stack and proceeds with it. Likewise, for failure
the handler simply pops and proceeds with the next branch. For
nondeterministic choices, the handler pushes the second branch on the
stack and proceeds with the first branch.

To extract the final result from the \ensuremath{\Conid{S}} wrapper, we define the \ensuremath{\Varid{extract_S}} function.
\indentbegin \begin{hscode}\SaveRestoreHook
\column{B}{@{}>{\hspre}l<{\hspost}@{}}%
\column{E}{@{}>{\hspre}l<{\hspost}@{}}%
\>[B]{}\Varid{extract_S}\mathbin{::}\Conid{State}\;(\Conid{S}\;\Varid{a})\;()\to [\mskip1.5mu \Varid{a}\mskip1.5mu]{}\<[E]%
\\
\>[B]{}\Varid{extract_S}\;\Varid{x}\mathrel{=}\Varid{results}\hsdot{\circ }{.}\Varid{snd}\mathbin{\$}\Varid{run_{State}}\;\Varid{x}\;(\Conid{S}\;[\mskip1.5mu \mskip1.5mu]\;[\mskip1.5mu \mskip1.5mu]){}\<[E]%
\ColumnHook
\end{hscode}\resethooks
\indentend Finally, we define the function \ensuremath{\Varid{run_{ND}}} which wraps everything up to
handle a nondeterministic computation to a list of results.
The state handler \ensuremath{\Varid{h_{State}^\prime}} is defined in
\Cref{sec:free-monads-and-their-folds}.
\indentbegin \begin{hscode}\SaveRestoreHook
\column{B}{@{}>{\hspre}l<{\hspost}@{}}%
\column{E}{@{}>{\hspre}l<{\hspost}@{}}%
\>[B]{}\Varid{run_{ND}}\mathbin{::}\Conid{Free}\;\Varid{Nondet_{F}}\;\Varid{a}\to [\mskip1.5mu \Varid{a}\mskip1.5mu]{}\<[E]%
\\
\>[B]{}\Varid{run_{ND}}\mathrel{=}\Varid{extract_S}\hsdot{\circ }{.}\Varid{h_{State}^\prime}\hsdot{\circ }{.}\Varid{nondet2state_S}{}\<[E]%
\ColumnHook
\end{hscode}\resethooks
\indentend 
We have the following theorem showing the correctness of the
simulation via the equivalence of the \ensuremath{\Varid{run_{ND}}} function
and the nondeterminism handler \ensuremath{\Varid{h_{ND}}} defined in
\Cref{sec:free-monads-and-their-folds}.

\begin{restatable}[]{theorem}{nondetStateS}
\label{thm:nondet-stateS}\indentbegin \begin{hscode}\SaveRestoreHook
\column{B}{@{}>{\hspre}l<{\hspost}@{}}%
\column{3}{@{}>{\hspre}l<{\hspost}@{}}%
\column{E}{@{}>{\hspre}l<{\hspost}@{}}%
\>[3]{}\Varid{run_{ND}}\mathrel{=}\Varid{h_{ND}}{}\<[E]%
\ColumnHook
\end{hscode}\resethooks
\indentend \end{restatable}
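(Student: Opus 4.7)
The plan is to apply fold fusion, since both sides are functions out of \ensuremath{\Conid{Free}\;\Varid{Nondet_{F}}\;\Varid{a}} and $\ensuremath{\Varid{h_{ND}}}$ is itself a fold over this free monad. My first move would be to post-compose the fold \ensuremath{\Varid{nondet2state_S}} with $\ensuremath{\Varid{extract_S}\hsdot{\circ }{.}\Varid{h_{State}^\prime}}$, which runs the resulting state computation from the empty initial state $\ensuremath{\Conid{S}\;[\mskip1.5mu \mskip1.5mu]\;[\mskip1.5mu \mskip1.5mu]}$ and reads off the \ensuremath{\Varid{results}} field, and then try to identify a single fold on the free monad of nondeterminism that agrees with $\ensuremath{\Varid{h_{ND}}}$ case by case.

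A direct application of \textbf{fusion-post} (\ref{eq:fusion-post}) does not quite succeed, however. In the $\ensuremath{\Conid{Or}\;\Varid{p}\;\Varid{q}}$ case, the algebra of \ensuremath{\Varid{nondet2state_S}} pushes $\ensuremath{\Varid{nondet2state_S}\;\Varid{q}}$ onto the stack and then continues with $\ensuremath{\Varid{nondet2state_S}\;\Varid{p}}$, so the recursive handling of $p$ actually sees a non-empty stack; this is incompatible with the empty initial state baked into the top-level call to \ensuremath{\Varid{extract_S}}. The remedy is to strengthen the claim so that it explicitly tracks both the accumulated results and the pending stack.

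Concretely, I would prove the auxiliary lemma
\begin{equation*}
\Varid{run_{State}}\;(\Varid{nondet2state_S}\;p)\;(\Conid{S}\;\Varid{xs}\;[\Varid{nondet2state_S}\;q_1, \ldots, \Varid{nondet2state_S}\;q_n]) = ((),\,\Conid{S}\;(\Varid{xs} \mathbin{+\!\!+} \Varid{h_{ND}}\;p \mathbin{+\!\!+} \Varid{h_{ND}}\;q_1 \mathbin{+\!\!+} \cdots \mathbin{+\!\!+} \Varid{h_{ND}}\;q_n)\;[\,])
\end{equation*}
from which the theorem follows by specialising $\ensuremath{\Varid{xs}\mathrel{=}[\mskip1.5mu \mskip1.5mu]}$ and $n = 0$ and projecting out \ensuremath{\Varid{results}}. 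For $\ensuremath{\Conid{Var}\;\Varid{x}}$, \ensuremath{\Varid{append_S}} absorbs $x$ into \ensuremath{\Varid{xs}} and \ensuremath{\Varid{pop_S}} either stops (when $n = 0$) or hands control to $\ensuremath{\Varid{nondet2state_S}\;\Varid{q}_1}$, invoking the hypothesis on a smaller instance. For $\ensuremath{\Conid{Fail}}$, the same reasoning applies via \ensuremath{\Varid{pop_S}}. For $\ensuremath{\Conid{Or}\;\Varid{p}_1\;\Varid{p}_2}$, \ensuremath{\Varid{push_S}} extends the stack with $\ensuremath{\Varid{nondet2state_S}\;\Varid{p}_2}$ and the hypothesis applies to $p_1$ with the extended stack.

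The main obstacle is twofold. First, identifying the right strengthened lemma is the creative step: the naive post-fusion statement loses the information about the pending stack that \ensuremath{\Varid{nondet2state_S}} accumulates as it traverses the tree. Second, justifying the induction requires care, since the stack entries $q_i$ are not syntactic subterms of $p$ and so plain structural induction on $p$ does not suffice; well-founded induction on the multiset sum of the sizes of $p$ and $q_1, \ldots, q_n$ resolves this, as each of the three cases strictly decreases this measure. The remaining calculations are routine unfoldings of \ensuremath{\Varid{run_{State}}} using the state monad laws together with the definitions of \ensuremath{\Varid{append_S}}, \ensuremath{\Varid{pop_S}} and \ensuremath{\Varid{push_S}}.
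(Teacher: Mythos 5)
Your proposal is correct, but it takes a genuinely different route from the paper. The paper keeps the fold-fusion framing: it applies \textbf{fusion-post'}~(\ref{eq:fusion-post-strong}) to contract \ensuremath{\Varid{extract_S}\hsdot{\circ }{.}\Varid{h_{State}^\prime}\hsdot{\circ }{.}\Varid{nondet2state_S}} into a single fold, and the problematic \ensuremath{\Conid{Or}} case is discharged by a ``pop-extract'' lemma stating that, for any \ensuremath{\Varid{p}} in the codomain of \ensuremath{\Varid{nondet2state_S}} and \emph{any} stack, \ensuremath{\Varid{run_{State}}\;(\Varid{h_{State}^\prime}\;\Varid{p})\;(\Conid{S}\;\Varid{xs}\;\Varid{stack})\mathrel{=}\Varid{run_{State}}\;(\Varid{h_{State}^\prime}\;\Varid{pop_S})\;(\Conid{S}\;(\Varid{xs}+\!\!+\Varid{extract_S}\;(\Varid{h_{State}^\prime}\;\Varid{p}))\;\Varid{stack})}; this lemma is proved by structural induction on the preimage together with a mutual appeal to the main theorem on subterms. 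You instead abandon fusion after observing it fails naively and prove a single closed-form invariant that fully evaluates the computation, explicitly characterising the stack as a list of images of \ensuremath{\Varid{nondet2state_S}}, by well-founded induction on the total size of \ensuremath{\Varid{p}} and the pending \ensuremath{\Varid{q}_{\Varid{i}}}. Your measure-based induction cleanly sidesteps the mutual-induction subtlety the paper has to argue around (the stack entries are not structural subterms in the state free monad), at the cost of a less modular lemma that bakes in the exact stack contents; the paper's pop-extract lemma is stack-agnostic and so composes more directly with the fusion conditions, which is the style reused throughout the rest of the paper. One cosmetic slip: your lemma should apply \ensuremath{\Varid{run_{State}}} to \ensuremath{\Varid{h_{State}^\prime}\;(\Varid{nondet2state_S}\;\Varid{p})} rather than to \ensuremath{\Varid{nondet2state_S}\;\Varid{p}} directly, since the latter is a free-monad term, not a \ensuremath{\Conid{State}} value.
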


The proof can be found in \Cref{app:runnd-hnd}.
The main idea is again to use fold fusion.
Consider the expanded form\indentbegin \begin{hscode}\SaveRestoreHook
\column{B}{@{}>{\hspre}l<{\hspost}@{}}%
\column{3}{@{}>{\hspre}l<{\hspost}@{}}%
\column{E}{@{}>{\hspre}l<{\hspost}@{}}%
\>[3]{}(\Varid{extract_S}\hsdot{\circ }{.}\Varid{h_{State}^\prime})\hsdot{\circ }{.}\Varid{nondet2state_S}\mathrel{=}\Varid{h_{ND}}{}\<[E]%
\ColumnHook
\end{hscode}\resethooks
\indentend %
Both \ensuremath{\Varid{nondet2state_S}} and \ensuremath{\Varid{h_{ND}}} are written as folds.
We use the law {\bf fusion-post'}~(\ref{eq:fusion-post-strong}) to
fuse the left-hand side into a single fold.
Since the right-hand side is already a fold, to prove the equation we
just need to check the components of the fold \ensuremath{\Varid{h_{ND}}} satisfy the
conditions of the fold fusion, i.e., the following two equations:
For the latter we only need to prove the following two equations:
\[\ba{rl}
    &\ensuremath{(\Varid{extract_S}\hsdot{\circ }{.}\Varid{h_{State}^\prime})\hsdot{\circ }{.}\Varid{gen}\mathrel{=}\Varid{gen_{ND}}} \\
    &\ensuremath{(\Varid{extract_S}\hsdot{\circ }{.}\Varid{h_{State}^\prime})\hsdot{\circ }{.}\Varid{alg}\hsdot{\circ }{.}\Varid{fmap}\;\Varid{nondet2state_S}}\\
 \ensuremath{\mathrel{=}}&  \ensuremath{\Varid{alg_{ND}}\hsdot{\circ }{.}\Varid{fmap}\;(\Varid{extract_S}\hsdot{\circ }{.}\Varid{h_{State}^\prime})\hsdot{\circ }{.}\Varid{fmap}\;\Varid{nondet2state_S}}
\ea\]
where \ensuremath{\Varid{gen}} and \ensuremath{\Varid{alg}} are from the definition of \ensuremath{\Varid{nondet2state_S}}, and
\ensuremath{\Varid{gen_{ND}}} and \ensuremath{\Varid{alg_{ND}}} are from the definition of \ensuremath{\Varid{h_{ND}}}.

\subsection{Combining the Simulation with Other Effects}
\label{sec:combining-the-simulation-with-other-effects}
\label{sec:nondet2state}

The \ensuremath{\Varid{nondet2state_S}} function only considers nondeterminism as the only
effect. In this section, we generalise it to work in combination with
other effects. One immediate benefit is that we can use it in
together with our previous simulation \ensuremath{\Varid{local2global}} in
\Cref{sec:local2global}.

Firstly, we need to augment the signature in the computation type for
branches with an additional functor \ensuremath{\Varid{f}} for other effects.
The computation type is essentially changed from \ensuremath{\Conid{Free}\;(\Varid{State_{F}}\;\Varid{s})\;\Varid{a}}
to \ensuremath{\Conid{Free}\;(\Varid{State_{F}}\;\Varid{s}\mathrel{{:}{+}{:}}\Varid{f})\;\Varid{a}}.
We define the state type \ensuremath{\Conid{SS}\;\Varid{f}\;\Varid{a}} as follows:
\indentbegin \begin{hscode}\SaveRestoreHook
\column{B}{@{}>{\hspre}l<{\hspost}@{}}%
\column{E}{@{}>{\hspre}l<{\hspost}@{}}%
\>[B]{}\mathbf{type}\;\Varid{Comp_{SS}}\;\Varid{s}\;\Varid{f}\;\Varid{a}\mathrel{=}\Conid{Free}\;(\Varid{State_{F}}\;\Varid{s}\mathrel{{:}{+}{:}}\Varid{f})\;\Varid{a}{}\<[E]%
\\
\>[B]{}\mathbf{data}\;\Conid{SS}\;\Varid{f}\;\Varid{a}\mathrel{=}\Conid{SS}\;\{\mskip1.5mu \Varid{results_{SS}}\mathbin{::}[\mskip1.5mu \Varid{a}\mskip1.5mu],\Varid{stack_{SS}}\mathbin{::}[\mskip1.5mu \Varid{Comp_{SS}}\;(\Conid{SS}\;\Varid{f}\;\Varid{a})\;\Varid{f}\;()\mskip1.5mu]\mskip1.5mu\}{}\<[E]%
\ColumnHook
\end{hscode}\resethooks
\indentend 

\noindent
\begin{figure}[t]
\noindent
\small
\begin{subfigure}[t]{0.3\linewidth}
\indentbegin \begin{hscode}\SaveRestoreHook
\column{B}{@{}>{\hspre}l<{\hspost}@{}}%
\column{3}{@{}>{\hspre}l<{\hspost}@{}}%
\column{5}{@{}>{\hspre}l<{\hspost}@{}}%
\column{7}{@{}>{\hspre}l<{\hspost}@{}}%
\column{8}{@{}>{\hspre}l<{\hspost}@{}}%
\column{14}{@{}>{\hspre}l<{\hspost}@{}}%
\column{E}{@{}>{\hspre}l<{\hspost}@{}}%
\>[B]{}\Varid{pop_{SS}}{}\<[8]%
\>[8]{}\mathbin{::}\Conid{Functor}\;\Varid{f}{}\<[E]%
\\
\>[B]{}\hsindent{3}{}\<[3]%
\>[3]{}\Rightarrow \Varid{Comp_{SS}}\;(\Conid{SS}\;\Varid{f}\;\Varid{a})\;\Varid{f}\;(){}\<[E]%
\\
\>[B]{}\Varid{pop_{SS}}\mathrel{=}\mathbf{do}{}\<[E]%
\\
\>[B]{}\hsindent{3}{}\<[3]%
\>[3]{}\Conid{SS}\;\Varid{xs}\;\Varid{stack}\leftarrow \Varid{get}{}\<[E]%
\\
\>[B]{}\hsindent{3}{}\<[3]%
\>[3]{}\mathbf{case}\;\Varid{stack}\;\mathbf{of}{}\<[E]%
\\
\>[3]{}\hsindent{2}{}\<[5]%
\>[5]{}[\mskip1.5mu \mskip1.5mu]{}\<[14]%
\>[14]{}\to \Varid{\eta}\;(){}\<[E]%
\\
\>[3]{}\hsindent{2}{}\<[5]%
\>[5]{}\Varid{p}\mathbin{:}\Varid{ps}{}\<[14]%
\>[14]{}\to \mathbf{do}{}\<[E]%
\\
\>[5]{}\hsindent{2}{}\<[7]%
\>[7]{}\Varid{put}\;(\Conid{SS}\;\Varid{xs}\;\Varid{ps});\Varid{p}{}\<[E]%
\ColumnHook
\end{hscode}\resethooks
\indentend \caption{Popping from the stack.}
\label{fig:pop-ss}
\end{subfigure}%
\begin{subfigure}[t]{0.3\linewidth}
\indentbegin \begin{hscode}\SaveRestoreHook
\column{B}{@{}>{\hspre}l<{\hspost}@{}}%
\column{3}{@{}>{\hspre}l<{\hspost}@{}}%
\column{9}{@{}>{\hspre}l<{\hspost}@{}}%
\column{E}{@{}>{\hspre}l<{\hspost}@{}}%
\>[B]{}\Varid{push_{SS}}{}\<[9]%
\>[9]{}\mathbin{::}\Conid{Functor}\;\Varid{f}{}\<[E]%
\\
\>[B]{}\hsindent{3}{}\<[3]%
\>[3]{}\Rightarrow \Varid{Comp_{SS}}\;(\Conid{SS}\;\Varid{f}\;\Varid{a})\;\Varid{f}\;(){}\<[E]%
\\
\>[B]{}\hsindent{3}{}\<[3]%
\>[3]{}\to \Varid{Comp_{SS}}\;(\Conid{SS}\;\Varid{f}\;\Varid{a})\;\Varid{f}\;(){}\<[E]%
\\
\>[B]{}\hsindent{3}{}\<[3]%
\>[3]{}\to \Varid{Comp_{SS}}\;(\Conid{SS}\;\Varid{f}\;\Varid{a})\;\Varid{f}\;(){}\<[E]%
\\
\>[B]{}\Varid{push_{SS}}\;\Varid{q}\;\Varid{p}\mathrel{=}\mathbf{do}{}\<[E]%
\\
\>[B]{}\hsindent{3}{}\<[3]%
\>[3]{}\Conid{SS}\;\Varid{xs}\;\Varid{stack}\leftarrow \Varid{get}{}\<[E]%
\\
\>[B]{}\hsindent{3}{}\<[3]%
\>[3]{}\Varid{put}\;(\Conid{SS}\;\Varid{xs}\;(\Varid{q}\mathbin{:}\Varid{stack})){}\<[E]%
\\
\>[B]{}\hsindent{3}{}\<[3]%
\>[3]{}\Varid{p}{}\<[E]%
\ColumnHook
\end{hscode}\resethooks
\indentend \caption{Pushing to the stack.}
\label{fig:push-ss}
\end{subfigure}
\begin{subfigure}[t]{0.35\linewidth}
\indentbegin \begin{hscode}\SaveRestoreHook
\column{B}{@{}>{\hspre}l<{\hspost}@{}}%
\column{3}{@{}>{\hspre}l<{\hspost}@{}}%
\column{11}{@{}>{\hspre}l<{\hspost}@{}}%
\column{E}{@{}>{\hspre}l<{\hspost}@{}}%
\>[B]{}\Varid{append_{SS}}{}\<[11]%
\>[11]{}\mathbin{::}\Conid{Functor}\;\Varid{f}{}\<[E]%
\\
\>[B]{}\hsindent{3}{}\<[3]%
\>[3]{}\Rightarrow \Varid{a}{}\<[E]%
\\
\>[B]{}\hsindent{3}{}\<[3]%
\>[3]{}\to \Varid{Comp_{SS}}\;(\Conid{SS}\;\Varid{f}\;\Varid{a})\;\Varid{f}\;(){}\<[E]%
\\
\>[B]{}\hsindent{3}{}\<[3]%
\>[3]{}\to \Varid{Comp_{SS}}\;(\Conid{SS}\;\Varid{f}\;\Varid{a})\;\Varid{f}\;(){}\<[E]%
\\
\>[B]{}\Varid{append_{SS}}\;\Varid{x}\;\Varid{p}\mathrel{=}\mathbf{do}{}\<[E]%
\\
\>[B]{}\hsindent{3}{}\<[3]%
\>[3]{}\Conid{SS}\;\Varid{xs}\;\Varid{stack}\leftarrow \Varid{get}{}\<[E]%
\\
\>[B]{}\hsindent{3}{}\<[3]%
\>[3]{}\Varid{put}\;(\Conid{SS}\;(\Varid{xs}+\!\!+[\mskip1.5mu \Varid{x}\mskip1.5mu])\;\Varid{stack}){}\<[E]%
\\
\>[B]{}\hsindent{3}{}\<[3]%
\>[3]{}\Varid{p}{}\<[E]%
\ColumnHook
\end{hscode}\resethooks
\indentend \caption{Appending a result.}
\label{fig:append-ss}
\end{subfigure}%
\caption{Auxiliary functions \ensuremath{\Varid{pop_{SS}}}, \ensuremath{\Varid{push_{SS}}} and \ensuremath{\Varid{append_{SS}}}.}
\label{fig:pop-push-append-SS}
\end{figure}

\vspace{-\baselineskip}

We also modify the three auxiliary functions in \Cref{fig:pop-push-append}
to \ensuremath{\Varid{pop_{SS}}}, \ensuremath{\Varid{push_{SS}}} and \ensuremath{\Varid{append_{SS}}} in \Cref{fig:pop-push-append-SS}.
They are almost the same as the previous versions apart from being
adapted to use the new state-wrapper type \ensuremath{\Conid{SS}\;\Varid{f}\;\Varid{a}}.

The simulation function \ensuremath{\Varid{nondet2state}} is also very similar to
\ensuremath{\Varid{nondet2state_S}} except for requiring a forwarding algebra \ensuremath{\Varid{fwd}} to
deal with the additional effects in \ensuremath{\Varid{f}}.

\indentbegin \begin{hscode}\SaveRestoreHook
\column{B}{@{}>{\hspre}l<{\hspost}@{}}%
\column{3}{@{}>{\hspre}l<{\hspost}@{}}%
\column{5}{@{}>{\hspre}l<{\hspost}@{}}%
\column{15}{@{}>{\hspre}l<{\hspost}@{}}%
\column{19}{@{}>{\hspre}l<{\hspost}@{}}%
\column{E}{@{}>{\hspre}l<{\hspost}@{}}%
\>[B]{}\Varid{nondet2state}{}\<[15]%
\>[15]{}\mathbin{::}\Conid{Functor}\;\Varid{f}\Rightarrow \Conid{Free}\;(\Varid{Nondet_{F}}\mathrel{{:}{+}{:}}\Varid{f})\;\Varid{a}\to \Conid{Free}\;(\Varid{State_{F}}\;(\Conid{SS}\;\Varid{f}\;\Varid{a})\mathrel{{:}{+}{:}}\Varid{f})\;(){}\<[E]%
\\
\>[B]{}\Varid{nondet2state}\mathrel{=}\Varid{fold}\;\Varid{gen}\;(\Varid{alg}\mathbin{\#}\Varid{fwd}){}\<[E]%
\\
\>[B]{}\hsindent{3}{}\<[3]%
\>[3]{}\mathbf{where}{}\<[E]%
\\
\>[3]{}\hsindent{2}{}\<[5]%
\>[5]{}\Varid{gen}\;\Varid{x}{}\<[19]%
\>[19]{}\mathrel{=}\Varid{append_{SS}}\;\Varid{x}\;\Varid{pop_{SS}}{}\<[E]%
\\
\>[3]{}\hsindent{2}{}\<[5]%
\>[5]{}\Varid{alg}\;\Conid{Fail}{}\<[19]%
\>[19]{}\mathrel{=}\Varid{pop_{SS}}{}\<[E]%
\\
\>[3]{}\hsindent{2}{}\<[5]%
\>[5]{}\Varid{alg}\;(\Conid{Or}\;\Varid{p}\;\Varid{q}){}\<[19]%
\>[19]{}\mathrel{=}\Varid{push_{SS}}\;\Varid{q}\;\Varid{p}{}\<[E]%
\\
\>[3]{}\hsindent{2}{}\<[5]%
\>[5]{}\Varid{fwd}\;\Varid{y}{}\<[19]%
\>[19]{}\mathrel{=}\Conid{Op}\;(\Conid{Inr}\;\Varid{y}){}\<[E]%
\ColumnHook
\end{hscode}\resethooks
\indentend 
The function \ensuremath{\Varid{run_{ND+f}}} puts everything together: it translates the
nondeterminism effect into the state effect and forwards other
effects using \ensuremath{\Varid{nondet2state}}, then handles the state effect using
\ensuremath{\Varid{h_{State}}}, and finally extracts the results from the final state
using \ensuremath{\Varid{extract_{SS}}}.
\indentbegin \begin{hscode}\SaveRestoreHook
\column{B}{@{}>{\hspre}l<{\hspost}@{}}%
\column{28}{@{}>{\hspre}l<{\hspost}@{}}%
\column{E}{@{}>{\hspre}l<{\hspost}@{}}%
\>[B]{}\Varid{run_{ND+f}}\mathbin{::}\Conid{Functor}\;\Varid{f}\Rightarrow \Conid{Free}\;(\Varid{Nondet_{F}}\mathrel{{:}{+}{:}}\Varid{f})\;\Varid{a}\to \Conid{Free}\;\Varid{f}\;[\mskip1.5mu \Varid{a}\mskip1.5mu]{}\<[E]%
\\
\>[B]{}\Varid{run_{ND+f}}\mathrel{=}\Varid{extract_{SS}}\hsdot{\circ }{.}\Varid{h_{State}}\hsdot{\circ }{.}\Varid{nondet2state}{}\<[E]%
\\
\>[B]{}\Varid{extract_{SS}}\mathbin{::}\Conid{Functor}\;\Varid{f}\Rightarrow \Conid{StateT}\;(\Conid{SS}\;\Varid{f}\;\Varid{a})\;(\Conid{Free}\;\Varid{f})\;()\to \Conid{Free}\;\Varid{f}\;[\mskip1.5mu \Varid{a}\mskip1.5mu]{}\<[E]%
\\
\>[B]{}\Varid{extract_{SS}}\;\Varid{x}\mathrel{=}\Varid{results_{SS}}\hsdot{\circ }{.}{}\<[28]%
\>[28]{}\Varid{snd}\mathbin{\langle\hspace{1.6pt}\mathclap{\raisebox{0.1pt}{\scalebox{1}{\$}}}\hspace{1.6pt}\rangle}\Varid{run_{StateT}}\;\Varid{x}\;(\Conid{SS}\;[\mskip1.5mu \mskip1.5mu]\;[\mskip1.5mu \mskip1.5mu]){}\<[E]%
\ColumnHook
\end{hscode}\resethooks
\indentend 

We have the following theorem showing that the simulation \ensuremath{\Varid{run_{ND+f}}} is
equivalent to the modular nondeterminism handler \ensuremath{\Varid{h_{ND+f}}} in
\Cref{sec:combining-effects}.

\begin{restatable}[]{theorem}{nondetState}
\label{thm:nondet-state}\indentbegin \begin{hscode}\SaveRestoreHook
\column{B}{@{}>{\hspre}l<{\hspost}@{}}%
\column{3}{@{}>{\hspre}l<{\hspost}@{}}%
\column{E}{@{}>{\hspre}l<{\hspost}@{}}%
\>[3]{}\Varid{run_{ND+f}}\mathrel{=}\Varid{h_{ND+f}}{}\<[E]%
\ColumnHook
\end{hscode}\resethooks
\indentend \end{restatable}

The proof proceeds essentially in the same way as in the non-modular
setting.  The main difference, due to the modularity, is an additional
proof case for the forwarding algebra.
\[\ba{rl}
    &\ensuremath{(\Varid{extract_{SS}}\hsdot{\circ }{.}\Varid{h_{State}})\hsdot{\circ }{.}\Varid{fwd}\hsdot{\circ }{.}\Varid{fmap}\;\Varid{nondet2state_S}}\\
 \ensuremath{\mathrel{=}}&  \ensuremath{\Varid{fwd_{ND+f}}\hsdot{\circ }{.}\Varid{fmap}\;(\Varid{extract_{SS}}\hsdot{\circ }{.}\Varid{h_{State}})\hsdot{\circ }{.}\Varid{fmap}\;\Varid{nondet2state_S}}
\ea\]
The full proof can be found in \Cref{app:in-combination-with-other-effects}.

\section{All in One}
\label{sec:combination}

This section combines the results of the previous two sections to ultimately simulate the combination of
nondeterminism and state with a single state effect.

\subsection{Modelling Two States with One State}
\label{sec:multiple-states}

When we combine the two simulation steps from the two previous sections, we end
up with a computation that features two state effects. The first state effect
is the one present originally, and the second state effect keeps track of the
results and the stack of remaining branches to simulate the nondeterminism.

For a computation of type \ensuremath{\Conid{Free}\;(\Varid{State_{F}}\;\Varid{s}_{1}\mathrel{{:}{+}{:}}\Varid{State_{F}}\;\Varid{s}_{2}\mathrel{{:}{+}{:}}\Varid{f})\;\Varid{a}}
that features two state effects,
we can go to a slightly more primitive representation
\ensuremath{\Conid{Free}\;(\Varid{State_{F}}\;(\Varid{s}_{1},\Varid{s}_{2})\mathrel{{:}{+}{:}}\Varid{f})\;\Varid{a}} featuring only a single state
effect that is a pair of the previous two states.

The handler \ensuremath{\Varid{states2state}} implements the simulation by projecting
different \ensuremath{\Varid{get}} and \ensuremath{\Varid{put}} operations to different components of the
pair of states.

\indentbegin \begin{hscode}\SaveRestoreHook
\column{B}{@{}>{\hspre}l<{\hspost}@{}}%
\column{3}{@{}>{\hspre}l<{\hspost}@{}}%
\column{5}{@{}>{\hspre}l<{\hspost}@{}}%
\column{15}{@{}>{\hspre}l<{\hspost}@{}}%
\column{23}{@{}>{\hspre}l<{\hspost}@{}}%
\column{40}{@{}>{\hspre}l<{\hspost}@{}}%
\column{45}{@{}>{\hspre}l<{\hspost}@{}}%
\column{E}{@{}>{\hspre}l<{\hspost}@{}}%
\>[B]{}\Varid{states2state}{}\<[15]%
\>[15]{}\mathbin{::}\Conid{Functor}\;\Varid{f}{}\<[E]%
\\
\>[15]{}\Rightarrow \Conid{Free}\;(\Varid{State_{F}}\;\Varid{s}_{1}\mathrel{{:}{+}{:}}\Varid{State_{F}}\;\Varid{s}_{2}\mathrel{{:}{+}{:}}\Varid{f})\;\Varid{a}{}\<[E]%
\\
\>[15]{}\to \Conid{Free}\;(\Varid{State_{F}}\;(\Varid{s}_{1},\Varid{s}_{2})\mathrel{{:}{+}{:}}\Varid{f})\;\Varid{a}{}\<[E]%
\\
\>[B]{}\Varid{states2state}{}\<[15]%
\>[15]{}\mathrel{=}\Varid{fold}\;\Conid{Var}\;(\Varid{alg}_{1}\mathbin{\#}\Varid{alg}_{2}\mathbin{\#}\Varid{fwd}){}\<[E]%
\\
\>[B]{}\hsindent{3}{}\<[3]%
\>[3]{}\mathbf{where}{}\<[E]%
\\
\>[3]{}\hsindent{2}{}\<[5]%
\>[5]{}\Varid{alg}_{1}\;(\Conid{Get}\;\Varid{k}){}\<[23]%
\>[23]{}\mathrel{=}\Varid{get}>\!\!>\!\!=\lambda (\Varid{s}_{1},{}\<[40]%
\>[40]{}\anonymous ){}\<[45]%
\>[45]{}\to \Varid{k}\;\Varid{s}_{1}{}\<[E]%
\\
\>[3]{}\hsindent{2}{}\<[5]%
\>[5]{}\Varid{alg}_{1}\;(\Conid{Put}\;\Varid{s}_{1}'\;\Varid{k}){}\<[23]%
\>[23]{}\mathrel{=}\Varid{get}>\!\!>\!\!=\lambda (\anonymous ,{}\<[40]%
\>[40]{}\Varid{s}_{2}){}\<[45]%
\>[45]{}\to \Varid{put}\;(\Varid{s}_{1}',\Varid{s}_{2})>\!\!>\Varid{k}{}\<[E]%
\\
\>[3]{}\hsindent{2}{}\<[5]%
\>[5]{}\Varid{alg}_{2}\;(\Conid{Get}\;\Varid{k}){}\<[23]%
\>[23]{}\mathrel{=}\Varid{get}>\!\!>\!\!=\lambda (\anonymous ,{}\<[40]%
\>[40]{}\Varid{s}_{2}){}\<[45]%
\>[45]{}\to \Varid{k}\;\Varid{s}_{2}{}\<[E]%
\\
\>[3]{}\hsindent{2}{}\<[5]%
\>[5]{}\Varid{alg}_{2}\;(\Conid{Put}\;\Varid{s}_{2}'\;\Varid{k}){}\<[23]%
\>[23]{}\mathrel{=}\Varid{get}>\!\!>\!\!=\lambda (\Varid{s}_{1},{}\<[40]%
\>[40]{}\anonymous ){}\<[45]%
\>[45]{}\to \Varid{put}\;(\Varid{s}_{1},\Varid{s}_{2}')>\!\!>\Varid{k}{}\<[E]%
\\
\>[3]{}\hsindent{2}{}\<[5]%
\>[5]{}\Varid{fwd}\;\Varid{op}{}\<[23]%
\>[23]{}\mathrel{=}\Conid{Op}\;(\Conid{Inr}\;\Varid{op}){}\<[E]%
\ColumnHook
\end{hscode}\resethooks
\indentend %

We have the following theorem showing the correctness of \ensuremath{\Varid{states2state}}:
\begin{restatable}[]{theorem}{statesState}
\label{thm:states-state}\indentbegin \begin{hscode}\SaveRestoreHook
\column{B}{@{}>{\hspre}l<{\hspost}@{}}%
\column{3}{@{}>{\hspre}l<{\hspost}@{}}%
\column{E}{@{}>{\hspre}l<{\hspost}@{}}%
\>[3]{}\Varid{h_{States}}\mathrel{=}\Varid{nest}\hsdot{\circ }{.}\Varid{h_{State}}\hsdot{\circ }{.}\Varid{states2state}{}\<[E]%
\ColumnHook
\end{hscode}\resethooks
\indentend \end{restatable}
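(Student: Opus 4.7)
The plan is to prove this equality by fold fusion, following the same calculational strategy used for \Cref{thm:local-global} and \Cref{thm:nondet-state}. Both sides are functions on \ensuremath{\Conid{Free}\;(\Varid{State_{F}}\;\Varid{s}_{1}\mathrel{{:}{+}{:}}\Varid{State_{F}}\;\Varid{s}_{2}\mathrel{{:}{+}{:}}\Varid{f})\;\Varid{a}}, and the right-hand side composition \ensuremath{\Varid{nest}\hsdot{\circ }{.}\Varid{h_{State}}\hsdot{\circ }{.}\Varid{states2state}} should, after fusion, collapse into a single fold that I can match case-by-case against the fold defining \ensuremath{\Varid{h_{States}}}.

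First I would fuse \ensuremath{\Varid{h_{State}}\hsdot{\circ }{.}\Varid{states2state}} into one fold. Since \ensuremath{\Varid{states2state}\mathrel{=}\Varid{fold}\;\Conid{Var}\;(\Varid{alg}_{1}\mathbin{\#}\Varid{alg}_{2}\mathbin{\#}\Varid{fwd})}, I would apply {\bf fusion-post'}~(\ref{eq:fusion-post-strong}) rather than {\bf fusion-post}, abstracting over \ensuremath{\Varid{fmap}\;\Varid{states2state}} so that I can assume all subterms have already been translated into the combined-state representation. This is important because the bodies of \ensuremath{\Varid{alg}_{1}} and \ensuremath{\Varid{alg}_{2}} are not simple applications of an algebra to their subterm \ensuremath{\Varid{k}} but monadic computations built from \ensuremath{\Varid{get}} and \ensuremath{\Varid{put}} on the pair state: reducing \ensuremath{\Varid{h_{State}}\;(\Varid{get}>\!\!>\!\!=\lambda (\Varid{s}_{1},\anonymous )\to \Varid{put}\;(\Varid{s}_{1}',\Varid{s}_{2})>\!\!>\Varid{k})} into its underlying \ensuremath{\Conid{StateT}\;(\Varid{s}_{1},\Varid{s}_{2})\;(\Conid{Free}\;\Varid{f})} representation requires unfolding \ensuremath{\Varid{alg_{S}}}, \ensuremath{\Varid{gen_{S}}}, and the \ensuremath{\Conid{MState}} instance smart constructors, and simplifying using the associated monad laws.

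Next I would postcompose with \ensuremath{\Varid{nest}} and read off, case by case, the fused generator and the composite algebra. The expected outcome is that the resulting fold coincides literally with the fold defining \ensuremath{\Varid{h_{States}}}: the \ensuremath{\Conid{Get}/\Conid{Put}} cases for \ensuremath{\Varid{s}_{1}} should reduce to \ensuremath{\Varid{get}/\Varid{put}} at the outer layer of the nested \ensuremath{\Conid{StateT}\;\Varid{s}_{1}\;(\Conid{StateT}\;\Varid{s}_{2}\;(\Conid{Free}\;\Varid{f}))} stack, the cases for \ensuremath{\Varid{s}_{2}} should reduce to \ensuremath{\Varid{get}/\Varid{put}} at the inner layer (which is precisely what \ensuremath{\Varid{nest}} exposes by projecting out of the pair state), and the forwarding case should lift through both layers. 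Writing the four algebra equalities and the generator equality, and then appealing to the uniqueness of fold, completes the argument.

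The main obstacle will be the two \ensuremath{\Conid{Put}} cases, since the bulk of the work lies in showing that applying \ensuremath{\Varid{h_{State}}} to a sequential binding \ensuremath{\Varid{get}>\!\!>\!\!=\lambda \Varid{p}\to \Varid{put}\;\Varid{p'}>\!\!>\Varid{k}} and then \ensuremath{\Varid{nest}}ing the result yields exactly the \ensuremath{\Conid{StateT}\;\Varid{s}_{i}}-level \ensuremath{\Conid{Put}} case that \ensuremath{\Varid{h_{States}}} would produce. A secondary concern is the forwarding case, where \ensuremath{\Varid{fwd_{S}}} threads the pair state through an unhandled operation; I expect \ensuremath{\Varid{nest}} to be defined precisely so that this threading decomposes into threading each component through the corresponding layer, but this decomposition must be verified explicitly. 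Once these fusion obligations are discharged, the two folds agree by the universal property, and the theorem follows. I anticipate giving the full calculation in an appendix (as suggested by the pattern of \Cref{app:local-global} and \Cref{app:nondet-state}).
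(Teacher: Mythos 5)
Your proposal is correct in substance, but it takes a genuinely different route from the paper's own proof. The paper does not fuse the right-hand side directly; instead it first establishes that \ensuremath{\Varid{nest}} and \ensuremath{\Varid{flatten}} are mutually inverse, uses that isomorphism to reduce the goal to \ensuremath{\Varid{flatten}\hsdot{\circ }{.}\Varid{h_{States}}\mathrel{=}\Varid{h_{State}}\hsdot{\circ }{.}\Varid{states2state}}, collapses \ensuremath{\Varid{flatten}\hsdot{\circ }{.}\Varid{h_{States}}} into a single explicit function \ensuremath{\Varid{h_{States}^\prime}}, and then proves the remaining equation by structural induction on the syntax tree (\Cref{app:states-state}), explicitly remarking that fold fusion would also work and would in fact be shorter. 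Moving \ensuremath{\Varid{flatten}} to the left has the advantage that all case analysis happens on the flat, pair-state side, so \ensuremath{\Varid{nest}} never has to be pushed through the algebra cases; your version keeps \ensuremath{\Varid{nest}} on the right and must verify in each case (including the forwarding case you rightly flag) that it distributes as expected, in exchange for uniformity with the other fusion proofs in the paper. Two small corrections to your plan. First, \ensuremath{\Varid{h_{States}}} is not given as a fold but as a composition of two \ensuremath{\Varid{h_{State}}} handlers, so before you can match case-by-case against ``the fold defining \ensuremath{\Varid{h_{States}}}'' you must also fuse the left-hand side into a single fold, exactly as is done for \ensuremath{\Varid{h_{Local}}} in the proof of \Cref{thm:local-global}. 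Second, the appeal to {\bf fusion-post'}~(\ref{eq:fusion-post-strong}) is unnecessary here: unlike \ensuremath{\Varid{local2global}}, the algebras of \ensuremath{\Varid{states2state}} merely project and inject components of the pair state and do not rely on any invariant of already-translated subterms, so the ordinary {\bf fusion-post} law~(\ref{eq:fusion-post}) suffices. Neither point invalidates the argument; with those adjustments your calculation goes through and yields the same generator and algebra equalities.
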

\noindent
On the left-hand side, we write \ensuremath{\Varid{h_{States}}} for the composition of two
consecutive state handlers:
\indentbegin \begin{hscode}\SaveRestoreHook
\column{B}{@{}>{\hspre}l<{\hspost}@{}}%
\column{E}{@{}>{\hspre}l<{\hspost}@{}}%
\>[B]{}\Varid{h_{States}}\mathbin{::}\Conid{Functor}\;\Varid{f}\Rightarrow \Conid{Free}\;(\Varid{State_{F}}\;\Varid{s}_{1}\mathrel{{:}{+}{:}}\Varid{State_{F}}\;\Varid{s}_{2}\mathrel{{:}{+}{:}}\Varid{f})\;\Varid{a}\to \Conid{StateT}\;\Varid{s}_{1}\;(\Conid{StateT}\;\Varid{s}_{2}\;(\Conid{Free}\;\Varid{f}))\;\Varid{a}{}\<[E]%
\\
\>[B]{}\Varid{h_{States}}\;\Varid{x}\mathrel{=}\Conid{StateT}\;(\Varid{h_{State}}\hsdot{\circ }{.}\Varid{run_{StateT}}\;(\Varid{h_{State}}\;\Varid{x})){}\<[E]%
\\
\>[B]{}\Varid{h_{States}^\prime}\mathbin{::}\Conid{Functor}\;\Varid{f}\Rightarrow \Conid{Free}\;(\Varid{State_{F}}\;\Varid{s}_{1}\mathrel{{:}{+}{:}}\Varid{State_{F}}\;\Varid{s}_{2}\mathrel{{:}{+}{:}}\Varid{f})\;\Varid{a}\to \Conid{StateT}\;(\Varid{s}_{1},\Varid{s}_{2})\;(\Conid{Free}\;\Varid{f})\;\Varid{a}{}\<[E]%
\\
\>[B]{}\Varid{h_{States}^\prime}\;\Varid{t}\mathrel{=}\Conid{StateT}\mathbin{\$}\lambda (\Varid{s}_{1},\Varid{s}_{2})\to \alpha\mathbin{\langle\hspace{1.6pt}\mathclap{\raisebox{0.1pt}{\scalebox{1}{\$}}}\hspace{1.6pt}\rangle}\Varid{run_{StateT}}\;(\Varid{h_{State}}\;(\Varid{run_{StateT}}\;(\Varid{h_{State}}\;\Varid{t})\;\Varid{s}_{1}))\;\Varid{s}_{2}{}\<[E]%
\ColumnHook
\end{hscode}\resethooks
\indentend On the right-hand side, we use the isomorphism \ensuremath{\Varid{nest}} to mediate
between the two different carrier types. The definition of \ensuremath{\Varid{nest}} and
its inverse \ensuremath{\Varid{flatten}} are defined as follows:
\indentbegin \begin{hscode}\SaveRestoreHook
\column{B}{@{}>{\hspre}l<{\hspost}@{}}%
\column{13}{@{}>{\hspre}l<{\hspost}@{}}%
\column{30}{@{}>{\hspre}l<{\hspost}@{}}%
\column{E}{@{}>{\hspre}l<{\hspost}@{}}%
\>[B]{}\Varid{nest}{}\<[13]%
\>[13]{}\mathbin{::}\Conid{Functor}\;\Varid{f}\Rightarrow {}\<[30]%
\>[30]{}\Conid{StateT}\;(\Varid{s}_{1},\Varid{s}_{2})\;(\Conid{Free}\;\Varid{f})\;\Varid{a}\to \Conid{StateT}\;\Varid{s}_{1}\;(\Conid{StateT}\;\Varid{s}_{2}\;(\Conid{Free}\;\Varid{f}))\;\Varid{a}{}\<[E]%
\\
\>[B]{}\Varid{nest}\;\Varid{t}{}\<[13]%
\>[13]{}\mathrel{=}\Conid{StateT}\mathbin{\$}\lambda \Varid{s}_{1}\to \Conid{StateT}\mathbin{\$}\lambda \Varid{s}_{2}\to \alpha^{-1}\mathbin{\langle\hspace{1.6pt}\mathclap{\raisebox{0.1pt}{\scalebox{1}{\$}}}\hspace{1.6pt}\rangle}\Varid{run_{StateT}}\;\Varid{t}\;(\Varid{s}_{1},\Varid{s}_{2}){}\<[E]%
\\
\>[B]{}\Varid{flatten}{}\<[13]%
\>[13]{}\mathbin{::}\Conid{Functor}\;\Varid{f}\Rightarrow {}\<[30]%
\>[30]{}\Conid{StateT}\;\Varid{s}_{1}\;(\Conid{StateT}\;\Varid{s}_{2}\;(\Conid{Free}\;\Varid{f}))\;\Varid{a}\to \Conid{StateT}\;(\Varid{s}_{1},\Varid{s}_{2})\;(\Conid{Free}\;\Varid{f})\;\Varid{a}{}\<[E]%
\\
\>[B]{}\Varid{flatten}\;\Varid{t}{}\<[13]%
\>[13]{}\mathrel{=}\Conid{StateT}\mathbin{\$}\lambda (\Varid{s}_{1},\Varid{s}_{2})\to \alpha\mathbin{\langle\hspace{1.6pt}\mathclap{\raisebox{0.1pt}{\scalebox{1}{\$}}}\hspace{1.6pt}\rangle}\Varid{run_{StateT}}\;(\Varid{run_{StateT}}\;\Varid{t}\;\Varid{s}_{1})\;\Varid{s}_{2}{}\<[E]%
\ColumnHook
\end{hscode}\resethooks
\indentend where the isomorphism \ensuremath{\alpha^{-1}} and its inverse \ensuremath{\alpha} rearrange a
nested tuple

\begin{minipage}[t]{0.5\textwidth}
\indentbegin \begin{hscode}\SaveRestoreHook
\column{B}{@{}>{\hspre}l<{\hspost}@{}}%
\column{9}{@{}>{\hspre}l<{\hspost}@{}}%
\column{21}{@{}>{\hspre}l<{\hspost}@{}}%
\column{E}{@{}>{\hspre}l<{\hspost}@{}}%
\>[B]{}\alpha{}\<[9]%
\>[9]{}\mathbin{::}((\Varid{a},\Varid{x}),\Varid{y})\to (\Varid{a},(\Varid{x},\Varid{y})){}\<[E]%
\\
\>[B]{}\alpha\;((\Varid{a},\Varid{x}),\Varid{y}){}\<[21]%
\>[21]{}\mathrel{=}(\Varid{a},(\Varid{x},\Varid{y})){}\<[E]%
\ColumnHook
\end{hscode}\resethooks
\indentend \end{minipage}
\begin{minipage}[t]{0.5\textwidth}
\indentbegin \begin{hscode}\SaveRestoreHook
\column{B}{@{}>{\hspre}l<{\hspost}@{}}%
\column{9}{@{}>{\hspre}l<{\hspost}@{}}%
\column{21}{@{}>{\hspre}l<{\hspost}@{}}%
\column{E}{@{}>{\hspre}l<{\hspost}@{}}%
\>[B]{}\alpha^{-1}{}\<[9]%
\>[9]{}\mathbin{::}(\Varid{a},(\Varid{x},\Varid{y}))\to ((\Varid{a},\Varid{x}),\Varid{y}){}\<[E]%
\\
\>[B]{}\alpha^{-1}\;(\Varid{a},(\Varid{x},\Varid{y})){}\<[21]%
\>[21]{}\mathrel{=}((\Varid{a},\Varid{x}),\Varid{y}){}\<[E]%
\ColumnHook
\end{hscode}\resethooks
\indentend \end{minipage}
The proof of \Cref{thm:states-state} can be found in
\Cref{app:states-state}. Instead of proving it directly, we show the
correctness of the isomorphism of \ensuremath{\Varid{nest}} and \ensuremath{\Varid{flatten}}, and prove the
following equation:\indentbegin \begin{hscode}\SaveRestoreHook
\column{B}{@{}>{\hspre}l<{\hspost}@{}}%
\column{3}{@{}>{\hspre}l<{\hspost}@{}}%
\column{E}{@{}>{\hspre}l<{\hspost}@{}}%
\>[3]{}\Varid{flatten}\hsdot{\circ }{.}\Varid{h_{States}}\mathrel{=}\Varid{h_{State}}\hsdot{\circ }{.}\Varid{states2state}{}\<[E]%
\ColumnHook
\end{hscode}\resethooks
\indentend %

The following commuting diagram simmuarises the simulation.

\[\begin{tikzcd}
  {\ensuremath{\Conid{Free}\;(\Varid{State_{F}}\;\Varid{s}_{1}\mathrel{{:}{+}{:}}\Varid{State_{F}}\;\Varid{s}_{2}\mathrel{{:}{+}{:}}\Varid{f})\;\Varid{a}}} && {\ensuremath{\Conid{StateT}\;\Varid{s}_{1}\;(\Conid{StateT}\;\Varid{s}_{2}\;(\Conid{Free}\;\Varid{f}))\;\Varid{a}}} \\
  \\
  {\ensuremath{\Conid{Free}\;(\Varid{State_{F}}\;(\Varid{s}_{1},\Varid{s}_{2})\mathrel{{:}{+}{:}}\Varid{f})\;\Varid{a}}} && {\ensuremath{\Conid{StateT}\;(\Varid{s}_{1},\Varid{s}_{2})\;(\Conid{Free}\;\Varid{f})\;\Varid{a}}}
  \arrow["{\ensuremath{\Varid{flatten}}}"', shift right=5, from=1-3, to=3-3]
  \arrow["{\ensuremath{\Varid{nest}}}"', shift right=5, from=3-3, to=1-3]
  \arrow["{\ensuremath{\Varid{h_{States}}}}", from=1-1, to=1-3]
  \arrow["{\ensuremath{\Varid{h_{State}}}}"', from=3-1, to=3-3]
  \arrow["{\ensuremath{\Varid{states2state}}}"', from=1-1, to=3-1]
\end{tikzcd}\]

\subsection{Putting Everything Together}\
\label{sec:final-simulate}
We have defined three translations for encoding high-level effects as
low-level effects.
\begin{itemize}
  \item The function \ensuremath{\Varid{local2global}} simulates the high-level
  local-state semantics with global-state semantics for the
  nondeterminism and state effects  (\Cref{sec:local-global}).
  \item The function \ensuremath{\Varid{nondet2state}} simulates the high-level
  nondeterminism effect with the state effect
  (\Cref{sec:nondeterminism-state}).
  \item The function \ensuremath{\Varid{states2state}} simulates multiple state effects
  with a single state effect (\Cref{sec:multiple-states}).
\end{itemize}

Combining them, we can encode the local-state semantics
for nondeterminism and state with just one state effect. The ultimate
simulation function \ensuremath{\Varid{simulate}} is defined as follows:
\indentbegin \begin{hscode}\SaveRestoreHook
\column{B}{@{}>{\hspre}l<{\hspost}@{}}%
\column{11}{@{}>{\hspre}l<{\hspost}@{}}%
\column{E}{@{}>{\hspre}l<{\hspost}@{}}%
\>[B]{}\Varid{simulate}{}\<[11]%
\>[11]{}\mathbin{::}\Conid{Functor}\;\Varid{f}{}\<[E]%
\\
\>[11]{}\Rightarrow \Conid{Free}\;(\Varid{State_{F}}\;\Varid{s}\mathrel{{:}{+}{:}}\Varid{Nondet_{F}}\mathrel{{:}{+}{:}}\Varid{f})\;\Varid{a}{}\<[E]%
\\
\>[11]{}\to \Varid{s}\to \Conid{Free}\;\Varid{f}\;[\mskip1.5mu \Varid{a}\mskip1.5mu]{}\<[E]%
\\
\>[B]{}\Varid{simulate}{}\<[11]%
\>[11]{}\mathrel{=}\Varid{extract}\hsdot{\circ }{.}\Varid{h_{State}}\hsdot{\circ }{.}\Varid{states2state}\hsdot{\circ }{.}\Varid{nondet2state}\hsdot{\circ }{.}\Varid{(\Leftrightarrow)}\hsdot{\circ }{.}\Varid{local2global}{}\<[E]%
\ColumnHook
\end{hscode}\resethooks
\indentend Similar to the \ensuremath{\Varid{extract_{SS}}} function in \Cref{sec:nondet2state}, we use
the \ensuremath{\Varid{extract}} function to get the final results from the final state.
\indentbegin \begin{hscode}\SaveRestoreHook
\column{B}{@{}>{\hspre}l<{\hspost}@{}}%
\column{11}{@{}>{\hspre}l<{\hspost}@{}}%
\column{E}{@{}>{\hspre}l<{\hspost}@{}}%
\>[B]{}\Varid{extract}{}\<[11]%
\>[11]{}\mathbin{::}\Conid{Functor}\;\Varid{f}{}\<[E]%
\\
\>[11]{}\Rightarrow \Conid{StateT}\;(\Conid{SS}\;(\Varid{State_{F}}\;\Varid{s}\mathrel{{:}{+}{:}}\Varid{f})\;\Varid{a},\Varid{s})\;(\Conid{Free}\;\Varid{f})\;(){}\<[E]%
\\
\>[11]{}\to \Varid{s}\to \Conid{Free}\;\Varid{f}\;[\mskip1.5mu \Varid{a}\mskip1.5mu]{}\<[E]%
\\
\>[B]{}\Varid{extract}\;\Varid{x}\;\Varid{s}\mathrel{=}\Varid{results_{SS}}\hsdot{\circ }{.}\Varid{fst}\hsdot{\circ }{.}\Varid{snd}\mathbin{\langle\hspace{1.6pt}\mathclap{\raisebox{0.1pt}{\scalebox{1}{\$}}}\hspace{1.6pt}\rangle}\Varid{run_{StateT}}\;\Varid{x}\;(\Conid{SS}\;[\mskip1.5mu \mskip1.5mu]\;[\mskip1.5mu \mskip1.5mu],\Varid{s}){}\<[E]%
\ColumnHook
\end{hscode}\resethooks
\indentend %
\Cref{fig:simulation} illustrates each step of this simulation.
\begin{figure}[h]
\[\begin{tikzcd}
	{\ensuremath{\Conid{Free}\;(\Varid{State_{F}}\;\Varid{s}\mathrel{{:}{+}{:}}\Varid{Nondet_{F}}\mathrel{{:}{+}{:}}\Varid{f})\;\Varid{a}}} \\
	{\ensuremath{\Conid{Free}\;(\Varid{State_{F}}\;\Varid{s}\mathrel{{:}{+}{:}}\Varid{Nondet_{F}}\mathrel{{:}{+}{:}}\Varid{f})\;\Varid{a}}} \\
	{\ensuremath{\Conid{Free}\;(\Varid{Nondet_{F}}\mathrel{{:}{+}{:}}\Varid{State_{F}}\;\Varid{s}\mathrel{{:}{+}{:}}\Varid{f})\;\Varid{a}}} \\
	{\ensuremath{\Conid{Free}\;(\Varid{State_{F}}\;(\Conid{SS}\;(\Varid{State_{F}}\;\Varid{s}\mathrel{{:}{+}{:}}\Varid{f})\;\Varid{a})\mathrel{{:}{+}{:}}\Varid{State_{F}}\;\Varid{s}\mathrel{{:}{+}{:}}\Varid{f})\;()}} \\
	{\ensuremath{\Conid{Free}\;(\Varid{State_{F}}\;(\Conid{SS}\;(\Varid{State_{F}}\;\Varid{s}\mathrel{{:}{+}{:}}\Varid{f})\;\Varid{a},\Varid{s})\mathrel{{:}{+}{:}}\Varid{f})\;()}} \\
	{\ensuremath{\Conid{StateT}\;(\Conid{SS}\;(\Varid{State_{F}}\;\Varid{s}\mathrel{{:}{+}{:}}\Varid{f})\;\Varid{a},\Varid{s})\;(\Conid{Free}\;\Varid{f})\;()}} \\
	{\ensuremath{\Varid{s}\to \Conid{Free}\;\Varid{f}\;[\mskip1.5mu \Varid{a}\mskip1.5mu]}}
	\arrow["{\ensuremath{\Varid{local2global}}}", from=1-1, to=2-1]
	\arrow["{\ensuremath{\Varid{(\Leftrightarrow)}}}", from=2-1, to=3-1]
	\arrow["{\ensuremath{\Varid{states2state}}}", from=4-1, to=5-1]
	\arrow["{\ensuremath{\Varid{h_{State}}}}", from=5-1, to=6-1]
	\arrow["{\ensuremath{\Varid{extract}}}", from=6-1, to=7-1]
	\arrow["{\ensuremath{\Varid{nondet2state}}}", from=3-1, to=4-1]
\end{tikzcd}\]
\caption{An overview of the \ensuremath{\Varid{simulate}} function.}
\label{fig:simulation}
\end{figure}

In the \ensuremath{\Varid{simulate}} function, we first use our three simulations
\ensuremath{\Varid{local2global}}, \ensuremath{\Varid{nondet2state}} and \ensuremath{\Varid{states2state}} to interpret the
local-state semantics for state and nondeterminism in terms of only
one state effect. Then, we use the handler \ensuremath{\Varid{h_{State}}} to interpret the
state effect into a state monad transformer. Finally, we use the
function \ensuremath{\Varid{extract}} to get the final results.

We have the following theorem showing that the \ensuremath{\Varid{simulate}} function exactly
behaves the same as the local-state semantics given by \ensuremath{\Varid{h_{Local}}}.
\begin{restatable}[]{theorem}{finalSimulate}
\label{thm:final-simulate}\indentbegin \begin{hscode}\SaveRestoreHook
\column{B}{@{}>{\hspre}l<{\hspost}@{}}%
\column{3}{@{}>{\hspre}l<{\hspost}@{}}%
\column{E}{@{}>{\hspre}l<{\hspost}@{}}%
\>[3]{}\Varid{simulate}\mathrel{=}\Varid{h_{Local}}{}\<[E]%
\ColumnHook
\end{hscode}\resethooks
\indentend \end{restatable}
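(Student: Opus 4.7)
The plan is to transform the right-hand side \ensuremath{\Varid{h_{Local}}} into \ensuremath{\Varid{simulate}} by chaining the three previously established theorems, each of which accounts for one of the simulation layers. The three outer components of \ensuremath{\Varid{simulate}} --- namely \ensuremath{\Varid{local2global}}, \ensuremath{\Varid{(\Leftrightarrow)}\hsdot{\circ }{.}\Varid{nondet2state}}, and \ensuremath{\Varid{states2state}} --- are introduced one at a time by rewriting in reverse.

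First I would invoke \Cref{thm:local-global} to rewrite \ensuremath{\Varid{h_{Local}}} as \ensuremath{\Varid{h_{Global}}\hsdot{\circ }{.}\Varid{local2global}}, which peels off the outermost \ensuremath{\Varid{local2global}} already present in \ensuremath{\Varid{simulate}}. Then I would unfold the definition of \ensuremath{\Varid{h_{Global}}}, exposing its \ensuremath{\Varid{(\Leftrightarrow)}} and its internal use of \ensuremath{\Varid{h_{ND+f}}}; the former matches the \ensuremath{\Varid{(\Leftrightarrow)}} in \ensuremath{\Varid{simulate}}. Next, I would apply \Cref{thm:nondet-state} to replace \ensuremath{\Varid{h_{ND+f}}} by \ensuremath{\Varid{extract_{SS}}\hsdot{\circ }{.}\Varid{h_{State}}\hsdot{\circ }{.}\Varid{nondet2state}}. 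At this point the calculation contains two consecutive state handlers (separated by \ensuremath{\Varid{extract_{SS}}}) and the \ensuremath{\Varid{nondet2state}} of \ensuremath{\Varid{simulate}} has been recovered.

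The next step is to fuse the two stacked state handlers via \Cref{thm:states-state}, in the equivalent form \ensuremath{\Varid{flatten}\hsdot{\circ }{.}\Varid{h_{States}}\mathrel{=}\Varid{h_{State}}\hsdot{\circ }{.}\Varid{states2state}}. This introduces \ensuremath{\Varid{states2state}} and leaves a single \ensuremath{\Varid{h_{State}}} acting on the paired state \ensuremath{(\Conid{SS}\;(\Varid{State_{F}}\;\Varid{s}\mathrel{{:}{+}{:}}\Varid{f})\;\Varid{a},\Varid{s})}, matching precisely the shape seen in \ensuremath{\Varid{simulate}}. All that remains is to check that the "wrapper" prefix left on the \ensuremath{\Varid{h_{Local}}} side --- namely \ensuremath{\Varid{fmap}\;(\Varid{fmap}\;(\Varid{fmap}\;\Varid{fst})\hsdot{\circ }{.}\Varid{h_{ND+f}})\hsdot{\circ }{.}\Varid{run_{StateT}}} composed with the leftover \ensuremath{\Varid{extract_{SS}}}, the \ensuremath{\Varid{flatten}} from \Cref{thm:states-state}, and the outer \ensuremath{\Varid{fmap}\;(\Varid{fmap}\;\Varid{fst})\hsdot{\circ }{.}\Varid{run_{StateT}}} of \ensuremath{\Varid{h_{Global}}} --- collapses to the \ensuremath{\Varid{extract}} function used in \ensuremath{\Varid{simulate}}, when applied to the initial paired state \ensuremath{(\Conid{SS}\;[\mskip1.5mu \mskip1.5mu]\;[\mskip1.5mu \mskip1.5mu],\Varid{s})}.

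The principal obstacle will be this last step: the bookkeeping to reconcile the various projection and extraction functions. On the \ensuremath{\Varid{h_{Local}}} side, results are extracted via two layers of \ensuremath{\Varid{fmap}\;\Varid{fst}} interleaved with \ensuremath{\Varid{run_{StateT}}}, plus the projection \ensuremath{\Varid{results_{SS}}} hidden inside the leftover \ensuremath{\Varid{extract_{SS}}}; on the \ensuremath{\Varid{simulate}} side, extraction is a single \ensuremath{\Varid{results_{SS}}\hsdot{\circ }{.}\Varid{fst}\hsdot{\circ }{.}\Varid{snd}} applied after one \ensuremath{\Varid{run_{StateT}}} on the paired state. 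Matching these requires tracing the \ensuremath{\alpha}/\ensuremath{\alpha^{-1}} reassociation of tuples that sits inside \ensuremath{\Varid{flatten}}, and checking that seeding the two nested \ensuremath{\Conid{StateT}} layers with \ensuremath{\Conid{SS}\;[\mskip1.5mu \mskip1.5mu]\;[\mskip1.5mu \mskip1.5mu]} and \ensuremath{\Varid{s}} is equivalent to seeding the flattened \ensuremath{\Conid{StateT}} with the pair \ensuremath{(\Conid{SS}\;[\mskip1.5mu \mskip1.5mu]\;[\mskip1.5mu \mskip1.5mu],\Varid{s})}. This is routine but tedious, and I would isolate it as a standalone lemma before assembling the final equational chain in \Cref{app:final-simulate}.
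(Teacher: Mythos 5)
Your proposal is correct and follows essentially the same route as the paper: the same three theorems are chained (merely traversed in the opposite direction, from \ensuremath{\Varid{h_{Local}}} to \ensuremath{\Varid{simulate}} rather than vice versa), and the ``bookkeeping'' step you propose to isolate as a standalone lemma is precisely the paper's auxiliary \Cref{lemma:final1}, which reconciles \ensuremath{\Varid{extract}\hsdot{\circ }{.}\Varid{flatten}\hsdot{\circ }{.}\Varid{h_{States}}} with \ensuremath{\Varid{fmap}\;(\Varid{fmap}\;\Varid{fst})\hsdot{\circ }{.}\Varid{run_{StateT}}\hsdot{\circ }{.}\Varid{h_{State}}\hsdot{\circ }{.}\Varid{extract_{SS}}\hsdot{\circ }{.}\Varid{h_{State}}}. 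You have also correctly identified the genuine difficulty (the two state handlers are separated by \ensuremath{\Varid{extract_{SS}}}, so the states-to-state theorem cannot be applied verbatim without this bridging lemma), which is exactly where the paper invests its effort.
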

The proof can be found in \Cref{app:final-simulate}.

We provide a more compact and direct definition of \ensuremath{\Varid{simulate}} by
fusing all the consecutive steps into a single handler:
\indentbegin \begin{hscode}\SaveRestoreHook
\column{B}{@{}>{\hspre}l<{\hspost}@{}}%
\column{3}{@{}>{\hspre}l<{\hspost}@{}}%
\column{5}{@{}>{\hspre}l<{\hspost}@{}}%
\column{12}{@{}>{\hspre}l<{\hspost}@{}}%
\column{19}{@{}>{\hspre}l<{\hspost}@{}}%
\column{21}{@{}>{\hspre}l<{\hspost}@{}}%
\column{39}{@{}>{\hspre}c<{\hspost}@{}}%
\column{39E}{@{}l@{}}%
\column{42}{@{}>{\hspre}l<{\hspost}@{}}%
\column{44}{@{}>{\hspre}l<{\hspost}@{}}%
\column{52}{@{}>{\hspre}l<{\hspost}@{}}%
\column{E}{@{}>{\hspre}l<{\hspost}@{}}%
\>[B]{}\mathbf{type}\;\Conid{Comp}\;\Varid{s}\;\Varid{f}\;\Varid{a}\mathrel{=}(\Conid{CP}\;\Varid{f}\;\Varid{a}\;\Varid{s},\Varid{s})\to \Conid{Free}\;\Varid{f}\;[\mskip1.5mu \Varid{a}\mskip1.5mu]{}\<[E]%
\\
\>[B]{}\mathbf{data}\;\Conid{CP}\;\Varid{f}\;\Varid{a}\;\Varid{s}\mathrel{=}\Conid{CP}\;\{\mskip1.5mu \Varid{results}\mathbin{::}[\mskip1.5mu \Varid{a}\mskip1.5mu],\Varid{cpStack}\mathbin{::}[\mskip1.5mu \Conid{Comp}\;\Varid{s}\;\Varid{f}\;\Varid{a}\mskip1.5mu]\mskip1.5mu\}{}\<[E]%
\\[\blanklineskip]%
\>[B]{}\Varid{simulate_F}{}\<[12]%
\>[12]{}\mathbin{::}\Conid{Functor}\;\Varid{f}{}\<[E]%
\\
\>[12]{}\Rightarrow \Conid{Free}\;(\Varid{State_{F}}\;\Varid{s}\mathrel{{:}{+}{:}}\Varid{Nondet_{F}}\mathrel{{:}{+}{:}}\Varid{f})\;\Varid{a}{}\<[E]%
\\
\>[12]{}\to \Varid{s}{}\<[E]%
\\
\>[12]{}\to \Conid{Free}\;\Varid{f}\;[\mskip1.5mu \Varid{a}\mskip1.5mu]{}\<[E]%
\\
\>[B]{}\Varid{simulate_F}\;{}\<[12]%
\>[12]{}\Varid{x}\;\Varid{s}\mathrel{=}{}\<[19]%
\>[19]{}\Varid{fold}\;\Varid{gen}\;(\Varid{alg}_{1}\mathbin{\#}\Varid{alg}_{2}\mathbin{\#}\Varid{fwd})\;\Varid{x}\;(\Conid{CP}\;[\mskip1.5mu \mskip1.5mu]\;[\mskip1.5mu \mskip1.5mu],\Varid{s}){}\<[E]%
\\
\>[B]{}\hsindent{3}{}\<[3]%
\>[3]{}\mathbf{where}{}\<[E]%
\\
\>[3]{}\hsindent{2}{}\<[5]%
\>[5]{}\Varid{gen}\;\Varid{x}\;{}\<[21]%
\>[21]{}(\Conid{CP}\;\Varid{xs}\;\Varid{stack},\Varid{s}){}\<[39]%
\>[39]{}\mathrel{=}{}\<[39E]%
\>[42]{}\Varid{continue}\;(\Varid{xs}+\!\!+[\mskip1.5mu \Varid{x}\mskip1.5mu])\;\Varid{stack}\;\Varid{s}{}\<[E]%
\\
\>[3]{}\hsindent{2}{}\<[5]%
\>[5]{}\Varid{alg}_{1}\;(\Conid{Get}\;\Varid{k})\;{}\<[21]%
\>[21]{}(\Conid{CP}\;\Varid{xs}\;\Varid{stack},\Varid{s}){}\<[39]%
\>[39]{}\mathrel{=}{}\<[39E]%
\>[42]{}\Varid{k}\;\Varid{s}\;(\Conid{CP}\;\Varid{xs}\;\Varid{stack},\Varid{s}){}\<[E]%
\\
\>[3]{}\hsindent{2}{}\<[5]%
\>[5]{}\Varid{alg}_{1}\;(\Conid{Put}\;\Varid{t}\;\Varid{k})\;{}\<[21]%
\>[21]{}(\Conid{CP}\;\Varid{xs}\;\Varid{stack},\Varid{s}){}\<[39]%
\>[39]{}\mathrel{=}{}\<[39E]%
\>[42]{}\Varid{k}\;(\Conid{CP}\;\Varid{xs}\;(\Varid{backtracking}\;\Varid{s}\mathbin{:}\Varid{stack}),\Varid{t}){}\<[E]%
\\
\>[3]{}\hsindent{2}{}\<[5]%
\>[5]{}\Varid{alg}_{2}\;\Conid{Fail}\;{}\<[21]%
\>[21]{}(\Conid{CP}\;\Varid{xs}\;\Varid{stack},\Varid{s}){}\<[39]%
\>[39]{}\mathrel{=}{}\<[39E]%
\>[42]{}\Varid{continue}\;\Varid{xs}\;\Varid{stack}\;\Varid{s}{}\<[E]%
\\
\>[3]{}\hsindent{2}{}\<[5]%
\>[5]{}\Varid{alg}_{2}\;(\Conid{Or}\;\Varid{p}\;\Varid{q})\;{}\<[21]%
\>[21]{}(\Conid{CP}\;\Varid{xs}\;\Varid{stack},\Varid{s}){}\<[39]%
\>[39]{}\mathrel{=}{}\<[39E]%
\>[42]{}\Varid{p}\;(\Conid{CP}\;\Varid{xs}\;(\Varid{q}\mathbin{:}\Varid{stack}),\Varid{s}){}\<[E]%
\\
\>[3]{}\hsindent{2}{}\<[5]%
\>[5]{}\Varid{fwd}\;\Varid{op}\;{}\<[21]%
\>[21]{}(\Conid{CP}\;\Varid{xs}\;\Varid{stack},\Varid{s}){}\<[39]%
\>[39]{}\mathrel{=}{}\<[39E]%
\>[42]{}\Conid{Op}\;(\Varid{fmap}\;(\mathbin{\$}(\Conid{CP}\;\Varid{xs}\;\Varid{stack},\Varid{s}))\;\Varid{op}){}\<[E]%
\\
\>[3]{}\hsindent{2}{}\<[5]%
\>[5]{}\Varid{backtracking}\;\Varid{s}\;{}\<[21]%
\>[21]{}(\Conid{CP}\;\Varid{xs}\;\Varid{stack},\anonymous ){}\<[39]%
\>[39]{}\mathrel{=}{}\<[39E]%
\>[42]{}\Varid{continue}\;\Varid{xs}\;\Varid{stack}\;\Varid{s}{}\<[E]%
\\
\>[3]{}\hsindent{2}{}\<[5]%
\>[5]{}\Varid{continue}\;\Varid{xs}\;\Varid{stack}\;\Varid{s}{}\<[39]%
\>[39]{}\mathrel{=}{}\<[39E]%
\>[42]{}\mathbf{case}\;\Varid{stack}\;\mathbf{of}{}\<[E]%
\\
\>[42]{}\hsindent{2}{}\<[44]%
\>[44]{}[\mskip1.5mu \mskip1.5mu]{}\<[52]%
\>[52]{}\to \Varid{\eta}\;\Varid{xs}{}\<[E]%
\\
\>[42]{}\hsindent{2}{}\<[44]%
\>[44]{}(\Varid{p}\mathbin{:}\Varid{ps}){}\<[52]%
\>[52]{}\to \Varid{p}\;(\Conid{CP}\;\Varid{xs}\;\Varid{ps},\Varid{s}){}\<[E]%
\ColumnHook
\end{hscode}\resethooks
\indentend The common carrier of the above algebras \ensuremath{\Varid{alg}_{1}\mathbin{\#}\Varid{alg}_{2}\mathbin{\#}\Varid{fwd}} is \ensuremath{\Conid{Comp}\;\Varid{s}\;\Varid{f}\;\Varid{a}}. This is a computation
that takes the current results, choicepoint stack and application state, and returns the list of all results.  The
first two inputs are bundled in the \ensuremath{\Conid{CP}} type.

\paragraph*{N-queens with Only One State}\
With \ensuremath{\Varid{simulate}}, we can implement the backtracking algorithm of the
n-queens problem in \Cref{sec:motivation-and-challenges} with only
one state effect as follows.

\indentbegin \begin{hscode}\SaveRestoreHook
\column{B}{@{}>{\hspre}l<{\hspost}@{}}%
\column{12}{@{}>{\hspre}l<{\hspost}@{}}%
\column{E}{@{}>{\hspre}l<{\hspost}@{}}%
\>[B]{}\Varid{queens_{Sim}}{}\<[12]%
\>[12]{}\mathbin{::}\Conid{Int}\to [\mskip1.5mu [\mskip1.5mu \Conid{Int}\mskip1.5mu]\mskip1.5mu]{}\<[E]%
\\
\>[B]{}\Varid{queens_{Sim}}{}\<[12]%
\>[12]{}\mathrel{=}\Varid{h_{Nil}}\hsdot{\circ }{.}\Varid{flip}\;\Varid{simulate}\;(\mathrm{0},[\mskip1.5mu \mskip1.5mu])\hsdot{\circ }{.}\Varid{queens}{}\<[E]%
\ColumnHook
\end{hscode}\resethooks
\indentend

\section{Modelling Local State with Undo}
\label{sec:undo}

In \Cref{sec:local2global}, we give a translation \ensuremath{\Varid{local2global}} which
simulates local state with global state by replacing \ensuremath{\Varid{put}} with its
state-restoring version \ensuremath{\Varid{put_{R}}}.  The \ensuremath{\Varid{put_{R}}} operation makes the
implicit copying of the local-state semantics explicit in the
global-state semantics. However, this copying still exists and can be
rather costly if the state is big (e.g., a long array), and especially
wasteful if the modifications made to that state are small (e.g., a
single entry in the array).
Fortunately, low-level features like the global-state semantics
give us more possibility to apply more fine-grained optimisation
strategies.
As a result, instead of copying the whole state to implement the
backtracking behaviour in the global-state semantics, we can just keep
track of the modifications made to the state, and undo them when
necessary.
In this section, we formalise this intuition with a translation from
the local-state semantics to the global-state semantics for
incremental and reversible state updates.

\subsection{State Update and Restoration}

We first need to characterise a specific subset of state effects where
all state update operations can be undone. We call them
modification-based state effects.
For example, the \ensuremath{\Varid{queens}} program in
\Cref{sec:motivation-and-challenges} uses the operation \ensuremath{\Varid{s}\mathbin{\oplus}\Varid{r}}
to update the state.
We can undo it using the following \ensuremath{\mathbin{\ominus}\Varid{r}} operation which is
essentially the left inverse of \ensuremath{\mathbin{\oplus}\Varid{r}}.
\indentbegin \begin{hscode}\SaveRestoreHook
\column{B}{@{}>{\hspre}l<{\hspost}@{}}%
\column{9}{@{}>{\hspre}l<{\hspost}@{}}%
\column{E}{@{}>{\hspre}l<{\hspost}@{}}%
\>[B]{}(\mathbin{\ominus}){}\<[9]%
\>[9]{}\mathbin{::}(\Conid{Int},[\mskip1.5mu \Conid{Int}\mskip1.5mu])\to \Conid{Int}\to (\Conid{Int},[\mskip1.5mu \Conid{Int}\mskip1.5mu]){}\<[E]%
\\
\>[B]{}(\mathbin{\ominus})\;{}\<[9]%
\>[9]{}(\Varid{c},\Varid{sol})\;\Varid{r}\mathrel{=}(\Varid{c}\mathbin{-}\mathrm{1},\Varid{tail}\;\Varid{sol}){}\<[E]%
\ColumnHook
\end{hscode}\resethooks
\indentend %
These two operators satisfy the equation \ensuremath{(\mathbin{\ominus}\Varid{r})\hsdot{\circ }{.}(\mathbin{\oplus}\Varid{r})\mathrel{=}\Varid{id}}
for any \ensuremath{\Varid{r}\mathbin{::}\Conid{Int}}.

In general, we define a typeclass \ensuremath{\Conid{Undo}\;\Varid{s}\;\Varid{r}} with two operations
\ensuremath{(\mathbin{\oplus})} and \ensuremath{(\mathbin{\ominus})} to characterise restorable state updates. Here, \ensuremath{\Varid{s}}
is the type of states and \ensuremath{\Varid{r}} is the type of deltas.  We can
implement the previous state update and restoration operations of
n-queens as an instance \ensuremath{\Conid{Undo}\;(\Conid{Int},[\mskip1.5mu \Conid{Int}\mskip1.5mu])\;\Conid{Int}} of the typeclass.
\indentbegin \begin{hscode}\SaveRestoreHook
\column{B}{@{}>{\hspre}l<{\hspost}@{}}%
\column{3}{@{}>{\hspre}l<{\hspost}@{}}%
\column{10}{@{}>{\hspre}l<{\hspost}@{}}%
\column{21}{@{}>{\hspre}l<{\hspost}@{}}%
\column{E}{@{}>{\hspre}l<{\hspost}@{}}%
\>[B]{}\mathbf{class}\;\Conid{Undo}\;\Varid{s}\;\Varid{r}\;\mathbf{where}{}\<[E]%
\\
\>[B]{}\hsindent{3}{}\<[3]%
\>[3]{}(\mathbin{\oplus}){}\<[10]%
\>[10]{}\mathbin{::}\Varid{s}\to \Varid{r}\to \Varid{s}{}\<[E]%
\\
\>[B]{}\hsindent{3}{}\<[3]%
\>[3]{}(\mathbin{\ominus}){}\<[10]%
\>[10]{}\mathbin{::}\Varid{s}\to \Varid{r}\to \Varid{s}{}\<[E]%
\\
\>[B]{}\mathbf{instance}\;\Conid{Undo}\;(\Conid{Int},[\mskip1.5mu \Conid{Int}\mskip1.5mu])\;\Conid{Int}\;\mathbf{where}{}\<[E]%
\\
\>[B]{}\hsindent{3}{}\<[3]%
\>[3]{}(\mathbin{\oplus})\;(\Varid{c},\Varid{sol})\;\Varid{r}{}\<[21]%
\>[21]{}\mathrel{=}(\Varid{c}\mathbin{+}\mathrm{1},\Varid{r}\mathbin{:}\Varid{sol}){}\<[E]%
\\
\>[B]{}\hsindent{3}{}\<[3]%
\>[3]{}(\mathbin{\ominus})\;(\Varid{c},\Varid{sol})\;\Varid{r}{}\<[21]%
\>[21]{}\mathrel{=}(\Varid{c}\mathbin{-}\mathrm{1},\Varid{tail}\;\Varid{sol}){}\<[E]%
\ColumnHook
\end{hscode}\resethooks
\indentend %
Instances of \ensuremath{\Conid{Undo}} should satisfy the following law which says
\ensuremath{\mathbin{\ominus}\Varid{x}} is a left inverse of \ensuremath{\mathbin{\oplus}\Varid{x}}:
\begin{alignat}{2}
    &\mbox{\bf plus-minus}:\quad &
      \ensuremath{(\mathbin{\ominus}\Varid{x})\hsdot{\circ }{.}(\mathbin{\oplus}\Varid{x})} ~=~ & \ensuremath{\Varid{id}} \label{eq:plus-minus} \mbox{~~.}
\end{alignat}

Modification-based state effects restrict the general \ensuremath{\Varid{put}} operation
of \ensuremath{\Conid{MState}} to modification oprations.  We define a new typeclass
\ensuremath{\Conid{MModify}\;\Varid{s}\;\Varid{r}\;\Varid{m}} which inherits from \ensuremath{\Conid{Monad}\;\Varid{m}} and \ensuremath{\Conid{Undo}\;\Varid{s}\;\Varid{r}} to
capture the interfaces of state updates and restoration.  It has three
operations: a \ensuremath{\Varid{mget}} operation that reads and returns the state
(similar to the \ensuremath{\Varid{get}} operation of \ensuremath{\Conid{MState}}), a \ensuremath{\Varid{update}\;\Varid{r}} operation
that updates the state with the delta \ensuremath{\Varid{r}}, and a \ensuremath{\Varid{restore}\;\Varid{r}}
operations that restores the update introduced by the delta \ensuremath{\Varid{r}}.
Note that only the \ensuremath{\Varid{mget}} and \ensuremath{\Varid{update}} operations are expected to be
used by programmers; \ensuremath{\Varid{restore}} operations are automatically generated
by the translation to the global-state semantics.

\indentbegin \begin{hscode}\SaveRestoreHook
\column{B}{@{}>{\hspre}l<{\hspost}@{}}%
\column{5}{@{}>{\hspre}l<{\hspost}@{}}%
\column{14}{@{}>{\hspre}l<{\hspost}@{}}%
\column{E}{@{}>{\hspre}l<{\hspost}@{}}%
\>[B]{}\mathbf{class}\;(\Conid{Monad}\;\Varid{m},\Conid{Undo}\;\Varid{s}\;\Varid{r})\Rightarrow \Conid{MModify}\;\Varid{s}\;\Varid{r}\;\Varid{m}\mid \Varid{m}\to \Varid{s},\Varid{m}\to \Varid{r}\;\mathbf{where}{}\<[E]%
\\
\>[B]{}\hsindent{5}{}\<[5]%
\>[5]{}\Varid{mget}{}\<[14]%
\>[14]{}\mathbin{::}\Varid{m}\;\Varid{s}{}\<[E]%
\\
\>[B]{}\hsindent{5}{}\<[5]%
\>[5]{}\Varid{update}{}\<[14]%
\>[14]{}\mathbin{::}\Varid{r}\to \Varid{m}\;(){}\<[E]%
\\
\>[B]{}\hsindent{5}{}\<[5]%
\>[5]{}\Varid{restore}{}\<[14]%
\>[14]{}\mathbin{::}\Varid{r}\to \Varid{m}\;(){}\<[E]%
\ColumnHook
\end{hscode}\resethooks
\indentend %
The three operations satisfy the following laws:
\begin{alignat}{2}
    &\mbox{\bf mget-mget}:\quad &
    \ensuremath{\Varid{mget}>\!\!>\!\!=(\lambda \Varid{s}\to \Varid{mget}>\!\!>\!\!=\Varid{k}\;\Varid{s})} &= \ensuremath{\Varid{mget}>\!\!>\!\!=(\lambda \Varid{s}\to \Varid{k}\;\Varid{s}\;\Varid{s})}
    ~~\mbox{,} \label{eq:mget-mget} \\
    &\mbox{\bf update-mget}:~ &
    \ensuremath{\Varid{mget}>\!\!>\!\!=\lambda \Varid{s}\to \Varid{update}\;\Varid{r}>\!\!>\Varid{\eta}\;(\Varid{s}\mathbin{\oplus}\Varid{r})}
    &=
    \ensuremath{\Varid{update}\;\Varid{r}>\!\!>\Varid{mget}}
    ~~\mbox{,} \label{eq:update-mget}\\
    &\mbox{\bf restore-mget}:~ &
    \ensuremath{\Varid{mget}>\!\!>\!\!=\lambda \Varid{s}\to \Varid{restore}\;\Varid{r}>\!\!>\Varid{\eta}\;(\Varid{s}\mathbin{\ominus}\Varid{r})}
    &=
    \ensuremath{\Varid{restore}\;\Varid{r}>\!\!>\Varid{mget}}
    ~~\mbox{,} \label{eq:restore-mget}\\
    &\mbox{\bf update-restore}:\quad &
    \ensuremath{\Varid{update}\;\Varid{r}>\!\!>\Varid{restore}\;\Varid{r}} &= \ensuremath{\Varid{\eta}\;()}
    ~~\mbox{.} \label{eq:update-restore}
\end{alignat}
The first law for \ensuremath{\Varid{mget}} corresponds to that for \ensuremath{\Varid{get}}. The second
and third law respecively capture the impact of \ensuremath{\Varid{update}} and \ensuremath{\Varid{restore}}
on \ensuremath{\Varid{mget}}. Finally, the fourth law expresses that \ensuremath{\Varid{restore}} undoes the effect of
\ensuremath{\Varid{update}}.

As what we did for the nondeterminism and state effects in
\Cref{sec:free-monads-and-their-folds}, we define a new signature
\ensuremath{\Varid{Modify_{F}}} representing the syntax of modification-based state effects,
and implement the free monad \ensuremath{\Conid{Free}\;(\Varid{Modify_{F}}\;\Varid{s}\;\Varid{r}\mathrel{{:}{+}{:}}\Varid{f})} as an instance
of \ensuremath{\Conid{MModify}\;\Varid{s}\;\Varid{r}}.
\indentbegin \begin{hscode}\SaveRestoreHook
\column{B}{@{}>{\hspre}l<{\hspost}@{}}%
\column{E}{@{}>{\hspre}l<{\hspost}@{}}%
\>[B]{}\mathbf{data}\;\Varid{Modify_{F}}\;\Varid{s}\;\Varid{r}\;\Varid{a}\mathrel{=}\Conid{MGet}\;(\Varid{s}\to \Varid{a})\mid \Conid{MUpdate}\;\Varid{r}\;\Varid{a}\mid \Conid{MRestore}\;\Varid{r}\;\Varid{a}{}\<[E]%
\ColumnHook
\end{hscode}\resethooks
\indentend %
\indentbegin \begin{hscode}\SaveRestoreHook
\column{B}{@{}>{\hspre}l<{\hspost}@{}}%
\column{3}{@{}>{\hspre}l<{\hspost}@{}}%
\column{14}{@{}>{\hspre}c<{\hspost}@{}}%
\column{14E}{@{}l@{}}%
\column{17}{@{}>{\hspre}l<{\hspost}@{}}%
\column{E}{@{}>{\hspre}l<{\hspost}@{}}%
\>[B]{}\mathbf{instance}\;(\Conid{Functor}\;\Varid{f},\Conid{Undo}\;\Varid{s}\;\Varid{r})\Rightarrow \Conid{MModify}\;\Varid{s}\;\Varid{r}\;(\Conid{Free}\;(\Varid{Modify_{F}}\;\Varid{s}\;\Varid{r}\mathrel{{:}{+}{:}}\Varid{f}))\;\mathbf{where}{}\<[E]%
\\
\>[B]{}\hsindent{3}{}\<[3]%
\>[3]{}\Varid{mget}{}\<[14]%
\>[14]{}\mathrel{=}{}\<[14E]%
\>[17]{}\Conid{Op}\;(\Conid{Inl}\;(\Conid{MGet}\;\Varid{\eta})){}\<[E]%
\\
\>[B]{}\hsindent{3}{}\<[3]%
\>[3]{}\Varid{update}\;\Varid{r}{}\<[14]%
\>[14]{}\mathrel{=}{}\<[14E]%
\>[17]{}\Conid{Op}\;(\Conid{Inl}\;(\Conid{MUpdate}\;\Varid{r}\;(\Varid{\eta}\;()))){}\<[E]%
\\
\>[B]{}\hsindent{3}{}\<[3]%
\>[3]{}\Varid{restore}\;\Varid{r}{}\<[14]%
\>[14]{}\mathrel{=}{}\<[14E]%
\>[17]{}\Conid{Op}\;(\Conid{Inl}\;(\Conid{MRestore}\;\Varid{r}\;(\Varid{\eta}\;()))){}\<[E]%
\ColumnHook
\end{hscode}\resethooks
\indentend 
The following handler \ensuremath{\Varid{h_{Modify}}} maps this free monad to the \ensuremath{\Conid{StateT}}
monad transformer using the operations \ensuremath{(\mathbin{\oplus})} and \ensuremath{(\mathbin{\ominus})} provided by
\ensuremath{\Conid{Undo}\;\Varid{s}\;\Varid{r}}.
\indentbegin \begin{hscode}\SaveRestoreHook
\column{B}{@{}>{\hspre}l<{\hspost}@{}}%
\column{3}{@{}>{\hspre}l<{\hspost}@{}}%
\column{5}{@{}>{\hspre}l<{\hspost}@{}}%
\column{25}{@{}>{\hspre}l<{\hspost}@{}}%
\column{E}{@{}>{\hspre}l<{\hspost}@{}}%
\>[B]{}\Varid{h_{Modify}}\mathbin{::}(\Conid{Functor}\;\Varid{f},\Conid{Undo}\;\Varid{s}\;\Varid{r})\Rightarrow \Conid{Free}\;(\Varid{Modify_{F}}\;\Varid{s}\;\Varid{r}\mathrel{{:}{+}{:}}\Varid{f})\;\Varid{a}\to \Conid{StateT}\;\Varid{s}\;(\Conid{Free}\;\Varid{f})\;\Varid{a}{}\<[E]%
\\
\>[B]{}\Varid{h_{Modify}}\mathrel{=}\Varid{fold}\;\Varid{gen}\;(\Varid{alg}\mathbin{\#}\Varid{fwd}){}\<[E]%
\\
\>[B]{}\hsindent{3}{}\<[3]%
\>[3]{}\mathbf{where}{}\<[E]%
\\
\>[3]{}\hsindent{2}{}\<[5]%
\>[5]{}\Varid{gen}\;\Varid{x}{}\<[25]%
\>[25]{}\mathrel{=}\Conid{StateT}\mathbin{\$}\lambda \Varid{s}\to \Varid{\eta}\;(\Varid{x},\Varid{s}){}\<[E]%
\\
\>[3]{}\hsindent{2}{}\<[5]%
\>[5]{}\Varid{alg}\;(\Conid{MGet}\;\Varid{k}){}\<[25]%
\>[25]{}\mathrel{=}\Conid{StateT}\mathbin{\$}\lambda \Varid{s}\to \Varid{run_{StateT}}\;(\Varid{k}\;\Varid{s})\;\Varid{s}{}\<[E]%
\\
\>[3]{}\hsindent{2}{}\<[5]%
\>[5]{}\Varid{alg}\;(\Conid{MUpdate}\;\Varid{r}\;\Varid{k}){}\<[25]%
\>[25]{}\mathrel{=}\Conid{StateT}\mathbin{\$}\lambda \Varid{s}\to \Varid{run_{StateT}}\;\Varid{k}\;(\Varid{s}\mathbin{\oplus}\Varid{r}){}\<[E]%
\\
\>[3]{}\hsindent{2}{}\<[5]%
\>[5]{}\Varid{alg}\;(\Conid{MRestore}\;\Varid{r}\;\Varid{k}){}\<[25]%
\>[25]{}\mathrel{=}\Conid{StateT}\mathbin{\$}\lambda \Varid{s}\to \Varid{run_{StateT}}\;\Varid{k}\;(\Varid{s}\mathbin{\ominus}\Varid{r}){}\<[E]%
\\
\>[3]{}\hsindent{2}{}\<[5]%
\>[5]{}\Varid{fwd}\;\Varid{y}{}\<[25]%
\>[25]{}\mathrel{=}\Conid{StateT}\mathbin{\$}\lambda \Varid{s}\to \Conid{Op}\;(\Varid{fmap}\;(\lambda \Varid{k}\to \Varid{run_{StateT}}\;\Varid{k}\;\Varid{s})\;\Varid{y}){}\<[E]%
\ColumnHook
\end{hscode}\resethooks
\indentend %
It is easy to check that the four laws hold contextually up to
interpretation with \ensuremath{\Varid{h_{Modify}}}.

Note that here we still use the \ensuremath{\Conid{StateT}} monad transformer and
immutable states for the clarity of presentation and simplicity of
proofs. The \ensuremath{(\mathbin{\oplus})} and \ensuremath{(\mathbin{\ominus})} operations also take immutable
arguments. To be more efficient, we can use mutable states to
implement in-place updates or use the technique of functional but
in-place update~\citep{LorenzenLS23}. We leave them as future work.

Similar to \Cref{sec:local-state} and \Cref{sec:global-state}, the
local-state and global-state semantics of \ensuremath{\Varid{Modify_{F}}} and \ensuremath{\Varid{Nondet_{F}}} are
given by the following functions \ensuremath{\Varid{h_{LocalM}}} and \ensuremath{\Varid{h_{GlobalM}}}, respectively.
\indentbegin \begin{hscode}\SaveRestoreHook
\column{B}{@{}>{\hspre}l<{\hspost}@{}}%
\column{11}{@{}>{\hspre}l<{\hspost}@{}}%
\column{E}{@{}>{\hspre}l<{\hspost}@{}}%
\>[B]{}\Varid{h_{LocalM}}{}\<[11]%
\>[11]{}\mathbin{::}(\Conid{Functor}\;\Varid{f},\Conid{Undo}\;\Varid{s}\;\Varid{r}){}\<[E]%
\\
\>[11]{}\Rightarrow \Conid{Free}\;(\Varid{Modify_{F}}\;\Varid{s}\;\Varid{r}\mathrel{{:}{+}{:}}\Varid{Nondet_{F}}\mathrel{{:}{+}{:}}\Varid{f})\;\Varid{a}\to (\Varid{s}\to \Conid{Free}\;\Varid{f}\;[\mskip1.5mu \Varid{a}\mskip1.5mu]){}\<[E]%
\\
\>[B]{}\Varid{h_{LocalM}}{}\<[11]%
\>[11]{}\mathrel{=}\Varid{fmap}\;(\Varid{fmap}\;(\Varid{fmap}\;\Varid{fst})\hsdot{\circ }{.}\Varid{h_{ND+f}})\hsdot{\circ }{.}\Varid{run_{StateT}}\hsdot{\circ }{.}\Varid{h_{Modify}}{}\<[E]%
\\[\blanklineskip]%
\>[B]{}\Varid{h_{GlobalM}}{}\<[11]%
\>[11]{}\mathbin{::}(\Conid{Functor}\;\Varid{f},\Conid{Undo}\;\Varid{s}\;\Varid{r}){}\<[E]%
\\
\>[11]{}\Rightarrow \Conid{Free}\;(\Varid{Modify_{F}}\;\Varid{s}\;\Varid{r}\mathrel{{:}{+}{:}}\Varid{Nondet_{F}}\mathrel{{:}{+}{:}}\Varid{f})\;\Varid{a}\to (\Varid{s}\to \Conid{Free}\;\Varid{f}\;[\mskip1.5mu \Varid{a}\mskip1.5mu]){}\<[E]%
\\
\>[B]{}\Varid{h_{GlobalM}}{}\<[11]%
\>[11]{}\mathrel{=}\Varid{fmap}\;(\Varid{fmap}\;\Varid{fst})\hsdot{\circ }{.}\Varid{run_{StateT}}\hsdot{\circ }{.}\Varid{h_{Modify}}\hsdot{\circ }{.}\Varid{h_{ND+f}}\hsdot{\circ }{.}\Varid{(\Leftrightarrow)}{}\<[E]%
\ColumnHook
\end{hscode}\resethooks
\indentend 
\subsection{Simulating Local State with Global State and Undo}
\label{sec:local2globalM}

We can implement the translation from local-state semantics to
global-state semantics for the modification-based state effects
in a similar style to the translation \ensuremath{\Varid{local2global}} in
\Cref{sec:local2global}.
The translation \ensuremath{\Varid{local2global_M}} still uses the mechanism of
nondeterminism to restore previous state updates for backtracking.
In \Cref{sec:trail-stack} we will show a lower-level simulation of
local-state semantics without relying on nondeterminism.
\indentbegin \begin{hscode}\SaveRestoreHook
\column{B}{@{}>{\hspre}l<{\hspost}@{}}%
\column{3}{@{}>{\hspre}l<{\hspost}@{}}%
\column{5}{@{}>{\hspre}l<{\hspost}@{}}%
\column{16}{@{}>{\hspre}l<{\hspost}@{}}%
\column{25}{@{}>{\hspre}l<{\hspost}@{}}%
\column{E}{@{}>{\hspre}l<{\hspost}@{}}%
\>[B]{}\Varid{local2global_M}{}\<[16]%
\>[16]{}\mathbin{::}(\Conid{Functor}\;\Varid{f},\Conid{Undo}\;\Varid{s}\;\Varid{r}){}\<[E]%
\\
\>[16]{}\Rightarrow \Conid{Free}\;(\Varid{Modify_{F}}\;\Varid{s}\;\Varid{r}\mathrel{{:}{+}{:}}\Varid{Nondet_{F}}\mathrel{{:}{+}{:}}\Varid{f})\;\Varid{a}{}\<[E]%
\\
\>[16]{}\to \Conid{Free}\;(\Varid{Modify_{F}}\;\Varid{s}\;\Varid{r}\mathrel{{:}{+}{:}}\Varid{Nondet_{F}}\mathrel{{:}{+}{:}}\Varid{f})\;\Varid{a}{}\<[E]%
\\
\>[B]{}\Varid{local2global_M}{}\<[16]%
\>[16]{}\mathrel{=}\Varid{fold}\;\Conid{Var}\;\Varid{alg}{}\<[E]%
\\
\>[B]{}\hsindent{3}{}\<[3]%
\>[3]{}\mathbf{where}{}\<[E]%
\\
\>[3]{}\hsindent{2}{}\<[5]%
\>[5]{}\Varid{alg}\;(\Conid{Inl}\;(\Conid{MUpdate}\;\Varid{r}\;\Varid{k}))\mathrel{=}(\Varid{update}\;\Varid{r}\mathbin{\talloblong}\Varid{side}\;(\Varid{restore}\;\Varid{r}))>\!\!>\Varid{k}{}\<[E]%
\\
\>[3]{}\hsindent{2}{}\<[5]%
\>[5]{}\Varid{alg}\;\Varid{p}{}\<[25]%
\>[25]{}\mathrel{=}\Conid{Op}\;\Varid{p}{}\<[E]%
\ColumnHook
\end{hscode}\resethooks
\indentend 
Compared to \ensuremath{\Varid{local2global}}, the main difference is that we do not need
to copy and store the whole state. Instead, we store the delta \ensuremath{\Varid{r}} and
undo the state update using \ensuremath{\Varid{restore}\;\Varid{r}} in the second branch.
The following theorem shows the correctness of \ensuremath{\Varid{local2global_M}}.
\begin{restatable}[]{theorem}{modifyLocalGlobal}
\label{thm:modify-local-global}
Given \ensuremath{\Conid{Functor}\;\Varid{f}} and \ensuremath{\Conid{Undo}\;\Varid{s}\;\Varid{r}}, the equation\indentbegin \begin{hscode}\SaveRestoreHook
\column{B}{@{}>{\hspre}l<{\hspost}@{}}%
\column{3}{@{}>{\hspre}l<{\hspost}@{}}%
\column{E}{@{}>{\hspre}l<{\hspost}@{}}%
\>[3]{}\Varid{h_{GlobalM}}\hsdot{\circ }{.}\Varid{local2global_M}\mathrel{=}\Varid{h_{LocalM}}{}\<[E]%
\ColumnHook
\end{hscode}\resethooks
\indentend holds for all programs \ensuremath{\Varid{p}\mathbin{::}\Conid{Free}\;(\Varid{Modify_{F}}\;\Varid{s}\;\Varid{r}\mathrel{{:}{+}{:}}\Varid{Nondet_{F}}\mathrel{{:}{+}{:}}\Varid{f})\;\Varid{a}}
that do not use the operation \ensuremath{\Conid{Op}\;(\Conid{Inl}\;\Conid{MRestore}\;\anonymous \;\anonymous )}.
\end{restatable}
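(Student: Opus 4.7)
The proof will mirror the strategy used for Theorem~\ref{thm:local-global} almost verbatim, since both translations share the same ``insert a state-restoring backtracking branch'' pattern, just with different restoration mechanisms (full \ensuremath{\Varid{put}\;\Varid{s'}} vs.\ incremental \ensuremath{\Varid{restore}\;\Varid{r}}). The plan is to contract each side of the equation into a single fold by fusion, and then prove that the corresponding generators and algebras agree. Concretely, \ensuremath{\Varid{h_{LocalM}}} is already a composition of folds, so a standard application of \textbf{fusion-post}~(\ref{eq:fusion-post}) yields a single fold \ensuremath{\Varid{fold}\;\Varid{gen}_{\Varid{RHS}}\;(\Varid{alg}_{\Varid{RHS}}^{\Varid{M}}\mathbin{\#}\Varid{alg}_{\Varid{RHS}}^{\Varid{ND}}\mathbin{\#}\Varid{fwd}_{\Varid{RHS}})} over the signature \ensuremath{\Varid{Modify_{F}}\;\Varid{s}\;\Varid{r}\mathrel{{:}{+}{:}}\Varid{Nondet_{F}}\mathrel{{:}{+}{:}}\Varid{f}}. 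For \ensuremath{\Varid{h_{GlobalM}}\hsdot{\circ }{.}\Varid{local2global_M}}, the standard rule again fails because the algebra of \ensuremath{\Varid{local2global_M}} on \ensuremath{\Conid{MUpdate}} produces the compound term \ensuremath{(\Varid{update}\;\Varid{r}\mathbin{\talloblong}\Varid{side}\;(\Varid{restore}\;\Varid{r}))>\!\!>\Varid{k}} whose correctness depends on \ensuremath{\Varid{k}} having already been transformed; hence we use \textbf{fusion-post'}~(\ref{eq:fusion-post-strong}) to assume \ensuremath{\Varid{fmap}\;\Varid{local2global_M}} on the subterms, obtaining \ensuremath{\Varid{fold}\;\Varid{gen}_{\Varid{LHS}}\;(\Varid{alg}_{\Varid{LHS}}^{\Varid{M}}\mathbin{\#}\Varid{alg}_{\Varid{LHS}}^{\Varid{ND}}\mathbin{\#}\Varid{fwd}_{\Varid{LHS}})}.

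Once the shapes match, I would read off each unknown on the left by case analysis on the constructors of \ensuremath{\Varid{Modify_{F}}\;\Varid{s}\;\Varid{r}\mathrel{{:}{+}{:}}\Varid{Nondet_{F}}\mathrel{{:}{+}{:}}\Varid{f}} and simplify, pushing handler applications inwards. The \ensuremath{\Conid{MGet}} case is routine, since it commutes with both handlers. The \ensuremath{\Conid{MUpdate}} case is the only interesting one: we must show that processing \ensuremath{(\Varid{update}\;\Varid{r}\mathbin{\talloblong}\Varid{side}\;(\Varid{restore}\;\Varid{r}))>\!\!>\Varid{k}} under \ensuremath{\Varid{h_{GlobalM}}} yields the same behaviour as handling \ensuremath{\Conid{MUpdate}\;\Varid{r}\;\Varid{k}} under \ensuremath{\Varid{h_{LocalM}}}, which boils down to the state laws of \ensuremath{\Varid{h_{Modify}}} combined with the Undo law \textbf{plus-minus}~(\ref{eq:plus-minus}) and the \textbf{update-restore} law~(\ref{eq:update-restore}). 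The \ensuremath{\Conid{MRestore}} case does not need to be considered because the theorem's hypothesis forbids \ensuremath{\Conid{MRestore}} in the source program; this is essential, because an unrestricted \ensuremath{\Conid{MRestore}} would break the invariant that all temporary updates are paired with their corresponding undo actions. Finally, the \ensuremath{\Varid{Nondet_{F}}} and forwarding cases are straightforward, following the same commutation patterns as in Theorem~\ref{thm:local-global}.

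The main obstacle will be the key state-restoring lemma: a translated subcomputation \ensuremath{\Varid{local2global_M}\;\Varid{t}}, when run under \ensuremath{\Varid{h_{GlobalM}}}, returns the state to its initial value after exhausting all nondeterministic branches. In the \ensuremath{\Varid{local2global}} case this is expressed directly via \ensuremath{\Varid{put}\;\Varid{s'}} with the saved state \ensuremath{\Varid{s'}}; here we must instead argue by induction on \ensuremath{\Varid{t}} that the cumulative effect of the \ensuremath{\Varid{update}} operations executed along any path is precisely cancelled by the \ensuremath{\Varid{restore}} operations in the side branches. This will require invoking \textbf{plus-minus}~(\ref{eq:plus-minus}) at each step and exploiting the hypothesis that no raw \ensuremath{\Conid{MRestore}} appears in \ensuremath{\Varid{t}}, so that every \ensuremath{\Varid{restore}\;\Varid{r}} encountered corresponds to a matching \ensuremath{\Varid{update}\;\Varid{r}} introduced by the translation. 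To avoid clutter, I would abstract over the \ensuremath{\Varid{fmap}\;\Varid{local2global_M}} in all the routine cases and isolate this invariant as a standalone lemma, analogous to the one in \Cref{app:local-global}; the detailed calculation is deferred to \Cref{app:modify-local-global}.
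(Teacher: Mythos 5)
Your proposal is correct and follows essentially the same route as the paper's proof in \Cref{app:modify-local-global}: fuse the right-hand side with \textbf{fusion-post}, fuse the left-hand side with \textbf{fusion-post'} to exploit that subterms are already in the image of \ensuremath{\Varid{local2global_M}}, equate the resulting generators and algebras case by case, and isolate the state-restoration invariant as a standalone induction (the paper's Lemma~\ref{lemma:modify-state-restore}) whose \ensuremath{\Conid{MUpdate}} case is discharged by \textbf{plus-minus}~(\ref{eq:plus-minus}) together with the hypothesis that no raw \ensuremath{\Conid{MRestore}} occurs.
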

The proof of this theorem can be found in \Cref{app:modify-local-global}.

\paragraph*{N-queens with State Update and Restoration}\
We can rewrite the \ensuremath{\Varid{queens}} program with modification-based state and
nondeterminism. Compared to the \ensuremath{\Varid{queens}} in
\Cref{sec:motivation-and-challenges}, we only need to change the \ensuremath{\Varid{get}}
and \ensuremath{\Varid{put}\;(\Varid{s}\mathbin{\oplus}\Varid{r})} with the update operation \ensuremath{\Varid{update}\;\Varid{r}}.
\indentbegin \begin{hscode}\SaveRestoreHook
\column{B}{@{}>{\hspre}l<{\hspost}@{}}%
\column{3}{@{}>{\hspre}l<{\hspost}@{}}%
\column{14}{@{}>{\hspre}l<{\hspost}@{}}%
\column{23}{@{}>{\hspre}l<{\hspost}@{}}%
\column{E}{@{}>{\hspre}l<{\hspost}@{}}%
\>[B]{}\Varid{queens_{M}}\mathbin{::}(\Conid{MModify}\;(\Conid{Int},[\mskip1.5mu \Conid{Int}\mskip1.5mu])\;\Conid{Int}\;\Varid{m},\Conid{MNondet}\;\Varid{m})\Rightarrow \Conid{Int}\to \Varid{m}\;[\mskip1.5mu \Conid{Int}\mskip1.5mu]{}\<[E]%
\\
\>[B]{}\Varid{queens_{M}}\;\Varid{n}\mathrel{=}\Varid{loop}\;\mathbf{where}{}\<[E]%
\\
\>[B]{}\hsindent{3}{}\<[3]%
\>[3]{}\Varid{loop}\mathrel{=}\mathbf{do}\;{}\<[14]%
\>[14]{}(\Varid{c},\Varid{sol})\leftarrow \Varid{mget}{}\<[E]%
\\
\>[14]{}\mathbf{if}\;\Varid{c}\geq \Varid{n}\;\mathbf{then}\;\Varid{\eta}\;\Varid{sol}{}\<[E]%
\\
\>[14]{}\mathbf{else}\;\mathbf{do}\;{}\<[23]%
\>[23]{}\Varid{r}\leftarrow \Varid{choose}\;[\mskip1.5mu \mathrm{1}\mathinner{\ldotp\ldotp}\Varid{n}\mskip1.5mu]{}\<[E]%
\\
\>[23]{}\Varid{guard}\;(\Varid{safe}\;\Varid{r}\;\mathrm{1}\;\Varid{sol}){}\<[E]%
\\
\>[23]{}\Varid{update}\;\Varid{r}{}\<[E]%
\\
\>[23]{}\Varid{loop}{}\<[E]%
\ColumnHook
\end{hscode}\resethooks
\indentend 
We can interpret it using either \ensuremath{\Varid{h_{LocalM}}} or \ensuremath{\Varid{h_{GlobalM}}} composed with
\ensuremath{\Varid{local2global_M}}.
\indentbegin \begin{hscode}\SaveRestoreHook
\column{B}{@{}>{\hspre}l<{\hspost}@{}}%
\column{E}{@{}>{\hspre}l<{\hspost}@{}}%
\>[B]{}\Varid{queens_{LocalM}}\mathbin{::}\Conid{Int}\to [\mskip1.5mu [\mskip1.5mu \Conid{Int}\mskip1.5mu]\mskip1.5mu]{}\<[E]%
\\
\>[B]{}\Varid{queens_{LocalM}}\mathrel{=}\Varid{h_{Nil}}\hsdot{\circ }{.}\Varid{flip}\;\Varid{h_{LocalM}}\;(\mathrm{0},[\mskip1.5mu \mskip1.5mu])\hsdot{\circ }{.}\Varid{queens_{M}}{}\<[E]%
\\[\blanklineskip]%
\>[B]{}\Varid{queens_{GlobalM}}\mathbin{::}\Conid{Int}\to [\mskip1.5mu [\mskip1.5mu \Conid{Int}\mskip1.5mu]\mskip1.5mu]{}\<[E]%
\\
\>[B]{}\Varid{queens_{GlobalM}}\mathrel{=}\Varid{h_{Nil}}\hsdot{\circ }{.}\Varid{flip}\;\Varid{h_{GlobalM}}\;(\mathrm{0},[\mskip1.5mu \mskip1.5mu])\hsdot{\circ }{.}\Varid{local2global_M}\hsdot{\circ }{.}\Varid{queens_{M}}{}\<[E]%
\ColumnHook
\end{hscode}\resethooks
\indentend

\section{Modelling Local State with Trail Stack}
\label{sec:trail-stack}

In order
to trigger the restoration of the previous state, the simulations \ensuremath{\Varid{local2global}} in \Cref{sec:local2global} and \ensuremath{\Varid{local2global_M}}
in \Cref{sec:undo} introduce a call to \ensuremath{\Varid{or}} and to \ensuremath{\Varid{fail}}
at every modification of the state.
The Warren Abstract Machine (WAM)~\citep{AitKaci91} does this in a more efficient and
lower-level way: it uses a \emph{trail stack} to batch consecutive restorative steps.
In this section, we first make use of the idea of trail stacks to
implement a lower-level translation from local-state semantics to
global-state semantics for the modifcation-based version of state
effects in \Cref{sec:undo} which does not require extra calls to
nondeterminism operations.
Then, we combine this simulation with other simulations
to obtain another ultimate simulation function which uses two stacks,
a choicepoint stack and a trail stack, simultaneously.

\subsection{Simulating Local State with Global State and Trail Stack}
\label{sec:local2trail}

Let us consider modification-based version of simulation \ensuremath{\Varid{local2global_M}}.
We can use a trail stack to contain elements of type \ensuremath{\Conid{Either}\;\Varid{r}\;()}, where \ensuremath{\Varid{r}} is the type of deltas to the states. Each \ensuremath{\Conid{Left}\;\Varid{x}} entry represents
an update to the state with the delta \ensuremath{\Varid{x}}, and each \ensuremath{\Conid{Right}\;()} is a
marker.
When we enter a left branch, we push a marker on the trail stack.
For every update we perform in that branch, we push the corresponding
delta on the trail stack, on top of the marker.
When we backtrack to the right branch, we unwind the trail stack down to the
marker and restore all deltas along the way. This process is known as ``untrailing''.

We can easily model the \ensuremath{\Conid{Stack}} data type with Haskell lists.
\indentbegin \begin{hscode}\SaveRestoreHook
\column{B}{@{}>{\hspre}l<{\hspost}@{}}%
\column{E}{@{}>{\hspre}l<{\hspost}@{}}%
\>[B]{}\mathbf{newtype}\;\Conid{Stack}\;\Varid{s}\mathrel{=}\Conid{Stack}\;[\mskip1.5mu \Varid{s}\mskip1.5mu]{}\<[E]%
\ColumnHook
\end{hscode}\resethooks
\indentend 
We thread the stack through the computation using the state effect and define primitive pop and push operations as follows.

\begin{minipage}[t]{0.5\textwidth}
\indentbegin \begin{hscode}\SaveRestoreHook
\column{B}{@{}>{\hspre}l<{\hspost}@{}}%
\column{3}{@{}>{\hspre}l<{\hspost}@{}}%
\column{5}{@{}>{\hspre}l<{\hspost}@{}}%
\column{11}{@{}>{\hspre}l<{\hspost}@{}}%
\column{14}{@{}>{\hspre}l<{\hspost}@{}}%
\column{21}{@{}>{\hspre}l<{\hspost}@{}}%
\column{E}{@{}>{\hspre}l<{\hspost}@{}}%
\>[B]{}\Varid{popStack}{}\<[11]%
\>[11]{}\mathbin{::}\Conid{MState}\;(\Conid{Stack}\;\Varid{s})\;\Varid{m}{}\<[E]%
\\
\>[11]{}\Rightarrow \Varid{m}\;(\Conid{Maybe}\;\Varid{s}){}\<[E]%
\\
\>[B]{}\Varid{popStack}{}\<[11]%
\>[11]{}\mathrel{=}\mathbf{do}{}\<[E]%
\\
\>[B]{}\hsindent{3}{}\<[3]%
\>[3]{}\Conid{Stack}\;\Varid{xs}\leftarrow \Varid{get}{}\<[E]%
\\
\>[B]{}\hsindent{3}{}\<[3]%
\>[3]{}\mathbf{case}\;\Varid{xs}\;\mathbf{of}{}\<[E]%
\\
\>[3]{}\hsindent{2}{}\<[5]%
\>[5]{}[\mskip1.5mu \mskip1.5mu]{}\<[14]%
\>[14]{}\to \Varid{\eta}\;\Conid{Nothing}{}\<[E]%
\\
\>[3]{}\hsindent{2}{}\<[5]%
\>[5]{}(\Varid{x}\mathbin{:}\Varid{xs'}){}\<[14]%
\>[14]{}\to \mathbf{do}\;{}\<[21]%
\>[21]{}\Varid{put}\;(\Conid{Stack}\;\Varid{xs'});\Varid{\eta}\;(\Conid{Just}\;\Varid{x}){}\<[E]%
\ColumnHook
\end{hscode}\resethooks
\indentend \end{minipage}
\begin{minipage}[t]{0.5\textwidth}
\indentbegin \begin{hscode}\SaveRestoreHook
\column{B}{@{}>{\hspre}l<{\hspost}@{}}%
\column{3}{@{}>{\hspre}l<{\hspost}@{}}%
\column{12}{@{}>{\hspre}l<{\hspost}@{}}%
\column{E}{@{}>{\hspre}l<{\hspost}@{}}%
\>[B]{}\Varid{pushStack}{}\<[12]%
\>[12]{}\mathbin{::}\Conid{MState}\;(\Conid{Stack}\;\Varid{s})\;\Varid{m}{}\<[E]%
\\
\>[12]{}\Rightarrow \Varid{s}\to \Varid{m}\;(){}\<[E]%
\\
\>[B]{}\Varid{pushStack}\;\Varid{x}\mathrel{=}\mathbf{do}{}\<[E]%
\\
\>[B]{}\hsindent{3}{}\<[3]%
\>[3]{}\Conid{Stack}\;\Varid{xs}\leftarrow \Varid{get}{}\<[E]%
\\
\>[B]{}\hsindent{3}{}\<[3]%
\>[3]{}\Varid{put}\;(\Conid{Stack}\;(\Varid{x}\mathbin{:}\Varid{xs})){}\<[E]%
\ColumnHook
\end{hscode}\resethooks
\indentend \end{minipage}

In order to correctly use state operations to interact with the trail
stack in the translation, we also need to define a new instance of
\ensuremath{\Conid{MState}}.
\indentbegin \begin{hscode}\SaveRestoreHook
\column{B}{@{}>{\hspre}l<{\hspost}@{}}%
\column{3}{@{}>{\hspre}l<{\hspost}@{}}%
\column{5}{@{}>{\hspre}l<{\hspost}@{}}%
\column{14}{@{}>{\hspre}l<{\hspost}@{}}%
\column{E}{@{}>{\hspre}l<{\hspost}@{}}%
\>[B]{}\mathbf{instance}\;(\Conid{Functor}\;\Varid{f},\Conid{Functor}\;\Varid{g},\Conid{Functor}\;\Varid{h}){}\<[E]%
\\
\>[B]{}\hsindent{3}{}\<[3]%
\>[3]{}\Rightarrow \Conid{MState}\;\Varid{s}\;(\Conid{Free}\;(\Varid{f}\mathrel{{:}{+}{:}}\Varid{g}\mathrel{{:}{+}{:}}\Varid{State_{F}}\;\Varid{s}\mathrel{{:}{+}{:}}\Varid{h}))\;\mathbf{where}{}\<[E]%
\\
\>[3]{}\hsindent{2}{}\<[5]%
\>[5]{}\Varid{get}{}\<[14]%
\>[14]{}\mathrel{=}\Conid{Op}\hsdot{\circ }{.}\Conid{Inr}\hsdot{\circ }{.}\Conid{Inr}\hsdot{\circ }{.}\Conid{Inl}\mathbin{\$}\Conid{Get}\;\Varid{\eta}{}\<[E]%
\\
\>[3]{}\hsindent{2}{}\<[5]%
\>[5]{}\Varid{put}\;\Varid{x}{}\<[14]%
\>[14]{}\mathrel{=}\Conid{Op}\hsdot{\circ }{.}\Conid{Inr}\hsdot{\circ }{.}\Conid{Inr}\hsdot{\circ }{.}\Conid{Inl}\mathbin{\$}\Conid{Put}\;\Varid{x}\;(\Varid{\eta}\;()){}\<[E]%
\ColumnHook
\end{hscode}\resethooks
\indentend 

With these in place, the following translation function \ensuremath{\Varid{local2trail}} simulates the
local-state semantics with global-state semantics by means of the trail stack.

\indentbegin \begin{hscode}\SaveRestoreHook
\column{B}{@{}>{\hspre}l<{\hspost}@{}}%
\column{3}{@{}>{\hspre}l<{\hspost}@{}}%
\column{5}{@{}>{\hspre}l<{\hspost}@{}}%
\column{13}{@{}>{\hspre}l<{\hspost}@{}}%
\column{21}{@{}>{\hspre}l<{\hspost}@{}}%
\column{23}{@{}>{\hspre}l<{\hspost}@{}}%
\column{25}{@{}>{\hspre}l<{\hspost}@{}}%
\column{40}{@{}>{\hspre}l<{\hspost}@{}}%
\column{E}{@{}>{\hspre}l<{\hspost}@{}}%
\>[B]{}\Varid{local2trail}\mathbin{::}(\Conid{Functor}\;\Varid{f},\Conid{Undo}\;\Varid{s}\;\Varid{r}){}\<[E]%
\\
\>[B]{}\hsindent{13}{}\<[13]%
\>[13]{}\Rightarrow \Conid{Free}\;(\Varid{Modify_{F}}\;\Varid{s}\;\Varid{r}\mathrel{{:}{+}{:}}\Varid{Nondet_{F}}\mathrel{{:}{+}{:}}\Varid{f})\;\Varid{a}{}\<[E]%
\\
\>[B]{}\hsindent{13}{}\<[13]%
\>[13]{}\to \Conid{Free}\;(\Varid{Modify_{F}}\;\Varid{s}\;\Varid{r}\mathrel{{:}{+}{:}}\Varid{Nondet_{F}}\mathrel{{:}{+}{:}}\Varid{State_{F}}\;(\Conid{Stack}\;(\Conid{Either}\;\Varid{r}\;()))\mathrel{{:}{+}{:}}\Varid{f})\;\Varid{a}{}\<[E]%
\\
\>[B]{}\Varid{local2trail}\mathrel{=}\Varid{fold}\;\Conid{Var}\;(\Varid{alg}_{1}\mathbin{\#}\Varid{alg}_{2}\mathbin{\#}\Varid{fwd}){}\<[E]%
\\
\>[B]{}\hsindent{3}{}\<[3]%
\>[3]{}\mathbf{where}{}\<[E]%
\\
\>[3]{}\hsindent{2}{}\<[5]%
\>[5]{}\Varid{alg}_{1}\;(\Conid{MUpdate}\;\Varid{r}\;\Varid{k}){}\<[25]%
\>[25]{}\mathrel{=}\Varid{pushStack}\;(\Conid{Left}\;\Varid{r})>\!\!>\Varid{update}\;\Varid{r}>\!\!>\Varid{k}{}\<[E]%
\\
\>[3]{}\hsindent{2}{}\<[5]%
\>[5]{}\Varid{alg}_{1}\;\Varid{p}{}\<[25]%
\>[25]{}\mathrel{=}\Conid{Op}\hsdot{\circ }{.}\Conid{Inl}\mathbin{\$}\Varid{p}{}\<[E]%
\\
\>[3]{}\hsindent{2}{}\<[5]%
\>[5]{}\Varid{alg}_{2}\;(\Conid{Or}\;\Varid{p}\;\Varid{q}){}\<[25]%
\>[25]{}\mathrel{=}(\Varid{pushStack}\;(\Conid{Right}\;())>\!\!>\Varid{p})\mathbin{\talloblong}(\Varid{untrail}>\!\!>\Varid{q}){}\<[E]%
\\
\>[3]{}\hsindent{2}{}\<[5]%
\>[5]{}\Varid{alg}_{2}\;\Varid{p}{}\<[25]%
\>[25]{}\mathrel{=}\Conid{Op}\hsdot{\circ }{.}\Conid{Inr}\hsdot{\circ }{.}\Conid{Inl}\mathbin{\$}\Varid{p}{}\<[E]%
\\
\>[3]{}\hsindent{2}{}\<[5]%
\>[5]{}\Varid{fwd}\;\Varid{p}{}\<[25]%
\>[25]{}\mathrel{=}\Conid{Op}\hsdot{\circ }{.}\Conid{Inr}\hsdot{\circ }{.}\Conid{Inr}\hsdot{\circ }{.}\Conid{Inr}\mathbin{\$}\Varid{p}{}\<[E]%
\\
\>[3]{}\hsindent{2}{}\<[5]%
\>[5]{}\Varid{untrail}\mathrel{=}\mathbf{do}\;{}\<[21]%
\>[21]{}\Varid{top}\leftarrow \Varid{popStack}{}\<[E]%
\\
\>[21]{}\mathbf{case}\;\Varid{top}\;\mathbf{of}{}\<[E]%
\\
\>[21]{}\hsindent{2}{}\<[23]%
\>[23]{}\Conid{Nothing}{}\<[40]%
\>[40]{}\to \Varid{\eta}\;(){}\<[E]%
\\
\>[21]{}\hsindent{2}{}\<[23]%
\>[23]{}\Conid{Just}\;(\Conid{Right}\;()){}\<[40]%
\>[40]{}\to \Varid{\eta}\;(){}\<[E]%
\\
\>[21]{}\hsindent{2}{}\<[23]%
\>[23]{}\Conid{Just}\;(\Conid{Left}\;\Varid{r}){}\<[40]%
\>[40]{}\to \Varid{restore}\;\Varid{r}>\!\!>\Varid{untrail}{}\<[E]%
\ColumnHook
\end{hscode}\resethooks
\indentend As already informally explained above, this translation function
introduces code to push a marker in the left branch of a choice, and 
untrail in the right branch. Whenever an update happens, it is also
recorded on the trail stack. All other operations remain as is.

Now, we can combine the simulation \ensuremath{\Varid{local2trail}} with the global-state
semantics provided by \ensuremath{\Varid{h_{GlobalM}}}, and handle the trail stack at the
end.
\indentbegin \begin{hscode}\SaveRestoreHook
\column{B}{@{}>{\hspre}l<{\hspost}@{}}%
\column{11}{@{}>{\hspre}l<{\hspost}@{}}%
\column{E}{@{}>{\hspre}l<{\hspost}@{}}%
\>[B]{}\Varid{h_{GlobalT}}{}\<[11]%
\>[11]{}\mathbin{::}(\Conid{Functor}\;\Varid{f},\Conid{Undo}\;\Varid{s}\;\Varid{r}){}\<[E]%
\\
\>[11]{}\Rightarrow \Conid{Free}\;(\Varid{Modify_{F}}\;\Varid{s}\;\Varid{r}\mathrel{{:}{+}{:}}\Varid{Nondet_{F}}\mathrel{{:}{+}{:}}\Varid{f})\;\Varid{a}\to \Varid{s}\to \Conid{Free}\;\Varid{f}\;[\mskip1.5mu \Varid{a}\mskip1.5mu]{}\<[E]%
\\
\>[B]{}\Varid{h_{GlobalT}}{}\<[11]%
\>[11]{}\mathrel{=}\Varid{fmap}\;(\Varid{fmap}\;\Varid{fst}\hsdot{\circ }{.}\Varid{flip}\;\Varid{run_{StateT}}\;(\Conid{Stack}\;[\mskip1.5mu \mskip1.5mu])\hsdot{\circ }{.}\Varid{h_{State}})\hsdot{\circ }{.}\Varid{h_{GlobalM}}\hsdot{\circ }{.}\Varid{local2trail}{}\<[E]%
\ColumnHook
\end{hscode}\resethooks
\indentend 
The following theorem establishes the correctness of \ensuremath{\Varid{h_{GlobalT}}} with respect to
the local-state semantics given by \ensuremath{\Varid{h_{Local}}} defined in \Cref{sec:local-state}.
\begin{restatable}[]{theorem}{localTrail}
\label{thm:trail-local-global}
Given \ensuremath{\Conid{Functor}\;\Varid{f}} and \ensuremath{\Conid{Undo}\;\Varid{s}\;\Varid{r}}, the equation\indentbegin \begin{hscode}\SaveRestoreHook
\column{B}{@{}>{\hspre}l<{\hspost}@{}}%
\column{3}{@{}>{\hspre}l<{\hspost}@{}}%
\column{E}{@{}>{\hspre}l<{\hspost}@{}}%
\>[3]{}\Varid{h_{GlobalT}}\mathrel{=}\Varid{h_{LocalM}}{}\<[E]%
\ColumnHook
\end{hscode}\resethooks
\indentend holds for all programs \ensuremath{\Varid{p}\mathbin{::}\Conid{Free}\;(\Varid{Modify_{F}}\;\Varid{s}\;\Varid{r}\mathrel{{:}{+}{:}}\Varid{Nondet_{F}}\mathrel{{:}{+}{:}}\Varid{f})\;\Varid{a}}
that do not use the operation \ensuremath{\Conid{Op}\;(\Conid{Inl}\;\Conid{MRestore}\;\anonymous \;\anonymous )}.
\end{restatable}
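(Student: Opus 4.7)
The plan is to factor the proof through Theorem~\ref{thm:modify-local-global}, which already establishes \ensuremath{\Varid{h_{GlobalM}}\hsdot{\circ }{.}\Varid{local2global_M}\mathrel{=}\Varid{h_{LocalM}}}. If I can show
\[
\ensuremath{\Varid{h_{GlobalT}}\mathrel{=}\Varid{h_{GlobalM}}\hsdot{\circ }{.}\Varid{local2global_M}}
\]
then the theorem follows by transitivity. This reduces the question to comparing two different batching disciplines for state restoration: the trail-stack-based discipline of \ensuremath{\Varid{local2trail}}, in which markers and deltas are pushed to an auxiliary stack and unwound in bulk at every right branch, versus the nondeterminism-based discipline of \ensuremath{\Varid{local2global_M}}, in which each \ensuremath{\Conid{MUpdate}} spawns its own side-branch that undoes that single delta on backtracking.

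To prove this reduction, I would express both sides as a single fold over \ensuremath{\Conid{Free}\;(\Varid{Modify_{F}}\;\Varid{s}\;\Varid{r}\mathrel{{:}{+}{:}}\Varid{Nondet_{F}}\mathrel{{:}{+}{:}}\Varid{f})\;\Varid{a}} by applying fold fusion. As in the proof of Theorem~\ref{thm:local-global} and Theorem~\ref{thm:modify-local-global}, the standard \textbf{fusion-post} rule will not suffice: the left-hand handler component \ensuremath{\Varid{h_{GlobalM}}\hsdot{\circ }{.}\ldots\hsdot{\circ }{.}\Varid{local2trail}} only behaves correctly under the assumption that its subterms have themselves been transformed by \ensuremath{\Varid{local2trail}} (so that every \ensuremath{\Conid{MUpdate}} is accompanied by a \ensuremath{\Varid{pushStack}\;(\Conid{Left}\;\Varid{r})} and every \ensuremath{\Conid{Or}} by the marker/\ensuremath{\Varid{untrail}} wrapping). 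I would therefore use \textbf{fusion-post'}~(\ref{eq:fusion-post-strong}) with an extra \ensuremath{\Varid{fmap}\;\Varid{local2trail}} (resp.\ \ensuremath{\Varid{fmap}\;\Varid{local2global_M}}) in the fusion condition, and then check equality of the resulting algebras case by case on \ensuremath{\Conid{MGet}}, \ensuremath{\Conid{MUpdate}}, \ensuremath{\Conid{Fail}}, \ensuremath{\Conid{Or}}, and the forwarding case. The \ensuremath{\Conid{MGet}}, \ensuremath{\Conid{Fail}}, and forwarding cases are routine since they pass through both translations untouched (modulo the state-transformer plumbing and the \ensuremath{\Conid{Stack}} state handler, which commutes with them because they do not touch the trail stack).

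The interesting cases are \ensuremath{\Conid{MUpdate}} and \ensuremath{\Conid{Or}}, and they can only be compared jointly via a strengthened key lemma. The lemma would state, roughly, that for any computation \ensuremath{\Varid{p}} and any state \ensuremath{\Varid{s}}, running \ensuremath{\Varid{local2trail}\;\Varid{p}} through \ensuremath{\Varid{h_{GlobalM}}} and the \ensuremath{\Conid{Stack}} handler, starting from a stack whose top segment up to the most recent \ensuremath{\Conid{Right}\;()} marker is \ensuremath{[\mskip1.5mu \Conid{Left}\;\Varid{r}_{1},\ldots ,\Conid{Left}\;\Varid{r}_{\Varid{n}}\mskip1.5mu]} and whose current state is \ensuremath{\Varid{s}\mathbin{\oplus}\Varid{r}_{\Varid{n}}\mathbin{\oplus}\cdots \mathbin{\oplus}\Varid{r}_{1}}, produces the same observable result as running \ensuremath{\Varid{local2global_M}\;\Varid{p}} through \ensuremath{\Varid{h_{GlobalM}}} on that same state — with the proviso that untrailing at backtracking collapses \ensuremath{\Varid{restore}\;\Varid{r}_{1}>\!\!>\cdots >\!\!>\Varid{restore}\;\Varid{r}_{\Varid{n}}} back to \ensuremath{\Varid{s}}, exactly matching the cumulative effect of the \ensuremath{\Varid{n}} nested \ensuremath{\Varid{side}\;(\Varid{restore}\;\Varid{r}_{\Varid{i}})} branches that \ensuremath{\Varid{local2global_M}} would have generated. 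The \textbf{plus-minus} law~(\ref{eq:plus-minus}) together with the \textbf{restore-mget} law~(\ref{eq:restore-mget}) give the algebraic identities needed to collapse the chain of restores.

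The main obstacle, and the place where almost all the real work lies, will be this key lemma. Proving it requires maintaining a precise invariant linking the contents of the trail stack, the current state, and the choice-point structure of the original program — essentially formalising the WAM's untrailing discipline. The inductive argument must range over both the syntactic structure of \ensuremath{\Varid{p}} and the dynamic history of updates since the most recent \ensuremath{\Conid{Or}}, and the induction hypothesis must be strong enough to thread the stack invariant through arbitrary forwarded effects \ensuremath{\Varid{f}}, which is why we need the side condition that no explicit \ensuremath{\Conid{MRestore}} appears in \ensuremath{\Varid{p}} (otherwise the programmer could break the stack invariant). Once this lemma is in place, the \ensuremath{\Conid{MUpdate}} and \ensuremath{\Conid{Or}} algebra cases fall out by direct calculation, and the full proof can be wrapped up in \Cref{app:immutable-trail-stack} following the template of the earlier appendix proofs.
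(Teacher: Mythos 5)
Your proposal is correct and follows essentially the same route as the paper: the appendix proof also reuses the fused form of \ensuremath{\Varid{h_{LocalM}}} from the proof of \Cref{thm:modify-local-global} (which is exactly the fused form of \ensuremath{\Varid{h_{GlobalM}}\hsdot{\circ }{.}\Varid{local2global_M}}, so your transitivity step is a cosmetic repackaging of the same comparison), fuses \ensuremath{\Varid{h_{GlobalT}}} with \textbf{fusion-post'}, and isolates the work in the \ensuremath{\Conid{MUpdate}} and \ensuremath{\Conid{Or}} cases. Your key lemma is the paper's Lemmas on the trail stack tracking the state and \ensuremath{\Varid{untrail}} undoing down to the marker, stated as a cross-simulation rather than as a self-contained invariant of \ensuremath{\Varid{local2trail}}, but the content (stack segment equals the list of un-restored deltas, state equals base state \ensuremath{\mathbin{\oplus}}-folded with them, collapsed via \textbf{plus-minus}) is the same.
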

The proof can be found in Appendix~\ref{app:immutable-trail-stack};
it uses the same fold fusion strategy as in the proofs of other theorems.

\subsection{Putting Everything Together, Again}

We can further combine the simulation \ensuremath{\Varid{local2trail}} with the
simulation \ensuremath{\Varid{nondet2state}} of nondeterminism in
\Cref{sec:nondeterminism-state} and the simulation \ensuremath{\Varid{state2state}} of
multiple states in \Cref{sec:multiple-states}.
The result simulation encodes the local-state semantics with one
modification-based state and two stacks, a choicepoint stack generated
by \ensuremath{\Varid{nondet2state}} and a trail stack generated by \ensuremath{\Varid{local2trail}}.
This has a close relationship to the WAM of Prolog.
The modifcation-based state models the state of the program.
The choicepoint stack stores the remaining branches to implement the
nondeterministic searching.
The trail stack stores the privous state updates to implement the
backtracking.

The combined simulation function \ensuremath{\Varid{simulate_T}} is defined
as follows:
\indentbegin \begin{hscode}\SaveRestoreHook
\column{B}{@{}>{\hspre}l<{\hspost}@{}}%
\column{16}{@{}>{\hspre}l<{\hspost}@{}}%
\column{E}{@{}>{\hspre}l<{\hspost}@{}}%
\>[B]{}\Varid{simulate_T}{}\<[16]%
\>[16]{}\mathbin{::}(\Conid{Functor}\;\Varid{f},\Conid{Undo}\;\Varid{s}\;\Varid{r}){}\<[E]%
\\
\>[16]{}\Rightarrow \Conid{Free}\;(\Varid{Modify_{F}}\;\Varid{s}\;\Varid{r}\mathrel{{:}{+}{:}}\Varid{Nondet_{F}}\mathrel{{:}{+}{:}}\Varid{f})\;\Varid{a}{}\<[E]%
\\
\>[16]{}\to \Varid{s}\to \Conid{Free}\;\Varid{f}\;[\mskip1.5mu \Varid{a}\mskip1.5mu]{}\<[E]%
\\
\>[B]{}\Varid{simulate_T}\;\Varid{x}\;\Varid{s}{}\<[16]%
\>[16]{}\mathrel{=}\Varid{extractT}\hsdot{\circ }{.}\Varid{h_{State}}{}\<[E]%
\\
\>[16]{}\hsdot{\circ }{.}\Varid{fmap}\;\Varid{fst}\hsdot{\circ }{.}\Varid{flip}\;\Varid{run_{StateT}}\;\Varid{s}\hsdot{\circ }{.}\Varid{h_{Modify}}{}\<[E]%
\\
\>[16]{}\hsdot{\circ }{.}\Varid{(\Leftrightarrow)}\hsdot{\circ }{.}\Varid{states2state}\hsdot{\circ }{.}\Varid{(\circlearrowleft)}{}\<[E]%
\\
\>[16]{}\hsdot{\circ }{.}\Varid{(\Leftrightarrow)}\hsdot{\circ }{.}\Varid{nondet2state}\hsdot{\circ }{.}\Varid{(\Leftrightarrow)}{}\<[E]%
\\
\>[16]{}\hsdot{\circ }{.}\Varid{local2trail}\mathbin{\$}\Varid{x}{}\<[E]%
\ColumnHook
\end{hscode}\resethooks
\indentend 
It uses the auxiliary function \ensuremath{\Varid{extractT}} to get the final results,
and \ensuremath{\Varid{(\circlearrowleft)}} to reorder the signatures.
Note that the initial state used by \ensuremath{\Varid{extractT}} is \ensuremath{(\Conid{SS}\;[\mskip1.5mu \mskip1.5mu]\;[\mskip1.5mu \mskip1.5mu],\Conid{Stack}\;[\mskip1.5mu \mskip1.5mu])}, which essentially contains an empty results list, an empty
choicepoint stack, and an empty trail stack.
\indentbegin \begin{hscode}\SaveRestoreHook
\column{B}{@{}>{\hspre}l<{\hspost}@{}}%
\column{10}{@{}>{\hspre}l<{\hspost}@{}}%
\column{35}{@{}>{\hspre}l<{\hspost}@{}}%
\column{61}{@{}>{\hspre}l<{\hspost}@{}}%
\column{E}{@{}>{\hspre}l<{\hspost}@{}}%
\>[B]{}\Varid{extractT}\;\Varid{x}\mathrel{=}\Varid{results_{SS}}\hsdot{\circ }{.}\Varid{fst}\hsdot{\circ }{.}\Varid{snd}\mathbin{\langle\hspace{1.6pt}\mathclap{\raisebox{0.1pt}{\scalebox{1}{\$}}}\hspace{1.6pt}\rangle}\Varid{run_{StateT}}\;\Varid{x}\;(\Conid{SS}\;[\mskip1.5mu \mskip1.5mu]\;[\mskip1.5mu \mskip1.5mu],\Conid{Stack}\;[\mskip1.5mu \mskip1.5mu]){}\<[E]%
\\[\blanklineskip]%
\>[B]{}\Varid{(\circlearrowleft)}{}\<[10]%
\>[10]{}\mathbin{::}(\Conid{Functor}\;\Varid{f}_{1},\Conid{Functor}\;\Varid{f}_{2},\Conid{Functor}\;\Varid{f3},\Conid{Functor}\;\Varid{f4}){}\<[E]%
\\
\>[10]{}\Rightarrow \Conid{Free}\;(\Varid{f}_{1}\mathrel{{:}{+}{:}}\Varid{f}_{2}\mathrel{{:}{+}{:}}\Varid{f3}\mathrel{{:}{+}{:}}\Varid{f4})\;\Varid{a}\to \Conid{Free}\;(\Varid{f}_{2}\mathrel{{:}{+}{:}}\Varid{f3}\mathrel{{:}{+}{:}}\Varid{f}_{1}\mathrel{{:}{+}{:}}\Varid{f4})\;\Varid{a}{}\<[E]%
\\
\>[B]{}\Varid{(\circlearrowleft)}\;(\Conid{Var}\;\Varid{x}){}\<[35]%
\>[35]{}\mathrel{=}\Conid{Var}\;\Varid{x}{}\<[E]%
\\
\>[B]{}\Varid{(\circlearrowleft)}\;(\Conid{Op}\;(\Conid{Inl}\;\Varid{k})){}\<[35]%
\>[35]{}\mathrel{=}(\Conid{Op}\hsdot{\circ }{.}\Conid{Inr}\hsdot{\circ }{.}\Conid{Inr}\hsdot{\circ }{.}\Conid{Inl})\;{}\<[61]%
\>[61]{}(\Varid{fmap}\;\Varid{(\circlearrowleft)}\;\Varid{k}){}\<[E]%
\\
\>[B]{}\Varid{(\circlearrowleft)}\;(\Conid{Op}\;(\Conid{Inr}\;(\Conid{Inl}\;\Varid{k}))){}\<[35]%
\>[35]{}\mathrel{=}(\Conid{Op}\hsdot{\circ }{.}\Conid{Inl})\;{}\<[61]%
\>[61]{}(\Varid{fmap}\;\Varid{(\circlearrowleft)}\;\Varid{k}){}\<[E]%
\\
\>[B]{}\Varid{(\circlearrowleft)}\;(\Conid{Op}\;(\Conid{Inr}\;(\Conid{Inr}\;(\Conid{Inl}\;\Varid{k})))){}\<[35]%
\>[35]{}\mathrel{=}(\Conid{Op}\hsdot{\circ }{.}\Conid{Inr}\hsdot{\circ }{.}\Conid{Inl})\;{}\<[61]%
\>[61]{}(\Varid{fmap}\;\Varid{(\circlearrowleft)}\;\Varid{k}){}\<[E]%
\\
\>[B]{}\Varid{(\circlearrowleft)}\;(\Conid{Op}\;(\Conid{Inr}\;(\Conid{Inr}\;(\Conid{Inr}\;\Varid{k})))){}\<[35]%
\>[35]{}\mathrel{=}(\Conid{Op}\hsdot{\circ }{.}\Conid{Inr}\hsdot{\circ }{.}\Conid{Inr}\hsdot{\circ }{.}\Conid{Inr})\;{}\<[61]%
\>[61]{}(\Varid{fmap}\;\Varid{(\circlearrowleft)}\;\Varid{k}){}\<[E]%
\ColumnHook
\end{hscode}\resethooks
\indentend 
\Cref{fig:simulation-trail} illustrates each step of this simulation.
The state type \ensuremath{\Conid{St}\;\Varid{s}\;\Varid{r}\;\Varid{f}\;\Varid{a}} is defined as \ensuremath{\Conid{SS}\;(\Varid{Modify_{F}}\;\Varid{s}\;\Varid{r}\mathrel{{:}{+}{:}}\Varid{State_{F}}\;(\Conid{Stack}\;(\Conid{Either}\;\Varid{r}\;()))\mathrel{{:}{+}{:}}\Varid{f})\;\Varid{a}}.
\begin{figure}[h]
\[\begin{tikzcd}
	{\ensuremath{\Conid{Free}\;(\Varid{Modify_{F}}\;\Varid{s}\;\Varid{r}\mathrel{{:}{+}{:}}\Varid{Nondet_{F}}\mathrel{{:}{+}{:}}\Varid{f})\;\Varid{a}}} \\
	{\ensuremath{\Conid{Free}\;(\Varid{Modify_{F}}\;\Varid{s}\;\Varid{r}\mathrel{{:}{+}{:}}\Varid{Nondet_{F}}\mathrel{{:}{+}{:}}\Varid{State_{F}}\;(\Conid{Stack}\;(\Conid{Either}\;\Varid{r}\;()))\mathrel{{:}{+}{:}}\Varid{f})\;\Varid{a}}} \\
	{\ensuremath{\Conid{Free}\;(\Varid{Modify_{F}}\;\Varid{s}\;\Varid{r}\mathrel{{:}{+}{:}}\Varid{State_{F}}\;(\Conid{St}\;\Varid{s}\;\Varid{r}\;\Varid{f}\;\Varid{a})\mathrel{{:}{+}{:}}\Varid{State_{F}}\;(\Conid{Stack}\;(\Conid{Either}\;\Varid{r}\;()))\mathrel{{:}{+}{:}}\Varid{f})\;()}} \\
	{\ensuremath{\Conid{Free}\;(\Varid{Modify_{F}}\;\Varid{s}\;\Varid{r}\mathrel{{:}{+}{:}}\Varid{State_{F}}\;(\Conid{St}\;\Varid{s}\;\Varid{r}\;\Varid{f}\;\Varid{a},\Conid{Stack}\;(\Conid{Either}\;\Varid{r}\;()))\mathrel{{:}{+}{:}}\Varid{f})\;()}} \\
	{\ensuremath{\Conid{Free}\;(\Varid{State_{F}}\;(\Conid{St}\;\Varid{s}\;\Varid{r}\;\Varid{f}\;\Varid{a},\Conid{Stack}\;(\Conid{Either}\;\Varid{r}\;()))\mathrel{{:}{+}{:}}\Varid{f})\;()}} \\
	{\ensuremath{\Conid{Free}\;\Varid{f}\;[\mskip1.5mu \Varid{a}\mskip1.5mu]}}
	\arrow["{\ensuremath{\Varid{local2trail}}}", from=1-1, to=2-1]
	\arrow["{\ensuremath{\Varid{(\Leftrightarrow)}\hsdot{\circ }{.}\Varid{nondet2state}\hsdot{\circ }{.}\Varid{(\Leftrightarrow)}}}", from=2-1, to=3-1]
	\arrow["{\ensuremath{\Varid{fmap}\;\Varid{fst}\hsdot{\circ }{.}\Varid{flip}\;\Varid{run_{StateT}}\;\Varid{s}\hsdot{\circ }{.}\Varid{h_{Modify}}}}", from=4-1, to=5-1]
	\arrow["{\ensuremath{\Varid{(\Leftrightarrow)}\hsdot{\circ }{.}\Varid{states2state}\hsdot{\circ }{.}\Varid{(\circlearrowleft)}}}", from=3-1, to=4-1]
	\arrow["{\ensuremath{\Varid{extractT}\hsdot{\circ }{.}\Varid{h_{State}}}}", from=5-1, to=6-1]
\end{tikzcd}\]
\caption{An overview of the \ensuremath{\Varid{simulate_T}} function.}
\label{fig:simulation-trail}
\end{figure}

In the \ensuremath{\Varid{simulate_T}} function, we first use our three simulations
\ensuremath{\Varid{local2trail}}, \ensuremath{\Varid{nondet2state}} and \ensuremath{\Varid{states2state}} (together with some
reordering of signatures) to interpret the local-state semantics for
state and nondeterminism in terms of a modification-based state and a
general state containing two stacks. Then, we use the handler
\ensuremath{\Varid{h_{Modify}}} to interpret the modification-based state effect, and use
the handler \ensuremath{\Varid{h_{State}}} to interpret the two stacks. Finally, we use the
function \ensuremath{\Varid{extractT}} to get the final results.

As in \Cref{sec:final-simulate}, we can also fuse \ensuremath{\Varid{simulate_T}} into a
single handler.
\indentbegin \begin{hscode}\SaveRestoreHook
\column{B}{@{}>{\hspre}l<{\hspost}@{}}%
\column{3}{@{}>{\hspre}l<{\hspost}@{}}%
\column{5}{@{}>{\hspre}l<{\hspost}@{}}%
\column{13}{@{}>{\hspre}l<{\hspost}@{}}%
\column{20}{@{}>{\hspre}l<{\hspost}@{}}%
\column{25}{@{}>{\hspre}l<{\hspost}@{}}%
\column{26}{@{}>{\hspre}l<{\hspost}@{}}%
\column{36}{@{}>{\hspre}l<{\hspost}@{}}%
\column{44}{@{}>{\hspre}c<{\hspost}@{}}%
\column{44E}{@{}l@{}}%
\column{47}{@{}>{\hspre}l<{\hspost}@{}}%
\column{49}{@{}>{\hspre}l<{\hspost}@{}}%
\column{56}{@{}>{\hspre}l<{\hspost}@{}}%
\column{65}{@{}>{\hspre}l<{\hspost}@{}}%
\column{E}{@{}>{\hspre}l<{\hspost}@{}}%
\>[B]{}\mathbf{type}\;\Conid{Comp}\;\Varid{f}\;\Varid{a}\;\Varid{s}\;\Varid{r}\mathrel{=}(\Conid{WAM}\;\Varid{f}\;\Varid{a}\;\Varid{s}\;\Varid{r},\Varid{s})\to \Conid{Free}\;\Varid{f}\;[\mskip1.5mu \Varid{a}\mskip1.5mu]{}\<[E]%
\\
\>[B]{}\mathbf{data}\;\Conid{WAM}\;\Varid{f}\;\Varid{a}\;\Varid{s}\;\Varid{r}\mathrel{=}\Conid{WAM}\;{}\<[25]%
\>[25]{}\{\mskip1.5mu \Varid{results}{}\<[36]%
\>[36]{}\mathbin{::}[\mskip1.5mu \Varid{a}\mskip1.5mu]{}\<[E]%
\\
\>[25]{},\Varid{cpStack}{}\<[36]%
\>[36]{}\mathbin{::}[\mskip1.5mu \Conid{Comp}\;\Varid{f}\;\Varid{a}\;\Varid{s}\;\Varid{r}\mskip1.5mu]{}\<[E]%
\\
\>[25]{},\Varid{trStack}{}\<[36]%
\>[36]{}\mathbin{::}[\mskip1.5mu \Conid{Either}\;\Varid{r}\;()\mskip1.5mu]\mskip1.5mu\}{}\<[E]%
\\[\blanklineskip]%
\>[B]{}\Varid{simulate_{TF}}{}\<[13]%
\>[13]{}\mathbin{::}(\Conid{Functor}\;\Varid{f},\Conid{Undo}\;\Varid{s}\;\Varid{r}){}\<[E]%
\\
\>[13]{}\Rightarrow \Conid{Free}\;(\Varid{Modify_{F}}\;\Varid{s}\;\Varid{r}\mathrel{{:}{+}{:}}\Varid{Nondet_{F}}\mathrel{{:}{+}{:}}\Varid{f})\;\Varid{a}{}\<[E]%
\\
\>[13]{}\to \Varid{s}{}\<[E]%
\\
\>[13]{}\to \Conid{Free}\;\Varid{f}\;[\mskip1.5mu \Varid{a}\mskip1.5mu]{}\<[E]%
\\
\>[B]{}\Varid{simulate_{TF}}\;{}\<[13]%
\>[13]{}\Varid{x}\;\Varid{s}\mathrel{=}{}\<[20]%
\>[20]{}\Varid{fold}\;\Varid{gen}\;(\Varid{alg}_{1}\mathbin{\#}\Varid{alg}_{2}\mathbin{\#}\Varid{fwd})\;\Varid{x}\;(\Conid{WAM}\;[\mskip1.5mu \mskip1.5mu]\;[\mskip1.5mu \mskip1.5mu]\;[\mskip1.5mu \mskip1.5mu],\Varid{s}){}\<[E]%
\\
\>[B]{}\hsindent{3}{}\<[3]%
\>[3]{}\mathbf{where}{}\<[E]%
\\
\>[3]{}\hsindent{2}{}\<[5]%
\>[5]{}\Varid{gen}\;\Varid{x}\;{}\<[26]%
\>[26]{}(\Conid{WAM}\;\Varid{xs}\;\Varid{cp}\;\Varid{tr},\Varid{s})\mathrel{=}{}\<[47]%
\>[47]{}\Varid{continue}\;(\Varid{xs}+\!\!+[\mskip1.5mu \Varid{x}\mskip1.5mu])\;\Varid{cp}\;\Varid{tr}\;\Varid{s}{}\<[E]%
\\
\>[3]{}\hsindent{2}{}\<[5]%
\>[5]{}\Varid{alg}_{1}\;(\Conid{MGet}\;\Varid{k})\;{}\<[26]%
\>[26]{}(\Conid{WAM}\;\Varid{xs}\;\Varid{cp}\;\Varid{tr},\Varid{s})\mathrel{=}{}\<[47]%
\>[47]{}\Varid{k}\;\Varid{s}\;(\Conid{WAM}\;\Varid{xs}\;\Varid{cp}\;\Varid{tr},\Varid{s}){}\<[E]%
\\
\>[3]{}\hsindent{2}{}\<[5]%
\>[5]{}\Varid{alg}_{1}\;(\Conid{MUpdate}\;\Varid{r}\;\Varid{k})\;{}\<[26]%
\>[26]{}(\Conid{WAM}\;\Varid{xs}\;\Varid{cp}\;\Varid{tr},\Varid{s})\mathrel{=}{}\<[47]%
\>[47]{}\Varid{k}\;(\Conid{WAM}\;\Varid{xs}\;\Varid{cp}\;(\Conid{Left}\;\Varid{r}\mathbin{:}\Varid{tr}),\Varid{s}\mathbin{\oplus}\Varid{r}){}\<[E]%
\\
\>[3]{}\hsindent{2}{}\<[5]%
\>[5]{}\Varid{alg}_{1}\;(\Conid{MRestore}\;\Varid{r}\;\Varid{k})\;{}\<[26]%
\>[26]{}(\Conid{WAM}\;\Varid{xs}\;\Varid{cp}\;\Varid{tr},\Varid{s})\mathrel{=}{}\<[47]%
\>[47]{}\Varid{k}\;(\Conid{WAM}\;\Varid{xs}\;\Varid{cp}\;\Varid{tr},\Varid{s}\mathbin{\ominus}\Varid{r}){}\<[E]%
\\
\>[3]{}\hsindent{2}{}\<[5]%
\>[5]{}\Varid{alg}_{2}\;\Conid{Fail}\;{}\<[26]%
\>[26]{}(\Conid{WAM}\;\Varid{xs}\;\Varid{cp}\;\Varid{tr},\Varid{s})\mathrel{=}{}\<[47]%
\>[47]{}\Varid{continue}\;\Varid{xs}\;\Varid{cp}\;\Varid{tr}\;\Varid{s}{}\<[E]%
\\
\>[3]{}\hsindent{2}{}\<[5]%
\>[5]{}\Varid{alg}_{2}\;(\Conid{Or}\;\Varid{p}\;\Varid{q})\;{}\<[26]%
\>[26]{}(\Conid{WAM}\;\Varid{xs}\;\Varid{cp}\;\Varid{tr},\Varid{s})\mathrel{=}{}\<[47]%
\>[47]{}\Varid{p}\;(\Conid{WAM}\;\Varid{xs}\;(\Varid{untrail}\;\Varid{q}\mathbin{:}\Varid{cp})\;(\Conid{Right}\;()\mathbin{:}\Varid{tr}),\Varid{s}){}\<[E]%
\\
\>[3]{}\hsindent{2}{}\<[5]%
\>[5]{}\Varid{fwd}\;\Varid{op}\;{}\<[26]%
\>[26]{}(\Conid{WAM}\;\Varid{xs}\;\Varid{cp}\;\Varid{tr},\Varid{s})\mathrel{=}{}\<[47]%
\>[47]{}\Conid{Op}\;(\Varid{fmap}\;(\mathbin{\$}(\Conid{WAM}\;\Varid{xs}\;\Varid{cp}\;\Varid{tr},\Varid{s}))\;\Varid{op}){}\<[E]%
\\
\>[3]{}\hsindent{2}{}\<[5]%
\>[5]{}\Varid{untrail}\;\Varid{q}\;{}\<[26]%
\>[26]{}(\Conid{WAM}\;\Varid{xs}\;\Varid{cp}\;\Varid{tr},\Varid{s})\mathrel{=}{}\<[47]%
\>[47]{}\mathbf{case}\;\Varid{tr}\;\mathbf{of}{}\<[E]%
\\
\>[47]{}\hsindent{2}{}\<[49]%
\>[49]{}[\mskip1.5mu \mskip1.5mu]\to \Varid{q}\;(\Conid{WAM}\;\Varid{xs}\;\Varid{cp}\;\Varid{tr},\Varid{s}){}\<[E]%
\\
\>[47]{}\hsindent{2}{}\<[49]%
\>[49]{}\Conid{Right}\;()\mathbin{:}\Varid{tr'}{}\<[65]%
\>[65]{}\to \Varid{q}\;(\Conid{WAM}\;\Varid{xs}\;\Varid{cp}\;\Varid{tr'},\Varid{s}){}\<[E]%
\\
\>[47]{}\hsindent{2}{}\<[49]%
\>[49]{}\Conid{Left}\;\Varid{r}\mathbin{:}\Varid{tr'}{}\<[65]%
\>[65]{}\to \Varid{untrail}\;\Varid{q}\;(\Conid{WAM}\;\Varid{xs}\;\Varid{cp}\;\Varid{tr'},\Varid{s}\mathbin{\ominus}\Varid{r}){}\<[E]%
\\
\>[3]{}\hsindent{2}{}\<[5]%
\>[5]{}\Varid{continue}\;\Varid{xs}\;\Varid{cp}\;\Varid{tr}\;\Varid{s}{}\<[44]%
\>[44]{}\mathrel{=}{}\<[44E]%
\>[47]{}\mathbf{case}\;\Varid{cp}\;\mathbf{of}{}\<[E]%
\\
\>[47]{}\hsindent{2}{}\<[49]%
\>[49]{}[\mskip1.5mu \mskip1.5mu]{}\<[56]%
\>[56]{}\to \Varid{\eta}\;\Varid{xs}{}\<[E]%
\\
\>[47]{}\hsindent{2}{}\<[49]%
\>[49]{}\Varid{p}\mathbin{:}\Varid{cp'}{}\<[56]%
\>[56]{}\to \Varid{p}\;(\Conid{WAM}\;\Varid{xs}\;\Varid{cp'}\;\Varid{tr},\Varid{s}){}\<[E]%
\ColumnHook
\end{hscode}\resethooks
\indentend Here, the carrier type of the algebras is \ensuremath{\Conid{Comp}\;\Varid{f}\;\Varid{a}\;\Varid{s}\;\Varid{r}}. It differs from that of \ensuremath{\Varid{simulate_F}}
in that it also takes a trail stack as an input.

\paragraph*{N-queens with Two Stacks}\
With \ensuremath{\Varid{simulate_T}}, we can implement the backtracking algorithm of the
n-queens problem with one modification-based state and two stacks.

\indentbegin \begin{hscode}\SaveRestoreHook
\column{B}{@{}>{\hspre}l<{\hspost}@{}}%
\column{E}{@{}>{\hspre}l<{\hspost}@{}}%
\>[B]{}\Varid{queensSimT}\mathbin{::}\Conid{Int}\to [\mskip1.5mu [\mskip1.5mu \Conid{Int}\mskip1.5mu]\mskip1.5mu]{}\<[E]%
\\
\>[B]{}\Varid{queensSimT}\mathrel{=}\Varid{h_{Nil}}\hsdot{\circ }{.}\Varid{flip}\;\Varid{simulate_T}\;(\mathrm{0},[\mskip1.5mu \mskip1.5mu])\hsdot{\circ }{.}\Varid{queens_{M}}{}\<[E]%
\ColumnHook
\end{hscode}\resethooks
\indentend

\section{Related Work}
\label{sec:related-work}

There are various related works.

\subsection{Prolog}
\label{sec:prolog}

Prolog is a prominent example of a system that exposes nondeterminism with local
state to the user, but is itself implemented in terms of a single, global state.

\paragraph*{Warren Abstract Machine}\
The folklore idea of undoing modifications upon backtracking is a key feature
of many Prolog implementations, in particular those based on the Warren 
Abstract Machine (WAM) \cite{AICPub641:1983,AitKaci91}.
The WAM's global state is the program heap and Prolog programs modify this heap
during unification only in a very specific manner: following the union-find
algorithm, they overwrite cells that contain self-references with pointers to
other cells. 
Undoing these modifications only requires knowledge of the modified cell's
address, which can be written back in that cell during backtracking. 
The WAM has a special stack, called the trail stack, for storing these addresses, 
and the process of restoring those cells is called \emph{untrailing}.

\paragraph*{WAM Derivation and Correctness}\
Several authors have studied the derivation of the WAM from a specification
of Prolog, and its correctness.

\cite{DBLP:books/el/beierleP95/BorgerR95} start from an operational semantics
of Prolog based on derivation trees and refine this in successive steps to the WAM.
Their approach was later mechanized in Isabelle/HOL by \cite{10.5555/646523.694570}.
\cite{wam} sketch how the WAM can be derived from a Prolog interpreter
following the functional correspondence between evaluator and abstract
machine~\cite{AGER2005149}. 

Neither of these approaches is based on an abstraction of
effects that separates them from other aspects of Prolog.

\paragraph*{The 4-Port Box Model}\
While trailing happens under the hood, there is a folklore Prolog programming
pattern for observing and intervening at different point in the control flow of a
procedure call, known as the \emph{4-port box model}.
In this model, upon the first entrance of a Prolog procedure 
it is \emph{called};
it may yield a result and \emph{exits}; 
when the subsequent procedure fails and backtracks, it is asked to \emph{redo}
its computation, possibly yielding the next result;
finally it may fail. 
Given a Prolog procedure \ensuremath{\Varid{p}} implemented in Haskell, the following program prints
debugging messages when each of the four ports are used:\indentbegin \begin{hscode}\SaveRestoreHook
\column{B}{@{}>{\hspre}l<{\hspost}@{}}%
\column{3}{@{}>{\hspre}l<{\hspost}@{}}%
\column{E}{@{}>{\hspre}l<{\hspost}@{}}%
\>[3]{}(\Varid{putStr}\;\text{\ttfamily \char34 call\char34}\mathbin{\talloblong}\Varid{side}\;(\Varid{putStr}\;\text{\ttfamily \char34 fail\char34}))>\!\!>{}\<[E]%
\\
\>[3]{}\Varid{p}>\!\!>\!\!=\lambda \Varid{x}\to {}\<[E]%
\\
\>[3]{}(\Varid{putStr}\;\text{\ttfamily \char34 exit\char34}\mathbin{\talloblong}\Varid{side}\;(\Varid{putStr}\;\text{\ttfamily \char34 redo\char34}))>\!\!>{}\<[E]%
\\
\>[3]{}\Varid{\eta}\;\Varid{x}{}\<[E]%
\ColumnHook
\end{hscode}\resethooks
\indentend This technique was applied in the monadic setting by \citet{monadicbacktracking},
and it has been our inspiration for expressing the state restoration with global
state.

\paragraph*{Functional Models of Prolog}\
Various authors have modelled (aspects of) Prolog in functional programming
languages, often using monads to capture nondeterminism and state effects.
Notably, \cite{prologinhaskell} develop an embedding of Prolog in Haskell.

Most attention has gone towards modelling the nondeterminsm or search aspect of
Prolog, with various monads and monad transformers being proposed
\citep{DBLP:conf/icfp/Hinze00,DBLP:conf/icfp/KiselyovSFS05}. Notably, \cite{DBLP:conf/ppdp/SchrijversWDD14} shows how
Prolog's search can be exposed with a free monad and manipulated using handlers.

None of these works consider mapping high-level to low-level representations of the effects.

\subsection{Reasoning About Side Effects}
\label{sec:reasoning-about-side-effects}

There are many works on reasoning and modelling side effects. 
Here, we cover those that have most directly inspired this paper. 

\paragraph*{Axiomatic Reasoning}\
Gibbons and Hinze \cite{Gibbons11} proposed to reason axiomatically about
programs with effects and provided an axiomatic characterization of
local state semantics. Our earlier work in \cite{Pauwels19} was
directly inspired by their work: we introduced an axiomatic
characterization of global state and used axiomatic reasoning to prove
handling local state with global state correct.  We also provided
models that satisfy the axioms, whereas their paper mistakenly claims
that one model satisfies the local state axioms and that another model
is monadic.
This paper is an extension of \cite{Pauwels19}, but notably, we
depart from the axiomatic reasoning approach; instead we use proof
techniques based on algebraic effects and handlers.

\paragraph*{Algebraic Effects}\
Our formulation of implementing local state with global state is directly 
inspired by the effect handlers approach of \citet{Plotkin09}.
By making the free monad explicit our proofs benefit directly from the induction
principle that Bauer and Pretnar established for effect handler programs
\cite{}.
While Lawvere theories were originally Plotkin's inspiration for studying 
algebraic effects, the effect handlers community has for a long time paid little
attention to them. 
Yet, \citet{LuksicP20}
have investigated a framework for encoding axioms or effect theories in the type
system: the type of an effectful function declares the operators used in the 
function, as well as the equalities that handlers for these operators should
comply with.  The type of a handler indicates which operators it handles and
which equations it complies with.  This allows expressing at the a handles a
higher-level effect in terms of a lower-level one.

\citet{Wu15} first presented fusion as a technique for optimizing compositions
of effect handlers. They use a specific form of fusion known as fold--build
fusion or short-cut fusion~\citep{shortcut}. To enable this kind of fusion they
transform the handler algebras to use the codensity monad as their carrier.
Their approach is not directly usable because it does not fuse non-handler functions,
and we derive simpler algebras (not obfuscated by the condisity monad) than those they do.

More recently \citet{YangW21} have used the fusion approach of \citet{Wu15} (but with
the continuation monad rather than the condensity monad) for reasoning;
they remark that, although handlers are composable, the
semantics of these composed handlers are not always obvious and that
determining the correct order of composition to arrive at a desired
semantics is nontrivial.
They propose a technique based on modular handlers \citep{Schrijvers19} which
considers conditions under which the fusion of these modular handlers
respect not only the laws of each of the handler's algebraic theories,
but also additional interaction laws. Using this technique they
provide succinct proofs of the correctness of local state
handlers, constructed from a fusion of state and nondeterminism
handlers.

\paragraph*{Earlier Versions}\
This paper refines and much expands on two earlier works of the last author.

\cite{Pauwels19} has the same goal as \Cref{sec:local-global}: it uses
the state-restoring version of put to simulate local state with global state.
It differs from this work in that it relies on an axiomatic (i.e., law-based),
as opposed to handler-based, semantics for local and global state. This means
that handler fusion cannot be used as a reasoning technique. Moreover, it uses a rather heavy-handed
syntactic approach to contextual equivalence, and it assumes
that no other effects are invoked.

Another precursor is the work of \cite{Seynaeve20}, which establishes similar results as
those in \Cref{sec:sim-nondet-state}. However, instead of generic definitions for the free 
monad and its fold, they use a specialized free monad for nondeterminism and explicitly recursive
handler functions. As a consequence, their proofs use structural induction rather than fold fusion.
Furthermore, they did not consider other effects either.

\section{Conclusion and Future Work}
\label{sec:conclusion}

We studied the simulations of higher-level effects with lower-level
effects for state and nondeterminism.
We started with the translation from the local-state semantics of
state and nondeterminism to the global-state semantics. Then, we
further showed how to translate nondeterminism to state (a choicepoint stack), and translate
multiple state effects into one state effect. Combining these results,
we can simulate the local-state semantics, a high-level programming
abstraction, with only one low-level state effect.
We also demonstrated that we can simulate the local-state semantics
using a trail stack in a similar style to the
Warren Abstract Machine of Prolog.
We define the effects and their translations with algebraic effects and
effect handlers respectively. These are implemented as free monads and folds in
Haskell.
The correctness of all these translations has been proved using the
technique of program calculation, especially using the fusion properties.

In future work, we would like to explore the potential optimisations
enabled by mutable states. Mutable states fit the global-state
semantics naturally. With mutable states, we can implement more
efficient state update and restoration operations for the simulation
\ensuremath{\Varid{local2global_M}} (\Cref{sec:undo}), as well as more efficient
implementations of the choicepoint stacks and trail stacks used by the
simulations \ensuremath{\Varid{nondet2state}} (\Cref{sec:nondet2state}) and \ensuremath{\Varid{local2trail}}
(\Cref{sec:trail-stack}), respectively.
We would also like to consider the low-level simulations of other
control-flow constructs used in logical programming languages such as
Prolog's \ensuremath{\Varid{cut}} operator for trimming the search space.
Since operators like \ensuremath{\Varid{cut}} are usually implemented as scoped
or higher-order effects~\citep{Pirog18,Wu14,YangPWBS22,BergS23}, we would have to 
adapt our approach accordingly.

\subsection*{Conflicts of Interest}

None.

\bibliographystyle{jfplike}
\bibliography{bibliography}

\clearpage

\appendix

\section{Proofs for Get Laws in Local-State Semantics}
\label{app:local-law}

In this section we prove two equations about the interaction of
nondeterminism and state in the local-state semantics.

\noindent
\Cref{eq:get-right-identity}: \ensuremath{\Varid{get}>\!\!>\Varid{\varnothing}\mathrel{=}\Varid{\varnothing}}

\begin{proof}~
\indentbegin \begin{hscode}\SaveRestoreHook
\column{B}{@{}>{\hspre}l<{\hspost}@{}}%
\column{3}{@{}>{\hspre}l<{\hspost}@{}}%
\column{6}{@{}>{\hspre}l<{\hspost}@{}}%
\column{E}{@{}>{\hspre}l<{\hspost}@{}}%
\>[6]{}\Varid{get}>\!\!>\Varid{\varnothing}{}\<[E]%
\\
\>[3]{}\mathrel{=}\mbox{\commentbegin ~  definition of \ensuremath{(>\!\!>)}   \commentend}{}\<[E]%
\\
\>[3]{}\hsindent{3}{}\<[6]%
\>[6]{}\Varid{get}>\!\!>\!\!=(\lambda \Varid{s}\to \Varid{\varnothing}){}\<[E]%
\\
\>[3]{}\mathrel{=}\mbox{\commentbegin ~  Law (\ref{eq:put-right-identity}): put-right-identity   \commentend}{}\<[E]%
\\
\>[3]{}\hsindent{3}{}\<[6]%
\>[6]{}\Varid{get}>\!\!>\!\!=(\lambda \Varid{s}\to \Varid{put}\;\Varid{s}>\!\!>\Varid{\varnothing}){}\<[E]%
\\
\>[3]{}\mathrel{=}\mbox{\commentbegin ~  Law (\ref{eq:monad-assoc}): associativity of \ensuremath{(>\!\!>)}   \commentend}{}\<[E]%
\\
\>[3]{}\hsindent{3}{}\<[6]%
\>[6]{}(\Varid{get}>\!\!>\!\!=\Varid{put})>\!\!>\Varid{\varnothing}{}\<[E]%
\\
\>[3]{}\mathrel{=}\mbox{\commentbegin ~  Law (\ref{eq:get-put}): get-put   \commentend}{}\<[E]%
\\
\>[3]{}\hsindent{3}{}\<[6]%
\>[6]{}\Varid{\eta}\;()>\!\!>\Varid{\varnothing}{}\<[E]%
\\
\>[3]{}\mathrel{=}\mbox{\commentbegin ~  Law (\ref{eq:monad-ret-bind}): return-bind and definition of \ensuremath{(>\!\!>)}   \commentend}{}\<[E]%
\\
\>[3]{}\hsindent{3}{}\<[6]%
\>[6]{}\Varid{\varnothing}{}\<[E]%
\ColumnHook
\end{hscode}\resethooks
\indentend \end{proof}

\noindent
\Cref{eq:get-left-dist}:
\ensuremath{\Varid{get}>\!\!>\!\!=(\lambda \Varid{x}\to \Varid{k}_{1}\;\Varid{x}\mathbin{\talloblong}\Varid{k}_{2}\;\Varid{x})\mathrel{=}(\Varid{get}>\!\!>\!\!=\Varid{k}_{1})\mathbin{\talloblong}(\Varid{get}>\!\!>\!\!=\Varid{k}_{2})}

\begin{proof}~
\indentbegin \begin{hscode}\SaveRestoreHook
\column{B}{@{}>{\hspre}l<{\hspost}@{}}%
\column{3}{@{}>{\hspre}l<{\hspost}@{}}%
\column{6}{@{}>{\hspre}l<{\hspost}@{}}%
\column{E}{@{}>{\hspre}l<{\hspost}@{}}%
\>[6]{}\Varid{get}>\!\!>\!\!=(\lambda \Varid{x}\to \Varid{k}_{1}\;\Varid{x}\mathbin{\talloblong}\Varid{k}_{2}\;\Varid{x}){}\<[E]%
\\
\>[3]{}\mathrel{=}\mbox{\commentbegin ~  Law (\ref{eq:monad-ret-bind}): return-bind and definition of \ensuremath{(>\!\!>)}   \commentend}{}\<[E]%
\\
\>[3]{}\hsindent{3}{}\<[6]%
\>[6]{}\Varid{\eta}\;()>\!\!>(\Varid{get}>\!\!>\!\!=(\lambda \Varid{x}\to \Varid{k}_{1}\;\Varid{x}\mathbin{\talloblong}\Varid{k}_{2}\;\Varid{x})){}\<[E]%
\\
\>[3]{}\mathrel{=}\mbox{\commentbegin ~  Law (\ref{eq:monad-assoc}): associativity of \ensuremath{(>\!\!>\!\!=)}   \commentend}{}\<[E]%
\\
\>[3]{}\hsindent{3}{}\<[6]%
\>[6]{}(\Varid{\eta}\;()>\!\!>\Varid{get})>\!\!>\!\!=(\lambda \Varid{x}\to \Varid{k}_{1}\;\Varid{x}\mathbin{\talloblong}\Varid{k}_{2}\;\Varid{x}){}\<[E]%
\\
\>[3]{}\mathrel{=}\mbox{\commentbegin ~  Law (\ref{eq:get-put}): get-put   \commentend}{}\<[E]%
\\
\>[3]{}\hsindent{3}{}\<[6]%
\>[6]{}((\Varid{get}>\!\!>\!\!=\Varid{put})>\!\!>\Varid{get})>\!\!>\!\!=(\lambda \Varid{x}\to \Varid{k}_{1}\;\Varid{x}\mathbin{\talloblong}\Varid{k}_{2}\;\Varid{x}){}\<[E]%
\\
\>[3]{}\mathrel{=}\mbox{\commentbegin ~  Law (\ref{eq:monad-assoc}): associativity of \ensuremath{(>\!\!>)}   \commentend}{}\<[E]%
\\
\>[3]{}\hsindent{3}{}\<[6]%
\>[6]{}(\Varid{get}>\!\!>\!\!=(\lambda \Varid{s}\to \Varid{put}\;\Varid{s}>\!\!>\Varid{get}))>\!\!>\!\!=(\lambda \Varid{x}\to \Varid{k}_{1}\;\Varid{x}\mathbin{\talloblong}\Varid{k}_{2}\;\Varid{x}){}\<[E]%
\\
\>[3]{}\mathrel{=}\mbox{\commentbegin ~  Law (\ref{eq:monad-assoc}): associativity of \ensuremath{(>\!\!>\!\!=)}   \commentend}{}\<[E]%
\\
\>[3]{}\hsindent{3}{}\<[6]%
\>[6]{}\Varid{get}>\!\!>\!\!=(\lambda \Varid{s}\to (\lambda \Varid{s}\to \Varid{put}\;\Varid{s}>\!\!>\Varid{get})\;\Varid{s}>\!\!>\!\!=(\lambda \Varid{x}\to \Varid{k}_{1}\;\Varid{x}\mathbin{\talloblong}\Varid{k}_{2}\;\Varid{x})){}\<[E]%
\\
\>[3]{}\mathrel{=}\mbox{\commentbegin ~  function application   \commentend}{}\<[E]%
\\
\>[3]{}\hsindent{3}{}\<[6]%
\>[6]{}\Varid{get}>\!\!>\!\!=(\lambda \Varid{s}\to (\Varid{put}\;\Varid{s}>\!\!>\Varid{get})>\!\!>\!\!=(\lambda \Varid{x}\to \Varid{k}_{1}\;\Varid{x}\mathbin{\talloblong}\Varid{k}_{2}\;\Varid{x})){}\<[E]%
\\
\>[3]{}\mathrel{=}\mbox{\commentbegin ~  Law (\ref{eq:put-get}): put-get   \commentend}{}\<[E]%
\\
\>[3]{}\hsindent{3}{}\<[6]%
\>[6]{}\Varid{get}>\!\!>\!\!=(\lambda \Varid{s}\to (\Varid{put}\;\Varid{s}>\!\!>\Varid{\eta}\;\Varid{s})>\!\!>\!\!=(\lambda \Varid{x}\to \Varid{k}_{1}\;\Varid{x}\mathbin{\talloblong}\Varid{k}_{2}\;\Varid{x})){}\<[E]%
\\
\>[3]{}\mathrel{=}\mbox{\commentbegin ~  Law (\ref{eq:monad-assoc}): associativity of \ensuremath{(>\!\!>)}   \commentend}{}\<[E]%
\\
\>[3]{}\hsindent{3}{}\<[6]%
\>[6]{}\Varid{get}>\!\!>\!\!=(\lambda \Varid{s}\to \Varid{put}\;\Varid{s}>\!\!>(\Varid{\eta}\;\Varid{s}>\!\!>\!\!=(\lambda \Varid{x}\to \Varid{k}_{1}\;\Varid{x}\mathbin{\talloblong}\Varid{k}_{2}\;\Varid{x}))){}\<[E]%
\\
\>[3]{}\mathrel{=}\mbox{\commentbegin ~  Law (\ref{eq:monad-ret-bind}): return-bind and function application   \commentend}{}\<[E]%
\\
\>[3]{}\hsindent{3}{}\<[6]%
\>[6]{}\Varid{get}>\!\!>\!\!=(\lambda \Varid{s}\to \Varid{put}\;\Varid{s}>\!\!>(\Varid{k}_{1}\;\Varid{s}\mathbin{\talloblong}\Varid{k}_{2}\;\Varid{s})){}\<[E]%
\\
\>[3]{}\mathrel{=}\mbox{\commentbegin ~  Law (\ref{eq:put-left-dist}): put-left-distributivity   \commentend}{}\<[E]%
\\
\>[3]{}\hsindent{3}{}\<[6]%
\>[6]{}\Varid{get}>\!\!>\!\!=(\lambda \Varid{s}\to (\Varid{put}\;\Varid{s}>\!\!>\Varid{k}_{1}\;\Varid{s})\mathbin{\talloblong}(\Varid{put}\;\Varid{s}>\!\!>\Varid{k}_{2}\;\Varid{s})){}\<[E]%
\\
\>[3]{}\mathrel{=}\mbox{\commentbegin ~  Law (\ref{eq:monad-ret-bind}): return-bind (twice)  \commentend}{}\<[E]%
\\
\>[3]{}\hsindent{3}{}\<[6]%
\>[6]{}\Varid{get}>\!\!>\!\!=(\lambda \Varid{s}\to (\Varid{put}\;\Varid{s}>\!\!>(\Varid{\eta}\;\Varid{s}>\!\!>\!\!=\Varid{k}_{1}))\mathbin{\talloblong}(\Varid{put}\;\Varid{s}>\!\!>(\Varid{\eta}\;\Varid{s}>\!\!>\!\!=\Varid{k}_{2}))){}\<[E]%
\\
\>[3]{}\mathrel{=}\mbox{\commentbegin ~  Law (\ref{eq:monad-assoc}): associativity of \ensuremath{(>\!\!>)}   \commentend}{}\<[E]%
\\
\>[3]{}\hsindent{3}{}\<[6]%
\>[6]{}\Varid{get}>\!\!>\!\!=(\lambda \Varid{s}\to ((\Varid{put}\;\Varid{s}>\!\!>\Varid{\eta}\;\Varid{s})>\!\!>\!\!=\Varid{k}_{1})\mathbin{\talloblong}((\Varid{put}\;\Varid{s}>\!\!>\Varid{\eta}\;\Varid{s})>\!\!>\!\!=\Varid{k}_{2})){}\<[E]%
\\
\>[3]{}\mathrel{=}\mbox{\commentbegin ~  Law (\ref{eq:put-get}): put-get   \commentend}{}\<[E]%
\\
\>[3]{}\hsindent{3}{}\<[6]%
\>[6]{}\Varid{get}>\!\!>\!\!=(\lambda \Varid{s}\to ((\Varid{put}\;\Varid{s}>\!\!>\Varid{get})>\!\!>\!\!=\Varid{k}_{1})\mathbin{\talloblong}((\Varid{put}\;\Varid{s}>\!\!>\Varid{get})>\!\!>\!\!=\Varid{k}_{2})){}\<[E]%
\\
\>[3]{}\mathrel{=}\mbox{\commentbegin ~  Law (\ref{eq:monad-assoc}): associativity of \ensuremath{(>\!\!>)}   \commentend}{}\<[E]%
\\
\>[3]{}\hsindent{3}{}\<[6]%
\>[6]{}\Varid{get}>\!\!>\!\!=(\lambda \Varid{s}\to (\Varid{put}\;\Varid{s}>\!\!>(\Varid{get}>\!\!>\!\!=\Varid{k}_{1}))\mathbin{\talloblong}(\Varid{put}\;\Varid{s}>\!\!>(\Varid{get}>\!\!>\!\!=\Varid{k}_{2}))){}\<[E]%
\\
\>[3]{}\mathrel{=}\mbox{\commentbegin ~  Law (\ref{eq:put-left-dist}): put-left-distributivity   \commentend}{}\<[E]%
\\
\>[3]{}\hsindent{3}{}\<[6]%
\>[6]{}\Varid{get}>\!\!>\!\!=(\lambda \Varid{s}\to \Varid{put}\;\Varid{s}>\!\!>((\Varid{get}>\!\!>\!\!=\Varid{k}_{1})\mathbin{\talloblong}(\Varid{get}>\!\!>\!\!=\Varid{k}_{2}))){}\<[E]%
\\
\>[3]{}\mathrel{=}\mbox{\commentbegin ~  Law (\ref{eq:monad-assoc}): associativity of \ensuremath{(>\!\!>\!\!=)}   \commentend}{}\<[E]%
\\
\>[3]{}\hsindent{3}{}\<[6]%
\>[6]{}(\Varid{get}>\!\!>\!\!=\Varid{put})>\!\!>((\Varid{get}>\!\!>\!\!=\Varid{k}_{1})\mathbin{\talloblong}(\Varid{get}>\!\!>\!\!=\Varid{k}_{2})){}\<[E]%
\\
\>[3]{}\mathrel{=}\mbox{\commentbegin ~  Law (\ref{eq:get-put}): get-put  \commentend}{}\<[E]%
\\
\>[3]{}\hsindent{3}{}\<[6]%
\>[6]{}\Varid{\eta}\;()>\!\!>((\Varid{get}>\!\!>\!\!=\Varid{k}_{1})\mathbin{\talloblong}(\Varid{get}>\!\!>\!\!=\Varid{k}_{2})){}\<[E]%
\\
\>[3]{}\mathrel{=}\mbox{\commentbegin ~  Law (\ref{eq:monad-ret-bind}): return-bind and definition of \ensuremath{(>\!\!>)}   \commentend}{}\<[E]%
\\
\>[3]{}\hsindent{3}{}\<[6]%
\>[6]{}(\Varid{get}>\!\!>\!\!=\Varid{k}_{1})\mathbin{\talloblong}(\Varid{get}>\!\!>\!\!=\Varid{k}_{2}){}\<[E]%
\ColumnHook
\end{hscode}\resethooks
\indentend \end{proof}

\section{Proofs for Modelling Local State with Global State}
\label{app:local-global}

This section proves the following theorem in \Cref{sec:local2global}.

\localGlobal*

\paragraph*{Preliminary}
It is easy to see that \ensuremath{\Varid{run_{StateT}}\hsdot{\circ }{.}\Varid{h_{State}}} can be fused into a single fold defined as follows:
\indentbegin \begin{hscode}\SaveRestoreHook
\column{B}{@{}>{\hspre}l<{\hspost}@{}}%
\column{3}{@{}>{\hspre}l<{\hspost}@{}}%
\column{5}{@{}>{\hspre}l<{\hspost}@{}}%
\column{10}{@{}>{\hspre}c<{\hspost}@{}}%
\column{10E}{@{}l@{}}%
\column{13}{@{}>{\hspre}l<{\hspost}@{}}%
\column{21}{@{}>{\hspre}l<{\hspost}@{}}%
\column{24}{@{}>{\hspre}l<{\hspost}@{}}%
\column{E}{@{}>{\hspre}l<{\hspost}@{}}%
\>[B]{}\Varid{h_{State1}}\mathbin{::}\Conid{Functor}\;\Varid{f}\Rightarrow \Conid{Free}\;(\Varid{State_{F}}\;\Varid{s}\mathrel{{:}{+}{:}}\Varid{f})\;\Varid{a}\to (\Varid{s}\to \Conid{Free}\;\Varid{f}\;(\Varid{a},\Varid{s})){}\<[E]%
\\
\>[B]{}\Varid{h_{State1}}{}\<[10]%
\>[10]{}\mathrel{=}{}\<[10E]%
\>[13]{}\Varid{fold}\;\Varid{gen_{S}}\;(\Varid{alg_{S}}\mathbin{\#}\Varid{fwd_{S}}){}\<[E]%
\\
\>[B]{}\hsindent{3}{}\<[3]%
\>[3]{}\mathbf{where}{}\<[E]%
\\
\>[3]{}\hsindent{2}{}\<[5]%
\>[5]{}\Varid{gen_{S}}\;\Varid{x}\;{}\<[21]%
\>[21]{}\Varid{s}{}\<[24]%
\>[24]{}\mathrel{=}\Conid{Var}\;(\Varid{x},\Varid{s}){}\<[E]%
\\
\>[3]{}\hsindent{2}{}\<[5]%
\>[5]{}\Varid{alg_{S}}\;(\Conid{Get}\;\Varid{k})\;{}\<[21]%
\>[21]{}\Varid{s}{}\<[24]%
\>[24]{}\mathrel{=}\Varid{k}\;\Varid{s}\;\Varid{s}{}\<[E]%
\\
\>[3]{}\hsindent{2}{}\<[5]%
\>[5]{}\Varid{alg_{S}}\;(\Conid{Put}\;\Varid{s}\;\Varid{k})\;{}\<[21]%
\>[21]{}\anonymous {}\<[24]%
\>[24]{}\mathrel{=}\Varid{k}\;\Varid{s}{}\<[E]%
\\
\>[3]{}\hsindent{2}{}\<[5]%
\>[5]{}\Varid{fwd_{S}}\;\Varid{y}\;{}\<[21]%
\>[21]{}\Varid{s}{}\<[24]%
\>[24]{}\mathrel{=}\Conid{Op}\;(\Varid{fmap}\;(\mathbin{\$}\Varid{s})\;\Varid{y}){}\<[E]%
\ColumnHook
\end{hscode}\resethooks
\indentend For brevity, we use \ensuremath{\Varid{h_{State1}}} to replace \ensuremath{\Varid{run_{StateT}}\hsdot{\circ }{.}\Varid{h_{State}}} in the following proofs.

\subsection{Main Proof Structure}
The main theorem we prove in this section is:
\begin{theorem}\label{eq:local-global}
\ensuremath{\Varid{h_{Global}}\hsdot{\circ }{.}\Varid{local2global}\mathrel{=}\Varid{h_{Local}}}
\end{theorem}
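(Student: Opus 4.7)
The plan is to reduce both sides of \ensuremath{\Varid{h_{Global}}\hsdot{\circ }{.}\Varid{local2global}\mathrel{=}\Varid{h_{Local}}} to a single fold by fold fusion, and then to match the resulting generator, state algebra, nondeterminism algebra and forwarding algebra component-wise. Since \ensuremath{\Varid{h_{Local}}} is already a composition of folds with simple generators, straightforward applications of {\bf fusion-post}~(\ref{eq:fusion-post}) along the chain \ensuremath{\Varid{fmap}\;(\Varid{fmap}\;(\Varid{fmap}\;\Varid{fst})\hsdot{\circ }{.}\Varid{h_{ND+f}})\hsdot{\circ }{.}\Varid{h_{State1}}} should contract the right-hand side into a single fold whose components \ensuremath{(\Varid{gen}_{\Varid{RHS}},\Varid{alg}_{\Varid{RHS}}^{\Varid{S}},\Varid{alg}_{\Varid{RHS}}^{\Varid{ND}},\Varid{fwd}_{\Varid{RHS}})} I can read off by case analysis.

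For the left-hand side I need to fuse \ensuremath{\Varid{h_{Global}}\hsdot{\circ }{.}\Varid{fold}\;\Conid{Var}\;\Varid{alg}}, where \ensuremath{\Varid{alg}} is the algebra of \ensuremath{\Varid{local2global}}. The nondeterminism and forwarding cases are routine, but the \ensuremath{\Conid{Put}} case is exactly where the standard {\bf fusion-post} rule fails: \ensuremath{\Varid{local2global}} replaces \ensuremath{\Conid{Put}\;\Varid{s}\;\Varid{k}} with \ensuremath{\Varid{put_{R}}\;\Varid{s}>\!\!>\Varid{k}}, and the effect of \ensuremath{\Varid{h_{Global}}} on this term depends on \ensuremath{\Varid{k}} restoring the state at the end, a property that \ensuremath{\Varid{fmap}\;\Varid{h_{Global}}\;\Varid{t}} on its own does not expose. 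Therefore I will invoke the stronger {\bf fusion-post'}~(\ref{eq:fusion-post-strong}), which lets me discharge the fusion condition under the assumption that all subterms have already been transformed by \ensuremath{\Varid{local2global}}. This extra \ensuremath{\Varid{fmap}\;\Varid{local2global}} application will be abstracted over throughout the proof so as not to clutter the calculations.

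The hard part will be the key lemma that makes this work: for every \ensuremath{\Varid{t}} in the image of \ensuremath{\Varid{local2global}}, the computation \ensuremath{\Varid{h_{State1}}\;(\Varid{h_{ND+f}}\;\Varid{t})\;\Varid{s}_{0}} leaves the state equal to \ensuremath{\Varid{s}_{0}} at the end of every returned branch. This is the semantic content of the syntactic replacement of every \ensuremath{\Varid{put}} by \ensuremath{\Varid{put_{R}}}, and it is precisely what permits pushing \ensuremath{\Varid{h_{Global}}} past \ensuremath{\Varid{put_{R}}\;\Varid{s}>\!\!>\Varid{k}} into an algebra step that behaves, on the fused carrier \ensuremath{\Varid{s}\to \Conid{Free}\;\Varid{f}\;[\mskip1.5mu \Varid{a}\mskip1.5mu]}, exactly like the local-state \ensuremath{\Conid{Put}} case on the right-hand side. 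I expect the lemma to proceed by induction on the structure of the argument of \ensuremath{\Varid{local2global}}, using equational reasoning with the state and nondeterminism laws from Section~\ref{sec:background} together with the calculation of \ensuremath{\Varid{put_{R}}\;\Varid{s}>\!\!>\Varid{comp}} on page~\pageref{eq:state-restoring-put}.

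Once the lemma is in place, the four equalities \ensuremath{\Varid{gen}_{\Varid{LHS}}\mathrel{=}\Varid{gen}_{\Varid{RHS}}}, \ensuremath{\Varid{alg}_{\Varid{LHS}}^{\Varid{S}}\mathrel{=}\Varid{alg}_{\Varid{RHS}}^{\Varid{S}}}, \ensuremath{\Varid{alg}_{\Varid{LHS}}^{\Varid{ND}}\mathrel{=}\Varid{alg}_{\Varid{RHS}}^{\Varid{ND}}} and \ensuremath{\Varid{fwd}_{\Varid{LHS}}\mathrel{=}\Varid{fwd}_{\Varid{RHS}}} should follow by direct calculation: the generator and the nondeterminism/forwarding cases are purely structural, and the \ensuremath{\Conid{Put}} case collapses to the local-state semantics precisely because the state-restoration lemma makes the side branch of \ensuremath{\Varid{put_{R}}} invisible to the final list of results. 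Tidying these component calculations is routine bookkeeping with the laws of \ensuremath{(+\!\!+)} and of the free monad, and is deferred to \Cref{app:local-global}.
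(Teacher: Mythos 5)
Your plan follows the paper's own proof essentially step for step: fuse both sides into single folds, use {\bf fusion-post'} on the left-hand side precisely because the standard rule cannot handle the \ensuremath{\Conid{Put}} case, establish the state-restoration lemma for terms in the image of \ensuremath{\Varid{local2global}} by structural induction, and then match the four fold components. The approach is correct and the identification of where the naive fusion fails and why the key lemma is needed matches the paper exactly.
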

\begin{proof}
Both the left-hand side and the right-hand side of the equation consist of 
function compositions involving one or more folds.
We apply fold fusion separately on both sides to contract each
into a single fold:
\begin{eqnarray*}
\ensuremath{\Varid{h_{Global}}\hsdot{\circ }{.}\Varid{local2global}} & = & \ensuremath{\Varid{fold}\;\Varid{gen}_{\Varid{LHS}}\;(\Varid{alg}_{\Varid{LHS}}^{\Varid{S}}\mathbin{\#}\Varid{alg}_{\Varid{RHS}}^{\Varid{ND}}\mathbin{\#}\Varid{fwd}_{\Varid{LHS}})} \\
\ensuremath{\Varid{h_{Local}}}& = & \ensuremath{\Varid{fold}\;\Varid{gen}_{\Varid{RHS}}\;(\Varid{alg}_{\Varid{RHS}}^{\Varid{S}}\mathbin{\#}\Varid{alg}_{\Varid{RHS}}^{\Varid{ND}}\mathbin{\#}\Varid{fwd}_{\Varid{RHS}})}
\end{eqnarray*}
We approach this calculationally. That is to say, we do not first postulate
definitions of the unknowns above (\ensuremath{\Varid{alg}_{\Varid{LHS}}^{\Varid{S}}} and so on) and then verify whether
the fusion conditions are satisfied. Instead, we discover the definitions of the unknowns.
We start from the known side of
each fusion condition and perform case analysis on the possible shapes of
input. By simplifying the resulting case-specific expression, and pushing the handler
applications inwards, we end up at a point where we can read off the definition
of the unknown that makes the fusion condition hold for that case.

Finally, we show that both folds are equal by showing that their
corresponding parameters are equal:
\begin{eqnarray*}
\ensuremath{\Varid{gen}_{\Varid{LHS}}} & = & \ensuremath{\Varid{gen}_{\Varid{RHS}}} \\
\ensuremath{\Varid{alg}_{\Varid{LHS}}^{\Varid{S}}} & = & \ensuremath{\Varid{alg}_{\Varid{RHS}}^{\Varid{S}}} \\
\ensuremath{\Varid{alg}_{\Varid{LHS}}^{\Varid{ND}}} & = & \ensuremath{\Varid{alg}_{\Varid{RHS}}^{\Varid{ND}}} \\
\ensuremath{\Varid{fwd}_{\Varid{LHS}}} & = & \ensuremath{\Varid{fwd}_{\Varid{RHS}}}
\end{eqnarray*}

A noteworthy observation is that, for fusing the left-hand side of the equation, we do not use the standard
fusion rule:
\begin{eqnarray*}
    \ensuremath{\Varid{h_{Global}}\hsdot{\circ }{.}\Varid{fold}\;\Conid{Var}\;\Varid{alg}} & = & \ensuremath{\Varid{fold}\;(\Varid{h_{Global}}\hsdot{\circ }{.}\Conid{Var})\;\Varid{alg'}} \\
     \Leftarrow \qquad
   \ensuremath{\Varid{h_{Global}}\hsdot{\circ }{.}\Varid{alg}} & = & \ensuremath{\Varid{alg'}\hsdot{\circ }{.}\Varid{fmap}\;\Varid{h_{Global}}}
\end{eqnarray*}
where \ensuremath{\Varid{local2global}\mathrel{=}\Varid{fold}\;\Conid{Var}\;\Varid{alg}}. The problem is that we will not find an
appropriate \ensuremath{\Varid{alg'}} such that \ensuremath{\Varid{alg'}\;(\Varid{fmap}\;\Varid{h_{Global}}\;\Varid{t})} restores the state for any
\ensuremath{\Varid{t}} of type \ensuremath{(\Varid{State_{F}}\;\Varid{s}\mathrel{{:}{+}{:}}\Varid{Nondet_{F}}\mathrel{{:}{+}{:}}\Varid{f})\;(\Conid{Free}\;(\Varid{State_{F}}\;\Varid{s}\mathrel{{:}{+}{:}}\Conid{NonDetF}\mathrel{{:}{+}{:}}\Varid{f})\;\Varid{a})}. 

Fortunately, we do not need such an \ensuremath{\Varid{alg'}}. As we have already pointed out, we
can assume that the subterms of \ensuremath{\Varid{t}} have already been transformed by
\ensuremath{\Varid{local2global}}, and thus all occurrences of \ensuremath{\Conid{Put}} appear in the \ensuremath{\Varid{put_{R}}}
constellation.

We can incorporate this assumption by using the alternativee fusion rule:
\begin{eqnarray*}
    \ensuremath{\Varid{h_{Global}}\hsdot{\circ }{.}\Varid{fold}\;\Conid{Var}\;\Varid{alg}} & = & \ensuremath{\Varid{fold}\;(\Varid{h_{Global}}\hsdot{\circ }{.}\Conid{Var})\;\Varid{alg'}} \\
     \Leftarrow \qquad
   \ensuremath{\Varid{h_{Global}}\hsdot{\circ }{.}\Varid{alg}\hsdot{\circ }{.}\Varid{fmap}\;\Varid{local2global}} & = & \ensuremath{\Varid{alg'}\hsdot{\circ }{.}\Varid{fmap}\;\Varid{h_{Global}}\hsdot{\circ }{.}\Varid{fmap}\;\Varid{local2global}}
\end{eqnarray*}
The additional \ensuremath{\Varid{fmap}\;\Varid{local2global}} in the condition captures the property that
all the subterms have been transformed by \ensuremath{\Varid{local2global}}.

In order to not clutter the proofs, we abstract everywhere over this additional \ensuremath{\Varid{fmap}\;\Varid{local2global}} application, except
in the one place where we need it. That is the appeal to the key lemma:
\begin{eqnarray*}
& \ensuremath{\Varid{h_{State1}}\;(\Varid{h_{ND+f}}\;(\Varid{(\Leftrightarrow)}\;(\Varid{local2global}\;\Varid{t})))\;\Varid{s}} & \\
& = & \\
& \ensuremath{\mathbf{do}\;(\Varid{x},\anonymous )\leftarrow \Varid{h_{State1}}\;(\Varid{h_{ND+f}}\;(\Varid{(\Leftrightarrow)}\;(\Varid{local2global}\;\Varid{t})))\;\Varid{s};\Varid{\eta}\;(\Varid{x},\Varid{s})} &
\end{eqnarray*}
This expresses that the syntactic transformation \ensuremath{\Varid{local2global}} makes sure
that, despite any temporary changes, the computation \ensuremath{\Varid{t}} restores the state
back to its initial value.

We elaborate each of these steps below.
\end{proof}

\subsection{Fusing the Right-Hand Side}
We calculate as follows:
\indentbegin \begin{hscode}\SaveRestoreHook
\column{B}{@{}>{\hspre}l<{\hspost}@{}}%
\column{5}{@{}>{\hspre}l<{\hspost}@{}}%
\column{7}{@{}>{\hspre}l<{\hspost}@{}}%
\column{10}{@{}>{\hspre}l<{\hspost}@{}}%
\column{E}{@{}>{\hspre}l<{\hspost}@{}}%
\>[5]{}\Varid{h_{Local}}{}\<[E]%
\\
\>[B]{}\mathrel{=}\mbox{\commentbegin ~  definition  \commentend}{}\<[E]%
\\
\>[B]{}\hsindent{5}{}\<[5]%
\>[5]{}\Varid{h_{L}}\hsdot{\circ }{.}\Varid{h_{State1}}{}\<[E]%
\\
\>[5]{}\hsindent{2}{}\<[7]%
\>[7]{} \text{with } {}\<[E]%
\\
\>[7]{}\hsindent{3}{}\<[10]%
\>[10]{}\Varid{h_{L}}\mathbin{::}(\Conid{Functor}\;\Varid{f})\Rightarrow (\Varid{s}\to \Conid{Free}\;(\Varid{Nondet_{F}}\mathrel{{:}{+}{:}}\Varid{f})\;(\Varid{a},\Varid{s}))\to \Varid{s}\to \Conid{Free}\;\Varid{f}\;[\mskip1.5mu \Varid{a}\mskip1.5mu]{}\<[E]%
\\
\>[7]{}\hsindent{3}{}\<[10]%
\>[10]{}\Varid{h_{L}}\mathrel{=}\Varid{fmap}\;(\Varid{fmap}\;(\Varid{fmap}\;\Varid{fst})\hsdot{\circ }{.}\Varid{h_{ND+f}}){}\<[E]%
\\
\>[B]{}\mathrel{=}\mbox{\commentbegin ~  definition of \ensuremath{\Varid{h_{State1}}}   \commentend}{}\<[E]%
\\
\>[B]{}\hsindent{5}{}\<[5]%
\>[5]{}\Varid{h_{L}}\hsdot{\circ }{.}\Varid{fold}\;\Varid{gen_{S}}\;(\Varid{alg_{S}}\mathbin{\#}\Varid{fwd_{S}}){}\<[E]%
\\
\>[B]{}\mathrel{=}\mbox{\commentbegin ~  fold fusion-post (Equation \ref{eq:fusion-post})   \commentend}{}\<[E]%
\\
\>[B]{}\hsindent{5}{}\<[5]%
\>[5]{}\Varid{fold}\;\Varid{gen}_{\Varid{RHS}}\;(\Varid{alg}_{\Varid{RHS}}^{\Varid{S}}\mathbin{\#}\Varid{alg}_{\Varid{RHS}}^{\Varid{ND}}\mathbin{\#}\Varid{fwd}_{\Varid{RHS}}){}\<[E]%
\ColumnHook
\end{hscode}\resethooks
\indentend This last step is valid provided that the fusion conditions are satisfied:
\begin{eqnarray*}
\ensuremath{\Varid{h_{L}}\hsdot{\circ }{.}\Varid{gen_{S}}} & = & \ensuremath{\Varid{gen}_{\Varid{RHS}}} \\
\ensuremath{\Varid{h_{L}}\hsdot{\circ }{.}(\Varid{alg_{S}}\mathbin{\#}\Varid{fwd_{S}})} & = & \ensuremath{(\Varid{alg}_{\Varid{RHS}}^{\Varid{S}}\mathbin{\#}\Varid{alg}_{\Varid{RHS}}^{\Varid{ND}}\mathbin{\#}\Varid{fwd}_{\Varid{RHS}})\hsdot{\circ }{.}\Varid{fmap}\;\Varid{h_{L}}}
\end{eqnarray*}

We calculate for the first fusion condition:\indentbegin \begin{hscode}\SaveRestoreHook
\column{B}{@{}>{\hspre}l<{\hspost}@{}}%
\column{3}{@{}>{\hspre}l<{\hspost}@{}}%
\column{5}{@{}>{\hspre}l<{\hspost}@{}}%
\column{E}{@{}>{\hspre}l<{\hspost}@{}}%
\>[5]{}\Varid{h_{L}}\;(\Varid{gen_{S}}\;\Varid{x}){}\<[E]%
\\
\>[3]{}\mathrel{=}\mbox{\commentbegin ~ definition of \ensuremath{\Varid{gen_{S}}}  \commentend}{}\<[E]%
\\
\>[3]{}\hsindent{2}{}\<[5]%
\>[5]{}\Varid{h_{L}}\;(\lambda \Varid{s}\to \Conid{Var}\;(\Varid{x},\Varid{s})){}\<[E]%
\\
\>[3]{}\mathrel{=}\mbox{\commentbegin ~ definition of \ensuremath{\Varid{h_{L}}}  \commentend}{}\<[E]%
\\
\>[3]{}\hsindent{2}{}\<[5]%
\>[5]{}\Varid{fmap}\;(\Varid{fmap}\;(\Varid{fmap}\;\Varid{fst})\hsdot{\circ }{.}\Varid{h_{ND+f}})\;(\lambda \Varid{s}\to \Conid{Var}\;(\Varid{x},\Varid{s})){}\<[E]%
\\
\>[3]{}\mathrel{=}\mbox{\commentbegin ~ definition of \ensuremath{\Varid{fmap}}  \commentend}{}\<[E]%
\\
\>[3]{}\hsindent{2}{}\<[5]%
\>[5]{}\lambda \Varid{s}\to \Varid{fmap}\;(\Varid{fmap}\;\Varid{fst})\;(\Varid{h_{ND+f}}\;(\Conid{Var}\;(\Varid{x},\Varid{s}))){}\<[E]%
\\
\>[3]{}\mathrel{=}\mbox{\commentbegin ~ definition of \ensuremath{\Varid{h_{ND+f}}}  \commentend}{}\<[E]%
\\
\>[3]{}\hsindent{2}{}\<[5]%
\>[5]{}\lambda \Varid{s}\to \Varid{fmap}\;(\Varid{fmap}\;\Varid{fst})\;(\Conid{Var}\;[\mskip1.5mu (\Varid{x},\Varid{s})\mskip1.5mu]){}\<[E]%
\\
\>[3]{}\mathrel{=}\mbox{\commentbegin ~ definition of \ensuremath{\Varid{fmap}} (twice)  \commentend}{}\<[E]%
\\
\>[3]{}\hsindent{2}{}\<[5]%
\>[5]{}\lambda \Varid{s}\to \Conid{Var}\;[\mskip1.5mu \Varid{x}\mskip1.5mu]{}\<[E]%
\\
\>[3]{}\mathrel{=}\mbox{\commentbegin ~ define \ensuremath{\Varid{gen}_{\Varid{RHS}}\;\Varid{x}\mathrel{=}\lambda \Varid{s}\to \Conid{Var}\;[\mskip1.5mu \Varid{x}\mskip1.5mu]}  \commentend}{}\<[E]%
\\
\>[3]{}\mathrel{=}\Varid{gen}_{\Varid{RHS}}\;\Varid{x}{}\<[E]%
\ColumnHook
\end{hscode}\resethooks
\indentend We conclude that the first fusion condition is satisfied by:
\indentbegin \begin{hscode}\SaveRestoreHook
\column{B}{@{}>{\hspre}l<{\hspost}@{}}%
\column{3}{@{}>{\hspre}l<{\hspost}@{}}%
\column{E}{@{}>{\hspre}l<{\hspost}@{}}%
\>[3]{}\Varid{gen}_{\Varid{RHS}}\mathbin{::}\Conid{Functor}\;\Varid{f}\Rightarrow \Varid{a}\to (\Varid{s}\to \Conid{Free}\;\Varid{f}\;[\mskip1.5mu \Varid{a}\mskip1.5mu]){}\<[E]%
\\
\>[3]{}\Varid{gen}_{\Varid{RHS}}\;\Varid{x}\mathrel{=}\lambda \Varid{s}\to \Conid{Var}\;[\mskip1.5mu \Varid{x}\mskip1.5mu]{}\<[E]%
\ColumnHook
\end{hscode}\resethooks
\indentend The second fusion condition decomposes into two separate conditions:
\begin{eqnarray*}
\ensuremath{\Varid{h_{L}}\hsdot{\circ }{.}\Varid{alg_{S}}} & = & \ensuremath{\Varid{alg}_{\Varid{RHS}}^{\Varid{S}}\hsdot{\circ }{.}\Varid{fmap}\;\Varid{h_{L}}} \\
\ensuremath{\Varid{h_{L}}\hsdot{\circ }{.}\Varid{fwd_{S}}} & = & \ensuremath{(\Varid{alg}_{\Varid{RHS}}^{\Varid{ND}}\mathbin{\#}\Varid{fwd}_{\Varid{RHS}})\hsdot{\circ }{.}\Varid{fmap}\;\Varid{h_{L}}}
\end{eqnarray*}

We calculate for the first subcondition:

\noindent \mbox{\underline{case \ensuremath{\Varid{t}\mathrel{=}\Conid{Get}\;\Varid{k}}}}\indentbegin \begin{hscode}\SaveRestoreHook
\column{B}{@{}>{\hspre}l<{\hspost}@{}}%
\column{3}{@{}>{\hspre}l<{\hspost}@{}}%
\column{5}{@{}>{\hspre}l<{\hspost}@{}}%
\column{E}{@{}>{\hspre}l<{\hspost}@{}}%
\>[5]{}\Varid{h_{L}}\;(\Varid{alg_{S}}\;(\Conid{Get}\;\Varid{k})){}\<[E]%
\\
\>[3]{}\mathrel{=}\mbox{\commentbegin ~  definition of \ensuremath{\Varid{alg_{S}}}  \commentend}{}\<[E]%
\\
\>[3]{}\hsindent{2}{}\<[5]%
\>[5]{}\Varid{h_{L}}\;(\lambda \Varid{s}\to \Varid{k}\;\Varid{s}\;\Varid{s}){}\<[E]%
\\
\>[3]{}\mathrel{=}\mbox{\commentbegin ~  definition of \ensuremath{\Varid{h_{L}}}  \commentend}{}\<[E]%
\\
\>[3]{}\hsindent{2}{}\<[5]%
\>[5]{}\Varid{fmap}\;(\Varid{fmap}\;(\Varid{fmap}\;\Varid{fst})\hsdot{\circ }{.}\Varid{h_{ND+f}})\;(\lambda \Varid{s}\to \Varid{k}\;\Varid{s}\;\Varid{s}){}\<[E]%
\\
\>[3]{}\mathrel{=}\mbox{\commentbegin ~  definition of \ensuremath{\Varid{fmap}}  \commentend}{}\<[E]%
\\
\>[3]{}\hsindent{2}{}\<[5]%
\>[5]{}\lambda \Varid{s}\to \Varid{fmap}\;(\Varid{fmap}\;\Varid{fst})\;(\Varid{h_{ND+f}}\;(\Varid{k}\;\Varid{s}\;\Varid{s})){}\<[E]%
\\
\>[3]{}\mathrel{=}\mbox{\commentbegin ~  beta-expansion (twice)  \commentend}{}\<[E]%
\\
\>[3]{}\mathrel{=}\lambda \Varid{s}\to (\lambda \Varid{s}_{1}\;\Varid{s}_{2}\to \Varid{fmap}\;(\Varid{fmap}\;\Varid{fst})\;(\Varid{h_{ND+f}}\;(\Varid{k}\;\Varid{s}_{2}\;\Varid{s}_{1})))\;\Varid{s}\;\Varid{s}{}\<[E]%
\\
\>[3]{}\mathrel{=}\mbox{\commentbegin ~  definition of \ensuremath{\Varid{fmap}} (twice)  \commentend}{}\<[E]%
\\
\>[3]{}\mathrel{=}\lambda \Varid{s}\to (\Varid{fmap}\;(\Varid{fmap}\;(\Varid{fmap}\;(\Varid{fmap}\;\Varid{fst})\hsdot{\circ }{.}\Varid{h_{ND+f}}))\;(\lambda \Varid{s}_{1}\;\Varid{s}_{2}\to \Varid{k}\;\Varid{s}_{2}\;\Varid{s}_{1}))\;\Varid{s}\;\Varid{s}{}\<[E]%
\\
\>[3]{}\mathrel{=}\mbox{\commentbegin ~  eta-expansion of \ensuremath{\Varid{k}}  \commentend}{}\<[E]%
\\
\>[3]{}\mathrel{=}\lambda \Varid{s}\to (\Varid{fmap}\;(\Varid{fmap}\;(\Varid{fmap}\;(\Varid{fmap}\;\Varid{fst})\hsdot{\circ }{.}\Varid{h_{ND+f}}))\;\Varid{k})\;\Varid{s}\;\Varid{s}{}\<[E]%
\\
\>[3]{}\mathrel{=}\mbox{\commentbegin ~  define \ensuremath{\Varid{alg}_{\Varid{RHS}}^{\Varid{S}}\;(\Conid{Get}\;\Varid{k})\mathrel{=}\lambda \Varid{s}\to \Varid{k}\;\Varid{s}\;\Varid{s}}  \commentend}{}\<[E]%
\\
\>[3]{}\mathrel{=}\Varid{alg}_{\Varid{RHS}}^{\Varid{S}}\;(\Conid{Get}\;(\Varid{fmap}\;(\Varid{fmap}\;(\Varid{fmap}\;(\Varid{fmap}\;\Varid{fst})\hsdot{\circ }{.}\Varid{h_{ND+f}}))\;\Varid{k})){}\<[E]%
\\
\>[3]{}\mathrel{=}\mbox{\commentbegin ~  definition of \ensuremath{\Varid{fmap}}  \commentend}{}\<[E]%
\\
\>[3]{}\mathrel{=}\Varid{alg}_{\Varid{RHS}}^{\Varid{S}}\;(\Varid{fmap}\;(\Varid{fmap}\;(\Varid{fmap}\;(\Varid{fmap}\;\Varid{fst})\hsdot{\circ }{.}\Varid{h_{ND+f}}))\;(\Conid{Get}\;\Varid{k})){}\<[E]%
\\
\>[3]{}\mathrel{=}\mbox{\commentbegin ~  definition of \ensuremath{\Varid{h_{L}}}  \commentend}{}\<[E]%
\\
\>[3]{}\mathrel{=}\Varid{alg}_{\Varid{RHS}}^{\Varid{S}}\;(\Varid{fmap}\;\Varid{h_{L}}\;(\Conid{Get}\;\Varid{k})){}\<[E]%
\ColumnHook
\end{hscode}\resethooks
\indentend \noindent \mbox{\underline{case \ensuremath{\Varid{t}\mathrel{=}\Conid{Put}\;\Varid{s}\;\Varid{k}}}}\indentbegin \begin{hscode}\SaveRestoreHook
\column{B}{@{}>{\hspre}l<{\hspost}@{}}%
\column{3}{@{}>{\hspre}l<{\hspost}@{}}%
\column{5}{@{}>{\hspre}l<{\hspost}@{}}%
\column{E}{@{}>{\hspre}l<{\hspost}@{}}%
\>[5]{}\Varid{h_{L}}\;(\Varid{alg_{S}}\;(\Conid{Put}\;\Varid{s}\;\Varid{k})){}\<[E]%
\\
\>[3]{}\mathrel{=}\mbox{\commentbegin ~  definition of \ensuremath{\Varid{alg_{S}}}  \commentend}{}\<[E]%
\\
\>[3]{}\hsindent{2}{}\<[5]%
\>[5]{}\Varid{h_{L}}\;(\lambda \anonymous \to \Varid{k}\;\Varid{s}){}\<[E]%
\\
\>[3]{}\mathrel{=}\mbox{\commentbegin ~  definition of \ensuremath{\Varid{h_{L}}}  \commentend}{}\<[E]%
\\
\>[3]{}\hsindent{2}{}\<[5]%
\>[5]{}\Varid{fmap}\;(\Varid{fmap}\;(\Varid{fmap}\;\Varid{fst})\hsdot{\circ }{.}\Varid{h_{ND+f}})\;(\lambda \anonymous \to \Varid{k}\;\Varid{s}){}\<[E]%
\\
\>[3]{}\mathrel{=}\mbox{\commentbegin ~  definition of \ensuremath{\Varid{fmap}}  \commentend}{}\<[E]%
\\
\>[3]{}\hsindent{2}{}\<[5]%
\>[5]{}\lambda \anonymous \to \Varid{fmap}\;(\Varid{fmap}\;\Varid{fst})\;(\Varid{h_{ND+f}}\;(\Varid{k}\;\Varid{s})){}\<[E]%
\\
\>[3]{}\mathrel{=}\mbox{\commentbegin ~  beta-expansion  \commentend}{}\<[E]%
\\
\>[3]{}\mathrel{=}\lambda \anonymous \to (\lambda \Varid{s}_{1}\to \Varid{fmap}\;(\Varid{fmap}\;\Varid{fst})\;(\Varid{h_{ND+f}}\;(\Varid{k}\;\Varid{s}_{1})))\;\Varid{s}{}\<[E]%
\\
\>[3]{}\mathrel{=}\mbox{\commentbegin ~  definition of \ensuremath{\Varid{fmap}}  \commentend}{}\<[E]%
\\
\>[3]{}\mathrel{=}\lambda \anonymous \to (\Varid{fmap}\;(\Varid{fmap}\;(\Varid{fmap}\;\Varid{fst})\hsdot{\circ }{.}\Varid{h_{ND+f}})\;(\lambda \Varid{s}_{1}\to \Varid{k}\;\Varid{s}_{1}))\;\Varid{s}{}\<[E]%
\\
\>[3]{}\mathrel{=}\mbox{\commentbegin ~  eta-expansion of \ensuremath{\Varid{k}}  \commentend}{}\<[E]%
\\
\>[3]{}\mathrel{=}\lambda \anonymous \to (\Varid{fmap}\;(\Varid{fmap}\;(\Varid{fmap}\;\Varid{fst})\hsdot{\circ }{.}\Varid{h_{ND+f}})\;\Varid{k})\;\Varid{s}{}\<[E]%
\\
\>[3]{}\mathrel{=}\mbox{\commentbegin ~  define \ensuremath{\Varid{alg}_{\Varid{RHS}}^{\Varid{S}}\;(\Conid{Pus}\;\Varid{s}\;\Varid{k})\mathrel{=}\mathbin{\char92 \char95 }\to \Varid{k}\;\Varid{s}}  \commentend}{}\<[E]%
\\
\>[3]{}\mathrel{=}\Varid{alg}_{\Varid{RHS}}^{\Varid{S}}\;(\Conid{Put}\;\Varid{s}\;(\Varid{fmap}\;(\Varid{fmap}\;(\Varid{fmap}\;\Varid{fst})\hsdot{\circ }{.}\Varid{h_{ND+f}})\;\Varid{k})){}\<[E]%
\\
\>[3]{}\mathrel{=}\mbox{\commentbegin ~  definition of \ensuremath{\Varid{fmap}}  \commentend}{}\<[E]%
\\
\>[3]{}\mathrel{=}\Varid{alg}_{\Varid{RHS}}^{\Varid{S}}\;(\Varid{fmap}\;(\Varid{fmap}\;(\Varid{fmap}\;\Varid{fst})\hsdot{\circ }{.}\Varid{h_{ND+f}}))\;(\Conid{Put}\;\Varid{s}\;\Varid{k})){}\<[E]%
\\
\>[3]{}\mathrel{=}\mbox{\commentbegin ~  definition of \ensuremath{\Varid{h_{L}}}  \commentend}{}\<[E]%
\\
\>[3]{}\mathrel{=}\Varid{alg}_{\Varid{RHS}}^{\Varid{S}}\;(\Varid{fmap}\;\Varid{h_{L}}\;(\Conid{Put}\;\Varid{s}\;\Varid{k})){}\<[E]%
\ColumnHook
\end{hscode}\resethooks
\indentend We conclude that the first subcondition is met by taking:
\indentbegin \begin{hscode}\SaveRestoreHook
\column{B}{@{}>{\hspre}l<{\hspost}@{}}%
\column{3}{@{}>{\hspre}l<{\hspost}@{}}%
\column{22}{@{}>{\hspre}l<{\hspost}@{}}%
\column{E}{@{}>{\hspre}l<{\hspost}@{}}%
\>[3]{}\Varid{alg}_{\Varid{RHS}}^{\Varid{S}}\mathbin{::}\Conid{Functor}\;\Varid{f}\Rightarrow \Varid{State_{F}}\;\Varid{s}\;(\Varid{s}\to \Conid{Free}\;\Varid{f}\;[\mskip1.5mu \Varid{a}\mskip1.5mu])\to (\Varid{s}\to \Conid{Free}\;\Varid{f}\;[\mskip1.5mu \Varid{a}\mskip1.5mu]){}\<[E]%
\\
\>[3]{}\Varid{alg}_{\Varid{RHS}}^{\Varid{S}}\;(\Conid{Get}\;\Varid{k}){}\<[22]%
\>[22]{}\mathrel{=}\lambda \Varid{s}\to \Varid{k}\;\Varid{s}\;\Varid{s}{}\<[E]%
\\
\>[3]{}\Varid{alg}_{\Varid{RHS}}^{\Varid{S}}\;(\Conid{Put}\;\Varid{s}\;\Varid{k}){}\<[22]%
\>[22]{}\mathrel{=}\lambda \anonymous \to \Varid{k}\;\Varid{s}{}\<[E]%
\ColumnHook
\end{hscode}\resethooks
\indentend The second subcondition can be split up in two further subconditions:
\begin{eqnarray*}
\ensuremath{\Varid{h_{L}}\hsdot{\circ }{.}\Varid{fwd_{S}}\hsdot{\circ }{.}\Conid{Inl}}& = & \ensuremath{\Varid{alg}_{\Varid{RHS}}^{\Varid{ND}}\hsdot{\circ }{.}\Varid{fmap}\;\Varid{h_{L}}} \\
\ensuremath{\Varid{h_{L}}\hsdot{\circ }{.}\Varid{fwd_{S}}\hsdot{\circ }{.}\Conid{Inr}}& = & \ensuremath{\Varid{fwd}_{\Varid{RHS}}\hsdot{\circ }{.}\Varid{fmap}\;\Varid{h_{L}}}
\end{eqnarray*}

For the first of these, we calculate:
\indentbegin \begin{hscode}\SaveRestoreHook
\column{B}{@{}>{\hspre}l<{\hspost}@{}}%
\column{3}{@{}>{\hspre}l<{\hspost}@{}}%
\column{5}{@{}>{\hspre}l<{\hspost}@{}}%
\column{E}{@{}>{\hspre}l<{\hspost}@{}}%
\>[5]{}\Varid{h_{L}}\;(\Varid{fwd_{S}}\;(\Conid{Inl}\;\Varid{op})){}\<[E]%
\\
\>[3]{}\mathrel{=}\mbox{\commentbegin ~ definition of \ensuremath{\Varid{fwd_{S}}}  \commentend}{}\<[E]%
\\
\>[3]{}\hsindent{2}{}\<[5]%
\>[5]{}\Varid{h_{L}}\;(\lambda \Varid{s}\to \Conid{Op}\;(\Varid{fmap}\;(\mathbin{\$}\Varid{s})\;(\Conid{Inl}\;\Varid{op}))){}\<[E]%
\\
\>[3]{}\mathrel{=}\mbox{\commentbegin ~ definition of \ensuremath{\Varid{fmap}}  \commentend}{}\<[E]%
\\
\>[3]{}\hsindent{2}{}\<[5]%
\>[5]{}\Varid{h_{L}}\;(\lambda \Varid{s}\to \Conid{Op}\;(\Conid{Inl}\;(\Varid{fmap}\;(\mathbin{\$}\Varid{s})\;\Varid{op}))){}\<[E]%
\\
\>[3]{}\mathrel{=}\mbox{\commentbegin ~ definition of \ensuremath{\Varid{h_{L}}}  \commentend}{}\<[E]%
\\
\>[3]{}\hsindent{2}{}\<[5]%
\>[5]{}\Varid{fmap}\;(\Varid{fmap}\;(\Varid{fmap}\;\Varid{fst})\hsdot{\circ }{.}\Varid{h_{ND+f}})\;(\lambda \Varid{s}\to \Conid{Op}\;(\Conid{Inl}\;(\Varid{fmap}\;(\mathbin{\$}\Varid{s})\;\Varid{op}))){}\<[E]%
\\
\>[3]{}\mathrel{=}\mbox{\commentbegin ~ definition of \ensuremath{\Varid{fmap}}  \commentend}{}\<[E]%
\\
\>[3]{}\hsindent{2}{}\<[5]%
\>[5]{}\lambda \Varid{s}\to \Varid{fmap}\;(\Varid{fmap}\;\Varid{fst})\;(\Varid{h_{ND+f}}\;(\Conid{Op}\;(\Conid{Inl}\;(\Varid{fmap}\;(\mathbin{\$}\Varid{s})\;\Varid{op})))){}\<[E]%
\\
\>[3]{}\mathrel{=}\mbox{\commentbegin ~ definition of \ensuremath{\Varid{h_{ND+f}}}  \commentend}{}\<[E]%
\\
\>[3]{}\hsindent{2}{}\<[5]%
\>[5]{}\lambda \Varid{s}\to \Varid{fmap}\;(\Varid{fmap}\;\Varid{fst})\;(\Varid{alg_{ND+f}}\;(\Varid{fmap}\;\Varid{h_{ND+f}}\;(\Varid{fmap}\;(\mathbin{\$}\Varid{s})\;\Varid{op}))){}\<[E]%
\ColumnHook
\end{hscode}\resethooks
\indentend We split on \ensuremath{\Varid{op}}:

\noindent \mbox{\underline{case \ensuremath{\Varid{op}\mathrel{=}\Conid{Fail}}}}\indentbegin \begin{hscode}\SaveRestoreHook
\column{B}{@{}>{\hspre}l<{\hspost}@{}}%
\column{3}{@{}>{\hspre}l<{\hspost}@{}}%
\column{5}{@{}>{\hspre}l<{\hspost}@{}}%
\column{E}{@{}>{\hspre}l<{\hspost}@{}}%
\>[5]{}\lambda \Varid{s}\to \Varid{fmap}\;(\Varid{fmap}\;\Varid{fst})\;(\Varid{alg_{ND+f}}\;(\Varid{fmap}\;\Varid{h_{ND+f}}\;(\Varid{fmap}\;(\mathbin{\$}\Varid{s})\;\Conid{Fail}))){}\<[E]%
\\
\>[3]{}\mathrel{=}\mbox{\commentbegin ~ defintion of \ensuremath{\Varid{fmap}} (twice)  \commentend}{}\<[E]%
\\
\>[3]{}\hsindent{2}{}\<[5]%
\>[5]{}\lambda \Varid{s}\to \Varid{fmap}\;(\Varid{fmap}\;\Varid{fst})\;(\Varid{alg_{ND+f}}\;\Conid{Fail}){}\<[E]%
\\
\>[3]{}\mathrel{=}\mbox{\commentbegin ~ definition of \ensuremath{\Varid{alg_{ND+f}}}  \commentend}{}\<[E]%
\\
\>[3]{}\hsindent{2}{}\<[5]%
\>[5]{}\lambda \Varid{s}\to \Varid{fmap}\;(\Varid{fmap}\;\Varid{fst})\;(\Conid{Var}\;[\mskip1.5mu \mskip1.5mu]){}\<[E]%
\\
\>[3]{}\mathrel{=}\mbox{\commentbegin ~ definition of \ensuremath{\Varid{fmap}} (twice)  \commentend}{}\<[E]%
\\
\>[3]{}\hsindent{2}{}\<[5]%
\>[5]{}\lambda \Varid{s}\to \Conid{Var}\;[\mskip1.5mu \mskip1.5mu]{}\<[E]%
\\
\>[3]{}\mathrel{=}\mbox{\commentbegin ~ define \ensuremath{\Varid{alg}_{\Varid{RHS}}^{\Varid{ND}}\;\Conid{Fail}\mathrel{=}\lambda \Varid{s}\to \Conid{Var}\;[\mskip1.5mu \mskip1.5mu]}  \commentend}{}\<[E]%
\\
\>[3]{}\hsindent{2}{}\<[5]%
\>[5]{}\Varid{alg}_{\Varid{RHS}}^{\Varid{ND}}\;\Conid{Fail}{}\<[E]%
\\
\>[3]{}\mathrel{=}\mbox{\commentbegin  definition fo \ensuremath{\Varid{fmap}}  \commentend}{}\<[E]%
\\
\>[3]{}\hsindent{2}{}\<[5]%
\>[5]{}\Varid{alg}_{\Varid{RHS}}^{\Varid{ND}}\;(\Varid{fmap}\;\Varid{h_{L}}\;\Varid{fail}){}\<[E]%
\ColumnHook
\end{hscode}\resethooks
\indentend \noindent \mbox{\underline{case \ensuremath{\Varid{op}\mathrel{=}\Conid{Or}\;\Varid{p}\;\Varid{q}}}}\indentbegin \begin{hscode}\SaveRestoreHook
\column{B}{@{}>{\hspre}l<{\hspost}@{}}%
\column{3}{@{}>{\hspre}l<{\hspost}@{}}%
\column{5}{@{}>{\hspre}l<{\hspost}@{}}%
\column{E}{@{}>{\hspre}l<{\hspost}@{}}%
\>[5]{}\lambda \Varid{s}\to \Varid{fmap}\;(\Varid{fmap}\;\Varid{fst})\;(\Varid{alg_{ND+f}}\;(\Varid{fmap}\;\Varid{h_{ND+f}}\;(\Varid{fmap}\;(\mathbin{\$}\Varid{s})\;(\Conid{Or}\;\Varid{p}\;\Varid{q})))){}\<[E]%
\\
\>[3]{}\mathrel{=}\mbox{\commentbegin ~ defintion of \ensuremath{\Varid{fmap}} (twice)  \commentend}{}\<[E]%
\\
\>[3]{}\hsindent{2}{}\<[5]%
\>[5]{}\lambda \Varid{s}\to \Varid{fmap}\;(\Varid{fmap}\;\Varid{fst})\;(\Varid{alg_{ND+f}}\;(\Conid{Or}\;(\Varid{h_{ND+f}}\;(\Varid{p}\;\Varid{s}))\;(\Varid{h_{ND+f}}\;(\Varid{q}\;\Varid{s})))){}\<[E]%
\\
\>[3]{}\mathrel{=}\mbox{\commentbegin ~ definition of \ensuremath{\Varid{alg_{ND+f}}}  \commentend}{}\<[E]%
\\
\>[3]{}\hsindent{2}{}\<[5]%
\>[5]{}\lambda \Varid{s}\to \Varid{fmap}\;(\Varid{fmap}\;\Varid{fst})\;(\Varid{liftM2}\;(+\!\!+)\;(\Varid{h_{ND+f}}\;(\Varid{p}\;\Varid{s}))\;(\Varid{h_{ND+f}}\;(\Varid{q}\;\Varid{s}))){}\<[E]%
\\
\>[3]{}\mathrel{=}\mbox{\commentbegin ~ Lemma~\ref{eq:liftM2-fst-comm}  \commentend}{}\<[E]%
\\
\>[3]{}\hsindent{2}{}\<[5]%
\>[5]{}\lambda \Varid{s}\to \Varid{liftM2}\;(+\!\!+)\;(\Varid{fmap}\;(\Varid{fmap}\;\Varid{fst})\;(\Varid{h_{ND+f}}\;(\Varid{p}\;\Varid{s})))\;(\Varid{fmap}\;(\Varid{fmap}\;\Varid{fst})\;(\Varid{h_{ND+f}}\;(\Varid{q}\;\Varid{s}))){}\<[E]%
\\
\>[3]{}\mathrel{=}\mbox{\commentbegin ~ define \ensuremath{\Varid{alg}_{\Varid{RHS}}^{\Varid{ND}}\;(\Conid{Or}\;\Varid{p}\;\Varid{q})\mathrel{=}\lambda \Varid{s}\to \Varid{liftM2}\;(+\!\!+)\;(\Varid{p}\;\Varid{s})\;(\Varid{q}\;\Varid{s})}  \commentend}{}\<[E]%
\\
\>[3]{}\hsindent{2}{}\<[5]%
\>[5]{}\Varid{alg}_{\Varid{RHS}}^{\Varid{ND}}\;(\Conid{Or}\;(\Varid{fmap}\;(\Varid{fmap}\;\Varid{fst})\hsdot{\circ }{.}\Varid{h_{ND+f}}\hsdot{\circ }{.}\Varid{p})\;(\Varid{fmap}\;(\Varid{fmap}\;\Varid{fst})\hsdot{\circ }{.}\Varid{h_{ND+f}}\hsdot{\circ }{.}\Varid{q})){}\<[E]%
\\
\>[3]{}\mathrel{=}\mbox{\commentbegin ~ defintion of \ensuremath{\Varid{fmap}} (twice)  \commentend}{}\<[E]%
\\
\>[3]{}\hsindent{2}{}\<[5]%
\>[5]{}\Varid{alg}_{\Varid{RHS}}^{\Varid{ND}}\;(\Varid{fmap}\;(\Varid{fmap}\;(\Varid{fmap}\;(\Varid{fmap}\;\Varid{fst})\hsdot{\circ }{.}\Varid{h_{ND+f}}))\;(\Conid{Or}\;\Varid{p}\;\Varid{q})){}\<[E]%
\\
\>[3]{}\mathrel{=}\mbox{\commentbegin ~ defintion of \ensuremath{\Varid{h_{L}}}  \commentend}{}\<[E]%
\\
\>[3]{}\hsindent{2}{}\<[5]%
\>[5]{}\Varid{alg}_{\Varid{RHS}}^{\Varid{ND}}\;(\Varid{fmap}\;\Varid{h_{L}}\;(\Conid{Or}\;\Varid{p}\;\Varid{q})){}\<[E]%
\ColumnHook
\end{hscode}\resethooks
\indentend From this we conclude that the definition of \ensuremath{\Varid{alg}_{\Varid{RHS}}^{\Varid{ND}}} should be:
\indentbegin \begin{hscode}\SaveRestoreHook
\column{B}{@{}>{\hspre}l<{\hspost}@{}}%
\column{3}{@{}>{\hspre}l<{\hspost}@{}}%
\column{22}{@{}>{\hspre}l<{\hspost}@{}}%
\column{E}{@{}>{\hspre}l<{\hspost}@{}}%
\>[3]{}\Varid{alg}_{\Varid{RHS}}^{\Varid{ND}}\mathbin{::}\Conid{Functor}\;\Varid{f}\Rightarrow \Varid{Nondet_{F}}\;(\Varid{s}\to \Conid{Free}\;\Varid{f}\;[\mskip1.5mu \Varid{a}\mskip1.5mu])\to (\Varid{s}\to \Conid{Free}\;\Varid{f}\;[\mskip1.5mu \Varid{a}\mskip1.5mu]){}\<[E]%
\\
\>[3]{}\Varid{alg}_{\Varid{RHS}}^{\Varid{ND}}\;\Conid{Fail}{}\<[22]%
\>[22]{}\mathrel{=}\lambda \Varid{s}\to \Conid{Var}\;[\mskip1.5mu \mskip1.5mu]{}\<[E]%
\\
\>[3]{}\Varid{alg}_{\Varid{RHS}}^{\Varid{ND}}\;(\Conid{Or}\;\Varid{p}\;\Varid{q}){}\<[22]%
\>[22]{}\mathrel{=}\lambda \Varid{s}\to \Varid{liftM2}\;(+\!\!+)\;(\Varid{p}\;\Varid{s})\;(\Varid{q}\;\Varid{s}){}\<[E]%
\ColumnHook
\end{hscode}\resethooks
\indentend For the last subcondition, we calculate:
\indentbegin \begin{hscode}\SaveRestoreHook
\column{B}{@{}>{\hspre}l<{\hspost}@{}}%
\column{3}{@{}>{\hspre}l<{\hspost}@{}}%
\column{5}{@{}>{\hspre}l<{\hspost}@{}}%
\column{E}{@{}>{\hspre}l<{\hspost}@{}}%
\>[5]{}\Varid{h_{L}}\;(\Varid{fwd_{S}}\;(\Conid{Inr}\;\Varid{op})){}\<[E]%
\\
\>[3]{}\mathrel{=}\mbox{\commentbegin ~ definition of \ensuremath{\Varid{fwd_{S}}}  \commentend}{}\<[E]%
\\
\>[3]{}\hsindent{2}{}\<[5]%
\>[5]{}\Varid{h_{L}}\;(\lambda \Varid{s}\to \Conid{Op}\;(\Varid{fmap}\;(\mathbin{\$}\Varid{s})\;(\Conid{Inr}\;\Varid{op}))){}\<[E]%
\\
\>[3]{}\mathrel{=}\mbox{\commentbegin ~ definition of \ensuremath{\Varid{fmap}}  \commentend}{}\<[E]%
\\
\>[3]{}\hsindent{2}{}\<[5]%
\>[5]{}\Varid{h_{L}}\;(\lambda \Varid{s}\to \Conid{Op}\;(\Conid{Inr}\;(\Varid{fmap}\;(\mathbin{\$}\Varid{s})\;\Varid{op}))){}\<[E]%
\\
\>[3]{}\mathrel{=}\mbox{\commentbegin ~ definition of \ensuremath{\Varid{h_{L}}}  \commentend}{}\<[E]%
\\
\>[3]{}\hsindent{2}{}\<[5]%
\>[5]{}\Varid{fmap}\;(\Varid{fmap}\;(\Varid{fmap}\;\Varid{fst})\hsdot{\circ }{.}\Varid{h_{ND+f}})\;(\lambda \Varid{s}\to \Conid{Op}\;(\Conid{Inr}\;(\Varid{fmap}\;(\mathbin{\$}\Varid{s})\;\Varid{op}))){}\<[E]%
\\
\>[3]{}\mathrel{=}\mbox{\commentbegin ~ definition of \ensuremath{\Varid{fmap}}  \commentend}{}\<[E]%
\\
\>[3]{}\hsindent{2}{}\<[5]%
\>[5]{}\lambda \Varid{s}\to \Varid{fmap}\;(\Varid{fmap}\;\Varid{fst})\;(\Varid{h_{ND+f}}\;(\Conid{Op}\;(\Conid{Inr}\;(\Varid{fmap}\;(\mathbin{\$}\Varid{s})\;\Varid{op})))){}\<[E]%
\\
\>[3]{}\mathrel{=}\mbox{\commentbegin ~ definition of \ensuremath{\Varid{h_{ND+f}}}  \commentend}{}\<[E]%
\\
\>[3]{}\hsindent{2}{}\<[5]%
\>[5]{}\lambda \Varid{s}\to \Varid{fmap}\;(\Varid{fmap}\;\Varid{fst})\;(\Varid{fwd_{ND+f}}\;(\Varid{fmap}\;\Varid{h_{ND+f}}\;(\Varid{fmap}\;(\mathbin{\$}\Varid{s})\;\Varid{op}))){}\<[E]%
\\
\>[3]{}\mathrel{=}\mbox{\commentbegin ~ definition of \ensuremath{\Varid{fwd_{ND+f}}}  \commentend}{}\<[E]%
\\
\>[3]{}\hsindent{2}{}\<[5]%
\>[5]{}\lambda \Varid{s}\to \Varid{fmap}\;(\Varid{fmap}\;\Varid{fst})\;(\Conid{Op}\;(\Varid{fmap}\;\Varid{h_{ND+f}}\;(\Varid{fmap}\;(\mathbin{\$}\Varid{s})\;\Varid{op}))){}\<[E]%
\\
\>[3]{}\mathrel{=}\mbox{\commentbegin ~ definition of \ensuremath{\Varid{fmap}}  \commentend}{}\<[E]%
\\
\>[3]{}\hsindent{2}{}\<[5]%
\>[5]{}\lambda \Varid{s}\to \Conid{Op}\;(\Varid{fmap}\;(\Varid{fmap}\;(\Varid{fmap}\;\Varid{fst}))\;(\Varid{fmap}\;\Varid{h_{ND+f}}\;(\Varid{fmap}\;(\mathbin{\$}\Varid{s})\;\Varid{op}))){}\<[E]%
\\
\>[3]{}\mathrel{=}\mbox{\commentbegin ~ definition of \ensuremath{\Varid{h_{L}}}  \commentend}\mid {}\<[E]%
\\
\>[3]{}\hsindent{2}{}\<[5]%
\>[5]{}\lambda \Varid{s}\to \Conid{Op}\;(\Varid{h_{L}}\;(\Varid{fmap}\;(\mathbin{\$}\Varid{s})\;\Varid{op})){}\<[E]%
\\
\>[3]{}\mathrel{=}\mbox{\commentbegin ~ \Cref{eq:comm-app-fmap}  \commentend}{}\<[E]%
\\
\>[3]{}\hsindent{2}{}\<[5]%
\>[5]{}\lambda \Varid{s}\to \Conid{Op}\;(\Varid{fmap}\;(\mathbin{\$}\Varid{s})\;(\Varid{fmap}\;\Varid{h_{L}}\;\Varid{op})){}\<[E]%
\\
\>[3]{}\mathrel{=}\mbox{\commentbegin ~ define \ensuremath{\Varid{fwd}_{\Varid{RHS}}\;\Varid{op}\mathrel{=}\lambda \Varid{s}\to \Conid{Op}\;(\Varid{fmap}\;(\mathbin{\$}\Varid{s})\;\Varid{op})}  \commentend}{}\<[E]%
\\
\>[3]{}\hsindent{2}{}\<[5]%
\>[5]{}\Varid{fwd}_{\Varid{RHS}}\;(\Varid{fmap}\;\Varid{h_{L}}\;\Varid{op}){}\<[E]%
\ColumnHook
\end{hscode}\resethooks
\indentend From this we conclude that the definition of \ensuremath{\Varid{fwd}_{\Varid{RHS}}} should be:
\indentbegin \begin{hscode}\SaveRestoreHook
\column{B}{@{}>{\hspre}l<{\hspost}@{}}%
\column{3}{@{}>{\hspre}l<{\hspost}@{}}%
\column{E}{@{}>{\hspre}l<{\hspost}@{}}%
\>[3]{}\Varid{fwd}_{\Varid{RHS}}\mathbin{::}\Conid{Functor}\;\Varid{f}\Rightarrow \Varid{f}\;(\Varid{s}\to \Conid{Free}\;\Varid{f}\;[\mskip1.5mu \Varid{a}\mskip1.5mu])\to (\Varid{s}\to \Conid{Free}\;\Varid{f}\;[\mskip1.5mu \Varid{a}\mskip1.5mu]){}\<[E]%
\\
\>[3]{}\Varid{fwd}_{\Varid{RHS}}\;\Varid{op}\mathrel{=}\lambda \Varid{s}\to \Conid{Op}\;(\Varid{fmap}\;(\mathbin{\$}\Varid{s})\;\Varid{op}){}\<[E]%
\ColumnHook
\end{hscode}\resethooks
\indentend \subsection{Fusing the Left-Hand Side}

We proceed in the same fashion with fusing left-hand side,
discovering the definitions that we need to satisfy the fusion
condition.

We calculate as follows:
\indentbegin \begin{hscode}\SaveRestoreHook
\column{B}{@{}>{\hspre}l<{\hspost}@{}}%
\column{5}{@{}>{\hspre}l<{\hspost}@{}}%
\column{7}{@{}>{\hspre}l<{\hspost}@{}}%
\column{9}{@{}>{\hspre}l<{\hspost}@{}}%
\column{29}{@{}>{\hspre}l<{\hspost}@{}}%
\column{E}{@{}>{\hspre}l<{\hspost}@{}}%
\>[5]{}\Varid{h_{Global}}\hsdot{\circ }{.}\Varid{local2global}{}\<[E]%
\\
\>[B]{}\mathrel{=}\mbox{\commentbegin ~  definition of \ensuremath{\Varid{local2global}}  \commentend}{}\<[E]%
\\
\>[B]{}\hsindent{5}{}\<[5]%
\>[5]{}\Varid{h_{Global}}\hsdot{\circ }{.}\Varid{fold}\;\Conid{Var}\;\Varid{alg}{}\<[E]%
\\
\>[5]{}\hsindent{2}{}\<[7]%
\>[7]{}\mathbf{where}{}\<[E]%
\\
\>[7]{}\hsindent{2}{}\<[9]%
\>[9]{}\Varid{alg}\;(\Conid{Inl}\;(\Conid{Put}\;\Varid{t}\;\Varid{k}))\mathrel{=}\Varid{put_{R}}\;\Varid{t}>\!\!>\Varid{k}{}\<[E]%
\\
\>[7]{}\hsindent{2}{}\<[9]%
\>[9]{}\Varid{alg}\;\Varid{p}{}\<[29]%
\>[29]{}\mathrel{=}\Conid{Op}\;\Varid{p}{}\<[E]%
\\
\>[B]{}\mathrel{=}\mbox{\commentbegin ~  fold fusion-post (Equation \ref{eq:fusion-post})   \commentend}{}\<[E]%
\\
\>[B]{}\hsindent{5}{}\<[5]%
\>[5]{}\Varid{fold}\;\Varid{gen}_{\Varid{LHS}}\;(\Varid{alg}_{\Varid{LHS}}^{\Varid{S}}\mathbin{\#}\Varid{alg}_{\Varid{LHS}}^{\Varid{ND}}\mathbin{\#}\Varid{fwd}_{\Varid{LHS}}){}\<[E]%
\ColumnHook
\end{hscode}\resethooks
\indentend 
This last step is valid provided that the fusion conditions are satisfied:
\begin{eqnarray*}
\ensuremath{\Varid{h_{Global}}\hsdot{\circ }{.}\Conid{Var}} & = & \ensuremath{\Varid{gen}_{\Varid{LHS}}} \\
\ensuremath{\Varid{h_{Global}}\hsdot{\circ }{.}\Varid{alg}} & = & \ensuremath{(\Varid{alg}_{\Varid{LHS}}^{\Varid{S}}\mathbin{\#}\Varid{alg}_{\Varid{LHS}}^{\Varid{ND}}\mathbin{\#}\Varid{fwd}_{\Varid{LHS}})\hsdot{\circ }{.}\Varid{fmap}\;\Varid{h_{Global}}}
\end{eqnarray*}

We calculate for the first fusion condition:\indentbegin \begin{hscode}\SaveRestoreHook
\column{B}{@{}>{\hspre}l<{\hspost}@{}}%
\column{3}{@{}>{\hspre}l<{\hspost}@{}}%
\column{5}{@{}>{\hspre}l<{\hspost}@{}}%
\column{E}{@{}>{\hspre}l<{\hspost}@{}}%
\>[5]{}\Varid{h_{Global}}\;(\Conid{Var}\;\Varid{x}){}\<[E]%
\\
\>[3]{}\mathrel{=}\mbox{\commentbegin ~ definition of \ensuremath{\Varid{h_{Global}}}  \commentend}{}\<[E]%
\\
\>[3]{}\hsindent{2}{}\<[5]%
\>[5]{}\Varid{fmap}\;(\Varid{fmap}\;\Varid{fst})\;(\Varid{h_{State1}}\;(\Varid{h_{ND+f}}\;(\Varid{(\Leftrightarrow)}\;(\Conid{Var}\;\Varid{x})))){}\<[E]%
\\
\>[3]{}\mathrel{=}\mbox{\commentbegin ~ definition of \ensuremath{\Varid{(\Leftrightarrow)}}  \commentend}{}\<[E]%
\\
\>[3]{}\hsindent{2}{}\<[5]%
\>[5]{}\Varid{fmap}\;(\Varid{fmap}\;\Varid{fst})\;(\Varid{h_{State1}}\;(\Varid{h_{ND+f}}\;(\Conid{Var}\;\Varid{x}))){}\<[E]%
\\
\>[3]{}\mathrel{=}\mbox{\commentbegin ~ definition of \ensuremath{\Varid{h_{ND+f}}}  \commentend}{}\<[E]%
\\
\>[3]{}\hsindent{2}{}\<[5]%
\>[5]{}\Varid{fmap}\;(\Varid{fmap}\;\Varid{fst})\;(\Varid{h_{State1}}\;(\Conid{Var}\;[\mskip1.5mu \Varid{x}\mskip1.5mu])){}\<[E]%
\\
\>[3]{}\mathrel{=}\mbox{\commentbegin ~ definition of \ensuremath{\Varid{h_{State1}}}  \commentend}{}\<[E]%
\\
\>[3]{}\hsindent{2}{}\<[5]%
\>[5]{}\Varid{fmap}\;(\Varid{fmap}\;\Varid{fst})\;(\lambda \Varid{s}\to \Conid{Var}\;([\mskip1.5mu \Varid{x}\mskip1.5mu],\Varid{s})){}\<[E]%
\\
\>[3]{}\mathrel{=}\mbox{\commentbegin ~ definition of \ensuremath{\Varid{fmap}} (twice)  \commentend}{}\<[E]%
\\
\>[3]{}\hsindent{2}{}\<[5]%
\>[5]{}\lambda \Varid{s}\to \Conid{Var}\;[\mskip1.5mu \Varid{x}\mskip1.5mu]{}\<[E]%
\\
\>[3]{}\mathrel{=}\mbox{\commentbegin ~ define \ensuremath{\Varid{gen}_{\Varid{LHS}}\;\Varid{x}\mathrel{=}\lambda \Varid{s}\to \Conid{Var}\;[\mskip1.5mu \Varid{x}\mskip1.5mu]}  \commentend}{}\<[E]%
\\
\>[3]{}\hsindent{2}{}\<[5]%
\>[5]{}\Varid{gen}_{\Varid{LHS}}\;\Varid{x}{}\<[E]%
\ColumnHook
\end{hscode}\resethooks
\indentend We conclude that the first fusion condition is satisfied by:
\indentbegin \begin{hscode}\SaveRestoreHook
\column{B}{@{}>{\hspre}l<{\hspost}@{}}%
\column{3}{@{}>{\hspre}l<{\hspost}@{}}%
\column{E}{@{}>{\hspre}l<{\hspost}@{}}%
\>[3]{}\Varid{gen}_{\Varid{LHS}}\mathbin{::}\Conid{Functor}\;\Varid{f}\Rightarrow \Varid{a}\to (\Varid{s}\to \Conid{Free}\;\Varid{f}\;[\mskip1.5mu \Varid{a}\mskip1.5mu]){}\<[E]%
\\
\>[3]{}\Varid{gen}_{\Varid{LHS}}\;\Varid{x}\mathrel{=}\lambda \Varid{s}\to \Conid{Var}\;[\mskip1.5mu \Varid{x}\mskip1.5mu]{}\<[E]%
\ColumnHook
\end{hscode}\resethooks
\indentend We can split the second fusion condition in three subconditions:
\begin{eqnarray*}
\ensuremath{\Varid{h_{Global}}\hsdot{\circ }{.}\Varid{alg}\hsdot{\circ }{.}\Conid{Inl}} & = & \ensuremath{\Varid{alg}_{\Varid{LHS}}^{\Varid{S}}\hsdot{\circ }{.}\Varid{fmap}\;\Varid{h_{Global}}} \\
\ensuremath{\Varid{h_{Global}}\hsdot{\circ }{.}\Varid{alg}\hsdot{\circ }{.}\Conid{Inr}\hsdot{\circ }{.}\Conid{Inl}} & = & \ensuremath{\Varid{alg}_{\Varid{LHS}}^{\Varid{ND}}\hsdot{\circ }{.}\Varid{fmap}\;\Varid{h_{Global}}} \\
\ensuremath{\Varid{h_{Global}}\hsdot{\circ }{.}\Varid{alg}\hsdot{\circ }{.}\Conid{Inr}\hsdot{\circ }{.}\Conid{Inr}} & = & \ensuremath{\Varid{fwd}_{\Varid{LHS}}\hsdot{\circ }{.}\Varid{fmap}\;\Varid{h_{Global}}}
\end{eqnarray*}

Let's consider the first subconditions. It has two cases:

\noindent \mbox{\underline{case \ensuremath{\Varid{op}\mathrel{=}\Conid{Get}\;\Varid{k}}}}\indentbegin \begin{hscode}\SaveRestoreHook
\column{B}{@{}>{\hspre}l<{\hspost}@{}}%
\column{3}{@{}>{\hspre}l<{\hspost}@{}}%
\column{5}{@{}>{\hspre}l<{\hspost}@{}}%
\column{E}{@{}>{\hspre}l<{\hspost}@{}}%
\>[5]{}\Varid{h_{Global}}\;(\Varid{alg}\;(\Conid{Inl}\;(\Conid{Get}\;\Varid{k}))){}\<[E]%
\\
\>[3]{}\mathrel{=}\mbox{\commentbegin ~ definition of \ensuremath{\Varid{alg}}  \commentend}{}\<[E]%
\\
\>[3]{}\hsindent{2}{}\<[5]%
\>[5]{}\Varid{h_{Global}}\;(\Conid{Op}\;(\Conid{Inl}\;(\Conid{Get}\;\Varid{k}))){}\<[E]%
\\
\>[3]{}\mathrel{=}\mbox{\commentbegin ~ definition of \ensuremath{\Varid{h_{Global}}}  \commentend}{}\<[E]%
\\
\>[3]{}\hsindent{2}{}\<[5]%
\>[5]{}\Varid{fmap}\;(\Varid{fmap}\;\Varid{fst})\;(\Varid{h_{State1}}\;(\Varid{h_{ND+f}}\;(\Varid{(\Leftrightarrow)}\;(\Conid{Op}\;(\Conid{Inl}\;(\Conid{Get}\;\Varid{k})))))){}\<[E]%
\\
\>[3]{}\mathrel{=}\mbox{\commentbegin ~ definition of \ensuremath{\Varid{(\Leftrightarrow)}}  \commentend}{}\<[E]%
\\
\>[3]{}\hsindent{2}{}\<[5]%
\>[5]{}\Varid{fmap}\;(\Varid{fmap}\;\Varid{fst})\;(\Varid{h_{State1}}\;(\Varid{h_{ND+f}}\;(\Conid{Op}\;(\Conid{Inr}\;(\Conid{Inl}\;(\Varid{fmap}\;\Varid{(\Leftrightarrow)}\;(\Conid{Get}\;\Varid{k}))))))){}\<[E]%
\\
\>[3]{}\mathrel{=}\mbox{\commentbegin ~ definition of \ensuremath{\Varid{fmap}}  \commentend}{}\<[E]%
\\
\>[3]{}\hsindent{2}{}\<[5]%
\>[5]{}\Varid{fmap}\;(\Varid{fmap}\;\Varid{fst})\;(\Varid{h_{State1}}\;(\Varid{h_{ND+f}}\;(\Conid{Op}\;(\Conid{Inr}\;(\Conid{Inl}\;(\Conid{Get}\;(\Varid{(\Leftrightarrow)}\hsdot{\circ }{.}\Varid{k}))))))){}\<[E]%
\\
\>[3]{}\mathrel{=}\mbox{\commentbegin ~ definition of \ensuremath{\Varid{h_{ND+f}}}  \commentend}{}\<[E]%
\\
\>[3]{}\hsindent{2}{}\<[5]%
\>[5]{}\Varid{fmap}\;(\Varid{fmap}\;\Varid{fst})\;(\Varid{h_{State1}}\;(\Conid{Op}\;(\Varid{fmap}\;\Varid{h_{ND+f}}\;(\Conid{Inl}\;(\Conid{Get}\;(\Varid{(\Leftrightarrow)}\hsdot{\circ }{.}\Varid{k})))))){}\<[E]%
\\
\>[3]{}\mathrel{=}\mbox{\commentbegin ~ definition of \ensuremath{\Varid{fmap}}  \commentend}{}\<[E]%
\\
\>[3]{}\hsindent{2}{}\<[5]%
\>[5]{}\Varid{fmap}\;(\Varid{fmap}\;\Varid{fst})\;(\Varid{h_{State1}}\;(\Conid{Op}\;(\Conid{Inl}\;(\Conid{Get}\;(\Varid{h_{ND+f}}\hsdot{\circ }{.}\Varid{(\Leftrightarrow)}\hsdot{\circ }{.}\Varid{k}))))){}\<[E]%
\\
\>[3]{}\mathrel{=}\mbox{\commentbegin ~ definition of \ensuremath{\Varid{h_{State1}}}  \commentend}{}\<[E]%
\\
\>[3]{}\hsindent{2}{}\<[5]%
\>[5]{}\Varid{fmap}\;(\Varid{fmap}\;\Varid{fst})\;(\lambda \Varid{s}\to (\Varid{h_{State1}}\hsdot{\circ }{.}\Varid{h_{ND+f}}\hsdot{\circ }{.}\Varid{(\Leftrightarrow)}\hsdot{\circ }{.}\Varid{k})\;\Varid{s}\;\Varid{s}){}\<[E]%
\\
\>[3]{}\mathrel{=}\mbox{\commentbegin ~ definition of \ensuremath{\Varid{fmap}}  \commentend}{}\<[E]%
\\
\>[3]{}\hsindent{2}{}\<[5]%
\>[5]{}(\lambda \Varid{s}\to \Varid{fmap}\;\Varid{fst}\;((\Varid{h_{State1}}\hsdot{\circ }{.}\Varid{h_{ND+f}}\hsdot{\circ }{.}\Varid{(\Leftrightarrow)}\hsdot{\circ }{.}\Varid{k})\;\Varid{s}\;\Varid{s})){}\<[E]%
\\
\>[3]{}\mathrel{=}\mbox{\commentbegin ~ definition of \ensuremath{\Varid{fmap}}  \commentend}{}\<[E]%
\\
\>[3]{}\hsindent{2}{}\<[5]%
\>[5]{}(\lambda \Varid{s}\to ((\Varid{fmap}\;(\Varid{fmap}\;\Varid{fst})\hsdot{\circ }{.}\Varid{h_{State1}}\hsdot{\circ }{.}\Varid{h_{ND+f}}\hsdot{\circ }{.}\Varid{(\Leftrightarrow)}\hsdot{\circ }{.}\Varid{k})\;\Varid{s}\;\Varid{s})){}\<[E]%
\\
\>[3]{}\mathrel{=}\mbox{\commentbegin ~ define \ensuremath{\Varid{alg}_{\Varid{LHS}}^{\Varid{S}}\;(\Conid{Get}\;\Varid{k})\mathrel{=}\lambda \Varid{s}\to \Varid{k}\;\Varid{s}\;\Varid{s}}  \commentend}{}\<[E]%
\\
\>[3]{}\hsindent{2}{}\<[5]%
\>[5]{}\Varid{alg}_{\Varid{LHS}}^{\Varid{S}}\;(\Conid{Get}\;(\Varid{h_{Global}}\hsdot{\circ }{.}\Varid{k})){}\<[E]%
\\
\>[3]{}\mathrel{=}\mbox{\commentbegin ~ definition of \ensuremath{\Varid{fmap}}  \commentend}{}\<[E]%
\\
\>[3]{}\hsindent{2}{}\<[5]%
\>[5]{}\Varid{alg}_{\Varid{LHS}}^{\Varid{S}}\;(\Varid{fmap}\;\Varid{h_{Global}}\;(\Conid{Get}\;\Varid{k})){}\<[E]%
\ColumnHook
\end{hscode}\resethooks
\indentend \noindent \mbox{\underline{case \ensuremath{\Varid{op}\mathrel{=}\Conid{Put}\;\Varid{s}\;\Varid{k}}}}
\indentbegin \begin{hscode}\SaveRestoreHook
\column{B}{@{}>{\hspre}l<{\hspost}@{}}%
\column{3}{@{}>{\hspre}l<{\hspost}@{}}%
\column{5}{@{}>{\hspre}l<{\hspost}@{}}%
\column{7}{@{}>{\hspre}l<{\hspost}@{}}%
\column{14}{@{}>{\hspre}l<{\hspost}@{}}%
\column{17}{@{}>{\hspre}l<{\hspost}@{}}%
\column{22}{@{}>{\hspre}l<{\hspost}@{}}%
\column{26}{@{}>{\hspre}l<{\hspost}@{}}%
\column{36}{@{}>{\hspre}l<{\hspost}@{}}%
\column{41}{@{}>{\hspre}l<{\hspost}@{}}%
\column{46}{@{}>{\hspre}l<{\hspost}@{}}%
\column{52}{@{}>{\hspre}l<{\hspost}@{}}%
\column{E}{@{}>{\hspre}l<{\hspost}@{}}%
\>[5]{}\Varid{h_{Global}}\;(\Varid{alg}\;(\Conid{Inl}\;(\Conid{Put}\;\Varid{s}\;\Varid{k}))){}\<[E]%
\\
\>[3]{}\mathrel{=}\mbox{\commentbegin ~ definition of \ensuremath{\Varid{alg}}  \commentend}{}\<[E]%
\\
\>[3]{}\hsindent{2}{}\<[5]%
\>[5]{}\Varid{h_{Global}}\;(\Varid{put_{R}}\;\Varid{s}>\!\!>\Varid{k}){}\<[E]%
\\
\>[3]{}\mathrel{=}\mbox{\commentbegin ~ definition of \ensuremath{\Varid{put_{R}}}  \commentend}{}\<[E]%
\\
\>[3]{}\hsindent{2}{}\<[5]%
\>[5]{}\Varid{h_{Global}}\;((\Varid{get}>\!\!>\!\!=\lambda \Varid{t}\to \Varid{put}\;\Varid{s}\mathbin{\talloblong}\Varid{side}\;(\Varid{put}\;\Varid{t}))>\!\!>\Varid{k}){}\<[E]%
\\
\>[3]{}\mathrel{=}\mbox{\commentbegin ~ definitions of \ensuremath{\Varid{side}}, \ensuremath{\Varid{get}}, \ensuremath{\Varid{put}}, \ensuremath{(\talloblong)}, \ensuremath{(>\!\!>\!\!=)}  \commentend}{}\<[E]%
\\
\>[3]{}\hsindent{2}{}\<[5]%
\>[5]{}\Varid{h_{Global}}\;(\Conid{Op}\;(\Conid{Inl}\;(\Conid{Get}\;(\lambda \Varid{t}\to \Conid{Op}\;(\Conid{Inr}\;(\Conid{Inl}\;(\Conid{Or}\;{}\<[52]%
\>[52]{}(\Conid{Op}\;(\Conid{Inl}\;(\Conid{Put}\;\Varid{s}\;\Varid{k})))\;{}\<[E]%
\\
\>[52]{}(\Conid{Op}\;(\Conid{Inl}\;(\Conid{Put}\;\Varid{t}\;(\Conid{Op}\;(\Conid{Inr}\;(\Conid{Inl}\;\Conid{Fail}))))))))))))){}\<[E]%
\\
\>[3]{}\mathrel{=}\mbox{\commentbegin ~ definition of \ensuremath{\Varid{h_{Global}}}  \commentend}{}\<[E]%
\\
\>[3]{}\hsindent{2}{}\<[5]%
\>[5]{}\Varid{fmap}\;(\Varid{fmap}\;\Varid{fst})\;(\Varid{h_{State1}}\;(\Varid{h_{ND+f}}\;(\Varid{(\Leftrightarrow)}{}\<[E]%
\\
\>[5]{}\hsindent{2}{}\<[7]%
\>[7]{}(\Conid{Op}\;(\Conid{Inl}\;(\Conid{Get}\;(\lambda \Varid{t}\to \Conid{Op}\;(\Conid{Inr}\;(\Conid{Inl}\;(\Conid{Or}\;{}\<[46]%
\>[46]{}(\Conid{Op}\;(\Conid{Inl}\;(\Conid{Put}\;\Varid{s}\;\Varid{k})))\;{}\<[E]%
\\
\>[46]{}(\Conid{Op}\;(\Conid{Inl}\;(\Conid{Put}\;\Varid{t}\;(\Conid{Op}\;\Conid{Inr}\;((\Conid{Inl}\;\Conid{Fail})))))))))))))))){}\<[E]%
\\
\>[3]{}\mathrel{=}\mbox{\commentbegin ~ definition of \ensuremath{\Varid{(\Leftrightarrow)}}  \commentend}{}\<[E]%
\\
\>[3]{}\hsindent{2}{}\<[5]%
\>[5]{}\Varid{fmap}\;(\Varid{fmap}\;\Varid{fst})\;(\Varid{h_{State1}}\;(\Varid{h_{ND+f}}\;({}\<[E]%
\\
\>[5]{}\hsindent{2}{}\<[7]%
\>[7]{}(\Conid{Op}\;(\Conid{Inr}\;(\Conid{Inl}\;(\Conid{Get}\;(\lambda \Varid{t}\to \Conid{Op}\;(\Conid{Inl}\;(\Conid{Or}\;{}\<[46]%
\>[46]{}(\Conid{Op}\;(\Conid{Inr}\;(\Conid{Inl}\;(\Conid{Put}\;\Varid{s}\;(\Varid{(\Leftrightarrow)}\;\Varid{k})))))\;{}\<[E]%
\\
\>[46]{}(\Conid{Op}\;(\Conid{Inr}\;(\Conid{Inl}\;(\Conid{Put}\;\Varid{t}\;(\Conid{Op}\;(\Conid{Inl}\;\Conid{Fail})))))))))))))))){}\<[E]%
\\
\>[3]{}\mathrel{=}\mbox{\commentbegin ~ definition of \ensuremath{\Varid{h_{ND+f}}}  \commentend}{}\<[E]%
\\
\>[3]{}\hsindent{2}{}\<[5]%
\>[5]{}\Varid{fmap}\;(\Varid{fmap}\;\Varid{fst})\;(\Varid{h_{State1}}\;({}\<[E]%
\\
\>[5]{}\hsindent{2}{}\<[7]%
\>[7]{}(\Conid{Op}\;(\Conid{Inl}\;(\Conid{Get}\;(\lambda \Varid{t}\to \Varid{liftM2}\;(+\!\!+)\;{}\<[41]%
\>[41]{}(\Conid{Op}\;(\Conid{Inl}\;(\Conid{Put}\;\Varid{s}\;(\Varid{h_{ND+f}}\;(\Varid{(\Leftrightarrow)}\;\Varid{k})))))\;{}\<[E]%
\\
\>[41]{}(\Conid{Op}\;(\Conid{Inl}\;(\Conid{Put}\;\Varid{t}\;(\Conid{Var}\;[\mskip1.5mu \mskip1.5mu])))))))))){}\<[E]%
\\
\>[3]{}\mathrel{=}\mbox{\commentbegin ~ definition of \ensuremath{\Varid{h_{State1}}}  \commentend}{}\<[E]%
\\
\>[3]{}\hsindent{2}{}\<[5]%
\>[5]{}\Varid{fmap}\;(\Varid{fmap}\;\Varid{fst})\;{}\<[E]%
\\
\>[5]{}\hsindent{2}{}\<[7]%
\>[7]{}(\lambda \Varid{t}\to \Varid{h_{State1}}\;(\Varid{liftM2}\;(+\!\!+)\;{}\<[36]%
\>[36]{}(\Conid{Op}\;(\Conid{Inl}\;(\Conid{Put}\;\Varid{s}\;(\Varid{h_{ND+f}}\;(\Varid{(\Leftrightarrow)}\;\Varid{k})))))\;{}\<[E]%
\\
\>[36]{}(\Conid{Op}\;(\Conid{Inl}\;(\Conid{Put}\;\Varid{t}\;(\Conid{Var}\;[\mskip1.5mu \mskip1.5mu])))))\;\Varid{t}){}\<[E]%
\\
\>[3]{}\mathrel{=}\mbox{\commentbegin ~ definition of \ensuremath{\Varid{liftM2}}  \commentend}{}\<[E]%
\\
\>[3]{}\hsindent{2}{}\<[5]%
\>[5]{}\Varid{fmap}\;(\Varid{fmap}\;\Varid{fst})\;{}\<[E]%
\\
\>[5]{}\hsindent{2}{}\<[7]%
\>[7]{}(\lambda \Varid{t}\to \Varid{h_{State1}}\;(\mathbf{do}\;\Varid{x}\leftarrow \Conid{Op}\;(\Conid{Inl}\;(\Conid{Put}\;\Varid{s}\;(\Varid{h_{ND+f}}\;(\Varid{(\Leftrightarrow)}\;\Varid{k})))){}\<[E]%
\\
\>[7]{}\hsindent{19}{}\<[26]%
\>[26]{}\Varid{y}\leftarrow \Conid{Op}\;(\Conid{Inl}\;(\Conid{Put}\;\Varid{t}\;(\Conid{Var}\;[\mskip1.5mu \mskip1.5mu]))){}\<[E]%
\\
\>[7]{}\hsindent{19}{}\<[26]%
\>[26]{}\Conid{Var}\;(\Varid{x}+\!\!+\Varid{y}){}\<[E]%
\\
\>[7]{}\hsindent{15}{}\<[22]%
\>[22]{})\;\Varid{t}){}\<[E]%
\\
\>[3]{}\mathrel{=}\mbox{\commentbegin ~ \Cref{lemma:dist-hState1}  \commentend}{}\<[E]%
\\
\>[3]{}\hsindent{2}{}\<[5]%
\>[5]{}\Varid{fmap}\;(\Varid{fmap}\;\Varid{fst})\;{}\<[E]%
\\
\>[5]{}\hsindent{2}{}\<[7]%
\>[7]{}(\lambda \Varid{t}\to \mathbf{do}\;(\Varid{x},\Varid{t}_{1})\leftarrow \Varid{h_{State1}}\;(\Conid{Op}\;(\Conid{Inl}\;(\Conid{Put}\;\Varid{s}\;(\Varid{h_{ND+f}}\;(\Varid{(\Leftrightarrow)}\;\Varid{k})))))\;\Varid{t}{}\<[E]%
\\
\>[7]{}\hsindent{10}{}\<[17]%
\>[17]{}(\Varid{y},\Varid{t}_{2})\leftarrow \Varid{h_{State1}}\;(\Conid{Op}\;(\Conid{Inl}\;(\Conid{Put}\;\Varid{t}\;(\Conid{Var}\;[\mskip1.5mu \mskip1.5mu]))))\;\Varid{t}_{1}{}\<[E]%
\\
\>[7]{}\hsindent{10}{}\<[17]%
\>[17]{}\Varid{h_{State1}}\;(\Conid{Var}\;(\Varid{x}+\!\!+\Varid{y}))\;\Varid{t}_{2}{}\<[E]%
\\
\>[5]{}\hsindent{2}{}\<[7]%
\>[7]{}){}\<[E]%
\\
\>[3]{}\mathrel{=}\mbox{\commentbegin ~ definition of \ensuremath{\Varid{h_{State1}}}  \commentend}{}\<[E]%
\\
\>[3]{}\hsindent{2}{}\<[5]%
\>[5]{}\Varid{fmap}\;(\Varid{fmap}\;\Varid{fst})\;{}\<[E]%
\\
\>[5]{}\hsindent{2}{}\<[7]%
\>[7]{}(\lambda \Varid{t}\to \mathbf{do}\;(\Varid{x},\anonymous )\leftarrow \Varid{h_{State1}}\;(\Varid{h_{ND+f}}\;(\Varid{(\Leftrightarrow)}\;\Varid{k}))\;\Varid{s}{}\<[E]%
\\
\>[7]{}\hsindent{10}{}\<[17]%
\>[17]{}(\Varid{y},\Varid{t}_{2})\leftarrow \Conid{Var}\;([\mskip1.5mu \mskip1.5mu],\Varid{t}){}\<[E]%
\\
\>[7]{}\hsindent{10}{}\<[17]%
\>[17]{}\Conid{Var}\;(\Varid{x}+\!\!+\Varid{y},\Varid{t}_{2}){}\<[E]%
\\
\>[5]{}\hsindent{2}{}\<[7]%
\>[7]{}){}\<[E]%
\\
\>[3]{}\mathrel{=}\mbox{\commentbegin ~ monad law  \commentend}{}\<[E]%
\\
\>[3]{}\hsindent{2}{}\<[5]%
\>[5]{}\Varid{fmap}\;(\Varid{fmap}\;\Varid{fst})\;{}\<[E]%
\\
\>[5]{}\hsindent{2}{}\<[7]%
\>[7]{}(\lambda \Varid{t}\to \mathbf{do}\;(\Varid{x},\anonymous )\leftarrow \Varid{h_{State1}}\;(\Varid{h_{ND+f}}\;(\Varid{(\Leftrightarrow)}\;\Varid{k}))\;\Varid{s}{}\<[E]%
\\
\>[7]{}\hsindent{10}{}\<[17]%
\>[17]{}\Conid{Var}\;(\Varid{x}+\!\!+[\mskip1.5mu \mskip1.5mu],\Varid{t}){}\<[E]%
\\
\>[5]{}\hsindent{2}{}\<[7]%
\>[7]{}){}\<[E]%
\\
\>[3]{}\mathrel{=}\mbox{\commentbegin ~ right unit of \ensuremath{(+\!\!+)}  \commentend}{}\<[E]%
\\
\>[3]{}\hsindent{2}{}\<[5]%
\>[5]{}\Varid{fmap}\;(\Varid{fmap}\;\Varid{fst})\;{}\<[E]%
\\
\>[5]{}\hsindent{2}{}\<[7]%
\>[7]{}(\lambda \Varid{t}\to \mathbf{do}\;(\Varid{x},\anonymous )\leftarrow \Varid{h_{State1}}\;(\Varid{h_{ND+f}}\;(\Varid{(\Leftrightarrow)}\;\Varid{k}))\;\Varid{s}{}\<[E]%
\\
\>[7]{}\hsindent{10}{}\<[17]%
\>[17]{}\Conid{Var}\;(\Varid{x},\Varid{t}){}\<[E]%
\\
\>[5]{}\hsindent{2}{}\<[7]%
\>[7]{}){}\<[E]%
\\
\>[3]{}\mathrel{=}\mbox{\commentbegin ~ definition of \ensuremath{\Varid{fmap}\;\Varid{fst}}  \commentend}{}\<[E]%
\\
\>[3]{}\hsindent{2}{}\<[5]%
\>[5]{}\Varid{fmap}\;(\Varid{fmap}\;\Varid{fst})\;{}\<[E]%
\\
\>[5]{}\hsindent{2}{}\<[7]%
\>[7]{}(\lambda \Varid{t}\to \mathbf{do}\;\Varid{x}\leftarrow \Varid{fmap}\;\Varid{fst}\;(\Varid{h_{State1}}\;(\Varid{h_{ND+f}}\;(\Varid{(\Leftrightarrow)}\;\Varid{k}))\;\Varid{s}){}\<[E]%
\\
\>[7]{}\hsindent{10}{}\<[17]%
\>[17]{}\Conid{Var}\;(\Varid{x},\Varid{t}){}\<[E]%
\\
\>[5]{}\hsindent{2}{}\<[7]%
\>[7]{}){}\<[E]%
\\
\>[3]{}\mathrel{=}\mbox{\commentbegin ~ definition of \ensuremath{\Varid{fmap}}  \commentend}{}\<[E]%
\\
\>[3]{}\hsindent{2}{}\<[5]%
\>[5]{}\Varid{fmap}\;(\Varid{fmap}\;\Varid{fst})\;{}\<[E]%
\\
\>[5]{}\hsindent{2}{}\<[7]%
\>[7]{}(\lambda \Varid{t}\to \mathbf{do}\;\Varid{x}\leftarrow (\Varid{fmap}\;(\Varid{fmap}\;\Varid{fst})\;(\Varid{h_{State1}}\;(\Varid{h_{ND+f}}\;(\Varid{(\Leftrightarrow)}\;\Varid{k}))))\;\Varid{s}{}\<[E]%
\\
\>[7]{}\hsindent{10}{}\<[17]%
\>[17]{}\Conid{Var}\;(\Varid{x},\Varid{t}){}\<[E]%
\\
\>[5]{}\hsindent{2}{}\<[7]%
\>[7]{}){}\<[E]%
\\
\>[3]{}\mathrel{=}\mbox{\commentbegin ~ definition of \ensuremath{\Varid{fmap}\;(\Varid{fmap}\;\Varid{fst})}  \commentend}{}\<[E]%
\\
\>[3]{}\hsindent{2}{}\<[5]%
\>[5]{}\mathbin{\char92 \char95 }\to \mathbf{do}\;\Varid{x}\leftarrow (\Varid{fmap}\;(\Varid{fmap}\;\Varid{fst})\;(\Varid{h_{State1}}\;(\Varid{h_{ND+f}}\;(\Varid{(\Leftrightarrow)}\;\Varid{k}))))\;\Varid{s}{}\<[E]%
\\
\>[5]{}\hsindent{9}{}\<[14]%
\>[14]{}\Conid{Var}\;\Varid{x}{}\<[E]%
\\
\>[3]{}\mathrel{=}\mbox{\commentbegin ~ monad law  \commentend}{}\<[E]%
\\
\>[3]{}\hsindent{2}{}\<[5]%
\>[5]{}\mathbin{\char92 \char95 }\to (\Varid{fmap}\;(\Varid{fmap}\;\Varid{fst})\;(\Varid{h_{State1}}\;(\Varid{h_{ND+f}}\;(\Varid{(\Leftrightarrow)}\;\Varid{k}))))\;\Varid{s}{}\<[E]%
\\
\>[3]{}\mathrel{=}\mbox{\commentbegin ~ definition of \ensuremath{\Varid{h_{Global}}}  \commentend}{}\<[E]%
\\
\>[3]{}\hsindent{2}{}\<[5]%
\>[5]{}\mathbin{\char92 \char95 }\to (\Varid{h_{Global}}\;\Varid{k})\;\Varid{s}{}\<[E]%
\\
\>[3]{}\mathrel{=}\mbox{\commentbegin ~ define \ensuremath{\Varid{alg}_{\Varid{LHS}}^{\Varid{S}}\;(\Conid{Put}\;\Varid{s}\;\Varid{k})\mathrel{=}\mathbin{\char92 \char95 }\to \Varid{k}\;\Varid{s}}  \commentend}{}\<[E]%
\\
\>[3]{}\hsindent{2}{}\<[5]%
\>[5]{}\Varid{alg}_{\Varid{LHS}}^{\Varid{S}}\;(\Conid{Put}\;\Varid{s}\;(\Varid{h_{Global}}\;\Varid{k})){}\<[E]%
\\
\>[3]{}\mathrel{=}\mbox{\commentbegin ~ definition of \ensuremath{\Varid{fmap}}  \commentend}{}\<[E]%
\\
\>[3]{}\hsindent{2}{}\<[5]%
\>[5]{}\Varid{alg}_{\Varid{LHS}}^{\Varid{S}}\;(\Varid{fmap}\;\Varid{h_{Global}}\;(\Conid{Put}\;\Varid{s})){}\<[E]%
\ColumnHook
\end{hscode}\resethooks
\indentend We conclude that this fusion subcondition holds provided that:
\indentbegin \begin{hscode}\SaveRestoreHook
\column{B}{@{}>{\hspre}l<{\hspost}@{}}%
\column{3}{@{}>{\hspre}l<{\hspost}@{}}%
\column{22}{@{}>{\hspre}c<{\hspost}@{}}%
\column{22E}{@{}l@{}}%
\column{25}{@{}>{\hspre}l<{\hspost}@{}}%
\column{E}{@{}>{\hspre}l<{\hspost}@{}}%
\>[3]{}\Varid{alg}_{\Varid{LHS}}^{\Varid{S}}\mathbin{::}\Conid{Functor}\;\Varid{f}\Rightarrow \Varid{State_{F}}\;\Varid{s}\;(\Varid{s}\to \Conid{Free}\;\Varid{f}\;[\mskip1.5mu \Varid{a}\mskip1.5mu])\to (\Varid{s}\to \Conid{Free}\;\Varid{f}\;[\mskip1.5mu \Varid{a}\mskip1.5mu]){}\<[E]%
\\
\>[3]{}\Varid{alg}_{\Varid{LHS}}^{\Varid{S}}\;(\Conid{Get}\;\Varid{k}){}\<[22]%
\>[22]{}\mathrel{=}{}\<[22E]%
\>[25]{}\lambda \Varid{s}\to \Varid{k}\;\Varid{s}\;\Varid{s}{}\<[E]%
\\
\>[3]{}\Varid{alg}_{\Varid{LHS}}^{\Varid{S}}\;(\Conid{Put}\;\Varid{s}\;\Varid{k}){}\<[22]%
\>[22]{}\mathrel{=}{}\<[22E]%
\>[25]{}\mathbin{\char92 \char95 }\to \Varid{k}\;\Varid{s}{}\<[E]%
\ColumnHook
\end{hscode}\resethooks
\indentend Let's consider the second subcondition. It has also two cases:

\noindent \mbox{\underline{case \ensuremath{\Varid{op}\mathrel{=}\Conid{Fail}}}}\indentbegin \begin{hscode}\SaveRestoreHook
\column{B}{@{}>{\hspre}l<{\hspost}@{}}%
\column{3}{@{}>{\hspre}l<{\hspost}@{}}%
\column{4}{@{}>{\hspre}l<{\hspost}@{}}%
\column{5}{@{}>{\hspre}l<{\hspost}@{}}%
\column{E}{@{}>{\hspre}l<{\hspost}@{}}%
\>[5]{}\Varid{h_{Global}}\;(\Varid{alg}\;(\Conid{Inr}\;(\Conid{Inl}\;\Conid{Fail}))){}\<[E]%
\\
\>[3]{}\mathrel{=}\mbox{\commentbegin ~ definition of \ensuremath{\Varid{alg}}  \commentend}{}\<[E]%
\\
\>[3]{}\hsindent{2}{}\<[5]%
\>[5]{}\Varid{h_{Global}}\;(\Conid{Op}\;(\Conid{Inr}\;(\Conid{Inl}\;\Conid{Fail}))){}\<[E]%
\\
\>[3]{}\mathrel{=}\mbox{\commentbegin ~ definition of \ensuremath{\Varid{h_{Global}}}  \commentend}{}\<[E]%
\\
\>[3]{}\hsindent{2}{}\<[5]%
\>[5]{}\Varid{fmap}\;(\Varid{fmap}\;\Varid{fst})\;(\Varid{h_{State1}}\;(\Varid{h_{ND+f}}\;(\Varid{(\Leftrightarrow)}\;(\Conid{Op}\;(\Conid{Inr}\;(\Conid{Inl}\;\Conid{Fail})))))){}\<[E]%
\\
\>[3]{}\mathrel{=}\mbox{\commentbegin ~ definition of \ensuremath{\Varid{(\Leftrightarrow)}}  \commentend}{}\<[E]%
\\
\>[3]{}\hsindent{2}{}\<[5]%
\>[5]{}\Varid{fmap}\;(\Varid{fmap}\;\Varid{fst})\;(\Varid{h_{State1}}\;(\Varid{h_{ND+f}}\;(\Conid{Op}\;(\Conid{Inl}\;(\Varid{fmap}\;\Varid{(\Leftrightarrow)}\;\Conid{Fail}))))){}\<[E]%
\\
\>[3]{}\mathrel{=}\mbox{\commentbegin ~ definition of \ensuremath{\Varid{fmap}}  \commentend}{}\<[E]%
\\
\>[3]{}\hsindent{2}{}\<[5]%
\>[5]{}\Varid{fmap}\;(\Varid{fmap}\;\Varid{fst})\;(\Varid{h_{State1}}\;(\Varid{h_{ND+f}}\;(\Conid{Op}\;(\Conid{Inl}\;\Conid{Fail})))){}\<[E]%
\\
\>[3]{}\mathrel{=}\mbox{\commentbegin ~ definition of \ensuremath{\Varid{h_{ND+f}}}  \commentend}{}\<[E]%
\\
\>[3]{}\hsindent{2}{}\<[5]%
\>[5]{}\Varid{fmap}\;(\Varid{fmap}\;\Varid{fst})\;(\Varid{h_{State1}}\;(\Conid{Var}\;[\mskip1.5mu \mskip1.5mu])){}\<[E]%
\\
\>[3]{}\mathrel{=}\mbox{\commentbegin ~ definition of \ensuremath{\Varid{h_{State1}}}  \commentend}{}\<[E]%
\\
\>[3]{}\hsindent{2}{}\<[5]%
\>[5]{}\Varid{fmap}\;(\Varid{fmap}\;\Varid{fst})\;(\lambda \Varid{s}\to \Conid{Var}\;([\mskip1.5mu \mskip1.5mu],\Varid{s})){}\<[E]%
\\
\>[3]{}\mathrel{=}\mbox{\commentbegin ~ definition of \ensuremath{\Varid{fmap}} twice and \ensuremath{\Varid{fst}}  \commentend}{}\<[E]%
\\
\>[3]{}\hsindent{2}{}\<[5]%
\>[5]{}\lambda \Varid{s}\to \Conid{Var}\;[\mskip1.5mu \mskip1.5mu]{}\<[E]%
\\
\>[3]{}\mathrel{=}\mbox{\commentbegin ~ define \ensuremath{\Varid{alg}_{\Varid{RHS}}^{\Varid{ND}}\;\Conid{Fail}\mathrel{=}\lambda \Varid{s}\to \Conid{Var}\;[\mskip1.5mu \mskip1.5mu]}  \commentend}{}\<[E]%
\\
\>[3]{}\hsindent{1}{}\<[4]%
\>[4]{}\Varid{alg}_{\Varid{RHS}}^{\Varid{ND}}\;\Conid{Fail}{}\<[E]%
\\
\>[3]{}\mathrel{=}\mbox{\commentbegin ~ definition of \ensuremath{\Varid{fmap}}  \commentend}{}\<[E]%
\\
\>[3]{}\hsindent{1}{}\<[4]%
\>[4]{}\Varid{alg}_{\Varid{RHS}}^{\Varid{ND}}\;(\Varid{fmap}\;\Varid{h_{Global}}\;\Conid{Fail}){}\<[E]%
\ColumnHook
\end{hscode}\resethooks
\indentend \noindent \mbox{\underline{case \ensuremath{\Varid{op}\mathrel{=}\Conid{Or}\;\Varid{p}\;\Varid{q}}}}\indentbegin \begin{hscode}\SaveRestoreHook
\column{B}{@{}>{\hspre}l<{\hspost}@{}}%
\column{3}{@{}>{\hspre}l<{\hspost}@{}}%
\column{5}{@{}>{\hspre}l<{\hspost}@{}}%
\column{16}{@{}>{\hspre}l<{\hspost}@{}}%
\column{33}{@{}>{\hspre}l<{\hspost}@{}}%
\column{35}{@{}>{\hspre}l<{\hspost}@{}}%
\column{E}{@{}>{\hspre}l<{\hspost}@{}}%
\>[5]{}\Varid{h_{Global}}\;(\Varid{alg}\;(\Conid{Inr}\;(\Conid{Inl}\;(\Conid{Or}\;\Varid{p}\;\Varid{q})))){}\<[E]%
\\
\>[3]{}\mathrel{=}\mbox{\commentbegin ~ definition of \ensuremath{\Varid{alg}}  \commentend}{}\<[E]%
\\
\>[3]{}\hsindent{2}{}\<[5]%
\>[5]{}\Varid{h_{Global}}\;(\Conid{Op}\;(\Conid{Inr}\;(\Conid{Inl}\;(\Conid{Or}\;\Varid{p}\;\Varid{q})))){}\<[E]%
\\
\>[3]{}\mathrel{=}\mbox{\commentbegin ~ definition of \ensuremath{\Varid{h_{Global}}}  \commentend}{}\<[E]%
\\
\>[3]{}\hsindent{2}{}\<[5]%
\>[5]{}\Varid{fmap}\;(\Varid{fmap}\;\Varid{fst})\;(\Varid{h_{State1}}\;(\Varid{h_{ND+f}}\;(\Varid{(\Leftrightarrow)}\;(\Conid{Op}\;(\Conid{Inr}\;(\Conid{Inl}\;(\Conid{Or}\;\Varid{p}\;\Varid{q}))))))){}\<[E]%
\\
\>[3]{}\mathrel{=}\mbox{\commentbegin ~ definition of \ensuremath{\Varid{(\Leftrightarrow)}}  \commentend}{}\<[E]%
\\
\>[3]{}\hsindent{2}{}\<[5]%
\>[5]{}\Varid{fmap}\;(\Varid{fmap}\;\Varid{fst})\;(\Varid{h_{State1}}\;(\Varid{h_{ND+f}}\;(\Conid{Op}\;(\Conid{Inl}\;(\Varid{fmap}\;\Varid{(\Leftrightarrow)}\;(\Conid{Or}\;\Varid{p}\;\Varid{q})))))){}\<[E]%
\\
\>[3]{}\mathrel{=}\mbox{\commentbegin ~ definition of \ensuremath{\Varid{fmap}}  \commentend}{}\<[E]%
\\
\>[3]{}\hsindent{2}{}\<[5]%
\>[5]{}\Varid{fmap}\;(\Varid{fmap}\;\Varid{fst})\;(\Varid{h_{State1}}\;(\Varid{h_{ND+f}}\;(\Conid{Op}\;(\Conid{Inl}\;(\Conid{Or}\;(\Varid{(\Leftrightarrow)}\;\Varid{p})\;(\Varid{(\Leftrightarrow)}\;\Varid{q})))))){}\<[E]%
\\
\>[3]{}\mathrel{=}\mbox{\commentbegin ~ definition of \ensuremath{\Varid{h_{ND+f}}}  \commentend}{}\<[E]%
\\
\>[3]{}\hsindent{2}{}\<[5]%
\>[5]{}\Varid{fmap}\;(\Varid{fmap}\;\Varid{fst})\;(\Varid{h_{State1}}\;(\Varid{liftM2}\;(+\!\!+)\;(\Varid{h_{ND+f}}\;(\Varid{(\Leftrightarrow)}\;\Varid{p}))\;(\Varid{h_{ND+f}}\;(\Varid{(\Leftrightarrow)}\;\Varid{q})))){}\<[E]%
\\
\>[3]{}\mathrel{=}\mbox{\commentbegin ~ definition of \ensuremath{\Varid{liftM2}}  \commentend}{}\<[E]%
\\
\>[3]{}\hsindent{2}{}\<[5]%
\>[5]{}\Varid{fmap}\;(\Varid{fmap}\;\Varid{fst})\;(\Varid{h_{State1}}\;(\mathbf{do}\;{}\<[35]%
\>[35]{}\Varid{x}\leftarrow \Varid{h_{ND+f}}\;(\Varid{(\Leftrightarrow)}\;\Varid{p}){}\<[E]%
\\
\>[35]{}\Varid{y}\leftarrow \Varid{h_{ND+f}}\;(\Varid{(\Leftrightarrow)}\;\Varid{q}){}\<[E]%
\\
\>[35]{}\Varid{\eta}\;(\Varid{x}+\!\!+\Varid{y}))){}\<[E]%
\\
\>[3]{}\mathrel{=}\mbox{\commentbegin ~ \Cref{lemma:dist-hState1}  \commentend}{}\<[E]%
\\
\>[3]{}\hsindent{2}{}\<[5]%
\>[5]{}\Varid{fmap}\;(\Varid{fmap}\;\Varid{fst})\;(\lambda \Varid{s}_{0}\to (\mathbf{do}\;{}\<[33]%
\>[33]{}(\Varid{x},\Varid{s}_{1})\leftarrow \Varid{h_{State1}}\;(\Varid{h_{ND+f}}\;(\Varid{(\Leftrightarrow)}\;\Varid{p}))\;\Varid{s}_{0}{}\<[E]%
\\
\>[33]{}(\Varid{y},\Varid{s}_{2})\leftarrow \Varid{h_{State1}}\;(\Varid{h_{ND+f}}\;(\Varid{(\Leftrightarrow)}\;\Varid{q}))\;\Varid{s}_{1}{}\<[E]%
\\
\>[33]{}\Varid{h_{State1}}\;(\Varid{\eta}\;(\Varid{x}+\!\!+\Varid{y}))\;\Varid{s}_{2})){}\<[E]%
\\
\>[3]{}\mathrel{=}\mbox{\commentbegin ~ definition of \ensuremath{\Varid{h_{State1}}}  \commentend}{}\<[E]%
\\
\>[3]{}\hsindent{2}{}\<[5]%
\>[5]{}\Varid{fmap}\;(\Varid{fmap}\;\Varid{fst})\;(\lambda \Varid{s}_{0}\to (\mathbf{do}\;{}\<[33]%
\>[33]{}(\Varid{x},\Varid{s}_{1})\leftarrow \Varid{h_{State1}}\;(\Varid{h_{ND+f}}\;(\Varid{(\Leftrightarrow)}\;\Varid{p}))\;\Varid{s}_{0}{}\<[E]%
\\
\>[33]{}(\Varid{y},\Varid{s}_{2})\leftarrow \Varid{h_{State1}}\;(\Varid{h_{ND+f}}\;(\Varid{(\Leftrightarrow)}\;\Varid{q}))\;\Varid{s}_{1}{}\<[E]%
\\
\>[33]{}\Conid{Var}\;(\Varid{x}+\!\!+\Varid{y},\Varid{s}_{2}))){}\<[E]%
\\
\>[3]{}\mathrel{=}\mbox{\commentbegin ~ Lemma \ref{lemma:state-restore} (\ensuremath{\Varid{p}} and \ensuremath{\Varid{q}} are in the codomain of \ensuremath{\Varid{local2global}})  \commentend}{}\<[E]%
\\
\>[3]{}\hsindent{2}{}\<[5]%
\>[5]{}\Varid{fmap}\;(\Varid{fmap}\;\Varid{fst})\;(\lambda \Varid{s}_{0}\to (\mathbf{do}\;{}\<[33]%
\>[33]{}(\Varid{x},\Varid{s}_{1})\leftarrow \mathbf{do}\;\{\mskip1.5mu (\Varid{x},\anonymous )\leftarrow \Varid{h_{State1}}\;(\Varid{h_{ND+f}}\;(\Varid{(\Leftrightarrow)}\;\Varid{p}))\;\Varid{s}_{0};\Varid{\eta}\;(\Varid{x},\Varid{s}_{0})\mskip1.5mu\}{}\<[E]%
\\
\>[33]{}(\Varid{y},\Varid{s}_{2})\leftarrow \mathbf{do}\;\{\mskip1.5mu (\Varid{y},\anonymous )\leftarrow \Varid{h_{State1}}\;(\Varid{h_{ND+f}}\;(\Varid{(\Leftrightarrow)}\;\Varid{q}))\;\Varid{s}_{1};\Varid{\eta}\;(\Varid{x},\Varid{s}_{1})\mskip1.5mu\}{}\<[E]%
\\
\>[33]{}\Conid{Var}\;(\Varid{x}+\!\!+\Varid{y},\Varid{s}_{2}))){}\<[E]%
\\
\>[3]{}\mathrel{=}\mbox{\commentbegin ~ monad laws  \commentend}{}\<[E]%
\\
\>[3]{}\hsindent{2}{}\<[5]%
\>[5]{}\Varid{fmap}\;(\Varid{fmap}\;\Varid{fst})\;(\lambda \Varid{s}_{0}\to (\mathbf{do}\;{}\<[33]%
\>[33]{}(\Varid{x},\anonymous )\leftarrow \Varid{h_{State1}}\;(\Varid{h_{ND+f}}\;(\Varid{(\Leftrightarrow)}\;\Varid{p}))\;\Varid{s}_{0}{}\<[E]%
\\
\>[33]{}(\Varid{y},\anonymous )\leftarrow \Varid{h_{State1}}\;(\Varid{h_{ND+f}}\;(\Varid{(\Leftrightarrow)}\;\Varid{q}))\;\Varid{s}_{0}{}\<[E]%
\\
\>[33]{}\Conid{Var}\;(\Varid{x}+\!\!+\Varid{y},\Varid{s}_{0}))){}\<[E]%
\\
\>[3]{}\mathrel{=}\mbox{\commentbegin ~ definition of \ensuremath{\Varid{fmap}} (twice) and \ensuremath{\Varid{fst}}  \commentend}{}\<[E]%
\\
\>[3]{}\hsindent{2}{}\<[5]%
\>[5]{}\lambda \Varid{s}_{0}\to (\mathbf{do}\;{}\<[16]%
\>[16]{}(\Varid{x},\anonymous )\leftarrow \Varid{h_{State1}}\;(\Varid{h_{ND+f}}\;(\Varid{(\Leftrightarrow)}\;\Varid{p}))\;\Varid{s}_{0}{}\<[E]%
\\
\>[16]{}(\Varid{y},\anonymous )\leftarrow \Varid{h_{State1}}\;(\Varid{h_{ND+f}}\;(\Varid{(\Leftrightarrow)}\;\Varid{q}))\;\Varid{s}_{0}{}\<[E]%
\\
\>[16]{}\Conid{Var}\;(\Varid{x}+\!\!+\Varid{y})){}\<[E]%
\\
\>[3]{}\mathrel{=}\mbox{\commentbegin ~ definition of \ensuremath{\Varid{fmap}}, \ensuremath{\Varid{fst}} and monad laws  \commentend}{}\<[E]%
\\
\>[3]{}\hsindent{2}{}\<[5]%
\>[5]{}\lambda \Varid{s}_{0}\to (\mathbf{do}\;{}\<[16]%
\>[16]{}\Varid{x}\leftarrow \Varid{fmap}\;\Varid{fst}\;(\Varid{h_{State1}}\;(\Varid{h_{ND+f}}\;(\Varid{(\Leftrightarrow)}\;\Varid{p}))\;\Varid{s}_{0}){}\<[E]%
\\
\>[16]{}\Varid{y}\leftarrow \Varid{fmap}\;\Varid{fst}\;(\Varid{h_{State1}}\;(\Varid{h_{ND+f}}\;(\Varid{(\Leftrightarrow)}\;\Varid{q}))\;\Varid{s}_{0}){}\<[E]%
\\
\>[16]{}\Conid{Var}\;(\Varid{x}+\!\!+\Varid{y})){}\<[E]%
\\
\>[3]{}\mathrel{=}\mbox{\commentbegin ~ definition of \ensuremath{\Varid{fmap}}  \commentend}{}\<[E]%
\\
\>[3]{}\hsindent{2}{}\<[5]%
\>[5]{}\lambda \Varid{s}_{0}\to (\mathbf{do}\;{}\<[16]%
\>[16]{}\Varid{x}\leftarrow \Varid{fmap}\;(\Varid{fmap}\;\Varid{fst})\;(\Varid{h_{State1}}\;(\Varid{h_{ND+f}}\;(\Varid{(\Leftrightarrow)}\;\Varid{p})))\;\Varid{s}_{0}{}\<[E]%
\\
\>[16]{}\Varid{y}\leftarrow \Varid{fmap}\;(\Varid{fmap}\;\Varid{fst})\;(\Varid{h_{State1}}\;(\Varid{h_{ND+f}}\;(\Varid{(\Leftrightarrow)}\;\Varid{q})))\;\Varid{s}_{0}{}\<[E]%
\\
\>[16]{}\Conid{Var}\;(\Varid{x}+\!\!+\Varid{y})){}\<[E]%
\\
\>[3]{}\mathrel{=}\mbox{\commentbegin ~ definition of \ensuremath{\Varid{h_{Global}}}  \commentend}{}\<[E]%
\\
\>[3]{}\hsindent{2}{}\<[5]%
\>[5]{}\lambda \Varid{s}_{0}\to (\mathbf{do}\;{}\<[16]%
\>[16]{}\Varid{x}\leftarrow \Varid{h_{Global}}\;\Varid{p}\;\Varid{s}_{0}{}\<[E]%
\\
\>[16]{}\Varid{y}\leftarrow \Varid{h_{Global}}\;\Varid{q}\;\Varid{s}_{0}{}\<[E]%
\\
\>[16]{}\Conid{Var}\;(\Varid{x}+\!\!+\Varid{y})){}\<[E]%
\\
\>[3]{}\mathrel{=}\mbox{\commentbegin ~ definition of \ensuremath{\Varid{liftM2}}  \commentend}{}\<[E]%
\\
\>[3]{}\hsindent{2}{}\<[5]%
\>[5]{}\lambda \Varid{s}_{0}\to \Varid{liftM2}\;(+\!\!+)\;(\Varid{h_{Global}}\;\Varid{p}\;\Varid{s}_{0})\;(\Varid{h_{Global}}\;\Varid{q}\;\Varid{s}_{0}){}\<[E]%
\\
\>[3]{}\mathrel{=}\mbox{\commentbegin ~ define \ensuremath{\Varid{alg}_{\Varid{LHS}}^{\Varid{ND}}\;(\Conid{Or}\;\Varid{p}\;\Varid{q})\mathrel{=}\lambda \Varid{s}\to \Varid{liftM2}\;(+\!\!+)\;(\Varid{p}\;\Varid{s})\;(\Varid{q}\;\Varid{s})}  \commentend}{}\<[E]%
\\
\>[3]{}\hsindent{2}{}\<[5]%
\>[5]{}\Varid{alg}_{\Varid{LHS}}^{\Varid{ND}}\;(\Conid{Or}\;(\Varid{h_{Global}}\;\Varid{p})\;(\Varid{h_{Global}}\;\Varid{q})){}\<[E]%
\\
\>[3]{}\mathrel{=}\mbox{\commentbegin ~ definition of \ensuremath{\Varid{fmap}}  \commentend}{}\<[E]%
\\
\>[3]{}\hsindent{2}{}\<[5]%
\>[5]{}\Varid{alg}_{\Varid{LHS}}^{\Varid{ND}}\;(\Varid{fmap}\;\Varid{hGobal}\;(\Conid{Or}\;\Varid{p}\;\Varid{q})){}\<[E]%
\ColumnHook
\end{hscode}\resethooks
\indentend We conclude that this fusion subcondition holds provided that:
\indentbegin \begin{hscode}\SaveRestoreHook
\column{B}{@{}>{\hspre}l<{\hspost}@{}}%
\column{3}{@{}>{\hspre}l<{\hspost}@{}}%
\column{22}{@{}>{\hspre}l<{\hspost}@{}}%
\column{E}{@{}>{\hspre}l<{\hspost}@{}}%
\>[3]{}\Varid{alg}_{\Varid{LHS}}^{\Varid{ND}}\mathbin{::}\Conid{Functor}\;\Varid{f}\Rightarrow \Varid{Nondet_{F}}\;(\Varid{s}\to \Conid{Free}\;\Varid{f}\;[\mskip1.5mu \Varid{a}\mskip1.5mu])\to (\Varid{s}\to \Conid{Free}\;\Varid{f}\;[\mskip1.5mu \Varid{a}\mskip1.5mu]){}\<[E]%
\\
\>[3]{}\Varid{alg}_{\Varid{LHS}}^{\Varid{ND}}\;\Conid{Fail}{}\<[22]%
\>[22]{}\mathrel{=}\lambda \Varid{s}\to \Conid{Var}\;[\mskip1.5mu \mskip1.5mu]{}\<[E]%
\\
\>[3]{}\Varid{alg}_{\Varid{LHS}}^{\Varid{ND}}\;(\Conid{Or}\;\Varid{p}\;\Varid{q}){}\<[22]%
\>[22]{}\mathrel{=}\lambda \Varid{s}\to \Varid{liftM2}\;(+\!\!+)\;(\Varid{p}\;\Varid{s})\;(\Varid{q}\;\Varid{s}){}\<[E]%
\ColumnHook
\end{hscode}\resethooks
\indentend Finally, the last subcondition:\indentbegin \begin{hscode}\SaveRestoreHook
\column{B}{@{}>{\hspre}l<{\hspost}@{}}%
\column{3}{@{}>{\hspre}l<{\hspost}@{}}%
\column{5}{@{}>{\hspre}l<{\hspost}@{}}%
\column{E}{@{}>{\hspre}l<{\hspost}@{}}%
\>[5]{}\Varid{h_{Global}}\;(\Varid{alg}\;(\Conid{Inr}\;(\Conid{Inr}\;\Varid{op}))){}\<[E]%
\\
\>[3]{}\mathrel{=}\mbox{\commentbegin ~ definition of \ensuremath{\Varid{alg}}  \commentend}{}\<[E]%
\\
\>[3]{}\hsindent{2}{}\<[5]%
\>[5]{}\Varid{h_{Global}}\;(\Conid{Op}\;(\Conid{Inr}\;(\Conid{Inr}\;\Varid{op}))){}\<[E]%
\\
\>[3]{}\mathrel{=}\mbox{\commentbegin ~ definition of \ensuremath{\Varid{h_{Global}}}  \commentend}{}\<[E]%
\\
\>[3]{}\hsindent{2}{}\<[5]%
\>[5]{}\Varid{fmap}\;(\Varid{fmap}\;\Varid{fst})\;(\Varid{h_{State1}}\;(\Varid{h_{ND+f}}\;(\Varid{(\Leftrightarrow)}\;(\Conid{Op}\;(\Conid{Inr}\;(\Conid{Inr}\;\Varid{op})))))){}\<[E]%
\\
\>[3]{}\mathrel{=}\mbox{\commentbegin ~ definition of \ensuremath{\Varid{(\Leftrightarrow)}}  \commentend}{}\<[E]%
\\
\>[3]{}\hsindent{2}{}\<[5]%
\>[5]{}\Varid{fmap}\;(\Varid{fmap}\;\Varid{fst})\;(\Varid{h_{State1}}\;(\Varid{h_{ND+f}}\;(\Conid{Op}\;(\Conid{Inr}\;(\Conid{Inr}\;(\Varid{fmap}\;\Varid{(\Leftrightarrow)}\;\Varid{op})))))){}\<[E]%
\\
\>[3]{}\mathrel{=}\mbox{\commentbegin ~ definition of \ensuremath{\Varid{h_{ND+f}}}  \commentend}{}\<[E]%
\\
\>[3]{}\hsindent{2}{}\<[5]%
\>[5]{}\Varid{fmap}\;(\Varid{fmap}\;\Varid{fst})\;(\Varid{h_{State1}}\;(\Conid{Op}\;(\Varid{fmap}\;\Varid{h_{ND+f}}\;(\Conid{Inr}\;(\Varid{fmap}\;\Varid{(\Leftrightarrow)}\;\Varid{op}))))){}\<[E]%
\\
\>[3]{}\mathrel{=}\mbox{\commentbegin ~ definition of \ensuremath{\Varid{fmap}}  \commentend}{}\<[E]%
\\
\>[3]{}\hsindent{2}{}\<[5]%
\>[5]{}\Varid{fmap}\;(\Varid{fmap}\;\Varid{fst})\;(\Varid{h_{State1}}\;(\Conid{Op}\;(\Conid{Inr}\;(\Varid{fmap}\;\Varid{h_{ND+f}}\;(\Varid{fmap}\;\Varid{(\Leftrightarrow)}\;\Varid{op}))))){}\<[E]%
\\
\>[3]{}\mathrel{=}\mbox{\commentbegin ~ \ensuremath{\Varid{fmap}} fusion  \commentend}{}\<[E]%
\\
\>[3]{}\hsindent{2}{}\<[5]%
\>[5]{}\Varid{fmap}\;(\Varid{fmap}\;\Varid{fst})\;(\Varid{h_{State1}}\;(\Conid{Op}\;(\Conid{Inr}\;(\Varid{fmap}\;(\Varid{h_{ND+f}}\hsdot{\circ }{.}\Varid{(\Leftrightarrow)})\;\Varid{op})))){}\<[E]%
\\
\>[3]{}\mathrel{=}\mbox{\commentbegin ~ definition of \ensuremath{\Varid{h_{State1}}}  \commentend}{}\<[E]%
\\
\>[3]{}\hsindent{2}{}\<[5]%
\>[5]{}\Varid{fmap}\;(\Varid{fmap}\;\Varid{fst})\;(\lambda \Varid{s}\to \Conid{Op}\;(\Varid{fmap}\;(\mathbin{\$}\Varid{s})\;(\Varid{fmap}\;\Varid{h_{State1}}\;(\Varid{fmap}\;(\Varid{h_{ND+f}}\hsdot{\circ }{.}\Varid{(\Leftrightarrow)})\;\Varid{op})))){}\<[E]%
\\
\>[3]{}\mathrel{=}\mbox{\commentbegin ~ \ensuremath{\Varid{fmap}} fusion  \commentend}{}\<[E]%
\\
\>[3]{}\hsindent{2}{}\<[5]%
\>[5]{}\Varid{fmap}\;(\Varid{fmap}\;\Varid{fst})\;(\lambda \Varid{s}\to \Conid{Op}\;(\Varid{fmap}\;(\mathbin{\$}\Varid{s})\;(\Varid{fmap}\;(\Varid{h_{State1}}\hsdot{\circ }{.}\Varid{h_{ND+f}}\hsdot{\circ }{.}\Varid{(\Leftrightarrow)})\;\Varid{op}))){}\<[E]%
\\
\>[3]{}\mathrel{=}\mbox{\commentbegin ~ definition of \ensuremath{\Varid{fmap}}  \commentend}{}\<[E]%
\\
\>[3]{}\hsindent{2}{}\<[5]%
\>[5]{}\lambda \Varid{s}\to \Varid{fmap}\;\Varid{fst}\;(\Conid{Op}\;(\Varid{fmap}\;(\mathbin{\$}\Varid{s})\;(\Varid{fmap}\;(\Varid{h_{State1}}\hsdot{\circ }{.}\Varid{h_{ND+f}}\hsdot{\circ }{.}\Varid{(\Leftrightarrow)})\;\Varid{op}))){}\<[E]%
\\
\>[3]{}\mathrel{=}\mbox{\commentbegin ~ definition of \ensuremath{\Varid{fmap}}  \commentend}{}\<[E]%
\\
\>[3]{}\hsindent{2}{}\<[5]%
\>[5]{}\lambda \Varid{s}\to \Conid{Op}\;(\Varid{fmap}\;(\Varid{fmap}\;\Varid{fst})\;(\Varid{fmap}\;(\mathbin{\$}\Varid{s})\;(\Varid{fmap}\;(\Varid{h_{State1}}\hsdot{\circ }{.}\Varid{h_{ND+f}}\hsdot{\circ }{.}\Varid{(\Leftrightarrow)})\;\Varid{op}))){}\<[E]%
\\
\>[3]{}\mathrel{=}\mbox{\commentbegin ~ \ensuremath{\Varid{fmap}} fusion  \commentend}{}\<[E]%
\\
\>[3]{}\hsindent{2}{}\<[5]%
\>[5]{}\lambda \Varid{s}\to \Conid{Op}\;(\Varid{fmap}\;(\Varid{fmap}\;\Varid{fst}\hsdot{\circ }{.}(\mathbin{\$}\Varid{s}))\;(\Varid{fmap}\;(\Varid{h_{State1}}\hsdot{\circ }{.}\Varid{h_{ND+f}}\hsdot{\circ }{.}\Varid{(\Leftrightarrow)})\;\Varid{op}))){}\<[E]%
\\
\>[3]{}\mathrel{=}\mbox{\commentbegin ~ \Cref{eq:comm-app-fmap}  \commentend}{}\<[E]%
\\
\>[3]{}\hsindent{2}{}\<[5]%
\>[5]{}\lambda \Varid{s}\to \Conid{Op}\;(\Varid{fmap}\;((\mathbin{\$}\Varid{s})\hsdot{\circ }{.}\Varid{fmap}\;(\Varid{fmap}\;\Varid{fst}))\;(\Varid{fmap}\;(\Varid{h_{State1}}\hsdot{\circ }{.}\Varid{h_{ND+f}}\hsdot{\circ }{.}\Varid{(\Leftrightarrow)})\;\Varid{op}))){}\<[E]%
\\
\>[3]{}\mathrel{=}\mbox{\commentbegin ~ \ensuremath{\Varid{fmap}} fission  \commentend}{}\<[E]%
\\
\>[3]{}\hsindent{2}{}\<[5]%
\>[5]{}\lambda \Varid{s}\to \Conid{Op}\;((\Varid{fmap}\;(\mathbin{\$}\Varid{s})\hsdot{\circ }{.}\Varid{fmap}\;(\Varid{fmap}\;(\Varid{fmap}\;\Varid{fst})))\;(\Varid{fmap}\;(\Varid{h_{State1}}\hsdot{\circ }{.}\Varid{h_{ND+f}}\hsdot{\circ }{.}\Varid{(\Leftrightarrow)})\;\Varid{op})){}\<[E]%
\\
\>[3]{}\mathrel{=}\mbox{\commentbegin ~ \ensuremath{\Varid{fmap}} fusion  \commentend}{}\<[E]%
\\
\>[3]{}\hsindent{2}{}\<[5]%
\>[5]{}\lambda \Varid{s}\to \Conid{Op}\;(\Varid{fmap}\;(\mathbin{\$}\Varid{s})\;(\Varid{fmap}\;(\Varid{fmap}\;(\Varid{fmap}\;\Varid{fst})\hsdot{\circ }{.}\Varid{h_{State1}}\hsdot{\circ }{.}\Varid{h_{ND+f}}\hsdot{\circ }{.}\Varid{(\Leftrightarrow)})\;\Varid{op})){}\<[E]%
\\
\>[3]{}\mathrel{=}\mbox{\commentbegin ~ definition of \ensuremath{\Varid{h_{Global}}}  \commentend}{}\<[E]%
\\
\>[3]{}\hsindent{2}{}\<[5]%
\>[5]{}\lambda \Varid{s}\to \Conid{Op}\;(\Varid{fmap}\;(\mathbin{\$}\Varid{s})\;(\Varid{fmap}\;\Varid{h_{Global}}\;\Varid{op})){}\<[E]%
\\
\>[3]{}\mathrel{=}\mbox{\commentbegin ~ define \ensuremath{\Varid{fwd}_{\Varid{LHS}}\;\Varid{op}\mathrel{=}\lambda \Varid{s}\to \Conid{Op}\;(\Varid{fmap}\;(\mathbin{\$}\Varid{s})\;\Varid{op}}  \commentend}{}\<[E]%
\\
\>[3]{}\hsindent{2}{}\<[5]%
\>[5]{}\Varid{fwd}_{\Varid{LHS}}\;(\Varid{fmap}\;\Varid{h_{Global}}\;\Varid{op}){}\<[E]%
\ColumnHook
\end{hscode}\resethooks
\indentend We conclude that this fusion subcondition holds provided that:
\indentbegin \begin{hscode}\SaveRestoreHook
\column{B}{@{}>{\hspre}l<{\hspost}@{}}%
\column{3}{@{}>{\hspre}l<{\hspost}@{}}%
\column{E}{@{}>{\hspre}l<{\hspost}@{}}%
\>[3]{}\Varid{fwd}_{\Varid{LHS}}\mathbin{::}\Conid{Functor}\;\Varid{f}\Rightarrow \Varid{f}\;(\Varid{s}\to \Conid{Free}\;\Varid{f}\;[\mskip1.5mu \Varid{a}\mskip1.5mu])\to (\Varid{s}\to \Conid{Free}\;\Varid{f}\;[\mskip1.5mu \Varid{a}\mskip1.5mu]){}\<[E]%
\\
\>[3]{}\Varid{fwd}_{\Varid{LHS}}\;\Varid{op}\mathrel{=}\lambda \Varid{s}\to \Conid{Op}\;(\Varid{fmap}\;(\mathbin{\$}\Varid{s})\;\Varid{op}){}\<[E]%
\ColumnHook
\end{hscode}\resethooks
\indentend \subsection{Equating the Fused Sides}

We observe that the following equations hold trivially. 
\begin{eqnarray*}
\ensuremath{\Varid{gen}_{\Varid{LHS}}} & = & \ensuremath{\Varid{gen}_{\Varid{RHS}}} \\
\ensuremath{\Varid{alg}_{\Varid{LHS}}^{\Varid{S}}} & = & \ensuremath{\Varid{alg}_{\Varid{RHS}}^{\Varid{S}}} \\
\ensuremath{\Varid{alg}_{\Varid{LHS}}^{\Varid{ND}}} & = & \ensuremath{\Varid{alg}_{\Varid{RHS}}^{\Varid{ND}}} \\
\ensuremath{\Varid{fwd}_{\Varid{LHS}}} & = & \ensuremath{\Varid{fwd}_{\Varid{RHS}}}
\end{eqnarray*}

Therefore, the main theorem holds.

\subsection{Key Lemma: State Restoration}

The key lemma is the following, which guarantees that
\ensuremath{\Varid{local2global}} restores the initial state after a computation.

\begin{lemma}[State is Restored] \label{lemma:state-restore} \ \\
\begin{eqnarray*}
& \ensuremath{\Varid{h_{State1}}\;(\Varid{h_{ND+f}}\;(\Varid{(\Leftrightarrow)}\;(\Varid{local2global}\;\Varid{t})))\;\Varid{s}} & \\
& = & \\
& \ensuremath{\mathbf{do}\;(\Varid{x},\anonymous )\leftarrow \Varid{h_{State1}}\;(\Varid{h_{ND+f}}\;(\Varid{(\Leftrightarrow)}\;(\Varid{local2global}\;\Varid{t})))\;\Varid{s};\Varid{\eta}\;(\Varid{x},\Varid{s})} &
\end{eqnarray*}
\end{lemma}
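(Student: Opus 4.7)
The proof will proceed by structural induction on the computation \ensuremath{\Varid{t}\mathbin{::}\Conid{Free}\;(\Varid{State_{F}}\;\Varid{s}\mathrel{{:}{+}{:}}\Varid{Nondet_{F}}\mathrel{{:}{+}{:}}\Varid{f})\;\Varid{a}}. Throughout, write \ensuremath{\Conid{H}\;\Varid{t}\;\Varid{s}\mathrel{=}\Varid{h_{State1}}\;(\Varid{h_{ND+f}}\;(\Varid{(\Leftrightarrow)}\;(\Varid{local2global}\;\Varid{t})))\;\Varid{s}} for the composite on the left-hand side. The statement to establish is that \ensuremath{\Conid{H}\;\Varid{t}\;\Varid{s}\mathrel{=}\mathbf{do}\;(\Varid{x},\anonymous )\leftarrow \Conid{H}\;\Varid{t}\;\Varid{s};\Varid{\eta}\;(\Varid{x},\Varid{s})}. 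Since \ensuremath{\Varid{local2global}\mathrel{=}\Varid{fold}\;\Conid{Var}\;\Varid{alg}}, we can unfold it one layer at a time and analyse its behaviour by cases on the top-level constructor of \ensuremath{\Varid{t}}.

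For the leaf case \ensuremath{\Varid{t}\mathrel{=}\Conid{Var}\;\Varid{x}}, direct calculation using the definitions of \ensuremath{\Varid{(\Leftrightarrow)}}, \ensuremath{\Varid{h_{ND+f}}}, and \ensuremath{\Varid{h_{State1}}} shows \ensuremath{\Conid{H}\;(\Conid{Var}\;\Varid{x})\;\Varid{s}\mathrel{=}\Conid{Var}\;([\mskip1.5mu \Varid{x}\mskip1.5mu],\Varid{s})}, from which the result is immediate. For \ensuremath{\Varid{t}\mathrel{=}\Conid{Op}\;(\Conid{Inl}\;(\Conid{Get}\;\Varid{k}))}, the translation is the identity on \ensuremath{\Conid{Get}}, so after pushing \ensuremath{\Varid{(\Leftrightarrow)}}, \ensuremath{\Varid{h_{ND+f}}}, and \ensuremath{\Varid{h_{State1}}} through, the computation reduces to \ensuremath{\Conid{H}\;(\Varid{k}\;\Varid{s})\;\Varid{s}}, and the inductive hypothesis on \ensuremath{\Varid{k}\;\Varid{s}} finishes the case. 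The \ensuremath{\Conid{Fail}} case is equally immediate, since \ensuremath{\Conid{H}\;(\Conid{Op}\;(\Conid{Inr}\;(\Conid{Inl}\;\Conid{Fail})))\;\Varid{s}\mathrel{=}\Conid{Var}\;([\mskip1.5mu \mskip1.5mu],\Varid{s})}. For the forwarded case \ensuremath{\Conid{Op}\;(\Conid{Inr}\;(\Conid{Inr}\;\Varid{op}))}, the calculation yields \ensuremath{\Conid{Op}\;(\Varid{fmap}\;(\lambda \Varid{k}\to \Conid{H}\;\Varid{k}\;\Varid{s})\;\Varid{op})}; because the outer \ensuremath{\Conid{Op}} is in the functor \ensuremath{\Varid{f}}, which does not touch the state thread, applying the inductive hypothesis under the \ensuremath{\Varid{fmap}} (and then reassociating the monadic binds) closes the case.

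The two interesting cases are \ensuremath{\Conid{Put}} and \ensuremath{\Conid{Or}}, and these are where the translation \ensuremath{\Varid{local2global}} actually does work. For \ensuremath{\Varid{t}\mathrel{=}\Conid{Op}\;(\Conid{Inl}\;(\Conid{Put}\;\Varid{s'}\;\Varid{k}))}, \ensuremath{\Varid{local2global}} rewrites the operation as \ensuremath{\Varid{put_{R}}\;\Varid{s'}>\!\!>\Varid{k}}, which unfolds into a \ensuremath{\Conid{Get}} followed by a nondeterministic choice between \ensuremath{\Conid{Put}\;\Varid{s'}>\!\!>\Varid{k}} and a failing ``side'' branch \ensuremath{\Conid{Put}\;\Varid{s}>\!\!>\Conid{Fail}}. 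The calculation already performed in the discovery of \ensuremath{\Varid{alg}_{\Varid{LHS}}^{\Varid{S}}} for the \ensuremath{\Conid{Put}} case shows that, modulo monad laws and the definition of \ensuremath{\Varid{liftM2}}, \ensuremath{\Conid{H}\;(\Conid{Op}\;(\Conid{Inl}\;(\Conid{Put}\;\Varid{s'}\;\Varid{k})))\;\Varid{s}} reduces to \ensuremath{\mathbf{do}\;(\Varid{x},\anonymous )\leftarrow \Conid{H}\;\Varid{k}\;\Varid{s'};\Varid{\eta}\;(\Varid{x},\Varid{s})}, where the side branch is precisely what overwrites the ``wrong'' final state produced by running \ensuremath{\Varid{k}} with the new state \ensuremath{\Varid{s'}}. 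Applying the inductive hypothesis to \ensuremath{\Varid{k}} allows us to replace the inner bind by one that returns \ensuremath{(\Varid{x},\Varid{s'})}, which then reassembles into the desired right-hand side.

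The \ensuremath{\Conid{Or}} case is similar but requires threading the inductive hypothesis through both branches. After unfolding, \ensuremath{\Conid{H}\;(\Conid{Op}\;(\Conid{Inr}\;(\Conid{Inl}\;(\Conid{Or}\;\Varid{p}\;\Varid{q}))))\;\Varid{s}} becomes \ensuremath{\mathbf{do}\;(\Varid{x},\Varid{s}_{1})\leftarrow \Conid{H}\;\Varid{p}\;\Varid{s};(\Varid{y},\Varid{s}_{2})\leftarrow \Conid{H}\;\Varid{q}\;\Varid{s}_{1};\Varid{\eta}\;(\Varid{x}+\!\!+\Varid{y},\Varid{s}_{2})} using \Cref{lemma:dist-hState1}; applying the inductive hypothesis to \ensuremath{\Varid{p}} forces \ensuremath{\Varid{s}_{1}\mathrel{=}\Varid{s}}, and then applying it to \ensuremath{\Varid{q}} (at state \ensuremath{\Varid{s}}) forces \ensuremath{\Varid{s}_{2}\mathrel{=}\Varid{s}}, giving the desired form. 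The main obstacle throughout is purely bookkeeping: each case requires a long chain of rewrites using monad laws, \ensuremath{\Varid{fmap}} fusion, and the definitions of the handlers, and the \ensuremath{\Conid{Put}} case in particular hinges on spotting that the failing side branch in \ensuremath{\Varid{put_{R}}} interacts with the subsequent global-state computation exactly so as to overwrite whatever state \ensuremath{\Varid{k}} leaves behind, restoring the caller's \ensuremath{\Varid{s}}.
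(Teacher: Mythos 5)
Your proposal is correct and follows essentially the same route as the paper's proof: structural induction on \ensuremath{\Varid{t}}, with the \ensuremath{\Conid{Put}} case resolved by observing that the failing side branch of \ensuremath{\Varid{put_{R}}} overwrites the state back to \ensuremath{\Varid{s}}, and the \ensuremath{\Conid{Or}} case resolved by distributing \ensuremath{\Varid{h_{State1}}} over the sequenced branches (\Cref{lemma:dist-hState1}) and applying the induction hypothesis to each. The only minor difference is that in the \ensuremath{\Conid{Put}} case the induction hypothesis on \ensuremath{\Varid{k}} is not actually needed, since the final state of \ensuremath{\Varid{k}}'s run is discarded by the subsequent \ensuremath{\Varid{put}\;\Varid{s}} in the side branch; the paper closes that case with monad laws alone.
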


\begin{proof}
The proof proceeds by structural induction on \ensuremath{\Varid{t}}.

\noindent \mbox{\underline{case \ensuremath{\Varid{t}\mathrel{=}\Conid{Var}\;\Varid{y}}}}\indentbegin \begin{hscode}\SaveRestoreHook
\column{B}{@{}>{\hspre}l<{\hspost}@{}}%
\column{3}{@{}>{\hspre}l<{\hspost}@{}}%
\column{6}{@{}>{\hspre}l<{\hspost}@{}}%
\column{E}{@{}>{\hspre}l<{\hspost}@{}}%
\>[6]{}\Varid{h_{State1}}\;(\Varid{h_{ND+f}}\;(\Varid{(\Leftrightarrow)}\;(\Varid{local2global}\;(\Conid{Var}\;\Varid{y}))))\;\Varid{s}{}\<[E]%
\\
\>[3]{}\mathrel{=}\mbox{\commentbegin ~  definition of \ensuremath{\Varid{local2global}}   \commentend}{}\<[E]%
\\
\>[3]{}\hsindent{3}{}\<[6]%
\>[6]{}\Varid{h_{State1}}\;(\Varid{h_{ND+f}}\;(\Varid{(\Leftrightarrow)}\;(\Conid{Var}\;\Varid{y})))\;\Varid{s}{}\<[E]%
\\
\>[3]{}\mathrel{=}\mbox{\commentbegin ~  definition of \ensuremath{\Varid{(\Leftrightarrow)}}   \commentend}{}\<[E]%
\\
\>[3]{}\hsindent{3}{}\<[6]%
\>[6]{}\Varid{h_{State1}}\;(\Varid{h_{ND+f}}\;(\Conid{Var}\;\Varid{y}))\;\Varid{s}{}\<[E]%
\\
\>[3]{}\mathrel{=}\mbox{\commentbegin ~  definition of \ensuremath{\Varid{h_{ND+f}}}   \commentend}{}\<[E]%
\\
\>[3]{}\hsindent{3}{}\<[6]%
\>[6]{}\Varid{h_{State1}}\;(\Conid{Var}\;[\mskip1.5mu \Varid{y}\mskip1.5mu])\;\Varid{s}{}\<[E]%
\\
\>[3]{}\mathrel{=}\mbox{\commentbegin ~  definition of \ensuremath{\Varid{h_{State1}}}   \commentend}{}\<[E]%
\\
\>[3]{}\hsindent{3}{}\<[6]%
\>[6]{}\Conid{Var}\;([\mskip1.5mu \Varid{y}\mskip1.5mu],\Varid{s}){}\<[E]%
\\
\>[3]{}\mathrel{=}\mbox{\commentbegin ~  monad law  \commentend}{}\<[E]%
\\
\>[3]{}\hsindent{3}{}\<[6]%
\>[6]{}\mathbf{do}\;(\Varid{x},\anonymous )\leftarrow \Conid{Var}\;([\mskip1.5mu \Varid{y}\mskip1.5mu],\Varid{s});\Conid{Var}\;(\Varid{x},\Varid{s}){}\<[E]%
\\
\>[3]{}\mathrel{=}\mbox{\commentbegin ~  definition of \ensuremath{\Varid{local2global},\Varid{h_{ND+f}},\Varid{(\Leftrightarrow)},\Varid{h_{State1}}} and \ensuremath{\Varid{\eta}}   \commentend}{}\<[E]%
\\
\>[3]{}\hsindent{3}{}\<[6]%
\>[6]{}\mathbf{do}\;(\Varid{x},\anonymous )\leftarrow \Varid{h_{State1}}\;(\Varid{h_{ND+f}}\;(\Varid{(\Leftrightarrow)}\;(\Varid{local2global}\;(\Conid{Var}\;\Varid{y}))))\;\Varid{s};\Varid{\eta}\;(\Varid{x},\Varid{s}){}\<[E]%
\ColumnHook
\end{hscode}\resethooks
\indentend \noindent \mbox{\underline{case \ensuremath{\Varid{t}\mathrel{=}\Conid{Op}\;(\Conid{Inl}\;(\Conid{Get}\;\Varid{k}))}}}\indentbegin \begin{hscode}\SaveRestoreHook
\column{B}{@{}>{\hspre}l<{\hspost}@{}}%
\column{3}{@{}>{\hspre}l<{\hspost}@{}}%
\column{6}{@{}>{\hspre}l<{\hspost}@{}}%
\column{E}{@{}>{\hspre}l<{\hspost}@{}}%
\>[6]{}\Varid{h_{State1}}\;(\Varid{h_{ND+f}}\;(\Varid{(\Leftrightarrow)}\;(\Varid{local2global}\;(\Conid{Op}\;(\Conid{Inl}\;(\Conid{Get}\;\Varid{k}))))))\;\Varid{s}{}\<[E]%
\\
\>[3]{}\mathrel{=}\mbox{\commentbegin ~  definition of \ensuremath{\Varid{local2global}}   \commentend}{}\<[E]%
\\
\>[3]{}\hsindent{3}{}\<[6]%
\>[6]{}\Varid{h_{State1}}\;(\Varid{h_{ND+f}}\;(\Varid{(\Leftrightarrow)}\;(\Conid{Op}\;(\Conid{Inl}\;(\Conid{Get}\;(\Varid{local2global}\hsdot{\circ }{.}\Varid{k}))))))\;\Varid{s}{}\<[E]%
\\
\>[3]{}\mathrel{=}\mbox{\commentbegin ~  definition of \ensuremath{\Varid{(\Leftrightarrow)}}   \commentend}{}\<[E]%
\\
\>[3]{}\hsindent{3}{}\<[6]%
\>[6]{}\Varid{h_{State1}}\;(\Varid{h_{ND+f}}\;(\Conid{Op}\;(\Conid{Inr}\;(\Conid{Inl}\;(\Conid{Get}\;(\Varid{(\Leftrightarrow)}\hsdot{\circ }{.}\Varid{local2global}\hsdot{\circ }{.}\Varid{k}))))))\;\Varid{s}{}\<[E]%
\\
\>[3]{}\mathrel{=}\mbox{\commentbegin ~  definition of \ensuremath{\Varid{h_{ND+f}}}   \commentend}{}\<[E]%
\\
\>[3]{}\hsindent{3}{}\<[6]%
\>[6]{}\Varid{h_{State1}}\;(\Conid{Op}\;(\Conid{Inl}\;(\Conid{Get}\;(\Varid{h_{ND+f}}\hsdot{\circ }{.}\Varid{(\Leftrightarrow)}\hsdot{\circ }{.}\Varid{local2global}\hsdot{\circ }{.}\Varid{k}))))\;\Varid{s}{}\<[E]%
\\
\>[3]{}\mathrel{=}\mbox{\commentbegin ~  definition of \ensuremath{\Varid{h_{State1}}}   \commentend}{}\<[E]%
\\
\>[3]{}\hsindent{3}{}\<[6]%
\>[6]{}(\Varid{h_{State1}}\hsdot{\circ }{.}\Varid{h_{ND+f}}\hsdot{\circ }{.}\Varid{(\Leftrightarrow)}\hsdot{\circ }{.}\Varid{local2global}\hsdot{\circ }{.}\Varid{k})\;\Varid{s}\;\Varid{s}{}\<[E]%
\\
\>[3]{}\mathrel{=}\mbox{\commentbegin ~  definition of \ensuremath{(\hsdot{\circ }{.})}   \commentend}{}\<[E]%
\\
\>[3]{}\hsindent{3}{}\<[6]%
\>[6]{}(\Varid{h_{State1}}\;(\Varid{h_{ND+f}}\;(\Varid{(\Leftrightarrow)}\;(\Varid{local2global}\;(\Varid{k}\;\Varid{s})))))\;\Varid{s}{}\<[E]%
\\
\>[3]{}\mathrel{=}\mbox{\commentbegin ~  induction hypothesis   \commentend}{}\<[E]%
\\
\>[3]{}\hsindent{3}{}\<[6]%
\>[6]{}\mathbf{do}\;(\Varid{x},\anonymous )\leftarrow \Varid{h_{State1}}\;(\Varid{(\Leftrightarrow)}\;(\Varid{h_{ND+f}}\;(\Varid{local2global}\;(\Varid{k}\;\Varid{s}))))\;\Varid{s};\Varid{\eta}\;(\Varid{x},\Varid{s}){}\<[E]%
\\
\>[3]{}\mathrel{=}\mbox{\commentbegin ~  definition of \ensuremath{\Varid{local2global},\Varid{(\Leftrightarrow)},\Varid{h_{ND+f}},\Varid{h_{State1}}}   \commentend}{}\<[E]%
\\
\>[3]{}\hsindent{3}{}\<[6]%
\>[6]{}\mathbf{do}\;(\Varid{x},\anonymous )\leftarrow \Varid{h_{State1}}\;(\Varid{h_{ND+f}}\;(\Varid{local2global}\;(\Conid{Op}\;(\Conid{Inl}\;(\Conid{Get}\;\Varid{k})))))\;\Varid{s};\Varid{\eta}\;(\Varid{x},\Varid{s}){}\<[E]%
\ColumnHook
\end{hscode}\resethooks
\indentend \noindent \mbox{\underline{case \ensuremath{\Varid{t}\mathrel{=}\Conid{Op}\;(\Conid{Inr}\;(\Conid{Inl}\;\Conid{Fail}))}}}\indentbegin \begin{hscode}\SaveRestoreHook
\column{B}{@{}>{\hspre}l<{\hspost}@{}}%
\column{3}{@{}>{\hspre}l<{\hspost}@{}}%
\column{6}{@{}>{\hspre}l<{\hspost}@{}}%
\column{E}{@{}>{\hspre}l<{\hspost}@{}}%
\>[6]{}\Varid{h_{State1}}\;(\Varid{h_{ND+f}}\;(\Varid{(\Leftrightarrow)}\;(\Varid{local2global}\;(\Conid{Op}\;(\Conid{Inr}\;(\Conid{Inl}\;\Conid{Fail}))))))\;\Varid{s}{}\<[E]%
\\
\>[3]{}\mathrel{=}\mbox{\commentbegin ~  definition of \ensuremath{\Varid{local2global}}   \commentend}{}\<[E]%
\\
\>[3]{}\hsindent{3}{}\<[6]%
\>[6]{}\Varid{h_{State1}}\;(\Varid{h_{ND+f}}\;(\Varid{(\Leftrightarrow)}\;(\Conid{Op}\;(\Conid{Inr}\;(\Conid{Inl}\;\Conid{Fail})))))\;\Varid{s}{}\<[E]%
\\
\>[3]{}\mathrel{=}\mbox{\commentbegin ~  definition of \ensuremath{\Varid{(\Leftrightarrow)}}   \commentend}{}\<[E]%
\\
\>[3]{}\hsindent{3}{}\<[6]%
\>[6]{}\Varid{h_{State1}}\;(\Varid{h_{ND+f}}\;(\Conid{Op}\;(\Conid{Inl}\;\Conid{Fail})))\;\Varid{s}{}\<[E]%
\\
\>[3]{}\mathrel{=}\mbox{\commentbegin ~  definition of \ensuremath{\Varid{h_{ND+f}}}   \commentend}{}\<[E]%
\\
\>[3]{}\hsindent{3}{}\<[6]%
\>[6]{}\Varid{h_{State1}}\;(\Conid{Var}\;[\mskip1.5mu \mskip1.5mu])\;\Varid{s}{}\<[E]%
\\
\>[3]{}\mathrel{=}\mbox{\commentbegin ~  definition of \ensuremath{\Varid{h_{State1}}}   \commentend}{}\<[E]%
\\
\>[3]{}\hsindent{3}{}\<[6]%
\>[6]{}\Conid{Var}\;([\mskip1.5mu \mskip1.5mu],\Varid{s}){}\<[E]%
\\
\>[3]{}\mathrel{=}\mbox{\commentbegin ~  monad law  \commentend}{}\<[E]%
\\
\>[3]{}\hsindent{3}{}\<[6]%
\>[6]{}\mathbf{do}\;(\Varid{x},\anonymous )\leftarrow \Conid{Var}\;([\mskip1.5mu \mskip1.5mu],\Varid{s});\Conid{Var}\;(\Varid{x},\Varid{s}){}\<[E]%
\\
\>[3]{}\mathrel{=}\mbox{\commentbegin ~  definition of \ensuremath{\Varid{local2global},\Varid{(\Leftrightarrow)},\Varid{h_{ND+f}},\Varid{h_{State1}}}   \commentend}{}\<[E]%
\\
\>[3]{}\hsindent{3}{}\<[6]%
\>[6]{}\mathbf{do}\;(\Varid{x},\anonymous )\leftarrow \Varid{h_{State1}}\;(\Varid{h_{ND+f}}\;(\Varid{(\Leftrightarrow)}\;(\Varid{local2global}\;(\Conid{Op}\;(\Conid{Inr}\;(\Conid{Inl}\;\Conid{Fail}))))))\;\Varid{s};\Varid{\eta}\;(\Varid{x},\Varid{s}){}\<[E]%
\ColumnHook
\end{hscode}\resethooks
\indentend \noindent \mbox{\underline{case \ensuremath{\Varid{t}\mathrel{=}\Conid{Op}\;(\Conid{Inl}\;(\Conid{Put}\;\Varid{t}\;\Varid{k}))}}}\indentbegin \begin{hscode}\SaveRestoreHook
\column{B}{@{}>{\hspre}l<{\hspost}@{}}%
\column{3}{@{}>{\hspre}l<{\hspost}@{}}%
\column{6}{@{}>{\hspre}l<{\hspost}@{}}%
\column{10}{@{}>{\hspre}l<{\hspost}@{}}%
\column{15}{@{}>{\hspre}l<{\hspost}@{}}%
\column{20}{@{}>{\hspre}l<{\hspost}@{}}%
\column{26}{@{}>{\hspre}l<{\hspost}@{}}%
\column{37}{@{}>{\hspre}l<{\hspost}@{}}%
\column{39}{@{}>{\hspre}l<{\hspost}@{}}%
\column{50}{@{}>{\hspre}l<{\hspost}@{}}%
\column{57}{@{}>{\hspre}l<{\hspost}@{}}%
\column{E}{@{}>{\hspre}l<{\hspost}@{}}%
\>[6]{}\Varid{h_{State1}}\;(\Varid{h_{ND+f}}\;(\Varid{(\Leftrightarrow)}\;(\Varid{local2global}\;(\Conid{Op}\;(\Conid{Inl}\;(\Conid{Put}\;\Varid{t}\;\Varid{k}))))))\;\Varid{s}{}\<[E]%
\\
\>[3]{}\mathrel{=}\mbox{\commentbegin ~  definition of \ensuremath{\Varid{local2global}}   \commentend}{}\<[E]%
\\
\>[3]{}\hsindent{3}{}\<[6]%
\>[6]{}\Varid{h_{State1}}\;(\Varid{h_{ND+f}}\;(\Varid{(\Leftrightarrow)}\;(\Varid{put_{R}}\;\Varid{t}>\!\!>\Varid{local2global}\;\Varid{k})))\;\Varid{s}{}\<[E]%
\\
\>[3]{}\mathrel{=}\mbox{\commentbegin ~  definition of \ensuremath{\Varid{put_{R}}}   \commentend}{}\<[E]%
\\
\>[3]{}\hsindent{3}{}\<[6]%
\>[6]{}\Varid{h_{State1}}\;(\Varid{h_{ND+f}}\;(\Varid{(\Leftrightarrow)}\;((\Varid{get}>\!\!>\!\!=\lambda \Varid{t'}\to \Varid{put}\;\Varid{t}\mathbin{\talloblong}\Varid{side}\;(\Varid{put}\;\Varid{t'}))>\!\!>\Varid{local2global}\;\Varid{k})))\;\Varid{s}{}\<[E]%
\\
\>[3]{}\mathrel{=}\mbox{\commentbegin ~  definition of \ensuremath{(\talloblong)}, \ensuremath{\Varid{get}}, \ensuremath{\Varid{put}}, \ensuremath{\Varid{side}} and \ensuremath{(>\!\!>\!\!=)}   \commentend}{}\<[E]%
\\
\>[3]{}\hsindent{3}{}\<[6]%
\>[6]{}\Varid{h_{State1}}\;(\Varid{h_{ND+f}}\;(\Varid{(\Leftrightarrow)}\;(\Conid{Op}\;(\Conid{Inl}\;(\Conid{Get}\;(\lambda \Varid{t'}\to {}\<[E]%
\\
\>[6]{}\hsindent{33}{}\<[39]%
\>[39]{}\Conid{Op}\;(\Conid{Inr}\;(\Conid{Inl}\;(\Conid{Or}\;{}\<[57]%
\>[57]{}(\Conid{Op}\;(\Conid{Inl}\;(\Conid{Put}\;\Varid{t}\;(\Varid{local2global}\;\Varid{k}))))\;{}\<[E]%
\\
\>[57]{}(\Conid{Op}\;(\Conid{Inl}\;(\Conid{Put}\;\Varid{t'}\;(\Conid{Op}\;(\Conid{Inr}\;(\Conid{Inl}\;\Conid{Fail})))))))))))))))\;\Varid{s}{}\<[E]%
\\
\>[3]{}\mathrel{=}\mbox{\commentbegin ~  definition of \ensuremath{\Varid{(\Leftrightarrow)}}   \commentend}{}\<[E]%
\\
\>[3]{}\hsindent{3}{}\<[6]%
\>[6]{}\Varid{h_{State1}}\;(\Varid{h_{ND+f}}\;(\Conid{Op}\;(\Conid{Inr}\;(\Conid{Inl}\;(\Conid{Get}\;(\lambda \Varid{t'}\to {}\<[E]%
\\
\>[6]{}\hsindent{31}{}\<[37]%
\>[37]{}\Conid{Op}\;(\Conid{Inl}\;(\Conid{Or}\;{}\<[50]%
\>[50]{}(\Conid{Op}\;(\Conid{Inr}\;(\Conid{Inl}\;(\Conid{Put}\;\Varid{t}\;(\Varid{(\Leftrightarrow)}\;(\Varid{local2global}\;\Varid{k}))))))\;{}\<[E]%
\\
\>[50]{}(\Conid{Op}\;(\Conid{Inr}\;(\Conid{Inl}\;(\Conid{Put}\;\Varid{t'}\;(\Conid{Op}\;(\Conid{Inl}\;\Conid{Fail}))))))))))))))\;\Varid{s}{}\<[E]%
\\
\>[3]{}\mathrel{=}\mbox{\commentbegin ~  definition of \ensuremath{\Varid{h_{ND+f}}}   \commentend}{}\<[E]%
\\
\>[3]{}\hsindent{3}{}\<[6]%
\>[6]{}\Varid{h_{State1}}\;(\Conid{Op}\;(\Conid{Inl}\;(\Conid{Get}\;(\lambda \Varid{t'}\to {}\<[E]%
\\
\>[6]{}\hsindent{20}{}\<[26]%
\>[26]{}\Varid{liftM2}\;(+\!\!+)\;{}\<[39]%
\>[39]{}(\Conid{Op}\;(\Conid{Inl}\;(\Conid{Put}\;\Varid{t}\;(\Varid{h_{ND+f}}\;(\Varid{(\Leftrightarrow)}\;(\Varid{local2global}\;\Varid{k}))))))\;{}\<[E]%
\\
\>[39]{}(\Conid{Op}\;(\Conid{Inl}\;(\Conid{Put}\;\Varid{t'}\;(\Conid{Var}\;[\mskip1.5mu \mskip1.5mu]))))))))\;\Varid{s}{}\<[E]%
\\
\>[3]{}\mathrel{=}\mbox{\commentbegin ~  definition of \ensuremath{\Varid{h_{State1}}}   \commentend}{}\<[E]%
\\
\>[3]{}\hsindent{3}{}\<[6]%
\>[6]{}\Varid{h_{State1}}\;(\Varid{liftM2}\;(+\!\!+)\;(\Conid{Op}\;(\Conid{Inl}\;(\Conid{Put}\;\Varid{t}\;(\Varid{h_{ND+f}}\;(\Varid{(\Leftrightarrow)}\;(\Varid{local2global}\;\Varid{k}))))))\;(\Conid{Op}\;(\Conid{Inl}\;(\Conid{Put}\;\Varid{s}\;(\Conid{Var}\;[\mskip1.5mu \mskip1.5mu])))))\;\Varid{s}{}\<[E]%
\\
\>[3]{}\mathrel{=}\mbox{\commentbegin ~  definition of \ensuremath{\Varid{liftM2}}  \commentend}{}\<[E]%
\\
\>[3]{}\hsindent{3}{}\<[6]%
\>[6]{}\Varid{h_{State1}}\;{}\<[15]%
\>[15]{}(\mathbf{do}\;{}\<[20]%
\>[20]{}\Varid{x}\leftarrow \Conid{Op}\;(\Conid{Inl}\;(\Conid{Put}\;\Varid{t}\;(\Varid{h_{ND+f}}\;(\Varid{(\Leftrightarrow)}\;(\Varid{local2global}\;\Varid{k}))))){}\<[E]%
\\
\>[20]{}\Varid{y}\leftarrow \Conid{Op}\;(\Conid{Inl}\;(\Conid{Put}\;\Varid{s}\;(\Conid{Var}\;[\mskip1.5mu \mskip1.5mu]))){}\<[E]%
\\
\>[20]{}\Conid{Var}\;(\Varid{x}+\!\!+\Varid{y}){}\<[E]%
\\
\>[15]{})\;\Varid{s}{}\<[E]%
\\
\>[3]{}\mathrel{=}\mbox{\commentbegin ~  Lemma~\ref{lemma:dist-hState1}  \commentend}{}\<[E]%
\\
\>[3]{}\hsindent{3}{}\<[6]%
\>[6]{}\mathbf{do}\;{}\<[10]%
\>[10]{}(\Varid{x},\Varid{s}_{1})\leftarrow \Varid{h_{State1}}\;(\Conid{Op}\;(\Conid{Inl}\;(\Conid{Put}\;\Varid{t}\;(\Varid{h_{ND+f}}\;(\Varid{(\Leftrightarrow)}\;(\Varid{local2global}\;\Varid{k}))))))\;\Varid{s}{}\<[E]%
\\
\>[10]{}(\Varid{y},\Varid{s}_{2})\leftarrow \Varid{h_{State1}}\;(\Conid{Op}\;(\Conid{Inl}\;(\Conid{Put}\;\Varid{s}\;(\Conid{Var}\;[\mskip1.5mu \mskip1.5mu]))))\;\Varid{s}_{1}{}\<[E]%
\\
\>[10]{}\Conid{Var}\;(\Varid{x}+\!\!+\Varid{y},\Varid{s}_{2}){}\<[E]%
\\
\>[3]{}\mathrel{=}\mbox{\commentbegin ~  definition of \ensuremath{\Varid{h_{State1}}}  \commentend}{}\<[E]%
\\
\>[3]{}\hsindent{3}{}\<[6]%
\>[6]{}\mathbf{do}\;{}\<[10]%
\>[10]{}(\Varid{x},\Varid{s}_{1})\leftarrow \Varid{h_{State1}}\;(\Varid{h_{ND+f}}\;(\Varid{(\Leftrightarrow)}\;(\Varid{local2global}\;\Varid{k})))\;\Varid{t}{}\<[E]%
\\
\>[10]{}(\Varid{y},\Varid{s}_{2})\leftarrow \Conid{Var}\;([\mskip1.5mu \mskip1.5mu],\Varid{s}){}\<[E]%
\\
\>[10]{}\Conid{Var}\;(\Varid{x}+\!\!+\Varid{y},\Varid{s}_{2}){}\<[E]%
\\
\>[3]{}\mathrel{=}\mbox{\commentbegin ~  monad laws  \commentend}{}\<[E]%
\\
\>[3]{}\hsindent{3}{}\<[6]%
\>[6]{}\mathbf{do}\;{}\<[10]%
\>[10]{}(\Varid{x},\anonymous )\leftarrow \Varid{h_{State1}}\;(\Varid{h_{ND+f}}\;(\Varid{(\Leftrightarrow)}\;(\Varid{local2global}\;\Varid{k})))\;\Varid{t}{}\<[E]%
\\
\>[10]{}\Conid{Var}\;(\Varid{x}+\!\!+[\mskip1.5mu \mskip1.5mu],\Varid{s}){}\<[E]%
\\
\>[3]{}\mathrel{=}\mbox{\commentbegin ~  right unit of \ensuremath{(+\!\!+)}  \commentend}{}\<[E]%
\\
\>[3]{}\hsindent{3}{}\<[6]%
\>[6]{}\mathbf{do}\;{}\<[10]%
\>[10]{}(\Varid{x},\anonymous )\leftarrow \Varid{h_{State1}}\;(\Varid{h_{ND+f}}\;(\Varid{(\Leftrightarrow)}\;(\Varid{local2global}\;\Varid{k})))\;\Varid{t}{}\<[E]%
\\
\>[10]{}\Conid{Var}\;(\Varid{x},\Varid{s}){}\<[E]%
\\
\>[3]{}\mathrel{=}\mbox{\commentbegin ~  monad laws   \commentend}{}\<[E]%
\\
\>[3]{}\hsindent{3}{}\<[6]%
\>[6]{}\mathbf{do}\;{}\<[10]%
\>[10]{}(\Varid{x},\anonymous )\leftarrow \mathbf{do}\;\{\mskip1.5mu (\Varid{x},\anonymous )\leftarrow \Varid{h_{State1}}\;(\Varid{h_{ND+f}}\;(\Varid{(\Leftrightarrow)}\;(\Varid{local2global}\;\Varid{k})))\;\Varid{t};\Varid{\eta}\;(\Varid{x},\Varid{s})\mskip1.5mu\}{}\<[E]%
\\
\>[10]{}\Conid{Var}\;(\Varid{x},\Varid{s}){}\<[E]%
\\
\>[3]{}\mathrel{=}\mbox{\commentbegin ~  deriviation in reverse   \commentend}{}\<[E]%
\\
\>[3]{}\hsindent{3}{}\<[6]%
\>[6]{}\mathbf{do}\;{}\<[10]%
\>[10]{}(\Varid{x},\anonymous )\leftarrow \Varid{h_{State1}}\;(\Varid{h_{ND+f}}\;(\Varid{(\Leftrightarrow)}\;(\Varid{local2global}\;(\Conid{Op}\;(\Conid{Inl}\;(\Conid{Put}\;\Varid{t}\;\Varid{k}))))))\;\Varid{s}{}\<[E]%
\\
\>[10]{}\Conid{Var}\;(\Varid{x},\Varid{s}){}\<[E]%
\ColumnHook
\end{hscode}\resethooks
\indentend %

\noindent \mbox{\underline{case \ensuremath{\Varid{t}\mathrel{=}\Conid{Op}\;(\Conid{Inr}\;(\Conid{Inl}\;(\Conid{Or}\;\Varid{p}\;\Varid{q})))}}}\indentbegin \begin{hscode}\SaveRestoreHook
\column{B}{@{}>{\hspre}l<{\hspost}@{}}%
\column{3}{@{}>{\hspre}l<{\hspost}@{}}%
\column{6}{@{}>{\hspre}l<{\hspost}@{}}%
\column{10}{@{}>{\hspre}l<{\hspost}@{}}%
\column{12}{@{}>{\hspre}l<{\hspost}@{}}%
\column{14}{@{}>{\hspre}l<{\hspost}@{}}%
\column{16}{@{}>{\hspre}l<{\hspost}@{}}%
\column{19}{@{}>{\hspre}l<{\hspost}@{}}%
\column{E}{@{}>{\hspre}l<{\hspost}@{}}%
\>[6]{}\Varid{h_{State1}}\;(\Varid{h_{ND+f}}\;(\Varid{(\Leftrightarrow)}\;(\Varid{local2global}\;(\Conid{Op}\;(\Conid{Inr}\;(\Conid{Inl}\;(\Conid{Or}\;\Varid{p}\;\Varid{q})))))))\;\Varid{s}{}\<[E]%
\\
\>[3]{}\mathrel{=}\mbox{\commentbegin ~  definition of \ensuremath{\Varid{local2global}}   \commentend}{}\<[E]%
\\
\>[3]{}\hsindent{3}{}\<[6]%
\>[6]{}\Varid{h_{State1}}\;(\Varid{h_{ND+f}}\;(\Varid{(\Leftrightarrow)}\;(\Conid{Op}\;(\Conid{Inr}\;(\Conid{Inl}\;(\Conid{Or}\;(\Varid{local2global}\;\Varid{p})\;(\Varid{local2global}\;\Varid{q})))))))\;\Varid{s}{}\<[E]%
\\
\>[3]{}\mathrel{=}\mbox{\commentbegin ~  definition of \ensuremath{\Varid{(\Leftrightarrow)}}   \commentend}{}\<[E]%
\\
\>[3]{}\hsindent{3}{}\<[6]%
\>[6]{}\Varid{h_{State1}}\;(\Varid{h_{ND+f}}\;(\Conid{Op}\;(\Conid{Inl}\;(\Conid{Or}\;(\Varid{(\Leftrightarrow)}\;(\Varid{local2global}\;\Varid{p}))\;(\Varid{(\Leftrightarrow)}\;(\Varid{local2global}\;\Varid{q}))))))\;\Varid{s}{}\<[E]%
\\
\>[3]{}\mathrel{=}\mbox{\commentbegin ~  definition of \ensuremath{\Varid{h_{ND+f}}}   \commentend}{}\<[E]%
\\
\>[3]{}\hsindent{3}{}\<[6]%
\>[6]{}\Varid{h_{State1}}\;(\Varid{liftM2}\;(+\!\!+)\;(\Varid{h_{ND+f}}\;(\Varid{(\Leftrightarrow)}\;(\Varid{local2global}\;\Varid{p})))\;(\Varid{h_{ND+f}}\;(\Varid{(\Leftrightarrow)}\;(\Varid{local2global}\;\Varid{q}))))\;\Varid{s}{}\<[E]%
\\
\>[3]{}\mathrel{=}\mbox{\commentbegin ~  definition of \ensuremath{\Varid{liftM2}}   \commentend}{}\<[E]%
\\
\>[3]{}\hsindent{3}{}\<[6]%
\>[6]{}\Varid{h_{State1}}\;(\mathbf{do}\;{}\<[19]%
\>[19]{}\Varid{x}\leftarrow \Varid{h_{ND+f}}\;(\Varid{(\Leftrightarrow)}\;(\Varid{local2global}\;\Varid{p})){}\<[E]%
\\
\>[19]{}\Varid{y}\leftarrow \Varid{h_{ND+f}}\;(\Varid{(\Leftrightarrow)}\;(\Varid{local2global}\;\Varid{q})){}\<[E]%
\\
\>[19]{}\Conid{Var}\;(\Varid{x}+\!\!+\Varid{y}){}\<[E]%
\\
\>[6]{}\hsindent{8}{}\<[14]%
\>[14]{})\;\Varid{s}{}\<[E]%
\\
\>[3]{}\mathrel{=}\mbox{\commentbegin ~  Lemma~\ref{lemma:dist-hState1}   \commentend}{}\<[E]%
\\
\>[3]{}\hsindent{3}{}\<[6]%
\>[6]{}\mathbf{do}\;{}\<[10]%
\>[10]{}(\Varid{x},\Varid{s}_{1})\leftarrow \Varid{h_{State1}}\;(\Varid{h_{ND+f}}\;(\Varid{(\Leftrightarrow)}\;(\Varid{local2global}\;\Varid{p})))\;\Varid{s}{}\<[E]%
\\
\>[10]{}(\Varid{y},\Varid{s}_{2})\leftarrow \Varid{h_{State1}}\;(\Varid{h_{ND+f}}\;(\Varid{(\Leftrightarrow)}\;(\Varid{local2global}\;\Varid{q})))\;\Varid{s}_{1}{}\<[E]%
\\
\>[10]{}\Varid{h_{State1}}\;(\Conid{Var}\;(\Varid{x}+\!\!+\Varid{y}))\;\Varid{s}_{2}{}\<[E]%
\\
\>[3]{}\mathrel{=}\mbox{\commentbegin ~  induction hypothesis   \commentend}{}\<[E]%
\\
\>[3]{}\hsindent{3}{}\<[6]%
\>[6]{}\mathbf{do}\;{}\<[10]%
\>[10]{}(\Varid{x},\Varid{s}_{1})\leftarrow \mathbf{do}\;\{\mskip1.5mu (\Varid{x},\anonymous )\leftarrow \Varid{h_{State1}}\;(\Varid{h_{ND+f}}\;(\Varid{(\Leftrightarrow)}\;(\Varid{local2global}\;\Varid{p})))\;\Varid{s};\Varid{\eta}\;(\Varid{x},\Varid{s})\mskip1.5mu\}{}\<[E]%
\\
\>[10]{}(\Varid{y},\Varid{s}_{2})\leftarrow \mathbf{do}\;\{\mskip1.5mu (\Varid{y},\anonymous )\leftarrow \Varid{h_{State1}}\;(\Varid{h_{ND+f}}\;(\Varid{(\Leftrightarrow)}\;(\Varid{local2global}\;\Varid{q})))\;\Varid{s}_{1};\Varid{\eta}\;(\Varid{y},\Varid{s}_{1})\mskip1.5mu\}{}\<[E]%
\\
\>[10]{}\Varid{h_{State1}}\;(\Conid{Var}\;(\Varid{x}+\!\!+\Varid{y}))\;\Varid{s}_{2}{}\<[E]%
\\
\>[3]{}\mathrel{=}\mbox{\commentbegin ~  monad laws   \commentend}{}\<[E]%
\\
\>[3]{}\hsindent{3}{}\<[6]%
\>[6]{}\mathbf{do}\;{}\<[10]%
\>[10]{}(\Varid{x},\anonymous )\leftarrow \Varid{h_{State1}}\;(\Varid{h_{ND+f}}\;(\Varid{(\Leftrightarrow)}\;(\Varid{local2global}\;\Varid{p})))\;\Varid{s}{}\<[E]%
\\
\>[10]{}(\Varid{y},\anonymous )\leftarrow \Varid{h_{State1}}\;(\Varid{h_{ND+f}}\;(\Varid{(\Leftrightarrow)}\;(\Varid{local2global}\;\Varid{q})))\;\Varid{s}_{1}{}\<[E]%
\\
\>[10]{}\Varid{h_{State1}}\;(\Conid{Var}\;(\Varid{x}+\!\!+\Varid{y}))\;\Varid{s}{}\<[E]%
\\
\>[3]{}\mathrel{=}\mbox{\commentbegin ~  definition of \ensuremath{\Varid{h_{State1}}}  \commentend}{}\<[E]%
\\
\>[3]{}\hsindent{3}{}\<[6]%
\>[6]{}\mathbf{do}\;{}\<[10]%
\>[10]{}(\Varid{x},\anonymous )\leftarrow \Varid{h_{State1}}\;(\Varid{h_{ND+f}}\;(\Varid{(\Leftrightarrow)}\;(\Varid{local2global}\;\Varid{p})))\;\Varid{s}{}\<[E]%
\\
\>[10]{}(\Varid{y},\anonymous )\leftarrow \Varid{h_{State1}}\;(\Varid{h_{ND+f}}\;(\Varid{(\Leftrightarrow)}\;(\Varid{local2global}\;\Varid{q})))\;\Varid{s}{}\<[E]%
\\
\>[10]{}\Varid{\eta}\;(\Varid{x}+\!\!+\Varid{y},\Varid{s}){}\<[E]%
\\
\>[3]{}\mathrel{=}\mbox{\commentbegin ~  monad laws   \commentend}{}\<[E]%
\\
\>[3]{}\hsindent{3}{}\<[6]%
\>[6]{}\mathbf{do}\;{}\<[10]%
\>[10]{}(\Varid{x},\anonymous )\leftarrow ({}\<[E]%
\\
\>[10]{}\hsindent{2}{}\<[12]%
\>[12]{}\mathbf{do}\;{}\<[16]%
\>[16]{}(\Varid{x},\anonymous )\leftarrow \Varid{h_{State1}}\;(\Varid{h_{ND+f}}\;(\Varid{(\Leftrightarrow)}\;(\Varid{local2global}\;\Varid{p})))\;\Varid{s}{}\<[E]%
\\
\>[16]{}(\Varid{y},\anonymous )\leftarrow \Varid{h_{State1}}\;(\Varid{h_{ND+f}}\;(\Varid{(\Leftrightarrow)}\;(\Varid{local2global}\;\Varid{q})))\;\Varid{s}_{1}{}\<[E]%
\\
\>[16]{}\Varid{\eta}\;(\Varid{x}+\!\!+\Varid{y},\Varid{s}){}\<[E]%
\\
\>[10]{}\hsindent{2}{}\<[12]%
\>[12]{}){}\<[E]%
\\
\>[10]{}\Varid{\eta}\;(\Varid{x},\Varid{s}){}\<[E]%
\\
\>[3]{}\mathrel{=}\mbox{\commentbegin ~  derivation in reverse (similar to before)   \commentend}{}\<[E]%
\\
\>[3]{}\hsindent{3}{}\<[6]%
\>[6]{}\mathbf{do}\;{}\<[10]%
\>[10]{}(\Varid{x},\anonymous )\leftarrow \Varid{h_{State1}}\;(\Varid{h_{ND+f}}\;(\Varid{(\Leftrightarrow)}\;(\Varid{local2global}\;(\Conid{Op}\;(\Conid{Inr}\;(\Conid{Inl}\;(\Conid{Or}\;\Varid{p}\;\Varid{q})))))))\;\Varid{s}{}\<[E]%
\\
\>[10]{}\Varid{\eta}\;(\Varid{x},\Varid{s}){}\<[E]%
\ColumnHook
\end{hscode}\resethooks
\indentend \noindent \mbox{\underline{case \ensuremath{\Varid{t}\mathrel{=}\Conid{Op}\;(\Conid{Inr}\;(\Conid{Inr}\;\Varid{y}))}}}\indentbegin \begin{hscode}\SaveRestoreHook
\column{B}{@{}>{\hspre}l<{\hspost}@{}}%
\column{3}{@{}>{\hspre}l<{\hspost}@{}}%
\column{6}{@{}>{\hspre}l<{\hspost}@{}}%
\column{10}{@{}>{\hspre}l<{\hspost}@{}}%
\column{E}{@{}>{\hspre}l<{\hspost}@{}}%
\>[6]{}\Varid{h_{State1}}\;(\Varid{h_{ND+f}}\;(\Varid{(\Leftrightarrow)}\;(\Varid{local2global}\;(\Conid{Op}\;(\Conid{Inr}\;(\Conid{Inr}\;\Varid{y}))))))\;\Varid{s}{}\<[E]%
\\
\>[3]{}\mathrel{=}\mbox{\commentbegin ~  definition of \ensuremath{\Varid{local2global}}   \commentend}{}\<[E]%
\\
\>[3]{}\hsindent{3}{}\<[6]%
\>[6]{}\Varid{h_{State1}}\;(\Varid{h_{ND+f}}\;(\Varid{(\Leftrightarrow)}\;(\Conid{Op}\;(\Conid{Inr}\;(\Conid{Inr}\;(\Varid{fmap}\;\Varid{local2global}\;\Varid{y}))))))\;\Varid{s}{}\<[E]%
\\
\>[3]{}\mathrel{=}\mbox{\commentbegin ~  definition of \ensuremath{\Varid{(\Leftrightarrow)}}; \ensuremath{\Varid{fmap}} fusion   \commentend}{}\<[E]%
\\
\>[3]{}\hsindent{3}{}\<[6]%
\>[6]{}\Varid{h_{State1}}\;(\Varid{h_{ND+f}}\;(\Conid{Op}\;(\Conid{Inr}\;(\Conid{Inr}\;(\Varid{fmap}\;(\Varid{(\Leftrightarrow)}\hsdot{\circ }{.}\Varid{local2global})\;\Varid{y})))))\;\Varid{s}{}\<[E]%
\\
\>[3]{}\mathrel{=}\mbox{\commentbegin ~  definition of \ensuremath{\Varid{h_{ND+f}}}; \ensuremath{\Varid{fmap}} fusion   \commentend}{}\<[E]%
\\
\>[3]{}\hsindent{3}{}\<[6]%
\>[6]{}\Varid{h_{State1}}\;(\Conid{Op}\;(\Conid{Inr}\;(\Varid{fmap}\;(\Varid{h_{ND+f}}\hsdot{\circ }{.}\Varid{(\Leftrightarrow)}\hsdot{\circ }{.}\Varid{local2global})\;\Varid{y})))\;\Varid{s}{}\<[E]%
\\
\>[3]{}\mathrel{=}\mbox{\commentbegin ~  definition of \ensuremath{\Varid{h_{State1}}}; \ensuremath{\Varid{fmap}} fusion   \commentend}{}\<[E]%
\\
\>[3]{}\hsindent{3}{}\<[6]%
\>[6]{}\Conid{Op}\;(\Varid{fmap}\;((\mathbin{\$}\Varid{s})\hsdot{\circ }{.}\Varid{h_{State1}}\hsdot{\circ }{.}\Varid{h_{ND+f}}\hsdot{\circ }{.}\Varid{(\Leftrightarrow)}\hsdot{\circ }{.}\Varid{local2global})\;\Varid{y}){}\<[E]%
\\
\>[3]{}\mathrel{=}\mbox{\commentbegin ~  induction hypothesis   \commentend}{}\<[E]%
\\
\>[3]{}\hsindent{3}{}\<[6]%
\>[6]{}\Conid{Op}\;(\Varid{fmap}\;((>\!\!>\!\!=\lambda (\Varid{x},\anonymous )\to \Varid{\eta}\;(\Varid{x},\Varid{s}))\hsdot{\circ }{.}(\mathbin{\$}\Varid{s})\hsdot{\circ }{.}\Varid{h_{State1}}\hsdot{\circ }{.}\Varid{h_{ND+f}}\hsdot{\circ }{.}\Varid{(\Leftrightarrow)}\hsdot{\circ }{.}\Varid{local2global})\;\Varid{y}){}\<[E]%
\\
\>[3]{}\mathrel{=}\mbox{\commentbegin ~  \ensuremath{\Varid{fmap}} fission; definition of \ensuremath{(>\!\!>\!\!=)}   \commentend}{}\<[E]%
\\
\>[3]{}\hsindent{3}{}\<[6]%
\>[6]{}\mathbf{do}\;{}\<[10]%
\>[10]{}(\Varid{x},\anonymous )\leftarrow \Conid{Op}\;(\Varid{fmap}\;((\mathbin{\$}\Varid{s})\hsdot{\circ }{.}\Varid{h_{State1}}\hsdot{\circ }{.}\Varid{h_{ND+f}}\hsdot{\circ }{.}\Varid{(\Leftrightarrow)}\hsdot{\circ }{.}\Varid{local2global})\;\Varid{y}){}\<[E]%
\\
\>[10]{}\Varid{\eta}\;(\Varid{x},\Varid{s}){}\<[E]%
\\
\>[3]{}\mathrel{=}\mbox{\commentbegin ~  deriviation in reverse (similar to before)   \commentend}{}\<[E]%
\\
\>[3]{}\hsindent{3}{}\<[6]%
\>[6]{}\mathbf{do}\;{}\<[10]%
\>[10]{}(\Varid{x},\anonymous )\leftarrow \Varid{h_{State1}}\;(\Varid{h_{ND+f}}\;(\Varid{(\Leftrightarrow)}\;(\Varid{local2global}\;(\Conid{Op}\;(\Conid{Inr}\;(\Conid{Inr}\;\Varid{y}))))))\;\Varid{s}{}\<[E]%
\\
\>[10]{}\Varid{\eta}\;(\Varid{x},\Varid{s}){}\<[E]%
\ColumnHook
\end{hscode}\resethooks
\indentend \end{proof}

\subsection{Auxiliary Lemmas}

The derivations above make use of two auxliary lemmas.
We prove them here.

\begin{lemma}[Naturality of \ensuremath{(\mathbin{\$}\Varid{s})}] \label{eq:comm-app-fmap}\indentbegin \begin{hscode}\SaveRestoreHook
\column{B}{@{}>{\hspre}l<{\hspost}@{}}%
\column{3}{@{}>{\hspre}l<{\hspost}@{}}%
\column{E}{@{}>{\hspre}l<{\hspost}@{}}%
\>[3]{}(\mathbin{\$}\Varid{x})\hsdot{\circ }{.}\Varid{fmap}\;\Varid{f}\mathrel{=}\Varid{f}\hsdot{\circ }{.}(\mathbin{\$}\Varid{x}){}\<[E]%
\ColumnHook
\end{hscode}\resethooks
\indentend \end{lemma}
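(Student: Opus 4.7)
The plan is to prove this naturality equation pointwise by applying both sides to an arbitrary argument \ensuremath{\Varid{k}}. The crucial observation is that, in every place where this lemma is invoked in the main proof (in particular the \ensuremath{\Varid{fwd}_{\Varid{LHS}}} and \ensuremath{\Varid{fwd}_{\Varid{RHS}}} subcases), the relevant \ensuremath{\Conid{Functor}} instance is the reader/function functor \ensuremath{(\Varid{r}\to \mathbin{-})}, since the mapped entities are state-passing continuations of type \ensuremath{\Varid{s}\to \mathinner{\ldotp\ldotp}}. For that functor, \ensuremath{\Varid{fmap}\;\Varid{f}\;\Varid{k}\mathrel{=}\Varid{f}\hsdot{\circ }{.}\Varid{k}}, and this is the single fact that drives the derivation.

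First I would unfold the left-hand side applied to \ensuremath{\Varid{k}}: by definition of composition, \ensuremath{((\mathbin{\$}\Varid{x})\hsdot{\circ }{.}\Varid{fmap}\;\Varid{f})\;\Varid{k}\mathrel{=}(\mathbin{\$}\Varid{x})\;(\Varid{fmap}\;\Varid{f}\;\Varid{k})}, which by the reader-functor instance becomes \ensuremath{(\mathbin{\$}\Varid{x})\;(\Varid{f}\hsdot{\circ }{.}\Varid{k})}; unfolding \ensuremath{(\mathbin{\$}\Varid{x})} gives \ensuremath{(\Varid{f}\hsdot{\circ }{.}\Varid{k})\;\Varid{x}\mathrel{=}\Varid{f}\;(\Varid{k}\;\Varid{x})}. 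Then I would unfold the right-hand side the same way: \ensuremath{(\Varid{f}\hsdot{\circ }{.}(\mathbin{\$}\Varid{x}))\;\Varid{k}\mathrel{=}\Varid{f}\;((\mathbin{\$}\Varid{x})\;\Varid{k})\mathrel{=}\Varid{f}\;(\Varid{k}\;\Varid{x})}. Since both sides reduce to \ensuremath{\Varid{f}\;(\Varid{k}\;\Varid{x})}, extensionality yields the equation.

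There is no genuine obstacle here; the lemma is essentially just the equation \ensuremath{\Varid{fmap}\;\Varid{f}\;\Varid{k}\;\Varid{x}\mathrel{=}\Varid{f}\;(\Varid{k}\;\Varid{x})} for the reader functor, repackaged in point-free form. The only mild subtlety worth flagging is that the statement as written is silent about which \ensuremath{\Conid{Functor}} instance is meant, and a naive reading as a law for arbitrary functors does not even type-check; once one fixes attention on the reader functor (as the surrounding proof context always does) the result collapses to a couple of $\beta$-reductions.
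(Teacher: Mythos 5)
Your proof is correct and matches the paper's own argument: the paper likewise reduces both sides pointwise (via an eta-expansion of the mapped argument and the definition of \ensuremath{\Varid{fmap}} for the function/reader functor) to \ensuremath{\Varid{f}\;(\Varid{m}\;\Varid{x})}. Your added remark that the statement only makes sense once \ensuremath{\Varid{fmap}} is read as the reader functor's map is a fair observation, but it does not change the substance of the proof.
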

\begin{proof}~\indentbegin \begin{hscode}\SaveRestoreHook
\column{B}{@{}>{\hspre}l<{\hspost}@{}}%
\column{3}{@{}>{\hspre}l<{\hspost}@{}}%
\column{5}{@{}>{\hspre}l<{\hspost}@{}}%
\column{E}{@{}>{\hspre}l<{\hspost}@{}}%
\>[5]{}((\mathbin{\$}\Varid{x})\hsdot{\circ }{.}\Varid{fmap}\;\Varid{f})\;\Varid{m}{}\<[E]%
\\
\>[3]{}\mathrel{=}\mbox{\commentbegin ~ function application  \commentend}{}\<[E]%
\\
\>[3]{}\hsindent{2}{}\<[5]%
\>[5]{}(\Varid{fmap}\;\Varid{f}\;\Varid{m})\;\Varid{x}{}\<[E]%
\\
\>[3]{}\mathrel{=}\mbox{\commentbegin ~ eta-expansion  \commentend}{}\<[E]%
\\
\>[3]{}\hsindent{2}{}\<[5]%
\>[5]{}(\Varid{fmap}\;\Varid{f}\;(\lambda \Varid{y}\hsdot{\circ }{.}\Varid{m}\;\Varid{y}))\;\Varid{x}{}\<[E]%
\\
\>[3]{}\mathrel{=}\mbox{\commentbegin ~ definition of \ensuremath{\Varid{fmap}}  \commentend}{}\<[E]%
\\
\>[3]{}\hsindent{2}{}\<[5]%
\>[5]{}(\lambda \Varid{y}\hsdot{\circ }{.}\Varid{f}\;(\Varid{m}\;\Varid{y}))\;\Varid{x}{}\<[E]%
\\
\>[3]{}\mathrel{=}\mbox{\commentbegin ~ function application  \commentend}{}\<[E]%
\\
\>[3]{}\hsindent{2}{}\<[5]%
\>[5]{}\Varid{f}\;(\Varid{m}\;\Varid{x}){}\<[E]%
\\
\>[3]{}\mathrel{=}\mbox{\commentbegin ~ definition of \ensuremath{\hsdot{\circ }{.}} and \ensuremath{\mathbin{\$}}  \commentend}{}\<[E]%
\\
\>[3]{}\hsindent{2}{}\<[5]%
\>[5]{}(\Varid{f}\hsdot{\circ }{.}(\mathbin{\$}\Varid{x}))\;\Varid{m}{}\<[E]%
\ColumnHook
\end{hscode}\resethooks
\indentend \end{proof}

\begin{lemma} \label{eq:liftM2-fst-comm}~\indentbegin \begin{hscode}\SaveRestoreHook
\column{B}{@{}>{\hspre}l<{\hspost}@{}}%
\column{3}{@{}>{\hspre}l<{\hspost}@{}}%
\column{E}{@{}>{\hspre}l<{\hspost}@{}}%
\>[3]{}\Varid{fmap}\;(\Varid{fmap}\;\Varid{fst})\;(\Varid{liftM2}\;(+\!\!+)\;\Varid{p}\;\Varid{q})\mathrel{=}\Varid{liftM2}\;(+\!\!+)\;(\Varid{fmap}\;(\Varid{fmap}\;\Varid{fst})\;\Varid{p})\;(\Varid{fmap}\;(\Varid{fmap}\;\Varid{fst})\;\Varid{q}){}\<[E]%
\ColumnHook
\end{hscode}\resethooks
\indentend \end{lemma}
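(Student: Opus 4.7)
The plan is to prove \Cref{eq:liftM2-fst-comm} by unfolding \ensuremath{\Varid{liftM2}} into its underlying \ensuremath{(>\!\!>\!\!=)}-and-\ensuremath{\Varid{\eta}} form, pushing the outer \ensuremath{\Varid{fmap}\;(\Varid{fmap}\;\Varid{fst})} through the monadic binds, and then refolding into the right-hand side. The key algebraic fact we need is that, on lists, \ensuremath{\Varid{fmap}\;\Varid{fst}} (which is just \ensuremath{\Varid{map}\;\Varid{fst}}) is a monoid homomorphism for \ensuremath{(+\!\!+)}, i.e.\ \ensuremath{\Varid{fmap}\;\Varid{fst}\;(\Varid{xs}+\!\!+\Varid{ys})\mathrel{=}\Varid{fmap}\;\Varid{fst}\;\Varid{xs}+\!\!+\Varid{fmap}\;\Varid{fst}\;\Varid{ys}}. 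This is a standard fact about \ensuremath{\Varid{map}} and list concatenation and admits a trivial induction on \ensuremath{\Varid{xs}}.

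First I would expand the left-hand side. By definition, \ensuremath{\Varid{liftM2}\;(+\!\!+)\;\Varid{p}\;\Varid{q}\mathrel{=}\Varid{p}>\!\!>\!\!=\lambda \Varid{x}\to \Varid{q}>\!\!>\!\!=\lambda \Varid{y}\to \Varid{\eta}\;(\Varid{x}+\!\!+\Varid{y})}. Next I would rewrite \ensuremath{\Varid{fmap}\;\Varid{g}\;\Varid{m}} as \ensuremath{\Varid{m}>\!\!>\!\!=\Varid{\eta}\hsdot{\circ }{.}\Varid{g}} (which holds in any monad) and use the associativity law (\ref{eq:monad-assoc}) together with the return-bind law (\ref{eq:monad-ret-bind}) to push the outer \ensuremath{\Varid{fmap}\;(\Varid{fmap}\;\Varid{fst})} past the two binds. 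This yields
\[
  \ensuremath{\Varid{p}>\!\!>\!\!=\lambda \Varid{x}\to \Varid{q}>\!\!>\!\!=\lambda \Varid{y}\to \Varid{\eta}\;(\Varid{fmap}\;\Varid{fst}\;(\Varid{x}+\!\!+\Varid{y}))}\mbox{~~.}
\]
At this point I would apply the homomorphism property above to rewrite \ensuremath{\Varid{fmap}\;\Varid{fst}\;(\Varid{x}+\!\!+\Varid{y})} as \ensuremath{\Varid{fmap}\;\Varid{fst}\;\Varid{x}+\!\!+\Varid{fmap}\;\Varid{fst}\;\Varid{y}}.

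After this rewrite, I would fold the computation back up in the reverse direction: introduce a fresh variable \ensuremath{\Varid{y'}\mathrel{=}\Varid{fmap}\;\Varid{fst}\;\Varid{y}} by using return-bind backwards to get \ensuremath{\Varid{q}>\!\!>\!\!=\lambda \Varid{y}\to \Varid{\eta}\;(\Varid{fmap}\;\Varid{fst}\;\Varid{y})>\!\!>\!\!=\lambda \Varid{y'}\to \ldots }, then collapse \ensuremath{\Varid{q}>\!\!>\!\!=\Varid{\eta}\hsdot{\circ }{.}\Varid{fmap}\;\Varid{fst}} back into \ensuremath{\Varid{fmap}\;(\Varid{fmap}\;\Varid{fst})\;\Varid{q}} using the same characterisation of \ensuremath{\Varid{fmap}} in terms of bind, and analogously for \ensuremath{\Varid{p}}. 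The result is precisely \ensuremath{\Varid{liftM2}\;(+\!\!+)\;(\Varid{fmap}\;(\Varid{fmap}\;\Varid{fst})\;\Varid{p})\;(\Varid{fmap}\;(\Varid{fmap}\;\Varid{fst})\;\Varid{q})}, which concludes the proof.

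I do not expect any genuine obstacle here: the proof is entirely mechanical, relying only on the monad laws (\ref{eq:monad-ret-bind})--(\ref{eq:monad-assoc}) and the homomorphism \ensuremath{\Varid{map}\;\Varid{fst}\;(\Varid{xs}+\!\!+\Varid{ys})\mathrel{=}\Varid{map}\;\Varid{fst}\;\Varid{xs}+\!\!+\Varid{map}\;\Varid{fst}\;\Varid{ys}}. The only minor care required is keeping the bookkeeping straight when pushing the outer \ensuremath{\Varid{fmap}} inside two nested \ensuremath{(>\!\!>\!\!=)}; using \ensuremath{\mathbf{do}}-notation throughout the calculation, as elsewhere in the appendix, should make the manipulation transparent.
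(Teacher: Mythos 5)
Your proposal is correct and follows essentially the same calculation as the paper: unfold \ensuremath{\Varid{liftM2}}, push \ensuremath{\Varid{fmap}\;(\Varid{fmap}\;\Varid{fst})} through the binds via the derived law \ensuremath{\Varid{fmap}\;\Varid{f}\;(\Varid{m}>\!\!>\!\!=\Varid{k})\mathrel{=}\Varid{m}>\!\!>\!\!=\Varid{fmap}\;\Varid{f}\hsdot{\circ }{.}\Varid{k}} (which you derive from the monad laws rather than citing directly), apply the key fact \ensuremath{\Varid{map}\;\Varid{fst}\;(\Varid{xs}+\!\!+\Varid{ys})\mathrel{=}\Varid{map}\;\Varid{fst}\;\Varid{xs}+\!\!+\Varid{map}\;\Varid{fst}\;\Varid{ys}} (the paper's ``naturality of \ensuremath{(+\!\!+)}''), and refold using the left unit law. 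No gaps.
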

\begin{proof}~
\indentbegin \begin{hscode}\SaveRestoreHook
\column{B}{@{}>{\hspre}l<{\hspost}@{}}%
\column{3}{@{}>{\hspre}l<{\hspost}@{}}%
\column{6}{@{}>{\hspre}l<{\hspost}@{}}%
\column{E}{@{}>{\hspre}l<{\hspost}@{}}%
\>[6]{}\Varid{fmap}\;(\Varid{fmap}\;\Varid{fst})\;(\Varid{liftM2}\;(+\!\!+)\;\Varid{p}\;\Varid{q}){}\<[E]%
\\
\>[3]{}\mathrel{=}\mbox{\commentbegin ~  definition of \ensuremath{\Varid{liftM2}}  \commentend}{}\<[E]%
\\
\>[3]{}\hsindent{3}{}\<[6]%
\>[6]{}\Varid{fmap}\;(\Varid{fmap}\;\Varid{fst})\;(\mathbf{do}\;\{\mskip1.5mu \Varid{x}\leftarrow \Varid{p};\Varid{y}\leftarrow \Varid{q};\Varid{\eta}\;(\Varid{x}+\!\!+\Varid{y})\mskip1.5mu\}){}\<[E]%
\\
\>[3]{}\mathrel{=}\mbox{\commentbegin ~  derived property for monad: \ensuremath{\Varid{fmap}\;\Varid{f}\;(\Varid{m}>\!\!>\!\!=\Varid{k})\mathrel{=}\Varid{m}>\!\!>\!\!=\Varid{fmap}\;\Varid{f}\hsdot{\circ }{.}\Varid{k}}  \commentend}{}\<[E]%
\\
\>[3]{}\hsindent{3}{}\<[6]%
\>[6]{}\mathbf{do}\;\{\mskip1.5mu \Varid{x}\leftarrow \Varid{p};\Varid{y}\leftarrow \Varid{q};\Varid{fmap}\;(\Varid{fmap}\;\Varid{fst})\;(\Varid{\eta}\;(\Varid{x}+\!\!+\Varid{y}))\mskip1.5mu\}{}\<[E]%
\\
\>[3]{}\mathrel{=}\mbox{\commentbegin ~  definition of \ensuremath{\Varid{fmap}}  \commentend}{}\<[E]%
\\
\>[3]{}\hsindent{3}{}\<[6]%
\>[6]{}\mathbf{do}\;\{\mskip1.5mu \Varid{x}\leftarrow \Varid{p};\Varid{y}\leftarrow \Varid{q};\Varid{\eta}\;(\Varid{fmap}\;\Varid{fst}\;(\Varid{x}+\!\!+\Varid{y}))\mskip1.5mu\}{}\<[E]%
\\
\>[3]{}\mathrel{=}\mbox{\commentbegin ~  naturality of \ensuremath{(+\!\!+)}  \commentend}{}\<[E]%
\\
\>[3]{}\hsindent{3}{}\<[6]%
\>[6]{}\mathbf{do}\;\{\mskip1.5mu \Varid{x}\leftarrow \Varid{p};\Varid{y}\leftarrow \Varid{q};\Varid{\eta}\;((\Varid{fmap}\;\Varid{fst}\;\Varid{x})+\!\!+(\Varid{fmap}\;\Varid{fst}\;\Varid{y}))\mskip1.5mu\}{}\<[E]%
\\
\>[3]{}\mathrel{=}\mbox{\commentbegin ~  monad left unit law (twice)  \commentend}{}\<[E]%
\\
\>[3]{}\hsindent{3}{}\<[6]%
\>[6]{}\mathbf{do}\;\{\mskip1.5mu \Varid{x}\leftarrow \Varid{p};\Varid{x'}\leftarrow \Varid{\eta}\;(\Varid{fmap}\;\Varid{fst}\;\Varid{x});\Varid{y}\leftarrow \Varid{q};\Varid{y'}\leftarrow \Varid{\eta}\;(\Varid{fmap}\;\Varid{fst}\;\Varid{y})\;\Varid{\eta}\;(\Varid{x'}+\!\!+\Varid{y'})\mskip1.5mu\}{}\<[E]%
\\
\>[3]{}\mathrel{=}\mbox{\commentbegin ~  definition of \ensuremath{\Varid{fmap}}  \commentend}{}\<[E]%
\\
\>[3]{}\hsindent{3}{}\<[6]%
\>[6]{}\mathbf{do}\;\{\mskip1.5mu \Varid{x}\leftarrow \Varid{fmap}\;(\Varid{fmap}\;\Varid{fst})\;\Varid{p};\Varid{y}\leftarrow \Varid{fmap}\;(\Varid{fmap}\;\Varid{fst})\;\Varid{q};\Varid{\eta}\;(\Varid{x}+\!\!+\Varid{y})\mskip1.5mu\}{}\<[E]%
\\
\>[3]{}\mathrel{=}\mbox{\commentbegin ~  definition of \ensuremath{\Varid{liftM2}}  \commentend}{}\<[E]%
\\
\>[3]{}\hsindent{3}{}\<[6]%
\>[6]{}\Varid{liftM2}\;(+\!\!+)\;(\Varid{fmap}\;(\Varid{fmap}\;\Varid{fst})\;\Varid{p})\;(\Varid{fmap}\;(\Varid{fmap}\;\Varid{fst})\;\Varid{q}){}\<[E]%
\ColumnHook
\end{hscode}\resethooks
\indentend \end{proof}

\begin{lemma}[Distributivity of \ensuremath{\Varid{h_{State1}}}] \label{lemma:dist-hState1} \ \\\indentbegin \begin{hscode}\SaveRestoreHook
\column{B}{@{}>{\hspre}l<{\hspost}@{}}%
\column{3}{@{}>{\hspre}l<{\hspost}@{}}%
\column{E}{@{}>{\hspre}l<{\hspost}@{}}%
\>[3]{}\Varid{h_{State1}}\;(\Varid{p}>\!\!>\!\!=\Varid{k})\;\Varid{s}\mathrel{=}\Varid{h_{State1}}\;\Varid{p}\;\Varid{s}>\!\!>\!\!=\lambda (\Varid{x},\Varid{s'})\to \Varid{h_{State1}}\;(\Varid{k}\;\Varid{x})\;\Varid{s'}{}\<[E]%
\ColumnHook
\end{hscode}\resethooks
\indentend %
\end{lemma}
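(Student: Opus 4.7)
The plan is to prove this distributivity property by structural induction on the free-monad term \ensuremath{\Varid{p}}, exploiting the fact that \ensuremath{\Varid{h_{State1}}} is defined as a fold and that \ensuremath{(>\!\!>\!\!=)} on \ensuremath{\Conid{Free}} is itself defined via \ensuremath{\Varid{fold}}. Concretely, since \ensuremath{\Varid{p}\mathbin{::}\Conid{Free}\;(\Varid{State_{F}}\;\Varid{s}\mathrel{{:}{+}{:}}\Varid{f})\;\Varid{a}}, there are four shapes to consider: \ensuremath{\Conid{Var}\;\Varid{x}}, \ensuremath{\Conid{Op}\;(\Conid{Inl}\;(\Conid{Get}\;\Varid{k'}))}, \ensuremath{\Conid{Op}\;(\Conid{Inl}\;(\Conid{Put}\;\Varid{t}\;\Varid{k'}))}, and \ensuremath{\Conid{Op}\;(\Conid{Inr}\;\Varid{op})}.

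For the base case \ensuremath{\Varid{p}\mathrel{=}\Conid{Var}\;\Varid{x}}, the left-hand side unfolds by the return–bind law to \ensuremath{\Varid{h_{State1}}\;(\Varid{k}\;\Varid{x})\;\Varid{s}}, while on the right-hand side \ensuremath{\Varid{h_{State1}}\;(\Conid{Var}\;\Varid{x})\;\Varid{s}\mathrel{=}\Conid{Var}\;(\Varid{x},\Varid{s})} by \ensuremath{\Varid{gen_{S}}}, and a further return–bind step collapses this to the same expression. For the two state cases, I would simply unfold \ensuremath{\Varid{h_{State1}}} using its \ensuremath{\Varid{alg_{S}}} clauses to rewrite \ensuremath{\Varid{h_{State1}}\;(\Conid{Op}\;(\Conid{Inl}\;(\Conid{Get}\;\Varid{k'})))\;\Varid{s}} as \ensuremath{\Varid{h_{State1}}\;(\Varid{k'}\;\Varid{s})\;\Varid{s}} (and symmetrically for \ensuremath{\Conid{Put}}); then \ensuremath{(\Conid{Op}\;(\Conid{Inl}\;(\Conid{Get}\;\Varid{k'})))>\!\!>\!\!=\Varid{k}\mathrel{=}\Conid{Op}\;(\Conid{Inl}\;(\Conid{Get}\;(\lambda \Varid{s'}\to \Varid{k'}\;\Varid{s'}>\!\!>\!\!=\Varid{k})))}, and applying \ensuremath{\Varid{h_{State1}}} reduces to \ensuremath{\Varid{h_{State1}}\;(\Varid{k'}\;\Varid{s}>\!\!>\!\!=\Varid{k})\;\Varid{s}}, at which point the induction hypothesis on \ensuremath{\Varid{k'}\;\Varid{s}} gives the desired equation.

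The main obstacle is the forwarding case \ensuremath{\Varid{p}\mathrel{=}\Conid{Op}\;(\Conid{Inr}\;\Varid{op})}. Here \ensuremath{\Varid{h_{State1}}\;\Varid{p}\;\Varid{s}\mathrel{=}\Conid{Op}\;(\Varid{fmap}\;(\mathbin{\$}\Varid{s})\;(\Varid{fmap}\;\Varid{h_{State1}}\;\Varid{op}))}, while \ensuremath{\Varid{p}>\!\!>\!\!=\Varid{k}\mathrel{=}\Conid{Op}\;(\Conid{Inr}\;(\Varid{fmap}\;(>\!\!>\!\!=\Varid{k})\;\Varid{op}))}, so I need to push the outer bind inside \ensuremath{\Conid{Op}} (using the definition of \ensuremath{(>\!\!>\!\!=)} on \ensuremath{\Conid{Free}\;\Varid{f}} which places a bind under each occurrence of \ensuremath{\Conid{Op}}) and then combine the two \ensuremath{\Varid{fmap}}s using \ensuremath{\Varid{fmap}} fusion, Lemma~\ref{eq:comm-app-fmap} (naturality of \ensuremath{(\mathbin{\$}\Varid{s})}), and the induction hypothesis applied pointwise under \ensuremath{\Varid{fmap}} to the subterms of \ensuremath{\Varid{op}}. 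The bookkeeping is the delicate part: after simplification both sides should reduce to \ensuremath{\Conid{Op}\;(\Varid{fmap}\;(\lambda \Varid{t}\to \Varid{h_{State1}}\;\Varid{t}\;\Varid{s}>\!\!>\!\!=\lambda (\Varid{x},\Varid{s'})\to \Varid{h_{State1}}\;(\Varid{k}\;\Varid{x})\;\Varid{s'})\;\Varid{op})}, and showing this cleanly requires careful use of the derived monad law \ensuremath{\Varid{fmap}\;\Varid{f}\;(\Varid{m}>\!\!>\!\!=\Varid{h})\mathrel{=}\Varid{m}>\!\!>\!\!=\Varid{fmap}\;\Varid{f}\hsdot{\circ }{.}\Varid{h}} together with the definition of \ensuremath{(>\!\!>\!\!=)} for \ensuremath{\Conid{Free}}.
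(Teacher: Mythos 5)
Your proposal is correct and follows essentially the same route as the paper's proof: structural induction on \ensuremath{\Varid{p}} with the four cases \ensuremath{\Conid{Var}}, \ensuremath{\Conid{Get}}, \ensuremath{\Conid{Put}}, and the forwarded \ensuremath{\Conid{Inr}} case, handling the last by pushing the bind under \ensuremath{\Conid{Op}}, fusing/fissioning the \ensuremath{\Varid{fmap}}s, and applying the induction hypothesis pointwise to reach exactly the intermediate term \ensuremath{\Conid{Op}\;(\Varid{fmap}\;(\lambda \Varid{t}\to \Varid{h_{State1}}\;\Varid{t}\;\Varid{s}>\!\!>\!\!=\lambda (\Varid{x},\Varid{s'})\to \Varid{h_{State1}}\;(\Varid{k}\;\Varid{x})\;\Varid{s'})\;\Varid{op})} that the paper also passes through. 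The only cosmetic difference is that you invoke naturality of \ensuremath{(\mathbin{\$}\Varid{s})} and the derived \ensuremath{\Varid{fmap}}/bind law where the paper gets by with \ensuremath{\Varid{fmap}} fusion/fission and the definition of \ensuremath{(>\!\!>\!\!=)} for the free monad alone.
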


\begin{proof}
The proof proceeds by induction on \ensuremath{\Varid{p}}.

\noindent \mbox{\underline{case \ensuremath{\Varid{p}\mathrel{=}\Conid{Var}\;\Varid{x}}}}
\indentbegin \begin{hscode}\SaveRestoreHook
\column{B}{@{}>{\hspre}l<{\hspost}@{}}%
\column{3}{@{}>{\hspre}l<{\hspost}@{}}%
\column{6}{@{}>{\hspre}l<{\hspost}@{}}%
\column{E}{@{}>{\hspre}l<{\hspost}@{}}%
\>[6]{}\Varid{h_{State1}}\;(\Conid{Var}\;\Varid{x}>\!\!>\!\!=\Varid{k})\;\Varid{s}{}\<[E]%
\\
\>[3]{}\mathrel{=}\mbox{\commentbegin ~  monad law   \commentend}{}\<[E]%
\\
\>[3]{}\hsindent{3}{}\<[6]%
\>[6]{}\Varid{h_{State1}}\;(\Varid{k}\;\Varid{x})\;\Varid{s}{}\<[E]%
\\
\>[3]{}\mathrel{=}\mbox{\commentbegin ~  monad law   \commentend}{}\<[E]%
\\
\>[3]{}\hsindent{3}{}\<[6]%
\>[6]{}\Varid{\eta}\;(\Varid{x},\Varid{s})>\!\!>\!\!=\lambda (\Varid{x},\Varid{s'})\to \Varid{h_{State1}}\;(\Varid{k}\;\Varid{x})\;\Varid{s'}{}\<[E]%
\\
\>[3]{}\mathrel{=}\mbox{\commentbegin ~  definition of \ensuremath{\Varid{h_{State1}}}   \commentend}{}\<[E]%
\\
\>[3]{}\hsindent{3}{}\<[6]%
\>[6]{}\Varid{h_{State1}}\;(\Conid{Var}\;\Varid{x})\;\Varid{s}>\!\!>\!\!=\lambda (\Varid{x},\Varid{s'})\to \Varid{h_{State1}}\;(\Varid{k}\;\Varid{x})\;\Varid{s'}{}\<[E]%
\ColumnHook
\end{hscode}\resethooks
\indentend \noindent \mbox{\underline{case \ensuremath{\Varid{p}\mathrel{=}\Conid{Op}\;(\Conid{Inl}\;(\Conid{Get}\;\Varid{p}))}}}
\indentbegin \begin{hscode}\SaveRestoreHook
\column{B}{@{}>{\hspre}l<{\hspost}@{}}%
\column{3}{@{}>{\hspre}l<{\hspost}@{}}%
\column{6}{@{}>{\hspre}l<{\hspost}@{}}%
\column{E}{@{}>{\hspre}l<{\hspost}@{}}%
\>[6]{}\Varid{h_{State1}}\;(\Conid{Op}\;(\Conid{Inl}\;(\Conid{Get}\;\Varid{p}))>\!\!>\!\!=\Varid{k})\;\Varid{s}{}\<[E]%
\\
\>[3]{}\mathrel{=}\mbox{\commentbegin ~  definition of \ensuremath{(>\!\!>\!\!=)} for free monad   \commentend}{}\<[E]%
\\
\>[3]{}\hsindent{3}{}\<[6]%
\>[6]{}\Varid{h_{State1}}\;(\Conid{Op}\;(\Varid{fmap}\;(>\!\!>\!\!=\Varid{k})\;(\Conid{Inl}\;(\Conid{Get}\;\Varid{p}))))\;\Varid{s}{}\<[E]%
\\
\>[3]{}\mathrel{=}\mbox{\commentbegin ~  definition of \ensuremath{\Varid{fmap}} for coproduct \ensuremath{(\mathrel{{:}{+}{:}})}   \commentend}{}\<[E]%
\\
\>[3]{}\hsindent{3}{}\<[6]%
\>[6]{}\Varid{h_{State1}}\;(\Conid{Op}\;(\Conid{Inl}\;(\Varid{fmap}\;(>\!\!>\!\!=\Varid{k})\;(\Conid{Get}\;\Varid{p}))))\;\Varid{s}{}\<[E]%
\\
\>[3]{}\mathrel{=}\mbox{\commentbegin ~  definition of \ensuremath{\Varid{fmap}} for \ensuremath{\Conid{Get}}   \commentend}{}\<[E]%
\\
\>[3]{}\hsindent{3}{}\<[6]%
\>[6]{}\Varid{h_{State1}}\;(\Conid{Op}\;(\Conid{Inl}\;(\Conid{Get}\;(\lambda \Varid{x}\to \Varid{p}\;\Varid{s}>\!\!>\!\!=\Varid{k}))))\;\Varid{s}{}\<[E]%
\\
\>[3]{}\mathrel{=}\mbox{\commentbegin ~  definition of \ensuremath{\Varid{h_{State1}}}   \commentend}{}\<[E]%
\\
\>[3]{}\hsindent{3}{}\<[6]%
\>[6]{}\Varid{h_{State1}}\;(\Varid{p}\;\Varid{s}>\!\!>\!\!=\Varid{k})\;\Varid{s}{}\<[E]%
\\
\>[3]{}\mathrel{=}\mbox{\commentbegin ~  induction hypothesis   \commentend}{}\<[E]%
\\
\>[3]{}\hsindent{3}{}\<[6]%
\>[6]{}\Varid{h_{State1}}\;(\Varid{p}\;\Varid{s})\;\Varid{s}>\!\!>\!\!=\lambda (\Varid{x},\Varid{s'})\to \Varid{h_{State1}}\;(\Varid{k}\;\Varid{x})\;\Varid{s'}{}\<[E]%
\\
\>[3]{}\mathrel{=}\mbox{\commentbegin ~  definition of \ensuremath{\Varid{h_{State1}}}   \commentend}{}\<[E]%
\\
\>[3]{}\hsindent{3}{}\<[6]%
\>[6]{}\Varid{h_{State1}}\;(\Conid{Op}\;(\Conid{Inl}\;(\Conid{Get}\;\Varid{p})))\;\Varid{s}>\!\!>\!\!=\lambda (\Varid{x},\Varid{s'})\to \Varid{h_{State1}}\;(\Varid{k}\;\Varid{x})\;\Varid{s'}{}\<[E]%
\ColumnHook
\end{hscode}\resethooks
\indentend \noindent \mbox{\underline{case \ensuremath{\Varid{p}\mathrel{=}\Conid{Op}\;(\Conid{Inl}\;(\Conid{Put}\;\Varid{t}\;\Varid{p}))}}}
\indentbegin \begin{hscode}\SaveRestoreHook
\column{B}{@{}>{\hspre}l<{\hspost}@{}}%
\column{3}{@{}>{\hspre}l<{\hspost}@{}}%
\column{6}{@{}>{\hspre}l<{\hspost}@{}}%
\column{E}{@{}>{\hspre}l<{\hspost}@{}}%
\>[6]{}\Varid{h_{State1}}\;(\Conid{Op}\;(\Conid{Inl}\;(\Conid{Put}\;\Varid{t}\;\Varid{p}))>\!\!>\!\!=\Varid{k})\;\Varid{s}{}\<[E]%
\\
\>[3]{}\mathrel{=}\mbox{\commentbegin ~  definition of \ensuremath{(>\!\!>\!\!=)} for free monad   \commentend}{}\<[E]%
\\
\>[3]{}\hsindent{3}{}\<[6]%
\>[6]{}\Varid{h_{State1}}\;(\Conid{Op}\;(\Varid{fmap}\;(>\!\!>\!\!=\Varid{k})\;(\Conid{Inl}\;(\Conid{Put}\;\Varid{t}\;\Varid{p}))))\;\Varid{s}{}\<[E]%
\\
\>[3]{}\mathrel{=}\mbox{\commentbegin ~  definition of \ensuremath{\Varid{fmap}} for coproduct \ensuremath{(\mathrel{{:}{+}{:}})}   \commentend}{}\<[E]%
\\
\>[3]{}\hsindent{3}{}\<[6]%
\>[6]{}\Varid{h_{State1}}\;(\Conid{Op}\;(\Conid{Inl}\;(\Varid{fmap}\;(>\!\!>\!\!=\Varid{k})\;(\Conid{Put}\;\Varid{t}\;\Varid{p}))))\;\Varid{s}{}\<[E]%
\\
\>[3]{}\mathrel{=}\mbox{\commentbegin ~  definition of \ensuremath{\Varid{fmap}} for \ensuremath{\Conid{Put}}   \commentend}{}\<[E]%
\\
\>[3]{}\hsindent{3}{}\<[6]%
\>[6]{}\Varid{h_{State1}}\;(\Conid{Op}\;(\Conid{Inl}\;(\Conid{Put}\;\Varid{t}\;(\Varid{p}>\!\!>\!\!=\Varid{k}))))\;\Varid{s}{}\<[E]%
\\
\>[3]{}\mathrel{=}\mbox{\commentbegin ~  definition of \ensuremath{\Varid{h_{State1}}}   \commentend}{}\<[E]%
\\
\>[3]{}\hsindent{3}{}\<[6]%
\>[6]{}\Varid{h_{State1}}\;(\Varid{p}>\!\!>\!\!=\Varid{k})\;\Varid{t}{}\<[E]%
\\
\>[3]{}\mathrel{=}\mbox{\commentbegin ~  induction hypothesis   \commentend}{}\<[E]%
\\
\>[3]{}\hsindent{3}{}\<[6]%
\>[6]{}\Varid{h_{State1}}\;\Varid{p}\;\Varid{t}>\!\!>\!\!=\lambda (\Varid{x},\Varid{s'})\to \Varid{h_{State1}}\;(\Varid{k}\;\Varid{x})\;\Varid{s'}{}\<[E]%
\\
\>[3]{}\mathrel{=}\mbox{\commentbegin ~  definition of \ensuremath{\Varid{h_{State1}}}   \commentend}{}\<[E]%
\\
\>[3]{}\hsindent{3}{}\<[6]%
\>[6]{}\Varid{h_{State1}}\;(\Conid{Op}\;(\Conid{Inl}\;(\Conid{Put}\;\Varid{t}\;\Varid{p})))\;\Varid{s}>\!\!>\!\!=\lambda (\Varid{x},\Varid{s'})\to \Varid{h_{State1}}\;(\Varid{k}\;\Varid{x})\;\Varid{s'}{}\<[E]%
\ColumnHook
\end{hscode}\resethooks
\indentend \noindent \mbox{\underline{case \ensuremath{\Varid{p}\mathrel{=}\Conid{Op}\;(\Conid{Inr}\;\Varid{y})}}}
\indentbegin \begin{hscode}\SaveRestoreHook
\column{B}{@{}>{\hspre}l<{\hspost}@{}}%
\column{3}{@{}>{\hspre}l<{\hspost}@{}}%
\column{6}{@{}>{\hspre}l<{\hspost}@{}}%
\column{E}{@{}>{\hspre}l<{\hspost}@{}}%
\>[6]{}\Varid{h_{State1}}\;(\Conid{Op}\;(\Conid{Inr}\;\Varid{y})>\!\!>\!\!=\Varid{k})\;\Varid{s}{}\<[E]%
\\
\>[3]{}\mathrel{=}\mbox{\commentbegin ~  definition of \ensuremath{(>\!\!>\!\!=)} for free monad   \commentend}{}\<[E]%
\\
\>[3]{}\hsindent{3}{}\<[6]%
\>[6]{}\Varid{h_{State1}}\;(\Conid{Op}\;(\Varid{fmap}\;(>\!\!>\!\!=\Varid{k})\;(\Conid{Inr}\;\Varid{y})))\;\Varid{s}{}\<[E]%
\\
\>[3]{}\mathrel{=}\mbox{\commentbegin ~  definition of \ensuremath{\Varid{fmap}} for coproduct \ensuremath{(\mathrel{{:}{+}{:}})}   \commentend}{}\<[E]%
\\
\>[3]{}\hsindent{3}{}\<[6]%
\>[6]{}\Varid{h_{State1}}\;(\Conid{Op}\;(\Conid{Inr}\;(\Varid{fmap}\;(>\!\!>\!\!=\Varid{k})\;\Varid{y})))\;\Varid{s}{}\<[E]%
\\
\>[3]{}\mathrel{=}\mbox{\commentbegin ~  definition of \ensuremath{\Varid{h_{State1}}}   \commentend}{}\<[E]%
\\
\>[3]{}\hsindent{3}{}\<[6]%
\>[6]{}\Conid{Op}\;(\Varid{fmap}\;(\lambda \Varid{x}\to \Varid{h_{State1}}\;\Varid{x}\;\Varid{s})\;(\Varid{fmap}\;(>\!\!>\!\!=\Varid{k})\;\Varid{y})){}\<[E]%
\\
\>[3]{}\mathrel{=}\mbox{\commentbegin ~  \ensuremath{\Varid{fmap}} fusion   \commentend}{}\<[E]%
\\
\>[3]{}\hsindent{3}{}\<[6]%
\>[6]{}\Conid{Op}\;(\Varid{fmap}\;((\lambda \Varid{x}\to \Varid{h_{State1}}\;(\Varid{x}>\!\!>\!\!=\Varid{k})\;\Varid{s}))\;\Varid{y}){}\<[E]%
\\
\>[3]{}\mathrel{=}\mbox{\commentbegin ~  induction hypothesis   \commentend}{}\<[E]%
\\
\>[3]{}\hsindent{3}{}\<[6]%
\>[6]{}\Conid{Op}\;(\Varid{fmap}\;(\lambda \Varid{x}\to \Varid{h_{State1}}\;\Varid{x}\;\Varid{s}>\!\!>\!\!=\lambda (\Varid{x'},\Varid{s'})\to \Varid{h_{State1}}\;(\Varid{k}\;\Varid{x'})\;\Varid{s'})\;\Varid{y}){}\<[E]%
\\
\>[3]{}\mathrel{=}\mbox{\commentbegin ~  \ensuremath{\Varid{fmap}} fission  \commentend}{}\<[E]%
\\
\>[3]{}\hsindent{3}{}\<[6]%
\>[6]{}\Conid{Op}\;(\Varid{fmap}\;(\lambda \Varid{x}\to \Varid{x}>\!\!>\!\!=\lambda (\Varid{x'},\Varid{s'})\to \Varid{h_{State1}}\;(\Varid{k}\;\Varid{x'})\;\Varid{s'})\;(\Varid{fmap}\;(\lambda \Varid{x}\to \Varid{h_{State1}}\;\Varid{x}\;\Varid{s})\;\Varid{y})){}\<[E]%
\\
\>[3]{}\mathrel{=}\mbox{\commentbegin ~  definition of \ensuremath{(>\!\!>\!\!=)}  \commentend}{}\<[E]%
\\
\>[3]{}\hsindent{3}{}\<[6]%
\>[6]{}\Conid{Op}\;((\Varid{fmap}\;(\lambda \Varid{x}\to \Varid{h_{State1}}\;\Varid{x}\;\Varid{s})\;\Varid{y}))>\!\!>\!\!=\lambda (\Varid{x'},\Varid{s'})\to \Varid{h_{State1}}\;(\Varid{k}\;\Varid{x'})\;\Varid{s'}{}\<[E]%
\\
\>[3]{}\mathrel{=}\mbox{\commentbegin ~  definition of \ensuremath{\Varid{h_{State1}}}   \commentend}{}\<[E]%
\\
\>[3]{}\hsindent{3}{}\<[6]%
\>[6]{}\Conid{Op}\;(\Conid{Inr}\;\Varid{y})\;\Varid{s}>\!\!>\!\!=\lambda (\Varid{x'},\Varid{s'})\to \Varid{h_{State1}}\;(\Varid{k}\;\Varid{x'})\;\Varid{s'}{}\<[E]%
\ColumnHook
\end{hscode}\resethooks
\indentend \end{proof}

import Data.Bitraversable (Bitraversable)
\section{Proofs for Modelling Nondeterminism with State}
\label{app:nondet-state}

In this secton, we prove the theorems in
\Cref{sec:nondeterminism-state}.

\subsection{Only Nondeterminism}
\label{app:runnd-hnd}

This section proves the following theorem in
\Cref{sec:sim-nondet-state}.

\nondetStateS*

\begin{proof}
We start with expanding the definition of \ensuremath{\Varid{run_{ND}}}:\indentbegin \begin{hscode}\SaveRestoreHook
\column{B}{@{}>{\hspre}l<{\hspost}@{}}%
\column{3}{@{}>{\hspre}l<{\hspost}@{}}%
\column{E}{@{}>{\hspre}l<{\hspost}@{}}%
\>[3]{}\Varid{extract_S}\hsdot{\circ }{.}\Varid{h_{State}^\prime}\hsdot{\circ }{.}\Varid{nondet2state_S}\mathrel{=}\Varid{h_{ND}}{}\<[E]%
\ColumnHook
\end{hscode}\resethooks
\indentend Both \ensuremath{\Varid{nondet2state_S}} and \ensuremath{\Varid{h_{ND}}} are written as folds.
We use the fold fusion law {\bf fusion-post'}~(\ref{eq:fusion-post-strong}) to
fuse the left-hand side.
Since the right-hand side is already a fold, to prove the equation we
just need to check the components of the fold \ensuremath{\Varid{h_{ND}}} satisfy the
conditions of the fold fusion, i.e., the following two equations:

\[\ba{rl}
    &\ensuremath{(\Varid{extract_S}\hsdot{\circ }{.}\Varid{h_{State}^\prime})\hsdot{\circ }{.}\Varid{gen}\mathrel{=}\Varid{gen_{ND}}} \\
    &\ensuremath{(\Varid{extract_S}\hsdot{\circ }{.}\Varid{h_{State}^\prime})\hsdot{\circ }{.}\Varid{alg}\hsdot{\circ }{.}\Varid{fmap}\;\Varid{nondet2state_S}}\\
 \ensuremath{\mathrel{=}}&  \ensuremath{\Varid{alg_{ND}}\hsdot{\circ }{.}\Varid{fmap}\;(\Varid{extract_S}\hsdot{\circ }{.}\Varid{h_{State}^\prime})\hsdot{\circ }{.}\Varid{fmap}\;\Varid{nondet2state_S}}
\ea\]

For brevity, we omit the last common part \ensuremath{\Varid{fmap}\;\Varid{nondet2state_S}} of
the second equation in the following proof. Instead, we assume that
the input is in the codomain of \ensuremath{\Varid{fmap}\;\Varid{nondet2state_S}}.

For the first equation, we calculate as follows:
\indentbegin \begin{hscode}\SaveRestoreHook
\column{B}{@{}>{\hspre}l<{\hspost}@{}}%
\column{3}{@{}>{\hspre}l<{\hspost}@{}}%
\column{6}{@{}>{\hspre}l<{\hspost}@{}}%
\column{E}{@{}>{\hspre}l<{\hspost}@{}}%
\>[6]{}\Varid{extract_S}\;(\Varid{h_{State}^\prime}\;(\Varid{gen}\;\Varid{x})){}\<[E]%
\\
\>[3]{}\mathrel{=}\mbox{\commentbegin ~  definition of \ensuremath{\Varid{gen}}   \commentend}{}\<[E]%
\\
\>[3]{}\hsindent{3}{}\<[6]%
\>[6]{}\Varid{extract_S}\;(\Varid{h_{State}^\prime}\;(\Varid{append_S}\;\Varid{x}\;\Varid{pop_S})){}\<[E]%
\\
\>[3]{}\mathrel{=}\mbox{\commentbegin ~  definition of \ensuremath{\Varid{extract_S}}   \commentend}{}\<[E]%
\\
\>[3]{}\hsindent{3}{}\<[6]%
\>[6]{}\Varid{results}\hsdot{\circ }{.}\Varid{snd}\mathbin{\$}\Varid{run_{State}}\;(\Varid{h_{State}^\prime}\;(\Varid{append_S}\;\Varid{x}\;\Varid{pop_S}))\;(\Conid{S}\;[\mskip1.5mu \mskip1.5mu]\;[\mskip1.5mu \mskip1.5mu]){}\<[E]%
\\
\>[3]{}\mathrel{=}\mbox{\commentbegin ~  \Cref{eq:eval-append}   \commentend}{}\<[E]%
\\
\>[3]{}\hsindent{3}{}\<[6]%
\>[6]{}\Varid{results}\hsdot{\circ }{.}\Varid{snd}\mathbin{\$}\Varid{run_{State}}\;(\Varid{h_{State}^\prime}\;\Varid{pop_S})\;(\Conid{S}\;([\mskip1.5mu \mskip1.5mu]+\!\!+[\mskip1.5mu \Varid{x}\mskip1.5mu])\;[\mskip1.5mu \mskip1.5mu]){}\<[E]%
\\
\>[3]{}\mathrel{=}\mbox{\commentbegin ~  definition of \ensuremath{(+\!\!+)}   \commentend}{}\<[E]%
\\
\>[3]{}\hsindent{3}{}\<[6]%
\>[6]{}\Varid{results}\hsdot{\circ }{.}\Varid{snd}\mathbin{\$}\Varid{run_{State}}\;(\Varid{h_{State}^\prime}\;\Varid{pop_S})\;(\Conid{S}\;[\mskip1.5mu \Varid{x}\mskip1.5mu]\;[\mskip1.5mu \mskip1.5mu]){}\<[E]%
\\
\>[3]{}\mathrel{=}\mbox{\commentbegin ~  \Cref{eq:eval-pop1}   \commentend}{}\<[E]%
\\
\>[3]{}\hsindent{3}{}\<[6]%
\>[6]{}\Varid{results}\hsdot{\circ }{.}\Varid{snd}\mathbin{\$}((),\Conid{S}\;[\mskip1.5mu \Varid{x}\mskip1.5mu]\;[\mskip1.5mu \mskip1.5mu]){}\<[E]%
\\
\>[3]{}\mathrel{=}\mbox{\commentbegin ~  definition of \ensuremath{\Varid{snd}}   \commentend}{}\<[E]%
\\
\>[3]{}\hsindent{3}{}\<[6]%
\>[6]{}\Varid{results}\;(\Conid{S}\;[\mskip1.5mu \Varid{x}\mskip1.5mu]\;[\mskip1.5mu \mskip1.5mu]){}\<[E]%
\\
\>[3]{}\mathrel{=}\mbox{\commentbegin ~  definition of \ensuremath{\Varid{results}}   \commentend}{}\<[E]%
\\
\>[3]{}\hsindent{3}{}\<[6]%
\>[6]{}[\mskip1.5mu \Varid{x}\mskip1.5mu]{}\<[E]%
\\
\>[3]{}\mathrel{=}\mbox{\commentbegin ~  definition of \ensuremath{\Varid{gen_{ND}}}   \commentend}{}\<[E]%
\\
\>[3]{}\hsindent{3}{}\<[6]%
\>[6]{}\Varid{gen_{ND}}\;\Varid{x}{}\<[E]%
\ColumnHook
\end{hscode}\resethooks
\indentend 
For the second equation, we proceed with a case analysis on the input.

\noindent \mbox{\underline{case \ensuremath{\Conid{Fail}}}}
\indentbegin \begin{hscode}\SaveRestoreHook
\column{B}{@{}>{\hspre}l<{\hspost}@{}}%
\column{3}{@{}>{\hspre}l<{\hspost}@{}}%
\column{6}{@{}>{\hspre}l<{\hspost}@{}}%
\column{E}{@{}>{\hspre}l<{\hspost}@{}}%
\>[6]{}\Varid{extract_S}\;(\Varid{h_{State}^\prime}\;(\Varid{alg}\;\Conid{Fail})){}\<[E]%
\\
\>[3]{}\mathrel{=}\mbox{\commentbegin ~  definition of \ensuremath{\Varid{alg}}   \commentend}{}\<[E]%
\\
\>[3]{}\hsindent{3}{}\<[6]%
\>[6]{}\Varid{extract_S}\;(\Varid{h_{State}^\prime}\;\Varid{pop_S}){}\<[E]%
\\
\>[3]{}\mathrel{=}\mbox{\commentbegin ~  definition of \ensuremath{\Varid{extract_S}}   \commentend}{}\<[E]%
\\
\>[3]{}\hsindent{3}{}\<[6]%
\>[6]{}\Varid{results}\hsdot{\circ }{.}\Varid{snd}\mathbin{\$}\Varid{run_{State}}\;(\Varid{h_{State}^\prime}\;\Varid{pop_S})\;(\Conid{S}\;[\mskip1.5mu \mskip1.5mu]\;[\mskip1.5mu \mskip1.5mu]){}\<[E]%
\\
\>[3]{}\mathrel{=}\mbox{\commentbegin ~  \Cref{eq:eval-pop1}   \commentend}{}\<[E]%
\\
\>[3]{}\hsindent{3}{}\<[6]%
\>[6]{}\Varid{results}\hsdot{\circ }{.}\Varid{snd}\mathbin{\$}((),\Conid{S}\;[\mskip1.5mu \mskip1.5mu]\;[\mskip1.5mu \mskip1.5mu]){}\<[E]%
\\
\>[3]{}\mathrel{=}\mbox{\commentbegin ~  evaluation of \ensuremath{\Varid{results}}, \ensuremath{\Varid{snd}}   \commentend}{}\<[E]%
\\
\>[3]{}\hsindent{3}{}\<[6]%
\>[6]{}[\mskip1.5mu \mskip1.5mu]{}\<[E]%
\\
\>[3]{}\mathrel{=}\mbox{\commentbegin ~  definition of \ensuremath{\Varid{alg_{ND}}}   \commentend}{}\<[E]%
\\
\>[3]{}\hsindent{3}{}\<[6]%
\>[6]{}\Varid{alg_{ND}}\;\Conid{Fail}{}\<[E]%
\\
\>[3]{}\mathrel{=}\mbox{\commentbegin ~  definition of \ensuremath{\Varid{fmap}}   \commentend}{}\<[E]%
\\
\>[3]{}\hsindent{3}{}\<[6]%
\>[6]{}(\Varid{alg_{ND}}\hsdot{\circ }{.}\Varid{fmap}\;(\Varid{extract_S}\hsdot{\circ }{.}\Varid{h_{State}^\prime}))\;\Conid{Fail}{}\<[E]%
\ColumnHook
\end{hscode}\resethooks
\indentend 
\noindent \mbox{\underline{case \ensuremath{\Conid{Or}\;\Varid{p}\;\Varid{q}}}}
\indentbegin \begin{hscode}\SaveRestoreHook
\column{B}{@{}>{\hspre}l<{\hspost}@{}}%
\column{3}{@{}>{\hspre}l<{\hspost}@{}}%
\column{6}{@{}>{\hspre}l<{\hspost}@{}}%
\column{E}{@{}>{\hspre}l<{\hspost}@{}}%
\>[6]{}\Varid{extract_S}\;(\Varid{h_{State}^\prime}\;(\Varid{alg}\;(\Conid{Or}\;\Varid{p}\;\Varid{q}))){}\<[E]%
\\
\>[3]{}\mathrel{=}\mbox{\commentbegin ~  definition of \ensuremath{\Varid{alg}}   \commentend}{}\<[E]%
\\
\>[3]{}\hsindent{3}{}\<[6]%
\>[6]{}\Varid{extract_S}\;(\Varid{h_{State}^\prime}\;(\Varid{push_S}\;\Varid{q}\;\Varid{p})){}\<[E]%
\\
\>[3]{}\mathrel{=}\mbox{\commentbegin ~  definition of \ensuremath{\Varid{extract}}   \commentend}{}\<[E]%
\\
\>[3]{}\hsindent{3}{}\<[6]%
\>[6]{}\Varid{results}\hsdot{\circ }{.}\Varid{snd}\mathbin{\$}\Varid{run_{State}}\;(\Varid{h_{State}^\prime}\;(\Varid{push_S}\;\Varid{q}\;\Varid{p}))\;(\Conid{S}\;[\mskip1.5mu \mskip1.5mu]\;[\mskip1.5mu \mskip1.5mu]){}\<[E]%
\\
\>[3]{}\mathrel{=}\mbox{\commentbegin ~  \Cref{eq:eval-push}   \commentend}{}\<[E]%
\\
\>[3]{}\hsindent{3}{}\<[6]%
\>[6]{}\Varid{results}\hsdot{\circ }{.}\Varid{snd}\mathbin{\$}\Varid{run_{State}}\;(\Varid{h_{State}^\prime}\;\Varid{p})\;(\Conid{S}\;[\mskip1.5mu \mskip1.5mu]\;[\mskip1.5mu \Varid{q}\mskip1.5mu]){}\<[E]%
\\
\>[3]{}\mathrel{=}\mbox{\commentbegin ~  \Cref{eq:pop-extract}   \commentend}{}\<[E]%
\\
\>[3]{}\hsindent{3}{}\<[6]%
\>[6]{}\Varid{results}\hsdot{\circ }{.}\Varid{snd}\mathbin{\$}\Varid{run_{State}}\;(\Varid{h_{State}^\prime}\;\Varid{pop_S})\;(\Conid{S}\;([\mskip1.5mu \mskip1.5mu]+\!\!+\Varid{extract_S}\;(\Varid{h_{State}^\prime}\;\Varid{p}))\;[\mskip1.5mu \Varid{q}\mskip1.5mu]){}\<[E]%
\\
\>[3]{}\mathrel{=}\mbox{\commentbegin ~  definition of \ensuremath{(+\!\!+)}   \commentend}{}\<[E]%
\\
\>[3]{}\hsindent{3}{}\<[6]%
\>[6]{}\Varid{results}\hsdot{\circ }{.}\Varid{snd}\mathbin{\$}\Varid{run_{State}}\;(\Varid{h_{State}^\prime}\;\Varid{pop_S})\;(\Conid{S}\;(\Varid{extract_S}\;(\Varid{h_{State}^\prime}\;\Varid{p}))\;[\mskip1.5mu \Varid{q}\mskip1.5mu]){}\<[E]%
\\
\>[3]{}\mathrel{=}\mbox{\commentbegin ~  \Cref{eq:eval-pop2}   \commentend}{}\<[E]%
\\
\>[3]{}\hsindent{3}{}\<[6]%
\>[6]{}\Varid{results}\hsdot{\circ }{.}\Varid{snd}\mathbin{\$}\Varid{run_{State}}\;(\Varid{h_{State}^\prime}\;\Varid{q})\;(\Conid{S}\;(\Varid{extract_S}\;(\Varid{h_{State}^\prime}\;\Varid{p}))\;[\mskip1.5mu \mskip1.5mu]){}\<[E]%
\\
\>[3]{}\mathrel{=}\mbox{\commentbegin ~  \Cref{eq:pop-extract}   \commentend}{}\<[E]%
\\
\>[3]{}\hsindent{3}{}\<[6]%
\>[6]{}\Varid{results}\hsdot{\circ }{.}\Varid{snd}\mathbin{\$}\Varid{run_{State}}\;(\Varid{h_{State}^\prime}\;\Varid{pop_S})\;(\Conid{S}\;(\Varid{extract_S}\;(\Varid{h_{State}^\prime}\;\Varid{p})+\!\!+\Varid{extract_S}\;(\Varid{h_{State}^\prime}\;\Varid{q}))\;[\mskip1.5mu \mskip1.5mu]){}\<[E]%
\\
\>[3]{}\mathrel{=}\mbox{\commentbegin ~  \Cref{eq:eval-pop1}   \commentend}{}\<[E]%
\\
\>[3]{}\hsindent{3}{}\<[6]%
\>[6]{}\Varid{results}\hsdot{\circ }{.}\Varid{snd}\mathbin{\$}((),\Conid{S}\;(\Varid{extract_S}\;(\Varid{h_{State}^\prime}\;\Varid{p})+\!\!+\Varid{extract_S}\;(\Varid{h_{State}^\prime}\;\Varid{q}))\;[\mskip1.5mu \mskip1.5mu]){}\<[E]%
\\
\>[3]{}\mathrel{=}\mbox{\commentbegin ~  evaluation of \ensuremath{\Varid{results},\Varid{snd}}   \commentend}{}\<[E]%
\\
\>[3]{}\hsindent{3}{}\<[6]%
\>[6]{}\Varid{extract_S}\;(\Varid{h_{State}^\prime}\;\Varid{p})+\!\!+\Varid{extract_S}\;(\Varid{h_{State}^\prime}\;\Varid{q}){}\<[E]%
\\
\>[3]{}\mathrel{=}\mbox{\commentbegin ~  definition of \ensuremath{\Varid{alg_{ND}}}   \commentend}{}\<[E]%
\\
\>[3]{}\hsindent{3}{}\<[6]%
\>[6]{}\Varid{alg_{ND}}\;(\Conid{Or}\;((\Varid{extract_S}\hsdot{\circ }{.}\Varid{h_{State}^\prime})\;\Varid{p})\;((\Varid{extract_S}\hsdot{\circ }{.}\Varid{h_{State}^\prime})\;\Varid{q})){}\<[E]%
\\
\>[3]{}\mathrel{=}\mbox{\commentbegin ~  definition of \ensuremath{\Varid{fmap}}   \commentend}{}\<[E]%
\\
\>[3]{}\hsindent{3}{}\<[6]%
\>[6]{}(\Varid{alg_{ND}}\hsdot{\circ }{.}\Varid{fmap}\;(\Varid{extract_S}\hsdot{\circ }{.}\Varid{h_{State}^\prime}))\;(\Conid{Or}\;\Varid{p}\;\Varid{q}){}\<[E]%
\ColumnHook
\end{hscode}\resethooks
\indentend 
\end{proof}

In the above proof we have used several lemmas. Now we prove them.

\begin{lemma}[pop-extract]\label{eq:pop-extract}
~
\indentbegin \begin{hscode}\SaveRestoreHook
\column{B}{@{}>{\hspre}l<{\hspost}@{}}%
\column{6}{@{}>{\hspre}l<{\hspost}@{}}%
\column{E}{@{}>{\hspre}l<{\hspost}@{}}%
\>[6]{}\Varid{run_{State}}\;(\Varid{h_{State}^\prime}\;\Varid{p})\;(\Conid{S}\;\Varid{xs}\;\Varid{stack})\mathrel{=}\Varid{run_{State}}\;(\Varid{h_{State}^\prime}\;\Varid{pop_S})\;(\Conid{S}\;(\Varid{xs}+\!\!+\Varid{extract_S}\;(\Varid{h_{State}^\prime}\;\Varid{p}))\;\Varid{stack}){}\<[E]%
\ColumnHook
\end{hscode}\resethooks
\indentend holds for all \ensuremath{\Varid{p}} in the codomain of the function \ensuremath{\Varid{nondet2state_S}}.
\end{lemma}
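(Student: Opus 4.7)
Since $p$ lies in the codomain of \ensuremath{\Varid{nondet2state_S}}, we have $p = \ensuremath{\Varid{nondet2state_S}\;\Varid{t}}$ for some $\ensuremath{\Varid{t}\mathbin{::}\Conid{Free}\;\Varid{Nondet_{F}}\;\Varid{a}}$. The plan is to prove the equation by structural induction on $t$, which is equivalent to case analysis on the three algebra clauses of the fold defining \ensuremath{\Varid{nondet2state_S}}: the generator \ensuremath{\Varid{append_S}\;\Varid{x}\;\Varid{pop_S}}, the \ensuremath{\Conid{Fail}} case \ensuremath{\Varid{pop_S}}, and the \ensuremath{\Conid{Or}} case \ensuremath{\Varid{push_S}\;\Varid{q}\;\Varid{p'}}. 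In each case the computation ends by invoking \ensuremath{\Varid{pop_S}}, which is the structural invariant that makes the lemma true.

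The base cases are straightforward unfoldings. For \ensuremath{\Varid{t}\mathrel{=}\Conid{Var}\;\Varid{x}}, first apply the evaluation lemma for \ensuremath{\Varid{append_S}} to rewrite the left-hand side as running \ensuremath{\Varid{pop_S}} on state \ensuremath{\Conid{S}\;(\Varid{xs}+\!\!+[\mskip1.5mu \Varid{x}\mskip1.5mu])\;\Varid{stack}}; on the right-hand side, compute \ensuremath{\Varid{extract_S}\;(\Varid{h_{State}^\prime}\;(\Varid{append_S}\;\Varid{x}\;\Varid{pop_S}))\mathrel{=}[\mskip1.5mu \Varid{x}\mskip1.5mu]} from the definitions of \ensuremath{\Varid{append_S}}, \ensuremath{\Varid{pop_S}}, and \ensuremath{\Varid{extract_S}}, and both sides agree. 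For \ensuremath{\Varid{t}\mathrel{=}\Conid{Op}\;\Conid{Fail}}, we have $p = \ensuremath{\Varid{pop_S}}$ and \ensuremath{\Varid{extract_S}\;(\Varid{h_{State}^\prime}\;\Varid{pop_S})\mathrel{=}[\mskip1.5mu \mskip1.5mu]} (using the empty-stack branch of \ensuremath{\Varid{pop_S}}), so the right-hand side collapses via \ensuremath{\Varid{xs}+\!\!+[\mskip1.5mu \mskip1.5mu]\mathrel{=}\Varid{xs}} to exactly the left-hand side.

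The main obstacle is the \ensuremath{\Conid{Or}} case, where \ensuremath{\Varid{p}\mathrel{=}\Varid{push_S}\;\Varid{q}_{2}\;\Varid{q}_{1}} with $q_1, q_2$ themselves in the codomain of \ensuremath{\Varid{nondet2state_S}}. I would chain the following rewrites on the left-hand side: first apply the evaluation lemma for \ensuremath{\Varid{push_S}} to push $q_2$ onto \ensuremath{\Varid{stack}} and continue with $q_1$; then invoke the induction hypothesis on $q_1$ to turn the computation into \ensuremath{\Varid{pop_S}} run with results list \ensuremath{\Varid{xs}+\!\!+\Varid{extract_S}\;(\Varid{h_{State}^\prime}\;\Varid{q}_{1})} on the stack $q_2 : \ensuremath{\Varid{stack}}$; then apply the pop-from-nonempty-stack evaluation lemma to execute $q_2$ with that enlarged result list and the original \ensuremath{\Varid{stack}}; finally apply the induction hypothesis again, this time to $q_2$. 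At this point the left-hand side has been rewritten to \ensuremath{\Varid{run_{State}}\;(\Varid{h_{State}^\prime}\;\Varid{pop_S})\;(\Conid{S}\;(\Varid{xs}+\!\!+\Varid{extract_S}\;(\Varid{h_{State}^\prime}\;\Varid{q}_{1})+\!\!+\Varid{extract_S}\;(\Varid{h_{State}^\prime}\;\Varid{q}_{2}))\;\Varid{stack})} by associativity of \ensuremath{(+\!\!+)}.

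To reconcile this with the right-hand side, I need a subsidiary fact saying $\ensuremath{\Varid{extract_S}\;(\Varid{h_{State}^\prime}\;(\Varid{push_S}\;\Varid{q}_{2}\;\Varid{q}_{1}))\mathrel{=}\Varid{extract_S}\;(\Varid{h_{State}^\prime}\;\Varid{q}_{1})+\!\!+\Varid{extract_S}\;(\Varid{h_{State}^\prime}\;\Varid{q}_{2})}$. This is itself the main lemma specialized to $\ensuremath{\Varid{xs}\mathrel{=}[\mskip1.5mu \mskip1.5mu]}$ and $\ensuremath{\Varid{stack}\mathrel{=}[\mskip1.5mu \mskip1.5mu]}$, obtained by the same chain of rewrites applied with empty initial results and empty stack, plus the fact that \ensuremath{\Varid{pop_S}} on an empty stack leaves the results intact. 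Because this fact depends on the very lemma being proved, some care is needed to keep the induction well-founded: the cleanest way is to prove the general statement directly and then instantiate it to obtain the push-extract equation as a corollary exactly where it is needed, rather than as a prerequisite lemma. With the push-extract equation established, the right-hand side equals the derived form of the left-hand side and the \ensuremath{\Conid{Or}} case closes.
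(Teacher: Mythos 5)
Your proposal is correct and follows essentially the same calculation as the paper: case analysis on the three clauses of \ensuremath{\Varid{nondet2state_S}}, the evaluation lemmas for \ensuremath{\Varid{append_S}}, \ensuremath{\Varid{push_S}} and the two \ensuremath{\Varid{pop_S}} cases, and in the \ensuremath{\Conid{Or}} case the chain eval-push, IH, eval-pop2, IH. The one place you genuinely diverge is in how the auxiliary equation \ensuremath{\Varid{extract_S}\;(\Varid{h_{State}^\prime}\;(\Varid{push_S}\;\Varid{q}_{2}\;\Varid{q}_{1}))\mathrel{=}\Varid{extract_S}\;(\Varid{h_{State}^\prime}\;\Varid{q}_{1})+\!\!+\Varid{extract_S}\;(\Varid{h_{State}^\prime}\;\Varid{q}_{2})} (and its analogues for \ensuremath{\Varid{gen}} and \ensuremath{\Conid{Fail}}) is justified. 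The paper imports these as the properties extract-gen, extract-alg1, extract-alg2 from the proof of \Cref{thm:nondet-stateS}, setting up a mutual induction in which the lemma at \ensuremath{\Varid{p}} assumes the theorem at \ensuremath{\Varid{p}} and the theorem at \ensuremath{\Varid{p}} only invokes the lemma on subterms. You instead derive the same facts inline by rerunning the rewrite chain with empty results and empty stack, which uses only the induction hypotheses on \ensuremath{\Varid{q}_{1}} and \ensuremath{\Varid{q}_{2}} plus eval-pop1, so your concern about well-foundedness is correctly resolved and no appeal to the main theorem is needed. The trade-off is minor: your organisation is more self-contained (the lemma's proof does not depend on the theorem's), at the cost of repeating the same calculation once with the empty state; the paper's organisation avoids that duplication but requires the explicit argument that the mutual dependency between lemma and theorem is non-circular.
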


\begin{proof} ~
We prove this lemma by structural induction on \ensuremath{\Varid{p}\mathbin{::}\Conid{Free}\;(\Varid{State_{F}}\;(\Conid{S}\;\Varid{a}))\;()}.
For each inductive case of \ensuremath{\Varid{p}}, we not only assume this lemma holds
for its sub-terms (this is the standard induction hypothesis), but
also assume \Cref{thm:nondet-stateS} holds for \ensuremath{\Varid{p}} and its sub-terms.
This is sound because in the proof of \Cref{thm:nondet-stateS}, for
\ensuremath{(\Varid{extract_S}\hsdot{\circ }{.}\Varid{h_{State}^\prime}\hsdot{\circ }{.}\Varid{nondet2state_S})\;\Varid{p}\mathrel{=}\Varid{h_{ND}}\;\Varid{p}}, we only apply
\Cref{eq:pop-extract} to the sub-terms of \ensuremath{\Varid{p}}, which is already
included in the induction hypothesis so there is no circular argument.

Since we assume \Cref{thm:nondet-stateS} holds for \ensuremath{\Varid{p}} and its
sub-terms, we can use several useful properties proved in the
sub-cases of the proof of \Cref{thm:nondet-stateS}. We list them here
for easy reference:
\begin{itemize}
\item {extract-gen}:
\ensuremath{\Varid{extract_S}\hsdot{\circ }{.}\Varid{h_{State}^\prime}\hsdot{\circ }{.}\Varid{gen}\mathrel{=}\Varid{\eta}}
\item {extract-alg1}:
\ensuremath{\Varid{extract_S}\;(\Varid{h_{State}^\prime}\;(\Varid{alg}\;\Conid{Fail}))\mathrel{=}[\mskip1.5mu \mskip1.5mu]}
\item {extract-alg2}:
\ensuremath{\Varid{extract_S}\;(\Varid{h_{State}^\prime}\;(\Varid{alg}\;(\Conid{Or}\;\Varid{p}\;\Varid{q})))\mathrel{=}\Varid{extract_S}\;(\Varid{h_{State}^\prime}\;\Varid{p})+\!\!+\Varid{extract_S}\;(\Varid{h_{State}^\prime}\;\Varid{q})}
\end{itemize}

We proceed by structural induction on \ensuremath{\Varid{p}}.
Note that for all \ensuremath{\Varid{p}} in the codomain of \ensuremath{\Varid{nondet2state_S}}, it is either
generated by the \ensuremath{\Varid{gen}} or the \ensuremath{\Varid{alg}} of \ensuremath{\Varid{nondet2state_S}}.  Thus, we only
need to prove the following two equations where \ensuremath{\Varid{p}\mathrel{=}\Varid{gen}\;\Varid{x}} or \ensuremath{\Varid{p}\mathrel{=}\Varid{alg}\;\Varid{x}} and \ensuremath{\Varid{x}} is in the codomain of \ensuremath{\Varid{fmap}\;\Varid{nondet2state_S}}.
\begin{enumerate}
    \item \ensuremath{\Varid{run_{State}}\;(\Varid{h_{State}^\prime}\;(\Varid{gen}\;\Varid{x}))\;(\Conid{S}\;\Varid{xs}\;\Varid{stack})\mathrel{=}\Varid{run_{State}}\;(\Varid{h_{State}^\prime}\;\Varid{pop_S})\;(\Conid{S}\;(\Varid{xs}+\!\!+\Varid{extract_S}\;(\Varid{h_{State}^\prime}\;(\Varid{gen}\;\Varid{x})))\;\Varid{stack})}
    \item \ensuremath{\Varid{run_{State}}\;(\Varid{h_{State}^\prime}\;(\Varid{alg}\;\Varid{x}))\;(\Conid{S}\;\Varid{xs}\;\Varid{stack})\mathrel{=}\Varid{run_{State}}\;(\Varid{h_{State}^\prime}\;\Varid{pop_S})\;(\Conid{S}\;(\Varid{xs}+\!\!+\Varid{extract_S}\;(\Varid{h_{State}^\prime}\;(\Varid{alg}\;\Varid{x})))\;\Varid{stack})}
\end{enumerate}

For the case \ensuremath{\Varid{p}\mathrel{=}\Varid{gen}\;\Varid{x}}, we calculate as follows:
\indentbegin \begin{hscode}\SaveRestoreHook
\column{B}{@{}>{\hspre}l<{\hspost}@{}}%
\column{3}{@{}>{\hspre}l<{\hspost}@{}}%
\column{6}{@{}>{\hspre}l<{\hspost}@{}}%
\column{E}{@{}>{\hspre}l<{\hspost}@{}}%
\>[6]{}\Varid{run_{State}}\;(\Varid{h_{State}^\prime}\;(\Varid{gen}\;\Varid{x}))\;(\Conid{S}\;\Varid{xs}\;\Varid{stack}){}\<[E]%
\\
\>[3]{}\mathrel{=}\mbox{\commentbegin ~  definition of \ensuremath{\Varid{gen}}   \commentend}{}\<[E]%
\\
\>[3]{}\hsindent{3}{}\<[6]%
\>[6]{}\Varid{run_{State}}\;(\Varid{h_{State}^\prime}\;(\Varid{append_S}\;\Varid{x}\;\Varid{pop_S}))\;(\Conid{S}\;\Varid{xs}\;\Varid{stack}){}\<[E]%
\\
\>[3]{}\mathrel{=}\mbox{\commentbegin ~  \Cref{eq:eval-append}   \commentend}{}\<[E]%
\\
\>[3]{}\hsindent{3}{}\<[6]%
\>[6]{}\Varid{run_{State}}\;(\Varid{h_{State}^\prime}\;\Varid{pop_S})\;(\Conid{S}\;(\Varid{xs}+\!\!+[\mskip1.5mu \Varid{x}\mskip1.5mu])\;\Varid{stack}){}\<[E]%
\\
\>[3]{}\mathrel{=}\mbox{\commentbegin ~  definition of \ensuremath{\Varid{\eta}}   \commentend}{}\<[E]%
\\
\>[3]{}\hsindent{3}{}\<[6]%
\>[6]{}\Varid{run_{State}}\;(\Varid{h_{State}^\prime}\;\Varid{pop_S})\;(\Conid{S}\;(\Varid{xs}+\!\!+\Varid{\eta}\;\Varid{x})\;\Varid{stack}){}\<[E]%
\\
\>[3]{}\mathrel{=}\mbox{\commentbegin ~  extract-gen   \commentend}{}\<[E]%
\\
\>[3]{}\hsindent{3}{}\<[6]%
\>[6]{}\Varid{run_{State}}\;(\Varid{h_{State}^\prime}\;\Varid{pop_S})\;(\Conid{S}\;(\Varid{xs}+\!\!+\Varid{extract_S}\;(\Varid{h_{State}^\prime}\;(\Varid{gen}\;\Varid{x})))\;\Varid{stack}){}\<[E]%
\ColumnHook
\end{hscode}\resethooks
\indentend %

For the case \ensuremath{\Varid{p}\mathrel{=}\Varid{alg}\;\Varid{x}}, we proceed with a case analysis on \ensuremath{\Varid{x}}.

\noindent \mbox{\underline{case \ensuremath{\Conid{Fail}}}}
\indentbegin \begin{hscode}\SaveRestoreHook
\column{B}{@{}>{\hspre}l<{\hspost}@{}}%
\column{3}{@{}>{\hspre}l<{\hspost}@{}}%
\column{6}{@{}>{\hspre}l<{\hspost}@{}}%
\column{E}{@{}>{\hspre}l<{\hspost}@{}}%
\>[6]{}\Varid{run_{State}}\;(\Varid{h_{State}^\prime}\;(\Varid{alg}\;\Conid{Fail}))\;(\Conid{S}\;\Varid{xs}\;\Varid{stack}){}\<[E]%
\\
\>[3]{}\mathrel{=}\mbox{\commentbegin ~  definition of \ensuremath{\Varid{alg}}   \commentend}{}\<[E]%
\\
\>[3]{}\hsindent{3}{}\<[6]%
\>[6]{}\Varid{run_{State}}\;(\Varid{h_{State}^\prime}\;(\Varid{pop_S}))\;(\Conid{S}\;\Varid{xs}\;\Varid{stack}){}\<[E]%
\\
\>[3]{}\mathrel{=}\mbox{\commentbegin ~  definition of \ensuremath{[\mskip1.5mu \mskip1.5mu]}   \commentend}{}\<[E]%
\\
\>[3]{}\hsindent{3}{}\<[6]%
\>[6]{}\Varid{run_{State}}\;(\Varid{h_{State}^\prime}\;\Varid{pop_S})\;(\Conid{S}\;(\Varid{xs}+\!\!+[\mskip1.5mu \mskip1.5mu])\;\Varid{stack}){}\<[E]%
\\
\>[3]{}\mathrel{=}\mbox{\commentbegin ~  extract-alg1   \commentend}{}\<[E]%
\\
\>[3]{}\hsindent{3}{}\<[6]%
\>[6]{}\Varid{run_{State}}\;(\Varid{h_{State}^\prime}\;\Varid{pop_S})\;(\Conid{S}\;(\Varid{xs}+\!\!+\Varid{extract_S}\;(\Varid{h_{State}^\prime}\;(\Varid{alg}\;\Conid{Fail})))\;\Varid{stack}){}\<[E]%
\ColumnHook
\end{hscode}\resethooks
\indentend \noindent \mbox{\underline{case \ensuremath{\Conid{Or}\;\Varid{p}_{1}\;\Varid{p}_{2}}}}

\indentbegin \begin{hscode}\SaveRestoreHook
\column{B}{@{}>{\hspre}l<{\hspost}@{}}%
\column{3}{@{}>{\hspre}l<{\hspost}@{}}%
\column{6}{@{}>{\hspre}l<{\hspost}@{}}%
\column{E}{@{}>{\hspre}l<{\hspost}@{}}%
\>[6]{}\Varid{run_{State}}\;(\Varid{h_{State}^\prime}\;(\Varid{alg}\;(\Conid{Or}\;\Varid{p}_{1}\;\Varid{p}_{2})))\;(\Conid{S}\;\Varid{xs}\;\Varid{stack}){}\<[E]%
\\
\>[3]{}\mathrel{=}\mbox{\commentbegin ~  definition of \ensuremath{\Varid{alg}}   \commentend}{}\<[E]%
\\
\>[3]{}\hsindent{3}{}\<[6]%
\>[6]{}\Varid{run_{State}}\;(\Varid{h_{State}^\prime}\;(\Varid{push_S}\;\Varid{p}_{2}\;\Varid{p}_{1}))\;(\Conid{S}\;\Varid{xs}\;\Varid{stack}){}\<[E]%
\\
\>[3]{}\mathrel{=}\mbox{\commentbegin ~  \Cref{eq:eval-push}   \commentend}{}\<[E]%
\\
\>[3]{}\hsindent{3}{}\<[6]%
\>[6]{}\Varid{run_{State}}\;(\Varid{h_{State}^\prime}\;\Varid{p}_{1})\;(\Conid{S}\;\Varid{xs}\;(\Varid{p}_{2}\mathbin{:}\Varid{stack})){}\<[E]%
\\
\>[3]{}\mathrel{=}\mbox{\commentbegin ~  induction hypothesis   \commentend}{}\<[E]%
\\
\>[3]{}\hsindent{3}{}\<[6]%
\>[6]{}\Varid{run_{State}}\;(\Varid{h_{State}^\prime}\;\Varid{pop_S})\;(\Conid{S}\;(\Varid{xs}+\!\!+\Varid{extract_S}\;(\Varid{h_{State}^\prime}\;\Varid{p}_{1}))\;(\Varid{p}_{2}\mathbin{:}\Varid{stack})){}\<[E]%
\\
\>[3]{}\mathrel{=}\mbox{\commentbegin ~  \Cref{eq:eval-pop2}   \commentend}{}\<[E]%
\\
\>[3]{}\hsindent{3}{}\<[6]%
\>[6]{}\Varid{run_{State}}\;(\Varid{h_{State}^\prime}\;\Varid{p}_{2})\;(\Conid{S}\;(\Varid{xs}+\!\!+\Varid{extract_S}\;(\Varid{h_{State}^\prime}\;\Varid{p}_{1}))\;\Varid{stack}){}\<[E]%
\\
\>[3]{}\mathrel{=}\mbox{\commentbegin ~  induction hypothesis   \commentend}{}\<[E]%
\\
\>[3]{}\hsindent{3}{}\<[6]%
\>[6]{}\Varid{run_{State}}\;(\Varid{h_{State}^\prime}\;\Varid{pop_S})\;(\Conid{S}\;(\Varid{xs}+\!\!+\Varid{extract_S}\;(\Varid{h_{State}^\prime}\;\Varid{p}_{1})+\!\!+\Varid{extract_S}\;(\Varid{h_{State}^\prime}\;\Varid{p}_{2}))\;\Varid{stack}){}\<[E]%
\\
\>[3]{}\mathrel{=}\mbox{\commentbegin ~  extract-alg2   \commentend}{}\<[E]%
\\
\>[3]{}\hsindent{3}{}\<[6]%
\>[6]{}\Varid{run_{State}}\;(\Varid{h_{State}^\prime}\;\Varid{pop_S})\;(\Conid{S}\;(\Varid{xs}+\!\!+\Varid{h_{State}^\prime}\;(\Varid{alg}\;(\Conid{Or}\;\Varid{p}_{1}\;\Varid{p}_{2})))\;\Varid{stack}){}\<[E]%
\ColumnHook
\end{hscode}\resethooks
\indentend %
\end{proof}

The following four lemmas characterise the behaviours of stack
operations.

\begin{lemma}[evaluation-append]\label{eq:eval-append}~\indentbegin \begin{hscode}\SaveRestoreHook
\column{B}{@{}>{\hspre}l<{\hspost}@{}}%
\column{3}{@{}>{\hspre}l<{\hspost}@{}}%
\column{E}{@{}>{\hspre}l<{\hspost}@{}}%
\>[3]{}\Varid{run_{State}}\;(\Varid{h_{State}^\prime}\;(\Varid{append_S}\;\Varid{x}\;\Varid{p}))\;(\Conid{S}\;\Varid{xs}\;\Varid{stack})\mathrel{=}\Varid{run_{State}}\;(\Varid{h_{State}^\prime}\;\Varid{p})\;(\Conid{S}\;(\Varid{xs}+\!\!+[\mskip1.5mu \Varid{x}\mskip1.5mu])\;\Varid{stack}){}\<[E]%
\ColumnHook
\end{hscode}\resethooks
\indentend \end{lemma}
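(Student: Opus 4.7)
My plan is to prove this lemma by direct equational calculation, unfolding \ensuremath{\Varid{append_S}} and then pushing \ensuremath{\Varid{h_{State}^\prime}} through the resulting free-monad term operation-by-operation. No induction on \ensuremath{\Varid{p}} is required here, because \ensuremath{\Varid{append_S}\;\Varid{x}\;\Varid{p}} has a fixed outer shape of one \ensuremath{\Varid{get}} followed by one \ensuremath{\Varid{put}} followed by \ensuremath{\Varid{p}}, and all we need to show is that running this prefix at state \ensuremath{\Conid{S}\;\Varid{xs}\;\Varid{stack}} hands \ensuremath{\Varid{p}} over at state \ensuremath{\Conid{S}\;(\Varid{xs}+\!\!+[\mskip1.5mu \Varid{x}\mskip1.5mu])\;\Varid{stack}}.

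First I would rewrite \ensuremath{\Varid{append_S}\;\Varid{x}\;\Varid{p}} via its \ensuremath{\mathbf{do}}-notation definition as \ensuremath{\Varid{get}>\!\!>\!\!=(\lambda (\Conid{S}\;\Varid{ys}\;\Varid{st})\to \Varid{put}\;(\Conid{S}\;(\Varid{ys}+\!\!+[\mskip1.5mu \Varid{x}\mskip1.5mu])\;\Varid{st})>\!\!>\Varid{p})}. Translating the binds into the free-monad representation from \Cref{sec:free-monads-and-their-folds}, this becomes a term of the shape \ensuremath{\Conid{Op}\;(\Conid{Get}\;\Varid{k}_{1})} where \ensuremath{\Varid{k}_{1}\;(\Conid{S}\;\Varid{ys}\;\Varid{st})\mathrel{=}\Conid{Op}\;(\Conid{Put}\;(\Conid{S}\;(\Varid{ys}+\!\!+[\mskip1.5mu \Varid{x}\mskip1.5mu])\;\Varid{st})\;\Varid{p})}. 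Then I would unfold \ensuremath{\Varid{h_{State}^\prime}} by one step of the fold recursion on the outer \ensuremath{\Conid{Op}}, firing the \ensuremath{\Conid{Get}}-clause of \ensuremath{\Varid{alg_{S}^\prime}} to yield \ensuremath{\Conid{State}\;(\lambda \Varid{s}\to \Varid{run_{State}}\;(\Varid{h_{State}^\prime}\;(\Varid{k}_{1}\;\Varid{s}))\;\Varid{s})}.

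Applying \ensuremath{\Varid{run_{State}}} to this \ensuremath{\Conid{State}}-value at the explicit initial state \ensuremath{\Conid{S}\;\Varid{xs}\;\Varid{stack}} forces the \ensuremath{\lambda}-abstraction, binding \ensuremath{\Varid{ys}\mapsto \Varid{xs}} and \ensuremath{\Varid{st}\mapsto \Varid{stack}}, and leaves us with \ensuremath{\Varid{run_{State}}\;(\Varid{h_{State}^\prime}\;(\Conid{Op}\;(\Conid{Put}\;(\Conid{S}\;(\Varid{xs}+\!\!+[\mskip1.5mu \Varid{x}\mskip1.5mu])\;\Varid{stack})\;\Varid{p})))\;(\Conid{S}\;\Varid{xs}\;\Varid{stack})}. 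A second unfolding step of \ensuremath{\Varid{h_{State}^\prime}} then fires the \ensuremath{\Conid{Put}}-clause of \ensuremath{\Varid{alg_{S}^\prime}}, which overwrites the state with \ensuremath{\Conid{S}\;(\Varid{xs}+\!\!+[\mskip1.5mu \Varid{x}\mskip1.5mu])\;\Varid{stack}} before delegating to \ensuremath{\Varid{h_{State}^\prime}\;\Varid{p}}, producing exactly the right-hand side \ensuremath{\Varid{run_{State}}\;(\Varid{h_{State}^\prime}\;\Varid{p})\;(\Conid{S}\;(\Varid{xs}+\!\!+[\mskip1.5mu \Varid{x}\mskip1.5mu])\;\Varid{stack})}.

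There is no real obstacle: each monadic bind is immediately consumed by \ensuremath{\Varid{run_{State}}} at a concrete state, so we never have to reason about sequencing abstractly and can avoid appealing to the distributivity lemma analogous to Lemma dist-hState1. The only thing to watch is the pattern match \ensuremath{\Conid{S}\;\Varid{ys}\;\Varid{st}} against the value delivered by \ensuremath{\Varid{get}}, which succeeds definitionally because the initial state is supplied as \ensuremath{\Conid{S}\;\Varid{xs}\;\Varid{stack}}. The companion evaluation lemmas for \ensuremath{\Varid{pop_S}} and \ensuremath{\Varid{push_S}} (used above in Appendix D.1) follow by the same scheme, differing only in which component of \ensuremath{\Conid{S}} is modified.
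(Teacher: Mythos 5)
Your proposal is correct and matches the paper's own proof essentially step for step: unfold \ensuremath{\Varid{append_S}}, push the continuation inside the \ensuremath{\Conid{Get}} node via the free-monad bind and the return-bind law, fire the \ensuremath{\Conid{Get}} and then the \ensuremath{\Conid{Put}} clauses of \ensuremath{\Varid{alg_{S}^\prime}} at the concrete state, with no induction on \ensuremath{\Varid{p}} and no appeal to a distributivity lemma. Your closing observation that the \ensuremath{\Varid{pop_S}} and \ensuremath{\Varid{push_S}} evaluation lemmas follow by the same scheme is also how the paper proceeds.
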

\begin{proof}~\indentbegin \begin{hscode}\SaveRestoreHook
\column{B}{@{}>{\hspre}l<{\hspost}@{}}%
\column{3}{@{}>{\hspre}l<{\hspost}@{}}%
\column{6}{@{}>{\hspre}l<{\hspost}@{}}%
\column{E}{@{}>{\hspre}l<{\hspost}@{}}%
\>[6]{}\Varid{run_{State}}\;(\Varid{h_{State}^\prime}\;(\Varid{append_S}\;\Varid{x}\;\Varid{p}))\;(\Conid{S}\;\Varid{xs}\;\Varid{stack}){}\<[E]%
\\
\>[3]{}\mathrel{=}\mbox{\commentbegin ~  definition of \ensuremath{\Varid{append_S}}   \commentend}{}\<[E]%
\ColumnHook
\end{hscode}\resethooks
\indentend %
\indentbegin \begin{hscode}\SaveRestoreHook
\column{B}{@{}>{\hspre}l<{\hspost}@{}}%
\column{3}{@{}>{\hspre}l<{\hspost}@{}}%
\column{6}{@{}>{\hspre}l<{\hspost}@{}}%
\column{8}{@{}>{\hspre}l<{\hspost}@{}}%
\column{E}{@{}>{\hspre}l<{\hspost}@{}}%
\>[6]{}\Varid{run_{State}}\;(\Varid{h_{State}^\prime}\;(\Varid{get}>\!\!>\!\!=\lambda (\Conid{S}\;\Varid{xs}\;\Varid{stack})\to \Varid{put}\;(\Conid{S}\;(\Varid{xs}+\!\!+[\mskip1.5mu \Varid{x}\mskip1.5mu])\;\Varid{stack})>\!\!>\Varid{p}))\;(\Conid{S}\;\Varid{xs}\;\Varid{stack}){}\<[E]%
\\
\>[3]{}\mathrel{=}\mbox{\commentbegin ~  definition of \ensuremath{\Varid{get}}   \commentend}{}\<[E]%
\\
\>[3]{}\hsindent{3}{}\<[6]%
\>[6]{}\Varid{run_{State}}\;(\Varid{h_{State}^\prime}\;(\Conid{Op}\;(\Conid{Get}\;\Varid{\eta})>\!\!>\!\!=\lambda (\Conid{S}\;\Varid{xs}\;\Varid{stack})\to \Varid{put}\;(\Conid{S}\;(\Varid{xs}+\!\!+[\mskip1.5mu \Varid{x}\mskip1.5mu])\;\Varid{stack})>\!\!>\Varid{p}))\;(\Conid{S}\;\Varid{xs}\;\Varid{stack}){}\<[E]%
\\
\>[3]{}\mathrel{=}\mbox{\commentbegin ~  definition of \ensuremath{(>\!\!>\!\!=)} for free monad and Law \ref{eq:monad-ret-bind}: return-bind   \commentend}{}\<[E]%
\\
\>[3]{}\hsindent{3}{}\<[6]%
\>[6]{}\Varid{run_{State}}\;(\Varid{h_{State}^\prime}\;(\Conid{Op}\;(\Conid{Get}\;(\lambda (\Conid{S}\;\Varid{xs}\;\Varid{stack})\to \Varid{put}\;(\Conid{S}\;(\Varid{xs}+\!\!+[\mskip1.5mu \Varid{x}\mskip1.5mu])\;\Varid{stack})>\!\!>\Varid{p}))))\;(\Conid{S}\;\Varid{xs}\;\Varid{stack}){}\<[E]%
\\
\>[3]{}\mathrel{=}\mbox{\commentbegin ~  definition of \ensuremath{\Varid{h_{State}^\prime}}   \commentend}{}\<[E]%
\\
\>[3]{}\hsindent{3}{}\<[6]%
\>[6]{}\Varid{run_{State}}\;(\Conid{State}\;(\lambda \Varid{s}\to \Varid{run_{State}}\;(\Varid{h_{State}^\prime}\;((\lambda (\Conid{S}\;\Varid{xs}\;\Varid{stack})\to \Varid{put}\;(\Conid{S}\;(\Varid{xs}+\!\!+[\mskip1.5mu \Varid{x}\mskip1.5mu])\;\Varid{stack})>\!\!>\Varid{p})\;\Varid{s}))\;\Varid{s}))\;{}\<[E]%
\\
\>[6]{}\hsindent{2}{}\<[8]%
\>[8]{}(\Conid{S}\;\Varid{xs}\;\Varid{stack}){}\<[E]%
\\
\>[3]{}\mathrel{=}\mbox{\commentbegin ~  definition of \ensuremath{\Varid{run_{State}}}   \commentend}{}\<[E]%
\\
\>[3]{}\hsindent{3}{}\<[6]%
\>[6]{}(\lambda \Varid{s}\to \Varid{run_{State}}\;(\Varid{h_{State}^\prime}\;((\lambda (\Conid{S}\;\Varid{xs}\;\Varid{stack})\to \Varid{put}\;(\Conid{S}\;(\Varid{xs}+\!\!+[\mskip1.5mu \Varid{x}\mskip1.5mu])\;\Varid{stack})>\!\!>\Varid{p})\;\Varid{s}))\;\Varid{s})\;(\Conid{S}\;\Varid{xs}\;\Varid{stack}){}\<[E]%
\\
\>[3]{}\mathrel{=}\mbox{\commentbegin ~  function application   \commentend}{}\<[E]%
\\
\>[3]{}\hsindent{3}{}\<[6]%
\>[6]{}\Varid{run_{State}}\;(\Varid{h_{State}^\prime}\;((\lambda (\Conid{S}\;\Varid{xs}\;\Varid{stack})\to \Varid{put}\;(\Conid{S}\;(\Varid{xs}+\!\!+[\mskip1.5mu \Varid{x}\mskip1.5mu])\;\Varid{stack})>\!\!>\Varid{p})\;(\Conid{S}\;\Varid{xs}\;\Varid{stack})))\;(\Conid{S}\;\Varid{xs}\;\Varid{stack}){}\<[E]%
\\
\>[3]{}\mathrel{=}\mbox{\commentbegin ~  function application   \commentend}{}\<[E]%
\\
\>[3]{}\hsindent{3}{}\<[6]%
\>[6]{}\Varid{run_{State}}\;(\Varid{h_{State}^\prime}\;(\Varid{put}\;(\Conid{S}\;(\Varid{xs}+\!\!+[\mskip1.5mu \Varid{x}\mskip1.5mu])\;\Varid{stack})>\!\!>\Varid{p}))\;(\Conid{S}\;\Varid{xs}\;\Varid{stack}){}\<[E]%
\\
\>[3]{}\mathrel{=}\mbox{\commentbegin ~  definition of \ensuremath{\Varid{put}}   \commentend}{}\<[E]%
\\
\>[3]{}\hsindent{3}{}\<[6]%
\>[6]{}\Varid{run_{State}}\;(\Varid{h_{State}^\prime}\;(\Conid{Op}\;(\Conid{Put}\;(\Conid{S}\;(\Varid{xs}+\!\!+[\mskip1.5mu \Varid{x}\mskip1.5mu])\;\Varid{stack})\;(\Varid{\eta}\;()))>\!\!>\Varid{p}))\;(\Conid{S}\;\Varid{xs}\;\Varid{stack}){}\<[E]%
\\
\>[3]{}\mathrel{=}\mbox{\commentbegin ~  definition of \ensuremath{(>\!\!>)} for free monad and Law \ref{eq:monad-ret-bind}: return-bind   \commentend}{}\<[E]%
\\
\>[3]{}\hsindent{3}{}\<[6]%
\>[6]{}\Varid{run_{State}}\;(\Varid{h_{State}^\prime}\;(\Conid{Op}\;(\Conid{Put}\;(\Conid{S}\;(\Varid{xs}+\!\!+[\mskip1.5mu \Varid{x}\mskip1.5mu])\;\Varid{stack})\;\Varid{p})))\;(\Conid{S}\;\Varid{xs}\;\Varid{stack}){}\<[E]%
\\
\>[3]{}\mathrel{=}\mbox{\commentbegin ~  definition of \ensuremath{\Varid{h_{State}^\prime}}   \commentend}{}\<[E]%
\\
\>[3]{}\hsindent{3}{}\<[6]%
\>[6]{}\Varid{run_{State}}\;(\Conid{State}\;(\lambda \Varid{s}\to \Varid{run_{State}}\;(\Varid{h_{State}^\prime}\;\Varid{p})\;(\Conid{S}\;(\Varid{xs}+\!\!+[\mskip1.5mu \Varid{x}\mskip1.5mu])\;\Varid{stack})))\;(\Conid{S}\;\Varid{xs}\;\Varid{stack}){}\<[E]%
\\
\>[3]{}\mathrel{=}\mbox{\commentbegin ~  definition of \ensuremath{\Varid{run_{State}}}   \commentend}{}\<[E]%
\\
\>[3]{}\hsindent{3}{}\<[6]%
\>[6]{}(\lambda \Varid{s}\to \Varid{run_{State}}\;(\Varid{h_{State}^\prime}\;\Varid{p})\;(\Conid{S}\;(\Varid{xs}+\!\!+[\mskip1.5mu \Varid{x}\mskip1.5mu])\;\Varid{stack}))\;(\Conid{S}\;\Varid{xs}\;\Varid{stack}){}\<[E]%
\\
\>[3]{}\mathrel{=}\mbox{\commentbegin ~  function application   \commentend}{}\<[E]%
\\
\>[3]{}\hsindent{3}{}\<[6]%
\>[6]{}\Varid{run_{State}}\;(\Varid{h_{State}^\prime}\;\Varid{p})\;(\Conid{S}\;(\Varid{xs}+\!\!+[\mskip1.5mu \Varid{x}\mskip1.5mu])\;\Varid{stack}){}\<[E]%
\ColumnHook
\end{hscode}\resethooks
\indentend \end{proof}

\begin{lemma}[evaluation-pop1]\label{eq:eval-pop1}~\indentbegin \begin{hscode}\SaveRestoreHook
\column{B}{@{}>{\hspre}l<{\hspost}@{}}%
\column{3}{@{}>{\hspre}l<{\hspost}@{}}%
\column{E}{@{}>{\hspre}l<{\hspost}@{}}%
\>[3]{}\Varid{run_{State}}\;(\Varid{h_{State}^\prime}\;\Varid{pop_S})\;(\Conid{S}\;\Varid{xs}\;[\mskip1.5mu \mskip1.5mu])\mathrel{=}((),\Conid{S}\;\Varid{xs}\;[\mskip1.5mu \mskip1.5mu]){}\<[E]%
\ColumnHook
\end{hscode}\resethooks
\indentend \end{lemma}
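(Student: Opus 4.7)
The plan is a direct unfolding of definitions, with no induction required, since $\Varid{pop_S}$ is a closed term whose behaviour on an empty stack is immediate from case analysis.

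First I would unfold $\Varid{pop_S}$ using its definition in \Cref{fig:pop}, rewriting it as
\[\ensuremath{\Conid{Op}\;(\Conid{Get}\;(\lambda (\Conid{S}\;\Varid{xs}\;\Varid{stack})\to \mathbf{case}\;\Varid{stack}\;\mathbf{of}\;\{\mskip1.5mu [\mskip1.5mu \mskip1.5mu]\to \Varid{\eta}\;();\;\Varid{p}\mathbin{:}\Varid{ps}\to \Varid{put}\;(\Conid{S}\;\Varid{xs}\;\Varid{ps})>\!\!>\Varid{p}\mskip1.5mu\}))}\]
after desugaring the \ensuremath{\mathbf{do}}-notation and inlining the concrete representation of \ensuremath{\Varid{get}} as \ensuremath{\Conid{Op}\;(\Conid{Get}\;\Varid{\eta})} combined with the monad laws. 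Next, I would push \ensuremath{\Varid{h_{State}^\prime}} through this \ensuremath{\Conid{Op}\;(\Conid{Get}\;\anonymous )} node using the \ensuremath{\Conid{Get}} clause of \ensuremath{\Varid{alg_{S}^\prime}}, which yields \ensuremath{\Conid{State}\;(\lambda \Varid{s}\to \Varid{run_{State}}\;(\Varid{h_{State}^\prime}\;(\Varid{k}\;\Varid{s}))\;\Varid{s})} where \ensuremath{\Varid{k}} is the case-expression continuation above.

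Then I would apply \ensuremath{\Varid{run_{State}}\;(\anonymous )\;(\Conid{S}\;\Varid{xs}\;[\mskip1.5mu \mskip1.5mu])}, which simply instantiates the bound state variable to \ensuremath{\Conid{S}\;\Varid{xs}\;[\mskip1.5mu \mskip1.5mu]}. The case analysis on the stack then selects the \ensuremath{[\mskip1.5mu \mskip1.5mu]} branch, reducing the body to \ensuremath{\Varid{\eta}\;()}, that is, \ensuremath{\Conid{Var}\;()}. A final unfolding of \ensuremath{\Varid{h_{State}^\prime}} on \ensuremath{\Conid{Var}\;()} with the generator \ensuremath{\Varid{gen_{S}^\prime}} gives \ensuremath{\Conid{State}\;(\lambda \Varid{s}\to ((),\Varid{s}))}, and running this with the state \ensuremath{\Conid{S}\;\Varid{xs}\;[\mskip1.5mu \mskip1.5mu]} produces \ensuremath{((),\Conid{S}\;\Varid{xs}\;[\mskip1.5mu \mskip1.5mu])}, as required.

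There is no real obstacle here: the lemma is essentially an operational reduction, and each step is justified either by an unfolding of \ensuremath{\Varid{h_{State}^\prime}}, \ensuremath{\Varid{run_{State}}}, \ensuremath{\Varid{get}}, or by a standard monad law (specifically \textbf{return-bind} \eqref{eq:monad-ret-bind} used when eliminating the bind against \ensuremath{\Conid{Op}\;(\Conid{Get}\;\Varid{\eta})}). The only care required is to keep track of the fact that the free-monad \ensuremath{(>\!\!>\!\!=)} distributes into \ensuremath{\Conid{Op}} nodes by \ensuremath{\Varid{fmap}}-ing the continuation, which is precisely what allows the \ensuremath{\Conid{Get}} node's payload to absorb the case-expression continuation before \ensuremath{\Varid{h_{State}^\prime}} is applied. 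The whole derivation should fit in a short chain of roughly six or seven equational steps, entirely analogous to the earlier parts of Lemma~\ref{eq:eval-append}'s proof.
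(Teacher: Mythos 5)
Your proposal is correct and follows essentially the same route as the paper's own proof: unfold \ensuremath{\Varid{pop_S}}, absorb the continuation into the \ensuremath{\Conid{Get}} node via the free-monad bind and \textbf{return-bind}, push \ensuremath{\Varid{h_{State}^\prime}} through, instantiate the state to \ensuremath{\Conid{S}\;\Varid{xs}\;[\mskip1.5mu \mskip1.5mu]}, and let the case analysis reduce to \ensuremath{\Varid{\eta}\;()}. No gaps.
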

\begin{proof}
\indentbegin \begin{hscode}\SaveRestoreHook
\column{B}{@{}>{\hspre}l<{\hspost}@{}}%
\column{3}{@{}>{\hspre}l<{\hspost}@{}}%
\column{6}{@{}>{\hspre}l<{\hspost}@{}}%
\column{8}{@{}>{\hspre}l<{\hspost}@{}}%
\column{23}{@{}>{\hspre}l<{\hspost}@{}}%
\column{32}{@{}>{\hspre}l<{\hspost}@{}}%
\column{E}{@{}>{\hspre}l<{\hspost}@{}}%
\>[6]{}\Varid{run_{State}}\;(\Varid{h_{State}^\prime}\;\Varid{pop_S})\;(\Conid{S}\;\Varid{xs}\;[\mskip1.5mu \mskip1.5mu]){}\<[E]%
\\
\>[3]{}\mathrel{=}\mbox{\commentbegin ~  definition of \ensuremath{\Varid{pop_S}}   \commentend}{}\<[E]%
\\
\>[3]{}\hsindent{3}{}\<[6]%
\>[6]{}\Varid{run_{State}}\;(\Varid{h_{State}^\prime}\;(\Varid{get}>\!\!>\!\!=\lambda (\Conid{S}\;\Varid{xs}\;\Varid{stack})\to {}\<[E]%
\\
\>[6]{}\hsindent{2}{}\<[8]%
\>[8]{}\mathbf{case}\;\Varid{stack}\;\mathbf{of}\;{}\<[23]%
\>[23]{}[\mskip1.5mu \mskip1.5mu]{}\<[32]%
\>[32]{}\to \Varid{\eta}\;(){}\<[E]%
\\
\>[23]{}\Varid{op}\mathbin{:}\Varid{ps}{}\<[32]%
\>[32]{}\to \mathbf{do}\;\Varid{put}\;(\Conid{S}\;\Varid{xs}\;\Varid{ps});\Varid{op}))\;(\Conid{S}\;\Varid{xs}\;[\mskip1.5mu \mskip1.5mu]){}\<[E]%
\\
\>[3]{}\mathrel{=}\mbox{\commentbegin ~  definition of \ensuremath{\Varid{get}}   \commentend}{}\<[E]%
\\
\>[3]{}\hsindent{3}{}\<[6]%
\>[6]{}\Varid{run_{State}}\;(\Varid{h_{State}^\prime}\;(\Conid{Op}\;(\Conid{Get}\;\Varid{\eta})>\!\!>\!\!=\lambda (\Conid{S}\;\Varid{xs}\;\Varid{stack})\to {}\<[E]%
\\
\>[6]{}\hsindent{2}{}\<[8]%
\>[8]{}\mathbf{case}\;\Varid{stack}\;\mathbf{of}\;{}\<[23]%
\>[23]{}[\mskip1.5mu \mskip1.5mu]{}\<[32]%
\>[32]{}\to \Varid{\eta}\;(){}\<[E]%
\\
\>[23]{}\Varid{op}\mathbin{:}\Varid{ps}{}\<[32]%
\>[32]{}\to \mathbf{do}\;\Varid{put}\;(\Conid{S}\;\Varid{xs}\;\Varid{ps});\Varid{op}))\;(\Conid{S}\;\Varid{xs}\;[\mskip1.5mu \mskip1.5mu]){}\<[E]%
\\
\>[3]{}\mathrel{=}\mbox{\commentbegin ~  definition of \ensuremath{(>\!\!>\!\!=)} for free monad and Law \ref{eq:monad-ret-bind}: return-bind   \commentend}{}\<[E]%
\\
\>[3]{}\hsindent{3}{}\<[6]%
\>[6]{}\Varid{run_{State}}\;(\Varid{h_{State}^\prime}\;(\Conid{Op}\;(\Conid{Get}\;(\lambda (\Conid{S}\;\Varid{xs}\;\Varid{stack})\to {}\<[E]%
\\
\>[6]{}\hsindent{2}{}\<[8]%
\>[8]{}\mathbf{case}\;\Varid{stack}\;\mathbf{of}\;{}\<[23]%
\>[23]{}[\mskip1.5mu \mskip1.5mu]{}\<[32]%
\>[32]{}\to \Varid{\eta}\;(){}\<[E]%
\\
\>[23]{}\Varid{op}\mathbin{:}\Varid{ps}{}\<[32]%
\>[32]{}\to \mathbf{do}\;\Varid{put}\;(\Conid{S}\;\Varid{xs}\;\Varid{ps});\Varid{op}))))\;(\Conid{S}\;\Varid{xs}\;[\mskip1.5mu \mskip1.5mu]){}\<[E]%
\\
\>[3]{}\mathrel{=}\mbox{\commentbegin ~  definition of \ensuremath{\Varid{h_{State}^\prime}}   \commentend}{}\<[E]%
\\
\>[3]{}\hsindent{3}{}\<[6]%
\>[6]{}\Varid{run_{State}}\;(\Conid{State}\;(\lambda \Varid{s}\to \Varid{run_{State}}\;(\Varid{h_{State}^\prime}\;((\lambda (\Conid{S}\;\Varid{xs}\;\Varid{stack})\to {}\<[E]%
\\
\>[6]{}\hsindent{2}{}\<[8]%
\>[8]{}\mathbf{case}\;\Varid{stack}\;\mathbf{of}\;{}\<[23]%
\>[23]{}[\mskip1.5mu \mskip1.5mu]{}\<[32]%
\>[32]{}\to \Varid{\eta}\;(){}\<[E]%
\\
\>[23]{}\Varid{op}\mathbin{:}\Varid{ps}{}\<[32]%
\>[32]{}\to \mathbf{do}\;\Varid{put}\;(\Conid{S}\;\Varid{xs}\;\Varid{ps});\Varid{op})\;\Varid{s}))\;\Varid{s}))\;(\Conid{S}\;\Varid{xs}\;[\mskip1.5mu \mskip1.5mu]){}\<[E]%
\\
\>[3]{}\mathrel{=}\mbox{\commentbegin ~  definition of \ensuremath{\Varid{run_{State}}}   \commentend}{}\<[E]%
\\
\>[3]{}\hsindent{3}{}\<[6]%
\>[6]{}(\lambda \Varid{s}\to \Varid{run_{State}}\;(\Varid{h_{State}^\prime}\;((\lambda (\Conid{S}\;\Varid{xs}\;\Varid{stack})\to {}\<[E]%
\\
\>[6]{}\hsindent{2}{}\<[8]%
\>[8]{}\mathbf{case}\;\Varid{stack}\;\mathbf{of}\;{}\<[23]%
\>[23]{}[\mskip1.5mu \mskip1.5mu]{}\<[32]%
\>[32]{}\to \Varid{\eta}\;(){}\<[E]%
\\
\>[23]{}\Varid{op}\mathbin{:}\Varid{ps}{}\<[32]%
\>[32]{}\to \mathbf{do}\;\Varid{put}\;(\Conid{S}\;\Varid{xs}\;\Varid{ps});\Varid{op})\;\Varid{s}))\;\Varid{s})\;(\Conid{S}\;\Varid{xs}\;[\mskip1.5mu \mskip1.5mu]){}\<[E]%
\\
\>[3]{}\mathrel{=}\mbox{\commentbegin ~  function application   \commentend}{}\<[E]%
\\
\>[3]{}\hsindent{3}{}\<[6]%
\>[6]{}\Varid{run_{State}}\;(\Varid{h_{State}^\prime}\;((\lambda (\Conid{S}\;\Varid{xs}\;\Varid{stack})\to {}\<[E]%
\\
\>[6]{}\hsindent{2}{}\<[8]%
\>[8]{}\mathbf{case}\;\Varid{stack}\;\mathbf{of}\;{}\<[23]%
\>[23]{}[\mskip1.5mu \mskip1.5mu]{}\<[32]%
\>[32]{}\to \Varid{\eta}\;(){}\<[E]%
\\
\>[23]{}\Varid{op}\mathbin{:}\Varid{ps}{}\<[32]%
\>[32]{}\to \mathbf{do}\;\Varid{put}\;(\Conid{S}\;\Varid{xs}\;\Varid{ps});\Varid{op})\;(\Conid{S}\;\Varid{xs}\;[\mskip1.5mu \mskip1.5mu])))\;(\Conid{S}\;\Varid{xs}\;[\mskip1.5mu \mskip1.5mu]){}\<[E]%
\\
\>[3]{}\mathrel{=}\mbox{\commentbegin ~  function application, \ensuremath{\mathbf{case}}-analysis   \commentend}{}\<[E]%
\\
\>[3]{}\hsindent{3}{}\<[6]%
\>[6]{}\Varid{run_{State}}\;(\Varid{h_{State}^\prime}\;(\Varid{\eta}\;()))\;(\Conid{S}\;\Varid{xs}\;[\mskip1.5mu \mskip1.5mu]){}\<[E]%
\\
\>[3]{}\mathrel{=}\mbox{\commentbegin ~  definition of \ensuremath{\Varid{h_{State}^\prime}}   \commentend}{}\<[E]%
\\
\>[3]{}\hsindent{3}{}\<[6]%
\>[6]{}\Varid{run_{State}}\;(\Conid{State}\;(\lambda \Varid{s}\to ((),\Varid{s})))\;(\Conid{S}\;\Varid{xs}\;[\mskip1.5mu \mskip1.5mu]){}\<[E]%
\\
\>[3]{}\mathrel{=}\mbox{\commentbegin ~  definition of \ensuremath{\Varid{run_{State}}}, function application   \commentend}{}\<[E]%
\\
\>[3]{}\hsindent{3}{}\<[6]%
\>[6]{}((),\Conid{S}\;\Varid{xs}\;[\mskip1.5mu \mskip1.5mu]){}\<[E]%
\ColumnHook
\end{hscode}\resethooks
\indentend \end{proof}

\begin{lemma}[evaluation-pop2]\label{eq:eval-pop2}~\indentbegin \begin{hscode}\SaveRestoreHook
\column{B}{@{}>{\hspre}l<{\hspost}@{}}%
\column{3}{@{}>{\hspre}l<{\hspost}@{}}%
\column{E}{@{}>{\hspre}l<{\hspost}@{}}%
\>[3]{}\Varid{run_{State}}\;(\Varid{h_{State}^\prime}\;\Varid{pop_S})\;(\Conid{S}\;\Varid{xs}\;(\Varid{q}\mathbin{:}\Varid{stack}))\mathrel{=}\Varid{run_{State}}\;(\Varid{h_{State}^\prime}\;\Varid{q})\;(\Conid{S}\;\Varid{xs}\;\Varid{stack}){}\<[E]%
\ColumnHook
\end{hscode}\resethooks
\indentend \end{lemma}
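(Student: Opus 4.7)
The plan is to prove Lemma~\ref{eq:eval-pop2} by a direct calculation, following the same schematic pattern used in the proof of Lemma~\ref{eq:eval-pop1}: repeatedly unfold the definitions of \ensuremath{\Varid{pop_S}}, \ensuremath{\Varid{get}}, \ensuremath{\Varid{put}}, the free-monad \ensuremath{(>\!\!>\!\!=)}, and the handler \ensuremath{\Varid{h_{State}^\prime}}, and then use the monad laws to reach the right-hand side. Unlike the empty-stack case, the \ensuremath{\mathbf{case}} analysis now selects the nontrivial branch, so a \ensuremath{\Varid{put}} is performed before the popped computation \ensuremath{\Varid{q}} is run.

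First, I would rewrite \ensuremath{\Varid{pop_S}} as \ensuremath{\Varid{get}>\!\!>\!\!=\lambda (\Conid{S}\;\Varid{xs}\;\Varid{stack})\to \mathbf{case}\;\Varid{stack}\;\mathbf{of}\;[\mskip1.5mu \mskip1.5mu]\to \Varid{\eta}\;();\Varid{p}\mathbin{:}\Varid{ps}\to \Varid{put}\;(\Conid{S}\;\Varid{xs}\;\Varid{ps})>\!\!>\Varid{p}}, then push the bind through \ensuremath{\Conid{Op}\;(\Conid{Get}\;\Varid{\eta})} using the free-monad bind and the return-bind law to obtain a single \ensuremath{\Conid{Op}\;(\Conid{Get}\;\Varid{k})} node whose continuation performs the case analysis. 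Applying \ensuremath{\Varid{h_{State}^\prime}} to this \ensuremath{\Conid{Get}} node produces \ensuremath{\Conid{State}\;(\lambda \Varid{s}\to \Varid{run_{State}}\;(\Varid{h_{State}^\prime}\;(\Varid{k}\;\Varid{s}))\;\Varid{s})}, and then \ensuremath{\Varid{run_{State}}\;\anonymous \;(\Conid{S}\;\Varid{xs}\;(\Varid{q}\mathbin{:}\Varid{stack}))} feeds in the argument state, so the continuation is instantiated at \ensuremath{\Conid{S}\;\Varid{xs}\;(\Varid{q}\mathbin{:}\Varid{stack})}.

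Next, since the stack matches the pattern \ensuremath{\Varid{q}\mathbin{:}\Varid{stack}}, the \ensuremath{\mathbf{case}} reduces to \ensuremath{\Varid{put}\;(\Conid{S}\;\Varid{xs}\;\Varid{stack})>\!\!>\Varid{q}}. Unfolding \ensuremath{\Varid{put}} to \ensuremath{\Conid{Op}\;(\Conid{Put}\;(\Conid{S}\;\Varid{xs}\;\Varid{stack})\;(\Varid{\eta}\;()))} and using the monad laws to absorb the trivial continuation, we are left with \ensuremath{\Conid{Op}\;(\Conid{Put}\;(\Conid{S}\;\Varid{xs}\;\Varid{stack})\;\Varid{q})}. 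Applying \ensuremath{\Varid{h_{State}^\prime}} to the \ensuremath{\Conid{Put}} node yields \ensuremath{\Conid{State}\;(\lambda \anonymous \to \Varid{run_{State}}\;(\Varid{h_{State}^\prime}\;\Varid{q})\;(\Conid{S}\;\Varid{xs}\;\Varid{stack}))}, and a final \ensuremath{\Varid{run_{State}}} application discards the input state and gives \ensuremath{\Varid{run_{State}}\;(\Varid{h_{State}^\prime}\;\Varid{q})\;(\Conid{S}\;\Varid{xs}\;\Varid{stack})}, which is the right-hand side.

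No obstacle is really expected here: the proof is mechanical, purely equational, and of the same character as the calculation already carried out for Lemma~\ref{eq:eval-pop1}, differing only in which branch of the \ensuremath{\mathbf{case}} is taken and therefore in the single additional \ensuremath{\Varid{put}} step. The only thing to watch is bookkeeping around the free-monad \ensuremath{(>\!\!>\!\!=)}—in particular, correctly propagating the \ensuremath{\lambda (\Conid{S}\;\Varid{xs}\;\Varid{stack})\to \mathinner{\ldotp\ldotp\ldotp}} continuation through the \ensuremath{\Conid{Get}} node and the subsequent \ensuremath{\Conid{Put}} before handing control over to \ensuremath{\Varid{q}}.
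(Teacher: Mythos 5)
Your proposal is correct and follows essentially the same calculation as the paper's own proof: unfold \ensuremath{\Varid{pop_S}} and \ensuremath{\Varid{get}}, push the bind through the \ensuremath{\Conid{Get}} node via return-bind, instantiate the continuation at \ensuremath{\Conid{S}\;\Varid{xs}\;(\Varid{q}\mathbin{:}\Varid{stack})}, reduce the case to \ensuremath{\Varid{put}\;(\Conid{S}\;\Varid{xs}\;\Varid{stack})>\!\!>\Varid{q}}, and discharge the \ensuremath{\Conid{Put}} node with \ensuremath{\Varid{h_{State}^\prime}} and \ensuremath{\Varid{run_{State}}}. No gaps.
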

\begin{proof}~\indentbegin \begin{hscode}\SaveRestoreHook
\column{B}{@{}>{\hspre}l<{\hspost}@{}}%
\column{3}{@{}>{\hspre}l<{\hspost}@{}}%
\column{6}{@{}>{\hspre}l<{\hspost}@{}}%
\column{8}{@{}>{\hspre}l<{\hspost}@{}}%
\column{23}{@{}>{\hspre}l<{\hspost}@{}}%
\column{32}{@{}>{\hspre}l<{\hspost}@{}}%
\column{E}{@{}>{\hspre}l<{\hspost}@{}}%
\>[6]{}\Varid{run_{State}}\;(\Varid{h_{State}^\prime}\;\Varid{pop_S})\;(\Conid{S}\;\Varid{xs}\;(\Varid{q}\mathbin{:}\Varid{stack})){}\<[E]%
\\
\>[3]{}\mathrel{=}\mbox{\commentbegin ~  definition of \ensuremath{\Varid{pop_S}}   \commentend}{}\<[E]%
\\
\>[3]{}\hsindent{3}{}\<[6]%
\>[6]{}\Varid{run_{State}}\;(\Varid{h_{State}^\prime}\;(\Varid{get}>\!\!>\!\!=\lambda (\Conid{S}\;\Varid{xs}\;\Varid{stack})\to {}\<[E]%
\\
\>[6]{}\hsindent{2}{}\<[8]%
\>[8]{}\mathbf{case}\;\Varid{stack}\;\mathbf{of}\;{}\<[23]%
\>[23]{}[\mskip1.5mu \mskip1.5mu]{}\<[32]%
\>[32]{}\to \Varid{\eta}\;(){}\<[E]%
\\
\>[23]{}\Varid{op}\mathbin{:}\Varid{ps}{}\<[32]%
\>[32]{}\to \mathbf{do}\;\Varid{put}\;(\Conid{S}\;\Varid{xs}\;\Varid{ps});\Varid{op}))\;(\Conid{S}\;\Varid{xs}\;(\Varid{q}\mathbin{:}\Varid{stack})){}\<[E]%
\\
\>[3]{}\mathrel{=}\mbox{\commentbegin ~  definition of \ensuremath{\Varid{get}}   \commentend}{}\<[E]%
\\
\>[3]{}\hsindent{3}{}\<[6]%
\>[6]{}\Varid{run_{State}}\;(\Varid{h_{State}^\prime}\;(\Conid{Op}\;(\Conid{Get}\;\Varid{\eta})>\!\!>\!\!=\lambda (\Conid{S}\;\Varid{xs}\;\Varid{stack})\to {}\<[E]%
\\
\>[6]{}\hsindent{2}{}\<[8]%
\>[8]{}\mathbf{case}\;\Varid{stack}\;\mathbf{of}\;{}\<[23]%
\>[23]{}[\mskip1.5mu \mskip1.5mu]{}\<[32]%
\>[32]{}\to \Varid{\eta}\;(){}\<[E]%
\\
\>[23]{}\Varid{op}\mathbin{:}\Varid{ps}{}\<[32]%
\>[32]{}\to \mathbf{do}\;\Varid{put}\;(\Conid{S}\;\Varid{xs}\;\Varid{ps});\Varid{op}))\;(\Conid{S}\;\Varid{xs}\;(\Varid{q}\mathbin{:}\Varid{stack})){}\<[E]%
\\
\>[3]{}\mathrel{=}\mbox{\commentbegin ~  definition of \ensuremath{(>\!\!>\!\!=)} for free monad and Law \ref{eq:monad-ret-bind}: return-bind   \commentend}{}\<[E]%
\\
\>[3]{}\hsindent{3}{}\<[6]%
\>[6]{}\Varid{run_{State}}\;(\Varid{h_{State}^\prime}\;(\Conid{Op}\;(\Conid{Get}\;(\lambda (\Conid{S}\;\Varid{xs}\;\Varid{stack})\to {}\<[E]%
\\
\>[6]{}\hsindent{2}{}\<[8]%
\>[8]{}\mathbf{case}\;\Varid{stack}\;\mathbf{of}\;{}\<[23]%
\>[23]{}[\mskip1.5mu \mskip1.5mu]{}\<[32]%
\>[32]{}\to \Varid{\eta}\;(){}\<[E]%
\\
\>[23]{}\Varid{op}\mathbin{:}\Varid{ps}{}\<[32]%
\>[32]{}\to \mathbf{do}\;\Varid{put}\;(\Conid{S}\;\Varid{xs}\;\Varid{ps});\Varid{op}))))\;(\Conid{S}\;\Varid{xs}\;(\Varid{q}\mathbin{:}\Varid{stack})){}\<[E]%
\\
\>[3]{}\mathrel{=}\mbox{\commentbegin ~  definition of \ensuremath{\Varid{h_{State}^\prime}}   \commentend}{}\<[E]%
\\
\>[3]{}\hsindent{3}{}\<[6]%
\>[6]{}\Varid{run_{State}}\;(\Conid{State}\;(\lambda \Varid{s}\to \Varid{run_{State}}\;(\Varid{h_{State}^\prime}\;((\lambda (\Conid{S}\;\Varid{xs}\;\Varid{stack})\to {}\<[E]%
\\
\>[6]{}\hsindent{2}{}\<[8]%
\>[8]{}\mathbf{case}\;\Varid{stack}\;\mathbf{of}\;{}\<[23]%
\>[23]{}[\mskip1.5mu \mskip1.5mu]{}\<[32]%
\>[32]{}\to \Varid{\eta}\;(){}\<[E]%
\\
\>[23]{}\Varid{op}\mathbin{:}\Varid{ps}{}\<[32]%
\>[32]{}\to \mathbf{do}\;\Varid{put}\;(\Conid{S}\;\Varid{xs}\;\Varid{ps});\Varid{op})\;\Varid{s}))\;\Varid{s}))\;(\Conid{S}\;\Varid{xs}\;(\Varid{q}\mathbin{:}\Varid{stack})){}\<[E]%
\\
\>[3]{}\mathrel{=}\mbox{\commentbegin ~  definition of \ensuremath{\Varid{run_{State}}}   \commentend}{}\<[E]%
\\
\>[3]{}\hsindent{3}{}\<[6]%
\>[6]{}(\lambda \Varid{s}\to \Varid{run_{State}}\;(\Varid{h_{State}^\prime}\;((\lambda (\Conid{S}\;\Varid{xs}\;\Varid{stack})\to {}\<[E]%
\\
\>[6]{}\hsindent{2}{}\<[8]%
\>[8]{}\mathbf{case}\;\Varid{stack}\;\mathbf{of}\;{}\<[23]%
\>[23]{}[\mskip1.5mu \mskip1.5mu]{}\<[32]%
\>[32]{}\to \Varid{\eta}\;(){}\<[E]%
\\
\>[23]{}\Varid{op}\mathbin{:}\Varid{ps}{}\<[32]%
\>[32]{}\to \mathbf{do}\;\Varid{put}\;(\Conid{S}\;\Varid{xs}\;\Varid{ps});\Varid{op})\;\Varid{s}))\;\Varid{s})\;(\Conid{S}\;\Varid{xs}\;(\Varid{q}\mathbin{:}\Varid{stack})){}\<[E]%
\\
\>[3]{}\mathrel{=}\mbox{\commentbegin ~  function application   \commentend}{}\<[E]%
\\
\>[3]{}\hsindent{3}{}\<[6]%
\>[6]{}\Varid{run_{State}}\;(\Varid{h_{State}^\prime}\;((\lambda (\Conid{S}\;\Varid{xs}\;\Varid{stack})\to {}\<[E]%
\\
\>[6]{}\hsindent{2}{}\<[8]%
\>[8]{}\mathbf{case}\;\Varid{stack}\;\mathbf{of}\;{}\<[23]%
\>[23]{}[\mskip1.5mu \mskip1.5mu]{}\<[32]%
\>[32]{}\to \Varid{\eta}\;(){}\<[E]%
\\
\>[23]{}\Varid{op}\mathbin{:}\Varid{ps}{}\<[32]%
\>[32]{}\to \mathbf{do}\;\Varid{put}\;(\Conid{S}\;\Varid{xs}\;\Varid{ps});\Varid{op})\;(\Conid{S}\;\Varid{xs}\;(\Varid{q}\mathbin{:}\Varid{stack}))))\;(\Conid{S}\;\Varid{xs}\;(\Varid{q}\mathbin{:}\Varid{stack})){}\<[E]%
\\
\>[3]{}\mathrel{=}\mbox{\commentbegin ~  function application, \ensuremath{\mathbf{case}}-analysis   \commentend}{}\<[E]%
\ColumnHook
\end{hscode}\resethooks
\indentend %
\indentbegin \begin{hscode}\SaveRestoreHook
\column{B}{@{}>{\hspre}l<{\hspost}@{}}%
\column{3}{@{}>{\hspre}l<{\hspost}@{}}%
\column{6}{@{}>{\hspre}l<{\hspost}@{}}%
\column{E}{@{}>{\hspre}l<{\hspost}@{}}%
\>[6]{}\Varid{run_{State}}\;(\Varid{h_{State}^\prime}\;(\Varid{put}\;(\Conid{S}\;\Varid{xs}\;\Varid{stack})>\!\!>\Varid{q}))\;(\Conid{S}\;\Varid{xs}\;(\Varid{q}\mathbin{:}\Varid{stack})){}\<[E]%
\\
\>[3]{}\mathrel{=}\mbox{\commentbegin ~  definition of \ensuremath{\Varid{put}}   \commentend}{}\<[E]%
\\
\>[3]{}\hsindent{3}{}\<[6]%
\>[6]{}\Varid{run_{State}}\;(\Varid{h_{State}^\prime}\;(\Conid{Op}\;(\Conid{Put}\;(\Conid{S}\;\Varid{xs}\;\Varid{stack})\;(\Varid{\eta}\;()))>\!\!>\Varid{q}))\;(\Conid{S}\;\Varid{xs}\;(\Varid{q}\mathbin{:}\Varid{stack})){}\<[E]%
\\
\>[3]{}\mathrel{=}\mbox{\commentbegin ~  definition of \ensuremath{(>\!\!>)} for free monad and Law \ref{eq:monad-ret-bind}: return-bind   \commentend}{}\<[E]%
\\
\>[3]{}\hsindent{3}{}\<[6]%
\>[6]{}\Varid{run_{State}}\;(\Varid{h_{State}^\prime}\;(\Conid{Op}\;(\Conid{Put}\;(\Conid{S}\;\Varid{xs}\;\Varid{stack})\;\Varid{q})))\;(\Conid{S}\;\Varid{xs}\;(\Varid{q}\mathbin{:}\Varid{stack})){}\<[E]%
\\
\>[3]{}\mathrel{=}\mbox{\commentbegin ~  definition of \ensuremath{\Varid{h_{State}^\prime}}   \commentend}{}\<[E]%
\\
\>[3]{}\hsindent{3}{}\<[6]%
\>[6]{}\Varid{run_{State}}\;(\Conid{State}\;(\lambda \Varid{s}\to \Varid{run_{State}}\;(\Varid{h_{State}^\prime}\;\Varid{q})\;(\Conid{S}\;\Varid{xs}\;\Varid{stack})))\;(\Conid{S}\;\Varid{xs}\;(\Varid{q}\mathbin{:}\Varid{stack})){}\<[E]%
\\
\>[3]{}\mathrel{=}\mbox{\commentbegin ~  definition of \ensuremath{\Varid{run_{State}}}   \commentend}{}\<[E]%
\\
\>[3]{}\hsindent{3}{}\<[6]%
\>[6]{}(\lambda \Varid{s}\to \Varid{run_{State}}\;(\Varid{h_{State}^\prime}\;\Varid{q})\;(\Conid{S}\;\Varid{xs}\;\Varid{stack}))\;(\Conid{S}\;\Varid{xs}\;(\Varid{q}\mathbin{:}\Varid{stack})){}\<[E]%
\\
\>[3]{}\mathrel{=}\mbox{\commentbegin ~  function application   \commentend}{}\<[E]%
\\
\>[3]{}\hsindent{3}{}\<[6]%
\>[6]{}\Varid{run_{State}}\;(\Varid{h_{State}^\prime}\;\Varid{q})\;(\Conid{S}\;\Varid{xs}\;\Varid{stack}){}\<[E]%
\ColumnHook
\end{hscode}\resethooks
\indentend \end{proof}

\begin{lemma}[evaluation-push]\label{eq:eval-push}~\indentbegin \begin{hscode}\SaveRestoreHook
\column{B}{@{}>{\hspre}l<{\hspost}@{}}%
\column{3}{@{}>{\hspre}l<{\hspost}@{}}%
\column{E}{@{}>{\hspre}l<{\hspost}@{}}%
\>[3]{}\Varid{run_{State}}\;(\Varid{h_{State}^\prime}\;(\Varid{push_S}\;\Varid{q}\;\Varid{p}))\;(\Conid{S}\;\Varid{xs}\;\Varid{stack})\mathrel{=}\Varid{run_{State}}\;(\Varid{h_{State}^\prime}\;\Varid{p})\;(\Conid{S}\;\Varid{xs}\;(\Varid{q}\mathbin{:}\Varid{stack})){}\<[E]%
\ColumnHook
\end{hscode}\resethooks
\indentend \end{lemma}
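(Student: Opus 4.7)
The plan is to prove the evaluation-push lemma by straightforward equational calculation, in exactly the same style as the three preceding auxiliary lemmas (evaluation-append, evaluation-pop1, and evaluation-pop2). Since $\ensuremath{\Varid{push_S}\;\Varid{q}\;\Varid{p}}$ is a fixed syntactic expression with no recursion in its own right, no induction on $\ensuremath{\Varid{p}}$ is required; we only need to chase the definitions.

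First I would unfold the definition of $\ensuremath{\Varid{push_S}\;\Varid{q}\;\Varid{p}}$, which is $\ensuremath{\Varid{get}>\!\!>\!\!=\lambda (\Conid{S}\;\Varid{xs}\;\Varid{stack})\to \Varid{put}\;(\Conid{S}\;\Varid{xs}\;(\Varid{q}\mathbin{:}\Varid{stack}))>\!\!>\Varid{p}}$, after desugaring its \textbf{do}-notation. Next I would replace $\ensuremath{\Varid{get}}$ by $\ensuremath{\Conid{Op}\;(\Conid{Get}\;\Varid{\eta})}$ and use the $\Conid{Free}$-monad bind together with Law~(\ref{eq:monad-ret-bind}) (return-bind) to push the $\lambda$-binder under the $\Conid{Get}$ constructor, producing $\ensuremath{\Conid{Op}\;(\Conid{Get}\;(\lambda (\Conid{S}\;\Varid{xs}\;\Varid{stack})\to \Varid{put}\;(\Conid{S}\;\Varid{xs}\;(\Varid{q}\mathbin{:}\Varid{stack}))>\!\!>\Varid{p}))}$. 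Then I would apply $\ensuremath{\Varid{h_{State}^\prime}}$ to this node, unfold one step of its fold (the $\Conid{Get}$ branch of $\ensuremath{\Varid{alg_{S}^\prime}}$), and evaluate $\ensuremath{\Varid{run_{State}}}$ against the initial state $\ensuremath{\Conid{S}\;\Varid{xs}\;\Varid{stack}}$. By the semantics of $\ensuremath{\Conid{Get}}$, this duplicates the state and hands $\ensuremath{\Conid{S}\;\Varid{xs}\;\Varid{stack}}$ to the continuation, so after function application the continuation reduces to $\ensuremath{\Varid{put}\;(\Conid{S}\;\Varid{xs}\;(\Varid{q}\mathbin{:}\Varid{stack}))>\!\!>\Varid{p}}$ evaluated at $\ensuremath{\Conid{S}\;\Varid{xs}\;\Varid{stack}}$.

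I would then repeat the same pattern for $\ensuremath{\Varid{put}}$: replace it by $\ensuremath{\Conid{Op}\;(\Conid{Put}\;(\Conid{S}\;\Varid{xs}\;(\Varid{q}\mathbin{:}\Varid{stack}))\;(\Varid{\eta}\;()))}$, push $\ensuremath{>\!\!>\Varid{p}}$ inside using the free-monad bind and Law~(\ref{eq:monad-ret-bind}), and unfold the $\Conid{Put}$ branch of $\ensuremath{\Varid{alg_{S}^\prime}}$. This state-update step produces exactly $\ensuremath{\Varid{run_{State}}\;(\Varid{h_{State}^\prime}\;\Varid{p})\;(\Conid{S}\;\Varid{xs}\;(\Varid{q}\mathbin{:}\Varid{stack}))}$ after one more evaluation of $\ensuremath{\Varid{run_{State}}}$ on a $\ensuremath{\Conid{State}}$ constructor, which is precisely the desired right-hand side.

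I do not expect any real obstacle: the derivation is purely mechanical, mirroring the proof of evaluation-pop2 almost line for line (push is the dual to popping a non-empty stack, except that it does not perform a case analysis on $\ensuremath{\Varid{stack}}$). The only bookkeeping care required is to keep the $\ensuremath{\Conid{S}\;\Varid{xs}\;\Varid{stack}}$ pattern bindings and the updated state $\ensuremath{\Conid{S}\;\Varid{xs}\;(\Varid{q}\mathbin{:}\Varid{stack})}$ consistent through each $\ensuremath{\Varid{run_{State}}\hsdot{\circ }{.}\Conid{State}}$ reduction. No induction, no fusion, and no appeal to any of the four state laws beyond monad unit are needed.
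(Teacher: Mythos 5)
Your proposal is correct and follows essentially the same route as the paper's own proof: unfold \ensuremath{\Varid{push_S}}, desugar the \ensuremath{\mathbf{do}}-block, reduce the \ensuremath{\Conid{Get}} node via \ensuremath{\Varid{alg_{S}^\prime}} and \ensuremath{\Varid{run_{State}}} to hand \ensuremath{\Conid{S}\;\Varid{xs}\;\Varid{stack}} to the continuation, then reduce the \ensuremath{\Conid{Put}} node to arrive at \ensuremath{\Varid{run_{State}}\;(\Varid{h_{State}^\prime}\;\Varid{p})\;(\Conid{S}\;\Varid{xs}\;(\Varid{q}\mathbin{:}\Varid{stack}))}. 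The paper's calculation uses exactly the definitional unfoldings and the return-bind law you identify, with no induction or other state laws.
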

\begin{proof}~\indentbegin \begin{hscode}\SaveRestoreHook
\column{B}{@{}>{\hspre}l<{\hspost}@{}}%
\column{3}{@{}>{\hspre}l<{\hspost}@{}}%
\column{6}{@{}>{\hspre}l<{\hspost}@{}}%
\column{E}{@{}>{\hspre}l<{\hspost}@{}}%
\>[6]{}\Varid{run_{State}}\;(\Varid{h_{State}^\prime}\;(\Varid{push_S}\;\Varid{q}\;\Varid{p}))\;(\Conid{S}\;\Varid{xs}\;\Varid{stack}){}\<[E]%
\\
\>[3]{}\mathrel{=}\mbox{\commentbegin ~  definition of \ensuremath{\Varid{push_S}}   \commentend}{}\<[E]%
\ColumnHook
\end{hscode}\resethooks
\indentend %
\indentbegin \begin{hscode}\SaveRestoreHook
\column{B}{@{}>{\hspre}l<{\hspost}@{}}%
\column{3}{@{}>{\hspre}l<{\hspost}@{}}%
\column{6}{@{}>{\hspre}l<{\hspost}@{}}%
\column{8}{@{}>{\hspre}l<{\hspost}@{}}%
\column{E}{@{}>{\hspre}l<{\hspost}@{}}%
\>[6]{}\Varid{run_{State}}\;(\Varid{h_{State}^\prime}\;(\Varid{get}>\!\!>\!\!=\lambda (\Conid{S}\;\Varid{xs}\;\Varid{stack})\to \Varid{put}\;(\Conid{S}\;\Varid{xs}\;(\Varid{q}\mathbin{:}\Varid{stack}))>\!\!>\Varid{p}))\;(\Conid{S}\;\Varid{xs}\;\Varid{stack}){}\<[E]%
\\
\>[3]{}\mathrel{=}\mbox{\commentbegin ~  definition of \ensuremath{\Varid{get}}   \commentend}{}\<[E]%
\\
\>[3]{}\hsindent{3}{}\<[6]%
\>[6]{}\Varid{run_{State}}\;(\Varid{h_{State}^\prime}\;(\Conid{Op}\;(\Conid{Get}\;\Varid{\eta})>\!\!>\!\!=\lambda (\Conid{S}\;\Varid{xs}\;\Varid{stack})\to \Varid{put}\;(\Conid{S}\;\Varid{xs}\;(\Varid{q}\mathbin{:}\Varid{stack}))>\!\!>\Varid{p}))\;(\Conid{S}\;\Varid{xs}\;\Varid{stack}){}\<[E]%
\\
\>[3]{}\mathrel{=}\mbox{\commentbegin ~  definition of \ensuremath{(>\!\!>\!\!=)} for free monad and Law \ref{eq:monad-ret-bind}: return-bind   \commentend}{}\<[E]%
\\
\>[3]{}\hsindent{3}{}\<[6]%
\>[6]{}\Varid{run_{State}}\;(\Varid{h_{State}^\prime}\;(\Conid{Op}\;(\Conid{Get}\;(\lambda (\Conid{S}\;\Varid{xs}\;\Varid{stack})\to \Varid{put}\;(\Conid{S}\;\Varid{xs}\;(\Varid{q}\mathbin{:}\Varid{stack}))>\!\!>\Varid{p}))))\;(\Conid{S}\;\Varid{xs}\;\Varid{stack}){}\<[E]%
\\
\>[3]{}\mathrel{=}\mbox{\commentbegin ~  definition of \ensuremath{\Varid{h_{State}^\prime}}   \commentend}{}\<[E]%
\\
\>[3]{}\hsindent{3}{}\<[6]%
\>[6]{}\Varid{run_{State}}\;(\Conid{State}\;(\lambda \Varid{s}\to \Varid{run_{State}}\;(\Varid{h_{State}^\prime}\;((\lambda (\Conid{S}\;\Varid{xs}\;\Varid{stack})\to \Varid{put}\;(\Conid{S}\;\Varid{xs}\;(\Varid{q}\mathbin{:}\Varid{stack}))>\!\!>\Varid{p})\;\Varid{s}))\;\Varid{s}))\;{}\<[E]%
\\
\>[6]{}\hsindent{2}{}\<[8]%
\>[8]{}(\Conid{S}\;\Varid{xs}\;\Varid{stack}){}\<[E]%
\\
\>[3]{}\mathrel{=}\mbox{\commentbegin ~  definition of \ensuremath{\Varid{run_{State}}}   \commentend}{}\<[E]%
\\
\>[3]{}\hsindent{3}{}\<[6]%
\>[6]{}(\lambda \Varid{s}\to \Varid{run_{State}}\;(\Varid{h_{State}^\prime}\;((\lambda (\Conid{S}\;\Varid{xs}\;\Varid{stack})\to \Varid{put}\;(\Conid{S}\;\Varid{xs}\;(\Varid{q}\mathbin{:}\Varid{stack}))>\!\!>\Varid{p})\;\Varid{s}))\;\Varid{s})\;(\Conid{S}\;\Varid{xs}\;\Varid{stack}){}\<[E]%
\\
\>[3]{}\mathrel{=}\mbox{\commentbegin ~  function application   \commentend}{}\<[E]%
\\
\>[3]{}\hsindent{3}{}\<[6]%
\>[6]{}\Varid{run_{State}}\;(\Varid{h_{State}^\prime}\;((\lambda (\Conid{S}\;\Varid{xs}\;\Varid{stack})\to \Varid{put}\;(\Conid{S}\;\Varid{xs}\;(\Varid{q}\mathbin{:}\Varid{stack}))>\!\!>\Varid{p})\;(\Conid{S}\;\Varid{xs}\;\Varid{stack})))\;(\Conid{S}\;\Varid{xs}\;\Varid{stack}){}\<[E]%
\\
\>[3]{}\mathrel{=}\mbox{\commentbegin ~  function application   \commentend}{}\<[E]%
\\
\>[3]{}\hsindent{3}{}\<[6]%
\>[6]{}\Varid{run_{State}}\;(\Varid{h_{State}^\prime}\;(\Varid{put}\;(\Conid{S}\;\Varid{xs}\;(\Varid{q}\mathbin{:}\Varid{stack}))>\!\!>\Varid{p}))\;(\Conid{S}\;\Varid{xs}\;\Varid{stack}){}\<[E]%
\\
\>[3]{}\mathrel{=}\mbox{\commentbegin ~  definition of \ensuremath{\Varid{put}}   \commentend}{}\<[E]%
\\
\>[3]{}\hsindent{3}{}\<[6]%
\>[6]{}\Varid{run_{State}}\;(\Varid{h_{State}^\prime}\;(\Conid{Op}\;(\Conid{Put}\;(\Conid{S}\;\Varid{xs}\;(\Varid{q}\mathbin{:}\Varid{stack}))\;(\Varid{\eta}\;()))>\!\!>\Varid{p}))\;(\Conid{S}\;\Varid{xs}\;\Varid{stack}){}\<[E]%
\\
\>[3]{}\mathrel{=}\mbox{\commentbegin ~  definition of \ensuremath{(>\!\!>)} for free monad and Law \ref{eq:monad-ret-bind}: return-bind   \commentend}{}\<[E]%
\\
\>[3]{}\hsindent{3}{}\<[6]%
\>[6]{}\Varid{run_{State}}\;(\Varid{h_{State}^\prime}\;(\Conid{Op}\;(\Conid{Put}\;(\Conid{S}\;\Varid{xs}\;(\Varid{q}\mathbin{:}\Varid{stack}))\;\Varid{p})))\;(\Conid{S}\;\Varid{xs}\;\Varid{stack}){}\<[E]%
\\
\>[3]{}\mathrel{=}\mbox{\commentbegin ~  definition of \ensuremath{\Varid{h_{State}^\prime}}   \commentend}{}\<[E]%
\\
\>[3]{}\hsindent{3}{}\<[6]%
\>[6]{}\Varid{run_{State}}\;(\Conid{State}\;(\lambda \Varid{s}\to \Varid{run_{State}}\;(\Varid{h_{State}^\prime}\;\Varid{p})\;(\Conid{S}\;\Varid{xs}\;(\Varid{q}\mathbin{:}\Varid{stack}))))\;(\Conid{S}\;\Varid{xs}\;\Varid{stack}){}\<[E]%
\\
\>[3]{}\mathrel{=}\mbox{\commentbegin ~  definition of \ensuremath{\Varid{run_{State}}}   \commentend}{}\<[E]%
\\
\>[3]{}\hsindent{3}{}\<[6]%
\>[6]{}(\lambda \Varid{s}\to \Varid{run_{State}}\;(\Varid{h_{State}^\prime}\;\Varid{p})\;(\Conid{S}\;\Varid{xs}\;(\Varid{q}\mathbin{:}\Varid{stack})))\;(\Conid{S}\;\Varid{xs}\;\Varid{stack}){}\<[E]%
\\
\>[3]{}\mathrel{=}\mbox{\commentbegin ~  function application   \commentend}{}\<[E]%
\\
\>[3]{}\hsindent{3}{}\<[6]%
\>[6]{}\Varid{run_{State}}\;(\Varid{h_{State}^\prime}\;\Varid{p})\;(\Conid{S}\;\Varid{xs}\;(\Varid{q}\mathbin{:}\Varid{stack})){}\<[E]%
\ColumnHook
\end{hscode}\resethooks
\indentend \end{proof}

\subsection{Combining with Other Effects}
\label{app:in-combination-with-other-effects}

This section proves the following theorem in \Cref{sec:nondet2state}.

\nondetState*

\begin{proof}
The proof is very similar to that of \Cref{thm:nondet-stateS} in
\Cref{app:runnd-hnd}.

We start with expanding the definition of \ensuremath{\Varid{run_{ND+f}}}:\indentbegin \begin{hscode}\SaveRestoreHook
\column{B}{@{}>{\hspre}l<{\hspost}@{}}%
\column{3}{@{}>{\hspre}l<{\hspost}@{}}%
\column{E}{@{}>{\hspre}l<{\hspost}@{}}%
\>[3]{}\Varid{extract_{SS}}\hsdot{\circ }{.}\Varid{h_{State}}\hsdot{\circ }{.}\Varid{nondet2state}\mathrel{=}\Varid{h_{ND}}{}\<[E]%
\ColumnHook
\end{hscode}\resethooks
\indentend We use the fold fusion law {\bf fusion-post'}~(\ref{eq:fusion-post-strong}) to
fuse the left-hand side.
Since the right-hand side is already a fold, to prove the equation we
just need to check the components of the fold \ensuremath{\Varid{h_{ND}}} satisfy the
conditions of the fold fusion. The conditions can be splitted into
the following three equations:

\[\ba{rl}
    &\ensuremath{(\Varid{extract_{SS}}\hsdot{\circ }{.}\Varid{h_{State}})\hsdot{\circ }{.}\Varid{gen}\mathrel{=}\Varid{gen_{ND+f}}} \\
    &\ensuremath{(\Varid{extract_{SS}}\hsdot{\circ }{.}\Varid{h_{State}})\hsdot{\circ }{.}\Varid{alg}\hsdot{\circ }{.}\Varid{fmap}\;\Varid{nondet2state_S}}\\
 \ensuremath{\mathrel{=}}&  \ensuremath{\Varid{alg_{ND+f}}\hsdot{\circ }{.}\Varid{fmap}\;(\Varid{extract_{SS}}\hsdot{\circ }{.}\Varid{h_{State}})\hsdot{\circ }{.}\Varid{fmap}\;\Varid{nondet2state_S}} \\
    &\ensuremath{(\Varid{extract_{SS}}\hsdot{\circ }{.}\Varid{h_{State}})\hsdot{\circ }{.}\Varid{fwd}\hsdot{\circ }{.}\Varid{fmap}\;\Varid{nondet2state_S}}\\
 \ensuremath{\mathrel{=}}&  \ensuremath{\Varid{fwd_{ND+f}}\hsdot{\circ }{.}\Varid{fmap}\;(\Varid{extract_{SS}}\hsdot{\circ }{.}\Varid{h_{State}})\hsdot{\circ }{.}\Varid{fmap}\;\Varid{nondet2state_S}}
\ea\]

For brevity, we omit the last common part \ensuremath{\Varid{fmap}\;\Varid{nondet2state_S}} of
the second equation in the following proof. Instead, we assume that
the input is in the codomain of \ensuremath{\Varid{fmap}\;\Varid{nondet2state_S}}.

For the first equation, we calculate as follows:
\indentbegin \begin{hscode}\SaveRestoreHook
\column{B}{@{}>{\hspre}l<{\hspost}@{}}%
\column{3}{@{}>{\hspre}l<{\hspost}@{}}%
\column{6}{@{}>{\hspre}l<{\hspost}@{}}%
\column{E}{@{}>{\hspre}l<{\hspost}@{}}%
\>[6]{}\Varid{extract_{SS}}\;(\Varid{h_{State}}\;(\Varid{gen}\;\Varid{x})){}\<[E]%
\\
\>[3]{}\mathrel{=}\mbox{\commentbegin ~  definition of \ensuremath{\Varid{gen}}   \commentend}{}\<[E]%
\\
\>[3]{}\hsindent{3}{}\<[6]%
\>[6]{}\Varid{extract_{SS}}\;(\Varid{h_{State}}\;(\Varid{append_{SS}}\;\Varid{x}\;\Varid{pop_{SS}})){}\<[E]%
\\
\>[3]{}\mathrel{=}\mbox{\commentbegin ~  definition of \ensuremath{\Varid{extract_{SS}}}   \commentend}{}\<[E]%
\\
\>[3]{}\hsindent{3}{}\<[6]%
\>[6]{}\Varid{results_{SS}}\hsdot{\circ }{.}\Varid{snd}\mathbin{\langle\hspace{1.6pt}\mathclap{\raisebox{0.1pt}{\scalebox{1}{\$}}}\hspace{1.6pt}\rangle}\Varid{run_{StateT}}\;(\Varid{h_{State}}\;(\Varid{append_{SS}}\;\Varid{x}\;\Varid{pop_{SS}}))\;(\Conid{SS}\;[\mskip1.5mu \mskip1.5mu]\;[\mskip1.5mu \mskip1.5mu]){}\<[E]%
\\
\>[3]{}\mathrel{=}\mbox{\commentbegin ~  \Cref{eq:eval-append-f}   \commentend}{}\<[E]%
\\
\>[3]{}\hsindent{3}{}\<[6]%
\>[6]{}\Varid{results_{SS}}\hsdot{\circ }{.}\Varid{snd}\mathbin{\langle\hspace{1.6pt}\mathclap{\raisebox{0.1pt}{\scalebox{1}{\$}}}\hspace{1.6pt}\rangle}\Varid{run_{StateT}}\;(\Varid{h_{State}}\;\Varid{pop_{SS}})\;(\Conid{SS}\;([\mskip1.5mu \mskip1.5mu]+\!\!+[\mskip1.5mu \Varid{x}\mskip1.5mu])\;[\mskip1.5mu \mskip1.5mu]){}\<[E]%
\\
\>[3]{}\mathrel{=}\mbox{\commentbegin ~  definition of \ensuremath{(+\!\!+)}   \commentend}{}\<[E]%
\\
\>[3]{}\hsindent{3}{}\<[6]%
\>[6]{}\Varid{results_{SS}}\hsdot{\circ }{.}\Varid{snd}\mathbin{\langle\hspace{1.6pt}\mathclap{\raisebox{0.1pt}{\scalebox{1}{\$}}}\hspace{1.6pt}\rangle}\Varid{run_{StateT}}\;(\Varid{h_{State}}\;\Varid{pop_{SS}})\;(\Conid{SS}\;[\mskip1.5mu \Varid{x}\mskip1.5mu]\;[\mskip1.5mu \mskip1.5mu]){}\<[E]%
\\
\>[3]{}\mathrel{=}\mbox{\commentbegin ~  \Cref{eq:eval-pop1-f}   \commentend}{}\<[E]%
\\
\>[3]{}\hsindent{3}{}\<[6]%
\>[6]{}\Varid{results_{SS}}\hsdot{\circ }{.}\Varid{snd}\mathbin{\langle\hspace{1.6pt}\mathclap{\raisebox{0.1pt}{\scalebox{1}{\$}}}\hspace{1.6pt}\rangle}\Varid{\eta}\;((),\Conid{SS}\;[\mskip1.5mu \Varid{x}\mskip1.5mu]\;[\mskip1.5mu \mskip1.5mu]){}\<[E]%
\\
\>[3]{}\mathrel{=}\mbox{\commentbegin ~  evaluation of \ensuremath{\Varid{snd},\Varid{results_{SS}}}   \commentend}{}\<[E]%
\\
\>[3]{}\hsindent{3}{}\<[6]%
\>[6]{}\Varid{\eta}\;[\mskip1.5mu \Varid{x}\mskip1.5mu]{}\<[E]%
\\
\>[3]{}\mathrel{=}\mbox{\commentbegin ~  definition of \ensuremath{\Varid{\eta}} for free monad   \commentend}{}\<[E]%
\\
\>[3]{}\hsindent{3}{}\<[6]%
\>[6]{}\Conid{Var}\;[\mskip1.5mu \Varid{x}\mskip1.5mu]{}\<[E]%
\\
\>[3]{}\mathrel{=}\mbox{\commentbegin ~  definition of \ensuremath{\Varid{\eta}_{[]}}   \commentend}{}\<[E]%
\\
\>[3]{}\hsindent{3}{}\<[6]%
\>[6]{}(\Conid{Var}\hsdot{\circ }{.}\Varid{\eta})\;\Varid{x}{}\<[E]%
\\
\>[3]{}\mathrel{=}\mbox{\commentbegin ~  definition of \ensuremath{\Varid{gen_{ND+f}}}   \commentend}{}\<[E]%
\\
\>[3]{}\hsindent{3}{}\<[6]%
\>[6]{}\Varid{gen_{ND+f}}\;\Varid{x}{}\<[E]%
\ColumnHook
\end{hscode}\resethooks
\indentend %

For the second equation, we proceed with a case analysis on the input.

\noindent \mbox{\underline{case \ensuremath{\Conid{Fail}}}}

\indentbegin \begin{hscode}\SaveRestoreHook
\column{B}{@{}>{\hspre}l<{\hspost}@{}}%
\column{3}{@{}>{\hspre}l<{\hspost}@{}}%
\column{6}{@{}>{\hspre}l<{\hspost}@{}}%
\column{E}{@{}>{\hspre}l<{\hspost}@{}}%
\>[6]{}\Varid{extract_{SS}}\;(\Varid{h_{State}}\;(\Varid{alg}\;\Conid{Fail})){}\<[E]%
\\
\>[3]{}\mathrel{=}\mbox{\commentbegin ~  definition of \ensuremath{\Varid{alg}}   \commentend}{}\<[E]%
\\
\>[3]{}\hsindent{3}{}\<[6]%
\>[6]{}\Varid{extract_{SS}}\;(\Varid{h_{State}}\;\Varid{pop_{SS}}){}\<[E]%
\\
\>[3]{}\mathrel{=}\mbox{\commentbegin ~  definition of \ensuremath{\Varid{extract_{SS}}}   \commentend}{}\<[E]%
\\
\>[3]{}\hsindent{3}{}\<[6]%
\>[6]{}\Varid{results_{SS}}\hsdot{\circ }{.}\Varid{snd}\mathbin{\langle\hspace{1.6pt}\mathclap{\raisebox{0.1pt}{\scalebox{1}{\$}}}\hspace{1.6pt}\rangle}\Varid{run_{StateT}}\;(\Varid{h_{State}}\;\Varid{pop_{SS}})\;(\Conid{SS}\;[\mskip1.5mu \mskip1.5mu]\;[\mskip1.5mu \mskip1.5mu]){}\<[E]%
\\
\>[3]{}\mathrel{=}\mbox{\commentbegin ~  \Cref{eq:eval-pop1-f}   \commentend}{}\<[E]%
\\
\>[3]{}\hsindent{3}{}\<[6]%
\>[6]{}\Varid{results_{SS}}\hsdot{\circ }{.}\Varid{snd}\mathbin{\langle\hspace{1.6pt}\mathclap{\raisebox{0.1pt}{\scalebox{1}{\$}}}\hspace{1.6pt}\rangle}\Varid{\eta}\;((),\Conid{SS}\;[\mskip1.5mu \mskip1.5mu]\;[\mskip1.5mu \mskip1.5mu]){}\<[E]%
\\
\>[3]{}\mathrel{=}\mbox{\commentbegin ~  evaluation of \ensuremath{\Varid{snd},\Varid{results_{SS}}}   \commentend}{}\<[E]%
\\
\>[3]{}\hsindent{3}{}\<[6]%
\>[6]{}\Varid{\eta}\;[\mskip1.5mu \mskip1.5mu]{}\<[E]%
\\
\>[3]{}\mathrel{=}\mbox{\commentbegin ~  definition of \ensuremath{\Varid{\eta}} for free monad   \commentend}{}\<[E]%
\\
\>[3]{}\hsindent{3}{}\<[6]%
\>[6]{}\Conid{Var}\;[\mskip1.5mu \mskip1.5mu]{}\<[E]%
\\
\>[3]{}\mathrel{=}\mbox{\commentbegin ~  definition of \ensuremath{\Varid{alg_{ND+f}}}   \commentend}{}\<[E]%
\\
\>[3]{}\hsindent{3}{}\<[6]%
\>[6]{}\Varid{alg_{ND+f}}\;\Conid{Fail}{}\<[E]%
\\
\>[3]{}\mathrel{=}\mbox{\commentbegin ~  definition of \ensuremath{\Varid{fmap}}   \commentend}{}\<[E]%
\\
\>[3]{}\hsindent{3}{}\<[6]%
\>[6]{}(\Varid{alg_{ND+f}}\hsdot{\circ }{.}\Varid{fmap}\;(\Varid{extract_{SS}}\hsdot{\circ }{.}\Varid{h_{State}}))\;\Conid{Fail}{}\<[E]%
\ColumnHook
\end{hscode}\resethooks
\indentend %

\noindent \mbox{\underline{case \ensuremath{\Conid{Inl}\;(\Conid{Or}\;\Varid{p}\;\Varid{q})}}}

\indentbegin \begin{hscode}\SaveRestoreHook
\column{B}{@{}>{\hspre}l<{\hspost}@{}}%
\column{3}{@{}>{\hspre}l<{\hspost}@{}}%
\column{6}{@{}>{\hspre}l<{\hspost}@{}}%
\column{8}{@{}>{\hspre}l<{\hspost}@{}}%
\column{E}{@{}>{\hspre}l<{\hspost}@{}}%
\>[6]{}\Varid{extract_{SS}}\;(\Varid{h_{State}}\;(\Varid{alg}\;(\Conid{Or}\;\Varid{p}\;\Varid{q}))){}\<[E]%
\\
\>[3]{}\mathrel{=}\mbox{\commentbegin ~  definition of \ensuremath{\Varid{alg}}   \commentend}{}\<[E]%
\\
\>[3]{}\hsindent{3}{}\<[6]%
\>[6]{}\Varid{extract_{SS}}\;(\Varid{h_{State}}\;(\Varid{push_{SS}}\;\Varid{q}\;\Varid{p})){}\<[E]%
\\
\>[3]{}\mathrel{=}\mbox{\commentbegin ~  definition of \ensuremath{\Varid{extract_{SS}}}   \commentend}{}\<[E]%
\\
\>[3]{}\hsindent{3}{}\<[6]%
\>[6]{}\Varid{results_{SS}}\hsdot{\circ }{.}\Varid{snd}\mathbin{\langle\hspace{1.6pt}\mathclap{\raisebox{0.1pt}{\scalebox{1}{\$}}}\hspace{1.6pt}\rangle}\Varid{run_{StateT}}\;(\Varid{h_{State}}\;(\Varid{push_{SS}}\;\Varid{q}\;\Varid{p}))\;(\Conid{SS}\;[\mskip1.5mu \mskip1.5mu]\;[\mskip1.5mu \mskip1.5mu]){}\<[E]%
\\
\>[3]{}\mathrel{=}\mbox{\commentbegin ~  \Cref{eq:eval-push-f}   \commentend}{}\<[E]%
\\
\>[3]{}\hsindent{3}{}\<[6]%
\>[6]{}\Varid{results_{SS}}\hsdot{\circ }{.}\Varid{snd}\mathbin{\langle\hspace{1.6pt}\mathclap{\raisebox{0.1pt}{\scalebox{1}{\$}}}\hspace{1.6pt}\rangle}\Varid{run_{StateT}}\;(\Varid{h_{State}}\;\Varid{p})\;(\Conid{SS}\;[\mskip1.5mu \mskip1.5mu]\;[\mskip1.5mu \Varid{q}\mskip1.5mu]){}\<[E]%
\\
\>[3]{}\mathrel{=}\mbox{\commentbegin ~  \Cref{eq:pop-extract-f}   \commentend}{}\<[E]%
\\
\>[3]{}\hsindent{3}{}\<[6]%
\>[6]{}\Varid{results_{SS}}\hsdot{\circ }{.}\Varid{snd}\mathbin{\langle\hspace{1.6pt}\mathclap{\raisebox{0.1pt}{\scalebox{1}{\$}}}\hspace{1.6pt}\rangle}{}\<[E]%
\\
\>[6]{}\hsindent{2}{}\<[8]%
\>[8]{}\mathbf{do}\;\{\mskip1.5mu \Varid{p'}\leftarrow \Varid{extract_{SS}}\;(\Varid{h_{State}}\;\Varid{p});\Varid{run_{StateT}}\;(\Varid{h_{State}}\;\Varid{pop_{SS}})\;(\Conid{SS}\;([\mskip1.5mu \mskip1.5mu]+\!\!+\Varid{p'})\;[\mskip1.5mu \Varid{q}\mskip1.5mu])\mskip1.5mu\}{}\<[E]%
\\
\>[3]{}\mathrel{=}\mbox{\commentbegin ~  definition of \ensuremath{(+\!\!+)}   \commentend}{}\<[E]%
\\
\>[3]{}\hsindent{3}{}\<[6]%
\>[6]{}\Varid{results_{SS}}\hsdot{\circ }{.}\Varid{snd}\mathbin{\langle\hspace{1.6pt}\mathclap{\raisebox{0.1pt}{\scalebox{1}{\$}}}\hspace{1.6pt}\rangle}\mathbf{do}\;\{\mskip1.5mu \Varid{p'}\leftarrow \Varid{extract_{SS}}\;(\Varid{h_{State}}\;\Varid{p});\Varid{run_{StateT}}\;(\Varid{h_{State}}\;\Varid{pop_{SS}})\;(\Conid{SS}\;\Varid{p'}\;[\mskip1.5mu \Varid{q}\mskip1.5mu])\mskip1.5mu\}{}\<[E]%
\\
\>[3]{}\mathrel{=}\mbox{\commentbegin ~  \Cref{eq:eval-pop2-f}   \commentend}{}\<[E]%
\\
\>[3]{}\hsindent{3}{}\<[6]%
\>[6]{}\Varid{results_{SS}}\hsdot{\circ }{.}\Varid{snd}\mathbin{\langle\hspace{1.6pt}\mathclap{\raisebox{0.1pt}{\scalebox{1}{\$}}}\hspace{1.6pt}\rangle}\mathbf{do}\;\{\mskip1.5mu \Varid{p'}\leftarrow \Varid{extract_{SS}}\;(\Varid{h_{State}}\;\Varid{p});\Varid{run_{StateT}}\;(\Varid{h_{State}}\;\Varid{q})\;(\Conid{SS}\;\Varid{p'}\;[\mskip1.5mu \mskip1.5mu])\mskip1.5mu\}{}\<[E]%
\\
\>[3]{}\mathrel{=}\mbox{\commentbegin ~  \Cref{eq:pop-extract-f}   \commentend}{}\<[E]%
\\
\>[3]{}\hsindent{3}{}\<[6]%
\>[6]{}\Varid{results_{SS}}\hsdot{\circ }{.}\Varid{snd}\mathbin{\langle\hspace{1.6pt}\mathclap{\raisebox{0.1pt}{\scalebox{1}{\$}}}\hspace{1.6pt}\rangle}\mathbf{do}\;\{\mskip1.5mu \Varid{p'}\leftarrow \Varid{extract_{SS}}\;(\Varid{h_{State}}\;\Varid{p});{}\<[E]%
\\
\>[6]{}\hsindent{2}{}\<[8]%
\>[8]{}\mathbf{do}\;\{\mskip1.5mu \Varid{q'}\leftarrow \Varid{extract_{SS}}\;(\Varid{h_{State}}\;\Varid{q});\Varid{run_{StateT}}\;(\Varid{h_{State}}\;\Varid{pop_{SS}})\;(\Conid{SS}\;(\Varid{p'}+\!\!+\Varid{q'})\;[\mskip1.5mu \mskip1.5mu])\mskip1.5mu\}\mskip1.5mu\}{}\<[E]%
\\
\>[3]{}\mathrel{=}\mbox{\commentbegin ~  Law (\ref{eq:monad-assoc}) for \ensuremath{\mathbf{do}}-notation   \commentend}{}\<[E]%
\\
\>[3]{}\hsindent{3}{}\<[6]%
\>[6]{}\Varid{results_{SS}}\hsdot{\circ }{.}\Varid{snd}\mathbin{\langle\hspace{1.6pt}\mathclap{\raisebox{0.1pt}{\scalebox{1}{\$}}}\hspace{1.6pt}\rangle}\mathbf{do}\;\{\mskip1.5mu \Varid{p'}\leftarrow \Varid{extract_{SS}}\;(\Varid{h_{State}}\;\Varid{p});\Varid{q'}\leftarrow \Varid{extract_{SS}}\;(\Varid{h_{State}}\;\Varid{q});{}\<[E]%
\\
\>[6]{}\hsindent{2}{}\<[8]%
\>[8]{}\Varid{run_{StateT}}\;(\Varid{h_{State}}\;\Varid{pop_{SS}})\;(\Conid{SS}\;(\Varid{p'}+\!\!+\Varid{q'})\;[\mskip1.5mu \mskip1.5mu])\mskip1.5mu\}{}\<[E]%
\\
\>[3]{}\mathrel{=}\mbox{\commentbegin ~  \Cref{eq:eval-pop1-f}   \commentend}{}\<[E]%
\\
\>[3]{}\hsindent{3}{}\<[6]%
\>[6]{}\Varid{results_{SS}}\hsdot{\circ }{.}\Varid{snd}\mathbin{\langle\hspace{1.6pt}\mathclap{\raisebox{0.1pt}{\scalebox{1}{\$}}}\hspace{1.6pt}\rangle}{}\<[E]%
\\
\>[6]{}\hsindent{2}{}\<[8]%
\>[8]{}\mathbf{do}\;\{\mskip1.5mu \Varid{p'}\leftarrow \Varid{extract_{SS}}\;(\Varid{h_{State}}\;\Varid{p});\Varid{q'}\leftarrow \Varid{extract_{SS}}\;(\Varid{h_{State}}\;\Varid{q});\Varid{\eta}\;((),\Conid{SS}\;(\Varid{p'}+\!\!+\Varid{q'})\;[\mskip1.5mu \mskip1.5mu])\mskip1.5mu\}{}\<[E]%
\\
\>[3]{}\mathrel{=}\mbox{\commentbegin ~  evaluation of \ensuremath{\Varid{snd},\Varid{results_{SS}}}   \commentend}{}\<[E]%
\\
\>[3]{}\hsindent{3}{}\<[6]%
\>[6]{}\mathbf{do}\;\{\mskip1.5mu \Varid{p'}\leftarrow \Varid{extract_{SS}}\;(\Varid{h_{State}}\;\Varid{p});\Varid{q'}\leftarrow \Varid{extract_{SS}}\;(\Varid{h_{State}}\;\Varid{q});\Varid{\eta}\;(\Varid{p'}+\!\!+\Varid{q'})\mskip1.5mu\}{}\<[E]%
\\
\>[3]{}\mathrel{=}\mbox{\commentbegin ~  definition of \ensuremath{\Varid{liftM2}}   \commentend}{}\<[E]%
\\
\>[3]{}\hsindent{3}{}\<[6]%
\>[6]{}\Varid{liftM2}\;(+\!\!+)\;((\Varid{extract_{SS}}\hsdot{\circ }{.}\Varid{h_{State}})\;\Varid{p})\;((\Varid{extract_{SS}}\hsdot{\circ }{.}\Varid{h_{State}})\;\Varid{q}){}\<[E]%
\\
\>[3]{}\mathrel{=}\mbox{\commentbegin ~  definition of \ensuremath{\Varid{alg_{ND+f}}}   \commentend}{}\<[E]%
\\
\>[3]{}\hsindent{3}{}\<[6]%
\>[6]{}\Varid{alg_{ND+f}}\;(\Conid{Or}\;((\Varid{extract_{SS}}\hsdot{\circ }{.}\Varid{h_{State}})\;\Varid{p})\;((\Varid{extract_{SS}}\hsdot{\circ }{.}\Varid{h_{State}})\;\Varid{q})){}\<[E]%
\\
\>[3]{}\mathrel{=}\mbox{\commentbegin ~  definition of \ensuremath{\Varid{fmap}}   \commentend}{}\<[E]%
\\
\>[3]{}\hsindent{3}{}\<[6]%
\>[6]{}(\Varid{alg_{ND+f}}\hsdot{\circ }{.}\Varid{fmap}\;(\Varid{extract_{SS}}\hsdot{\circ }{.}\Varid{h_{State}}))\;(\Conid{Or}\;\Varid{p}\;\Varid{q}){}\<[E]%
\ColumnHook
\end{hscode}\resethooks
\indentend %

For the last equation, we calculate as follows:

\indentbegin \begin{hscode}\SaveRestoreHook
\column{B}{@{}>{\hspre}l<{\hspost}@{}}%
\column{3}{@{}>{\hspre}l<{\hspost}@{}}%
\column{6}{@{}>{\hspre}l<{\hspost}@{}}%
\column{8}{@{}>{\hspre}l<{\hspost}@{}}%
\column{E}{@{}>{\hspre}l<{\hspost}@{}}%
\>[6]{}\Varid{extract_{SS}}\;(\Varid{h_{State}}\;(\Varid{fwd}\;\Varid{y})){}\<[E]%
\\
\>[3]{}\mathrel{=}\mbox{\commentbegin ~  definition of \ensuremath{\Varid{fwd}}   \commentend}{}\<[E]%
\\
\>[3]{}\hsindent{3}{}\<[6]%
\>[6]{}\Varid{extract_{SS}}\;(\Varid{h_{State}}\;(\Conid{Op}\;(\Conid{Inr}\;\Varid{y}))){}\<[E]%
\\
\>[3]{}\mathrel{=}\mbox{\commentbegin ~  definition of \ensuremath{\Varid{h_{State}}}   \commentend}{}\<[E]%
\\
\>[3]{}\hsindent{3}{}\<[6]%
\>[6]{}\Varid{extract_{SS}}\;(\Conid{StateT}\mathbin{\$}\lambda \Varid{s}\to \Conid{Op}\mathbin{\$}\Varid{fmap}\;(\lambda \Varid{k}\to \Varid{run_{StateT}}\;\Varid{k}\;\Varid{s})\;(\Varid{fmap}\;\Varid{h_{State}}\;\Varid{y})){}\<[E]%
\\
\>[3]{}\mathrel{=}\mbox{\commentbegin ~  definition of \ensuremath{\Varid{extract_{SS}}}   \commentend}{}\<[E]%
\\
\>[3]{}\hsindent{3}{}\<[6]%
\>[6]{}\Varid{results_{SS}}\hsdot{\circ }{.}\Varid{snd}\mathbin{\langle\hspace{1.6pt}\mathclap{\raisebox{0.1pt}{\scalebox{1}{\$}}}\hspace{1.6pt}\rangle}{}\<[E]%
\\
\>[6]{}\hsindent{2}{}\<[8]%
\>[8]{}\Varid{run_{StateT}}\;(\Conid{StateT}\mathbin{\$}\lambda \Varid{s}\to \Conid{Op}\mathbin{\$}\Varid{fmap}\;(\lambda \Varid{k}\to \Varid{run_{StateT}}\;\Varid{k}\;\Varid{s})\;(\Varid{fmap}\;\Varid{h_{State}}\;\Varid{y}))\;(\Conid{SS}\;[\mskip1.5mu \mskip1.5mu]\;[\mskip1.5mu \mskip1.5mu]){}\<[E]%
\\
\>[3]{}\mathrel{=}\mbox{\commentbegin ~  definition of \ensuremath{\Varid{run_{StateT}}}   \commentend}{}\<[E]%
\\
\>[3]{}\hsindent{3}{}\<[6]%
\>[6]{}\Varid{results_{SS}}\hsdot{\circ }{.}\Varid{snd}\mathbin{\langle\hspace{1.6pt}\mathclap{\raisebox{0.1pt}{\scalebox{1}{\$}}}\hspace{1.6pt}\rangle}(\lambda \Varid{s}\to \Conid{Op}\mathbin{\$}\Varid{fmap}\;(\lambda \Varid{k}\to \Varid{run_{StateT}}\;\Varid{k}\;\Varid{s})\;(\Varid{fmap}\;\Varid{h_{State}}\;\Varid{y}))\;(\Conid{SS}\;[\mskip1.5mu \mskip1.5mu]\;[\mskip1.5mu \mskip1.5mu]){}\<[E]%
\\
\>[3]{}\mathrel{=}\mbox{\commentbegin ~  function application   \commentend}{}\<[E]%
\\
\>[3]{}\hsindent{3}{}\<[6]%
\>[6]{}\Varid{results_{SS}}\hsdot{\circ }{.}\Varid{snd}\mathbin{\langle\hspace{1.6pt}\mathclap{\raisebox{0.1pt}{\scalebox{1}{\$}}}\hspace{1.6pt}\rangle}\Conid{Op}\mathbin{\$}\Varid{fmap}\;(\lambda \Varid{k}\to \Varid{run_{StateT}}\;\Varid{k}\;(\Conid{SS}\;[\mskip1.5mu \mskip1.5mu]\;[\mskip1.5mu \mskip1.5mu]))\;(\Varid{fmap}\;\Varid{h_{State}}\;\Varid{y})){}\<[E]%
\\
\>[3]{}\mathrel{=}\mbox{\commentbegin ~  definition of \ensuremath{\mathbin{\langle\hspace{1.6pt}\mathclap{\raisebox{0.1pt}{\scalebox{1}{\$}}}\hspace{1.6pt}\rangle}}   \commentend}{}\<[E]%
\\
\>[3]{}\hsindent{3}{}\<[6]%
\>[6]{}\Conid{Op}\;(\Varid{fmap}\;(\lambda \Varid{k}\to \Varid{results_{SS}}\hsdot{\circ }{.}\Varid{snd}\mathbin{\langle\hspace{1.6pt}\mathclap{\raisebox{0.1pt}{\scalebox{1}{\$}}}\hspace{1.6pt}\rangle}\Varid{run_{StateT}}\;\Varid{k}\;(\Conid{SS}\;[\mskip1.5mu \mskip1.5mu]\;[\mskip1.5mu \mskip1.5mu]))\;(\Varid{fmap}\;\Varid{h_{State}}\;\Varid{y})){}\<[E]%
\\
\>[3]{}\mathrel{=}\mbox{\commentbegin ~  definition of \ensuremath{\Varid{fwd_{ND+f}}}   \commentend}{}\<[E]%
\\
\>[3]{}\hsindent{3}{}\<[6]%
\>[6]{}\Varid{fwd_{ND+f}}\;(\Varid{fmap}\;(\lambda \Varid{k}\to \Varid{results_{SS}}\hsdot{\circ }{.}\Varid{snd}\mathbin{\langle\hspace{1.6pt}\mathclap{\raisebox{0.1pt}{\scalebox{1}{\$}}}\hspace{1.6pt}\rangle}\Varid{run_{StateT}}\;\Varid{k}\;(\Conid{SS}\;[\mskip1.5mu \mskip1.5mu]\;[\mskip1.5mu \mskip1.5mu]))\;(\Varid{fmap}\;\Varid{h_{State}}\;\Varid{y})){}\<[E]%
\\
\>[3]{}\mathrel{=}\mbox{\commentbegin ~  definition of \ensuremath{\Varid{extract_{SS}}}   \commentend}{}\<[E]%
\\
\>[3]{}\hsindent{3}{}\<[6]%
\>[6]{}\Varid{fwd_{ND+f}}\;(\Varid{fmap}\;\Varid{extract_{SS}}\;(\Varid{fmap}\;\Varid{h_{State}}\;\Varid{y})){}\<[E]%
\\
\>[3]{}\mathrel{=}\mbox{\commentbegin ~  Law (\ref{eq:functor-composition})   \commentend}{}\<[E]%
\\
\>[3]{}\hsindent{3}{}\<[6]%
\>[6]{}\Varid{fwd_{ND+f}}\;(\Varid{fmap}\;(\Varid{extract_{SS}}\hsdot{\circ }{.}\Varid{h_{State}})\;\Varid{y}){}\<[E]%
\ColumnHook
\end{hscode}\resethooks
\indentend %

\end{proof}

In the above proof we have used several lemmas. Now we prove them.

\begin{lemma}[pop-extract of \ensuremath{\Conid{SS}}]\label{eq:pop-extract-f}
~\indentbegin \begin{hscode}\SaveRestoreHook
\column{B}{@{}>{\hspre}l<{\hspost}@{}}%
\column{3}{@{}>{\hspre}c<{\hspost}@{}}%
\column{3E}{@{}l@{}}%
\column{6}{@{}>{\hspre}l<{\hspost}@{}}%
\column{E}{@{}>{\hspre}l<{\hspost}@{}}%
\>[6]{}\Varid{run_{StateT}}\;(\Varid{h_{State}}\;\Varid{p})\;(\Conid{SS}\;\Varid{xs}\;\Varid{stack}){}\<[E]%
\\
\>[3]{}\mathrel{=}{}\<[3E]%
\>[6]{}\mathbf{do}\;\{\mskip1.5mu \Varid{p'}\leftarrow \Varid{extract_{SS}}\;(\Varid{h_{State}}\;\Varid{p});\Varid{run_{StateT}}\;(\Varid{h_{State}}\;\Varid{pop_{SS}})\;(\Conid{SS}\;(\Varid{xs}+\!\!+\Varid{p'})\;\Varid{stack})\mskip1.5mu\}{}\<[E]%
\ColumnHook
\end{hscode}\resethooks
\indentend holds for all \ensuremath{\Varid{p}} in the codomain of the function \ensuremath{\Varid{nondet2state}}.
\end{lemma}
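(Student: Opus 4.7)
The plan is to mirror the proof of Lemma~\ref{eq:pop-extract} very closely, but now generalised to the modular setting where programs live in \ensuremath{\Conid{Free}\;(\Varid{State_{F}}\;(\Conid{SS}\;\Varid{f}\;\Varid{a})\mathrel{{:}{+}{:}}\Varid{f})\;()} and the handler \ensuremath{\Varid{h_{State}}} forwards other effects through \ensuremath{\Varid{fwd_{S}}}. As before, I will perform structural induction on \ensuremath{\Varid{p}}, and since \ensuremath{\Varid{p}} is in the codomain of \ensuremath{\Varid{nondet2state}} the inductive analysis reduces to four shape cases: \ensuremath{\Varid{p}\mathrel{=}\Varid{gen}\;\Varid{x}}, \ensuremath{\Varid{p}\mathrel{=}\Varid{alg}\;\Conid{Fail}}, \ensuremath{\Varid{p}\mathrel{=}\Varid{alg}\;(\Conid{Or}\;\Varid{p}_{1}\;\Varid{p}_{2})}, and the new forwarding case \ensuremath{\Varid{p}\mathrel{=}\Varid{fwd}\;\Varid{y}\mathrel{=}\Conid{Op}\;(\Conid{Inr}\;\Varid{y})}, where in every case the subterms are themselves results of \ensuremath{\Varid{nondet2state}}, giving us the usual induction hypothesis.

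First I will need the three modular stack-evaluation lemmas, analogous to Lemmas~\ref{eq:eval-append}, \ref{eq:eval-pop1}, \ref{eq:eval-pop2}, \ref{eq:eval-push}, but with \ensuremath{\Varid{h_{State}}} and \ensuremath{\Varid{run_{StateT}}} instead of \ensuremath{\Varid{h_{State}^\prime}} and \ensuremath{\Varid{run_{State}}}, and with state carrier \ensuremath{\Conid{SS}} instead of \ensuremath{\Conid{S}}. These should fall out of the same unfolding of \ensuremath{\Varid{h_{State}}}, \ensuremath{\Varid{get}} and \ensuremath{\Varid{put}}, just threaded through the \ensuremath{\Conid{StateT}} wrapper. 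I will also assume the three ``extract-gen / extract-alg1 / extract-alg2'' equations in their modular form, which are established exactly as in the inner subproofs of Theorem~\ref{thm:nondet-state} (and hence available without circularity, following the same argument given in the proof of Lemma~\ref{eq:pop-extract}).

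The \ensuremath{\Varid{gen}\;\Varid{x}} case and the two nondeterminism cases (\ensuremath{\Conid{Fail}} and \ensuremath{\Conid{Or}\;\Varid{p}_{1}\;\Varid{p}_{2}}) then go through by precisely the same rewriting sequence used in Lemma~\ref{eq:pop-extract}: unfold \ensuremath{\Varid{gen}}/\ensuremath{\Varid{alg}}, apply the modular \ensuremath{\Varid{eval-append}}, \ensuremath{\Varid{eval-push}}, \ensuremath{\Varid{eval-pop2}} lemmas, invoke the induction hypothesis on each subcomputation, and fold back using the extract-alg equations. The only real bookkeeping difference from the original proof is that the right-hand side now lives in \ensuremath{\Conid{Free}\;\Varid{f}} rather than being pure, so the equational rewrites must be done inside a \ensuremath{\mathbf{do}}-block and the monad laws of \ensuremath{\Conid{Free}\;\Varid{f}} (associativity and left unit) are used to commute the binds into the shape on the right.

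The main obstacle, as in previous theorems, is the new forwarding case \ensuremath{\Varid{p}\mathrel{=}\Conid{Op}\;(\Conid{Inr}\;\Varid{y})}. Here \ensuremath{\Varid{h_{State}}\;\Varid{p}} unfolds, via \ensuremath{\Varid{fwd_{S}}}, to a \ensuremath{\Conid{StateT}} that emits an \ensuremath{\Conid{Op}}-node in the underlying free monad, and the computation then resumes under the \emph{same} state with a fresh stack contribution from the inductive hypothesis on the children. The key observation making this case work is that \ensuremath{\Varid{nondet2state}} never produces a forwarded operation whose continuation consumes the result-list component of \ensuremath{\Conid{SS}} directly; the continuation is again in the codomain of \ensuremath{\Varid{nondet2state}}, so IH applies pointwise under \ensuremath{\Varid{fmap}}. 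Concretely, I will push \ensuremath{\Varid{run_{StateT}}\;(\cdot)\;(\Conid{SS}\;\Varid{xs}\;\Varid{stack})} through the \ensuremath{\Conid{Op}\hsdot{\circ }{.}\Conid{Inr}} using the definition of \ensuremath{\Varid{fwd_{S}}}, apply the pointwise IH inside the \ensuremath{\Varid{fmap}}, then reassemble an \ensuremath{\Conid{Op}}-node on the right that, after using monad laws, matches \ensuremath{\Varid{extract_{SS}}\;(\Varid{h_{State}}\;\Varid{p})>\!\!>\!\!=\lambda \Varid{p'}\to \Varid{run_{StateT}}\;(\Varid{h_{State}}\;\Varid{pop_{SS}})\;(\Conid{SS}\;(\Varid{xs}+\!\!+\Varid{p'})\;\Varid{stack})}. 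The delicate part is the \ensuremath{\Varid{fmap}}/\ensuremath{(>\!\!>\!\!=)} bookkeeping (essentially naturality of application combined with \ensuremath{\Varid{fmap}} fusion), in the same spirit as the corresponding manipulations in Appendix~\ref{app:local-global}.
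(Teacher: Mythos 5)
Your proposal matches the paper's proof essentially step for step: structural induction on \ensuremath{\Varid{p}} restricted to the four shapes produced by \ensuremath{\Varid{nondet2state}}, the same non-circularity argument for assuming the fused-handler equations (\emph{extract-gen}, \emph{extract-alg1}, \emph{extract-alg2}, and their forwarding counterpart) alongside the induction hypothesis, the modular versions of the stack-evaluation lemmas, and the same \ensuremath{\Varid{fmap}}/\ensuremath{(>\!\!>\!\!=)} naturality bookkeeping to discharge the \ensuremath{\Conid{Op}\;(\Conid{Inr}\;\Varid{y})} case by applying the induction hypothesis pointwise under the functor. No gaps; this is the paper's argument.
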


\begin{proof} ~
The proof structure is similar to that of \Cref{eq:pop-extract}.
We prove this lemma by structural induction on \ensuremath{\Varid{p}\mathbin{::}\Conid{Free}\;(\Varid{State_{F}}\;(\Conid{SS}\;\Varid{f}\;\Varid{a})\mathrel{{:}{+}{:}}\Varid{f})\;()}.
For each inductive case of \ensuremath{\Varid{p}}, we not only assume this lemma holds
for its sub-terms (this is the standard induction hypothesis), but
also assume \Cref{thm:nondet-state} holds for \ensuremath{\Varid{p}} and its sub-terms.
This is sound because in the proof of \Cref{thm:nondet-state}, for
\ensuremath{(\Varid{extract_{SS}}\hsdot{\circ }{.}\Varid{h_{State}}\hsdot{\circ }{.}\Varid{nondet2state})\;\Varid{p}\mathrel{=}\Varid{h_{ND+f}}\;\Varid{p}}, we only apply
\Cref{eq:pop-extract-f} to the sub-terms of \ensuremath{\Varid{p}}, which is already
included in the induction hypothesis so there is no circular argument.

Since we assume \Cref{thm:nondet-state} holds for \ensuremath{\Varid{p}} and its
sub-terms, we can use several useful properties proved in the
sub-cases of the proof of \Cref{thm:nondet-state}. We list them here
for easy reference:
\begin{itemize}
\item {extract-gen-ext}:
\ensuremath{\Varid{extract_{SS}}\hsdot{\circ }{.}\Varid{h_{State}}\hsdot{\circ }{.}\Varid{gen}\mathrel{=}\Conid{Var}\hsdot{\circ }{.}\Varid{\eta}}
\item {extract-alg1-ext}:
\ensuremath{\Varid{extract_{SS}}\;(\Varid{h_{State}}\;(\Varid{alg}\;\Conid{Fail}))\mathrel{=}\Conid{Var}\;[\mskip1.5mu \mskip1.5mu]}
\item {extract-alg2-ext}:
\ensuremath{\Varid{extract_{SS}}\;(\Varid{h_{State}}\;(\Varid{alg}\;(\Conid{Or}\;\Varid{p}\;\Varid{q})))\mathrel{=}\Varid{liftM2}\;(+\!\!+)\;(\Varid{extract_{SS}}\;(\Varid{h_{State}}\;\Varid{p}))\;(\Varid{extract_{SS}}\;(\Varid{h_{State}}\;\Varid{q}))}
\item {extract-fwd}
\ensuremath{\Varid{extract_{SS}}\;(\Varid{h_{State}}\;(\Varid{fwd}\;\Varid{y}))\mathrel{=}\Varid{fwd_{ND+f}}\;(\Varid{fmap}\;(\Varid{extract_{SS}}\hsdot{\circ }{.}\Varid{h_{State}})\;\Varid{y})}
\end{itemize}

We proceed by structural induction on \ensuremath{\Varid{p}}.
Note that for all \ensuremath{\Varid{p}} in the codomain of \ensuremath{\Varid{nondet2state}}, it is either
generated by the \ensuremath{\Varid{gen}}, \ensuremath{\Varid{alg}}, or \ensuremath{\Varid{fwd}} of \ensuremath{\Varid{nondet2state}}.  Thus, we
only need to prove the following three equations where \ensuremath{\Varid{x}} is in the
codomain of \ensuremath{\Varid{fmap}\;\Varid{nondet2state_S}} and \ensuremath{\Varid{p}\mathrel{=}\Varid{gen}\;\Varid{x}}, \ensuremath{\Varid{p}\mathrel{=}\Varid{alg}\;\Varid{x}}, and \ensuremath{\Varid{p}\mathrel{=}\Varid{fwd}\;\Varid{x}}, respectively.
\begin{enumerate}
    \item\indentbegin \begin{hscode}\SaveRestoreHook
\column{B}{@{}>{\hspre}l<{\hspost}@{}}%
\column{3}{@{}>{\hspre}c<{\hspost}@{}}%
\column{3E}{@{}l@{}}%
\column{6}{@{}>{\hspre}l<{\hspost}@{}}%
\column{E}{@{}>{\hspre}l<{\hspost}@{}}%
\>[6]{}\Varid{run_{StateT}}\;(\Varid{h_{State}}\;(\Varid{gen}\;\Varid{x}))\;(\Conid{SS}\;\Varid{xs}\;\Varid{stack}){}\<[E]%
\\
\>[3]{}\mathrel{=}{}\<[3E]%
\>[6]{}\mathbf{do}\;\{\mskip1.5mu \Varid{p'}\leftarrow \Varid{extract_{SS}}\;(\Varid{h_{State}}\;(\Varid{gen}\;\Varid{x}));\Varid{run_{StateT}}\;(\Varid{h_{State}}\;\Varid{pop_{SS}})\;(\Conid{SS}\;(\Varid{xs}+\!\!+\Varid{p'})\;\Varid{stack})\mskip1.5mu\}{}\<[E]%
\ColumnHook
\end{hscode}\resethooks
\indentend     \item\indentbegin \begin{hscode}\SaveRestoreHook
\column{B}{@{}>{\hspre}l<{\hspost}@{}}%
\column{3}{@{}>{\hspre}c<{\hspost}@{}}%
\column{3E}{@{}l@{}}%
\column{6}{@{}>{\hspre}l<{\hspost}@{}}%
\column{E}{@{}>{\hspre}l<{\hspost}@{}}%
\>[6]{}\Varid{run_{StateT}}\;(\Varid{h_{State}}\;(\Varid{alg}\;\Varid{x}))\;(\Conid{SS}\;\Varid{xs}\;\Varid{stack}){}\<[E]%
\\
\>[3]{}\mathrel{=}{}\<[3E]%
\>[6]{}\mathbf{do}\;\{\mskip1.5mu \Varid{p'}\leftarrow \Varid{extract_{SS}}\;(\Varid{h_{State}}\;(\Varid{alg}\;\Varid{x}));\Varid{run_{StateT}}\;(\Varid{h_{State}}\;\Varid{pop_{SS}})\;(\Conid{SS}\;(\Varid{xs}+\!\!+\Varid{p'})\;\Varid{stack})\mskip1.5mu\}{}\<[E]%
\ColumnHook
\end{hscode}\resethooks
\indentend     \item\indentbegin \begin{hscode}\SaveRestoreHook
\column{B}{@{}>{\hspre}l<{\hspost}@{}}%
\column{3}{@{}>{\hspre}c<{\hspost}@{}}%
\column{3E}{@{}l@{}}%
\column{6}{@{}>{\hspre}l<{\hspost}@{}}%
\column{E}{@{}>{\hspre}l<{\hspost}@{}}%
\>[6]{}\Varid{run_{StateT}}\;(\Varid{h_{State}}\;(\Varid{fwd}\;\Varid{x}))\;(\Conid{SS}\;\Varid{xs}\;\Varid{stack}){}\<[E]%
\\
\>[3]{}\mathrel{=}{}\<[3E]%
\>[6]{}\mathbf{do}\;\{\mskip1.5mu \Varid{p'}\leftarrow \Varid{extract_{SS}}\;(\Varid{h_{State}}\;(\Varid{fwd}\;\Varid{x}));\Varid{run_{StateT}}\;(\Varid{h_{State}}\;\Varid{pop_{SS}})\;(\Conid{SS}\;(\Varid{xs}+\!\!+\Varid{p'})\;\Varid{stack})\mskip1.5mu\}{}\<[E]%
\ColumnHook
\end{hscode}\resethooks
\indentend \end{enumerate}

For the case \ensuremath{\Varid{p}\mathrel{=}\Varid{gen}\;\Varid{x}}, we calculate as follows:
\indentbegin \begin{hscode}\SaveRestoreHook
\column{B}{@{}>{\hspre}l<{\hspost}@{}}%
\column{3}{@{}>{\hspre}l<{\hspost}@{}}%
\column{6}{@{}>{\hspre}l<{\hspost}@{}}%
\column{E}{@{}>{\hspre}l<{\hspost}@{}}%
\>[6]{}\Varid{run_{StateT}}\;(\Varid{h_{State}}\;(\Varid{gen}\;\Varid{x}))\;(\Conid{SS}\;\Varid{xs}\;\Varid{stack}){}\<[E]%
\\
\>[3]{}\mathrel{=}\mbox{\commentbegin ~  definition of \ensuremath{\Varid{gen}}   \commentend}{}\<[E]%
\\
\>[3]{}\hsindent{3}{}\<[6]%
\>[6]{}\Varid{run_{StateT}}\;(\Varid{h_{State}}\;(\Varid{append_{SS}}\;\Varid{x}\;\Varid{pop_{SS}}))\;(\Conid{SS}\;\Varid{xs}\;\Varid{stack}){}\<[E]%
\\
\>[3]{}\mathrel{=}\mbox{\commentbegin ~  \Cref{eq:eval-append-f}   \commentend}{}\<[E]%
\\
\>[3]{}\hsindent{3}{}\<[6]%
\>[6]{}\Varid{run_{StateT}}\;(\Varid{h_{State}}\;\Varid{pop_{SS}})\;(\Conid{SS}\;(\Varid{xs}+\!\!+[\mskip1.5mu \Varid{x}\mskip1.5mu])\;\Varid{stack}){}\<[E]%
\\
\>[3]{}\mathrel{=}\mbox{\commentbegin ~  definition of \ensuremath{\Varid{\eta}_{[]}}   \commentend}{}\<[E]%
\\
\>[3]{}\hsindent{3}{}\<[6]%
\>[6]{}\Varid{run_{StateT}}\;(\Varid{h_{State}}\;\Varid{pop_{SS}})\;(\Conid{SS}\;(\Varid{xs}+\!\!+\Varid{\eta}\;\Varid{x})\;\Varid{stack}){}\<[E]%
\\
\>[3]{}\mathrel{=}\mbox{\commentbegin ~  definition of \ensuremath{\Conid{Var}} and reformulation   \commentend}{}\<[E]%
\\
\>[3]{}\hsindent{3}{}\<[6]%
\>[6]{}\mathbf{do}\;\{\mskip1.5mu \Varid{p'}\leftarrow \Conid{Var}\;(\Varid{\eta}\;\Varid{x});\Varid{run_{StateT}}\;(\Varid{h_{State}}\;\Varid{pop_{SS}})\;(\Conid{SS}\;(\Varid{xs}+\!\!+\Varid{p'})\;\Varid{stack})\mskip1.5mu\}{}\<[E]%
\\
\>[3]{}\mathrel{=}\mbox{\commentbegin ~  extract-gen-ext   \commentend}{}\<[E]%
\\
\>[3]{}\hsindent{3}{}\<[6]%
\>[6]{}\mathbf{do}\;\{\mskip1.5mu \Varid{p'}\leftarrow \Varid{extract_{SS}}\;(\Varid{h_{State}}\;(\Varid{gen}\;\Varid{x}));\Varid{run_{StateT}}\;(\Varid{h_{State}}\;\Varid{pop_{SS}})\;(\Conid{SS}\;(\Varid{xs}+\!\!+\Varid{p'})\;\Varid{stack})\mskip1.5mu\}{}\<[E]%
\ColumnHook
\end{hscode}\resethooks
\indentend For the case \ensuremath{\Varid{p}\mathrel{=}\Varid{alg}\;\Varid{x}}, we proceed with a case analysis on \ensuremath{\Varid{x}}.

\noindent \mbox{\underline{case \ensuremath{\Conid{Fail}}}}
\indentbegin \begin{hscode}\SaveRestoreHook
\column{B}{@{}>{\hspre}l<{\hspost}@{}}%
\column{3}{@{}>{\hspre}l<{\hspost}@{}}%
\column{6}{@{}>{\hspre}l<{\hspost}@{}}%
\column{E}{@{}>{\hspre}l<{\hspost}@{}}%
\>[6]{}\Varid{run_{StateT}}\;(\Varid{h_{State}}\;(\Varid{alg}\;\Conid{Fail}))\;(\Conid{SS}\;\Varid{xs}\;\Varid{stack}){}\<[E]%
\\
\>[3]{}\mathrel{=}\mbox{\commentbegin ~  definition of \ensuremath{\Varid{alg}}   \commentend}{}\<[E]%
\\
\>[3]{}\hsindent{3}{}\<[6]%
\>[6]{}\Varid{run_{StateT}}\;(\Varid{h_{State}}\;\Varid{pop_{SS}})\;(\Conid{SS}\;\Varid{xs}\;\Varid{stack}){}\<[E]%
\\
\>[3]{}\mathrel{=}\mbox{\commentbegin ~  definition of \ensuremath{[\mskip1.5mu \mskip1.5mu]}   \commentend}{}\<[E]%
\\
\>[3]{}\hsindent{3}{}\<[6]%
\>[6]{}\Varid{run_{StateT}}\;(\Varid{h_{State}}\;\Varid{pop_{SS}})\;(\Conid{SS}\;(\Varid{xs}+\!\!+[\mskip1.5mu \mskip1.5mu])\;\Varid{stack}){}\<[E]%
\\
\>[3]{}\mathrel{=}\mbox{\commentbegin ~  definition of \ensuremath{\Conid{Var}}   \commentend}{}\<[E]%
\\
\>[3]{}\hsindent{3}{}\<[6]%
\>[6]{}\mathbf{do}\;\{\mskip1.5mu \Varid{p'}\leftarrow \Conid{Var}\;[\mskip1.5mu \mskip1.5mu];\Varid{run_{StateT}}\;(\Varid{h_{State}}\;\Varid{pop_{SS}})\;(\Conid{SS}\;(\Varid{xs}+\!\!+\Varid{p'})\;\Varid{stack})\mskip1.5mu\}{}\<[E]%
\\
\>[3]{}\mathrel{=}\mbox{\commentbegin ~  extract-alg1-ext   \commentend}{}\<[E]%
\\
\>[3]{}\hsindent{3}{}\<[6]%
\>[6]{}\mathbf{do}\;\{\mskip1.5mu \Varid{p'}\leftarrow \Varid{extract_{SS}}\;(\Varid{h_{State}}\;(\Varid{alg}\;\Conid{Fail}));\Varid{run_{StateT}}\;(\Varid{h_{State}}\;\Varid{pop_{SS}})\;(\Conid{SS}\;(\Varid{xs}+\!\!+\Varid{p'})\;\Varid{stack})\mskip1.5mu\}{}\<[E]%
\ColumnHook
\end{hscode}\resethooks
\indentend \noindent \mbox{\underline{case \ensuremath{\Conid{Or}\;\Varid{p}_{1}\;\Varid{p}_{2}}}}

\indentbegin \begin{hscode}\SaveRestoreHook
\column{B}{@{}>{\hspre}l<{\hspost}@{}}%
\column{3}{@{}>{\hspre}l<{\hspost}@{}}%
\column{6}{@{}>{\hspre}l<{\hspost}@{}}%
\column{8}{@{}>{\hspre}l<{\hspost}@{}}%
\column{E}{@{}>{\hspre}l<{\hspost}@{}}%
\>[6]{}\Varid{run_{StateT}}\;(\Varid{h_{State}}\;(\Varid{alg}\;(\Conid{Or}\;\Varid{p}_{1}\;\Varid{p}_{2})))\;(\Conid{SS}\;\Varid{xs}\;\Varid{stack}){}\<[E]%
\\
\>[3]{}\mathrel{=}\mbox{\commentbegin ~  definition of \ensuremath{\Varid{alg}}   \commentend}{}\<[E]%
\\
\>[3]{}\hsindent{3}{}\<[6]%
\>[6]{}\Varid{run_{StateT}}\;(\Varid{h_{State}}\;(\Varid{push_{SS}}\;\Varid{p}_{2}\;\Varid{p}_{1}))\;(\Conid{SS}\;\Varid{xs}\;\Varid{stack}){}\<[E]%
\\
\>[3]{}\mathrel{=}\mbox{\commentbegin ~  \Cref{eq:eval-push-f}   \commentend}{}\<[E]%
\\
\>[3]{}\hsindent{3}{}\<[6]%
\>[6]{}\Varid{run_{StateT}}\;(\Varid{h_{State}}\;\Varid{p}_{1})\;(\Conid{SS}\;\Varid{xs}\;(\Varid{p}_{2}\mathbin{:}\Varid{stack})){}\<[E]%
\\
\>[3]{}\mathrel{=}\mbox{\commentbegin ~  induction hypothesis: pop-extract of \ensuremath{\Varid{p}_{1}}   \commentend}{}\<[E]%
\\
\>[3]{}\hsindent{3}{}\<[6]%
\>[6]{}\mathbf{do}\;\{\mskip1.5mu \Varid{p}_{1}'\leftarrow \Varid{extract_{SS}}\;(\Varid{h_{State}}\;\Varid{p}_{1});\Varid{run_{StateT}}\;(\Varid{h_{State}}\;\Varid{pop_{SS}})\;(\Conid{SS}\;(\Varid{xs}+\!\!+\Varid{p}_{1}')\;(\Varid{p}_{2}\mathbin{:}\Varid{stack}))\mskip1.5mu\}{}\<[E]%
\\
\>[3]{}\mathrel{=}\mbox{\commentbegin ~  \Cref{eq:eval-pop2-f}   \commentend}{}\<[E]%
\\
\>[3]{}\hsindent{3}{}\<[6]%
\>[6]{}\mathbf{do}\;\{\mskip1.5mu \Varid{p}_{1}'\leftarrow \Varid{extract_{SS}}\;(\Varid{h_{State}}\;\Varid{p}_{1});\Varid{run_{StateT}}\;(\Varid{h_{State}}\;\Varid{p}_{2})\;(\Conid{SS}\;(\Varid{xs}+\!\!+\Varid{p}_{1}')\;\Varid{stack})\mskip1.5mu\}{}\<[E]%
\\
\>[3]{}\mathrel{=}\mbox{\commentbegin ~  induction hypothesis: pop-extract of \ensuremath{\Varid{p}_{2}}   \commentend}{}\<[E]%
\\
\>[3]{}\hsindent{3}{}\<[6]%
\>[6]{}\mathbf{do}\;\{\mskip1.5mu \Varid{p}_{2}'\leftarrow \Varid{extract_{SS}}\;(\Varid{h_{State}}\;\Varid{p}_{2});\mathbf{do}\;\{\mskip1.5mu \Varid{p}_{1}'\leftarrow \Varid{extract_{SS}}\;(\Varid{h_{State}}\;\Varid{p}_{1});{}\<[E]%
\\
\>[6]{}\hsindent{2}{}\<[8]%
\>[8]{}\Varid{run_{StateT}}\;(\Varid{h_{State}}\;\Varid{pop_{SS}})\;(\Conid{SS}\;(\Varid{xs}+\!\!+\Varid{p}_{1}'+\!\!+\Varid{p}_{2}')\;\Varid{stack})\mskip1.5mu\}\mskip1.5mu\}{}\<[E]%
\\
\>[3]{}\mathrel{=}\mbox{\commentbegin ~  Law (\ref{eq:monad-assoc}) with \ensuremath{\mathbf{do}}-notation   \commentend}{}\<[E]%
\\
\>[3]{}\hsindent{3}{}\<[6]%
\>[6]{}\mathbf{do}\;\{\mskip1.5mu \Varid{p}_{2}'\leftarrow \Varid{extract_{SS}}\;(\Varid{h_{State}}\;\Varid{p}_{2});\Varid{p}_{1}'\leftarrow \Varid{extract_{SS}}\;(\Varid{h_{State}}\;\Varid{p}_{1});{}\<[E]%
\\
\>[6]{}\hsindent{2}{}\<[8]%
\>[8]{}\Varid{run_{StateT}}\;(\Varid{h_{State}}\;\Varid{pop_{SS}})\;(\Conid{SS}\;(\Varid{xs}+\!\!+\Varid{p}_{1}'+\!\!+\Varid{p}_{2}')\;\Varid{stack})\mskip1.5mu\}{}\<[E]%
\\
\>[3]{}\mathrel{=}\mbox{\commentbegin ~  definition of \ensuremath{\Varid{liftM2}}   \commentend}{}\<[E]%
\\
\>[3]{}\hsindent{3}{}\<[6]%
\>[6]{}\mathbf{do}\;\{\mskip1.5mu \Varid{p'}\leftarrow \Varid{liftM2}\;(+\!\!+)\;(\Varid{extract_{SS}}\;(\Varid{h_{State}}\;\Varid{p}_{2}))\;(\Varid{extract_{SS}}\;(\Varid{h_{State}}\;\Varid{p}_{1}));{}\<[E]%
\\
\>[6]{}\hsindent{2}{}\<[8]%
\>[8]{}\Varid{run_{StateT}}\;(\Varid{h_{State}}\;\Varid{pop_{SS}})\;(\Conid{SS}\;(\Varid{xs}+\!\!+\Varid{p'})\;\Varid{stack})\mskip1.5mu\}{}\<[E]%
\\
\>[3]{}\mathrel{=}\mbox{\commentbegin ~  extract-alg2-ext   \commentend}{}\<[E]%
\\
\>[3]{}\hsindent{3}{}\<[6]%
\>[6]{}\mathbf{do}\;\{\mskip1.5mu \Varid{p'}\leftarrow \Varid{extract_{SS}}\;(\Varid{h_{State}}\;(\Varid{alg}\;(\Conid{Or}\;\Varid{p}_{1}\;\Varid{p}_{2})));\Varid{run_{StateT}}\;(\Varid{h_{State}}\;\Varid{pop_{SS}})\;(\Conid{SS}\;(\Varid{xs}+\!\!+\Varid{p'})\;\Varid{stack})\mskip1.5mu\}{}\<[E]%
\ColumnHook
\end{hscode}\resethooks
\indentend %
For the case \ensuremath{\Varid{p}\mathrel{=}\Varid{fwd}\;\Varid{x}}, we proceed with a case analysis on \ensuremath{\Varid{x}}.
\indentbegin \begin{hscode}\SaveRestoreHook
\column{B}{@{}>{\hspre}l<{\hspost}@{}}%
\column{3}{@{}>{\hspre}l<{\hspost}@{}}%
\column{5}{@{}>{\hspre}l<{\hspost}@{}}%
\column{6}{@{}>{\hspre}l<{\hspost}@{}}%
\column{E}{@{}>{\hspre}l<{\hspost}@{}}%
\>[6]{}\Varid{run_{StateT}}\;(\Varid{h_{State}}\;(\Varid{fwd}\;\Varid{x}))\;(\Conid{SS}\;\Varid{xs}\;\Varid{stack}){}\<[E]%
\\
\>[3]{}\mathrel{=}\mbox{\commentbegin ~  definition of \ensuremath{\Varid{fwd}}   \commentend}{}\<[E]%
\\
\>[3]{}\hsindent{3}{}\<[6]%
\>[6]{}\Varid{run_{StateT}}\;(\Varid{h_{State}}\;(\Conid{Op}\;(\Conid{Inr}\;\Varid{x})))\;(\Conid{SS}\;\Varid{xs}\;\Varid{stack}){}\<[E]%
\\
\>[3]{}\mathrel{=}\mbox{\commentbegin ~  definition of \ensuremath{\Varid{h_{State}}}   \commentend}{}\<[E]%
\\
\>[3]{}\hsindent{3}{}\<[6]%
\>[6]{}\Varid{run_{StateT}}\;(\Conid{StateT}\;(\lambda \Varid{s}\to \Conid{Op}\;(\Varid{fmap}\;(\lambda \Varid{y}\to \Varid{run_{StateT}}\;(\Varid{h_{State}}\;\Varid{y}\;\Varid{s}))\;\Varid{x})))\;(\Conid{SS}\;\Varid{xs}\;\Varid{stack}){}\<[E]%
\\
\>[3]{}\mathrel{=}\mbox{\commentbegin ~  definition of \ensuremath{\Varid{run_{StateT}}}   \commentend}{}\<[E]%
\\
\>[3]{}\hsindent{3}{}\<[6]%
\>[6]{}\Conid{Op}\;(\Varid{fmap}\;(\lambda \Varid{y}\to \Varid{run_{StateT}}\;(\Varid{h_{State}}\;\Varid{y}\;(\Conid{SS}\;\Varid{xs}\;\Varid{stack})))\;\Varid{x}){}\<[E]%
\\
\>[3]{}\mathrel{=}\mbox{\commentbegin ~  induction hypothesis  \commentend}{}\<[E]%
\\
\>[3]{}\hsindent{2}{}\<[5]%
\>[5]{}\Conid{Op}\;(\Varid{fmap}\;(\lambda \Varid{y}\to \mathbf{do}\;\{\mskip1.5mu \Varid{p'}\leftarrow \Varid{extract_{SS}}\;(\Varid{h_{State}}\;\Varid{y});\Varid{run_{StateT}}\;(\Varid{h_{State}}\;\Varid{pop_{SS}})\;(\Conid{SS}\;(\Varid{xs}+\!\!+\Varid{p'})\;\Varid{stack})\mskip1.5mu\})\;\Varid{x}){}\<[E]%
\\
\>[3]{}\mathrel{=}\mbox{\commentbegin ~  definition of \ensuremath{>\!\!>\!\!=}  \commentend}{}\<[E]%
\\
\>[3]{}\hsindent{2}{}\<[5]%
\>[5]{}\mathbf{do}\;\{\mskip1.5mu \Varid{p'}\leftarrow \Conid{Op}\;(\Varid{fmap}\;(\lambda \Varid{y}\to \Varid{extract_{SS}}\;(\Varid{h_{State}}\;\Varid{y}))\;\Varid{x});\Varid{run_{StateT}}\;(\Varid{h_{State}}\;\Varid{pop_{SS}})\;(\Conid{SS}\;(\Varid{xs}+\!\!+\Varid{p'})\;\Varid{stack})\mskip1.5mu\}{}\<[E]%
\\
\>[3]{}\mathrel{=}\mbox{\commentbegin ~  definition of \ensuremath{\Varid{fwd_{ND+f}}}  \commentend}{}\<[E]%
\\
\>[3]{}\hsindent{2}{}\<[5]%
\>[5]{}\mathbf{do}\;\{\mskip1.5mu \Varid{p'}\leftarrow \Varid{fwd_{ND+f}}\;(\Varid{fmap}\;(\lambda \Varid{y}\to \Varid{extract_{SS}}\;(\Varid{h_{State}}\;\Varid{y}))\;\Varid{x});\Varid{run_{StateT}}\;(\Varid{h_{State}}\;\Varid{pop_{SS}})\;(\Conid{SS}\;(\Varid{xs}+\!\!+\Varid{p'})\;\Varid{stack})\mskip1.5mu\}{}\<[E]%
\\
\>[3]{}\mathrel{=}\mbox{\commentbegin ~  extract-fwd  \commentend}{}\<[E]%
\\
\>[3]{}\mathrel{=}{}\<[6]%
\>[6]{}\mathbf{do}\;\{\mskip1.5mu \Varid{p'}\leftarrow \Varid{extract_{SS}}\;(\Varid{h_{State}}\;(\Varid{fwd}\;\Varid{x}));\Varid{run_{StateT}}\;(\Varid{h_{State}}\;\Varid{pop_{SS}})\;(\Conid{SS}\;(\Varid{xs}+\!\!+\Varid{p'})\;\Varid{stack})\mskip1.5mu\}{}\<[E]%
\ColumnHook
\end{hscode}\resethooks
\indentend %
\end{proof}

The following four lemmas characterise the behaviours of stack
operations.

\begin{lemma}[evaluation-append-ext]\label{eq:eval-append-f}~\indentbegin \begin{hscode}\SaveRestoreHook
\column{B}{@{}>{\hspre}l<{\hspost}@{}}%
\column{3}{@{}>{\hspre}l<{\hspost}@{}}%
\column{E}{@{}>{\hspre}l<{\hspost}@{}}%
\>[3]{}\Varid{run_{StateT}}\;(\Varid{h_{State}}\;(\Varid{append_{SS}}\;\Varid{x}\;\Varid{p}))\;(\Conid{SS}\;\Varid{xs}\;\Varid{stack})\mathrel{=}\Varid{run_{StateT}}\;(\Varid{h_{State}}\;\Varid{p})\;(\Conid{SS}\;(\Varid{xs}+\!\!+[\mskip1.5mu \Varid{x}\mskip1.5mu])\;\Varid{stack}){}\<[E]%
\ColumnHook
\end{hscode}\resethooks
\indentend \end{lemma}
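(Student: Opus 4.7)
The plan is to mirror the proof of Lemma~\ref{eq:eval-append} (the non-modular analogue in the previous subsection), adapting it to the modular handler \ensuremath{\Varid{h_{State}}} and its carrier \ensuremath{\Conid{StateT}\;\Varid{s}\;(\Conid{Free}\;\Varid{f})}. The calculation is a straightforward equational chain: I will unfold \ensuremath{\Varid{append_{SS}}} into its constituent \ensuremath{\Varid{get}}, \ensuremath{\Varid{put}} and continuation \ensuremath{\Varid{p}}, then push \ensuremath{\Varid{h_{State}}} inwards through the \ensuremath{(>\!\!>\!\!=)} of the free monad, and finally simplify the resulting \ensuremath{\Conid{StateT}} computation by applying it to the concrete initial state \ensuremath{\Conid{SS}\;\Varid{xs}\;\Varid{stack}}.

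Concretely, I would first expand
\ensuremath{\Varid{append_{SS}}\;\Varid{x}\;\Varid{p}\mathrel{=}\Varid{get}>\!\!>\!\!=\lambda (\Conid{SS}\;\Varid{xs}\;\Varid{stack})\to \Varid{put}\;(\Conid{SS}\;(\Varid{xs}+\!\!+[\mskip1.5mu \Varid{x}\mskip1.5mu])\;\Varid{stack})>\!\!>\Varid{p}}
using the instance \ensuremath{\Conid{MState}\;\Varid{s}\;(\Conid{Free}\;(\Varid{State_{F}}\;\Varid{s}\mathrel{{:}{+}{:}}\Varid{f}))} from \Cref{sec:combining-effects}, so the \ensuremath{\Varid{get}} becomes \ensuremath{\Conid{Op}\;(\Conid{Inl}\;(\Conid{Get}\;\Varid{\eta}))} and the \ensuremath{\Varid{put}} becomes \ensuremath{\Conid{Op}\;(\Conid{Inl}\;(\Conid{Put}\;(\Conid{SS}\;(\Varid{xs}+\!\!+[\mskip1.5mu \Varid{x}\mskip1.5mu])\;\Varid{stack})\;(\Varid{\eta}\;())))}. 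By the left-unit monad law these collapse into a single \ensuremath{\Conid{Op}\;(\Conid{Inl}\;(\Conid{Get}\;(\lambda \Varid{s}\to \Conid{Op}\;(\Conid{Inl}\;(\Conid{Put}\;(\ldots)\;\Varid{p})))))} term. Applying \ensuremath{\Varid{h_{State}}} then uses its \ensuremath{\Varid{alg_{S}}} clauses for \ensuremath{\Conid{Get}} and \ensuremath{\Conid{Put}}, yielding a \ensuremath{\Conid{StateT}}-wrapped function that, when fed \ensuremath{\Conid{SS}\;\Varid{xs}\;\Varid{stack}}, reads the state back out as \ensuremath{(\Conid{SS}\;\Varid{xs}\;\Varid{stack})} and then runs \ensuremath{\Varid{h_{State}}\;\Varid{p}} against the updated state \ensuremath{\Conid{SS}\;(\Varid{xs}+\!\!+[\mskip1.5mu \Varid{x}\mskip1.5mu])\;\Varid{stack}}, which is exactly the right-hand side.

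Since the proof reduces entirely to unfolding definitions plus the return-bind monad law, there is no real obstacle: no induction on \ensuremath{\Varid{p}} is required, and the forwarding algebra \ensuremath{\Varid{fwd_{S}}} is never invoked because \ensuremath{\Varid{append_{SS}}} only introduces \ensuremath{\Conid{Inl}} operations before handing control to \ensuremath{\Varid{p}}. The one detail to be careful about is keeping the \ensuremath{\Conid{StateT}} wrapper/unwrapper bookkeeping straight when \ensuremath{\Varid{h_{State}}} meets the composite signature: the mediator \ensuremath{(\mathbin{\#})} must be evaluated on \ensuremath{\Conid{Inl}} branches, where it selects \ensuremath{\Varid{alg_{S}}} and not \ensuremath{\Varid{fwd_{S}}}. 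The remaining two lemmas \Cref{eq:eval-pop1-f}, \Cref{eq:eval-pop2-f}, \Cref{eq:eval-push-f} used earlier admit essentially identical treatments, differing only in which state fields are read, modified, or popped.
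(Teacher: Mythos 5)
Your proposal is correct and follows essentially the same route as the paper's own proof: unfold \ensuremath{\Varid{append_{SS}}} into its \ensuremath{\Varid{get}}/\ensuremath{\Varid{put}} operations, collapse via the return-bind law, and push \ensuremath{\Varid{h_{State}}} through the \ensuremath{\Conid{Get}} and \ensuremath{\Conid{Put}} algebra clauses before applying the resulting \ensuremath{\Conid{StateT}} computation to \ensuremath{\Conid{SS}\;\Varid{xs}\;\Varid{stack}}. Your observations that no induction on \ensuremath{\Varid{p}} is needed and that \ensuremath{\Varid{fwd_{S}}} is never invoked are both accurate and match the paper's calculation.
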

\begin{proof}~\indentbegin \begin{hscode}\SaveRestoreHook
\column{B}{@{}>{\hspre}l<{\hspost}@{}}%
\column{3}{@{}>{\hspre}l<{\hspost}@{}}%
\column{6}{@{}>{\hspre}l<{\hspost}@{}}%
\column{E}{@{}>{\hspre}l<{\hspost}@{}}%
\>[6]{}\Varid{run_{StateT}}\;(\Varid{h_{State}}\;(\Varid{append_{SS}}\;\Varid{x}\;\Varid{p}))\;(\Conid{SS}\;\Varid{xs}\;\Varid{stack}){}\<[E]%
\\
\>[3]{}\mathrel{=}\mbox{\commentbegin ~  definition of \ensuremath{\Varid{append_{SS}}}   \commentend}{}\<[E]%
\ColumnHook
\end{hscode}\resethooks
\indentend %
\indentbegin \begin{hscode}\SaveRestoreHook
\column{B}{@{}>{\hspre}l<{\hspost}@{}}%
\column{3}{@{}>{\hspre}l<{\hspost}@{}}%
\column{6}{@{}>{\hspre}l<{\hspost}@{}}%
\column{8}{@{}>{\hspre}l<{\hspost}@{}}%
\column{E}{@{}>{\hspre}l<{\hspost}@{}}%
\>[6]{}\Varid{run_{StateT}}\;(\Varid{h_{State}}\;(\Varid{get}>\!\!>\!\!=\lambda (\Conid{SS}\;\Varid{xs}\;\Varid{stack})\to \Varid{put}\;(\Conid{SS}\;(\Varid{xs}+\!\!+[\mskip1.5mu \Varid{x}\mskip1.5mu])\;\Varid{stack})>\!\!>\Varid{p}))\;(\Conid{SS}\;\Varid{xs}\;\Varid{stack}){}\<[E]%
\\
\>[3]{}\mathrel{=}\mbox{\commentbegin ~  definition of \ensuremath{\Varid{get}}   \commentend}{}\<[E]%
\\
\>[3]{}\hsindent{3}{}\<[6]%
\>[6]{}\Varid{run_{StateT}}\;(\Varid{h_{State}}\;(\Conid{Op}\;(\Conid{Inl}\;(\Conid{Get}\;\Varid{\eta}))>\!\!>\!\!=\lambda (\Conid{SS}\;\Varid{xs}\;\Varid{stack})\to \Varid{put}\;(\Conid{SS}\;(\Varid{xs}+\!\!+[\mskip1.5mu \Varid{x}\mskip1.5mu])\;\Varid{stack})>\!\!>\Varid{p}))\;{}\<[E]%
\\
\>[6]{}\hsindent{2}{}\<[8]%
\>[8]{}(\Conid{SS}\;\Varid{xs}\;\Varid{stack}){}\<[E]%
\\
\>[3]{}\mathrel{=}\mbox{\commentbegin ~  definition of \ensuremath{(>\!\!>\!\!=)} for free monad and Law \ref{eq:monad-ret-bind}: return-bind   \commentend}{}\<[E]%
\\
\>[3]{}\hsindent{3}{}\<[6]%
\>[6]{}\Varid{run_{StateT}}\;(\Varid{h_{State}}\;(\Conid{Op}\;(\Conid{Inl}\;(\Conid{Get}\;(\lambda (\Conid{SS}\;\Varid{xs}\;\Varid{stack})\to \Varid{put}\;(\Conid{SS}\;(\Varid{xs}+\!\!+[\mskip1.5mu \Varid{x}\mskip1.5mu])\;\Varid{stack})>\!\!>\Varid{p})))))\;(\Conid{SS}\;\Varid{xs}\;\Varid{stack}){}\<[E]%
\\
\>[3]{}\mathrel{=}\mbox{\commentbegin ~  definition of \ensuremath{\Varid{h_{State}}}   \commentend}{}\<[E]%
\\
\>[3]{}\hsindent{3}{}\<[6]%
\>[6]{}\Varid{run_{StateT}}\;(\Conid{StateT}\;(\lambda \Varid{s}\to \Varid{run_{StateT}}\;(\Varid{h_{State}}\;((\lambda (\Conid{SS}\;\Varid{xs}\;\Varid{stack})\to \Varid{put}\;(\Conid{SS}\;(\Varid{xs}+\!\!+[\mskip1.5mu \Varid{x}\mskip1.5mu])\;\Varid{stack})>\!\!>\Varid{p})\;\Varid{s}))\;\Varid{s}))\;{}\<[E]%
\\
\>[6]{}\hsindent{2}{}\<[8]%
\>[8]{}(\Conid{SS}\;\Varid{xs}\;\Varid{stack}){}\<[E]%
\\
\>[3]{}\mathrel{=}\mbox{\commentbegin ~  definition of \ensuremath{\Varid{run_{StateT}}}   \commentend}{}\<[E]%
\\
\>[3]{}\hsindent{3}{}\<[6]%
\>[6]{}(\lambda \Varid{s}\to \Varid{run_{StateT}}\;(\Varid{h_{State}}\;((\lambda (\Conid{SS}\;\Varid{xs}\;\Varid{stack})\to \Varid{put}\;(\Conid{SS}\;(\Varid{xs}+\!\!+[\mskip1.5mu \Varid{x}\mskip1.5mu])\;\Varid{stack})>\!\!>\Varid{p})\;\Varid{s}))\;\Varid{s})\;(\Conid{SS}\;\Varid{xs}\;\Varid{stack}){}\<[E]%
\\
\>[3]{}\mathrel{=}\mbox{\commentbegin ~  function application   \commentend}{}\<[E]%
\\
\>[3]{}\hsindent{3}{}\<[6]%
\>[6]{}\Varid{run_{StateT}}\;(\Varid{h_{State}}\;((\lambda (\Conid{SS}\;\Varid{xs}\;\Varid{stack})\to \Varid{put}\;(\Conid{SS}\;(\Varid{xs}+\!\!+[\mskip1.5mu \Varid{x}\mskip1.5mu])\;\Varid{stack})>\!\!>\Varid{p})\;(\Conid{SS}\;\Varid{xs}\;\Varid{stack})))\;(\Conid{SS}\;\Varid{xs}\;\Varid{stack}){}\<[E]%
\\
\>[3]{}\mathrel{=}\mbox{\commentbegin ~  function application   \commentend}{}\<[E]%
\\
\>[3]{}\hsindent{3}{}\<[6]%
\>[6]{}\Varid{run_{StateT}}\;(\Varid{h_{State}}\;(\Varid{put}\;(\Conid{SS}\;(\Varid{xs}+\!\!+[\mskip1.5mu \Varid{x}\mskip1.5mu])\;\Varid{stack})>\!\!>\Varid{p}))\;(\Conid{SS}\;\Varid{xs}\;\Varid{stack}){}\<[E]%
\\
\>[3]{}\mathrel{=}\mbox{\commentbegin ~  definition of \ensuremath{\Varid{put}}   \commentend}{}\<[E]%
\\
\>[3]{}\hsindent{3}{}\<[6]%
\>[6]{}\Varid{run_{StateT}}\;(\Varid{h_{State}}\;(\Conid{Op}\;(\Conid{Inl}\;(\Conid{Put}\;(\Conid{SS}\;(\Varid{xs}+\!\!+[\mskip1.5mu \Varid{x}\mskip1.5mu])\;\Varid{stack})\;(\Varid{\eta}\;())))>\!\!>\Varid{p}))\;(\Conid{SS}\;\Varid{xs}\;\Varid{stack}){}\<[E]%
\\
\>[3]{}\mathrel{=}\mbox{\commentbegin ~  definition of \ensuremath{(>\!\!>)} for free monad and Law \ref{eq:monad-ret-bind}: return-bind   \commentend}{}\<[E]%
\\
\>[3]{}\hsindent{3}{}\<[6]%
\>[6]{}\Varid{run_{StateT}}\;(\Varid{h_{State}}\;(\Conid{Op}\;(\Conid{Inl}\;(\Conid{Put}\;(\Conid{SS}\;(\Varid{xs}+\!\!+[\mskip1.5mu \Varid{x}\mskip1.5mu])\;\Varid{stack})\;\Varid{p}))))\;(\Conid{SS}\;\Varid{xs}\;\Varid{stack}){}\<[E]%
\\
\>[3]{}\mathrel{=}\mbox{\commentbegin ~  definition of \ensuremath{\Varid{h_{State}}}   \commentend}{}\<[E]%
\\
\>[3]{}\hsindent{3}{}\<[6]%
\>[6]{}\Varid{run_{StateT}}\;(\Conid{StateT}\;(\lambda \Varid{s}\to \Varid{run_{StateT}}\;(\Varid{h_{State}}\;\Varid{p})\;(\Conid{SS}\;(\Varid{xs}+\!\!+[\mskip1.5mu \Varid{x}\mskip1.5mu])\;\Varid{stack})))\;(\Conid{SS}\;\Varid{xs}\;\Varid{stack}){}\<[E]%
\\
\>[3]{}\mathrel{=}\mbox{\commentbegin ~  definition of \ensuremath{\Varid{run_{StateT}}}   \commentend}{}\<[E]%
\\
\>[3]{}\hsindent{3}{}\<[6]%
\>[6]{}(\lambda \Varid{s}\to \Varid{run_{StateT}}\;(\Varid{h_{State}}\;\Varid{p})\;(\Conid{SS}\;(\Varid{xs}+\!\!+[\mskip1.5mu \Varid{x}\mskip1.5mu])\;\Varid{stack}))\;(\Conid{SS}\;\Varid{xs}\;\Varid{stack}){}\<[E]%
\\
\>[3]{}\mathrel{=}\mbox{\commentbegin ~  function application   \commentend}{}\<[E]%
\\
\>[3]{}\hsindent{3}{}\<[6]%
\>[6]{}\Varid{run_{StateT}}\;(\Varid{h_{State}}\;\Varid{p})\;(\Conid{SS}\;(\Varid{xs}+\!\!+[\mskip1.5mu \Varid{x}\mskip1.5mu])\;\Varid{stack}){}\<[E]%
\ColumnHook
\end{hscode}\resethooks
\indentend \end{proof}

\begin{lemma}[evaluation-pop1-ext]\label{eq:eval-pop1-f}~\indentbegin \begin{hscode}\SaveRestoreHook
\column{B}{@{}>{\hspre}l<{\hspost}@{}}%
\column{3}{@{}>{\hspre}l<{\hspost}@{}}%
\column{E}{@{}>{\hspre}l<{\hspost}@{}}%
\>[3]{}\Varid{run_{StateT}}\;(\Varid{h_{State}}\;\Varid{pop_{SS}})\;(\Conid{SS}\;\Varid{xs}\;[\mskip1.5mu \mskip1.5mu])\mathrel{=}\Varid{\eta}\;((),\Conid{SS}\;\Varid{xs}\;[\mskip1.5mu \mskip1.5mu]){}\<[E]%
\ColumnHook
\end{hscode}\resethooks
\indentend \end{lemma}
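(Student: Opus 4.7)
The plan is to prove Lemma \ref{eq:eval-pop1-f} by direct calculation, following the same pattern as the non-modular version Lemma \ref{eq:eval-pop1} proved earlier. Since both sides are concrete terms involving no unknown subcomputations, this is a straightforward unfolding exercise rather than an inductive proof.

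First, I would unfold the definition of \ensuremath{\Varid{pop_{SS}}} (from Figure \ref{fig:pop-ss}), which desugars the \ensuremath{\mathbf{do}}-notation into \ensuremath{\Varid{get}>\!\!>\!\!=\lambda (\Conid{SS}\;\Varid{xs}\;\Varid{stack})\to \mathbf{case}\;\Varid{stack}\;\mathbf{of}\;\{\mskip1.5mu [\mskip1.5mu \mskip1.5mu]\to \Varid{\eta}\;();\Varid{p}\mathbin{:}\Varid{ps}\to \Varid{put}\;(\Conid{SS}\;\Varid{xs}\;\Varid{ps})>\!\!>\Varid{p}\mskip1.5mu\}}. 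Then I would unfold \ensuremath{\Varid{get}} to \ensuremath{\Conid{Op}\;(\Conid{Inl}\;(\Conid{Get}\;\Varid{\eta}))} (using the composite \ensuremath{\Conid{MState}} instance of Section \ref{sec:combining-effects}), and apply the definition of \ensuremath{(>\!\!>\!\!=)} for the free monad together with the return-bind monad law (\ref{eq:monad-ret-bind}) to push the continuation inside the \ensuremath{\Conid{Get}} node.

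Next, I would apply the \ensuremath{\Conid{Get}} case of \ensuremath{\Varid{h_{State}}} (its \ensuremath{\Varid{alg_{S}}} clause) to obtain a \ensuremath{\Conid{StateT}} wrapper that reads the state, and then unfold \ensuremath{\Varid{run_{StateT}}} to eliminate the \ensuremath{\Conid{StateT}} newtype applied to the initial state \ensuremath{(\Conid{SS}\;\Varid{xs}\;[\mskip1.5mu \mskip1.5mu])}. After function application, the lambda receives \ensuremath{(\Conid{SS}\;\Varid{xs}\;[\mskip1.5mu \mskip1.5mu])} and the \ensuremath{\mathbf{case}} analysis on \ensuremath{[\mskip1.5mu \mskip1.5mu]} selects the first branch, reducing the whole expression to \ensuremath{\Varid{run_{StateT}}\;(\Varid{h_{State}}\;(\Varid{\eta}\;()))\;(\Conid{SS}\;\Varid{xs}\;[\mskip1.5mu \mskip1.5mu])}. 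Finally, the generator \ensuremath{\Varid{gen_{S}}} of \ensuremath{\Varid{h_{State}}} turns \ensuremath{\Varid{\eta}\;()\mathrel{=}\Conid{Var}\;()} into \ensuremath{\Conid{StateT}\mathbin{\$}\lambda \Varid{s}\to \Varid{\eta}\;((),\Varid{s})}, which after one more \ensuremath{\Varid{run_{StateT}}} unfolding yields \ensuremath{\Varid{\eta}\;((),\Conid{SS}\;\Varid{xs}\;[\mskip1.5mu \mskip1.5mu])}, as required.

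There is no real obstacle here: the proof is essentially a mechanical replay of the \ensuremath{\Varid{pop_S}} proof from Lemma \ref{eq:eval-pop1} with \ensuremath{\Conid{S}} replaced by \ensuremath{\Conid{SS}}, \ensuremath{\Varid{h_{State}^\prime}} by \ensuremath{\Varid{h_{State}}}, and \ensuremath{\Varid{run_{State}}} by \ensuremath{\Varid{run_{StateT}}}. The only cosmetic difference is the extra \ensuremath{\Conid{Inl}} injection coming from the modular \ensuremath{\Conid{MState}} instance, which is absorbed by the mediator \ensuremath{(\mathbin{\#})} in \ensuremath{\Varid{h_{State}}} and therefore contributes no new proof obligation. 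The result is used in Lemma \ref{eq:pop-extract-f} and ultimately in Theorem \ref{thm:nondet-state}.
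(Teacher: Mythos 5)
Your proposal is correct and follows essentially the same calculation as the paper's proof: unfold \ensuremath{\Varid{pop_{SS}}} and \ensuremath{\Varid{get}}, push the continuation into the \ensuremath{\Conid{Get}} node via \textbf{return-bind}, apply the \ensuremath{\Conid{Get}} clause of \ensuremath{\Varid{h_{State}}}, run the \ensuremath{\Conid{StateT}} wrapper on \ensuremath{\Conid{SS}\;\Varid{xs}\;[\mskip1.5mu \mskip1.5mu]}, select the empty-stack branch, and evaluate \ensuremath{\Varid{h_{State}}\;(\Varid{\eta}\;())}. Your observation that the modular \ensuremath{\Conid{Inl}} injection is handled by the mediator and adds no new obligation is also accurate.
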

\begin{proof}
\indentbegin \begin{hscode}\SaveRestoreHook
\column{B}{@{}>{\hspre}l<{\hspost}@{}}%
\column{3}{@{}>{\hspre}l<{\hspost}@{}}%
\column{6}{@{}>{\hspre}l<{\hspost}@{}}%
\column{8}{@{}>{\hspre}l<{\hspost}@{}}%
\column{23}{@{}>{\hspre}l<{\hspost}@{}}%
\column{32}{@{}>{\hspre}l<{\hspost}@{}}%
\column{E}{@{}>{\hspre}l<{\hspost}@{}}%
\>[6]{}\Varid{run_{StateT}}\;(\Varid{h_{State}}\;\Varid{pop_{SS}})\;(\Conid{SS}\;\Varid{xs}\;[\mskip1.5mu \mskip1.5mu]){}\<[E]%
\\
\>[3]{}\mathrel{=}\mbox{\commentbegin ~  definition of \ensuremath{\Varid{pop_{SS}}}   \commentend}{}\<[E]%
\\
\>[3]{}\hsindent{3}{}\<[6]%
\>[6]{}\Varid{run_{StateT}}\;(\Varid{h_{State}}\;(\Varid{get}>\!\!>\!\!=\lambda (\Conid{SS}\;\Varid{xs}\;\Varid{stack})\to {}\<[E]%
\\
\>[6]{}\hsindent{2}{}\<[8]%
\>[8]{}\mathbf{case}\;\Varid{stack}\;\mathbf{of}\;{}\<[23]%
\>[23]{}[\mskip1.5mu \mskip1.5mu]{}\<[32]%
\>[32]{}\to \Varid{\eta}\;(){}\<[E]%
\\
\>[23]{}\Varid{op}\mathbin{:}\Varid{ps}{}\<[32]%
\>[32]{}\to \mathbf{do}\;\Varid{put}\;(\Conid{SS}\;\Varid{xs}\;\Varid{ps});\Varid{op}))\;(\Conid{SS}\;\Varid{xs}\;[\mskip1.5mu \mskip1.5mu]){}\<[E]%
\\
\>[3]{}\mathrel{=}\mbox{\commentbegin ~  definition of \ensuremath{\Varid{get}}   \commentend}{}\<[E]%
\\
\>[3]{}\hsindent{3}{}\<[6]%
\>[6]{}\Varid{run_{StateT}}\;(\Varid{h_{State}}\;(\Conid{Op}\;(\Conid{Inl}\;(\Conid{Get}\;\Varid{\eta}))>\!\!>\!\!=\lambda (\Conid{SS}\;\Varid{xs}\;\Varid{stack})\to {}\<[E]%
\\
\>[6]{}\hsindent{2}{}\<[8]%
\>[8]{}\mathbf{case}\;\Varid{stack}\;\mathbf{of}\;{}\<[23]%
\>[23]{}[\mskip1.5mu \mskip1.5mu]{}\<[32]%
\>[32]{}\to \Varid{\eta}\;(){}\<[E]%
\\
\>[23]{}\Varid{op}\mathbin{:}\Varid{ps}{}\<[32]%
\>[32]{}\to \mathbf{do}\;\Varid{put}\;(\Conid{SS}\;\Varid{xs}\;\Varid{ps});\Varid{op}))\;(\Conid{SS}\;\Varid{xs}\;[\mskip1.5mu \mskip1.5mu]){}\<[E]%
\\
\>[3]{}\mathrel{=}\mbox{\commentbegin ~  definition of \ensuremath{(>\!\!>\!\!=)} for free monad and Law \ref{eq:monad-ret-bind}: return-bind   \commentend}{}\<[E]%
\\
\>[3]{}\hsindent{3}{}\<[6]%
\>[6]{}\Varid{run_{StateT}}\;(\Varid{h_{State}}\;(\Conid{Op}\;(\Conid{Inl}\;(\Conid{Get}\;(\lambda (\Conid{SS}\;\Varid{xs}\;\Varid{stack})\to {}\<[E]%
\\
\>[6]{}\hsindent{2}{}\<[8]%
\>[8]{}\mathbf{case}\;\Varid{stack}\;\mathbf{of}\;{}\<[23]%
\>[23]{}[\mskip1.5mu \mskip1.5mu]{}\<[32]%
\>[32]{}\to \Varid{\eta}\;(){}\<[E]%
\\
\>[23]{}\Varid{op}\mathbin{:}\Varid{ps}{}\<[32]%
\>[32]{}\to \mathbf{do}\;\Varid{put}\;(\Conid{SS}\;\Varid{xs}\;\Varid{ps});\Varid{op})))))\;(\Conid{SS}\;\Varid{xs}\;[\mskip1.5mu \mskip1.5mu]){}\<[E]%
\\
\>[3]{}\mathrel{=}\mbox{\commentbegin ~  definition of \ensuremath{\Varid{h_{State}}}   \commentend}{}\<[E]%
\\
\>[3]{}\hsindent{3}{}\<[6]%
\>[6]{}\Varid{run_{StateT}}\;(\Conid{StateT}\;(\lambda \Varid{s}\to \Varid{run_{StateT}}\;(\Varid{h_{State}}\;((\lambda (\Conid{SS}\;\Varid{xs}\;\Varid{stack})\to {}\<[E]%
\\
\>[6]{}\hsindent{2}{}\<[8]%
\>[8]{}\mathbf{case}\;\Varid{stack}\;\mathbf{of}\;{}\<[23]%
\>[23]{}[\mskip1.5mu \mskip1.5mu]{}\<[32]%
\>[32]{}\to \Varid{\eta}\;(){}\<[E]%
\\
\>[23]{}\Varid{op}\mathbin{:}\Varid{ps}{}\<[32]%
\>[32]{}\to \mathbf{do}\;\Varid{put}\;(\Conid{SS}\;\Varid{xs}\;\Varid{ps});\Varid{op})\;\Varid{s}))\;\Varid{s}))\;(\Conid{SS}\;\Varid{xs}\;[\mskip1.5mu \mskip1.5mu]){}\<[E]%
\\
\>[3]{}\mathrel{=}\mbox{\commentbegin ~  definition of \ensuremath{\Varid{run_{StateT}}}   \commentend}{}\<[E]%
\\
\>[3]{}\hsindent{3}{}\<[6]%
\>[6]{}(\lambda \Varid{s}\to \Varid{run_{StateT}}\;(\Varid{h_{State}}\;((\lambda (\Conid{SS}\;\Varid{xs}\;\Varid{stack})\to {}\<[E]%
\\
\>[6]{}\hsindent{2}{}\<[8]%
\>[8]{}\mathbf{case}\;\Varid{stack}\;\mathbf{of}\;{}\<[23]%
\>[23]{}[\mskip1.5mu \mskip1.5mu]{}\<[32]%
\>[32]{}\to \Varid{\eta}\;(){}\<[E]%
\\
\>[23]{}\Varid{op}\mathbin{:}\Varid{ps}{}\<[32]%
\>[32]{}\to \mathbf{do}\;\Varid{put}\;(\Conid{SS}\;\Varid{xs}\;\Varid{ps});\Varid{op})\;\Varid{s}))\;\Varid{s})\;(\Conid{SS}\;\Varid{xs}\;[\mskip1.5mu \mskip1.5mu]){}\<[E]%
\\
\>[3]{}\mathrel{=}\mbox{\commentbegin ~  function application   \commentend}{}\<[E]%
\\
\>[3]{}\hsindent{3}{}\<[6]%
\>[6]{}\Varid{run_{StateT}}\;(\Varid{h_{State}}\;((\lambda (\Conid{SS}\;\Varid{xs}\;\Varid{stack})\to {}\<[E]%
\\
\>[6]{}\hsindent{2}{}\<[8]%
\>[8]{}\mathbf{case}\;\Varid{stack}\;\mathbf{of}\;{}\<[23]%
\>[23]{}[\mskip1.5mu \mskip1.5mu]{}\<[32]%
\>[32]{}\to \Varid{\eta}\;(){}\<[E]%
\\
\>[23]{}\Varid{op}\mathbin{:}\Varid{ps}{}\<[32]%
\>[32]{}\to \mathbf{do}\;\Varid{put}\;(\Conid{SS}\;\Varid{xs}\;\Varid{ps});\Varid{op})\;(\Conid{SS}\;\Varid{xs}\;[\mskip1.5mu \mskip1.5mu])))\;(\Conid{SS}\;\Varid{xs}\;[\mskip1.5mu \mskip1.5mu]){}\<[E]%
\\
\>[3]{}\mathrel{=}\mbox{\commentbegin ~  function application, \ensuremath{\mathbf{case}}-analysis   \commentend}{}\<[E]%
\\
\>[3]{}\hsindent{3}{}\<[6]%
\>[6]{}\Varid{run_{StateT}}\;(\Varid{h_{State}}\;(\Varid{\eta}\;()))\;(\Conid{SS}\;\Varid{xs}\;[\mskip1.5mu \mskip1.5mu]){}\<[E]%
\\
\>[3]{}\mathrel{=}\mbox{\commentbegin ~  definition of \ensuremath{\Varid{h_{State}}}   \commentend}{}\<[E]%
\\
\>[3]{}\hsindent{3}{}\<[6]%
\>[6]{}\Varid{run_{StateT}}\;(\Conid{StateT}\;(\lambda \Varid{s}\to \Varid{\eta}\;((),\Varid{s})))\;(\Conid{SS}\;\Varid{xs}\;[\mskip1.5mu \mskip1.5mu]){}\<[E]%
\\
\>[3]{}\mathrel{=}\mbox{\commentbegin ~  definition of \ensuremath{\Varid{run_{StateT}}}   \commentend}{}\<[E]%
\\
\>[3]{}\hsindent{3}{}\<[6]%
\>[6]{}(\lambda \Varid{s}\to \Varid{\eta}\;((),\Varid{s}))\;(\Conid{SS}\;\Varid{xs}\;[\mskip1.5mu \mskip1.5mu]){}\<[E]%
\\
\>[3]{}\mathrel{=}\mbox{\commentbegin ~  function application   \commentend}{}\<[E]%
\\
\>[3]{}\hsindent{3}{}\<[6]%
\>[6]{}((),\Conid{SS}\;\Varid{xs}\;[\mskip1.5mu \mskip1.5mu]){}\<[E]%
\ColumnHook
\end{hscode}\resethooks
\indentend \end{proof}

\begin{lemma}[evaluation-pop2-ext]\label{eq:eval-pop2-f}~\indentbegin \begin{hscode}\SaveRestoreHook
\column{B}{@{}>{\hspre}l<{\hspost}@{}}%
\column{3}{@{}>{\hspre}l<{\hspost}@{}}%
\column{E}{@{}>{\hspre}l<{\hspost}@{}}%
\>[3]{}\Varid{run_{StateT}}\;(\Varid{h_{State}}\;\Varid{pop_{SS}})\;(\Conid{SS}\;\Varid{xs}\;(\Varid{q}\mathbin{:}\Varid{stack}))\mathrel{=}\Varid{run_{StateT}}\;(\Varid{h_{State}}\;\Varid{q})\;(\Conid{SS}\;\Varid{xs}\;\Varid{stack}){}\<[E]%
\ColumnHook
\end{hscode}\resethooks
\indentend \end{lemma}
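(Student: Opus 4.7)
}
The plan is to mirror the structure of the proof of Lemma~\ref{eq:eval-pop1-f} almost verbatim, differing only in which branch of the inner \ensuremath{\mathbf{case}}-analysis is selected once the state has been inspected. Concretely, I will start from the left-hand side \ensuremath{\Varid{run_{StateT}}\;(\Varid{h_{State}}\;\Varid{pop_{SS}})\;(\Conid{SS}\;\Varid{xs}\;(\Varid{q}\mathbin{:}\Varid{stack}))} and rewrite purely by unfolding definitions: first expand \ensuremath{\Varid{pop_{SS}}} into \ensuremath{\Varid{get}>\!\!>\!\!=\lambda (\Conid{SS}\;\Varid{xs'}\;\Varid{st})\to \mathbf{case}\;\Varid{st}\;\mathbf{of}\;\ldots}, then replace \ensuremath{\Varid{get}} with \ensuremath{\Conid{Op}\;(\Conid{Inl}\;(\Conid{Get}\;\Varid{\eta}))}, apply the definition of \ensuremath{(>\!\!>\!\!=)} on the free monad together with Law~(\ref{eq:monad-ret-bind}) to fuse the binding into the \ensuremath{\Conid{Get}} continuation, and finally unfold \ensuremath{\Varid{h_{State}}} and \ensuremath{\Varid{run_{StateT}}}. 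After a beta-reduction the state \ensuremath{(\Conid{SS}\;\Varid{xs}\;(\Varid{q}\mathbin{:}\Varid{stack}))} gets substituted into the scrutinee.

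At this point the \ensuremath{\mathbf{case}}-analysis fires on the \ensuremath{\Varid{op}\mathbin{:}\Varid{ps}} branch (rather than the \ensuremath{[\mskip1.5mu \mskip1.5mu]} branch used in Lemma~\ref{eq:eval-pop1-f}), reducing to \ensuremath{\Varid{run_{StateT}}\;(\Varid{h_{State}}\;(\Varid{put}\;(\Conid{SS}\;\Varid{xs}\;\Varid{stack})>\!\!>\Varid{q}))\;(\Conid{SS}\;\Varid{xs}\;(\Varid{q}\mathbin{:}\Varid{stack}))}. From here I expand \ensuremath{\Varid{put}} to \ensuremath{\Conid{Op}\;(\Conid{Inl}\;(\Conid{Put}\;(\Conid{SS}\;\Varid{xs}\;\Varid{stack})\;(\Varid{\eta}\;())))}, absorb the sequencing via Law~(\ref{eq:monad-ret-bind}) exactly as in the \ensuremath{\Varid{put}} step of the previous lemma, unfold \ensuremath{\Varid{h_{State}}}'s treatment of \ensuremath{\Conid{Put}} (which discards the incoming state and continues with the new state \ensuremath{(\Conid{SS}\;\Varid{xs}\;\Varid{stack})}), unfold \ensuremath{\Varid{run_{StateT}}}, and beta-reduce. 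The incoming state \ensuremath{(\Conid{SS}\;\Varid{xs}\;(\Varid{q}\mathbin{:}\Varid{stack}))} is thrown away and we are left with \ensuremath{\Varid{run_{StateT}}\;(\Varid{h_{State}}\;\Varid{q})\;(\Conid{SS}\;\Varid{xs}\;\Varid{stack})}, which is the right-hand side.

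No induction is needed: the equation holds by a direct chain of definitional unfoldings and monad-law rewrites, and every step has a direct analogue in the preceding proof of Lemma~\ref{eq:eval-pop1-f}. I do not anticipate a genuine obstacle; the only bookkeeping care is to keep the two components of the \ensuremath{\Conid{SS}} record and the two layers of \ensuremath{\Conid{StateT}} vs.\ free-monad syntax clearly separated while shuffling between \ensuremath{\Varid{run_{StateT}}}, \ensuremath{\Varid{h_{State}}}, and \ensuremath{\Conid{StateT}}-wrapped lambdas. In particular, the fact that \ensuremath{\Varid{put}} overwrites the state in the \ensuremath{\Conid{Put}} clause of \ensuremath{\Varid{alg_{S}}} is precisely what makes the old stack-entry \ensuremath{\Varid{q}\mathbin{:}\Varid{stack}} disappear, leaving the intended \ensuremath{\Varid{stack}} in its place.
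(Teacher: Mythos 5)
Your proposal is correct and follows essentially the same route as the paper's proof: a direct chain of definitional unfoldings of \ensuremath{\Varid{pop_{SS}}}, \ensuremath{\Varid{get}}, \ensuremath{(>\!\!>\!\!=)}, \ensuremath{\Varid{h_{State}}}, and \ensuremath{\Varid{run_{StateT}}}, with the case analysis selecting the \ensuremath{\Varid{op}\mathbin{:}\Varid{ps}} branch and the \ensuremath{\Conid{Put}} clause discarding the old state to leave \ensuremath{\Varid{run_{StateT}}\;(\Varid{h_{State}}\;\Varid{q})\;(\Conid{SS}\;\Varid{xs}\;\Varid{stack})}. No gaps.
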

\begin{proof}~\indentbegin \begin{hscode}\SaveRestoreHook
\column{B}{@{}>{\hspre}l<{\hspost}@{}}%
\column{3}{@{}>{\hspre}l<{\hspost}@{}}%
\column{6}{@{}>{\hspre}l<{\hspost}@{}}%
\column{8}{@{}>{\hspre}l<{\hspost}@{}}%
\column{23}{@{}>{\hspre}l<{\hspost}@{}}%
\column{32}{@{}>{\hspre}l<{\hspost}@{}}%
\column{E}{@{}>{\hspre}l<{\hspost}@{}}%
\>[6]{}\Varid{run_{StateT}}\;(\Varid{h_{State}}\;\Varid{pop_{SS}})\;(\Conid{SS}\;\Varid{xs}\;(\Varid{q}\mathbin{:}\Varid{stack})){}\<[E]%
\\
\>[3]{}\mathrel{=}\mbox{\commentbegin ~  definition of \ensuremath{\Varid{pop_{SS}}}   \commentend}{}\<[E]%
\\
\>[3]{}\hsindent{3}{}\<[6]%
\>[6]{}\Varid{run_{StateT}}\;(\Varid{h_{State}}\;(\Varid{get}>\!\!>\!\!=\lambda (\Conid{SS}\;\Varid{xs}\;\Varid{stack})\to {}\<[E]%
\\
\>[6]{}\hsindent{2}{}\<[8]%
\>[8]{}\mathbf{case}\;\Varid{stack}\;\mathbf{of}\;{}\<[23]%
\>[23]{}[\mskip1.5mu \mskip1.5mu]{}\<[32]%
\>[32]{}\to \Varid{\eta}\;(){}\<[E]%
\\
\>[23]{}\Varid{op}\mathbin{:}\Varid{ps}{}\<[32]%
\>[32]{}\to \mathbf{do}\;\Varid{put}\;(\Conid{SS}\;\Varid{xs}\;\Varid{ps});\Varid{op}))\;(\Conid{SS}\;\Varid{xs}\;(\Varid{q}\mathbin{:}\Varid{stack})){}\<[E]%
\\
\>[3]{}\mathrel{=}\mbox{\commentbegin ~  definition of \ensuremath{\Varid{get}}   \commentend}{}\<[E]%
\\
\>[3]{}\hsindent{3}{}\<[6]%
\>[6]{}\Varid{run_{StateT}}\;(\Varid{h_{State}}\;(\Conid{Op}\;(\Conid{Inl}\;(\Conid{Get}\;\Varid{\eta}))>\!\!>\!\!=\lambda (\Conid{SS}\;\Varid{xs}\;\Varid{stack})\to {}\<[E]%
\\
\>[6]{}\hsindent{2}{}\<[8]%
\>[8]{}\mathbf{case}\;\Varid{stack}\;\mathbf{of}\;{}\<[23]%
\>[23]{}[\mskip1.5mu \mskip1.5mu]{}\<[32]%
\>[32]{}\to \Varid{\eta}\;(){}\<[E]%
\\
\>[23]{}\Varid{op}\mathbin{:}\Varid{ps}{}\<[32]%
\>[32]{}\to \mathbf{do}\;\Varid{put}\;(\Conid{SS}\;\Varid{xs}\;\Varid{ps});\Varid{op}))\;(\Conid{SS}\;\Varid{xs}\;(\Varid{q}\mathbin{:}\Varid{stack})){}\<[E]%
\\
\>[3]{}\mathrel{=}\mbox{\commentbegin ~  definition of \ensuremath{(>\!\!>\!\!=)} for free monad and Law \ref{eq:monad-ret-bind}: return-bind   \commentend}{}\<[E]%
\\
\>[3]{}\hsindent{3}{}\<[6]%
\>[6]{}\Varid{run_{StateT}}\;(\Varid{h_{State}}\;(\Conid{Op}\;(\Conid{Inl}\;(\Conid{Get}\;(\lambda (\Conid{SS}\;\Varid{xs}\;\Varid{stack})\to {}\<[E]%
\\
\>[6]{}\hsindent{2}{}\<[8]%
\>[8]{}\mathbf{case}\;\Varid{stack}\;\mathbf{of}\;{}\<[23]%
\>[23]{}[\mskip1.5mu \mskip1.5mu]{}\<[32]%
\>[32]{}\to \Varid{\eta}\;(){}\<[E]%
\\
\>[23]{}\Varid{op}\mathbin{:}\Varid{ps}{}\<[32]%
\>[32]{}\to \mathbf{do}\;\Varid{put}\;(\Conid{SS}\;\Varid{xs}\;\Varid{ps});\Varid{op})))))\;(\Conid{SS}\;\Varid{xs}\;(\Varid{q}\mathbin{:}\Varid{stack})){}\<[E]%
\\
\>[3]{}\mathrel{=}\mbox{\commentbegin ~  definition of \ensuremath{\Varid{h_{State}}}   \commentend}{}\<[E]%
\\
\>[3]{}\hsindent{3}{}\<[6]%
\>[6]{}\Varid{run_{StateT}}\;(\Conid{StateT}\;(\lambda \Varid{s}\to \Varid{run_{StateT}}\;(\Varid{h_{State}}\;((\lambda (\Conid{SS}\;\Varid{xs}\;\Varid{stack})\to {}\<[E]%
\\
\>[6]{}\hsindent{2}{}\<[8]%
\>[8]{}\mathbf{case}\;\Varid{stack}\;\mathbf{of}\;{}\<[23]%
\>[23]{}[\mskip1.5mu \mskip1.5mu]{}\<[32]%
\>[32]{}\to \Varid{\eta}\;(){}\<[E]%
\\
\>[23]{}\Varid{op}\mathbin{:}\Varid{ps}{}\<[32]%
\>[32]{}\to \mathbf{do}\;\Varid{put}\;(\Conid{SS}\;\Varid{xs}\;\Varid{ps});\Varid{op})\;\Varid{s}))\;\Varid{s}))\;(\Conid{SS}\;\Varid{xs}\;(\Varid{q}\mathbin{:}\Varid{stack})){}\<[E]%
\\
\>[3]{}\mathrel{=}\mbox{\commentbegin ~  definition of \ensuremath{\Varid{run_{StateT}}}   \commentend}{}\<[E]%
\\
\>[3]{}\hsindent{3}{}\<[6]%
\>[6]{}(\lambda \Varid{s}\to \Varid{run_{StateT}}\;(\Varid{h_{State}}\;((\lambda (\Conid{SS}\;\Varid{xs}\;\Varid{stack})\to {}\<[E]%
\\
\>[6]{}\hsindent{2}{}\<[8]%
\>[8]{}\mathbf{case}\;\Varid{stack}\;\mathbf{of}\;{}\<[23]%
\>[23]{}[\mskip1.5mu \mskip1.5mu]{}\<[32]%
\>[32]{}\to \Varid{\eta}\;(){}\<[E]%
\\
\>[23]{}\Varid{op}\mathbin{:}\Varid{ps}{}\<[32]%
\>[32]{}\to \mathbf{do}\;\Varid{put}\;(\Conid{SS}\;\Varid{xs}\;\Varid{ps});\Varid{op})\;\Varid{s}))\;\Varid{s})\;(\Conid{SS}\;\Varid{xs}\;(\Varid{q}\mathbin{:}\Varid{stack})){}\<[E]%
\\
\>[3]{}\mathrel{=}\mbox{\commentbegin ~  function application   \commentend}{}\<[E]%
\\
\>[3]{}\hsindent{3}{}\<[6]%
\>[6]{}\Varid{run_{StateT}}\;(\Varid{h_{State}}\;((\lambda (\Conid{SS}\;\Varid{xs}\;\Varid{stack})\to {}\<[E]%
\\
\>[6]{}\hsindent{2}{}\<[8]%
\>[8]{}\mathbf{case}\;\Varid{stack}\;\mathbf{of}\;{}\<[23]%
\>[23]{}[\mskip1.5mu \mskip1.5mu]{}\<[32]%
\>[32]{}\to \Varid{\eta}\;(){}\<[E]%
\\
\>[23]{}\Varid{op}\mathbin{:}\Varid{ps}{}\<[32]%
\>[32]{}\to \mathbf{do}\;\Varid{put}\;(\Conid{SS}\;\Varid{xs}\;\Varid{ps});\Varid{op})\;(\Conid{SS}\;\Varid{xs}\;(\Varid{q}\mathbin{:}\Varid{stack}))))\;(\Conid{SS}\;\Varid{xs}\;(\Varid{q}\mathbin{:}\Varid{stack})){}\<[E]%
\\
\>[3]{}\mathrel{=}\mbox{\commentbegin ~  function application, \ensuremath{\mathbf{case}}-analysis   \commentend}{}\<[E]%
\ColumnHook
\end{hscode}\resethooks
\indentend %
\indentbegin \begin{hscode}\SaveRestoreHook
\column{B}{@{}>{\hspre}l<{\hspost}@{}}%
\column{3}{@{}>{\hspre}l<{\hspost}@{}}%
\column{6}{@{}>{\hspre}l<{\hspost}@{}}%
\column{E}{@{}>{\hspre}l<{\hspost}@{}}%
\>[6]{}\Varid{run_{StateT}}\;(\Varid{h_{State}}\;(\Varid{put}\;(\Conid{SS}\;\Varid{xs}\;\Varid{stack})>\!\!>\Varid{q}))\;(\Conid{SS}\;\Varid{xs}\;(\Varid{q}\mathbin{:}\Varid{stack})){}\<[E]%
\\
\>[3]{}\mathrel{=}\mbox{\commentbegin ~  definition of \ensuremath{\Varid{put}}   \commentend}{}\<[E]%
\\
\>[3]{}\hsindent{3}{}\<[6]%
\>[6]{}\Varid{run_{StateT}}\;(\Varid{h_{State}}\;(\Conid{Op}\;(\Conid{Inl}\;(\Conid{Put}\;(\Conid{SS}\;\Varid{xs}\;\Varid{stack})\;(\Varid{\eta}\;())))>\!\!>\Varid{q}))\;(\Conid{SS}\;\Varid{xs}\;(\Varid{q}\mathbin{:}\Varid{stack})){}\<[E]%
\\
\>[3]{}\mathrel{=}\mbox{\commentbegin ~  definition of \ensuremath{(>\!\!>)} for free monad and Law \ref{eq:monad-ret-bind}: return-bind   \commentend}{}\<[E]%
\\
\>[3]{}\hsindent{3}{}\<[6]%
\>[6]{}\Varid{run_{StateT}}\;(\Varid{h_{State}}\;(\Conid{Op}\;(\Conid{Inl}\;(\Conid{Put}\;(\Conid{SS}\;\Varid{xs}\;\Varid{stack})\;\Varid{q}))))\;(\Conid{SS}\;\Varid{xs}\;(\Varid{q}\mathbin{:}\Varid{stack})){}\<[E]%
\\
\>[3]{}\mathrel{=}\mbox{\commentbegin ~  definition of \ensuremath{\Varid{h_{State}}}   \commentend}{}\<[E]%
\\
\>[3]{}\hsindent{3}{}\<[6]%
\>[6]{}\Varid{run_{StateT}}\;(\Conid{StateT}\;(\lambda \Varid{s}\to \Varid{run_{StateT}}\;(\Varid{h_{State}}\;\Varid{q})\;(\Conid{SS}\;\Varid{xs}\;\Varid{stack})))\;(\Conid{SS}\;\Varid{xs}\;(\Varid{q}\mathbin{:}\Varid{stack})){}\<[E]%
\\
\>[3]{}\mathrel{=}\mbox{\commentbegin ~  definition of \ensuremath{\Varid{run_{StateT}}}   \commentend}{}\<[E]%
\\
\>[3]{}\hsindent{3}{}\<[6]%
\>[6]{}(\lambda \Varid{s}\to \Varid{run_{StateT}}\;(\Varid{h_{State}}\;\Varid{q})\;(\Conid{SS}\;\Varid{xs}\;\Varid{stack}))\;(\Conid{SS}\;\Varid{xs}\;(\Varid{q}\mathbin{:}\Varid{stack})){}\<[E]%
\\
\>[3]{}\mathrel{=}\mbox{\commentbegin ~  function application   \commentend}{}\<[E]%
\\
\>[3]{}\hsindent{3}{}\<[6]%
\>[6]{}\Varid{run_{StateT}}\;(\Varid{h_{State}}\;\Varid{q})\;(\Conid{SS}\;\Varid{xs}\;\Varid{stack}){}\<[E]%
\ColumnHook
\end{hscode}\resethooks
\indentend \end{proof}

\begin{lemma}[evaluation-push-ext]\label{eq:eval-push-f}~\indentbegin \begin{hscode}\SaveRestoreHook
\column{B}{@{}>{\hspre}l<{\hspost}@{}}%
\column{3}{@{}>{\hspre}l<{\hspost}@{}}%
\column{E}{@{}>{\hspre}l<{\hspost}@{}}%
\>[3]{}\Varid{run_{StateT}}\;(\Varid{h_{State}}\;(\Varid{push_{SS}}\;\Varid{q}\;\Varid{p}))\;(\Conid{SS}\;\Varid{xs}\;\Varid{stack})\mathrel{=}\Varid{run_{StateT}}\;(\Varid{h_{State}}\;\Varid{p})\;(\Conid{SS}\;\Varid{xs}\;(\Varid{q}\mathbin{:}\Varid{stack})){}\<[E]%
\ColumnHook
\end{hscode}\resethooks
\indentend \end{lemma}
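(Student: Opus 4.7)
The plan is to prove this lemma by essentially copying the structure of the preceding proof of \Cref{eq:eval-append-f}, since \ensuremath{\Varid{push_{SS}}} and \ensuremath{\Varid{append_{SS}}} differ only in which field of the state record they modify (the \ensuremath{\Varid{stack}} field versus the \ensuremath{\Varid{results_{SS}}} field). The argument is a direct calculational unfolding; there is no induction or case analysis needed, and no fusion reasoning is required.

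The key steps, in order, are: first unfold the definition of \ensuremath{\Varid{push_{SS}}\;\Varid{q}\;\Varid{p}} into the \ensuremath{\mathbf{do}}-notation form \ensuremath{\Varid{get}>\!\!>\!\!=\lambda (\Conid{SS}\;\Varid{xs}\;\Varid{stack})\to \Varid{put}\;(\Conid{SS}\;\Varid{xs}\;(\Varid{q}\mathbin{:}\Varid{stack}))>\!\!>\Varid{p}}. Second, replace \ensuremath{\Varid{get}} by its underlying syntactic form \ensuremath{\Conid{Op}\;(\Conid{Inl}\;(\Conid{Get}\;\Varid{\eta}))} and use the \ensuremath{(>\!\!>\!\!=)} definition for the free monad together with the return-bind monad law (\ref{eq:monad-ret-bind}) to push the bind inside the \ensuremath{\Conid{Get}}, yielding \ensuremath{\Conid{Op}\;(\Conid{Inl}\;(\Conid{Get}\;(\lambda (\Conid{SS}\;\Varid{xs}\;\Varid{stack})\to \Varid{put}\;(\Conid{SS}\;\Varid{xs}\;(\Varid{q}\mathbin{:}\Varid{stack}))>\!\!>\Varid{p})))}. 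Third, apply \ensuremath{\Varid{h_{State}}}, which unwraps the \ensuremath{\Conid{Get}} algebra to a \ensuremath{\Conid{StateT}} that feeds the current state \ensuremath{(\Conid{SS}\;\Varid{xs}\;\Varid{stack})} into that continuation, and evaluate \ensuremath{\Varid{run_{StateT}}} at \ensuremath{(\Conid{SS}\;\Varid{xs}\;\Varid{stack})}. The function application reduces to \ensuremath{\Varid{run_{StateT}}\;(\Varid{h_{State}}\;(\Varid{put}\;(\Conid{SS}\;\Varid{xs}\;(\Varid{q}\mathbin{:}\Varid{stack}))>\!\!>\Varid{p}))\;(\Conid{SS}\;\Varid{xs}\;\Varid{stack})}.

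Fourth, perform the analogous unfolding for \ensuremath{\Varid{put}}: rewrite it as \ensuremath{\Conid{Op}\;(\Conid{Inl}\;(\Conid{Put}\;(\Conid{SS}\;\Varid{xs}\;(\Varid{q}\mathbin{:}\Varid{stack}))\;(\Varid{\eta}\;())))}, push the \ensuremath{(>\!\!>\Varid{p})} inside using return-bind, and then apply \ensuremath{\Varid{h_{State}}} whose \ensuremath{\Conid{Put}} algebra case replaces the ambient state by \ensuremath{(\Conid{SS}\;\Varid{xs}\;(\Varid{q}\mathbin{:}\Varid{stack}))} before continuing with \ensuremath{\Varid{p}}. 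After unfolding \ensuremath{\Varid{run_{StateT}}} once more, the outer state \ensuremath{(\Conid{SS}\;\Varid{xs}\;\Varid{stack})} is discarded and we are left with \ensuremath{\Varid{run_{StateT}}\;(\Varid{h_{State}}\;\Varid{p})\;(\Conid{SS}\;\Varid{xs}\;(\Varid{q}\mathbin{:}\Varid{stack}))}, which is the right-hand side.

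I do not expect any genuine obstacle here; the only thing to be careful about is variable shadowing in the lambda \ensuremath{\lambda (\Conid{SS}\;\Varid{xs}\;\Varid{stack})\to \ldots } where the bound \ensuremath{\Varid{xs}} and \ensuremath{\Varid{stack}} shadow the outer lemma variables, so I will alpha-rename them (for instance to \ensuremath{\Varid{xs'}} and \ensuremath{\Varid{stack'}}) before the function-application step to make it transparent that after substitution we obtain \ensuremath{(\Conid{SS}\;\Varid{xs}\;(\Varid{q}\mathbin{:}\Varid{stack}))} rather than some accidental capture. The whole derivation is a mechanical chain of equalities justified by the definitions of \ensuremath{\Varid{push_{SS}}}, \ensuremath{\Varid{get}}, \ensuremath{\Varid{put}}, \ensuremath{\Varid{h_{State}}}, \ensuremath{\Varid{run_{StateT}}}, and the monad laws (\ref{eq:monad-ret-bind}) and (\ref{eq:monad-assoc}).
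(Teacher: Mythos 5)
Your proposal is correct and follows essentially the same calculational route as the paper's own proof: unfold \ensuremath{\Varid{push_{SS}}}, desugar \ensuremath{\Varid{get}} and \ensuremath{\Varid{put}} into their free-monad operation forms, push the binds inside via return-bind, and evaluate \ensuremath{\Varid{h_{State}}} and \ensuremath{\Varid{run_{StateT}}} step by step until the \ensuremath{\Conid{Put}} case installs the new state \ensuremath{(\Conid{SS}\;\Varid{xs}\;(\Varid{q}\mathbin{:}\Varid{stack}))}. The alpha-renaming remark is a sensible precaution but otherwise the derivation matches the paper's.
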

\begin{proof}~\indentbegin \begin{hscode}\SaveRestoreHook
\column{B}{@{}>{\hspre}l<{\hspost}@{}}%
\column{3}{@{}>{\hspre}l<{\hspost}@{}}%
\column{6}{@{}>{\hspre}l<{\hspost}@{}}%
\column{E}{@{}>{\hspre}l<{\hspost}@{}}%
\>[6]{}\Varid{run_{StateT}}\;(\Varid{h_{State}}\;(\Varid{push_{SS}}\;\Varid{q}\;\Varid{p}))\;(\Conid{SS}\;\Varid{xs}\;\Varid{stack}){}\<[E]%
\\
\>[3]{}\mathrel{=}\mbox{\commentbegin ~  definition of \ensuremath{\Varid{push_{SS}}}   \commentend}{}\<[E]%
\ColumnHook
\end{hscode}\resethooks
\indentend %
\indentbegin \begin{hscode}\SaveRestoreHook
\column{B}{@{}>{\hspre}l<{\hspost}@{}}%
\column{3}{@{}>{\hspre}l<{\hspost}@{}}%
\column{6}{@{}>{\hspre}l<{\hspost}@{}}%
\column{8}{@{}>{\hspre}l<{\hspost}@{}}%
\column{E}{@{}>{\hspre}l<{\hspost}@{}}%
\>[6]{}\Varid{run_{StateT}}\;(\Varid{h_{State}}\;(\Varid{get}>\!\!>\!\!=\lambda (\Conid{SS}\;\Varid{xs}\;\Varid{stack})\to \Varid{put}\;(\Conid{SS}\;\Varid{xs}\;(\Varid{q}\mathbin{:}\Varid{stack}))>\!\!>\Varid{p}))\;(\Conid{SS}\;\Varid{xs}\;\Varid{stack}){}\<[E]%
\\
\>[3]{}\mathrel{=}\mbox{\commentbegin ~  definition of \ensuremath{\Varid{get}}   \commentend}{}\<[E]%
\\
\>[3]{}\hsindent{3}{}\<[6]%
\>[6]{}\Varid{run_{StateT}}\;(\Varid{h_{State}}\;(\Conid{Op}\;(\Conid{Inl}\;(\Conid{Get}\;\Varid{\eta}))>\!\!>\!\!=\lambda (\Conid{SS}\;\Varid{xs}\;\Varid{stack})\to \Varid{put}\;(\Conid{SS}\;\Varid{xs}\;(\Varid{q}\mathbin{:}\Varid{stack}))>\!\!>\Varid{p}))\;(\Conid{SS}\;\Varid{xs}\;\Varid{stack}){}\<[E]%
\\
\>[3]{}\mathrel{=}\mbox{\commentbegin ~  definition of \ensuremath{(>\!\!>\!\!=)} for free monad and Law \ref{eq:monad-ret-bind}: return-bind   \commentend}{}\<[E]%
\\
\>[3]{}\hsindent{3}{}\<[6]%
\>[6]{}\Varid{run_{StateT}}\;(\Varid{h_{State}}\;(\Conid{Op}\;(\Conid{Inl}\;(\Conid{Get}\;(\lambda (\Conid{SS}\;\Varid{xs}\;\Varid{stack})\to \Varid{put}\;(\Conid{SS}\;\Varid{xs}\;(\Varid{q}\mathbin{:}\Varid{stack}))>\!\!>\Varid{p})))))\;(\Conid{SS}\;\Varid{xs}\;\Varid{stack}){}\<[E]%
\\
\>[3]{}\mathrel{=}\mbox{\commentbegin ~  definition of \ensuremath{\Varid{h_{State}}}   \commentend}{}\<[E]%
\\
\>[3]{}\hsindent{3}{}\<[6]%
\>[6]{}\Varid{run_{StateT}}\;(\Conid{StateT}\;(\lambda \Varid{s}\to \Varid{run_{StateT}}\;(\Varid{h_{State}}\;((\lambda (\Conid{SS}\;\Varid{xs}\;\Varid{stack})\to \Varid{put}\;(\Conid{SS}\;\Varid{xs}\;(\Varid{q}\mathbin{:}\Varid{stack}))>\!\!>\Varid{p})\;\Varid{s}))\;\Varid{s}))\;{}\<[E]%
\\
\>[6]{}\hsindent{2}{}\<[8]%
\>[8]{}(\Conid{SS}\;\Varid{xs}\;\Varid{stack}){}\<[E]%
\\
\>[3]{}\mathrel{=}\mbox{\commentbegin ~  definition of \ensuremath{\Varid{run_{StateT}}}   \commentend}{}\<[E]%
\\
\>[3]{}\hsindent{3}{}\<[6]%
\>[6]{}(\lambda \Varid{s}\to \Varid{run_{StateT}}\;(\Varid{h_{State}}\;((\lambda (\Conid{SS}\;\Varid{xs}\;\Varid{stack})\to \Varid{put}\;(\Conid{SS}\;\Varid{xs}\;(\Varid{q}\mathbin{:}\Varid{stack}))>\!\!>\Varid{p})\;\Varid{s}))\;\Varid{s})\;(\Conid{SS}\;\Varid{xs}\;\Varid{stack}){}\<[E]%
\\
\>[3]{}\mathrel{=}\mbox{\commentbegin ~  function application   \commentend}{}\<[E]%
\\
\>[3]{}\hsindent{3}{}\<[6]%
\>[6]{}\Varid{run_{StateT}}\;(\Varid{h_{State}}\;((\lambda (\Conid{SS}\;\Varid{xs}\;\Varid{stack})\to \Varid{put}\;(\Conid{SS}\;\Varid{xs}\;(\Varid{q}\mathbin{:}\Varid{stack}))>\!\!>\Varid{p})\;(\Conid{SS}\;\Varid{xs}\;\Varid{stack})))\;(\Conid{SS}\;\Varid{xs}\;\Varid{stack}){}\<[E]%
\\
\>[3]{}\mathrel{=}\mbox{\commentbegin ~  function application   \commentend}{}\<[E]%
\\
\>[3]{}\hsindent{3}{}\<[6]%
\>[6]{}\Varid{run_{StateT}}\;(\Varid{h_{State}}\;(\Varid{put}\;(\Conid{SS}\;\Varid{xs}\;(\Varid{q}\mathbin{:}\Varid{stack}))>\!\!>\Varid{p}))\;(\Conid{SS}\;\Varid{xs}\;\Varid{stack}){}\<[E]%
\\
\>[3]{}\mathrel{=}\mbox{\commentbegin ~  definition of \ensuremath{\Varid{put}}   \commentend}{}\<[E]%
\\
\>[3]{}\hsindent{3}{}\<[6]%
\>[6]{}\Varid{run_{StateT}}\;(\Varid{h_{State}}\;(\Conid{Op}\;(\Conid{Inl}\;(\Conid{Put}\;(\Conid{SS}\;\Varid{xs}\;(\Varid{q}\mathbin{:}\Varid{stack}))\;(\Varid{\eta}\;())))>\!\!>\Varid{p}))\;(\Conid{SS}\;\Varid{xs}\;\Varid{stack}){}\<[E]%
\\
\>[3]{}\mathrel{=}\mbox{\commentbegin ~  definition of \ensuremath{(>\!\!>)} for free monad and Law \ref{eq:monad-ret-bind}: return-bind   \commentend}{}\<[E]%
\\
\>[3]{}\hsindent{3}{}\<[6]%
\>[6]{}\Varid{run_{StateT}}\;(\Varid{h_{State}}\;(\Conid{Op}\;(\Conid{Inl}\;(\Conid{Put}\;(\Conid{SS}\;\Varid{xs}\;(\Varid{q}\mathbin{:}\Varid{stack}))\;\Varid{p}))))\;(\Conid{SS}\;\Varid{xs}\;\Varid{stack}){}\<[E]%
\\
\>[3]{}\mathrel{=}\mbox{\commentbegin ~  definition of \ensuremath{\Varid{h_{State}}}   \commentend}{}\<[E]%
\\
\>[3]{}\hsindent{3}{}\<[6]%
\>[6]{}\Varid{run_{StateT}}\;(\Conid{StateT}\;(\lambda \Varid{s}\to \Varid{run_{StateT}}\;(\Varid{h_{State}}\;\Varid{p})\;(\Conid{SS}\;\Varid{xs}\;(\Varid{q}\mathbin{:}\Varid{stack}))))\;(\Conid{SS}\;\Varid{xs}\;\Varid{stack}){}\<[E]%
\\
\>[3]{}\mathrel{=}\mbox{\commentbegin ~  definition of \ensuremath{\Varid{run_{StateT}}}   \commentend}{}\<[E]%
\\
\>[3]{}\hsindent{3}{}\<[6]%
\>[6]{}(\lambda \Varid{s}\to \Varid{run_{StateT}}\;(\Varid{h_{State}}\;\Varid{p})\;(\Conid{SS}\;\Varid{xs}\;(\Varid{q}\mathbin{:}\Varid{stack})))\;(\Conid{SS}\;\Varid{xs}\;\Varid{stack}){}\<[E]%
\\
\>[3]{}\mathrel{=}\mbox{\commentbegin ~  function application   \commentend}{}\<[E]%
\\
\>[3]{}\hsindent{3}{}\<[6]%
\>[6]{}\Varid{run_{StateT}}\;(\Varid{h_{State}}\;\Varid{p})\;(\Conid{SS}\;\Varid{xs}\;(\Varid{q}\mathbin{:}\Varid{stack})){}\<[E]%
\ColumnHook
\end{hscode}\resethooks
\indentend \end{proof}

\section{Proofs for Modelling Two States with One State}
\label{app:states-state}

In this section, we prove the following theorem in
\Cref{sec:multiple-states}.

\statesState*

\begin{proof}
Instead of proving it directly, we show the correctness of the
isomorphism of \ensuremath{\Varid{nest}} and \ensuremath{\Varid{flatten}}, and prove the following equation:\indentbegin \begin{hscode}\SaveRestoreHook
\column{B}{@{}>{\hspre}l<{\hspost}@{}}%
\column{3}{@{}>{\hspre}l<{\hspost}@{}}%
\column{E}{@{}>{\hspre}l<{\hspost}@{}}%
\>[3]{}\Varid{flatten}\hsdot{\circ }{.}\Varid{h_{States}}\mathrel{=}\Varid{h_{State}}\hsdot{\circ }{.}\Varid{states2state}{}\<[E]%
\ColumnHook
\end{hscode}\resethooks
\indentend %
It is obvious that \ensuremath{\alpha} and \ensuremath{\alpha^{-1}} form an isomorphism, i.e.,
\ensuremath{\alpha\hsdot{\circ }{.}\alpha^{-1}\mathrel{=}\Varid{id}} and \ensuremath{\alpha^{-1}\hsdot{\circ }{.}\alpha\mathrel{=}\Varid{id}}. We show that \ensuremath{\Varid{nest}}
and \ensuremath{\Varid{flatten}} form an isomorphism by calculation.
For all \ensuremath{\Varid{t}\mathbin{::}\Conid{StateT}\;\Varid{s}_{1}\;(\Conid{StateT}\;\Varid{s}_{2}\;(\Conid{Free}\;\Varid{f}))\;\Varid{a}}, we show that \ensuremath{(\Varid{nest}\hsdot{\circ }{.}\Varid{flatten})\;\Varid{t}\mathrel{=}\Varid{t}}.
\indentbegin \begin{hscode}\SaveRestoreHook
\column{B}{@{}>{\hspre}l<{\hspost}@{}}%
\column{3}{@{}>{\hspre}l<{\hspost}@{}}%
\column{6}{@{}>{\hspre}l<{\hspost}@{}}%
\column{8}{@{}>{\hspre}l<{\hspost}@{}}%
\column{E}{@{}>{\hspre}l<{\hspost}@{}}%
\>[6]{}(\Varid{nest}\hsdot{\circ }{.}\Varid{flatten})\;\Varid{t}{}\<[E]%
\\
\>[3]{}\mathrel{=}\mbox{\commentbegin ~  definition of \ensuremath{\Varid{flatten}}   \commentend}{}\<[E]%
\\
\>[3]{}\hsindent{3}{}\<[6]%
\>[6]{}\Varid{nest}\mathbin{\$}\Conid{StateT}\mathbin{\$}\lambda (\Varid{s}_{1},\Varid{s}_{2})\to \alpha\mathbin{\langle\hspace{1.6pt}\mathclap{\raisebox{0.1pt}{\scalebox{1}{\$}}}\hspace{1.6pt}\rangle}\Varid{run_{StateT}}\;(\Varid{run_{StateT}}\;\Varid{t}\;\Varid{s}_{1})\;\Varid{s}_{2}{}\<[E]%
\\
\>[3]{}\mathrel{=}\mbox{\commentbegin ~  definition of \ensuremath{\Varid{nest}}   \commentend}{}\<[E]%
\\
\>[3]{}\hsindent{3}{}\<[6]%
\>[6]{}\Conid{StateT}\mathbin{\$}\lambda \Varid{s}_{1}\to \Conid{StateT}\mathbin{\$}\lambda \Varid{s}_{2}\to \alpha^{-1}\mathbin{\langle\hspace{1.6pt}\mathclap{\raisebox{0.1pt}{\scalebox{1}{\$}}}\hspace{1.6pt}\rangle}{}\<[E]%
\\
\>[6]{}\hsindent{2}{}\<[8]%
\>[8]{}\Varid{run_{StateT}}\;(\Conid{StateT}\mathbin{\$}\lambda (\Varid{s}_{1},\Varid{s}_{2})\to \alpha\mathbin{\langle\hspace{1.6pt}\mathclap{\raisebox{0.1pt}{\scalebox{1}{\$}}}\hspace{1.6pt}\rangle}\Varid{run_{StateT}}\;(\Varid{run_{StateT}}\;\Varid{t}\;\Varid{s}_{1})\;\Varid{s}_{2})\;(\Varid{s}_{1},\Varid{s}_{2}){}\<[E]%
\\
\>[3]{}\mathrel{=}\mbox{\commentbegin ~  definition of \ensuremath{\Varid{run_{StateT}}}   \commentend}{}\<[E]%
\\
\>[3]{}\hsindent{3}{}\<[6]%
\>[6]{}\Conid{StateT}\mathbin{\$}\lambda \Varid{s}_{1}\to \Conid{StateT}\mathbin{\$}\lambda \Varid{s}_{2}\to \alpha^{-1}\mathbin{\langle\hspace{1.6pt}\mathclap{\raisebox{0.1pt}{\scalebox{1}{\$}}}\hspace{1.6pt}\rangle}{}\<[E]%
\\
\>[6]{}\hsindent{2}{}\<[8]%
\>[8]{}(\lambda (\Varid{s}_{1},\Varid{s}_{2})\to \alpha\mathbin{\langle\hspace{1.6pt}\mathclap{\raisebox{0.1pt}{\scalebox{1}{\$}}}\hspace{1.6pt}\rangle}\Varid{run_{StateT}}\;(\Varid{run_{StateT}}\;\Varid{t}\;\Varid{s}_{1})\;\Varid{s}_{2})\;(\Varid{s}_{1},\Varid{s}_{2}){}\<[E]%
\\
\>[3]{}\mathrel{=}\mbox{\commentbegin ~  function application   \commentend}{}\<[E]%
\\
\>[3]{}\hsindent{3}{}\<[6]%
\>[6]{}\Conid{StateT}\mathbin{\$}\lambda \Varid{s}_{1}\to \Conid{StateT}\mathbin{\$}\lambda \Varid{s}_{2}\to \alpha^{-1}\mathbin{\langle\hspace{1.6pt}\mathclap{\raisebox{0.1pt}{\scalebox{1}{\$}}}\hspace{1.6pt}\rangle}(\alpha\mathbin{\langle\hspace{1.6pt}\mathclap{\raisebox{0.1pt}{\scalebox{1}{\$}}}\hspace{1.6pt}\rangle}\Varid{run_{StateT}}\;(\Varid{run_{StateT}}\;\Varid{t}\;\Varid{s}_{1})\;\Varid{s}_{2}){}\<[E]%
\\
\>[3]{}\mathrel{=}\mbox{\commentbegin ~  \Cref{eq:functor-composition}   \commentend}{}\<[E]%
\\
\>[3]{}\hsindent{3}{}\<[6]%
\>[6]{}\Conid{StateT}\mathbin{\$}\lambda \Varid{s}_{1}\to \Conid{StateT}\mathbin{\$}\lambda \Varid{s}_{2}\to (\Varid{fmap}\;(\alpha^{-1}\hsdot{\circ }{.}\alpha)\;(\Varid{run_{StateT}}\;(\Varid{run_{StateT}}\;\Varid{t}\;\Varid{s}_{1})\;\Varid{s}_{2})){}\<[E]%
\\
\>[3]{}\mathrel{=}\mbox{\commentbegin ~  \ensuremath{\alpha^{-1}\hsdot{\circ }{.}\alpha\mathrel{=}\Varid{id}}   \commentend}{}\<[E]%
\\
\>[3]{}\hsindent{3}{}\<[6]%
\>[6]{}\Conid{StateT}\mathbin{\$}\lambda \Varid{s}_{1}\to \Conid{StateT}\mathbin{\$}\lambda \Varid{s}_{2}\to (\Varid{fmap}\;\Varid{id}\;(\Varid{run_{StateT}}\;(\Varid{run_{StateT}}\;\Varid{t}\;\Varid{s}_{1})\;\Varid{s}_{2})){}\<[E]%
\\
\>[3]{}\mathrel{=}\mbox{\commentbegin ~  \Cref{eq:functor-identity}   \commentend}{}\<[E]%
\\
\>[3]{}\hsindent{3}{}\<[6]%
\>[6]{}\Conid{StateT}\mathbin{\$}\lambda \Varid{s}_{1}\to \Conid{StateT}\mathbin{\$}\lambda \Varid{s}_{2}\to (\Varid{run_{StateT}}\;(\Varid{run_{StateT}}\;\Varid{t}\;\Varid{s}_{1})\;\Varid{s}_{2}){}\<[E]%
\\
\>[3]{}\mathrel{=}\mbox{\commentbegin ~  \ensuremath{\Varid{\eta}}-reduction and reformulation   \commentend}{}\<[E]%
\\
\>[3]{}\hsindent{3}{}\<[6]%
\>[6]{}\Conid{StateT}\mathbin{\$}\lambda \Varid{s}_{1}\to (\Conid{StateT}\hsdot{\circ }{.}\Varid{run_{StateT}})\;(\Varid{run_{StateT}}\;\Varid{t}\;\Varid{s}_{1}){}\<[E]%
\\
\>[3]{}\mathrel{=}\mbox{\commentbegin ~  \ensuremath{\Conid{StateT}\hsdot{\circ }{.}\Varid{run_{StateT}}\mathrel{=}\Varid{id}}   \commentend}{}\<[E]%
\\
\>[3]{}\hsindent{3}{}\<[6]%
\>[6]{}\Conid{StateT}\mathbin{\$}\lambda \Varid{s}_{1}\to \Varid{run_{StateT}}\;\Varid{t}\;\Varid{s}_{1}{}\<[E]%
\\
\>[3]{}\mathrel{=}\mbox{\commentbegin ~  \ensuremath{\Varid{\eta}}-reduction   \commentend}{}\<[E]%
\\
\>[3]{}\hsindent{3}{}\<[6]%
\>[6]{}\Conid{StateT}\mathbin{\$}\Varid{run_{StateT}}\;\Varid{t}{}\<[E]%
\\
\>[3]{}\mathrel{=}\mbox{\commentbegin ~  \ensuremath{\Conid{StateT}\hsdot{\circ }{.}\Varid{run_{StateT}}\mathrel{=}\Varid{id}}   \commentend}{}\<[E]%
\\
\>[3]{}\hsindent{3}{}\<[6]%
\>[6]{}\Varid{t}{}\<[E]%
\ColumnHook
\end{hscode}\resethooks
\indentend For all \ensuremath{\Varid{t}\mathbin{::}\Conid{StateT}\;(\Varid{s}_{1},\Varid{s}_{2})\;(\Conid{Free}\;\Varid{f})\;\Varid{a}}, we show that \ensuremath{(\Varid{flatten}\hsdot{\circ }{.}\Varid{nest})\;\Varid{t}\mathrel{=}\Varid{t}}.
\indentbegin \begin{hscode}\SaveRestoreHook
\column{B}{@{}>{\hspre}l<{\hspost}@{}}%
\column{3}{@{}>{\hspre}l<{\hspost}@{}}%
\column{6}{@{}>{\hspre}l<{\hspost}@{}}%
\column{8}{@{}>{\hspre}l<{\hspost}@{}}%
\column{E}{@{}>{\hspre}l<{\hspost}@{}}%
\>[6]{}(\Varid{flatten}\hsdot{\circ }{.}\Varid{nest})\;\Varid{t}\mathrel{=}\Varid{t}{}\<[E]%
\\
\>[3]{}\mathrel{=}\mbox{\commentbegin ~  definition of \ensuremath{\Varid{nest}}   \commentend}{}\<[E]%
\\
\>[3]{}\hsindent{3}{}\<[6]%
\>[6]{}\Varid{flatten}\mathbin{\$}\Conid{StateT}\mathbin{\$}\lambda \Varid{s}_{1}\to \Conid{StateT}\mathbin{\$}\lambda \Varid{s}_{2}\to \alpha^{-1}\mathbin{\langle\hspace{1.6pt}\mathclap{\raisebox{0.1pt}{\scalebox{1}{\$}}}\hspace{1.6pt}\rangle}\Varid{run_{StateT}}\;\Varid{t}\;(\Varid{s}_{1},\Varid{s}_{2}){}\<[E]%
\\
\>[3]{}\mathrel{=}\mbox{\commentbegin ~  definition of \ensuremath{\Varid{flatten}}   \commentend}{}\<[E]%
\\
\>[3]{}\hsindent{3}{}\<[6]%
\>[6]{}\Conid{StateT}\mathbin{\$}\lambda (\Varid{s}_{1},\Varid{s}_{2})\to \alpha\mathbin{\langle\hspace{1.6pt}\mathclap{\raisebox{0.1pt}{\scalebox{1}{\$}}}\hspace{1.6pt}\rangle}{}\<[E]%
\\
\>[6]{}\hsindent{2}{}\<[8]%
\>[8]{}\Varid{run_{StateT}}\;(\Varid{run_{StateT}}\;(\Conid{StateT}\mathbin{\$}\lambda \Varid{s}_{1}\to \Conid{StateT}\mathbin{\$}\lambda \Varid{s}_{2}\to \alpha^{-1}\mathbin{\langle\hspace{1.6pt}\mathclap{\raisebox{0.1pt}{\scalebox{1}{\$}}}\hspace{1.6pt}\rangle}\Varid{run_{StateT}}\;\Varid{t}\;(\Varid{s}_{1},\Varid{s}_{2}))\;\Varid{s}_{1})\;\Varid{s}_{2}{}\<[E]%
\\
\>[3]{}\mathrel{=}\mbox{\commentbegin ~  definition of \ensuremath{\Varid{run_{StateT}}}   \commentend}{}\<[E]%
\\
\>[3]{}\hsindent{3}{}\<[6]%
\>[6]{}\Conid{StateT}\mathbin{\$}\lambda (\Varid{s}_{1},\Varid{s}_{2})\to \alpha\mathbin{\langle\hspace{1.6pt}\mathclap{\raisebox{0.1pt}{\scalebox{1}{\$}}}\hspace{1.6pt}\rangle}{}\<[E]%
\\
\>[6]{}\hsindent{2}{}\<[8]%
\>[8]{}\Varid{run_{StateT}}\;((\lambda \Varid{s}_{1}\to \Conid{StateT}\mathbin{\$}\lambda \Varid{s}_{2}\to \alpha^{-1}\mathbin{\langle\hspace{1.6pt}\mathclap{\raisebox{0.1pt}{\scalebox{1}{\$}}}\hspace{1.6pt}\rangle}\Varid{run_{StateT}}\;\Varid{t}\;(\Varid{s}_{1},\Varid{s}_{2}))\;\Varid{s}_{1})\;\Varid{s}_{2}{}\<[E]%
\\
\>[3]{}\mathrel{=}\mbox{\commentbegin ~  function application   \commentend}{}\<[E]%
\\
\>[3]{}\hsindent{3}{}\<[6]%
\>[6]{}\Conid{StateT}\mathbin{\$}\lambda (\Varid{s}_{1},\Varid{s}_{2})\to \alpha\mathbin{\langle\hspace{1.6pt}\mathclap{\raisebox{0.1pt}{\scalebox{1}{\$}}}\hspace{1.6pt}\rangle}\Varid{run_{StateT}}\;(\Conid{StateT}\mathbin{\$}\lambda \Varid{s}_{2}\to \alpha^{-1}\mathbin{\langle\hspace{1.6pt}\mathclap{\raisebox{0.1pt}{\scalebox{1}{\$}}}\hspace{1.6pt}\rangle}\Varid{run_{StateT}}\;\Varid{t}\;(\Varid{s}_{1},\Varid{s}_{2}))\;\Varid{s}_{2}{}\<[E]%
\\
\>[3]{}\mathrel{=}\mbox{\commentbegin ~  definition of \ensuremath{\Varid{run_{StateT}}}   \commentend}{}\<[E]%
\\
\>[3]{}\hsindent{3}{}\<[6]%
\>[6]{}\Conid{StateT}\mathbin{\$}\lambda (\Varid{s}_{1},\Varid{s}_{2})\to \alpha\mathbin{\langle\hspace{1.6pt}\mathclap{\raisebox{0.1pt}{\scalebox{1}{\$}}}\hspace{1.6pt}\rangle}(\lambda \Varid{s}_{2}\to \alpha^{-1}\mathbin{\langle\hspace{1.6pt}\mathclap{\raisebox{0.1pt}{\scalebox{1}{\$}}}\hspace{1.6pt}\rangle}\Varid{run_{StateT}}\;\Varid{t}\;(\Varid{s}_{1},\Varid{s}_{2}))\;\Varid{s}_{2}{}\<[E]%
\\
\>[3]{}\mathrel{=}\mbox{\commentbegin ~  function application   \commentend}{}\<[E]%
\\
\>[3]{}\hsindent{3}{}\<[6]%
\>[6]{}\Conid{StateT}\mathbin{\$}\lambda (\Varid{s}_{1},\Varid{s}_{2})\to \alpha\mathbin{\langle\hspace{1.6pt}\mathclap{\raisebox{0.1pt}{\scalebox{1}{\$}}}\hspace{1.6pt}\rangle}(\alpha^{-1}\mathbin{\langle\hspace{1.6pt}\mathclap{\raisebox{0.1pt}{\scalebox{1}{\$}}}\hspace{1.6pt}\rangle}\Varid{run_{StateT}}\;\Varid{t}\;(\Varid{s}_{1},\Varid{s}_{2})){}\<[E]%
\\
\>[3]{}\mathrel{=}\mbox{\commentbegin ~  definition of \ensuremath{\mathbin{\langle\hspace{1.6pt}\mathclap{\raisebox{0.1pt}{\scalebox{1}{\$}}}\hspace{1.6pt}\rangle}}   \commentend}{}\<[E]%
\\
\>[3]{}\hsindent{3}{}\<[6]%
\>[6]{}\Conid{StateT}\mathbin{\$}\lambda (\Varid{s}_{1},\Varid{s}_{2})\to \Varid{fmap}\;\alpha\;(\Varid{fmap}\;\alpha^{-1}\;(\Varid{run_{StateT}}\;\Varid{t}\;(\Varid{s}_{1},\Varid{s}_{2}))){}\<[E]%
\\
\>[3]{}\mathrel{=}\mbox{\commentbegin ~  \Cref{eq:functor-composition}   \commentend}{}\<[E]%
\\
\>[3]{}\hsindent{3}{}\<[6]%
\>[6]{}\Conid{StateT}\mathbin{\$}\lambda (\Varid{s}_{1},\Varid{s}_{2})\to \Varid{fmap}\;(\alpha\hsdot{\circ }{.}\alpha^{-1})\;(\Varid{run_{StateT}}\;\Varid{t}\;(\Varid{s}_{1},\Varid{s}_{2})){}\<[E]%
\\
\>[3]{}\mathrel{=}\mbox{\commentbegin ~  \ensuremath{\alpha\hsdot{\circ }{.}\alpha^{-1}\mathrel{=}\Varid{id}}   \commentend}{}\<[E]%
\\
\>[3]{}\hsindent{3}{}\<[6]%
\>[6]{}\Conid{StateT}\mathbin{\$}\lambda (\Varid{s}_{1},\Varid{s}_{2})\to \Varid{fmap}\;\Varid{id}\;(\Varid{run_{StateT}}\;\Varid{t}\;(\Varid{s}_{1},\Varid{s}_{2})){}\<[E]%
\\
\>[3]{}\mathrel{=}\mbox{\commentbegin ~  \Cref{eq:functor-identity}   \commentend}{}\<[E]%
\\
\>[3]{}\hsindent{3}{}\<[6]%
\>[6]{}\Conid{StateT}\mathbin{\$}\lambda (\Varid{s}_{1},\Varid{s}_{2})\to \Varid{run_{StateT}}\;\Varid{t}\;(\Varid{s}_{1},\Varid{s}_{2}){}\<[E]%
\\
\>[3]{}\mathrel{=}\mbox{\commentbegin ~  \ensuremath{\Varid{\eta}}-reduction   \commentend}{}\<[E]%
\\
\>[3]{}\hsindent{3}{}\<[6]%
\>[6]{}\Conid{StateT}\mathbin{\$}\Varid{run_{StateT}}\;\Varid{t}{}\<[E]%
\\
\>[3]{}\mathrel{=}\mbox{\commentbegin ~  \ensuremath{\Conid{StateT}\hsdot{\circ }{.}\Varid{run_{StateT}}\mathrel{=}\Varid{id}}   \commentend}{}\<[E]%
\\
\>[3]{}\hsindent{3}{}\<[6]%
\>[6]{}\Varid{t}{}\<[E]%
\ColumnHook
\end{hscode}\resethooks
\indentend Then, we first calculate the LHS \ensuremath{\Varid{flatten}\hsdot{\circ }{.}\Varid{h_{States}}} into one function
\ensuremath{\Varid{h_{States}^\prime}} which is defined as\indentbegin \begin{hscode}\SaveRestoreHook
\column{B}{@{}>{\hspre}l<{\hspost}@{}}%
\column{3}{@{}>{\hspre}l<{\hspost}@{}}%
\column{E}{@{}>{\hspre}l<{\hspost}@{}}%
\>[3]{}\Varid{h_{States}^\prime}\mathbin{::}\Conid{Functor}\;\Varid{f}\Rightarrow \Conid{Free}\;(\Varid{State_{F}}\;\Varid{s}_{1}\mathrel{{:}{+}{:}}\Varid{State_{F}}\;\Varid{s}_{2}\mathrel{{:}{+}{:}}\Varid{f})\;\Varid{a}\to \Conid{StateT}\;(\Varid{s}_{1},\Varid{s}_{2})\;(\Conid{Free}\;\Varid{f})\;\Varid{a}{}\<[E]%
\\
\>[3]{}\Varid{h_{States}^\prime}\;\Varid{t}\mathrel{=}\Conid{StateT}\mathbin{\$}\lambda (\Varid{s}_{1},\Varid{s}_{2})\to \alpha\mathbin{\langle\hspace{1.6pt}\mathclap{\raisebox{0.1pt}{\scalebox{1}{\$}}}\hspace{1.6pt}\rangle}\Varid{run_{StateT}}\;(\Varid{h_{State}}\;(\Varid{run_{StateT}}\;(\Varid{h_{State}}\;\Varid{t})\;\Varid{s}_{1}))\;\Varid{s}_{2}{}\<[E]%
\ColumnHook
\end{hscode}\resethooks
\indentend %
For all \ensuremath{\Varid{t}\mathbin{::}\Conid{Free}\;(\Varid{State_{F}}\;\Varid{s}_{1}\mathrel{{:}{+}{:}}\Varid{State_{F}}\;\Varid{s}_{2}\mathrel{{:}{+}{:}}\Varid{f})\;\Varid{a}}, we show the equation
\ensuremath{(\Varid{flatten}\hsdot{\circ }{.}\Varid{h_{States}})\;\Varid{t}\mathrel{=}\Varid{h_{States}^\prime}\;\Varid{t}} by the following calculation.
\indentbegin \begin{hscode}\SaveRestoreHook
\column{B}{@{}>{\hspre}l<{\hspost}@{}}%
\column{3}{@{}>{\hspre}l<{\hspost}@{}}%
\column{6}{@{}>{\hspre}l<{\hspost}@{}}%
\column{8}{@{}>{\hspre}l<{\hspost}@{}}%
\column{E}{@{}>{\hspre}l<{\hspost}@{}}%
\>[6]{}(\Varid{flatten}\hsdot{\circ }{.}\Varid{h_{States}})\;\Varid{t}{}\<[E]%
\\
\>[3]{}\mathrel{=}\mbox{\commentbegin ~  definition of \ensuremath{\Varid{h_{States}}}   \commentend}{}\<[E]%
\\
\>[3]{}\hsindent{3}{}\<[6]%
\>[6]{}(\Varid{flatten}\hsdot{\circ }{.}(\lambda \Varid{t}\to \Conid{StateT}\;(\Varid{h_{State}}\hsdot{\circ }{.}\Varid{run_{StateT}}\;(\Varid{h_{State}}\;\Varid{t}))))\;\Varid{t}{}\<[E]%
\\
\>[3]{}\mathrel{=}\mbox{\commentbegin ~  function application   \commentend}{}\<[E]%
\\
\>[3]{}\hsindent{3}{}\<[6]%
\>[6]{}\Varid{flatten}\;(\Conid{StateT}\;(\Varid{h_{State}}\hsdot{\circ }{.}\Varid{run_{StateT}}\;(\Varid{h_{State}}\;\Varid{t}))){}\<[E]%
\\
\>[3]{}\mathrel{=}\mbox{\commentbegin ~  definition of \ensuremath{\Varid{flatten}}   \commentend}{}\<[E]%
\\
\>[3]{}\hsindent{3}{}\<[6]%
\>[6]{}\Conid{StateT}\mathbin{\$}\lambda (\Varid{s}_{1},\Varid{s}_{2})\to \alpha\mathbin{\langle\hspace{1.6pt}\mathclap{\raisebox{0.1pt}{\scalebox{1}{\$}}}\hspace{1.6pt}\rangle}{}\<[E]%
\\
\>[6]{}\hsindent{2}{}\<[8]%
\>[8]{}\Varid{run_{StateT}}\;(\Varid{run_{StateT}}\;(\Conid{StateT}\;(\Varid{h_{State}}\hsdot{\circ }{.}\Varid{run_{StateT}}\;(\Varid{h_{State}}\;\Varid{t})))\;\Varid{s}_{1})\;\Varid{s}_{2}{}\<[E]%
\\
\>[3]{}\mathrel{=}\mbox{\commentbegin ~  definition of \ensuremath{\Varid{run_{StateT}}}   \commentend}{}\<[E]%
\\
\>[3]{}\hsindent{3}{}\<[6]%
\>[6]{}\Conid{StateT}\mathbin{\$}\lambda (\Varid{s}_{1},\Varid{s}_{2})\to \alpha\mathbin{\langle\hspace{1.6pt}\mathclap{\raisebox{0.1pt}{\scalebox{1}{\$}}}\hspace{1.6pt}\rangle}{}\<[E]%
\\
\>[6]{}\hsindent{2}{}\<[8]%
\>[8]{}\Varid{run_{StateT}}\;((\Varid{h_{State}}\hsdot{\circ }{.}\Varid{run_{StateT}}\;(\Varid{h_{State}}\;\Varid{t}))\;\Varid{s}_{1})\;\Varid{s}_{2}{}\<[E]%
\\
\>[3]{}\mathrel{=}\mbox{\commentbegin ~  definition of \ensuremath{\Varid{h_{States}^\prime}}   \commentend}{}\<[E]%
\\
\>[3]{}\Varid{h_{States}^\prime}\;\Varid{t}{}\<[E]%
\ColumnHook
\end{hscode}\resethooks
\indentend Now we only need to show that for any input \ensuremath{\Varid{t}\mathbin{::}\Conid{Free}\;(\Varid{State_{F}}\;\Varid{s}_{1}\mathrel{{:}{+}{:}}\Varid{State_{F}}\;\Varid{s}_{2}\mathrel{{:}{+}{:}}\Varid{f})\;\Varid{a}}, the equation \ensuremath{\Varid{h_{States}^\prime}\;\Varid{t}\mathrel{=}(\Varid{h_{State}}\hsdot{\circ }{.}\Varid{states2state})\;\Varid{t}} holds.
Note that both sides use folds.
We can proceed with either fold fusion, as what we have done in the
proofs of other theorems, or a direct structural induction on the
input \ensuremath{\Varid{t}}. Although using fold fusion makes the proof simpler than
using structural induction, we opt for the latter here to show that
the our methods of defining effects and translations based on
algebraic effects and handlers also work well with structural
induction.

\noindent \mbox{\underline{case \ensuremath{\Varid{t}\mathrel{=}\Conid{Var}\;\Varid{x}}}}\indentbegin \begin{hscode}\SaveRestoreHook
\column{B}{@{}>{\hspre}l<{\hspost}@{}}%
\column{3}{@{}>{\hspre}l<{\hspost}@{}}%
\column{5}{@{}>{\hspre}l<{\hspost}@{}}%
\column{6}{@{}>{\hspre}l<{\hspost}@{}}%
\column{27}{@{}>{\hspre}c<{\hspost}@{}}%
\column{27E}{@{}l@{}}%
\column{31}{@{}>{\hspre}l<{\hspost}@{}}%
\column{E}{@{}>{\hspre}l<{\hspost}@{}}%
\>[6]{}(\Varid{h_{State}}\hsdot{\circ }{.}\Varid{states2state})\;(\Conid{Var}\;\Varid{x}){}\<[E]%
\\
\>[3]{}\mathrel{=}\mbox{\commentbegin ~  definition of \ensuremath{\Varid{states2state}}   \commentend}{}\<[E]%
\\
\>[3]{}\hsindent{3}{}\<[6]%
\>[6]{}\Varid{h_{State}}\;(\Conid{Var}\;\Varid{x}){}\<[E]%
\\
\>[3]{}\mathrel{=}\mbox{\commentbegin ~  definition of \ensuremath{\Varid{h_{State}}}   \commentend}{}\<[E]%
\\
\>[3]{}\hsindent{2}{}\<[5]%
\>[5]{}\Conid{StateT}\mathbin{\$}\lambda \Varid{s}\to \Varid{\eta}\;(\Varid{x},\Varid{s}){}\<[E]%
\\
\>[3]{}\mathrel{=}\mbox{\commentbegin ~  let \ensuremath{\Varid{s}\mathrel{=}(\Varid{s}_{1},\Varid{s}_{2})}   \commentend}{}\<[E]%
\\
\>[3]{}\hsindent{3}{}\<[6]%
\>[6]{}\Conid{StateT}\mathbin{\$}\lambda (\Varid{s}_{1},\Varid{s}_{2})\to \Conid{Var}\;(\Varid{x},(\Varid{s}_{1},\Varid{s}_{2})){}\<[E]%
\\
\>[3]{}\mathrel{=}\mbox{\commentbegin ~  definition of \ensuremath{\alpha}   \commentend}{}\<[E]%
\\
\>[3]{}\hsindent{3}{}\<[6]%
\>[6]{}\Conid{StateT}\mathbin{\$}\lambda (\Varid{s}_{1},\Varid{s}_{2}){}\<[27]%
\>[27]{}\to {}\<[27E]%
\>[31]{}\Conid{Var}\;(\alpha\;((\Varid{x},\Varid{s}_{1}),\Varid{s}_{2})){}\<[E]%
\\
\>[3]{}\mathrel{=}\mbox{\commentbegin ~  definition of \ensuremath{\Varid{fmap}}   \commentend}{}\<[E]%
\\
\>[3]{}\hsindent{3}{}\<[6]%
\>[6]{}\Conid{StateT}\mathbin{\$}\lambda (\Varid{s}_{1},\Varid{s}_{2}){}\<[27]%
\>[27]{}\to {}\<[27E]%
\>[31]{}\Varid{fmap}\;\alpha\mathbin{\$}\Conid{Var}\;((\Varid{x},\Varid{s}_{1}),\Varid{s}_{2}){}\<[E]%
\\
\>[3]{}\mathrel{=}\mbox{\commentbegin ~  definition of \ensuremath{\Varid{\eta}}   \commentend}{}\<[E]%
\\
\>[3]{}\hsindent{3}{}\<[6]%
\>[6]{}\Conid{StateT}\mathbin{\$}\lambda (\Varid{s}_{1},\Varid{s}_{2}){}\<[27]%
\>[27]{}\to {}\<[27E]%
\>[31]{}\Varid{fmap}\;\alpha\mathbin{\$}\Varid{\eta}\;((\Varid{x},\Varid{s}_{1}),\Varid{s}_{2}){}\<[E]%
\\
\>[3]{}\mathrel{=}\mbox{\commentbegin ~  \ensuremath{\Varid{\beta}}-expansion   \commentend}{}\<[E]%
\\
\>[3]{}\hsindent{3}{}\<[6]%
\>[6]{}\Conid{StateT}\mathbin{\$}\lambda (\Varid{s}_{1},\Varid{s}_{2}){}\<[27]%
\>[27]{}\to {}\<[27E]%
\>[31]{}\Varid{fmap}\;\alpha\mathbin{\$}(\lambda \Varid{s}\to \Varid{\eta}\;((\Varid{x},\Varid{s}_{1}),\Varid{s}))\;\Varid{s}_{2}{}\<[E]%
\\
\>[3]{}\mathrel{=}\mbox{\commentbegin ~  definition of \ensuremath{\Varid{run_{StateT}}}   \commentend}{}\<[E]%
\\
\>[3]{}\hsindent{3}{}\<[6]%
\>[6]{}\Conid{StateT}\mathbin{\$}\lambda (\Varid{s}_{1},\Varid{s}_{2}){}\<[27]%
\>[27]{}\to {}\<[27E]%
\>[31]{}\Varid{fmap}\;\alpha\mathbin{\$}\Varid{run_{StateT}}\;(\Conid{StateT}\mathbin{\$}\lambda \Varid{s}\to \Varid{\eta}\;((\Varid{x},\Varid{s}_{1}),\Varid{s}))\;\Varid{s}_{2}{}\<[E]%
\\
\>[3]{}\mathrel{=}\mbox{\commentbegin ~  definition of \ensuremath{\Varid{h_{State}}}   \commentend}{}\<[E]%
\\
\>[3]{}\hsindent{3}{}\<[6]%
\>[6]{}\Conid{StateT}\mathbin{\$}\lambda (\Varid{s}_{1},\Varid{s}_{2}){}\<[27]%
\>[27]{}\to {}\<[27E]%
\>[31]{}\Varid{fmap}\;\alpha\mathbin{\$}\Varid{run_{StateT}}\;(\Varid{h_{State}}\;(\Varid{\eta}\;(\Varid{x},\Varid{s}_{1})))\;\Varid{s}_{2}{}\<[E]%
\\
\>[3]{}\mathrel{=}\mbox{\commentbegin ~  \ensuremath{\Varid{\beta}}-expansion   \commentend}{}\<[E]%
\\
\>[3]{}\hsindent{3}{}\<[6]%
\>[6]{}\Conid{StateT}\mathbin{\$}\lambda (\Varid{s}_{1},\Varid{s}_{2}){}\<[27]%
\>[27]{}\to {}\<[27E]%
\>[31]{}\Varid{fmap}\;\alpha\mathbin{\$}\Varid{run_{StateT}}\;(\Varid{h_{State}}\;((\lambda \Varid{s}\to \Varid{\eta}\;(\Varid{x},\Varid{s}))\;\Varid{s}_{1}))\;\Varid{s}_{2}{}\<[E]%
\\
\>[3]{}\mathrel{=}\mbox{\commentbegin ~  definition of \ensuremath{\Varid{run_{StateT}}}   \commentend}{}\<[E]%
\\
\>[3]{}\hsindent{3}{}\<[6]%
\>[6]{}\Conid{StateT}\mathbin{\$}\lambda (\Varid{s}_{1},\Varid{s}_{2}){}\<[27]%
\>[27]{}\to {}\<[27E]%
\>[31]{}\Varid{fmap}\;\alpha\mathbin{\$}\Varid{run_{StateT}}\;(\Varid{h_{State}}\;(\Varid{run_{StateT}}\;(\Conid{StateT}\mathbin{\$}\lambda \Varid{s}\to \Varid{\eta}\;(\Varid{x},\Varid{s}))\;\Varid{s}_{1}))\;\Varid{s}_{2}{}\<[E]%
\\
\>[3]{}\mathrel{=}\mbox{\commentbegin ~  definition of \ensuremath{\Varid{h_{State}}}   \commentend}{}\<[E]%
\\
\>[3]{}\hsindent{3}{}\<[6]%
\>[6]{}\Conid{StateT}\mathbin{\$}\lambda (\Varid{s}_{1},\Varid{s}_{2}){}\<[27]%
\>[27]{}\to {}\<[27E]%
\>[31]{}\Varid{fmap}\;\alpha\mathbin{\$}\Varid{run_{StateT}}\;(\Varid{h_{State}}\;(\Varid{run_{StateT}}\;(\Varid{h_{State}}\;(\Conid{Var}\;\Varid{x}))\;\Varid{s}_{1}))\;\Varid{s}_{2}{}\<[E]%
\\
\>[3]{}\mathrel{=}\mbox{\commentbegin ~  definition of \ensuremath{\mathbin{\langle\hspace{1.6pt}\mathclap{\raisebox{0.1pt}{\scalebox{1}{\$}}}\hspace{1.6pt}\rangle}}   \commentend}{}\<[E]%
\\
\>[3]{}\hsindent{3}{}\<[6]%
\>[6]{}\Conid{StateT}\mathbin{\$}\lambda (\Varid{s}_{1},\Varid{s}_{2}){}\<[27]%
\>[27]{}\to {}\<[27E]%
\>[31]{}\alpha\mathbin{\langle\hspace{1.6pt}\mathclap{\raisebox{0.1pt}{\scalebox{1}{\$}}}\hspace{1.6pt}\rangle}\Varid{run_{StateT}}\;(\Varid{h_{State}}\;(\Varid{run_{StateT}}\;(\Varid{h_{State}}\;(\Conid{Var}\;\Varid{x}))\;\Varid{s}_{1}))\;\Varid{s}_{2}{}\<[E]%
\\
\>[3]{}\mathrel{=}\mbox{\commentbegin ~  definition of \ensuremath{\Varid{h_{States}^\prime}}   \commentend}{}\<[E]%
\\
\>[3]{}\hsindent{3}{}\<[6]%
\>[6]{}\Varid{h_{States}^\prime}\;(\Conid{Var}\;\Varid{x}){}\<[E]%
\ColumnHook
\end{hscode}\resethooks
\indentend %

\noindent \mbox{\underline{case \ensuremath{\Varid{t}\mathrel{=}\Conid{Op}\;(\Conid{Inl}\;(\Conid{Get}\;\Varid{k}))}}}

Induction hypothesis: \ensuremath{\Varid{h_{States}^\prime}\;(\Varid{k}\;\Varid{s})\mathrel{=}(\Varid{h_{State}}\hsdot{\circ }{.}\Varid{states2state})\;(\Varid{k}\;\Varid{s})} for any \ensuremath{\Varid{s}}.
\indentbegin \begin{hscode}\SaveRestoreHook
\column{B}{@{}>{\hspre}l<{\hspost}@{}}%
\column{3}{@{}>{\hspre}l<{\hspost}@{}}%
\column{6}{@{}>{\hspre}l<{\hspost}@{}}%
\column{8}{@{}>{\hspre}l<{\hspost}@{}}%
\column{27}{@{}>{\hspre}c<{\hspost}@{}}%
\column{27E}{@{}l@{}}%
\column{31}{@{}>{\hspre}l<{\hspost}@{}}%
\column{42}{@{}>{\hspre}c<{\hspost}@{}}%
\column{42E}{@{}l@{}}%
\column{46}{@{}>{\hspre}l<{\hspost}@{}}%
\column{61}{@{}>{\hspre}c<{\hspost}@{}}%
\column{61E}{@{}l@{}}%
\column{65}{@{}>{\hspre}l<{\hspost}@{}}%
\column{E}{@{}>{\hspre}l<{\hspost}@{}}%
\>[6]{}(\Varid{h_{State}}\hsdot{\circ }{.}\Varid{states2state})\;(\Conid{Op}\;(\Conid{Inl}\;(\Conid{Get}\;\Varid{k}))){}\<[E]%
\\
\>[3]{}\mathrel{=}\mbox{\commentbegin ~  definition of \ensuremath{\Varid{states2state}}   \commentend}{}\<[E]%
\\
\>[3]{}\hsindent{3}{}\<[6]%
\>[6]{}\Varid{h_{State}}\mathbin{\$}\Varid{get}>\!\!>\!\!=\lambda (\Varid{s}_{1},{}\<[31]%
\>[31]{}\anonymous )\to \Varid{states2state}\;(\Varid{k}\;\Varid{s}_{1}){}\<[E]%
\\
\>[3]{}\mathrel{=}\mbox{\commentbegin ~  definition of \ensuremath{\Varid{get}}   \commentend}{}\<[E]%
\\
\>[3]{}\hsindent{3}{}\<[6]%
\>[6]{}\Varid{h_{State}}\mathbin{\$}\Conid{Op}\;(\Conid{Inl}\;(\Conid{Get}\;\Varid{\eta}))>\!\!>\!\!=\lambda (\Varid{s}_{1},\anonymous )\to \Varid{states2state}\;(\Varid{k}\;\Varid{s}_{1}){}\<[E]%
\\
\>[3]{}\mathrel{=}\mbox{\commentbegin ~  definition of \ensuremath{(>\!\!>\!\!=)}   \commentend}{}\<[E]%
\\
\>[3]{}\hsindent{3}{}\<[6]%
\>[6]{}\Varid{h_{State}}\;(\Conid{Op}\;(\Conid{Inl}\;(\Conid{Get}\;(\lambda (\Varid{s}_{1},\anonymous )\to \Varid{states2state}\;(\Varid{k}\;\Varid{s}_{1}))))){}\<[E]%
\\
\>[3]{}\mathrel{=}\mbox{\commentbegin ~  definition of \ensuremath{\Varid{h_{State}}}   \commentend}{}\<[E]%
\\
\>[3]{}\hsindent{3}{}\<[6]%
\>[6]{}\Conid{StateT}\mathbin{\$}\lambda \Varid{s}\to \Varid{run_{StateT}}\;((\lambda (\Varid{s}_{1},\anonymous )\to \Varid{h_{State}}\;(\Varid{states2state}\;(\Varid{k}\;\Varid{s}_{1})))\;\Varid{s})\;\Varid{s}{}\<[E]%
\\
\>[3]{}\mathrel{=}\mbox{\commentbegin ~  let \ensuremath{\Varid{s}\mathrel{=}(\Varid{s}_{1},\Varid{s}_{2})}   \commentend}{}\<[E]%
\\
\>[3]{}\hsindent{3}{}\<[6]%
\>[6]{}\Conid{StateT}\mathbin{\$}\lambda (\Varid{s}_{1},\Varid{s}_{2})\to \Varid{run_{StateT}}\;((\lambda (\Varid{s}_{1},\anonymous )\to \Varid{h_{State}}\;(\Varid{states2state}\;(\Varid{k}\;\Varid{s}_{1})))\;(\Varid{s}_{1},\Varid{s}_{2}))\;(\Varid{s}_{1},\Varid{s}_{2}){}\<[E]%
\\
\>[3]{}\mathrel{=}\mbox{\commentbegin ~  function application   \commentend}{}\<[E]%
\\
\>[3]{}\hsindent{3}{}\<[6]%
\>[6]{}\Conid{StateT}\mathbin{\$}\lambda (\Varid{s}_{1},\Varid{s}_{2})\to \Varid{run_{StateT}}\;(\Varid{h_{State}}\;(\Varid{states2state}\;(\Varid{k}\;\Varid{s}_{1})))\;(\Varid{s}_{1},\Varid{s}_{2}){}\<[E]%
\\
\>[3]{}\mathrel{=}\mbox{\commentbegin ~  induction hypothesis   \commentend}{}\<[E]%
\\
\>[3]{}\hsindent{3}{}\<[6]%
\>[6]{}\Conid{StateT}\mathbin{\$}\lambda (\Varid{s}_{1},\Varid{s}_{2})\to \Varid{run_{StateT}}\;(\Varid{h_{States}^\prime}\;(\Varid{k}\;\Varid{s}_{1}))\;(\Varid{s}_{1},\Varid{s}_{2}){}\<[E]%
\\
\>[3]{}\mathrel{=}\mbox{\commentbegin ~  definition of \ensuremath{\Varid{h_{States}^\prime}}   \commentend}{}\<[E]%
\\
\>[3]{}\hsindent{3}{}\<[6]%
\>[6]{}\Conid{StateT}\mathbin{\$}\lambda (\Varid{s}_{1},\Varid{s}_{2})\to \Varid{run_{StateT}}\;(\Conid{StateT}\mathbin{\$}\lambda (\Varid{s}_{1},\Varid{s}_{2}){}\<[61]%
\>[61]{}\to {}\<[61E]%
\>[65]{}\alpha\mathbin{\langle\hspace{1.6pt}\mathclap{\raisebox{0.1pt}{\scalebox{1}{\$}}}\hspace{1.6pt}\rangle}{}\<[E]%
\\
\>[6]{}\hsindent{2}{}\<[8]%
\>[8]{}\Varid{run_{StateT}}\;(\Varid{h_{State}}\;(\Varid{run_{StateT}}\;(\Varid{h_{State}}\;(\Varid{k}\;\Varid{s}_{1}))\;\Varid{s}_{1}))\;\Varid{s}_{2})\;(\Varid{s}_{1},\Varid{s}_{2}){}\<[E]%
\\
\>[3]{}\mathrel{=}\mbox{\commentbegin ~  definition of \ensuremath{\Varid{run_{StateT}}}   \commentend}{}\<[E]%
\\
\>[3]{}\hsindent{3}{}\<[6]%
\>[6]{}\Conid{StateT}\mathbin{\$}\lambda (\Varid{s}_{1},\Varid{s}_{2})\to (\lambda (\Varid{s}_{1},\Varid{s}_{2}){}\<[42]%
\>[42]{}\to {}\<[42E]%
\>[46]{}\alpha\mathbin{\langle\hspace{1.6pt}\mathclap{\raisebox{0.1pt}{\scalebox{1}{\$}}}\hspace{1.6pt}\rangle}{}\<[E]%
\\
\>[6]{}\hsindent{2}{}\<[8]%
\>[8]{}\Varid{run_{StateT}}\;(\Varid{h_{State}}\;(\Varid{run_{StateT}}\;(\Varid{h_{State}}\;(\Varid{k}\;\Varid{s}_{1}))\;\Varid{s}_{1}))\;\Varid{s}_{2})\;(\Varid{s}_{1},\Varid{s}_{2}){}\<[E]%
\\
\>[3]{}\mathrel{=}\mbox{\commentbegin ~  function application  \commentend}{}\<[E]%
\\
\>[3]{}\hsindent{3}{}\<[6]%
\>[6]{}\Conid{StateT}\mathbin{\$}\lambda (\Varid{s}_{1},\Varid{s}_{2}){}\<[27]%
\>[27]{}\to {}\<[27E]%
\>[31]{}\alpha\mathbin{\langle\hspace{1.6pt}\mathclap{\raisebox{0.1pt}{\scalebox{1}{\$}}}\hspace{1.6pt}\rangle}\Varid{run_{StateT}}\;(\Varid{h_{State}}\;(\Varid{run_{StateT}}\;(\Varid{h_{State}}\;(\Varid{k}\;\Varid{s}_{1}))\;\Varid{s}_{1}))\;\Varid{s}_{2}{}\<[E]%
\\
\>[3]{}\mathrel{=}\mbox{\commentbegin ~  \ensuremath{\Varid{\beta}}-expansion   \commentend}{}\<[E]%
\\
\>[3]{}\hsindent{3}{}\<[6]%
\>[6]{}\Conid{StateT}\mathbin{\$}\lambda (\Varid{s}_{1},\Varid{s}_{2}){}\<[27]%
\>[27]{}\to {}\<[27E]%
\>[31]{}\alpha\mathbin{\langle\hspace{1.6pt}\mathclap{\raisebox{0.1pt}{\scalebox{1}{\$}}}\hspace{1.6pt}\rangle}{}\<[E]%
\\
\>[6]{}\hsindent{2}{}\<[8]%
\>[8]{}\Varid{run_{StateT}}\;(\Varid{h_{State}}\;((\lambda \Varid{s}\to \Varid{run_{StateT}}\;(\Varid{h_{State}}\;(\Varid{k}\;\Varid{s}))\;\Varid{s})\;\Varid{s}_{1}))\;\Varid{s}_{2}{}\<[E]%
\\
\>[3]{}\mathrel{=}\mbox{\commentbegin ~  definition of \ensuremath{\Varid{run_{StateT}}}   \commentend}{}\<[E]%
\\
\>[3]{}\hsindent{3}{}\<[6]%
\>[6]{}\Conid{StateT}\mathbin{\$}\lambda (\Varid{s}_{1},\Varid{s}_{2}){}\<[27]%
\>[27]{}\to {}\<[27E]%
\>[31]{}\alpha\mathbin{\langle\hspace{1.6pt}\mathclap{\raisebox{0.1pt}{\scalebox{1}{\$}}}\hspace{1.6pt}\rangle}{}\<[E]%
\\
\>[6]{}\hsindent{2}{}\<[8]%
\>[8]{}\Varid{run_{StateT}}\;(\Varid{h_{State}}\;(\Varid{run_{StateT}}\;(\Conid{StateT}\mathbin{\$}\lambda \Varid{s}\to \Varid{run_{StateT}}\;(\Varid{h_{State}}\;(\Varid{k}\;\Varid{s}))\;\Varid{s})\;\Varid{s}_{1}))\;\Varid{s}_{2}{}\<[E]%
\\
\>[3]{}\mathrel{=}\mbox{\commentbegin ~  definition of \ensuremath{\Varid{h_{State}}}   \commentend}{}\<[E]%
\\
\>[3]{}\hsindent{3}{}\<[6]%
\>[6]{}\Conid{StateT}\mathbin{\$}\lambda (\Varid{s}_{1},\Varid{s}_{2}){}\<[27]%
\>[27]{}\to {}\<[27E]%
\>[31]{}\alpha\mathbin{\langle\hspace{1.6pt}\mathclap{\raisebox{0.1pt}{\scalebox{1}{\$}}}\hspace{1.6pt}\rangle}\Varid{run_{StateT}}\;(\Varid{h_{State}}\;(\Varid{run_{StateT}}\;(\Varid{h_{State}}\;(\Conid{Op}\;(\Conid{Inl}\;(\Conid{Get}\;\Varid{k}))))\;\Varid{s}_{1}))\;\Varid{s}_{2}{}\<[E]%
\\
\>[3]{}\mathrel{=}\mbox{\commentbegin ~  definition of \ensuremath{\Varid{h_{States}^\prime}}   \commentend}{}\<[E]%
\\
\>[3]{}\hsindent{3}{}\<[6]%
\>[6]{}\Varid{h_{States}^\prime}\;(\Conid{Op}\;(\Conid{Inl}\;(\Conid{Get}\;\Varid{k}))){}\<[E]%
\ColumnHook
\end{hscode}\resethooks
\indentend \noindent \mbox{\underline{case \ensuremath{\Varid{t}\mathrel{=}\Conid{Op}\;(\Conid{Inl}\;(\Conid{Put}\;\Varid{s}\;\Varid{k}))}}}

Induction hypothesis: \ensuremath{\Varid{h_{States}^\prime}\;\Varid{k}\mathrel{=}(\Varid{h_{State}}\hsdot{\circ }{.}\Varid{states2state})\;\Varid{k}}.
\indentbegin \begin{hscode}\SaveRestoreHook
\column{B}{@{}>{\hspre}l<{\hspost}@{}}%
\column{3}{@{}>{\hspre}l<{\hspost}@{}}%
\column{6}{@{}>{\hspre}l<{\hspost}@{}}%
\column{8}{@{}>{\hspre}l<{\hspost}@{}}%
\column{9}{@{}>{\hspre}l<{\hspost}@{}}%
\column{27}{@{}>{\hspre}c<{\hspost}@{}}%
\column{27E}{@{}l@{}}%
\column{31}{@{}>{\hspre}l<{\hspost}@{}}%
\column{34}{@{}>{\hspre}l<{\hspost}@{}}%
\column{42}{@{}>{\hspre}c<{\hspost}@{}}%
\column{42E}{@{}l@{}}%
\column{46}{@{}>{\hspre}l<{\hspost}@{}}%
\column{E}{@{}>{\hspre}l<{\hspost}@{}}%
\>[6]{}(\Varid{h_{State}}\hsdot{\circ }{.}\Varid{states2state})\;(\Conid{Op}\;(\Conid{Inl}\;(\Conid{Put}\;\Varid{s}\;\Varid{k}))){}\<[E]%
\\
\>[3]{}\mathrel{=}\mbox{\commentbegin ~  definition of \ensuremath{\Varid{states2state}}   \commentend}{}\<[E]%
\\
\>[3]{}\hsindent{3}{}\<[6]%
\>[6]{}\Varid{h_{State}}\mathbin{\$}\Varid{get}>\!\!>\!\!=\lambda (\anonymous ,\Varid{s}_{2}){}\<[34]%
\>[34]{}\to \Varid{put}\;(\Varid{s},\Varid{s}_{2})>\!\!>(\Varid{states2state}\;\Varid{k}){}\<[E]%
\\
\>[3]{}\mathrel{=}\mbox{\commentbegin ~  definition of \ensuremath{\Varid{get}} and \ensuremath{\Varid{put}}   \commentend}{}\<[E]%
\\
\>[3]{}\hsindent{3}{}\<[6]%
\>[6]{}\Varid{h_{State}}\mathbin{\$}\Conid{Op}\;(\Conid{Inl}\;(\Conid{Get}\;\Varid{\eta}))>\!\!>\!\!=\lambda (\anonymous ,\Varid{s}_{2})\to \Conid{Op}\;(\Conid{Inl}\;(\Conid{Put}\;(\Varid{s},\Varid{s}_{2})\;(\Varid{\eta}\;())))>\!\!>(\Varid{states2state}\;\Varid{k}){}\<[E]%
\\
\>[3]{}\mathrel{=}\mbox{\commentbegin ~  definition of \ensuremath{(>\!\!>\!\!=)} and \ensuremath{(>\!\!>)}   \commentend}{}\<[E]%
\\
\>[3]{}\hsindent{3}{}\<[6]%
\>[6]{}\Varid{h_{State}}\mathbin{\$}\Conid{Op}\;(\Conid{Inl}\;(\Conid{Get}\;(\lambda (\anonymous ,\Varid{s}_{2})\to \Conid{Op}\;(\Conid{Inl}\;(\Conid{Put}\;(\Varid{s},\Varid{s}_{2})\;(\Varid{states2state}\;\Varid{k})))))){}\<[E]%
\\
\>[3]{}\mathrel{=}\mbox{\commentbegin ~  definition of \ensuremath{\Varid{h_{State}}}   \commentend}{}\<[E]%
\\
\>[3]{}\hsindent{3}{}\<[6]%
\>[6]{}\Varid{alg_{S}}\;(\Conid{Get}\;(\lambda (\anonymous ,\Varid{s}_{2})\to \Varid{h_{State}}\;(\Conid{Op}\;(\Conid{Inl}\;(\Conid{Put}\;(\Varid{s},\Varid{s}_{2})\;(\Varid{states2state}\;\Varid{k})))))){}\<[E]%
\\
\>[3]{}\mathrel{=}\mbox{\commentbegin ~  definition of \ensuremath{\Varid{alg_{S}}}   \commentend}{}\<[E]%
\\
\>[3]{}\hsindent{3}{}\<[6]%
\>[6]{}\Conid{StateT}\mathbin{\$}\lambda \Varid{s'}\to \Varid{run_{StateT}}{}\<[E]%
\\
\>[6]{}\hsindent{2}{}\<[8]%
\>[8]{}((\lambda (\anonymous ,\Varid{s}_{2})\to \Varid{h_{State}}\;(\Conid{Op}\;(\Conid{Inl}\;(\Conid{Put}\;(\Varid{s},\Varid{s}_{2})\;(\Varid{states2state}\;\Varid{k})))))\;\Varid{s'})\;\Varid{s'}{}\<[E]%
\\
\>[3]{}\mathrel{=}\mbox{\commentbegin ~  let \ensuremath{\Varid{s'}\mathrel{=}(\Varid{s}_{1},\Varid{s}_{2})}   \commentend}{}\<[E]%
\\
\>[3]{}\hsindent{3}{}\<[6]%
\>[6]{}\Conid{StateT}\mathbin{\$}\lambda (\Varid{s}_{1},\Varid{s}_{2})\to \Varid{run_{StateT}}{}\<[E]%
\\
\>[6]{}\hsindent{2}{}\<[8]%
\>[8]{}((\lambda (\anonymous ,\Varid{s}_{2})\to \Varid{h_{State}}\;(\Conid{Op}\;(\Conid{Inl}\;(\Conid{Put}\;(\Varid{s},\Varid{s}_{2})\;(\Varid{states2state}\;\Varid{k})))))\;(\Varid{s}_{1},\Varid{s}_{2}))\;(\Varid{s}_{1},\Varid{s}_{2}){}\<[E]%
\\
\>[3]{}\mathrel{=}\mbox{\commentbegin ~  function application   \commentend}{}\<[E]%
\\
\>[3]{}\hsindent{3}{}\<[6]%
\>[6]{}\Conid{StateT}\mathbin{\$}\lambda (\Varid{s}_{1},\Varid{s}_{2})\to \Varid{run_{StateT}}{}\<[E]%
\\
\>[6]{}\hsindent{2}{}\<[8]%
\>[8]{}(\Varid{h_{State}}\;(\Conid{Op}\;(\Conid{Inl}\;(\Conid{Put}\;(\Varid{s},\Varid{s}_{2})\;(\Varid{states2state}\;\Varid{k})))))\;(\Varid{s}_{1},\Varid{s}_{2}){}\<[E]%
\\
\>[3]{}\mathrel{=}\mbox{\commentbegin ~  definition of \ensuremath{\Varid{h_{State}}}   \commentend}{}\<[E]%
\\
\>[3]{}\hsindent{3}{}\<[6]%
\>[6]{}\Conid{StateT}\mathbin{\$}\lambda (\Varid{s}_{1},\Varid{s}_{2})\to \Varid{run_{StateT}}{}\<[E]%
\\
\>[6]{}\hsindent{2}{}\<[8]%
\>[8]{}(\Conid{StateT}\mathbin{\$}\lambda \anonymous \to \Varid{run_{StateT}}\;(\Varid{h_{State}}\;(\Varid{states2state}\;\Varid{k}))\;(\Varid{s},\Varid{s}_{2}))\;(\Varid{s}_{1},\Varid{s}_{2}){}\<[E]%
\\
\>[3]{}\mathrel{=}\mbox{\commentbegin ~  definition of \ensuremath{\Varid{run_{StateT}}}   \commentend}{}\<[E]%
\\
\>[3]{}\hsindent{3}{}\<[6]%
\>[6]{}\Conid{StateT}\mathbin{\$}\lambda (\Varid{s}_{1},\Varid{s}_{2})\to (\lambda \anonymous \to \Varid{run_{StateT}}\;(\Varid{h_{State}}\;(\Varid{states2state}\;\Varid{k}))\;(\Varid{s},\Varid{s}_{2}))\;(\Varid{s}_{1},\Varid{s}_{2}){}\<[E]%
\\
\>[3]{}\mathrel{=}\mbox{\commentbegin ~  function application   \commentend}{}\<[E]%
\\
\>[3]{}\hsindent{3}{}\<[6]%
\>[6]{}\Conid{StateT}\mathbin{\$}\lambda (\Varid{s}_{1},\Varid{s}_{2})\to \Varid{run_{StateT}}\;(\Varid{h_{State}}\;(\Varid{states2state}\;\Varid{k}))\;(\Varid{s},\Varid{s}_{2}){}\<[E]%
\\
\>[3]{}\mathrel{=}\mbox{\commentbegin ~  induction hypothesis   \commentend}{}\<[E]%
\\
\>[3]{}\hsindent{3}{}\<[6]%
\>[6]{}\Conid{StateT}\mathbin{\$}\lambda (\Varid{s}_{1},\Varid{s}_{2})\to \Varid{run_{StateT}}\;(\Varid{h_{States}^\prime}\;\Varid{k})\;(\Varid{s},\Varid{s}_{2}){}\<[E]%
\\
\>[3]{}\mathrel{=}\mbox{\commentbegin ~  definition of \ensuremath{\Varid{h_{States}^\prime}}  \commentend}{}\<[E]%
\\
\>[3]{}\hsindent{3}{}\<[6]%
\>[6]{}\Conid{StateT}\mathbin{\$}\lambda (\Varid{s}_{1},\Varid{s}_{2})\to \Varid{run_{StateT}}\;(\Conid{StateT}\mathbin{\$}\lambda (\Varid{s}_{1},\Varid{s}_{2})\to \alpha\mathbin{\langle\hspace{1.6pt}\mathclap{\raisebox{0.1pt}{\scalebox{1}{\$}}}\hspace{1.6pt}\rangle}{}\<[E]%
\\
\>[6]{}\hsindent{3}{}\<[9]%
\>[9]{}\Varid{run_{StateT}}\;(\Varid{h_{State}}\;(\Varid{run_{StateT}}\;(\Varid{h_{State}}\;\Varid{k})\;\Varid{s}_{1}))\;\Varid{s}_{2})\;(\Varid{s},\Varid{s}_{2}){}\<[E]%
\\
\>[3]{}\mathrel{=}\mbox{\commentbegin ~  definition of \ensuremath{\Varid{run_{StateT}}}  \commentend}{}\<[E]%
\\
\>[3]{}\hsindent{3}{}\<[6]%
\>[6]{}\Conid{StateT}\mathbin{\$}\lambda (\Varid{s}_{1},\Varid{s}_{2})\to (\lambda (\Varid{s}_{1},\Varid{s}_{2}){}\<[42]%
\>[42]{}\to {}\<[42E]%
\>[46]{}\alpha\mathbin{\langle\hspace{1.6pt}\mathclap{\raisebox{0.1pt}{\scalebox{1}{\$}}}\hspace{1.6pt}\rangle}{}\<[E]%
\\
\>[6]{}\hsindent{2}{}\<[8]%
\>[8]{}\Varid{run_{StateT}}\;(\Varid{h_{State}}\;(\Varid{run_{StateT}}\;(\Varid{h_{State}}\;\Varid{k})\;\Varid{s}_{1}))\;\Varid{s}_{2})\;(\Varid{s},\Varid{s}_{2}){}\<[E]%
\\
\>[3]{}\mathrel{=}\mbox{\commentbegin ~  function application  \commentend}{}\<[E]%
\\
\>[3]{}\hsindent{3}{}\<[6]%
\>[6]{}\Conid{StateT}\mathbin{\$}\lambda (\Varid{s}_{1},\Varid{s}_{2})\to \alpha\mathbin{\langle\hspace{1.6pt}\mathclap{\raisebox{0.1pt}{\scalebox{1}{\$}}}\hspace{1.6pt}\rangle}{}\<[E]%
\\
\>[6]{}\hsindent{2}{}\<[8]%
\>[8]{}\Varid{run_{StateT}}\;(\Varid{h_{State}}\;(\Varid{run_{StateT}}\;(\Varid{h_{State}}\;\Varid{k})\;\Varid{s}))\;\Varid{s}_{2}{}\<[E]%
\\
\>[3]{}\mathrel{=}\mbox{\commentbegin ~  \ensuremath{\Varid{\beta}}-expansion  \commentend}{}\<[E]%
\\
\>[3]{}\hsindent{3}{}\<[6]%
\>[6]{}\Conid{StateT}\mathbin{\$}\lambda (\Varid{s}_{1},\Varid{s}_{2}){}\<[27]%
\>[27]{}\to {}\<[27E]%
\>[31]{}\alpha\mathbin{\langle\hspace{1.6pt}\mathclap{\raisebox{0.1pt}{\scalebox{1}{\$}}}\hspace{1.6pt}\rangle}{}\<[E]%
\\
\>[6]{}\hsindent{2}{}\<[8]%
\>[8]{}\Varid{run_{StateT}}\;(\Varid{h_{State}}\;((\lambda \Varid{s'}\to \Varid{run_{StateT}}\;(\Varid{h_{State}}\;\Varid{k})\;\Varid{s})\;\Varid{s}_{1}))\;\Varid{s}_{2}{}\<[E]%
\\
\>[3]{}\mathrel{=}\mbox{\commentbegin ~  definition of \ensuremath{\Varid{run_{StateT}}}  \commentend}{}\<[E]%
\\
\>[3]{}\hsindent{3}{}\<[6]%
\>[6]{}\Conid{StateT}\mathbin{\$}\lambda (\Varid{s}_{1},\Varid{s}_{2}){}\<[27]%
\>[27]{}\to {}\<[27E]%
\>[31]{}\alpha\mathbin{\langle\hspace{1.6pt}\mathclap{\raisebox{0.1pt}{\scalebox{1}{\$}}}\hspace{1.6pt}\rangle}{}\<[E]%
\\
\>[6]{}\hsindent{2}{}\<[8]%
\>[8]{}\Varid{run_{StateT}}\;(\Varid{h_{State}}\;(\Varid{run_{StateT}}\;(\Conid{StateT}\mathbin{\$}\lambda \Varid{s'}\to \Varid{run_{StateT}}\;(\Varid{h_{State}}\;\Varid{k})\;\Varid{s})\;\Varid{s}_{1}))\;\Varid{s}_{2}{}\<[E]%
\\
\>[3]{}\mathrel{=}\mbox{\commentbegin ~  definition of \ensuremath{\Varid{h_{State}}}   \commentend}{}\<[E]%
\\
\>[3]{}\hsindent{3}{}\<[6]%
\>[6]{}\Conid{StateT}\mathbin{\$}\lambda (\Varid{s}_{1},\Varid{s}_{2}){}\<[27]%
\>[27]{}\to {}\<[27E]%
\>[31]{}\alpha\mathbin{\langle\hspace{1.6pt}\mathclap{\raisebox{0.1pt}{\scalebox{1}{\$}}}\hspace{1.6pt}\rangle}{}\<[E]%
\\
\>[6]{}\hsindent{2}{}\<[8]%
\>[8]{}\Varid{run_{StateT}}\;(\Varid{h_{State}}\;(\Varid{run_{StateT}}\;(\Varid{h_{State}}\;(\Conid{Op}\;(\Conid{Inl}\;(\Conid{Put}\;\Varid{s}\;\Varid{k}))))\;\Varid{s}_{1}))\;\Varid{s}_{2}{}\<[E]%
\\
\>[3]{}\mathrel{=}\mbox{\commentbegin ~  definition of \ensuremath{\Varid{h_{States}^\prime}}   \commentend}{}\<[E]%
\\
\>[3]{}\hsindent{3}{}\<[6]%
\>[6]{}\Varid{h_{States}^\prime}\;(\Conid{Op}\;(\Conid{Inl}\;(\Conid{Put}\;\Varid{s}\;\Varid{k}))){}\<[E]%
\ColumnHook
\end{hscode}\resethooks
\indentend \noindent \mbox{\underline{case \ensuremath{\Varid{t}\mathrel{=}\Conid{Op}\;(\Conid{Inr}\;(\Conid{Inl}\;(\Conid{Get}\;\Varid{k})))}}}

Induction hypothesis: \ensuremath{\Varid{h_{States}^\prime}\;(\Varid{k}\;\Varid{s})\mathrel{=}(\Varid{h_{State}}\hsdot{\circ }{.}\Varid{states2state})\;(\Varid{k}\;\Varid{s})}
for any \ensuremath{\Varid{s}}.
\indentbegin \begin{hscode}\SaveRestoreHook
\column{B}{@{}>{\hspre}l<{\hspost}@{}}%
\column{3}{@{}>{\hspre}l<{\hspost}@{}}%
\column{6}{@{}>{\hspre}l<{\hspost}@{}}%
\column{8}{@{}>{\hspre}l<{\hspost}@{}}%
\column{27}{@{}>{\hspre}c<{\hspost}@{}}%
\column{27E}{@{}l@{}}%
\column{30}{@{}>{\hspre}l<{\hspost}@{}}%
\column{31}{@{}>{\hspre}l<{\hspost}@{}}%
\column{42}{@{}>{\hspre}c<{\hspost}@{}}%
\column{42E}{@{}l@{}}%
\column{46}{@{}>{\hspre}l<{\hspost}@{}}%
\column{61}{@{}>{\hspre}c<{\hspost}@{}}%
\column{61E}{@{}l@{}}%
\column{65}{@{}>{\hspre}l<{\hspost}@{}}%
\column{E}{@{}>{\hspre}l<{\hspost}@{}}%
\>[6]{}(\Varid{h_{State}}\hsdot{\circ }{.}\Varid{states2state})\;(\Conid{Op}\;(\Conid{Inr}\;(\Conid{Inl}\;(\Conid{Get}\;\Varid{k})))){}\<[E]%
\\
\>[3]{}\mathrel{=}\mbox{\commentbegin ~  definition of \ensuremath{\Varid{states2state}}   \commentend}{}\<[E]%
\\
\>[3]{}\hsindent{3}{}\<[6]%
\>[6]{}\Varid{h_{State}}\mathbin{\$}\Varid{get}>\!\!>\!\!=\lambda (\anonymous ,{}\<[30]%
\>[30]{}\Varid{s}_{2})\to \Varid{states2state}\;(\Varid{k}\;\Varid{s}_{2}){}\<[E]%
\\
\>[3]{}\mathrel{=}\mbox{\commentbegin ~  definition of \ensuremath{\Varid{get}}   \commentend}{}\<[E]%
\\
\>[3]{}\hsindent{3}{}\<[6]%
\>[6]{}\Varid{h_{State}}\mathbin{\$}\Conid{Op}\;(\Conid{Inl}\;(\Conid{Get}\;\Varid{\eta}))>\!\!>\!\!=\lambda (\anonymous ,\Varid{s}_{2})\to \Varid{states2state}\;(\Varid{k}\;\Varid{s}_{2}){}\<[E]%
\\
\>[3]{}\mathrel{=}\mbox{\commentbegin ~  definition of \ensuremath{(>\!\!>\!\!=)} for free monad   \commentend}{}\<[E]%
\\
\>[3]{}\hsindent{3}{}\<[6]%
\>[6]{}\Varid{h_{State}}\;(\Conid{Op}\;(\Conid{Inl}\;(\Conid{Get}\;(\lambda (\anonymous ,\Varid{s}_{2})\to \Varid{states2state}\;(\Varid{k}\;\Varid{s}_{2}))))){}\<[E]%
\\
\>[3]{}\mathrel{=}\mbox{\commentbegin ~  definition of \ensuremath{\Varid{h_{State}}}   \commentend}{}\<[E]%
\\
\>[3]{}\hsindent{3}{}\<[6]%
\>[6]{}\Conid{StateT}\mathbin{\$}\lambda \Varid{s}\to \Varid{run_{StateT}}\;((\lambda (\anonymous ,\Varid{s}_{2})\to \Varid{h_{State}}\;(\Varid{states2state}\;(\Varid{k}\;\Varid{s}_{2})))\;\Varid{s})\;\Varid{s}{}\<[E]%
\\
\>[3]{}\mathrel{=}\mbox{\commentbegin ~  let \ensuremath{\Varid{s}\mathrel{=}(\Varid{s}_{1},\Varid{s}_{2})}   \commentend}{}\<[E]%
\\
\>[3]{}\hsindent{3}{}\<[6]%
\>[6]{}\Conid{StateT}\mathbin{\$}\lambda (\Varid{s}_{1},\Varid{s}_{2})\to \Varid{run_{StateT}}\;((\lambda (\anonymous ,\Varid{s}_{2})\to \Varid{h_{State}}\;(\Varid{states2state}\;(\Varid{k}\;\Varid{s}_{2})))\;(\Varid{s}_{1},\Varid{s}_{2}))\;(\Varid{s}_{1},\Varid{s}_{2}){}\<[E]%
\\
\>[3]{}\mathrel{=}\mbox{\commentbegin ~  function application   \commentend}{}\<[E]%
\\
\>[3]{}\hsindent{3}{}\<[6]%
\>[6]{}\Conid{StateT}\mathbin{\$}\lambda (\Varid{s}_{1},\Varid{s}_{2})\to \Varid{run_{StateT}}\;(\Varid{h_{State}}\;(\Varid{states2state}\;(\Varid{k}\;\Varid{s}_{2})))\;(\Varid{s}_{1},\Varid{s}_{2}){}\<[E]%
\\
\>[3]{}\mathrel{=}\mbox{\commentbegin ~  induction hypothesis   \commentend}{}\<[E]%
\\
\>[3]{}\hsindent{3}{}\<[6]%
\>[6]{}\Conid{StateT}\mathbin{\$}\lambda (\Varid{s}_{1},\Varid{s}_{2})\to \Varid{run_{StateT}}\;(\Varid{h_{States}^\prime}\;(\Varid{k}\;\Varid{s}_{2}))\;(\Varid{s}_{1},\Varid{s}_{2}){}\<[E]%
\\
\>[3]{}\mathrel{=}\mbox{\commentbegin ~  definition of \ensuremath{\Varid{h_{States}^\prime}}  \commentend}{}\<[E]%
\\
\>[3]{}\hsindent{3}{}\<[6]%
\>[6]{}\Conid{StateT}\mathbin{\$}\lambda (\Varid{s}_{1},\Varid{s}_{2})\to \Varid{run_{StateT}}\;(\Conid{StateT}\mathbin{\$}\lambda (\Varid{s}_{1},\Varid{s}_{2}){}\<[61]%
\>[61]{}\to {}\<[61E]%
\>[65]{}\alpha\mathbin{\langle\hspace{1.6pt}\mathclap{\raisebox{0.1pt}{\scalebox{1}{\$}}}\hspace{1.6pt}\rangle}{}\<[E]%
\\
\>[6]{}\hsindent{2}{}\<[8]%
\>[8]{}\Varid{run_{StateT}}\;(\Varid{h_{State}}\;(\Varid{run_{StateT}}\;(\Varid{h_{State}}\;(\Varid{k}\;\Varid{s}_{2}))\;\Varid{s}_{1}))\;\Varid{s}_{2})\;(\Varid{s}_{1},\Varid{s}_{2}){}\<[E]%
\\
\>[3]{}\mathrel{=}\mbox{\commentbegin ~  definition of \ensuremath{\Varid{run_{StateT}}}  \commentend}{}\<[E]%
\\
\>[3]{}\hsindent{3}{}\<[6]%
\>[6]{}\Conid{StateT}\mathbin{\$}\lambda (\Varid{s}_{1},\Varid{s}_{2})\to (\lambda (\Varid{s}_{1},\Varid{s}_{2}){}\<[42]%
\>[42]{}\to {}\<[42E]%
\>[46]{}\alpha\mathbin{\langle\hspace{1.6pt}\mathclap{\raisebox{0.1pt}{\scalebox{1}{\$}}}\hspace{1.6pt}\rangle}{}\<[E]%
\\
\>[6]{}\hsindent{2}{}\<[8]%
\>[8]{}\Varid{run_{StateT}}\;(\Varid{h_{State}}\;(\Varid{run_{StateT}}\;(\Varid{h_{State}}\;(\Varid{k}\;\Varid{s}_{2}))\;\Varid{s}_{1}))\;\Varid{s}_{2})\;(\Varid{s}_{1},\Varid{s}_{2}){}\<[E]%
\\
\>[3]{}\mathrel{=}\mbox{\commentbegin ~  function application  \commentend}{}\<[E]%
\\
\>[3]{}\hsindent{3}{}\<[6]%
\>[6]{}\Conid{StateT}\mathbin{\$}\lambda (\Varid{s}_{1},\Varid{s}_{2}){}\<[27]%
\>[27]{}\to {}\<[27E]%
\>[31]{}\alpha\mathbin{\langle\hspace{1.6pt}\mathclap{\raisebox{0.1pt}{\scalebox{1}{\$}}}\hspace{1.6pt}\rangle}{}\<[E]%
\\
\>[6]{}\hsindent{2}{}\<[8]%
\>[8]{}\Varid{run_{StateT}}\;(\Varid{h_{State}}\;(\Varid{run_{StateT}}\;(\Varid{h_{State}}\;(\Varid{k}\;\Varid{s}_{2}))\;\Varid{s}_{1}))\;\Varid{s}_{2}{}\<[E]%
\\
\>[3]{}\mathrel{=}\mbox{\commentbegin ~  reformulation   \commentend}{}\<[E]%
\\
\>[3]{}\hsindent{3}{}\<[6]%
\>[6]{}\Conid{StateT}\mathbin{\$}\lambda (\Varid{s}_{1},\Varid{s}_{2}){}\<[27]%
\>[27]{}\to {}\<[27E]%
\>[31]{}\alpha\mathbin{\langle\hspace{1.6pt}\mathclap{\raisebox{0.1pt}{\scalebox{1}{\$}}}\hspace{1.6pt}\rangle}{}\<[E]%
\\
\>[6]{}\hsindent{2}{}\<[8]%
\>[8]{}(\Varid{run_{StateT}}\;((\Varid{h_{State}}\hsdot{\circ }{.}(\lambda \Varid{k}\to \Varid{run_{StateT}}\;\Varid{k}\;\Varid{s}_{1})\hsdot{\circ }{.}\Varid{h_{State}}\hsdot{\circ }{.}\Varid{k})\;\Varid{s}_{2})\;\Varid{s}_{2}){}\<[E]%
\\
\>[3]{}\mathrel{=}\mbox{\commentbegin ~  \ensuremath{\Varid{\beta}}-expansion   \commentend}{}\<[E]%
\\
\>[3]{}\hsindent{3}{}\<[6]%
\>[6]{}\Conid{StateT}\mathbin{\$}\lambda (\Varid{s}_{1},\Varid{s}_{2}){}\<[27]%
\>[27]{}\to {}\<[27E]%
\>[31]{}\alpha\mathbin{\langle\hspace{1.6pt}\mathclap{\raisebox{0.1pt}{\scalebox{1}{\$}}}\hspace{1.6pt}\rangle}{}\<[E]%
\\
\>[6]{}\hsindent{2}{}\<[8]%
\>[8]{}(\lambda \Varid{s}\to \Varid{run_{StateT}}\;((\Varid{h_{State}}\hsdot{\circ }{.}(\lambda \Varid{k}\to \Varid{run_{StateT}}\;\Varid{k}\;\Varid{s}_{1})\hsdot{\circ }{.}\Varid{h_{State}}\hsdot{\circ }{.}\Varid{k})\;\Varid{s})\;\Varid{s})\;\Varid{s}_{2}{}\<[E]%
\\
\>[3]{}\mathrel{=}\mbox{\commentbegin ~  definition of \ensuremath{\Varid{run_{StateT}}}  \commentend}{}\<[E]%
\\
\>[3]{}\hsindent{3}{}\<[6]%
\>[6]{}\Conid{StateT}\mathbin{\$}\lambda (\Varid{s}_{1},\Varid{s}_{2}){}\<[27]%
\>[27]{}\to {}\<[27E]%
\>[31]{}\alpha\mathbin{\langle\hspace{1.6pt}\mathclap{\raisebox{0.1pt}{\scalebox{1}{\$}}}\hspace{1.6pt}\rangle}{}\<[E]%
\\
\>[6]{}\hsindent{2}{}\<[8]%
\>[8]{}\Varid{run_{StateT}}\;(\Conid{StateT}\mathbin{\$}\lambda \Varid{s}\to \Varid{run_{StateT}}\;((\Varid{h_{State}}\hsdot{\circ }{.}(\lambda \Varid{k}\to \Varid{run_{StateT}}\;\Varid{k}\;\Varid{s}_{1})\hsdot{\circ }{.}\Varid{h_{State}}\hsdot{\circ }{.}\Varid{k})\;\Varid{s})\;\Varid{s})\;\Varid{s}_{2}{}\<[E]%
\\
\>[3]{}\mathrel{=}\mbox{\commentbegin ~  definition of \ensuremath{\Varid{h_{State}}}   \commentend}{}\<[E]%
\\
\>[3]{}\hsindent{3}{}\<[6]%
\>[6]{}\Conid{StateT}\mathbin{\$}\lambda (\Varid{s}_{1},\Varid{s}_{2}){}\<[27]%
\>[27]{}\to {}\<[27E]%
\>[31]{}\alpha\mathbin{\langle\hspace{1.6pt}\mathclap{\raisebox{0.1pt}{\scalebox{1}{\$}}}\hspace{1.6pt}\rangle}{}\<[E]%
\\
\>[6]{}\hsindent{2}{}\<[8]%
\>[8]{}\Varid{run_{StateT}}\;(\Varid{h_{State}}\;(\Conid{Op}\;(\Conid{Inl}\;(\Conid{Get}\;((\lambda \Varid{k}\to \Varid{run_{StateT}}\;\Varid{k}\;\Varid{s}_{1})\hsdot{\circ }{.}\Varid{h_{State}}\hsdot{\circ }{.}\Varid{k})))))\;\Varid{s}_{2}{}\<[E]%
\\
\>[3]{}\mathrel{=}\mbox{\commentbegin ~  definition of \ensuremath{\Varid{fmap}}   \commentend}{}\<[E]%
\\
\>[3]{}\hsindent{3}{}\<[6]%
\>[6]{}\Conid{StateT}\mathbin{\$}\lambda (\Varid{s}_{1},\Varid{s}_{2}){}\<[27]%
\>[27]{}\to {}\<[27E]%
\>[31]{}\alpha\mathbin{\langle\hspace{1.6pt}\mathclap{\raisebox{0.1pt}{\scalebox{1}{\$}}}\hspace{1.6pt}\rangle}{}\<[E]%
\\
\>[6]{}\hsindent{2}{}\<[8]%
\>[8]{}\Varid{run_{StateT}}\;(\Varid{h_{State}}\;(\Conid{Op}\mathbin{\$}\Varid{fmap}\;(\lambda \Varid{k}\to \Varid{run_{StateT}}\;\Varid{k}\;\Varid{s}_{1})\;(\Conid{Inl}\;(\Conid{Get}\;(\Varid{h_{State}}\hsdot{\circ }{.}\Varid{k})))))\;\Varid{s}_{2}{}\<[E]%
\\
\>[3]{}\mathrel{=}\mbox{\commentbegin ~  \ensuremath{\Varid{\beta}}-expansion   \commentend}{}\<[E]%
\\
\>[3]{}\hsindent{3}{}\<[6]%
\>[6]{}\Conid{StateT}\mathbin{\$}\lambda (\Varid{s}_{1},\Varid{s}_{2}){}\<[27]%
\>[27]{}\to {}\<[27E]%
\>[31]{}\alpha\mathbin{\langle\hspace{1.6pt}\mathclap{\raisebox{0.1pt}{\scalebox{1}{\$}}}\hspace{1.6pt}\rangle}{}\<[E]%
\\
\>[6]{}\hsindent{2}{}\<[8]%
\>[8]{}\Varid{run_{StateT}}\;(\Varid{h_{State}}\;((\lambda \Varid{s}\to \Conid{Op}\mathbin{\$}\Varid{fmap}\;(\lambda \Varid{k}\to \Varid{run_{StateT}}\;\Varid{k}\;\Varid{s})\;(\Conid{Inl}\;(\Conid{Get}\;(\Varid{h_{State}}\hsdot{\circ }{.}\Varid{k}))))\;\Varid{s}_{1}))\;\Varid{s}_{2}{}\<[E]%
\\
\>[3]{}\mathrel{=}\mbox{\commentbegin ~  definition of \ensuremath{\Varid{run_{StateT}}}   \commentend}{}\<[E]%
\\
\>[3]{}\hsindent{3}{}\<[6]%
\>[6]{}\Conid{StateT}\mathbin{\$}\lambda (\Varid{s}_{1},\Varid{s}_{2}){}\<[27]%
\>[27]{}\to {}\<[27E]%
\>[31]{}\alpha\mathbin{\langle\hspace{1.6pt}\mathclap{\raisebox{0.1pt}{\scalebox{1}{\$}}}\hspace{1.6pt}\rangle}\Varid{run_{StateT}}\;(\Varid{h_{State}}{}\<[E]%
\\
\>[6]{}\hsindent{2}{}\<[8]%
\>[8]{}(\Varid{run_{StateT}}\;(\Conid{StateT}\mathbin{\$}\lambda \Varid{s}\to \Conid{Op}\mathbin{\$}\Varid{fmap}\;(\lambda \Varid{k}\to \Varid{run_{StateT}}\;\Varid{k}\;\Varid{s})\;(\Conid{Inl}\;(\Conid{Get}\;(\Varid{h_{State}}\hsdot{\circ }{.}\Varid{k}))))\;\Varid{s}_{1}))\;\Varid{s}_{2}{}\<[E]%
\\
\>[3]{}\mathrel{=}\mbox{\commentbegin ~  definition of \ensuremath{\Varid{h_{State}}}   \commentend}{}\<[E]%
\\
\>[3]{}\hsindent{3}{}\<[6]%
\>[6]{}\Conid{StateT}\mathbin{\$}\lambda (\Varid{s}_{1},\Varid{s}_{2}){}\<[27]%
\>[27]{}\to {}\<[27E]%
\>[31]{}\alpha\mathbin{\langle\hspace{1.6pt}\mathclap{\raisebox{0.1pt}{\scalebox{1}{\$}}}\hspace{1.6pt}\rangle}{}\<[E]%
\\
\>[6]{}\hsindent{2}{}\<[8]%
\>[8]{}\Varid{run_{StateT}}\;(\Varid{h_{State}}\;(\Varid{run_{StateT}}\;(\Varid{h_{State}}\;(\Conid{Op}\;(\Conid{Inr}\;(\Conid{Inl}\;(\Conid{Get}\;\Varid{k})))))\;\Varid{s}_{1}))\;\Varid{s}_{2}{}\<[E]%
\\
\>[3]{}\mathrel{=}\mbox{\commentbegin ~  definition of \ensuremath{\Varid{h_{States}^\prime}}   \commentend}{}\<[E]%
\\
\>[3]{}\hsindent{3}{}\<[6]%
\>[6]{}\Varid{h_{States}^\prime}\;(\Conid{Op}\;(\Conid{Inr}\;(\Conid{Inl}\;(\Conid{Get}\;\Varid{k})))){}\<[E]%
\ColumnHook
\end{hscode}\resethooks
\indentend \noindent \mbox{\underline{case \ensuremath{\Varid{t}\mathrel{=}\Conid{Op}\;(\Conid{Inr}\;(\Conid{Inl}\;(\Conid{Put}\;\Varid{s}\;\Varid{k})))}}}

Induction hypothesis: \ensuremath{\Varid{h_{States}^\prime}\;\Varid{k}\mathrel{=}(\Varid{h_{State}}\hsdot{\circ }{.}\Varid{states2state})\;\Varid{k}}.
\indentbegin \begin{hscode}\SaveRestoreHook
\column{B}{@{}>{\hspre}l<{\hspost}@{}}%
\column{3}{@{}>{\hspre}l<{\hspost}@{}}%
\column{6}{@{}>{\hspre}l<{\hspost}@{}}%
\column{8}{@{}>{\hspre}l<{\hspost}@{}}%
\column{27}{@{}>{\hspre}c<{\hspost}@{}}%
\column{27E}{@{}l@{}}%
\column{31}{@{}>{\hspre}l<{\hspost}@{}}%
\column{42}{@{}>{\hspre}c<{\hspost}@{}}%
\column{42E}{@{}l@{}}%
\column{46}{@{}>{\hspre}l<{\hspost}@{}}%
\column{E}{@{}>{\hspre}l<{\hspost}@{}}%
\>[6]{}(\Varid{h_{State}}\hsdot{\circ }{.}\Varid{states2state})\;(\Conid{Op}\;(\Conid{Inr}\;(\Conid{Inl}\;(\Conid{Put}\;\Varid{s}\;\Varid{k})))){}\<[E]%
\\
\>[3]{}\mathrel{=}\mbox{\commentbegin ~  definition of \ensuremath{\Varid{states2state}}   \commentend}{}\<[E]%
\\
\>[3]{}\hsindent{3}{}\<[6]%
\>[6]{}\Varid{h_{State}}\mathbin{\$}\Varid{get}>\!\!>\!\!=\lambda (\Varid{s}_{1},\anonymous )\to \Varid{put}\;(\Varid{s}_{1},\Varid{s})>\!\!>(\Varid{states2state}\;\Varid{k}){}\<[E]%
\\
\>[3]{}\mathrel{=}\mbox{\commentbegin ~  definition of \ensuremath{\Varid{get}} and \ensuremath{\Varid{put}}   \commentend}{}\<[E]%
\\
\>[3]{}\hsindent{3}{}\<[6]%
\>[6]{}\Varid{h_{State}}\mathbin{\$}\Conid{Op}\;(\Conid{Inl}\;(\Conid{Get}\;\Varid{\eta}))>\!\!>\!\!=\lambda (\Varid{s}_{1},\anonymous )\to \Conid{Op}\;(\Conid{Inl}\;(\Conid{Put}\;(\Varid{s}_{1},\Varid{s})\;(\Varid{\eta}\;())))>\!\!>(\Varid{states2state}\;\Varid{k}){}\<[E]%
\\
\>[3]{}\mathrel{=}\mbox{\commentbegin ~  definition of \ensuremath{(>\!\!>\!\!=)} and \ensuremath{(>\!\!>)}   \commentend}{}\<[E]%
\\
\>[3]{}\hsindent{3}{}\<[6]%
\>[6]{}\Varid{h_{State}}\mathbin{\$}\Conid{Op}\;(\Conid{Inl}\;(\Conid{Get}\;(\lambda (\Varid{s}_{1},\anonymous )\to \Conid{Op}\;(\Conid{Inl}\;(\Conid{Put}\;(\Varid{s}_{1},\Varid{s})\;(\Varid{states2state}\;\Varid{k})))))){}\<[E]%
\\
\>[3]{}\mathrel{=}\mbox{\commentbegin ~  definition of \ensuremath{\Varid{h_{State}}}   \commentend}{}\<[E]%
\\
\>[3]{}\hsindent{3}{}\<[6]%
\>[6]{}\Conid{StateT}\mathbin{\$}\lambda \Varid{s'}\to \Varid{run_{StateT}}{}\<[E]%
\\
\>[6]{}\hsindent{2}{}\<[8]%
\>[8]{}((\lambda (\Varid{s}_{1},\anonymous )\to \Varid{h_{State}}\;(\Conid{Op}\;(\Conid{Inl}\;(\Conid{Put}\;(\Varid{s}_{1},\Varid{s})\;(\Varid{states2state}\;\Varid{k})))))\;\Varid{s'})\;\Varid{s'}{}\<[E]%
\\
\>[3]{}\mathrel{=}\mbox{\commentbegin ~  let \ensuremath{\Varid{s'}\mathrel{=}(\Varid{s}_{1},\Varid{s}_{2})}   \commentend}{}\<[E]%
\\
\>[3]{}\hsindent{3}{}\<[6]%
\>[6]{}\Conid{StateT}\mathbin{\$}\lambda (\Varid{s}_{1},\Varid{s}_{2})\to \Varid{run_{StateT}}{}\<[E]%
\\
\>[6]{}\hsindent{2}{}\<[8]%
\>[8]{}((\lambda (\Varid{s}_{1},\anonymous )\to \Varid{h_{State}}\;(\Conid{Op}\;(\Conid{Inl}\;(\Conid{Put}\;(\Varid{s}_{1},\Varid{s})\;(\Varid{states2state}\;\Varid{k})))))\;(\Varid{s}_{1},\Varid{s}_{2}))\;(\Varid{s}_{1},\Varid{s}_{2}){}\<[E]%
\\
\>[3]{}\mathrel{=}\mbox{\commentbegin ~  function application   \commentend}{}\<[E]%
\\
\>[3]{}\hsindent{3}{}\<[6]%
\>[6]{}\Conid{StateT}\mathbin{\$}\lambda (\Varid{s}_{1},\Varid{s}_{2})\to \Varid{run_{StateT}}{}\<[E]%
\\
\>[6]{}\hsindent{2}{}\<[8]%
\>[8]{}(\Varid{h_{State}}\;(\Conid{Op}\;(\Conid{Inl}\;(\Conid{Put}\;(\Varid{s}_{1},\Varid{s})\;(\Varid{states2state}\;\Varid{k})))))\;(\Varid{s}_{1},\Varid{s}_{2}){}\<[E]%
\\
\>[3]{}\mathrel{=}\mbox{\commentbegin ~  definition of \ensuremath{\Varid{h_{State}}}   \commentend}{}\<[E]%
\\
\>[3]{}\hsindent{3}{}\<[6]%
\>[6]{}\Conid{StateT}\mathbin{\$}\lambda (\Varid{s}_{1},\Varid{s}_{2})\to \Varid{run_{StateT}}\;(\Conid{StateT}\mathbin{\$}{}\<[E]%
\\
\>[6]{}\hsindent{2}{}\<[8]%
\>[8]{}\lambda \Varid{s'}\to \Varid{run_{StateT}}\;(\Varid{h_{State}}\;(\Varid{states2state}\;\Varid{k}))\;(\Varid{s}_{1},\Varid{s}))\;(\Varid{s}_{1},\Varid{s}_{2}){}\<[E]%
\\
\>[3]{}\mathrel{=}\mbox{\commentbegin ~  definition of \ensuremath{\Varid{run_{StateT}}}   \commentend}{}\<[E]%
\\
\>[3]{}\hsindent{3}{}\<[6]%
\>[6]{}\Conid{StateT}\mathbin{\$}\lambda (\Varid{s}_{1},\Varid{s}_{2})\to (\lambda \Varid{s'}\to \Varid{run_{StateT}}\;(\Varid{h_{State}}\;(\Varid{states2state}\;\Varid{k}))\;(\Varid{s}_{1},\Varid{s}))\;(\Varid{s}_{1},\Varid{s}_{2}){}\<[E]%
\\
\>[3]{}\mathrel{=}\mbox{\commentbegin ~  function application   \commentend}{}\<[E]%
\\
\>[3]{}\hsindent{3}{}\<[6]%
\>[6]{}\Conid{StateT}\mathbin{\$}\lambda (\Varid{s}_{1},\Varid{s}_{2})\to \Varid{run_{StateT}}\;(\Varid{h_{State}}\;(\Varid{states2state}\;\Varid{k}))\;(\Varid{s}_{1},\Varid{s}){}\<[E]%
\\
\>[3]{}\mathrel{=}\mbox{\commentbegin ~  induction hypothesis   \commentend}{}\<[E]%
\\
\>[3]{}\hsindent{3}{}\<[6]%
\>[6]{}\Conid{StateT}\mathbin{\$}\lambda (\Varid{s}_{1},\Varid{s}_{2})\to \Varid{run_{StateT}}\;(\Varid{h_{States}^\prime}\;\Varid{k})\;(\Varid{s}_{1},\Varid{s}){}\<[E]%
\\
\>[3]{}\mathrel{=}\mbox{\commentbegin ~  definition of \ensuremath{\Varid{h_{States}^\prime}}  \commentend}{}\<[E]%
\\
\>[3]{}\hsindent{3}{}\<[6]%
\>[6]{}\Conid{StateT}\mathbin{\$}\lambda (\Varid{s}_{1},\Varid{s}_{2})\to \Varid{run_{StateT}}\;(\Conid{StateT}\mathbin{\$}\lambda (\Varid{s}_{1},\Varid{s}_{2})\to \alpha\mathbin{\langle\hspace{1.6pt}\mathclap{\raisebox{0.1pt}{\scalebox{1}{\$}}}\hspace{1.6pt}\rangle}{}\<[E]%
\\
\>[6]{}\hsindent{2}{}\<[8]%
\>[8]{}\Varid{run_{StateT}}\;(\Varid{h_{State}}\;(\Varid{run_{StateT}}\;(\Varid{h_{State}}\;\Varid{k})\;\Varid{s}_{1}))\;\Varid{s}_{2})\;(\Varid{s}_{1},\Varid{s}){}\<[E]%
\\
\>[3]{}\mathrel{=}\mbox{\commentbegin ~  definition of \ensuremath{\Varid{run_{StateT}}}  \commentend}{}\<[E]%
\\
\>[3]{}\hsindent{3}{}\<[6]%
\>[6]{}\Conid{StateT}\mathbin{\$}\lambda (\Varid{s}_{1},\Varid{s}_{2})\to (\lambda (\Varid{s}_{1},\Varid{s}_{2}){}\<[42]%
\>[42]{}\to {}\<[42E]%
\>[46]{}\alpha\mathbin{\langle\hspace{1.6pt}\mathclap{\raisebox{0.1pt}{\scalebox{1}{\$}}}\hspace{1.6pt}\rangle}{}\<[E]%
\\
\>[6]{}\hsindent{2}{}\<[8]%
\>[8]{}\Varid{run_{StateT}}\;(\Varid{h_{State}}\;(\Varid{run_{StateT}}\;(\Varid{h_{State}}\;\Varid{k})\;\Varid{s}_{1}))\;\Varid{s}_{2})\;(\Varid{s}_{1},\Varid{s}){}\<[E]%
\\
\>[3]{}\mathrel{=}\mbox{\commentbegin ~  function application  \commentend}{}\<[E]%
\\
\>[3]{}\hsindent{3}{}\<[6]%
\>[6]{}\Conid{StateT}\mathbin{\$}\lambda (\Varid{s}_{1},\Varid{s}_{2}){}\<[27]%
\>[27]{}\to {}\<[27E]%
\>[31]{}\alpha\mathbin{\langle\hspace{1.6pt}\mathclap{\raisebox{0.1pt}{\scalebox{1}{\$}}}\hspace{1.6pt}\rangle}{}\<[E]%
\\
\>[6]{}\hsindent{2}{}\<[8]%
\>[8]{}\Varid{run_{StateT}}\;(\Varid{h_{State}}\;(\Varid{run_{StateT}}\;(\Varid{h_{State}}\;\Varid{k})\;\Varid{s}_{1}))\;\Varid{s}{}\<[E]%
\\
\>[3]{}\mathrel{=}\mbox{\commentbegin ~  \ensuremath{\Varid{\beta}}-expansion  \commentend}{}\<[E]%
\\
\>[3]{}\hsindent{3}{}\<[6]%
\>[6]{}\Conid{StateT}\mathbin{\$}\lambda (\Varid{s}_{1},\Varid{s}_{2}){}\<[27]%
\>[27]{}\to {}\<[27E]%
\>[31]{}\alpha\mathbin{\langle\hspace{1.6pt}\mathclap{\raisebox{0.1pt}{\scalebox{1}{\$}}}\hspace{1.6pt}\rangle}{}\<[E]%
\\
\>[6]{}\hsindent{2}{}\<[8]%
\>[8]{}(\lambda \Varid{s'}\to \Varid{run_{StateT}}\;(\Varid{h_{State}}\;(\Varid{run_{StateT}}\;(\Varid{h_{State}}\;\Varid{k})\;\Varid{s}_{1}))\;\Varid{s})\;\Varid{s}_{2}{}\<[E]%
\\
\>[3]{}\mathrel{=}\mbox{\commentbegin ~  definition of \ensuremath{\Varid{run_{StateT}}}  \commentend}{}\<[E]%
\\
\>[3]{}\hsindent{3}{}\<[6]%
\>[6]{}\Conid{StateT}\mathbin{\$}\lambda (\Varid{s}_{1},\Varid{s}_{2}){}\<[27]%
\>[27]{}\to {}\<[27E]%
\>[31]{}\alpha\mathbin{\langle\hspace{1.6pt}\mathclap{\raisebox{0.1pt}{\scalebox{1}{\$}}}\hspace{1.6pt}\rangle}{}\<[E]%
\\
\>[6]{}\hsindent{2}{}\<[8]%
\>[8]{}\Varid{run_{StateT}}\;(\Conid{StateT}\mathbin{\$}\lambda \Varid{s'}\to \Varid{run_{StateT}}\;(\Varid{h_{State}}\;(\Varid{run_{StateT}}\;(\Varid{h_{State}}\;\Varid{k})\;\Varid{s}_{1}))\;\Varid{s})\;\Varid{s}_{2}{}\<[E]%
\\
\>[3]{}\mathrel{=}\mbox{\commentbegin ~  definition of \ensuremath{\Varid{h_{State}}}  \commentend}{}\<[E]%
\\
\>[3]{}\hsindent{3}{}\<[6]%
\>[6]{}\Conid{StateT}\mathbin{\$}\lambda (\Varid{s}_{1},\Varid{s}_{2}){}\<[27]%
\>[27]{}\to {}\<[27E]%
\>[31]{}\alpha\mathbin{\langle\hspace{1.6pt}\mathclap{\raisebox{0.1pt}{\scalebox{1}{\$}}}\hspace{1.6pt}\rangle}{}\<[E]%
\\
\>[6]{}\hsindent{2}{}\<[8]%
\>[8]{}\Varid{run_{StateT}}\;(\Varid{h_{State}}\;(\Conid{Op}\;(\Conid{Inl}\;(\Conid{Put}\;\Varid{s}\;(\Varid{run_{StateT}}\;(\Varid{h_{State}}\;\Varid{k})\;\Varid{s}_{1})))))\;\Varid{s}_{2}{}\<[E]%
\\
\>[3]{}\mathrel{=}\mbox{\commentbegin ~  reformulation  \commentend}{}\<[E]%
\\
\>[3]{}\hsindent{3}{}\<[6]%
\>[6]{}\Conid{StateT}\mathbin{\$}\lambda (\Varid{s}_{1},\Varid{s}_{2}){}\<[27]%
\>[27]{}\to {}\<[27E]%
\>[31]{}\alpha\mathbin{\langle\hspace{1.6pt}\mathclap{\raisebox{0.1pt}{\scalebox{1}{\$}}}\hspace{1.6pt}\rangle}{}\<[E]%
\\
\>[6]{}\hsindent{2}{}\<[8]%
\>[8]{}\Varid{run_{StateT}}\;(\Varid{h_{State}}\;(\Conid{Op}\;(\Conid{Inl}\;(\Conid{Put}\;\Varid{s}\;((\lambda \Varid{k}\to \Varid{run_{StateT}}\;\Varid{k}\;\Varid{s}_{1})\;(\Varid{h_{State}}\;\Varid{k}))))))\;\Varid{s}_{2}{}\<[E]%
\\
\>[3]{}\mathrel{=}\mbox{\commentbegin ~  definition of \ensuremath{\Varid{fmap}}  \commentend}{}\<[E]%
\\
\>[3]{}\hsindent{3}{}\<[6]%
\>[6]{}\Conid{StateT}\mathbin{\$}\lambda (\Varid{s}_{1},\Varid{s}_{2}){}\<[27]%
\>[27]{}\to {}\<[27E]%
\>[31]{}\alpha\mathbin{\langle\hspace{1.6pt}\mathclap{\raisebox{0.1pt}{\scalebox{1}{\$}}}\hspace{1.6pt}\rangle}{}\<[E]%
\\
\>[6]{}\hsindent{2}{}\<[8]%
\>[8]{}\Varid{run_{StateT}}\;(\Varid{h_{State}}\;(\Conid{Op}\mathbin{\$}\Varid{fmap}\;(\lambda \Varid{k}\to \Varid{run_{StateT}}\;\Varid{k}\;\Varid{s}_{1})\;(\Conid{Inl}\;(\Conid{Put}\;\Varid{s}\;(\Varid{h_{State}}\;\Varid{k})))))\;\Varid{s}_{2}{}\<[E]%
\\
\>[3]{}\mathrel{=}\mbox{\commentbegin ~  \ensuremath{\Varid{\beta}}-expansion  \commentend}{}\<[E]%
\\
\>[3]{}\hsindent{3}{}\<[6]%
\>[6]{}\Conid{StateT}\mathbin{\$}\lambda (\Varid{s}_{1},\Varid{s}_{2}){}\<[27]%
\>[27]{}\to {}\<[27E]%
\>[31]{}\alpha\mathbin{\langle\hspace{1.6pt}\mathclap{\raisebox{0.1pt}{\scalebox{1}{\$}}}\hspace{1.6pt}\rangle}\Varid{run_{StateT}}{}\<[E]%
\\
\>[6]{}\hsindent{2}{}\<[8]%
\>[8]{}(\Varid{h_{State}}\;((\lambda \Varid{s'}\to \Conid{Op}\mathbin{\$}\Varid{fmap}\;(\lambda \Varid{k}\to \Varid{run_{StateT}}\;\Varid{k}\;\Varid{s'})\;(\Conid{Inl}\;(\Conid{Put}\;\Varid{s}\;(\Varid{h_{State}}\;\Varid{k}))))\;\Varid{s}_{1}))\;\Varid{s}_{2}{}\<[E]%
\\
\>[3]{}\mathrel{=}\mbox{\commentbegin ~  definition of \ensuremath{\Varid{run_{StateT}}}   \commentend}{}\<[E]%
\\
\>[3]{}\hsindent{3}{}\<[6]%
\>[6]{}\Conid{StateT}\mathbin{\$}\lambda (\Varid{s}_{1},\Varid{s}_{2}){}\<[27]%
\>[27]{}\to {}\<[27E]%
\>[31]{}\alpha\mathbin{\langle\hspace{1.6pt}\mathclap{\raisebox{0.1pt}{\scalebox{1}{\$}}}\hspace{1.6pt}\rangle}\Varid{run_{StateT}}\;(\Varid{h_{State}}\;(\Varid{run_{StateT}}\;(\Conid{StateT}\mathbin{\$}{}\<[E]%
\\
\>[6]{}\hsindent{2}{}\<[8]%
\>[8]{}\lambda \Varid{s'}\to \Conid{Op}\mathbin{\$}\Varid{fmap}\;(\lambda \Varid{k}\to \Varid{run_{StateT}}\;\Varid{k}\;\Varid{s'})\;(\Conid{Inl}\;(\Conid{Put}\;\Varid{s}\;(\Varid{h_{State}}\;\Varid{k}))))\;\Varid{s}_{1}))\;\Varid{s}_{2}{}\<[E]%
\\
\>[3]{}\mathrel{=}\mbox{\commentbegin ~  definition of \ensuremath{\Varid{h_{State}}}   \commentend}{}\<[E]%
\\
\>[3]{}\hsindent{3}{}\<[6]%
\>[6]{}\Conid{StateT}\mathbin{\$}\lambda (\Varid{s}_{1},\Varid{s}_{2}){}\<[27]%
\>[27]{}\to {}\<[27E]%
\>[31]{}\alpha\mathbin{\langle\hspace{1.6pt}\mathclap{\raisebox{0.1pt}{\scalebox{1}{\$}}}\hspace{1.6pt}\rangle}{}\<[E]%
\\
\>[6]{}\hsindent{2}{}\<[8]%
\>[8]{}\Varid{run_{StateT}}\;(\Varid{h_{State}}\;(\Varid{run_{StateT}}\;(\Varid{h_{State}}\;(\Conid{Op}\;(\Conid{Inr}\;(\Conid{Inl}\;(\Conid{Put}\;\Varid{s}\;\Varid{k})))))\;\Varid{s}_{1}))\;\Varid{s}_{2}{}\<[E]%
\\
\>[3]{}\mathrel{=}\mbox{\commentbegin ~  definition of \ensuremath{\Varid{h_{States}^\prime}}   \commentend}{}\<[E]%
\\
\>[3]{}\hsindent{3}{}\<[6]%
\>[6]{}\Varid{h_{States}^\prime}\;(\Conid{Op}\;(\Conid{Inr}\;(\Conid{Inl}\;(\Conid{Put}\;\Varid{s}\;\Varid{k})))){}\<[E]%
\ColumnHook
\end{hscode}\resethooks
\indentend \noindent \mbox{\underline{case \ensuremath{\Varid{t}\mathrel{=}\Conid{Op}\;(\Conid{Inr}\;(\Conid{Inr}\;\Varid{y}))}}}

Induction hypothesis: \ensuremath{\Varid{h_{States}^\prime}\;\Varid{y}\mathrel{=}(\Varid{h_{State}}\hsdot{\circ }{.}\Varid{states2state})\;\Varid{y}}.
\indentbegin \begin{hscode}\SaveRestoreHook
\column{B}{@{}>{\hspre}l<{\hspost}@{}}%
\column{3}{@{}>{\hspre}l<{\hspost}@{}}%
\column{6}{@{}>{\hspre}l<{\hspost}@{}}%
\column{8}{@{}>{\hspre}l<{\hspost}@{}}%
\column{9}{@{}>{\hspre}l<{\hspost}@{}}%
\column{22}{@{}>{\hspre}c<{\hspost}@{}}%
\column{22E}{@{}l@{}}%
\column{26}{@{}>{\hspre}l<{\hspost}@{}}%
\column{27}{@{}>{\hspre}c<{\hspost}@{}}%
\column{27E}{@{}l@{}}%
\column{31}{@{}>{\hspre}l<{\hspost}@{}}%
\column{E}{@{}>{\hspre}l<{\hspost}@{}}%
\>[6]{}(\Varid{h_{State}}\hsdot{\circ }{.}\Varid{states2state})\;(\Conid{Op}\;(\Conid{Inr}\;(\Conid{Inr}\;\Varid{y}))){}\<[E]%
\\
\>[3]{}\mathrel{=}\mbox{\commentbegin ~  definition of \ensuremath{\Varid{states2state}}   \commentend}{}\<[E]%
\\
\>[3]{}\hsindent{3}{}\<[6]%
\>[6]{}\Varid{h_{State}}\mathbin{\$}\Conid{Op}\;(\Conid{Inr}\;(\Varid{fmap}\;\Varid{states2state}\;\Varid{y})){}\<[E]%
\\
\>[3]{}\mathrel{=}\mbox{\commentbegin ~  definition of \ensuremath{\Varid{h_{State}}}   \commentend}{}\<[E]%
\\
\>[3]{}\hsindent{3}{}\<[6]%
\>[6]{}\Varid{fwd_{S}}\;(\Varid{fmap}\;(\Varid{h_{State}}\hsdot{\circ }{.}\Varid{states2state})\;\Varid{y}){}\<[E]%
\\
\>[3]{}\mathrel{=}\mbox{\commentbegin ~  induction hypothesis   \commentend}{}\<[E]%
\\
\>[3]{}\hsindent{3}{}\<[6]%
\>[6]{}\Varid{fwd_{S}}\;(\Varid{fmap}\;\Varid{h_{States}^\prime}\;\Varid{y}){}\<[E]%
\\
\>[3]{}\mathrel{=}\mbox{\commentbegin ~  definition of \ensuremath{\Varid{h_{States}^\prime}}   \commentend}{}\<[E]%
\\
\>[3]{}\hsindent{3}{}\<[6]%
\>[6]{}\Varid{fwd_{S}}\;(\Varid{fmap}\;(\lambda \Varid{t}\to \Conid{StateT}{}\<[E]%
\\
\>[6]{}\hsindent{2}{}\<[8]%
\>[8]{}\mathbin{\$}\lambda (\Varid{s}_{1},\Varid{s}_{2}){}\<[22]%
\>[22]{}\to {}\<[22E]%
\>[26]{}\alpha\mathbin{\langle\hspace{1.6pt}\mathclap{\raisebox{0.1pt}{\scalebox{1}{\$}}}\hspace{1.6pt}\rangle}\Varid{run_{StateT}}\;(\Varid{h_{State}}\;(\Varid{run_{StateT}}\;(\Varid{h_{State}}\;\Varid{t})\;\Varid{s}_{1}))\;\Varid{s}_{2})\;\Varid{y}){}\<[E]%
\\
\>[3]{}\mathrel{=}\mbox{\commentbegin ~  definition of \ensuremath{\Varid{fwd_{S}}}   \commentend}{}\<[E]%
\\
\>[3]{}\hsindent{3}{}\<[6]%
\>[6]{}\Conid{StateT}\mathbin{\$}\lambda \Varid{s}\to \Conid{Op}\mathbin{\$}\Varid{fmap}\;(\lambda \Varid{k}\to \Varid{run_{StateT}}\;\Varid{k}\;\Varid{s})\;(\Varid{fmap}\;(\lambda \Varid{t}\to \Conid{StateT}{}\<[E]%
\\
\>[6]{}\hsindent{2}{}\<[8]%
\>[8]{}\mathbin{\$}\lambda (\Varid{s}_{1},\Varid{s}_{2})\to \alpha\mathbin{\langle\hspace{1.6pt}\mathclap{\raisebox{0.1pt}{\scalebox{1}{\$}}}\hspace{1.6pt}\rangle}\Varid{run_{StateT}}\;(\Varid{h_{State}}\;(\Varid{run_{StateT}}\;(\Varid{h_{State}}\;\Varid{t})\;\Varid{s}_{1}))\;\Varid{s}_{2})\;\Varid{y}){}\<[E]%
\\
\>[3]{}\mathrel{=}\mbox{\commentbegin ~  \Cref{eq:functor-composition}   \commentend}{}\<[E]%
\\
\>[3]{}\hsindent{3}{}\<[6]%
\>[6]{}\Conid{StateT}\mathbin{\$}\lambda \Varid{s}\to \Conid{Op}\;(\Varid{fmap}\;((\lambda \Varid{k}\to \Varid{run_{StateT}}\;\Varid{k}\;\Varid{s})\hsdot{\circ }{.}(\lambda \Varid{t}\to \Conid{StateT}{}\<[E]%
\\
\>[6]{}\hsindent{2}{}\<[8]%
\>[8]{}\mathbin{\$}\lambda (\Varid{s}_{1},\Varid{s}_{2})\to \alpha\mathbin{\langle\hspace{1.6pt}\mathclap{\raisebox{0.1pt}{\scalebox{1}{\$}}}\hspace{1.6pt}\rangle}\Varid{run_{StateT}}\;(\Varid{h_{State}}\;(\Varid{run_{StateT}}\;(\Varid{h_{State}}\;\Varid{t})\;\Varid{s}_{1}))\;\Varid{s}_{2}))\;\Varid{y}){}\<[E]%
\\
\>[3]{}\mathrel{=}\mbox{\commentbegin ~  reformulation   \commentend}{}\<[E]%
\\
\>[3]{}\hsindent{3}{}\<[6]%
\>[6]{}\Conid{StateT}\mathbin{\$}\lambda \Varid{s}\to \Conid{Op}\;(\Varid{fmap}\;(\lambda \Varid{t}\to \Varid{run_{StateT}}\;(\Conid{StateT}{}\<[E]%
\\
\>[6]{}\hsindent{2}{}\<[8]%
\>[8]{}\mathbin{\$}\lambda (\Varid{s}_{1},\Varid{s}_{2})\to \alpha\mathbin{\langle\hspace{1.6pt}\mathclap{\raisebox{0.1pt}{\scalebox{1}{\$}}}\hspace{1.6pt}\rangle}\Varid{run_{StateT}}\;(\Varid{h_{State}}\;(\Varid{run_{StateT}}\;(\Varid{h_{State}}\;\Varid{t})\;\Varid{s}_{1}))\;\Varid{s}_{2})\;\Varid{s})\;\Varid{y}){}\<[E]%
\\
\>[3]{}\mathrel{=}\mbox{\commentbegin ~  definition of \ensuremath{\Varid{run_{StateT}}}   \commentend}{}\<[E]%
\\
\>[3]{}\hsindent{3}{}\<[6]%
\>[6]{}\Conid{StateT}\mathbin{\$}\lambda \Varid{s}\to \Conid{Op}\;(\Varid{fmap}\;(\lambda \Varid{t}\to {}\<[E]%
\\
\>[6]{}\hsindent{2}{}\<[8]%
\>[8]{}(\lambda (\Varid{s}_{1},\Varid{s}_{2})\to \alpha\mathbin{\langle\hspace{1.6pt}\mathclap{\raisebox{0.1pt}{\scalebox{1}{\$}}}\hspace{1.6pt}\rangle}\Varid{run_{StateT}}\;(\Varid{h_{State}}\;(\Varid{run_{StateT}}\;(\Varid{h_{State}}\;\Varid{t})\;\Varid{s}_{1}))\;\Varid{s}_{2})\;\Varid{s})\;\Varid{y}){}\<[E]%
\\
\>[3]{}\mathrel{=}\mbox{\commentbegin ~  let \ensuremath{\Varid{s}\mathrel{=}(\Varid{s}_{1},\Varid{s}_{2})}   \commentend}{}\<[E]%
\\
\>[3]{}\hsindent{3}{}\<[6]%
\>[6]{}\Conid{StateT}\mathbin{\$}\lambda (\Varid{s}_{1},\Varid{s}_{2})\to \Conid{Op}\;(\Varid{fmap}\;(\lambda \Varid{t}\to {}\<[E]%
\\
\>[6]{}\hsindent{2}{}\<[8]%
\>[8]{}(\lambda (\Varid{s}_{1},\Varid{s}_{2})\to \alpha\mathbin{\langle\hspace{1.6pt}\mathclap{\raisebox{0.1pt}{\scalebox{1}{\$}}}\hspace{1.6pt}\rangle}\Varid{run_{StateT}}\;(\Varid{h_{State}}\;(\Varid{run_{StateT}}\;(\Varid{h_{State}}\;\Varid{t})\;\Varid{s}_{1}))\;\Varid{s}_{2})\;(\Varid{s}_{1},\Varid{s}_{2}))\;\Varid{y}){}\<[E]%
\\
\>[3]{}\mathrel{=}\mbox{\commentbegin ~  function application   \commentend}{}\<[E]%
\\
\>[3]{}\hsindent{3}{}\<[6]%
\>[6]{}\Conid{StateT}\mathbin{\$}\lambda (\Varid{s}_{1},\Varid{s}_{2})\to \Conid{Op}\;(\Varid{fmap}\;(\lambda \Varid{t}\to \alpha\mathbin{\langle\hspace{1.6pt}\mathclap{\raisebox{0.1pt}{\scalebox{1}{\$}}}\hspace{1.6pt}\rangle}{}\<[E]%
\\
\>[6]{}\hsindent{3}{}\<[9]%
\>[9]{}\Varid{run_{StateT}}\;(\Varid{h_{State}}\;(\Varid{run_{StateT}}\;(\Varid{h_{State}}\;\Varid{t})\;\Varid{s}_{1}))\;\Varid{s}_{2})\;\Varid{y}){}\<[E]%
\\
\>[3]{}\mathrel{=}\mbox{\commentbegin ~  reformulation   \commentend}{}\<[E]%
\\
\>[3]{}\hsindent{3}{}\<[6]%
\>[6]{}\Conid{StateT}\mathbin{\$}\lambda (\Varid{s}_{1},\Varid{s}_{2})\to \Conid{Op}\;(\Varid{fmap}\;(\alpha\mathbin{\langle\hspace{1.6pt}\mathclap{\raisebox{0.1pt}{\scalebox{1}{\$}}}\hspace{1.6pt}\rangle}{}\<[E]%
\\
\>[6]{}\hsindent{2}{}\<[8]%
\>[8]{}\hsdot{\circ }{.}(\lambda \Varid{k}\to \Varid{run_{StateT}}\;\Varid{k}\;\Varid{s}_{2})\hsdot{\circ }{.}\Varid{h_{State}}\hsdot{\circ }{.}(\lambda \Varid{k}\to \Varid{run_{StateT}}\;\Varid{k}\;\Varid{s}_{1})\hsdot{\circ }{.}\Varid{h_{State}})\;\Varid{y}){}\<[E]%
\\
\>[3]{}\mathrel{=}\mbox{\commentbegin ~  definition of \ensuremath{\mathbin{\langle\hspace{1.6pt}\mathclap{\raisebox{0.1pt}{\scalebox{1}{\$}}}\hspace{1.6pt}\rangle}}   \commentend}{}\<[E]%
\\
\>[3]{}\hsindent{3}{}\<[6]%
\>[6]{}\Conid{StateT}\mathbin{\$}\lambda (\Varid{s}_{1},\Varid{s}_{2}){}\<[27]%
\>[27]{}\to {}\<[27E]%
\>[31]{}\alpha\mathbin{\langle\hspace{1.6pt}\mathclap{\raisebox{0.1pt}{\scalebox{1}{\$}}}\hspace{1.6pt}\rangle}{}\<[E]%
\\
\>[6]{}\hsindent{2}{}\<[8]%
\>[8]{}\Conid{Op}\;(\Varid{fmap}\;((\lambda \Varid{k}\to \Varid{run_{StateT}}\;\Varid{k}\;\Varid{s}_{2})\hsdot{\circ }{.}\Varid{h_{State}}\hsdot{\circ }{.}(\lambda \Varid{k}\to \Varid{run_{StateT}}\;\Varid{k}\;\Varid{s}_{1})\hsdot{\circ }{.}\Varid{h_{State}})\;\Varid{y}){}\<[E]%
\\
\>[3]{}\mathrel{=}\mbox{\commentbegin ~  \Cref{eq:functor-composition}   \commentend}{}\<[E]%
\\
\>[3]{}\hsindent{3}{}\<[6]%
\>[6]{}\Conid{StateT}\mathbin{\$}\lambda (\Varid{s}_{1},\Varid{s}_{2}){}\<[27]%
\>[27]{}\to {}\<[27E]%
\>[31]{}\alpha\mathbin{\langle\hspace{1.6pt}\mathclap{\raisebox{0.1pt}{\scalebox{1}{\$}}}\hspace{1.6pt}\rangle}{}\<[E]%
\\
\>[6]{}\hsindent{2}{}\<[8]%
\>[8]{}\Conid{Op}\mathbin{\$}\Varid{fmap}\;(\lambda \Varid{k}\to \Varid{run_{StateT}}\;\Varid{k}\;\Varid{s}_{2})\;(\Varid{fmap}\;(\Varid{h_{State}}\hsdot{\circ }{.}(\lambda \Varid{k}\to \Varid{run_{StateT}}\;\Varid{k}\;\Varid{s}_{1})\hsdot{\circ }{.}\Varid{h_{State}})\;\Varid{y}){}\<[E]%
\\
\>[3]{}\mathrel{=}\mbox{\commentbegin ~  \ensuremath{\Varid{\beta}}-expansion   \commentend}{}\<[E]%
\\
\>[3]{}\hsindent{3}{}\<[6]%
\>[6]{}\Conid{StateT}\mathbin{\$}\lambda (\Varid{s}_{1},\Varid{s}_{2}){}\<[27]%
\>[27]{}\to {}\<[27E]%
\>[31]{}\alpha\mathbin{\langle\hspace{1.6pt}\mathclap{\raisebox{0.1pt}{\scalebox{1}{\$}}}\hspace{1.6pt}\rangle}(\lambda \Varid{s}\to \Conid{Op}\mathbin{\$}{}\<[E]%
\\
\>[6]{}\hsindent{2}{}\<[8]%
\>[8]{}\Varid{fmap}\;(\lambda \Varid{k}\to \Varid{run_{StateT}}\;\Varid{k}\;\Varid{s})\;(\Varid{fmap}\;(\Varid{h_{State}}\hsdot{\circ }{.}(\lambda \Varid{k}\to \Varid{run_{StateT}}\;\Varid{k}\;\Varid{s}_{1})\hsdot{\circ }{.}\Varid{h_{State}})\;\Varid{y}))\;\Varid{s}_{2}{}\<[E]%
\\
\>[3]{}\mathrel{=}\mbox{\commentbegin ~  definition of \ensuremath{\Varid{run_{StateT}}}   \commentend}{}\<[E]%
\\
\>[3]{}\hsindent{3}{}\<[6]%
\>[6]{}\Conid{StateT}\mathbin{\$}\lambda (\Varid{s}_{1},\Varid{s}_{2}){}\<[27]%
\>[27]{}\to {}\<[27E]%
\>[31]{}\alpha\mathbin{\langle\hspace{1.6pt}\mathclap{\raisebox{0.1pt}{\scalebox{1}{\$}}}\hspace{1.6pt}\rangle}\Varid{run_{StateT}}\;(\Conid{StateT}\mathbin{\$}\lambda \Varid{s}\to \Conid{Op}\mathbin{\$}{}\<[E]%
\\
\>[6]{}\hsindent{2}{}\<[8]%
\>[8]{}\Varid{fmap}\;(\lambda \Varid{k}\to \Varid{run_{StateT}}\;\Varid{k}\;\Varid{s})\;(\Varid{fmap}\;(\Varid{h_{State}}\hsdot{\circ }{.}(\lambda \Varid{k}\to \Varid{run_{StateT}}\;\Varid{k}\;\Varid{s}_{1})\hsdot{\circ }{.}\Varid{h_{State}})\;\Varid{y}))\;\Varid{s}_{2}{}\<[E]%
\\
\>[3]{}\mathrel{=}\mbox{\commentbegin ~  definition of \ensuremath{\Varid{h_{State}}}   \commentend}{}\<[E]%
\\
\>[3]{}\hsindent{3}{}\<[6]%
\>[6]{}\Conid{StateT}\mathbin{\$}\lambda (\Varid{s}_{1},\Varid{s}_{2}){}\<[27]%
\>[27]{}\to {}\<[27E]%
\>[31]{}\alpha\mathbin{\langle\hspace{1.6pt}\mathclap{\raisebox{0.1pt}{\scalebox{1}{\$}}}\hspace{1.6pt}\rangle}\Varid{run_{StateT}}{}\<[E]%
\\
\>[6]{}\hsindent{2}{}\<[8]%
\>[8]{}(\Varid{h_{State}}\;(\Conid{Op}\;(\Conid{Inr}\;(\Varid{fmap}\;((\lambda \Varid{k}\to \Varid{run_{StateT}}\;\Varid{k}\;\Varid{s}_{1})\hsdot{\circ }{.}\Varid{h_{State}})\;\Varid{y}))))\;\Varid{s}_{2}{}\<[E]%
\\
\>[3]{}\mathrel{=}\mbox{\commentbegin ~  \Cref{eq:functor-composition}   \commentend}{}\<[E]%
\\
\>[3]{}\hsindent{3}{}\<[6]%
\>[6]{}\Conid{StateT}\mathbin{\$}\lambda (\Varid{s}_{1},\Varid{s}_{2}){}\<[27]%
\>[27]{}\to {}\<[27E]%
\>[31]{}\alpha\mathbin{\langle\hspace{1.6pt}\mathclap{\raisebox{0.1pt}{\scalebox{1}{\$}}}\hspace{1.6pt}\rangle}\Varid{run_{StateT}}{}\<[E]%
\\
\>[6]{}\hsindent{2}{}\<[8]%
\>[8]{}(\Varid{h_{State}}\;(\Conid{Op}\;(\Conid{Inr}\;(\Varid{fmap}\;(\lambda \Varid{k}\to \Varid{run_{StateT}}\;\Varid{k}\;\Varid{s}_{1})\;(\Varid{fmap}\;\Varid{h_{State}}\;\Varid{y})))))\;\Varid{s}_{2}{}\<[E]%
\\
\>[3]{}\mathrel{=}\mbox{\commentbegin ~  definition of \ensuremath{\Varid{fmap}}  \commentend}{}\<[E]%
\\
\>[3]{}\hsindent{3}{}\<[6]%
\>[6]{}\Conid{StateT}\mathbin{\$}\lambda (\Varid{s}_{1},\Varid{s}_{2}){}\<[27]%
\>[27]{}\to {}\<[27E]%
\>[31]{}\alpha\mathbin{\langle\hspace{1.6pt}\mathclap{\raisebox{0.1pt}{\scalebox{1}{\$}}}\hspace{1.6pt}\rangle}\Varid{run_{StateT}}{}\<[E]%
\\
\>[6]{}\hsindent{2}{}\<[8]%
\>[8]{}(\Varid{h_{State}}\;(\Conid{Op}\mathbin{\$}\Varid{fmap}\;(\lambda \Varid{k}\to \Varid{run_{StateT}}\;\Varid{k}\;\Varid{s}_{1})\;(\Conid{Inr}\;(\Varid{fmap}\;\Varid{h_{State}}\;\Varid{y}))))\;\Varid{s}_{2}{}\<[E]%
\\
\>[3]{}\mathrel{=}\mbox{\commentbegin ~  \ensuremath{\Varid{\beta}}-expansion  \commentend}{}\<[E]%
\\
\>[3]{}\hsindent{3}{}\<[6]%
\>[6]{}\Conid{StateT}\mathbin{\$}\lambda (\Varid{s}_{1},\Varid{s}_{2}){}\<[27]%
\>[27]{}\to {}\<[27E]%
\>[31]{}\alpha\mathbin{\langle\hspace{1.6pt}\mathclap{\raisebox{0.1pt}{\scalebox{1}{\$}}}\hspace{1.6pt}\rangle}\Varid{run_{StateT}}\;(\Varid{h_{State}}{}\<[E]%
\\
\>[6]{}\hsindent{2}{}\<[8]%
\>[8]{}((\lambda \Varid{s}\to \Conid{Op}\mathbin{\$}\Varid{fmap}\;(\lambda \Varid{k}\to \Varid{run_{StateT}}\;\Varid{k}\;\Varid{s})\;(\Conid{Inr}\;(\Varid{fmap}\;\Varid{h_{State}}\;\Varid{y})))\;\Varid{s}_{1}))\;\Varid{s}_{2}{}\<[E]%
\\
\>[3]{}\mathrel{=}\mbox{\commentbegin ~  definition of \ensuremath{\Varid{run_{StateT}}}   \commentend}{}\<[E]%
\\
\>[3]{}\hsindent{3}{}\<[6]%
\>[6]{}\Conid{StateT}\mathbin{\$}\lambda (\Varid{s}_{1},\Varid{s}_{2}){}\<[27]%
\>[27]{}\to {}\<[27E]%
\>[31]{}\alpha\mathbin{\langle\hspace{1.6pt}\mathclap{\raisebox{0.1pt}{\scalebox{1}{\$}}}\hspace{1.6pt}\rangle}\Varid{run_{StateT}}\;(\Varid{h_{State}}\;(\Varid{run_{StateT}}\;(\Conid{StateT}\mathbin{\$}{}\<[E]%
\\
\>[6]{}\hsindent{2}{}\<[8]%
\>[8]{}\lambda \Varid{s}\to \Conid{Op}\mathbin{\$}\Varid{fmap}\;(\lambda \Varid{k}\to \Varid{run_{StateT}}\;\Varid{k}\;\Varid{s})\;(\Conid{Inr}\;(\Varid{fmap}\;\Varid{h_{State}}\;\Varid{y})))\;\Varid{s}_{1}))\;\Varid{s}_{2}{}\<[E]%
\\
\>[3]{}\mathrel{=}\mbox{\commentbegin ~  definition of \ensuremath{\Varid{h_{State}}}   \commentend}{}\<[E]%
\\
\>[3]{}\hsindent{3}{}\<[6]%
\>[6]{}\Conid{StateT}\mathbin{\$}\lambda (\Varid{s}_{1},\Varid{s}_{2}){}\<[27]%
\>[27]{}\to {}\<[27E]%
\>[31]{}\alpha\mathbin{\langle\hspace{1.6pt}\mathclap{\raisebox{0.1pt}{\scalebox{1}{\$}}}\hspace{1.6pt}\rangle}{}\<[E]%
\\
\>[6]{}\hsindent{2}{}\<[8]%
\>[8]{}\Varid{run_{StateT}}\;(\Varid{h_{State}}\;(\Varid{run_{StateT}}\;(\Varid{h_{State}}\;(\Conid{Op}\;(\Conid{Inr}\;(\Conid{Inr}\;\Varid{y}))))\;\Varid{s}_{1}))\;\Varid{s}_{2}{}\<[E]%
\\
\>[3]{}\mathrel{=}\mbox{\commentbegin ~  definition of \ensuremath{\Varid{h_{States}^\prime}}   \commentend}{}\<[E]%
\\
\>[3]{}\hsindent{3}{}\<[6]%
\>[6]{}\Varid{h_{States}^\prime}\;(\Conid{Op}\;(\Conid{Inr}\;(\Conid{Inr}\;\Varid{y}))){}\<[E]%
\ColumnHook
\end{hscode}\resethooks
\indentend \end{proof}

\section{Proofs for the All in One Simulation}
\label{app:final-simulate}

In this section, we prove the correctness of the final simulation in
\Cref{sec:final-simulate}.

\finalSimulate*

\begin{proof}
We calculate as follows, using all our three previous theorems
\Cref{thm:local-global}, \Cref{thm:nondet-state},
\Cref{thm:states-state}, and an auxiliary lemma \Cref{lemma:final1}.
\indentbegin \begin{hscode}\SaveRestoreHook
\column{B}{@{}>{\hspre}l<{\hspost}@{}}%
\column{3}{@{}>{\hspre}l<{\hspost}@{}}%
\column{6}{@{}>{\hspre}l<{\hspost}@{}}%
\column{E}{@{}>{\hspre}l<{\hspost}@{}}%
\>[6]{}\Varid{simulate}{}\<[E]%
\\
\>[3]{}\mathrel{=}\mbox{\commentbegin ~  definition of \ensuremath{\Varid{simulate}}   \commentend}{}\<[E]%
\\
\>[3]{}\hsindent{3}{}\<[6]%
\>[6]{}\Varid{extract}\hsdot{\circ }{.}\Varid{h_{State}}\hsdot{\circ }{.}\Varid{states2state}\hsdot{\circ }{.}\Varid{nondet2state}\hsdot{\circ }{.}\Varid{(\Leftrightarrow)}\hsdot{\circ }{.}\Varid{local2global}{}\<[E]%
\\
\>[3]{}\mathrel{=}\mbox{\commentbegin ~  \Cref{thm:states-state}   \commentend}{}\<[E]%
\\
\>[3]{}\hsindent{3}{}\<[6]%
\>[6]{}\Varid{extract}\hsdot{\circ }{.}\Varid{flatten}\hsdot{\circ }{.}\Varid{h_{States}}\hsdot{\circ }{.}\Varid{nondet2state}\hsdot{\circ }{.}\Varid{(\Leftrightarrow)}\hsdot{\circ }{.}\Varid{local2global}{}\<[E]%
\\
\>[3]{}\mathrel{=}\mbox{\commentbegin ~  \Cref{lemma:final1}   \commentend}{}\<[E]%
\\
\>[3]{}\hsindent{3}{}\<[6]%
\>[6]{}\Varid{fmap}\;(\Varid{fmap}\;\Varid{fst})\hsdot{\circ }{.}\Varid{run_{StateT}}\hsdot{\circ }{.}\Varid{h_{State}}\hsdot{\circ }{.}\Varid{extract_{SS}}\hsdot{\circ }{.}\Varid{h_{State}}\hsdot{\circ }{.}\Varid{nondet2state}\hsdot{\circ }{.}\Varid{(\Leftrightarrow)}\hsdot{\circ }{.}\Varid{local2global}{}\<[E]%
\\
\>[3]{}\mathrel{=}\mbox{\commentbegin ~  definition of \ensuremath{\Varid{run_{ND+f}}}   \commentend}{}\<[E]%
\\
\>[3]{}\hsindent{3}{}\<[6]%
\>[6]{}\Varid{fmap}\;(\Varid{fmap}\;\Varid{fst})\hsdot{\circ }{.}\Varid{run_{StateT}}\hsdot{\circ }{.}\Varid{h_{State}}\hsdot{\circ }{.}\Varid{run_{ND+f}}\hsdot{\circ }{.}\Varid{(\Leftrightarrow)}\hsdot{\circ }{.}\Varid{local2global}{}\<[E]%
\\
\>[3]{}\mathrel{=}\mbox{\commentbegin ~  \Cref{thm:nondet-state}   \commentend}{}\<[E]%
\\
\>[3]{}\hsindent{3}{}\<[6]%
\>[6]{}\Varid{fmap}\;(\Varid{fmap}\;\Varid{fst})\hsdot{\circ }{.}\Varid{run_{StateT}}\hsdot{\circ }{.}\Varid{h_{State}}\hsdot{\circ }{.}\Varid{h_{ND+f}}\hsdot{\circ }{.}\Varid{(\Leftrightarrow)}\hsdot{\circ }{.}\Varid{local2global}{}\<[E]%
\\
\>[3]{}\mathrel{=}\mbox{\commentbegin ~  definition of \ensuremath{\Varid{h_{Global}}}   \commentend}{}\<[E]%
\\
\>[3]{}\hsindent{3}{}\<[6]%
\>[6]{}\Varid{h_{Global}}\hsdot{\circ }{.}\Varid{local2global}{}\<[E]%
\\
\>[3]{}\mathrel{=}\mbox{\commentbegin ~  \Cref{thm:local-global}   \commentend}{}\<[E]%
\\
\>[3]{}\hsindent{3}{}\<[6]%
\>[6]{}\Varid{h_{Local}}{}\<[E]%
\ColumnHook
\end{hscode}\resethooks
\indentend \end{proof}

\begin{lemma}\label{lemma:final1}\indentbegin \begin{hscode}\SaveRestoreHook
\column{B}{@{}>{\hspre}l<{\hspost}@{}}%
\column{6}{@{}>{\hspre}l<{\hspost}@{}}%
\column{E}{@{}>{\hspre}l<{\hspost}@{}}%
\>[6]{}\Varid{extract}\hsdot{\circ }{.}\Varid{flatten}\hsdot{\circ }{.}\Varid{h_{States}}\mathrel{=}\Varid{fmap}\;(\Varid{fmap}\;\Varid{fst})\hsdot{\circ }{.}\Varid{run_{StateT}}\hsdot{\circ }{.}\Varid{h_{State}}\hsdot{\circ }{.}\Varid{extract_{SS}}\hsdot{\circ }{.}\Varid{h_{State}}{}\<[E]%
\ColumnHook
\end{hscode}\resethooks
\indentend \end{lemma}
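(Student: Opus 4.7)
The plan is to prove the equation by pointwise calculation: I apply both sides to an arbitrary computation \ensuremath{\Varid{t}\mathbin{::}\Conid{Free}\;(\Varid{State_{F}}\;(\Conid{SS}\;(\Varid{State_{F}}\;\Varid{s}\mathrel{{:}{+}{:}}\Varid{f})\;\Varid{a})\mathrel{{:}{+}{:}}\Varid{State_{F}}\;\Varid{s}\mathrel{{:}{+}{:}}\Varid{f})\;()} and an initial state \ensuremath{\Varid{s}}, and reduce both sides to the form \ensuremath{\Varid{fmap}\;\Varid{\varphi}\;\Varid{m}} for a common base expression \ensuremath{\Varid{m}\mathrel{=}\Varid{run_{StateT}}\;(\Varid{h_{State}}\;(\Varid{run_{StateT}}\;(\Varid{h_{State}}\;\Varid{t})\;(\Conid{SS}\;[\mskip1.5mu \mskip1.5mu]\;[\mskip1.5mu \mskip1.5mu])))\;\Varid{s}}, differing only in the pure wrapper \ensuremath{\Varid{\varphi}} that picks out the \ensuremath{\Conid{SS}}-component and applies \ensuremath{\Varid{results_{SS}}}.

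For the left-hand side, I unfold \ensuremath{\Varid{extract}} and \ensuremath{\Varid{flatten}}, and use the immediate observation \ensuremath{\Varid{run_{StateT}}\;(\Varid{h_{States}}\;\Varid{t})\;\Varid{s}_{1}\mathrel{=}\Varid{h_{State}}\;(\Varid{run_{StateT}}\;(\Varid{h_{State}}\;\Varid{t})\;\Varid{s}_{1})} from the definition of \ensuremath{\Varid{h_{States}}}. This reduces the LHS to \ensuremath{\Varid{fmap}\;(\Varid{results_{SS}}\hsdot{\circ }{.}\Varid{fst}\hsdot{\circ }{.}\Varid{snd}\hsdot{\circ }{.}\alpha)\;\Varid{m}}. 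For the right-hand side, I unfold \ensuremath{\Varid{extract_{SS}}} to expose an inner \ensuremath{\Varid{fmap}\;(\Varid{results_{SS}}\hsdot{\circ }{.}\Varid{snd})}, then push the surrounding \ensuremath{\Varid{h_{State}}} and \ensuremath{\Varid{run_{StateT}}} across it using two commutation facts: the naturality of \ensuremath{\Varid{h_{State}}} with respect to \ensuremath{\Varid{fmap}} (immediate from \ensuremath{\Varid{h_{State}}} being a \ensuremath{\Varid{fold}} over \ensuremath{\Conid{Free}}, by structural induction in the style of \Cref{lemma:dist-hState1}), and the corresponding fact for \ensuremath{\Varid{run_{StateT}}} that follows directly from the \ensuremath{\Conid{StateT}} newtype and its \ensuremath{\Varid{fmap}} instance. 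The outermost \ensuremath{\Varid{fmap}\;\Varid{fst}} then absorbs the pair-projection introduced by these commutations, leaving the RHS as \ensuremath{\Varid{fmap}\;(\Varid{results_{SS}}\hsdot{\circ }{.}\Varid{snd}\hsdot{\circ }{.}\Varid{fst})\;\Varid{m}}.

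A final direct tuple calculation confirms \ensuremath{\Varid{results_{SS}}\hsdot{\circ }{.}\Varid{fst}\hsdot{\circ }{.}\Varid{snd}\hsdot{\circ }{.}\alpha\mathrel{=}\Varid{results_{SS}}\hsdot{\circ }{.}\Varid{snd}\hsdot{\circ }{.}\Varid{fst}} on values of the shape \ensuremath{(((),\Varid{ss}),\Varid{s'})}: both routes project \ensuremath{\Varid{ss}} and hand it to \ensuremath{\Varid{results_{SS}}}, so the two wrappers agree. I expect the main obstacle to be articulating the naturality of \ensuremath{\Varid{h_{State}}} cleanly; although morally trivial for fold-defined handlers, it is not stated explicitly in the paper and will likely require a short structural induction analogous to \Cref{lemma:dist-hState1} (or, equivalently, an appeal to fold fusion with \ensuremath{\Varid{fmap}\;\Varid{g}} on the outside). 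Everything else collapses to mechanical unfolding of \ensuremath{\Varid{run_{StateT}}}, \ensuremath{\Varid{fmap}}, and the tuple isomorphism \ensuremath{\alpha}.
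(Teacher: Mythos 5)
Your proposal is correct and follows essentially the same route as the paper's proof: both sides are reduced pointwise to the common core $\mathit{run}_{\mathit{StateT}}\,(h_{\mathit{State}}\,(\mathit{run}_{\mathit{StateT}}\,(h_{\mathit{State}}\,t)\,(\mathit{SS}\;[\,]\;[\,])))\,s$ wrapped in pure projections, and the tuple identity $\mathit{fst}\circ\mathit{snd}\circ\alpha = \mathit{snd}\circ\mathit{fst}$ closes the gap. The only divergence is at the step you flag as the main obstacle: where you propose a structural induction (in the style of the distributivity lemma) to commute $\mathit{fmap}\,(\mathit{results}_{\mathit{SS}}\circ\mathit{snd})$ past the outer state handler, the paper instead packages $\mathit{fmap}\,\mathit{fst}\circ\mathit{flip}\,(\mathit{run}_{\mathit{StateT}}\circ h_{\mathit{State}})\,s$ as a natural transformation between free monads and appeals to parametricity ($\mathit{fmap}\,f\circ g = g\circ\mathit{fmap}\,f$), citing Reynolds and Wadler rather than proving the commutation explicitly.
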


\begin{proof}
As shown in \Cref{app:states-state}, we can combine \ensuremath{\Varid{flatten}\hsdot{\circ }{.}\Varid{h_{States}}} into one function \ensuremath{\Varid{h_{States}^\prime}} defined as follows:\indentbegin \begin{hscode}\SaveRestoreHook
\column{B}{@{}>{\hspre}l<{\hspost}@{}}%
\column{3}{@{}>{\hspre}l<{\hspost}@{}}%
\column{E}{@{}>{\hspre}l<{\hspost}@{}}%
\>[3]{}\Varid{h_{States}^\prime}\mathbin{::}\Conid{Functor}\;\Varid{f}\Rightarrow \Conid{Free}\;(\Varid{State_{F}}\;\Varid{s}_{1}\mathrel{{:}{+}{:}}\Varid{State_{F}}\;\Varid{s}_{2}\mathrel{{:}{+}{:}}\Varid{f})\;\Varid{a}\to \Conid{StateT}\;(\Varid{s}_{1},\Varid{s}_{2})\;(\Conid{Free}\;\Varid{f})\;\Varid{a}{}\<[E]%
\\
\>[3]{}\Varid{h_{States}^\prime}\;\Varid{t}\mathrel{=}\Conid{StateT}\mathbin{\$}\lambda (\Varid{s}_{1},\Varid{s}_{2})\to \alpha\mathbin{\langle\hspace{1.6pt}\mathclap{\raisebox{0.1pt}{\scalebox{1}{\$}}}\hspace{1.6pt}\rangle}\Varid{run_{StateT}}\;(\Varid{h_{State}}\;(\Varid{run_{StateT}}\;(\Varid{h_{State}}\;\Varid{t})\;\Varid{s}_{1}))\;\Varid{s}_{2}{}\<[E]%
\ColumnHook
\end{hscode}\resethooks
\indentend Then we show that for any input \ensuremath{\Varid{t}\mathbin{::}\Conid{Free}\;(\Varid{State_{F}}\;(\Conid{SS}\;(\Varid{State_{F}}\;\Varid{s}\mathrel{{:}{+}{:}}\Varid{f})\;\Varid{a})\mathrel{{:}{+}{:}}(\Varid{State_{F}}\;\Varid{s}\mathrel{{:}{+}{:}}\Varid{f}))\;()}, we have \ensuremath{(\Varid{extract}\hsdot{\circ }{.}\Varid{h_{States}^\prime})\;\Varid{t}\mathrel{=}(\Varid{fmap}\;(\Varid{fmap}\;\Varid{fst})\hsdot{\circ }{.}\Varid{run_{StateT}}\hsdot{\circ }{.}\Varid{h_{State}}\hsdot{\circ }{.}\Varid{extract_{SS}}\hsdot{\circ }{.}\Varid{h_{State}})\;\Varid{t}} via the
following calculation.
\indentbegin \begin{hscode}\SaveRestoreHook
\column{B}{@{}>{\hspre}l<{\hspost}@{}}%
\column{3}{@{}>{\hspre}l<{\hspost}@{}}%
\column{6}{@{}>{\hspre}l<{\hspost}@{}}%
\column{8}{@{}>{\hspre}l<{\hspost}@{}}%
\column{E}{@{}>{\hspre}l<{\hspost}@{}}%
\>[6]{}(\Varid{extract}\hsdot{\circ }{.}\Varid{h_{States}^\prime})\;\Varid{t}{}\<[E]%
\\
\>[3]{}\mathrel{=}\mbox{\commentbegin ~  function application   \commentend}{}\<[E]%
\\
\>[3]{}\hsindent{3}{}\<[6]%
\>[6]{}\Varid{extract}\;(\Varid{h_{States}^\prime}\;\Varid{t}){}\<[E]%
\\
\>[3]{}\mathrel{=}\mbox{\commentbegin ~  definition of \ensuremath{\Varid{h_{States}^\prime}}   \commentend}{}\<[E]%
\\
\>[3]{}\hsindent{3}{}\<[6]%
\>[6]{}\Varid{extract}\;(\Conid{StateT}\mathbin{\$}\lambda (\Varid{s}_{1},\Varid{s}_{2})\to \alpha\mathbin{\langle\hspace{1.6pt}\mathclap{\raisebox{0.1pt}{\scalebox{1}{\$}}}\hspace{1.6pt}\rangle}{}\<[E]%
\\
\>[6]{}\hsindent{2}{}\<[8]%
\>[8]{}\Varid{run_{StateT}}\;(\Varid{h_{State}}\;(\Varid{run_{StateT}}\;(\Varid{h_{State}}\;\Varid{t})\;\Varid{s}_{1}))\;\Varid{s}_{2}){}\<[E]%
\\
\>[3]{}\mathrel{=}\mbox{\commentbegin ~  definition of \ensuremath{\Varid{extract}}   \commentend}{}\<[E]%
\\
\>[3]{}\hsindent{3}{}\<[6]%
\>[6]{}\lambda \Varid{s}\to \Varid{results_{SS}}\hsdot{\circ }{.}\Varid{fst}\hsdot{\circ }{.}\Varid{snd}\mathbin{\langle\hspace{1.6pt}\mathclap{\raisebox{0.1pt}{\scalebox{1}{\$}}}\hspace{1.6pt}\rangle}\Varid{run_{StateT}}\;(\Conid{StateT}\mathbin{\$}\lambda (\Varid{s}_{1},\Varid{s}_{2})\to \alpha\mathbin{\langle\hspace{1.6pt}\mathclap{\raisebox{0.1pt}{\scalebox{1}{\$}}}\hspace{1.6pt}\rangle}{}\<[E]%
\\
\>[6]{}\hsindent{2}{}\<[8]%
\>[8]{}\Varid{run_{StateT}}\;(\Varid{h_{State}}\;(\Varid{run_{StateT}}\;(\Varid{h_{State}}\;\Varid{t})\;\Varid{s}_{1}))\;\Varid{s}_{2})\;(\Conid{SS}\;[\mskip1.5mu \mskip1.5mu]\;[\mskip1.5mu \mskip1.5mu],\Varid{s}){}\<[E]%
\\
\>[3]{}\mathrel{=}\mbox{\commentbegin ~  definition of \ensuremath{\Varid{run_{StateT}}}   \commentend}{}\<[E]%
\\
\>[3]{}\hsindent{3}{}\<[6]%
\>[6]{}\lambda \Varid{s}\to \Varid{results_{SS}}\hsdot{\circ }{.}\Varid{fst}\hsdot{\circ }{.}\Varid{snd}\mathbin{\langle\hspace{1.6pt}\mathclap{\raisebox{0.1pt}{\scalebox{1}{\$}}}\hspace{1.6pt}\rangle}(\lambda (\Varid{s}_{1},\Varid{s}_{2})\to \alpha\mathbin{\langle\hspace{1.6pt}\mathclap{\raisebox{0.1pt}{\scalebox{1}{\$}}}\hspace{1.6pt}\rangle}{}\<[E]%
\\
\>[6]{}\hsindent{2}{}\<[8]%
\>[8]{}\Varid{run_{StateT}}\;(\Varid{h_{State}}\;(\Varid{run_{StateT}}\;(\Varid{h_{State}}\;\Varid{t})\;\Varid{s}_{1}))\;\Varid{s}_{2})\;(\Conid{SS}\;[\mskip1.5mu \mskip1.5mu]\;[\mskip1.5mu \mskip1.5mu],\Varid{s}){}\<[E]%
\\
\>[3]{}\mathrel{=}\mbox{\commentbegin ~  function application   \commentend}{}\<[E]%
\\
\>[3]{}\hsindent{3}{}\<[6]%
\>[6]{}\lambda \Varid{s}\to \Varid{results_{SS}}\hsdot{\circ }{.}\Varid{fst}\hsdot{\circ }{.}\Varid{snd}\mathbin{\langle\hspace{1.6pt}\mathclap{\raisebox{0.1pt}{\scalebox{1}{\$}}}\hspace{1.6pt}\rangle}(\alpha\mathbin{\langle\hspace{1.6pt}\mathclap{\raisebox{0.1pt}{\scalebox{1}{\$}}}\hspace{1.6pt}\rangle}{}\<[E]%
\\
\>[6]{}\hsindent{2}{}\<[8]%
\>[8]{}\Varid{run_{StateT}}\;(\Varid{h_{State}}\;(\Varid{run_{StateT}}\;(\Varid{h_{State}}\;\Varid{t})\;(\Conid{SS}\;[\mskip1.5mu \mskip1.5mu]\;[\mskip1.5mu \mskip1.5mu])))\;\Varid{s}){}\<[E]%
\\
\>[3]{}\mathrel{=}\mbox{\commentbegin ~  \Cref{eq:functor-composition}   \commentend}{}\<[E]%
\\
\>[3]{}\hsindent{3}{}\<[6]%
\>[6]{}\lambda \Varid{s}\to \Varid{results_{SS}}\hsdot{\circ }{.}\Varid{fst}\hsdot{\circ }{.}\Varid{snd}\hsdot{\circ }{.}\alpha\mathbin{\langle\hspace{1.6pt}\mathclap{\raisebox{0.1pt}{\scalebox{1}{\$}}}\hspace{1.6pt}\rangle}{}\<[E]%
\\
\>[6]{}\hsindent{2}{}\<[8]%
\>[8]{}\Varid{run_{StateT}}\;(\Varid{h_{State}}\;(\Varid{run_{StateT}}\;(\Varid{h_{State}}\;\Varid{t})\;(\Conid{SS}\;[\mskip1.5mu \mskip1.5mu]\;[\mskip1.5mu \mskip1.5mu])))\;\Varid{s}{}\<[E]%
\\
\>[3]{}\mathrel{=}\mbox{\commentbegin ~  \ensuremath{\Varid{fst}\hsdot{\circ }{.}\Varid{snd}\hsdot{\circ }{.}\alpha\mathrel{=}\Varid{snd}\hsdot{\circ }{.}\Varid{fst}}   \commentend}{}\<[E]%
\\
\>[3]{}\hsindent{3}{}\<[6]%
\>[6]{}\lambda \Varid{s}\to \Varid{results_{SS}}\hsdot{\circ }{.}\Varid{snd}\hsdot{\circ }{.}\Varid{fst}\mathbin{\langle\hspace{1.6pt}\mathclap{\raisebox{0.1pt}{\scalebox{1}{\$}}}\hspace{1.6pt}\rangle}\Varid{run_{StateT}}\;(\Varid{h_{State}}\;(\Varid{run_{StateT}}\;(\Varid{h_{State}}\;\Varid{t})\;(\Conid{SS}\;[\mskip1.5mu \mskip1.5mu]\;[\mskip1.5mu \mskip1.5mu])))\;\Varid{s}{}\<[E]%
\\
\>[3]{}\mathrel{=}\mbox{\commentbegin ~  \Cref{eq:functor-composition} and definition of \ensuremath{\mathbin{\langle\hspace{1.6pt}\mathclap{\raisebox{0.1pt}{\scalebox{1}{\$}}}\hspace{1.6pt}\rangle}}   \commentend}{}\<[E]%
\\
\>[3]{}\hsindent{3}{}\<[6]%
\>[6]{}\lambda \Varid{s}\to \Varid{fmap}\;(\Varid{results_{SS}}\hsdot{\circ }{.}\Varid{snd})\hsdot{\circ }{.}\Varid{fmap}\;\Varid{fst}\mathbin{\$}{}\<[E]%
\\
\>[6]{}\hsindent{2}{}\<[8]%
\>[8]{}\Varid{run_{StateT}}\;(\Varid{h_{State}}\;(\Varid{run_{StateT}}\;(\Varid{h_{State}}\;\Varid{t})\;(\Conid{SS}\;[\mskip1.5mu \mskip1.5mu]\;[\mskip1.5mu \mskip1.5mu])))\;\Varid{s}{}\<[E]%
\\
\>[3]{}\mathrel{=}\mbox{\commentbegin ~  definition of \ensuremath{\Varid{flip}} and reformulation   \commentend}{}\<[E]%
\\
\>[3]{}\hsindent{3}{}\<[6]%
\>[6]{}\lambda \Varid{s}\to \Varid{fmap}\;(\Varid{results_{SS}}\hsdot{\circ }{.}\Varid{snd})\hsdot{\circ }{.}\Varid{fmap}\;\Varid{fst}\mathbin{\$}{}\<[E]%
\\
\>[6]{}\hsindent{2}{}\<[8]%
\>[8]{}\Varid{flip}\;(\Varid{run_{StateT}}\hsdot{\circ }{.}\Varid{h_{State}})\;\Varid{s}\;(\Varid{run_{StateT}}\;(\Varid{h_{State}}\;\Varid{t})\;(\Conid{SS}\;[\mskip1.5mu \mskip1.5mu]\;[\mskip1.5mu \mskip1.5mu])){}\<[E]%
\\
\>[3]{}\mathrel{=}\mbox{\commentbegin ~  reformulation   \commentend}{}\<[E]%
\\
\>[3]{}\hsindent{3}{}\<[6]%
\>[6]{}\lambda \Varid{s}\to \Varid{fmap}\;(\Varid{results_{SS}}\hsdot{\circ }{.}\Varid{snd})\hsdot{\circ }{.}(\Varid{fmap}\;\Varid{fst}\hsdot{\circ }{.}\Varid{flip}\;(\Varid{run_{StateT}}\hsdot{\circ }{.}\Varid{h_{State}})\;\Varid{s})\mathbin{\$}{}\<[E]%
\\
\>[6]{}\hsindent{2}{}\<[8]%
\>[8]{}\Varid{run_{StateT}}\;(\Varid{h_{State}}\;\Varid{t})\;(\Conid{SS}\;[\mskip1.5mu \mskip1.5mu]\;[\mskip1.5mu \mskip1.5mu]){}\<[E]%
\\
\>[3]{}\mathrel{=}\mbox{\commentbegin ~  parametricity of free monads  \commentend}{}\<[E]%
\\
\>[3]{}\hsindent{3}{}\<[6]%
\>[6]{}\lambda \Varid{s}\to (\Varid{fmap}\;\Varid{fst}\hsdot{\circ }{.}\Varid{flip}\;(\Varid{run_{StateT}}\hsdot{\circ }{.}\Varid{h_{State}})\;\Varid{s})\hsdot{\circ }{.}\Varid{fmap}\;(\Varid{results_{SS}}\hsdot{\circ }{.}\Varid{snd})\mathbin{\$}{}\<[E]%
\\
\>[6]{}\hsindent{2}{}\<[8]%
\>[8]{}\Varid{run_{StateT}}\;(\Varid{h_{State}}\;\Varid{t})\;(\Conid{SS}\;[\mskip1.5mu \mskip1.5mu]\;[\mskip1.5mu \mskip1.5mu]){}\<[E]%
\\
\>[3]{}\mathrel{=}\mbox{\commentbegin ~  definition of \ensuremath{\mathbin{\langle\hspace{1.6pt}\mathclap{\raisebox{0.1pt}{\scalebox{1}{\$}}}\hspace{1.6pt}\rangle}}  \commentend}{}\<[E]%
\\
\>[3]{}\hsindent{3}{}\<[6]%
\>[6]{}\lambda \Varid{s}\to (\Varid{fmap}\;\Varid{fst}\hsdot{\circ }{.}\Varid{flip}\;(\Varid{run_{StateT}}\hsdot{\circ }{.}\Varid{h_{State}})\;\Varid{s})\mathbin{\$}\Varid{results_{SS}}\hsdot{\circ }{.}\Varid{snd}\mathbin{\langle\hspace{1.6pt}\mathclap{\raisebox{0.1pt}{\scalebox{1}{\$}}}\hspace{1.6pt}\rangle}{}\<[E]%
\\
\>[6]{}\hsindent{2}{}\<[8]%
\>[8]{}\Varid{run_{StateT}}\;(\Varid{h_{State}}\;\Varid{t})\;(\Conid{SS}\;[\mskip1.5mu \mskip1.5mu]\;[\mskip1.5mu \mskip1.5mu]){}\<[E]%
\\
\>[3]{}\mathrel{=}\mbox{\commentbegin ~  function application  \commentend}{}\<[E]%
\\
\>[3]{}\hsindent{3}{}\<[6]%
\>[6]{}\lambda \Varid{s}\to \Varid{fmap}\;\Varid{fst}\;(\Varid{run_{StateT}}\;(\Varid{h_{State}}\;(\Varid{results_{SS}}\hsdot{\circ }{.}\Varid{snd}\mathbin{\langle\hspace{1.6pt}\mathclap{\raisebox{0.1pt}{\scalebox{1}{\$}}}\hspace{1.6pt}\rangle}{}\<[E]%
\\
\>[6]{}\hsindent{2}{}\<[8]%
\>[8]{}\Varid{run_{StateT}}\;(\Varid{h_{State}}\;\Varid{t})\;(\Conid{SS}\;[\mskip1.5mu \mskip1.5mu]\;[\mskip1.5mu \mskip1.5mu])))\;\Varid{s}){}\<[E]%
\\
\>[3]{}\mathrel{=}\mbox{\commentbegin ~  definition of \ensuremath{\Varid{fmap}}  \commentend}{}\<[E]%
\\
\>[3]{}\hsindent{3}{}\<[6]%
\>[6]{}\lambda \Varid{s}\to \Varid{fmap}\;(\Varid{fmap}\;\Varid{fst})\;(\Varid{run_{StateT}}\;(\Varid{h_{State}}\;(\Varid{results_{SS}}\hsdot{\circ }{.}\Varid{snd}\mathbin{\langle\hspace{1.6pt}\mathclap{\raisebox{0.1pt}{\scalebox{1}{\$}}}\hspace{1.6pt}\rangle}{}\<[E]%
\\
\>[6]{}\hsindent{2}{}\<[8]%
\>[8]{}\Varid{run_{StateT}}\;(\Varid{h_{State}}\;\Varid{t})\;(\Conid{SS}\;[\mskip1.5mu \mskip1.5mu]\;[\mskip1.5mu \mskip1.5mu]))))\;\Varid{s}{}\<[E]%
\\
\>[3]{}\mathrel{=}\mbox{\commentbegin ~  reformulation  \commentend}{}\<[E]%
\\
\>[3]{}\hsindent{3}{}\<[6]%
\>[6]{}\lambda \Varid{s}\to (\Varid{fmap}\;(\Varid{fmap}\;\Varid{fst})\hsdot{\circ }{.}\Varid{run_{StateT}}\hsdot{\circ }{.}\Varid{h_{State}}\mathbin{\$}\Varid{results_{SS}}\hsdot{\circ }{.}\Varid{snd}\mathbin{\langle\hspace{1.6pt}\mathclap{\raisebox{0.1pt}{\scalebox{1}{\$}}}\hspace{1.6pt}\rangle}{}\<[E]%
\\
\>[6]{}\hsindent{2}{}\<[8]%
\>[8]{}\Varid{run_{StateT}}\;(\Varid{h_{State}}\;\Varid{t})\;(\Conid{SS}\;[\mskip1.5mu \mskip1.5mu]\;[\mskip1.5mu \mskip1.5mu]))\;\Varid{s}{}\<[E]%
\\
\>[3]{}\mathrel{=}\mbox{\commentbegin ~  \ensuremath{\Varid{\eta}}-reduction   \commentend}{}\<[E]%
\\
\>[3]{}\hsindent{3}{}\<[6]%
\>[6]{}\Varid{fmap}\;(\Varid{fmap}\;\Varid{fst})\hsdot{\circ }{.}\Varid{run_{StateT}}\hsdot{\circ }{.}\Varid{h_{State}}\mathbin{\$}\Varid{results_{SS}}\hsdot{\circ }{.}\Varid{snd}\mathbin{\langle\hspace{1.6pt}\mathclap{\raisebox{0.1pt}{\scalebox{1}{\$}}}\hspace{1.6pt}\rangle}\Varid{run_{StateT}}\;(\Varid{h_{State}}\;\Varid{t})\;(\Conid{SS}\;[\mskip1.5mu \mskip1.5mu]\;[\mskip1.5mu \mskip1.5mu]){}\<[E]%
\\
\>[3]{}\mathrel{=}\mbox{\commentbegin ~  definition of \ensuremath{\Varid{extract_{SS}}}   \commentend}{}\<[E]%
\\
\>[3]{}\hsindent{3}{}\<[6]%
\>[6]{}\Varid{fmap}\;(\Varid{fmap}\;\Varid{fst})\hsdot{\circ }{.}\Varid{run_{StateT}}\hsdot{\circ }{.}\Varid{h_{State}}\mathbin{\$}\Varid{extract_{SS}}\;(\Varid{h_{State}}\;\Varid{t}){}\<[E]%
\\
\>[3]{}\mathrel{=}\mbox{\commentbegin ~     \commentend}{}\<[E]%
\\
\>[3]{}\hsindent{3}{}\<[6]%
\>[6]{}(\Varid{fmap}\;(\Varid{fmap}\;\Varid{fst})\hsdot{\circ }{.}\Varid{run_{StateT}}\hsdot{\circ }{.}\Varid{h_{State}}\hsdot{\circ }{.}\Varid{extract_{SS}}\hsdot{\circ }{.}\Varid{h_{State}})\;\Varid{t}{}\<[E]%
\ColumnHook
\end{hscode}\resethooks
\indentend Note that in the above calculation, we use the
parametricity~\citep{Reynolds83, Wadler89} of free monads which is
formally stated as follows:\indentbegin \begin{hscode}\SaveRestoreHook
\column{B}{@{}>{\hspre}l<{\hspost}@{}}%
\column{3}{@{}>{\hspre}l<{\hspost}@{}}%
\column{E}{@{}>{\hspre}l<{\hspost}@{}}%
\>[3]{}\Varid{fmap}\;\Varid{f}\hsdot{\circ }{.}\Varid{g}\mathrel{=}\Varid{g}\hsdot{\circ }{.}\Varid{fmap}\;\Varid{f}{}\<[E]%
\ColumnHook
\end{hscode}\resethooks
\indentend for any \ensuremath{\Varid{g}\mathbin{::}\forall \Varid{a}\hsforall \hsdot{\circ }{.}\Conid{Free}\;\Conid{F}\;\Varid{a}\to \Conid{Free}\;\Conid{G}\;\Varid{a}} with two functors \ensuremath{\Conid{F}}
and \ensuremath{\Conid{G}}.

\end{proof}

\section{Proofs for Modelling Local State with Undo}
\label{app:modify-local-global}

In this section, we prove the following theorem in \Cref{sec:undo}.

\modifyLocalGlobal*

The proof structure is very similar to that in
\Cref{app:local-global}. We start with the following preliminary
fusion.

\paragraph*{Preliminary}
It is easy to see that \ensuremath{\Varid{run_{StateT}}\hsdot{\circ }{.}\Varid{h_{Modify}}} can be fused into a single
fold defined as follows:
\indentbegin \begin{hscode}\SaveRestoreHook
\column{B}{@{}>{\hspre}l<{\hspost}@{}}%
\column{3}{@{}>{\hspre}l<{\hspost}@{}}%
\column{5}{@{}>{\hspre}l<{\hspost}@{}}%
\column{11}{@{}>{\hspre}l<{\hspost}@{}}%
\column{14}{@{}>{\hspre}l<{\hspost}@{}}%
\column{26}{@{}>{\hspre}l<{\hspost}@{}}%
\column{E}{@{}>{\hspre}l<{\hspost}@{}}%
\>[B]{}\Varid{h_{Modify1}}{}\<[11]%
\>[11]{}\mathbin{::}(\Conid{Functor}\;\Varid{f},\Conid{Undo}\;\Varid{s}\;\Varid{r})\Rightarrow \Conid{Free}\;(\Varid{Modify_{F}}\;\Varid{s}\;\Varid{r}\mathrel{{:}{+}{:}}\Varid{f})\;\Varid{a}\to (\Varid{s}\to \Conid{Free}\;\Varid{f}\;(\Varid{a},\Varid{s})){}\<[E]%
\\
\>[B]{}\Varid{h_{Modify1}}{}\<[11]%
\>[11]{}\mathrel{=}{}\<[14]%
\>[14]{}\Varid{fold}\;\Varid{gen_{S}}\;(\Varid{alg_{S}}\mathbin{\#}\Varid{fwd_{S}}){}\<[E]%
\\
\>[B]{}\hsindent{3}{}\<[3]%
\>[3]{}\mathbf{where}{}\<[E]%
\\
\>[3]{}\hsindent{2}{}\<[5]%
\>[5]{}\Varid{gen_{S}}\;\Varid{x}\;{}\<[26]%
\>[26]{}\Varid{s}\mathrel{=}\Conid{Var}\;(\Varid{x},\Varid{s}){}\<[E]%
\\
\>[3]{}\hsindent{2}{}\<[5]%
\>[5]{}\Varid{alg_{S}}\;(\Conid{MGet}\;\Varid{k})\;{}\<[26]%
\>[26]{}\Varid{s}\mathrel{=}\Varid{k}\;\Varid{s}\;\Varid{s}{}\<[E]%
\\
\>[3]{}\hsindent{2}{}\<[5]%
\>[5]{}\Varid{alg_{S}}\;(\Conid{MUpdate}\;\Varid{r}\;\Varid{k})\;{}\<[26]%
\>[26]{}\Varid{s}\mathrel{=}\Varid{k}\;(\Varid{s}\mathbin{\oplus}\Varid{r}){}\<[E]%
\\
\>[3]{}\hsindent{2}{}\<[5]%
\>[5]{}\Varid{alg_{S}}\;(\Conid{MRestore}\;\Varid{r}\;\Varid{k})\;{}\<[26]%
\>[26]{}\Varid{s}\mathrel{=}\Varid{k}\;(\Varid{s}\mathbin{\ominus}\Varid{r}){}\<[E]%
\\
\>[3]{}\hsindent{2}{}\<[5]%
\>[5]{}\Varid{fwd_{S}}\;\Varid{y}\;{}\<[26]%
\>[26]{}\Varid{s}\mathrel{=}\Conid{Op}\;(\Varid{fmap}\;(\mathbin{\$}\Varid{s})\;\Varid{y}){}\<[E]%
\ColumnHook
\end{hscode}\resethooks
\indentend For brevity, we use \ensuremath{\Varid{h_{Modify1}}} to replace \ensuremath{\Varid{run_{StateT}}\hsdot{\circ }{.}\Varid{h_{Modify}}} in the
following proofs.

\subsection{Main Proof Structure}
The main proof structure of \Cref{thm:modify-local-global} is as
follows.
\begin{proof}
Both the left-hand side and the right-hand side of the equation consist of
function compositions involving one or more folds.
We apply fold fusion separately on both sides to contract each
into a single fold:
\begin{eqnarray*}
\ensuremath{\Varid{h_{GlobalM}}\hsdot{\circ }{.}\Varid{local2global_M}} & = & \ensuremath{\Varid{fold}\;\Varid{gen}_{\Varid{LHS}}\;(\Varid{alg}_{\Varid{LHS}}^{\Varid{S}}\mathbin{\#}\Varid{alg}_{\Varid{RHS}}^{\Varid{ND}}\mathbin{\#}\Varid{fwd}_{\Varid{LHS}})} \\
\ensuremath{\Varid{h_{LocalM}}}& = & \ensuremath{\Varid{fold}\;\Varid{gen}_{\Varid{RHS}}\;(\Varid{alg}_{\Varid{RHS}}^{\Varid{S}}\mathbin{\#}\Varid{alg}_{\Varid{RHS}}^{\Varid{ND}}\mathbin{\#}\Varid{fwd}_{\Varid{RHS}})}
\end{eqnarray*}
Finally, we show that both folds are equal by showing that their
corresponding parameters are equal:
\begin{eqnarray*}
\ensuremath{\Varid{gen}_{\Varid{LHS}}} & = & \ensuremath{\Varid{gen}_{\Varid{RHS}}} \\
\ensuremath{\Varid{alg}_{\Varid{LHS}}^{\Varid{S}}} & = & \ensuremath{\Varid{alg}_{\Varid{RHS}}^{\Varid{S}}} \\
\ensuremath{\Varid{alg}_{\Varid{LHS}}^{\Varid{ND}}} & = & \ensuremath{\Varid{alg}_{\Varid{RHS}}^{\Varid{ND}}} \\
\ensuremath{\Varid{fwd}_{\Varid{LHS}}} & = & \ensuremath{\Varid{fwd}_{\Varid{RHS}}}
\end{eqnarray*}
We elaborate each of these steps below.
\end{proof}

\subsection{Fusing the Right-Hand Side}
\label{app:modify-fusing-rhs}
We calculate as follows:
\indentbegin \begin{hscode}\SaveRestoreHook
\column{B}{@{}>{\hspre}l<{\hspost}@{}}%
\column{5}{@{}>{\hspre}l<{\hspost}@{}}%
\column{7}{@{}>{\hspre}l<{\hspost}@{}}%
\column{10}{@{}>{\hspre}l<{\hspost}@{}}%
\column{14}{@{}>{\hspre}l<{\hspost}@{}}%
\column{E}{@{}>{\hspre}l<{\hspost}@{}}%
\>[5]{}\Varid{h_{LocalM}}{}\<[E]%
\\
\>[B]{}\mathrel{=}\mbox{\commentbegin ~  definition  \commentend}{}\<[E]%
\\
\>[B]{}\hsindent{5}{}\<[5]%
\>[5]{}\Varid{h_{L}}\hsdot{\circ }{.}\Varid{h_{Modify1}}{}\<[E]%
\\
\>[5]{}\hsindent{2}{}\<[7]%
\>[7]{} \text{with } {}\<[E]%
\\
\>[7]{}\hsindent{3}{}\<[10]%
\>[10]{}\Varid{h_{L}}{}\<[14]%
\>[14]{}\mathbin{::}(\Conid{Functor}\;\Varid{f},\Conid{Functor}\;\Varid{g}){}\<[E]%
\\
\>[14]{}\Rightarrow \Varid{g}\;(\Conid{Free}\;(\Varid{Nondet_{F}}\mathrel{{:}{+}{:}}\Varid{f})\;(\Varid{a},\Varid{s}))\to \Varid{g}\;(\Conid{Free}\;\Varid{f}\;[\mskip1.5mu \Varid{a}\mskip1.5mu]){}\<[E]%
\\
\>[7]{}\hsindent{3}{}\<[10]%
\>[10]{}\Varid{h_{L}}{}\<[14]%
\>[14]{}\mathrel{=}\Varid{fmap}\;(\Varid{fmap}\;(\Varid{fmap}\;\Varid{fst})\hsdot{\circ }{.}\Varid{h_{ND+f}}){}\<[E]%
\\
\>[B]{}\mathrel{=}\mbox{\commentbegin ~  definition of \ensuremath{\Varid{h_{Modify1}}}   \commentend}{}\<[E]%
\\
\>[B]{}\hsindent{5}{}\<[5]%
\>[5]{}\Varid{h_{L}}\hsdot{\circ }{.}\Varid{fold}\;\Varid{gen_{S}}\;(\Varid{alg_{S}}\mathbin{\#}\Varid{fwd_{S}}){}\<[E]%
\\
\>[B]{}\mathrel{=}\mbox{\commentbegin ~  fold fusion-post (Equation \ref{eq:fusion-post})   \commentend}{}\<[E]%
\\
\>[B]{}\hsindent{5}{}\<[5]%
\>[5]{}\Varid{fold}\;\Varid{gen}_{\Varid{RHS}}\;(\Varid{alg}_{\Varid{RHS}}^{\Varid{S}}\mathbin{\#}\Varid{alg}_{\Varid{RHS}}^{\Varid{ND}}\mathbin{\#}\Varid{fwd}_{\Varid{RHS}}){}\<[E]%
\ColumnHook
\end{hscode}\resethooks
\indentend This last step is valid provided that the fusion conditions are satisfied:
\[\ba{rclr}
\ensuremath{\Varid{h_{L}}\hsdot{\circ }{.}\Varid{gen_{S}}} & = & \ensuremath{\Varid{gen}_{\Varid{RHS}}} &\refa{} \\
\ensuremath{\Varid{h_{L}}\hsdot{\circ }{.}(\Varid{alg_{S}}\mathbin{\#}\Varid{fwd_{S}})} & = & \ensuremath{(\Varid{alg}_{\Varid{RHS}}^{\Varid{S}}\mathbin{\#}\Varid{alg}_{\Varid{RHS}}^{\Varid{ND}}\mathbin{\#}\Varid{fwd}_{\Varid{RHS}})\hsdot{\circ }{.}\Varid{fmap}\;\Varid{h_{L}}} &\refb{}
\ea\]

For the first fusion condition \refa{}, we define \ensuremath{\Varid{gen}_{\Varid{RHS}}} as follows\indentbegin \begin{hscode}\SaveRestoreHook
\column{B}{@{}>{\hspre}l<{\hspost}@{}}%
\column{3}{@{}>{\hspre}l<{\hspost}@{}}%
\column{E}{@{}>{\hspre}l<{\hspost}@{}}%
\>[3]{}\Varid{gen}_{\Varid{RHS}}\mathbin{::}\Conid{Functor}\;\Varid{f}\Rightarrow \Varid{a}\to (\Varid{s}\to \Conid{Free}\;\Varid{f}\;[\mskip1.5mu \Varid{a}\mskip1.5mu]){}\<[E]%
\\
\>[3]{}\Varid{gen}_{\Varid{RHS}}\;\Varid{x}\mathrel{=}\lambda \Varid{s}\to \Conid{Var}\;[\mskip1.5mu \Varid{x}\mskip1.5mu]{}\<[E]%
\ColumnHook
\end{hscode}\resethooks
\indentend We show that \refa{} is satisfied by the following calculation.
\indentbegin \begin{hscode}\SaveRestoreHook
\column{B}{@{}>{\hspre}l<{\hspost}@{}}%
\column{3}{@{}>{\hspre}l<{\hspost}@{}}%
\column{5}{@{}>{\hspre}l<{\hspost}@{}}%
\column{E}{@{}>{\hspre}l<{\hspost}@{}}%
\>[5]{}\Varid{h_{L}}\;(\Varid{gen_{S}}\;\Varid{x}){}\<[E]%
\\
\>[3]{}\mathrel{=}\mbox{\commentbegin ~ definition of \ensuremath{\Varid{gen_{S}}}  \commentend}{}\<[E]%
\\
\>[3]{}\hsindent{2}{}\<[5]%
\>[5]{}\Varid{h_{L}}\;(\lambda \Varid{s}\to \Conid{Var}\;(\Varid{x},\Varid{s})){}\<[E]%
\\
\>[3]{}\mathrel{=}\mbox{\commentbegin ~ definition of \ensuremath{\Varid{h_{L}}}  \commentend}{}\<[E]%
\\
\>[3]{}\hsindent{2}{}\<[5]%
\>[5]{}\Varid{fmap}\;(\Varid{fmap}\;(\Varid{fmap}\;\Varid{fst})\hsdot{\circ }{.}\Varid{h_{ND+f}})\;(\lambda \Varid{s}\to \Conid{Var}\;(\Varid{x},\Varid{s})){}\<[E]%
\\
\>[3]{}\mathrel{=}\mbox{\commentbegin ~ definition of \ensuremath{\Varid{fmap}}  \commentend}{}\<[E]%
\\
\>[3]{}\hsindent{2}{}\<[5]%
\>[5]{}\lambda \Varid{s}\to \Varid{fmap}\;(\Varid{fmap}\;\Varid{fst})\;(\Varid{h_{ND+f}}\;(\Conid{Var}\;(\Varid{x},\Varid{s}))){}\<[E]%
\\
\>[3]{}\mathrel{=}\mbox{\commentbegin ~ definition of \ensuremath{\Varid{h_{ND+f}}}  \commentend}{}\<[E]%
\\
\>[3]{}\hsindent{2}{}\<[5]%
\>[5]{}\lambda \Varid{s}\to \Varid{fmap}\;(\Varid{fmap}\;\Varid{fst})\;(\Conid{Var}\;[\mskip1.5mu (\Varid{x},\Varid{s})\mskip1.5mu]){}\<[E]%
\\
\>[3]{}\mathrel{=}\mbox{\commentbegin ~ definition of \ensuremath{\Varid{fmap}} (twice)  \commentend}{}\<[E]%
\\
\>[3]{}\hsindent{2}{}\<[5]%
\>[5]{}\lambda \Varid{s}\to \Conid{Var}\;[\mskip1.5mu \Varid{x}\mskip1.5mu]{}\<[E]%
\\
\>[3]{}\mathrel{=}\mbox{\commentbegin ~ definition of \ensuremath{\Varid{gen}_{\Varid{RHS}}}   \commentend}{}\<[E]%
\\
\>[3]{}\mathrel{=}\Varid{gen}_{\Varid{RHS}}\;\Varid{x}{}\<[E]%
\ColumnHook
\end{hscode}\resethooks
\indentend By a straightforward case analysis on the two cases \ensuremath{\Conid{Inl}} and \ensuremath{\Conid{Inr}},
the second fusion condition \refb{} decomposes into two separate
conditions:
\[\ba{rclr}
\ensuremath{\Varid{h_{L}}\hsdot{\circ }{.}\Varid{alg_{S}}} & = & \ensuremath{\Varid{alg}_{\Varid{RHS}}^{\Varid{S}}\hsdot{\circ }{.}\Varid{fmap}\;\Varid{h_{L}}} &\refc{} \\
\ensuremath{\Varid{h_{L}}\hsdot{\circ }{.}\Varid{fwd_{S}}\hsdot{\circ }{.}\Conid{Inl}}& = & \ensuremath{\Varid{alg}_{\Varid{RHS}}^{\Varid{ND}}\hsdot{\circ }{.}\Varid{fmap}\;\Varid{h_{L}}} &\refd{}\\
\ensuremath{\Varid{h_{L}}\hsdot{\circ }{.}\Varid{fwd_{S}}\hsdot{\circ }{.}\Conid{Inr}}& = & \ensuremath{\Varid{fwd}_{\Varid{RHS}}\hsdot{\circ }{.}\Varid{fmap}\;\Varid{h_{L}}} &\refe{}
\ea\]

For the subcondition \refc{}, we define \ensuremath{\Varid{alg}_{\Varid{RHS}}^{\Varid{S}}} as follows.\indentbegin \begin{hscode}\SaveRestoreHook
\column{B}{@{}>{\hspre}l<{\hspost}@{}}%
\column{3}{@{}>{\hspre}l<{\hspost}@{}}%
\column{27}{@{}>{\hspre}l<{\hspost}@{}}%
\column{E}{@{}>{\hspre}l<{\hspost}@{}}%
\>[3]{}\Varid{alg}_{\Varid{RHS}}^{\Varid{S}}\mathbin{::}\Conid{Undo}\;\Varid{s}\;\Varid{r}\Rightarrow \Varid{Modify_{F}}\;\Varid{s}\;\Varid{r}\;(\Varid{s}\to \Varid{p})\to (\Varid{s}\to \Varid{p}){}\<[E]%
\\
\>[3]{}\Varid{alg}_{\Varid{RHS}}^{\Varid{S}}\;(\Conid{MGet}\;\Varid{k}){}\<[27]%
\>[27]{}\mathrel{=}\lambda \Varid{s}\to \Varid{k}\;\Varid{s}\;\Varid{s}{}\<[E]%
\\
\>[3]{}\Varid{alg}_{\Varid{RHS}}^{\Varid{S}}\;(\Conid{MUpdate}\;\Varid{r}\;\Varid{k}){}\<[27]%
\>[27]{}\mathrel{=}\lambda \Varid{s}\to \Varid{k}\;(\Varid{s}\mathbin{\oplus}\Varid{r}){}\<[E]%
\\
\>[3]{}\Varid{alg}_{\Varid{RHS}}^{\Varid{S}}\;(\Conid{MRestore}\;\Varid{r}\;\Varid{k}){}\<[27]%
\>[27]{}\mathrel{=}\lambda \Varid{s}\to \Varid{k}\;(\Varid{s}\mathbin{\oplus}\Varid{r}){}\<[E]%
\ColumnHook
\end{hscode}\resethooks
\indentend We prove its correctness by a case analysis on the shape of input \ensuremath{\Varid{t}\mathbin{::}\Varid{State_{F}}\;\Varid{s}\;(\Varid{s}\to \Conid{Free}\;(\Varid{Nondet_{F}}\mathrel{{:}{+}{:}}\Varid{f})\;(\Varid{a},\Varid{s}))}.

\vspace{.5\baselineskip}
\noindent \mbox{\underline{case \ensuremath{\Varid{t}\mathrel{=}\Conid{Get}\;\Varid{k}}}}\indentbegin \begin{hscode}\SaveRestoreHook
\column{B}{@{}>{\hspre}l<{\hspost}@{}}%
\column{3}{@{}>{\hspre}l<{\hspost}@{}}%
\column{5}{@{}>{\hspre}l<{\hspost}@{}}%
\column{E}{@{}>{\hspre}l<{\hspost}@{}}%
\>[5]{}\Varid{h_{L}}\;(\Varid{alg_{S}}\;(\Conid{Get}\;\Varid{k})){}\<[E]%
\\
\>[3]{}\mathrel{=}\mbox{\commentbegin ~  definition of \ensuremath{\Varid{alg_{S}}}  \commentend}{}\<[E]%
\\
\>[3]{}\hsindent{2}{}\<[5]%
\>[5]{}\Varid{h_{L}}\;(\lambda \Varid{s}\to \Varid{k}\;\Varid{s}\;\Varid{s}){}\<[E]%
\\
\>[3]{}\mathrel{=}\mbox{\commentbegin ~  definition of \ensuremath{\Varid{h_{L}}}  \commentend}{}\<[E]%
\\
\>[3]{}\hsindent{2}{}\<[5]%
\>[5]{}\Varid{fmap}\;(\Varid{fmap}\;(\Varid{fmap}\;\Varid{fst})\hsdot{\circ }{.}\Varid{h_{ND+f}})\;(\lambda \Varid{s}\to \Varid{k}\;\Varid{s}\;\Varid{s}){}\<[E]%
\\
\>[3]{}\mathrel{=}\mbox{\commentbegin ~  definition of \ensuremath{\Varid{fmap}}  \commentend}{}\<[E]%
\\
\>[3]{}\hsindent{2}{}\<[5]%
\>[5]{}\lambda \Varid{s}\to \Varid{fmap}\;(\Varid{fmap}\;\Varid{fst})\;(\Varid{h_{ND+f}}\;(\Varid{k}\;\Varid{s}\;\Varid{s})){}\<[E]%
\\
\>[3]{}\mathrel{=}\mbox{\commentbegin ~  beta-expansion (twice)  \commentend}{}\<[E]%
\\
\>[3]{}\mathrel{=}\lambda \Varid{s}\to (\lambda \Varid{s}_{1}\;\Varid{s}_{2}\to \Varid{fmap}\;(\Varid{fmap}\;\Varid{fst})\;(\Varid{h_{ND+f}}\;(\Varid{k}\;\Varid{s}_{2}\;\Varid{s}_{1})))\;\Varid{s}\;\Varid{s}{}\<[E]%
\\
\>[3]{}\mathrel{=}\mbox{\commentbegin ~  definition of \ensuremath{\Varid{fmap}} (twice)  \commentend}{}\<[E]%
\\
\>[3]{}\mathrel{=}\lambda \Varid{s}\to (\Varid{fmap}\;(\Varid{fmap}\;(\Varid{fmap}\;(\Varid{fmap}\;\Varid{fst})\hsdot{\circ }{.}\Varid{h_{ND+f}}))\;(\lambda \Varid{s}_{1}\;\Varid{s}_{2}\to \Varid{k}\;\Varid{s}_{2}\;\Varid{s}_{1}))\;\Varid{s}\;\Varid{s}{}\<[E]%
\\
\>[3]{}\mathrel{=}\mbox{\commentbegin ~  eta-expansion of \ensuremath{\Varid{k}}  \commentend}{}\<[E]%
\\
\>[3]{}\mathrel{=}\lambda \Varid{s}\to (\Varid{fmap}\;(\Varid{fmap}\;(\Varid{fmap}\;(\Varid{fmap}\;\Varid{fst})\hsdot{\circ }{.}\Varid{h_{ND+f}}))\;\Varid{k})\;\Varid{s}\;\Varid{s}{}\<[E]%
\\
\>[3]{}\mathrel{=}\mbox{\commentbegin ~  definition of \ensuremath{\Varid{algRHS}}  \commentend}{}\<[E]%
\\
\>[3]{}\mathrel{=}\Varid{alg}_{\Varid{RHS}}^{\Varid{S}}\;(\Conid{Get}\;(\Varid{fmap}\;(\Varid{fmap}\;(\Varid{fmap}\;(\Varid{fmap}\;\Varid{fst})\hsdot{\circ }{.}\Varid{h_{ND+f}}))\;\Varid{k})){}\<[E]%
\\
\>[3]{}\mathrel{=}\mbox{\commentbegin ~  definition of \ensuremath{\Varid{fmap}}  \commentend}{}\<[E]%
\\
\>[3]{}\mathrel{=}\Varid{alg}_{\Varid{RHS}}^{\Varid{S}}\;(\Varid{fmap}\;(\Varid{fmap}\;(\Varid{fmap}\;(\Varid{fmap}\;\Varid{fst})\hsdot{\circ }{.}\Varid{h_{ND+f}}))\;(\Conid{Get}\;\Varid{k})){}\<[E]%
\\
\>[3]{}\mathrel{=}\mbox{\commentbegin ~  definition of \ensuremath{\Varid{h_{L}}}  \commentend}{}\<[E]%
\\
\>[3]{}\mathrel{=}\Varid{alg}_{\Varid{RHS}}^{\Varid{S}}\;(\Varid{fmap}\;\Varid{h_{L}}\;(\Conid{Get}\;\Varid{k})){}\<[E]%
\ColumnHook
\end{hscode}\resethooks
\indentend \noindent \mbox{\underline{case \ensuremath{\Varid{t}\mathrel{=}\Conid{MUpdate}\;\Varid{r}\;\Varid{k}}}}\indentbegin \begin{hscode}\SaveRestoreHook
\column{B}{@{}>{\hspre}l<{\hspost}@{}}%
\column{3}{@{}>{\hspre}l<{\hspost}@{}}%
\column{5}{@{}>{\hspre}l<{\hspost}@{}}%
\column{E}{@{}>{\hspre}l<{\hspost}@{}}%
\>[5]{}\Varid{h_{L}}\;(\Varid{alg_{S}}\;(\Conid{MUpdate}\;\Varid{r}\;\Varid{k})){}\<[E]%
\\
\>[3]{}\mathrel{=}\mbox{\commentbegin ~  definition of \ensuremath{\Varid{alg_{S}}}  \commentend}{}\<[E]%
\\
\>[3]{}\hsindent{2}{}\<[5]%
\>[5]{}\Varid{h_{L}}\;(\lambda \Varid{s}\to \Varid{k}\;(\Varid{s}\mathbin{\oplus}\Varid{r})){}\<[E]%
\\
\>[3]{}\mathrel{=}\mbox{\commentbegin ~  definition of \ensuremath{\Varid{h_{L}}}  \commentend}{}\<[E]%
\\
\>[3]{}\hsindent{2}{}\<[5]%
\>[5]{}\Varid{fmap}\;(\Varid{fmap}\;(\Varid{fmap}\;\Varid{fst})\hsdot{\circ }{.}\Varid{h_{ND+f}})\;(\lambda \Varid{s}\to \Varid{k}\;(\Varid{s}\mathbin{\oplus}\Varid{r})){}\<[E]%
\\
\>[3]{}\mathrel{=}\mbox{\commentbegin ~  definition of \ensuremath{\Varid{fmap}}  \commentend}{}\<[E]%
\\
\>[3]{}\hsindent{2}{}\<[5]%
\>[5]{}\lambda \Varid{s}\to \Varid{fmap}\;(\Varid{fmap}\;\Varid{fst})\;(\Varid{h_{ND+f}}\;(\Varid{k}\;(\Varid{s}\mathbin{\oplus}\Varid{r}))){}\<[E]%
\\
\>[3]{}\mathrel{=}\mbox{\commentbegin ~  beta-expansion  \commentend}{}\<[E]%
\\
\>[3]{}\mathrel{=}\lambda \Varid{s}\to (\lambda \Varid{s}_{1}\to \Varid{fmap}\;(\Varid{fmap}\;\Varid{fst})\;(\Varid{h_{ND+f}}\;(\Varid{k}\;\Varid{s}_{1})))\;(\Varid{s}\mathbin{\oplus}\Varid{r}){}\<[E]%
\\
\>[3]{}\mathrel{=}\mbox{\commentbegin ~  definition of \ensuremath{\Varid{fmap}}  \commentend}{}\<[E]%
\\
\>[3]{}\mathrel{=}\lambda \Varid{s}\to (\Varid{fmap}\;(\Varid{fmap}\;(\Varid{fmap}\;\Varid{fst})\hsdot{\circ }{.}\Varid{h_{ND+f}})\;(\lambda \Varid{s}_{1}\to \Varid{k}\;\Varid{s}_{1}))\;(\Varid{s}\mathbin{\oplus}\Varid{r}){}\<[E]%
\\
\>[3]{}\mathrel{=}\mbox{\commentbegin ~  eta-expansion of \ensuremath{\Varid{k}}  \commentend}{}\<[E]%
\\
\>[3]{}\mathrel{=}\lambda \Varid{s}\to (\Varid{fmap}\;(\Varid{fmap}\;(\Varid{fmap}\;\Varid{fst})\hsdot{\circ }{.}\Varid{h_{ND+f}})\;\Varid{k})\;(\Varid{s}\mathbin{\oplus}\Varid{r}){}\<[E]%
\\
\>[3]{}\mathrel{=}\mbox{\commentbegin ~  definition of \ensuremath{\Varid{alg}_{\Varid{RHS}}^{\Varid{S}}}  \commentend}{}\<[E]%
\\
\>[3]{}\mathrel{=}\Varid{alg}_{\Varid{RHS}}^{\Varid{S}}\;(\Conid{MUpdate}\;\Varid{r}\;(\Varid{fmap}\;(\Varid{fmap}\;(\Varid{fmap}\;\Varid{fst})\hsdot{\circ }{.}\Varid{h_{ND+f}})\;\Varid{k})){}\<[E]%
\\
\>[3]{}\mathrel{=}\mbox{\commentbegin ~  definition of \ensuremath{\Varid{fmap}}  \commentend}{}\<[E]%
\\
\>[3]{}\mathrel{=}\Varid{alg}_{\Varid{RHS}}^{\Varid{S}}\;(\Varid{fmap}\;(\Varid{fmap}\;(\Varid{fmap}\;\Varid{fst})\hsdot{\circ }{.}\Varid{h_{ND+f}}))\;(\Conid{MUpdate}\;\Varid{r}\;\Varid{k})){}\<[E]%
\\
\>[3]{}\mathrel{=}\mbox{\commentbegin ~  definition of \ensuremath{\Varid{h_{L}}}  \commentend}{}\<[E]%
\\
\>[3]{}\mathrel{=}\Varid{alg}_{\Varid{RHS}}^{\Varid{S}}\;(\Varid{fmap}\;\Varid{h_{L}}\;(\Conid{MUpdate}\;\Varid{r}\;\Varid{k})){}\<[E]%
\ColumnHook
\end{hscode}\resethooks
\indentend \noindent \mbox{\underline{case \ensuremath{\Varid{t}\mathrel{=}\Conid{MRestore}\;\Varid{r}\;\Varid{k}}}}\indentbegin \begin{hscode}\SaveRestoreHook
\column{B}{@{}>{\hspre}l<{\hspost}@{}}%
\column{3}{@{}>{\hspre}l<{\hspost}@{}}%
\column{5}{@{}>{\hspre}l<{\hspost}@{}}%
\column{E}{@{}>{\hspre}l<{\hspost}@{}}%
\>[5]{}\Varid{h_{L}}\;(\Varid{alg_{S}}\;(\Conid{MRestore}\;\Varid{r}\;\Varid{k})){}\<[E]%
\\
\>[3]{}\mathrel{=}\mbox{\commentbegin ~  definition of \ensuremath{\Varid{alg_{S}}}  \commentend}{}\<[E]%
\\
\>[3]{}\hsindent{2}{}\<[5]%
\>[5]{}\Varid{h_{L}}\;(\lambda \Varid{s}\to \Varid{k}\;(\Varid{s}\mathbin{\ominus}\Varid{r})){}\<[E]%
\\
\>[3]{}\mathrel{=}\mbox{\commentbegin ~  definition of \ensuremath{\Varid{h_{L}}}  \commentend}{}\<[E]%
\\
\>[3]{}\hsindent{2}{}\<[5]%
\>[5]{}\Varid{fmap}\;(\Varid{fmap}\;(\Varid{fmap}\;\Varid{fst})\hsdot{\circ }{.}\Varid{h_{ND+f}})\;(\lambda \Varid{s}\to \Varid{k}\;(\Varid{s}\mathbin{\ominus}\Varid{r})){}\<[E]%
\\
\>[3]{}\mathrel{=}\mbox{\commentbegin ~  definition of \ensuremath{\Varid{fmap}}  \commentend}{}\<[E]%
\\
\>[3]{}\hsindent{2}{}\<[5]%
\>[5]{}\lambda \Varid{s}\to \Varid{fmap}\;(\Varid{fmap}\;\Varid{fst})\;(\Varid{h_{ND+f}}\;(\Varid{k}\;(\Varid{s}\mathbin{\ominus}\Varid{r}))){}\<[E]%
\\
\>[3]{}\mathrel{=}\mbox{\commentbegin ~  beta-expansion  \commentend}{}\<[E]%
\\
\>[3]{}\mathrel{=}\lambda \Varid{s}\to (\lambda \Varid{s}_{1}\to \Varid{fmap}\;(\Varid{fmap}\;\Varid{fst})\;(\Varid{h_{ND+f}}\;(\Varid{k}\;\Varid{s}_{1})))\;(\Varid{s}\mathbin{\ominus}\Varid{r}){}\<[E]%
\\
\>[3]{}\mathrel{=}\mbox{\commentbegin ~  definition of \ensuremath{\Varid{fmap}}  \commentend}{}\<[E]%
\\
\>[3]{}\mathrel{=}\lambda \Varid{s}\to (\Varid{fmap}\;(\Varid{fmap}\;(\Varid{fmap}\;\Varid{fst})\hsdot{\circ }{.}\Varid{h_{ND+f}})\;(\lambda \Varid{s}_{1}\to \Varid{k}\;\Varid{s}_{1}))\;(\Varid{s}\mathbin{\ominus}\Varid{r}){}\<[E]%
\\
\>[3]{}\mathrel{=}\mbox{\commentbegin ~  eta-expansion of \ensuremath{\Varid{k}}  \commentend}{}\<[E]%
\\
\>[3]{}\mathrel{=}\lambda \Varid{s}\to (\Varid{fmap}\;(\Varid{fmap}\;(\Varid{fmap}\;\Varid{fst})\hsdot{\circ }{.}\Varid{h_{ND+f}})\;\Varid{k})\;(\Varid{s}\mathbin{\ominus}\Varid{r}){}\<[E]%
\\
\>[3]{}\mathrel{=}\mbox{\commentbegin ~  definition of \ensuremath{\Varid{alg}_{\Varid{RHS}}^{\Varid{S}}}  \commentend}{}\<[E]%
\\
\>[3]{}\mathrel{=}\Varid{alg}_{\Varid{RHS}}^{\Varid{S}}\;(\Conid{MRestore}\;\Varid{r}\;(\Varid{fmap}\;(\Varid{fmap}\;(\Varid{fmap}\;\Varid{fst})\hsdot{\circ }{.}\Varid{h_{ND+f}})\;\Varid{k})){}\<[E]%
\\
\>[3]{}\mathrel{=}\mbox{\commentbegin ~  definition of \ensuremath{\Varid{fmap}}  \commentend}{}\<[E]%
\\
\>[3]{}\mathrel{=}\Varid{alg}_{\Varid{RHS}}^{\Varid{S}}\;(\Varid{fmap}\;(\Varid{fmap}\;(\Varid{fmap}\;\Varid{fst})\hsdot{\circ }{.}\Varid{h_{ND+f}}))\;(\Conid{MRestore}\;\Varid{r}\;\Varid{k})){}\<[E]%
\\
\>[3]{}\mathrel{=}\mbox{\commentbegin ~  definition of \ensuremath{\Varid{h_{L}}}  \commentend}{}\<[E]%
\\
\>[3]{}\mathrel{=}\Varid{alg}_{\Varid{RHS}}^{\Varid{S}}\;(\Varid{fmap}\;\Varid{h_{L}}\;(\Conid{MRestore}\;\Varid{r}\;\Varid{k})){}\<[E]%
\ColumnHook
\end{hscode}\resethooks
\indentend %

For the subcondition \refd{}, we define \ensuremath{\Varid{alg}_{\Varid{RHS}}^{\Varid{ND}}} as follows.\indentbegin \begin{hscode}\SaveRestoreHook
\column{B}{@{}>{\hspre}l<{\hspost}@{}}%
\column{3}{@{}>{\hspre}l<{\hspost}@{}}%
\column{22}{@{}>{\hspre}l<{\hspost}@{}}%
\column{E}{@{}>{\hspre}l<{\hspost}@{}}%
\>[3]{}\Varid{alg}_{\Varid{RHS}}^{\Varid{ND}}\mathbin{::}\Conid{Functor}\;\Varid{f}\Rightarrow \Varid{Nondet_{F}}\;(\Varid{s}\to \Conid{Free}\;\Varid{f}\;[\mskip1.5mu \Varid{a}\mskip1.5mu])\to (\Varid{s}\to \Conid{Free}\;\Varid{f}\;[\mskip1.5mu \Varid{a}\mskip1.5mu]){}\<[E]%
\\
\>[3]{}\Varid{alg}_{\Varid{RHS}}^{\Varid{ND}}\;\Conid{Fail}{}\<[22]%
\>[22]{}\mathrel{=}\lambda \Varid{s}\to \Conid{Var}\;[\mskip1.5mu \mskip1.5mu]{}\<[E]%
\\
\>[3]{}\Varid{alg}_{\Varid{RHS}}^{\Varid{ND}}\;(\Conid{Or}\;\Varid{p}\;\Varid{q}){}\<[22]%
\>[22]{}\mathrel{=}\lambda \Varid{s}\to \Varid{liftM2}\;(+\!\!+)\;(\Varid{p}\;\Varid{s})\;(\Varid{q}\;\Varid{s}){}\<[E]%
\ColumnHook
\end{hscode}\resethooks
\indentend To show its correctness, given \ensuremath{\Varid{op}\mathbin{::}\Varid{Nondet_{F}}\;(\Varid{s}\to \Conid{Free}\;(\Varid{Nondet_{F}}\mathrel{{:}{+}{:}}\Varid{f})\;(\Varid{a},\Varid{s}))} with \ensuremath{\Conid{Functor}\;\Varid{f}}, we calculate:
\indentbegin \begin{hscode}\SaveRestoreHook
\column{B}{@{}>{\hspre}l<{\hspost}@{}}%
\column{3}{@{}>{\hspre}l<{\hspost}@{}}%
\column{5}{@{}>{\hspre}l<{\hspost}@{}}%
\column{E}{@{}>{\hspre}l<{\hspost}@{}}%
\>[5]{}\Varid{h_{L}}\;(\Varid{fwd_{S}}\;(\Conid{Inl}\;\Varid{op})){}\<[E]%
\\
\>[3]{}\mathrel{=}\mbox{\commentbegin ~ definition of \ensuremath{\Varid{fwd_{S}}}  \commentend}{}\<[E]%
\\
\>[3]{}\hsindent{2}{}\<[5]%
\>[5]{}\Varid{h_{L}}\;(\lambda \Varid{s}\to \Conid{Op}\;(\Varid{fmap}\;(\mathbin{\$}\Varid{s})\;(\Conid{Inl}\;\Varid{op}))){}\<[E]%
\\
\>[3]{}\mathrel{=}\mbox{\commentbegin ~ definition of \ensuremath{\Varid{fmap}}  \commentend}{}\<[E]%
\\
\>[3]{}\hsindent{2}{}\<[5]%
\>[5]{}\Varid{h_{L}}\;(\lambda \Varid{s}\to \Conid{Op}\;(\Conid{Inl}\;(\Varid{fmap}\;(\mathbin{\$}\Varid{s})\;\Varid{op}))){}\<[E]%
\\
\>[3]{}\mathrel{=}\mbox{\commentbegin ~ definition of \ensuremath{\Varid{h_{L}}}  \commentend}{}\<[E]%
\\
\>[3]{}\hsindent{2}{}\<[5]%
\>[5]{}\Varid{fmap}\;(\Varid{fmap}\;(\Varid{fmap}\;\Varid{fst})\hsdot{\circ }{.}\Varid{h_{ND+f}})\;(\lambda \Varid{s}\to \Conid{Op}\;(\Conid{Inl}\;(\Varid{fmap}\;(\mathbin{\$}\Varid{s})\;\Varid{op}))){}\<[E]%
\\
\>[3]{}\mathrel{=}\mbox{\commentbegin ~ definition of \ensuremath{\Varid{fmap}}  \commentend}{}\<[E]%
\\
\>[3]{}\hsindent{2}{}\<[5]%
\>[5]{}\lambda \Varid{s}\to \Varid{fmap}\;(\Varid{fmap}\;\Varid{fst})\;(\Varid{h_{ND+f}}\;(\Conid{Op}\;(\Conid{Inl}\;(\Varid{fmap}\;(\mathbin{\$}\Varid{s})\;\Varid{op})))){}\<[E]%
\\
\>[3]{}\mathrel{=}\mbox{\commentbegin ~ definition of \ensuremath{\Varid{h_{ND+f}}}  \commentend}{}\<[E]%
\\
\>[3]{}\hsindent{2}{}\<[5]%
\>[5]{}\lambda \Varid{s}\to \Varid{fmap}\;(\Varid{fmap}\;\Varid{fst})\;(\Varid{alg_{ND+f}}\;(\Varid{fmap}\;\Varid{h_{ND+f}}\;(\Varid{fmap}\;(\mathbin{\$}\Varid{s})\;\Varid{op}))){}\<[E]%
\ColumnHook
\end{hscode}\resethooks
\indentend 
We proceed by a case analysis on \ensuremath{\Varid{op}}:

\noindent \mbox{\underline{case \ensuremath{\Varid{op}\mathrel{=}\Conid{Fail}}}}\indentbegin \begin{hscode}\SaveRestoreHook
\column{B}{@{}>{\hspre}l<{\hspost}@{}}%
\column{3}{@{}>{\hspre}l<{\hspost}@{}}%
\column{5}{@{}>{\hspre}l<{\hspost}@{}}%
\column{E}{@{}>{\hspre}l<{\hspost}@{}}%
\>[5]{}\lambda \Varid{s}\to \Varid{fmap}\;(\Varid{fmap}\;\Varid{fst})\;(\Varid{alg_{ND+f}}\;(\Varid{fmap}\;\Varid{h_{ND+f}}\;(\Varid{fmap}\;(\mathbin{\$}\Varid{s})\;\Conid{Fail}))){}\<[E]%
\\
\>[3]{}\mathrel{=}\mbox{\commentbegin ~ defintion of \ensuremath{\Varid{fmap}} (twice)  \commentend}{}\<[E]%
\\
\>[3]{}\hsindent{2}{}\<[5]%
\>[5]{}\lambda \Varid{s}\to \Varid{fmap}\;(\Varid{fmap}\;\Varid{fst})\;(\Varid{alg_{ND+f}}\;\Conid{Fail}){}\<[E]%
\\
\>[3]{}\mathrel{=}\mbox{\commentbegin ~ definition of \ensuremath{\Varid{alg_{ND+f}}}  \commentend}{}\<[E]%
\\
\>[3]{}\hsindent{2}{}\<[5]%
\>[5]{}\lambda \Varid{s}\to \Varid{fmap}\;(\Varid{fmap}\;\Varid{fst})\;(\Conid{Var}\;[\mskip1.5mu \mskip1.5mu]){}\<[E]%
\\
\>[3]{}\mathrel{=}\mbox{\commentbegin ~ definition of \ensuremath{\Varid{fmap}} (twice)  \commentend}{}\<[E]%
\\
\>[3]{}\hsindent{2}{}\<[5]%
\>[5]{}\lambda \Varid{s}\to \Conid{Var}\;[\mskip1.5mu \mskip1.5mu]{}\<[E]%
\\
\>[3]{}\mathrel{=}\mbox{\commentbegin ~ definition of \ensuremath{\Varid{alg}_{\Varid{RHS}}^{\Varid{ND}}}  \commentend}{}\<[E]%
\\
\>[3]{}\hsindent{2}{}\<[5]%
\>[5]{}\Varid{alg}_{\Varid{RHS}}^{\Varid{ND}}\;\Conid{Fail}{}\<[E]%
\\
\>[3]{}\mathrel{=}\mbox{\commentbegin  definition of \ensuremath{\Varid{fmap}}  \commentend}{}\<[E]%
\\
\>[3]{}\hsindent{2}{}\<[5]%
\>[5]{}\Varid{alg}_{\Varid{RHS}}^{\Varid{ND}}\;(\Varid{fmap}\;\Varid{h_{L}}\;\Varid{fail}){}\<[E]%
\ColumnHook
\end{hscode}\resethooks
\indentend 
\noindent \mbox{\underline{case \ensuremath{\Varid{op}\mathrel{=}\Conid{Or}\;\Varid{p}\;\Varid{q}}}}\indentbegin \begin{hscode}\SaveRestoreHook
\column{B}{@{}>{\hspre}l<{\hspost}@{}}%
\column{3}{@{}>{\hspre}l<{\hspost}@{}}%
\column{5}{@{}>{\hspre}l<{\hspost}@{}}%
\column{E}{@{}>{\hspre}l<{\hspost}@{}}%
\>[5]{}\lambda \Varid{s}\to \Varid{fmap}\;(\Varid{fmap}\;\Varid{fst})\;(\Varid{alg_{ND+f}}\;(\Varid{fmap}\;\Varid{h_{ND+f}}\;(\Varid{fmap}\;(\mathbin{\$}\Varid{s})\;(\Conid{Or}\;\Varid{p}\;\Varid{q})))){}\<[E]%
\\
\>[3]{}\mathrel{=}\mbox{\commentbegin ~ defintion of \ensuremath{\Varid{fmap}} (twice)  \commentend}{}\<[E]%
\\
\>[3]{}\hsindent{2}{}\<[5]%
\>[5]{}\lambda \Varid{s}\to \Varid{fmap}\;(\Varid{fmap}\;\Varid{fst})\;(\Varid{alg_{ND+f}}\;(\Conid{Or}\;(\Varid{h_{ND+f}}\;(\Varid{p}\;\Varid{s}))\;(\Varid{h_{ND+f}}\;(\Varid{q}\;\Varid{s})))){}\<[E]%
\\
\>[3]{}\mathrel{=}\mbox{\commentbegin ~ definition of \ensuremath{\Varid{alg_{ND+f}}}  \commentend}{}\<[E]%
\\
\>[3]{}\hsindent{2}{}\<[5]%
\>[5]{}\lambda \Varid{s}\to \Varid{fmap}\;(\Varid{fmap}\;\Varid{fst})\;(\Varid{liftM2}\;(+\!\!+)\;(\Varid{h_{ND+f}}\;(\Varid{p}\;\Varid{s}))\;(\Varid{h_{ND+f}}\;(\Varid{q}\;\Varid{s}))){}\<[E]%
\\
\>[3]{}\mathrel{=}\mbox{\commentbegin ~ Lemma~\ref{eq:liftM2-fst-comm}  \commentend}{}\<[E]%
\\
\>[3]{}\hsindent{2}{}\<[5]%
\>[5]{}\lambda \Varid{s}\to \Varid{liftM2}\;(+\!\!+)\;(\Varid{fmap}\;(\Varid{fmap}\;\Varid{fst})\;(\Varid{h_{ND+f}}\;(\Varid{p}\;\Varid{s})))\;(\Varid{fmap}\;(\Varid{fmap}\;\Varid{fst})\;(\Varid{h_{ND+f}}\;(\Varid{q}\;\Varid{s}))){}\<[E]%
\\
\>[3]{}\mathrel{=}\mbox{\commentbegin ~ definition of \ensuremath{\Varid{alg}_{\Varid{RHS}}^{\Varid{ND}}}  \commentend}{}\<[E]%
\\
\>[3]{}\hsindent{2}{}\<[5]%
\>[5]{}\Varid{alg}_{\Varid{RHS}}^{\Varid{ND}}\;(\Conid{Or}\;(\Varid{fmap}\;(\Varid{fmap}\;\Varid{fst})\hsdot{\circ }{.}\Varid{h_{ND+f}}\hsdot{\circ }{.}\Varid{p})\;(\Varid{fmap}\;(\Varid{fmap}\;\Varid{fst})\hsdot{\circ }{.}\Varid{h_{ND+f}}\hsdot{\circ }{.}\Varid{q})){}\<[E]%
\\
\>[3]{}\mathrel{=}\mbox{\commentbegin ~ defintion of \ensuremath{\Varid{fmap}} (twice)  \commentend}{}\<[E]%
\\
\>[3]{}\hsindent{2}{}\<[5]%
\>[5]{}\Varid{alg}_{\Varid{RHS}}^{\Varid{ND}}\;(\Varid{fmap}\;(\Varid{fmap}\;(\Varid{fmap}\;(\Varid{fmap}\;\Varid{fst})\hsdot{\circ }{.}\Varid{h_{ND+f}}))\;(\Conid{Or}\;\Varid{p}\;\Varid{q})){}\<[E]%
\\
\>[3]{}\mathrel{=}\mbox{\commentbegin ~ defintion of \ensuremath{\Varid{h_{L}}}  \commentend}{}\<[E]%
\\
\>[3]{}\hsindent{2}{}\<[5]%
\>[5]{}\Varid{alg}_{\Varid{RHS}}^{\Varid{ND}}\;(\Varid{fmap}\;\Varid{h_{L}}\;(\Conid{Or}\;\Varid{p}\;\Varid{q})){}\<[E]%
\ColumnHook
\end{hscode}\resethooks
\indentend %

For the last subcondition \refe{}, we define \ensuremath{\Varid{fwd}_{\Varid{RHS}}} as follows.\indentbegin \begin{hscode}\SaveRestoreHook
\column{B}{@{}>{\hspre}l<{\hspost}@{}}%
\column{3}{@{}>{\hspre}l<{\hspost}@{}}%
\column{E}{@{}>{\hspre}l<{\hspost}@{}}%
\>[3]{}\Varid{fwd}_{\Varid{RHS}}\mathbin{::}\Conid{Functor}\;\Varid{f}\Rightarrow \Varid{f}\;(\Varid{s}\to \Conid{Free}\;\Varid{f}\;[\mskip1.5mu \Varid{a}\mskip1.5mu])\to (\Varid{s}\to \Conid{Free}\;\Varid{f}\;[\mskip1.5mu \Varid{a}\mskip1.5mu]){}\<[E]%
\\
\>[3]{}\Varid{fwd}_{\Varid{RHS}}\;\Varid{op}\mathrel{=}\lambda \Varid{s}\to \Conid{Op}\;(\Varid{fmap}\;(\mathbin{\$}\Varid{s})\;\Varid{op}){}\<[E]%
\ColumnHook
\end{hscode}\resethooks
\indentend To show its correctness, given input \ensuremath{\Varid{op}\mathbin{::}\Varid{f}\;(\Varid{s}\to \Conid{Free}\;(\Varid{Nondet_{F}}\mathrel{{:}{+}{:}}\Varid{f})\;(\Varid{a},\Varid{s}))}, we calculate:
\indentbegin \begin{hscode}\SaveRestoreHook
\column{B}{@{}>{\hspre}l<{\hspost}@{}}%
\column{3}{@{}>{\hspre}l<{\hspost}@{}}%
\column{5}{@{}>{\hspre}l<{\hspost}@{}}%
\column{E}{@{}>{\hspre}l<{\hspost}@{}}%
\>[5]{}\Varid{h_{L}}\;(\Varid{fwd_{S}}\;(\Conid{Inr}\;\Varid{op})){}\<[E]%
\\
\>[3]{}\mathrel{=}\mbox{\commentbegin ~ definition of \ensuremath{\Varid{fwd_{S}}}  \commentend}{}\<[E]%
\\
\>[3]{}\hsindent{2}{}\<[5]%
\>[5]{}\Varid{h_{L}}\;(\lambda \Varid{s}\to \Conid{Op}\;(\Varid{fmap}\;(\mathbin{\$}\Varid{s})\;(\Conid{Inr}\;\Varid{op}))){}\<[E]%
\\
\>[3]{}\mathrel{=}\mbox{\commentbegin ~ definition of \ensuremath{\Varid{fmap}}  \commentend}{}\<[E]%
\\
\>[3]{}\hsindent{2}{}\<[5]%
\>[5]{}\Varid{h_{L}}\;(\lambda \Varid{s}\to \Conid{Op}\;(\Conid{Inr}\;(\Varid{fmap}\;(\mathbin{\$}\Varid{s})\;\Varid{op}))){}\<[E]%
\\
\>[3]{}\mathrel{=}\mbox{\commentbegin ~ definition of \ensuremath{\Varid{h_{L}}}  \commentend}{}\<[E]%
\\
\>[3]{}\hsindent{2}{}\<[5]%
\>[5]{}\Varid{fmap}\;(\Varid{fmap}\;(\Varid{fmap}\;\Varid{fst})\hsdot{\circ }{.}\Varid{h_{ND+f}})\;(\lambda \Varid{s}\to \Conid{Op}\;(\Conid{Inr}\;(\Varid{fmap}\;(\mathbin{\$}\Varid{s})\;\Varid{op}))){}\<[E]%
\\
\>[3]{}\mathrel{=}\mbox{\commentbegin ~ definition of \ensuremath{\Varid{fmap}}  \commentend}{}\<[E]%
\\
\>[3]{}\hsindent{2}{}\<[5]%
\>[5]{}\lambda \Varid{s}\to \Varid{fmap}\;(\Varid{fmap}\;\Varid{fst})\;(\Varid{h_{ND+f}}\;(\Conid{Op}\;(\Conid{Inr}\;(\Varid{fmap}\;(\mathbin{\$}\Varid{s})\;\Varid{op})))){}\<[E]%
\\
\>[3]{}\mathrel{=}\mbox{\commentbegin ~ definition of \ensuremath{\Varid{h_{ND+f}}}  \commentend}{}\<[E]%
\\
\>[3]{}\hsindent{2}{}\<[5]%
\>[5]{}\lambda \Varid{s}\to \Varid{fmap}\;(\Varid{fmap}\;\Varid{fst})\;(\Varid{fwd_{ND+f}}\;(\Varid{fmap}\;\Varid{h_{ND+f}}\;(\Varid{fmap}\;(\mathbin{\$}\Varid{s})\;\Varid{op}))){}\<[E]%
\\
\>[3]{}\mathrel{=}\mbox{\commentbegin ~ definition of \ensuremath{\Varid{fwd_{ND+f}}}  \commentend}{}\<[E]%
\\
\>[3]{}\hsindent{2}{}\<[5]%
\>[5]{}\lambda \Varid{s}\to \Varid{fmap}\;(\Varid{fmap}\;\Varid{fst})\;(\Conid{Op}\;(\Varid{fmap}\;\Varid{h_{ND+f}}\;(\Varid{fmap}\;(\mathbin{\$}\Varid{s})\;\Varid{op}))){}\<[E]%
\\
\>[3]{}\mathrel{=}\mbox{\commentbegin ~ definition of \ensuremath{\Varid{fmap}}  \commentend}{}\<[E]%
\\
\>[3]{}\hsindent{2}{}\<[5]%
\>[5]{}\lambda \Varid{s}\to \Conid{Op}\;(\Varid{fmap}\;(\Varid{fmap}\;(\Varid{fmap}\;\Varid{fst}))\;(\Varid{fmap}\;\Varid{h_{ND+f}}\;(\Varid{fmap}\;(\mathbin{\$}\Varid{s})\;\Varid{op}))){}\<[E]%
\\
\>[3]{}\mathrel{=}\mbox{\commentbegin ~ \ensuremath{\Varid{fmap}} fusion  \commentend}{}\<[E]%
\\
\>[3]{}\hsindent{2}{}\<[5]%
\>[5]{}\lambda \Varid{s}\to \Conid{Op}\;(\Varid{fmap}\;(\Varid{fmap}\;(\Varid{fmap}\;\Varid{fst})\hsdot{\circ }{.}\Varid{h_{ND+f}})\;(\Varid{fmap}\;(\mathbin{\$}\Varid{s})\;\Varid{op})){}\<[E]%
\\
\>[3]{}\mathrel{=}\mbox{\commentbegin ~ definition of \ensuremath{\Varid{h_{L}}}  \commentend}\mid {}\<[E]%
\\
\>[3]{}\hsindent{2}{}\<[5]%
\>[5]{}\lambda \Varid{s}\to \Conid{Op}\;(\Varid{h_{L}}\;(\Varid{fmap}\;(\mathbin{\$}\Varid{s})\;\Varid{op})){}\<[E]%
\\
\>[3]{}\mathrel{=}\mbox{\commentbegin ~ \Cref{eq:comm-app-fmap}  \commentend}{}\<[E]%
\\
\>[3]{}\hsindent{2}{}\<[5]%
\>[5]{}\lambda \Varid{s}\to \Conid{Op}\;(\Varid{fmap}\;((\mathbin{\$}\Varid{s})\hsdot{\circ }{.}\Varid{h_{L}})\;\Varid{op}){}\<[E]%
\\
\>[3]{}\mathrel{=}\mbox{\commentbegin ~ \ensuremath{\Varid{fmap}} fusion  \commentend}{}\<[E]%
\\
\>[3]{}\hsindent{2}{}\<[5]%
\>[5]{}\lambda \Varid{s}\to \Conid{Op}\;(\Varid{fmap}\;(\mathbin{\$}\Varid{s})\;(\Varid{fmap}\;\Varid{h_{L}}\;\Varid{op})){}\<[E]%
\\
\>[3]{}\mathrel{=}\mbox{\commentbegin ~ definition of \ensuremath{\Varid{fwd}_{\Varid{RHS}}}  \commentend}{}\<[E]%
\\
\>[3]{}\hsindent{2}{}\<[5]%
\>[5]{}\Varid{fwd}_{\Varid{RHS}}\;(\Varid{fmap}\;\Varid{h_{L}}\;\Varid{op}){}\<[E]%
\ColumnHook
\end{hscode}\resethooks
\indentend

\subsection{Fusing the Left-Hand Side}
\label{app:modify-fusing-lhs}

We calculate as follows:
\indentbegin \begin{hscode}\SaveRestoreHook
\column{B}{@{}>{\hspre}l<{\hspost}@{}}%
\column{5}{@{}>{\hspre}l<{\hspost}@{}}%
\column{7}{@{}>{\hspre}l<{\hspost}@{}}%
\column{9}{@{}>{\hspre}l<{\hspost}@{}}%
\column{34}{@{}>{\hspre}l<{\hspost}@{}}%
\column{E}{@{}>{\hspre}l<{\hspost}@{}}%
\>[5]{}\Varid{h_{GlobalM}}\hsdot{\circ }{.}\Varid{local2global_M}{}\<[E]%
\\
\>[B]{}\mathrel{=}\mbox{\commentbegin ~  definition of \ensuremath{\Varid{local2global_M}}  \commentend}{}\<[E]%
\\
\>[B]{}\hsindent{5}{}\<[5]%
\>[5]{}\Varid{h_{GlobalM}}\hsdot{\circ }{.}\Varid{fold}\;\Conid{Var}\;\Varid{alg}{}\<[E]%
\\
\>[5]{}\hsindent{2}{}\<[7]%
\>[7]{}\mathbf{where}{}\<[E]%
\\
\>[7]{}\hsindent{2}{}\<[9]%
\>[9]{}\Varid{alg}\;(\Conid{Inl}\;(\Conid{MUpdate}\;\Varid{r}\;\Varid{k})){}\<[34]%
\>[34]{}\mathrel{=}(\Varid{update}\;\Varid{r}\mathbin{\talloblong}\Varid{side}\;(\Varid{restore}\;\Varid{r}))>\!\!>\Varid{k}{}\<[E]%
\\
\>[7]{}\hsindent{2}{}\<[9]%
\>[9]{}\Varid{alg}\;\Varid{p}{}\<[34]%
\>[34]{}\mathrel{=}\Conid{Op}\;\Varid{p}{}\<[E]%
\\
\>[B]{}\mathrel{=}\mbox{\commentbegin ~  fold fusion-post' (Equation \ref{eq:fusion-post-strong})   \commentend}{}\<[E]%
\\
\>[B]{}\hsindent{5}{}\<[5]%
\>[5]{}\Varid{fold}\;\Varid{gen}_{\Varid{LHS}}\;(\Varid{alg}_{\Varid{LHS}}^{\Varid{S}}\mathbin{\#}\Varid{alg}_{\Varid{LHS}}^{\Varid{ND}}\mathbin{\#}\Varid{fwd}_{\Varid{LHS}}){}\<[E]%
\ColumnHook
\end{hscode}\resethooks
\indentend 
This last step is valid provided that the fusion conditions are satisfied:
\[\ba{rclr}
\ensuremath{\Varid{h_{GlobalM}}\hsdot{\circ }{.}\Conid{Var}} & = & \ensuremath{\Varid{gen}_{\Varid{LHS}}} &\refa{}\\
\ea\]
\vspace{-\baselineskip}
\[\ba{rlr}
   &\ensuremath{\Varid{h_{GlobalM}}\hsdot{\circ }{.}\Varid{alg}\hsdot{\circ }{.}\Varid{fmap}\;\Varid{local2global_M}} \\
= & \ensuremath{(\Varid{alg}_{\Varid{LHS}}^{\Varid{S}}\mathbin{\#}\Varid{alg}_{\Varid{LHS}}^{\Varid{ND}}\mathbin{\#}\Varid{fwd}_{\Varid{LHS}})\hsdot{\circ }{.}\Varid{fmap}\;\Varid{h_{GlobalM}}\hsdot{\circ }{.}\Varid{fmap}\;\Varid{local2global_M}} &\refb{}
\ea\]

The first subcondition \refa{} is met by\indentbegin \begin{hscode}\SaveRestoreHook
\column{B}{@{}>{\hspre}l<{\hspost}@{}}%
\column{3}{@{}>{\hspre}l<{\hspost}@{}}%
\column{E}{@{}>{\hspre}l<{\hspost}@{}}%
\>[3]{}\Varid{gen}_{\Varid{LHS}}\mathbin{::}\Conid{Functor}\;\Varid{f}\Rightarrow \Varid{a}\to (\Varid{s}\to \Conid{Free}\;\Varid{f}\;[\mskip1.5mu \Varid{a}\mskip1.5mu]){}\<[E]%
\\
\>[3]{}\Varid{gen}_{\Varid{LHS}}\;\Varid{x}\mathrel{=}\lambda \Varid{s}\to \Conid{Var}\;[\mskip1.5mu \Varid{x}\mskip1.5mu]{}\<[E]%
\ColumnHook
\end{hscode}\resethooks
\indentend as established in the following calculation:\indentbegin \begin{hscode}\SaveRestoreHook
\column{B}{@{}>{\hspre}l<{\hspost}@{}}%
\column{3}{@{}>{\hspre}l<{\hspost}@{}}%
\column{5}{@{}>{\hspre}l<{\hspost}@{}}%
\column{E}{@{}>{\hspre}l<{\hspost}@{}}%
\>[5]{}\Varid{h_{GlobalM}}\;(\Conid{Var}\;\Varid{x}){}\<[E]%
\\
\>[3]{}\mathrel{=}\mbox{\commentbegin ~ definition of \ensuremath{\Varid{h_{GlobalM}}}  \commentend}{}\<[E]%
\\
\>[3]{}\hsindent{2}{}\<[5]%
\>[5]{}\Varid{fmap}\;(\Varid{fmap}\;\Varid{fst})\;(\Varid{h_{Modify1}}\;(\Varid{h_{ND+f}}\;(\Varid{(\Leftrightarrow)}\;(\Conid{Var}\;\Varid{x})))){}\<[E]%
\\
\>[3]{}\mathrel{=}\mbox{\commentbegin ~ definition of \ensuremath{\Varid{(\Leftrightarrow)}}  \commentend}{}\<[E]%
\\
\>[3]{}\hsindent{2}{}\<[5]%
\>[5]{}\Varid{fmap}\;(\Varid{fmap}\;\Varid{fst})\;(\Varid{h_{Modify1}}\;(\Varid{h_{ND+f}}\;(\Conid{Var}\;\Varid{x}))){}\<[E]%
\\
\>[3]{}\mathrel{=}\mbox{\commentbegin ~ definition of \ensuremath{\Varid{h_{ND+f}}}  \commentend}{}\<[E]%
\\
\>[3]{}\hsindent{2}{}\<[5]%
\>[5]{}\Varid{fmap}\;(\Varid{fmap}\;\Varid{fst})\;(\Varid{h_{Modify1}}\;(\Conid{Var}\;[\mskip1.5mu \Varid{x}\mskip1.5mu])){}\<[E]%
\\
\>[3]{}\mathrel{=}\mbox{\commentbegin ~ definition of \ensuremath{\Varid{h_{Modify1}}}  \commentend}{}\<[E]%
\\
\>[3]{}\hsindent{2}{}\<[5]%
\>[5]{}\Varid{fmap}\;(\Varid{fmap}\;\Varid{fst})\;(\lambda \Varid{s}\to \Conid{Var}\;([\mskip1.5mu \Varid{x}\mskip1.5mu],\Varid{s})){}\<[E]%
\\
\>[3]{}\mathrel{=}\mbox{\commentbegin ~ definition of \ensuremath{\Varid{fmap}} (twice)  \commentend}{}\<[E]%
\\
\>[3]{}\hsindent{2}{}\<[5]%
\>[5]{}\lambda \Varid{s}\to \Conid{Var}\;[\mskip1.5mu \Varid{x}\mskip1.5mu]{}\<[E]%
\\
\>[3]{}\mathrel{=}\mbox{\commentbegin ~ definition of \ensuremath{\Varid{gen}_{\Varid{LHS}}}  \commentend}{}\<[E]%
\\
\>[3]{}\hsindent{2}{}\<[5]%
\>[5]{}\Varid{gen}_{\Varid{LHS}}\;\Varid{x}{}\<[E]%
\ColumnHook
\end{hscode}\resethooks
\indentend 
We can split the second fusion condition \refb{} in three
subconditions:
\[\ba{rclr}
\ensuremath{\Varid{h_{GlobalM}}\hsdot{\circ }{.}\Varid{alg}\hsdot{\circ }{.}\Conid{Inl}\hsdot{\circ }{.}\Varid{fmap}\;\Varid{local2global_M}} & = & \ensuremath{\Varid{alg}_{\Varid{LHS}}^{\Varid{S}}\hsdot{\circ }{.}\Varid{fmap}\;\Varid{h_{GlobalM}}\hsdot{\circ }{.}\Varid{fmap}\;\Varid{local2global_M}} &\refc{}\\
\ensuremath{\Varid{h_{GlobalM}}\hsdot{\circ }{.}\Varid{alg}\hsdot{\circ }{.}\Conid{Inr}\hsdot{\circ }{.}\Conid{Inl}\hsdot{\circ }{.}\Varid{fmap}\;\Varid{local2global_M}} & = & \ensuremath{\Varid{alg}_{\Varid{LHS}}^{\Varid{ND}}\hsdot{\circ }{.}\Varid{fmap}\;\Varid{h_{GlobalM}}\hsdot{\circ }{.}\Varid{fmap}\;\Varid{local2global_M}} &\refd{}\\
\ensuremath{\Varid{h_{GlobalM}}\hsdot{\circ }{.}\Varid{alg}\hsdot{\circ }{.}\Conid{Inr}\hsdot{\circ }{.}\Conid{Inr}\hsdot{\circ }{.}\Varid{fmap}\;\Varid{local2global_M}} & = & \ensuremath{\Varid{fwd}_{\Varid{LHS}}\hsdot{\circ }{.}\Varid{fmap}\;\Varid{h_{GlobalM}}\hsdot{\circ }{.}\Varid{fmap}\;\Varid{local2global_M}} &\refe{}
\ea\]

For brevity, we omit the last common part \ensuremath{\Varid{fmap}\;\Varid{local2global_M}} of
these equations in the following proofs. Instead, we assume that the
input is in the codomain of \ensuremath{\Varid{fmap}\;\Varid{local2global_M}}.
Also, we use the condition in \Cref{thm:modify-local-global} that the
input program does not use the \ensuremath{\Varid{restore}} operation.

For the first subcondition \refc{}, we can define \ensuremath{\Varid{alg}_{\Varid{LHS}}^{\Varid{S}}} as follows.\indentbegin \begin{hscode}\SaveRestoreHook
\column{B}{@{}>{\hspre}l<{\hspost}@{}}%
\column{3}{@{}>{\hspre}l<{\hspost}@{}}%
\column{27}{@{}>{\hspre}l<{\hspost}@{}}%
\column{30}{@{}>{\hspre}l<{\hspost}@{}}%
\column{E}{@{}>{\hspre}l<{\hspost}@{}}%
\>[3]{}\Varid{alg}_{\Varid{LHS}}^{\Varid{S}}\mathbin{::}(\Conid{Functor}\;\Varid{f},\Conid{Undo}\;\Varid{s}\;\Varid{r})\Rightarrow \Varid{Modify_{F}}\;\Varid{s}\;\Varid{r}\;(\Varid{s}\to \Conid{Free}\;\Varid{f}\;[\mskip1.5mu \Varid{a}\mskip1.5mu])\to (\Varid{s}\to \Conid{Free}\;\Varid{f}\;[\mskip1.5mu \Varid{a}\mskip1.5mu]){}\<[E]%
\\
\>[3]{}\Varid{alg}_{\Varid{LHS}}^{\Varid{S}}\;(\Conid{MGet}\;\Varid{k}){}\<[27]%
\>[27]{}\mathrel{=}{}\<[30]%
\>[30]{}\lambda \Varid{s}\to \Varid{k}\;\Varid{s}\;\Varid{s}{}\<[E]%
\\
\>[3]{}\Varid{alg}_{\Varid{LHS}}^{\Varid{S}}\;(\Conid{MUpdate}\;\Varid{r}\;\Varid{k}){}\<[27]%
\>[27]{}\mathrel{=}\lambda \Varid{s}\to \Varid{k}\;(\Varid{s}\mathbin{\oplus}\Varid{r}){}\<[E]%
\\
\>[3]{}\Varid{alg}_{\Varid{LHS}}^{\Varid{S}}\;(\Conid{MRestore}\;\Varid{r}\;\Varid{k}){}\<[27]%
\>[27]{}\mathrel{=}\lambda \Varid{s}\to \Varid{k}\;(\Varid{s}\mathbin{\ominus}\Varid{r}){}\<[E]%
\ColumnHook
\end{hscode}\resethooks
\indentend We prove it by a case analysis on the shape of input \ensuremath{\Varid{op}\mathbin{::}\Varid{Modify_{F}}\;\Varid{s}\;\Varid{r}\;(\Conid{Free}\;(\Varid{Modify_{F}}\;\Varid{s}\;\Varid{r}\mathrel{{:}{+}{:}}\Varid{Nondet_{F}}\mathrel{{:}{+}{:}}\Varid{f})\;\Varid{a})}.
Note that we only need to consider the case that \ensuremath{\Varid{op}} is of form \ensuremath{\Conid{MGet}\;\Varid{k}} or \ensuremath{\Conid{MUpdate}\;\Varid{r}\;\Varid{k}} where \ensuremath{\Varid{restore}} is also not used in the
continuation \ensuremath{\Varid{k}}.

\vspace{0.5\lineskip}

\noindent \mbox{\underline{case \ensuremath{\Varid{op}\mathrel{=}\Conid{MGet}\;\Varid{k}}}}\indentbegin \begin{hscode}\SaveRestoreHook
\column{B}{@{}>{\hspre}l<{\hspost}@{}}%
\column{3}{@{}>{\hspre}l<{\hspost}@{}}%
\column{5}{@{}>{\hspre}l<{\hspost}@{}}%
\column{E}{@{}>{\hspre}l<{\hspost}@{}}%
\>[5]{}\Varid{h_{GlobalM}}\;(\Varid{alg}\;(\Conid{Inl}\;(\Conid{MGet}\;\Varid{k}))){}\<[E]%
\\
\>[3]{}\mathrel{=}\mbox{\commentbegin ~ definition of \ensuremath{\Varid{alg}}  \commentend}{}\<[E]%
\\
\>[3]{}\hsindent{2}{}\<[5]%
\>[5]{}\Varid{h_{GlobalM}}\;(\Conid{Op}\;(\Conid{Inl}\;(\Conid{MGet}\;\Varid{k}))){}\<[E]%
\\
\>[3]{}\mathrel{=}\mbox{\commentbegin ~ definition of \ensuremath{\Varid{h_{GlobalM}}}  \commentend}{}\<[E]%
\\
\>[3]{}\hsindent{2}{}\<[5]%
\>[5]{}\Varid{fmap}\;(\Varid{fmap}\;\Varid{fst})\;(\Varid{h_{Modify1}}\;(\Varid{h_{ND+f}}\;(\Varid{(\Leftrightarrow)}\;(\Conid{Op}\;(\Conid{Inl}\;(\Conid{MGet}\;\Varid{k})))))){}\<[E]%
\\
\>[3]{}\mathrel{=}\mbox{\commentbegin ~ definition of \ensuremath{\Varid{(\Leftrightarrow)}}  \commentend}{}\<[E]%
\\
\>[3]{}\hsindent{2}{}\<[5]%
\>[5]{}\Varid{fmap}\;(\Varid{fmap}\;\Varid{fst})\;(\Varid{h_{Modify1}}\;(\Varid{h_{ND+f}}\;(\Conid{Op}\;(\Conid{Inr}\;(\Conid{Inl}\;(\Varid{fmap}\;\Varid{(\Leftrightarrow)}\;(\Conid{MGet}\;\Varid{k}))))))){}\<[E]%
\\
\>[3]{}\mathrel{=}\mbox{\commentbegin ~ definition of \ensuremath{\Varid{fmap}}  \commentend}{}\<[E]%
\\
\>[3]{}\hsindent{2}{}\<[5]%
\>[5]{}\Varid{fmap}\;(\Varid{fmap}\;\Varid{fst})\;(\Varid{h_{Modify1}}\;(\Varid{h_{ND+f}}\;(\Conid{Op}\;(\Conid{Inr}\;(\Conid{Inl}\;(\Conid{MGet}\;(\Varid{(\Leftrightarrow)}\hsdot{\circ }{.}\Varid{k}))))))){}\<[E]%
\\
\>[3]{}\mathrel{=}\mbox{\commentbegin ~ definition of \ensuremath{\Varid{h_{ND+f}}}  \commentend}{}\<[E]%
\\
\>[3]{}\hsindent{2}{}\<[5]%
\>[5]{}\Varid{fmap}\;(\Varid{fmap}\;\Varid{fst})\;(\Varid{h_{Modify1}}\;(\Conid{Op}\;(\Varid{fmap}\;\Varid{h_{ND+f}}\;(\Conid{Inl}\;(\Conid{MGet}\;(\Varid{(\Leftrightarrow)}\hsdot{\circ }{.}\Varid{k})))))){}\<[E]%
\\
\>[3]{}\mathrel{=}\mbox{\commentbegin ~ definition of \ensuremath{\Varid{fmap}}  \commentend}{}\<[E]%
\\
\>[3]{}\hsindent{2}{}\<[5]%
\>[5]{}\Varid{fmap}\;(\Varid{fmap}\;\Varid{fst})\;(\Varid{h_{Modify1}}\;(\Conid{Op}\;(\Conid{Inl}\;(\Conid{MGet}\;(\Varid{h_{ND+f}}\hsdot{\circ }{.}\Varid{(\Leftrightarrow)}\hsdot{\circ }{.}\Varid{k}))))){}\<[E]%
\\
\>[3]{}\mathrel{=}\mbox{\commentbegin ~ definition of \ensuremath{\Varid{h_{Modify1}}}  \commentend}{}\<[E]%
\\
\>[3]{}\hsindent{2}{}\<[5]%
\>[5]{}\Varid{fmap}\;(\Varid{fmap}\;\Varid{fst})\;(\lambda \Varid{s}\to (\Varid{h_{Modify1}}\hsdot{\circ }{.}\Varid{h_{ND+f}}\hsdot{\circ }{.}\Varid{(\Leftrightarrow)}\hsdot{\circ }{.}\Varid{k})\;\Varid{s}\;\Varid{s}){}\<[E]%
\\
\>[3]{}\mathrel{=}\mbox{\commentbegin ~ definition of \ensuremath{\Varid{fmap}}  \commentend}{}\<[E]%
\\
\>[3]{}\hsindent{2}{}\<[5]%
\>[5]{}\lambda \Varid{s}\to \Varid{fmap}\;\Varid{fst}\;((\Varid{h_{Modify1}}\hsdot{\circ }{.}\Varid{h_{ND+f}}\hsdot{\circ }{.}\Varid{(\Leftrightarrow)}\hsdot{\circ }{.}\Varid{k})\;\Varid{s}\;\Varid{s}){}\<[E]%
\\
\>[3]{}\mathrel{=}\mbox{\commentbegin ~ definition of \ensuremath{\Varid{fmap}}  \commentend}{}\<[E]%
\\
\>[3]{}\hsindent{2}{}\<[5]%
\>[5]{}\lambda \Varid{s}\to ((\Varid{fmap}\;(\Varid{fmap}\;\Varid{fst})\hsdot{\circ }{.}\Varid{h_{Modify1}}\hsdot{\circ }{.}\Varid{h_{ND+f}}\hsdot{\circ }{.}\Varid{(\Leftrightarrow)}\hsdot{\circ }{.}\Varid{k})\;\Varid{s}\;\Varid{s}){}\<[E]%
\\
\>[3]{}\mathrel{=}\mbox{\commentbegin ~ definition of \ensuremath{\Varid{h_{GlobalM}}}  \commentend}{}\<[E]%
\\
\>[3]{}\hsindent{2}{}\<[5]%
\>[5]{}\lambda \Varid{s}\to (\Varid{h_{GlobalM}}\hsdot{\circ }{.}\Varid{k})\;\Varid{s}\;\Varid{s}{}\<[E]%
\\
\>[3]{}\mathrel{=}\mbox{\commentbegin ~ definition of \ensuremath{\Varid{alg}_{\Varid{LHS}}^{\Varid{S}}}  \commentend}{}\<[E]%
\\
\>[3]{}\hsindent{2}{}\<[5]%
\>[5]{}\Varid{alg}_{\Varid{LHS}}^{\Varid{S}}\;(\Conid{MGet}\;(\Varid{h_{GlobalM}}\hsdot{\circ }{.}\Varid{k})){}\<[E]%
\\
\>[3]{}\mathrel{=}\mbox{\commentbegin ~ definition of \ensuremath{\Varid{fmap}}  \commentend}{}\<[E]%
\\
\>[3]{}\hsindent{2}{}\<[5]%
\>[5]{}\Varid{alg}_{\Varid{LHS}}^{\Varid{S}}\;(\Varid{fmap}\;\Varid{h_{GlobalM}}\;(\Conid{MGet}\;\Varid{k})){}\<[E]%
\ColumnHook
\end{hscode}\resethooks
\indentend \noindent \mbox{\underline{case \ensuremath{\Varid{op}\mathrel{=}\Conid{MUpdate}\;\Varid{r}\;\Varid{k}}}}
From \ensuremath{\Varid{op}} is in the codomain of \ensuremath{\Varid{fmap}\;\Varid{local2global_M}} we obtain \ensuremath{\Varid{k}} is
in the codomain of \ensuremath{\Varid{local2global_M}}.
\indentbegin \begin{hscode}\SaveRestoreHook
\column{B}{@{}>{\hspre}l<{\hspost}@{}}%
\column{3}{@{}>{\hspre}l<{\hspost}@{}}%
\column{5}{@{}>{\hspre}l<{\hspost}@{}}%
\column{7}{@{}>{\hspre}l<{\hspost}@{}}%
\column{11}{@{}>{\hspre}l<{\hspost}@{}}%
\column{16}{@{}>{\hspre}l<{\hspost}@{}}%
\column{18}{@{}>{\hspre}l<{\hspost}@{}}%
\column{21}{@{}>{\hspre}l<{\hspost}@{}}%
\column{26}{@{}>{\hspre}l<{\hspost}@{}}%
\column{33}{@{}>{\hspre}l<{\hspost}@{}}%
\column{E}{@{}>{\hspre}l<{\hspost}@{}}%
\>[5]{}\Varid{h_{GlobalM}}\;(\Varid{alg}\;(\Conid{Inl}\;(\Conid{MUpdate}\;\Varid{r}\;\Varid{k}))){}\<[E]%
\\
\>[3]{}\mathrel{=}\mbox{\commentbegin ~ definition of \ensuremath{\Varid{alg}}  \commentend}{}\<[E]%
\\
\>[3]{}\hsindent{2}{}\<[5]%
\>[5]{}\Varid{h_{GlobalM}}\;((\Varid{update}\;\Varid{r}\mathbin{\talloblong}\Varid{side}\;(\Varid{restore}\;\Varid{r}))>\!\!>\Varid{k}){}\<[E]%
\\
\>[3]{}\mathrel{=}\mbox{\commentbegin ~ definitions of \ensuremath{\Varid{side}}, \ensuremath{\Varid{update}}, \ensuremath{\Varid{restore}}, \ensuremath{(\talloblong)}, and \ensuremath{(>\!\!>)}  \commentend}{}\<[E]%
\\
\>[3]{}\hsindent{2}{}\<[5]%
\>[5]{}\Varid{h_{GlobalM}}\;(\Conid{Op}\;(\Conid{Inr}\;(\Conid{Inl}\;(\Conid{Or}\;{}\<[33]%
\>[33]{}(\Conid{Op}\;(\Conid{Inl}\;(\Conid{MUpdate}\;\Varid{r}\;\Varid{k})))\;{}\<[E]%
\\
\>[33]{}(\Conid{Op}\;(\Conid{Inl}\;(\Conid{MRestore}\;\Varid{r}\;(\Conid{Op}\;(\Conid{Inr}\;(\Conid{Inl}\;\Conid{Fail})))))))))){}\<[E]%
\\
\>[3]{}\mathrel{=}\mbox{\commentbegin ~ definition of \ensuremath{\Varid{h_{GlobalM}}}  \commentend}{}\<[E]%
\\
\>[3]{}\hsindent{2}{}\<[5]%
\>[5]{}\Varid{fmap}\;(\Varid{fmap}\;\Varid{fst})\;(\Varid{h_{Modify1}}\;(\Varid{h_{ND+f}}\;(\Varid{(\Leftrightarrow)}{}\<[E]%
\\
\>[5]{}\hsindent{2}{}\<[7]%
\>[7]{}(\Conid{Op}\;(\Conid{Inr}\;(\Conid{Inl}\;(\Conid{Or}\;{}\<[26]%
\>[26]{}(\Conid{Op}\;(\Conid{Inl}\;(\Conid{MUpdate}\;\Varid{r}\;\Varid{k})))\;{}\<[E]%
\\
\>[26]{}(\Conid{Op}\;(\Conid{Inl}\;(\Conid{MRestore}\;\Varid{r}\;(\Conid{Op}\;\Conid{Inr}\;((\Conid{Inl}\;\Conid{Fail}))))))))))))){}\<[E]%
\\
\>[3]{}\mathrel{=}\mbox{\commentbegin ~ definition of \ensuremath{\Varid{(\Leftrightarrow)}}  \commentend}{}\<[E]%
\\
\>[3]{}\hsindent{2}{}\<[5]%
\>[5]{}\Varid{fmap}\;(\Varid{fmap}\;\Varid{fst})\;(\Varid{h_{Modify1}}\;(\Varid{h_{ND+f}}\;({}\<[E]%
\\
\>[5]{}\hsindent{2}{}\<[7]%
\>[7]{}(\Conid{Op}\;(\Conid{Inl}\;(\Conid{Or}\;{}\<[21]%
\>[21]{}(\Conid{Op}\;(\Conid{Inr}\;(\Conid{Inl}\;(\Conid{MUpdate}\;\Varid{r}\;(\Varid{(\Leftrightarrow)}\;\Varid{k})))))\;{}\<[E]%
\\
\>[21]{}(\Conid{Op}\;(\Conid{Inr}\;(\Conid{Inl}\;(\Conid{MRestore}\;\Varid{r}\;(\Conid{Op}\;(\Conid{Inl}\;\Conid{Fail})))))))))))){}\<[E]%
\\
\>[3]{}\mathrel{=}\mbox{\commentbegin ~ definition of \ensuremath{\Varid{h_{ND+f}}}  \commentend}{}\<[E]%
\\
\>[3]{}\hsindent{2}{}\<[5]%
\>[5]{}\Varid{fmap}\;(\Varid{fmap}\;\Varid{fst})\;(\Varid{h_{Modify1}}\;({}\<[E]%
\\
\>[5]{}\hsindent{2}{}\<[7]%
\>[7]{}(\Varid{liftM2}\;(+\!\!+)\;{}\<[21]%
\>[21]{}(\Conid{Op}\;(\Conid{Inl}\;(\Conid{MUpdate}\;\Varid{r}\;(\Varid{h_{ND+f}}\;(\Varid{(\Leftrightarrow)}\;\Varid{k})))))\;{}\<[E]%
\\
\>[21]{}(\Conid{Op}\;(\Conid{Inl}\;(\Conid{MRestore}\;\Varid{r}\;(\Conid{Var}\;[\mskip1.5mu \mskip1.5mu]))))))){}\<[E]%
\\
\>[3]{}\mathrel{=}\mbox{\commentbegin ~ definition of \ensuremath{\Varid{liftM2}}  \commentend}{}\<[E]%
\\
\>[3]{}\hsindent{2}{}\<[5]%
\>[5]{}\Varid{fmap}\;(\Varid{fmap}\;\Varid{fst})\;(\Varid{h_{Modify1}}\;({}\<[E]%
\\
\>[5]{}\hsindent{2}{}\<[7]%
\>[7]{}\mathbf{do}\;{}\<[11]%
\>[11]{}\Varid{x}\leftarrow \Conid{Op}\;(\Conid{Inl}\;(\Conid{MUpdate}\;\Varid{r}\;(\Varid{h_{ND+f}}\;(\Varid{(\Leftrightarrow)}\;\Varid{k})))){}\<[E]%
\\
\>[11]{}\Varid{y}\leftarrow \Conid{Op}\;(\Conid{Inl}\;(\Conid{MRestore}\;\Varid{r}\;(\Conid{Var}\;[\mskip1.5mu \mskip1.5mu]))){}\<[E]%
\\
\>[11]{}\Conid{Var}\;(\Varid{x}+\!\!+\Varid{y}){}\<[E]%
\\
\>[3]{}\hsindent{2}{}\<[5]%
\>[5]{})){}\<[E]%
\\
\>[3]{}\mathrel{=}\mbox{\commentbegin ~ \Cref{lemma:dist-hModify1}  \commentend}{}\<[E]%
\\
\>[3]{}\hsindent{2}{}\<[5]%
\>[5]{}\Varid{fmap}\;(\Varid{fmap}\;\Varid{fst})\;(\lambda \Varid{t}\to {}\<[E]%
\\
\>[5]{}\hsindent{2}{}\<[7]%
\>[7]{}\mathbf{do}\;{}\<[11]%
\>[11]{}(\Varid{x},\Varid{t}_{1})\leftarrow \Varid{h_{Modify1}}\;(\Conid{Op}\;(\Conid{Inl}\;(\Conid{MUpdate}\;\Varid{r}\;(\Varid{h_{ND+f}}\;(\Varid{(\Leftrightarrow)}\;\Varid{k})))))\;\Varid{t}{}\<[E]%
\\
\>[11]{}(\Varid{y},\Varid{t}_{2})\leftarrow \Varid{h_{Modify1}}\;(\Conid{Op}\;(\Conid{Inl}\;(\Conid{MRestore}\;\Varid{r}\;(\Conid{Var}\;[\mskip1.5mu \mskip1.5mu]))))\;\Varid{t}_{1}{}\<[E]%
\\
\>[11]{}\Varid{h_{Modify1}}\;(\Conid{Var}\;(\Varid{x}+\!\!+\Varid{y}))\;\Varid{t}_{2}{}\<[E]%
\\
\>[3]{}\hsindent{2}{}\<[5]%
\>[5]{}){}\<[E]%
\\
\>[3]{}\mathrel{=}\mbox{\commentbegin ~ definition of \ensuremath{\Varid{h_{Modify1}}}  \commentend}{}\<[E]%
\\
\>[3]{}\hsindent{2}{}\<[5]%
\>[5]{}\Varid{fmap}\;(\Varid{fmap}\;\Varid{fst})\;{}\<[E]%
\\
\>[5]{}\hsindent{2}{}\<[7]%
\>[7]{}(\lambda \Varid{t}\to \mathbf{do}\;{}\<[18]%
\>[18]{}(\Varid{x},\anonymous ){}\<[26]%
\>[26]{}\leftarrow \Varid{h_{Modify1}}\;(\Varid{h_{ND+f}}\;(\Varid{(\Leftrightarrow)}\;\Varid{k}))\;(\Varid{t}\mathbin{\oplus}\Varid{r}){}\<[E]%
\\
\>[18]{}(\Varid{y},\Varid{t}_{2}){}\<[26]%
\>[26]{}\leftarrow \Conid{Var}\;([\mskip1.5mu \mskip1.5mu],\Varid{t}\mathbin{\ominus}\Varid{r}){}\<[E]%
\\
\>[18]{}\Conid{Var}\;(\Varid{x}+\!\!+\Varid{y},\Varid{t}_{2}){}\<[E]%
\\
\>[5]{}\hsindent{2}{}\<[7]%
\>[7]{}){}\<[E]%
\\
\>[3]{}\mathrel{=}\mbox{\commentbegin ~ monad law  \commentend}{}\<[E]%
\\
\>[3]{}\hsindent{2}{}\<[5]%
\>[5]{}\Varid{fmap}\;(\Varid{fmap}\;\Varid{fst})\;{}\<[E]%
\\
\>[5]{}\hsindent{2}{}\<[7]%
\>[7]{}(\lambda \Varid{t}\to \mathbf{do}\;{}\<[18]%
\>[18]{}(\Varid{x},\Varid{t}_{1})\leftarrow \Varid{h_{Modify1}}\;(\Varid{h_{ND+f}}\;(\Varid{(\Leftrightarrow)}\;\Varid{k}))\;(\Varid{t}\mathbin{\oplus}\Varid{r}){}\<[E]%
\\
\>[18]{}\Conid{Var}\;(\Varid{x}+\!\!+[\mskip1.5mu \mskip1.5mu],\Varid{t}_{1}\mathbin{\ominus}\Varid{r}){}\<[E]%
\\
\>[5]{}\hsindent{2}{}\<[7]%
\>[7]{}){}\<[E]%
\\
\>[3]{}\mathrel{=}\mbox{\commentbegin ~ right unit of \ensuremath{(+\!\!+)}  \commentend}{}\<[E]%
\\
\>[3]{}\hsindent{2}{}\<[5]%
\>[5]{}\Varid{fmap}\;(\Varid{fmap}\;\Varid{fst})\;{}\<[E]%
\\
\>[5]{}\hsindent{2}{}\<[7]%
\>[7]{}(\lambda \Varid{t}\to \mathbf{do}\;{}\<[18]%
\>[18]{}(\Varid{x},\Varid{t}_{1})\leftarrow \Varid{h_{Modify1}}\;(\Varid{h_{ND+f}}\;(\Varid{(\Leftrightarrow)}\;\Varid{k}))\;(\Varid{t}\mathbin{\oplus}\Varid{r}){}\<[E]%
\\
\>[18]{}\Conid{Var}\;(\Varid{x},\Varid{t}_{1}\mathbin{\ominus}\Varid{r}){}\<[E]%
\\
\>[5]{}\hsindent{2}{}\<[7]%
\>[7]{}){}\<[E]%
\\
\>[3]{}\mathrel{=}\mbox{\commentbegin ~ \Cref{lemma:modify-state-restore}  \commentend}{}\<[E]%
\\
\>[3]{}\hsindent{2}{}\<[5]%
\>[5]{}\Varid{fmap}\;(\Varid{fmap}\;\Varid{fst})\;{}\<[E]%
\\
\>[5]{}\hsindent{2}{}\<[7]%
\>[7]{}(\lambda \Varid{t}\to \mathbf{do}\;{}\<[18]%
\>[18]{}(\Varid{x},\Varid{t}\mathbin{\oplus}\Varid{r})\leftarrow \Varid{h_{Modify1}}\;(\Varid{h_{ND+f}}\;(\Varid{(\Leftrightarrow)}\;\Varid{k}))\;(\Varid{t}\mathbin{\oplus}\Varid{r}){}\<[E]%
\\
\>[18]{}\Conid{Var}\;(\Varid{x},(\Varid{t}\mathbin{\oplus}\Varid{r})\mathbin{\ominus}\Varid{r}){}\<[E]%
\\
\>[5]{}\hsindent{2}{}\<[7]%
\>[7]{}){}\<[E]%
\\
\>[3]{}\mathrel{=}\mbox{\commentbegin ~ \Cref{eq:plus-minus}  \commentend}{}\<[E]%
\\
\>[3]{}\hsindent{2}{}\<[5]%
\>[5]{}\Varid{fmap}\;(\Varid{fmap}\;\Varid{fst})\;{}\<[E]%
\\
\>[5]{}\hsindent{2}{}\<[7]%
\>[7]{}(\lambda \Varid{t}\to \mathbf{do}\;{}\<[18]%
\>[18]{}(\Varid{x},\anonymous )\leftarrow \Varid{h_{Modify1}}\;(\Varid{h_{ND+f}}\;(\Varid{(\Leftrightarrow)}\;\Varid{k}))\;(\Varid{t}\mathbin{\oplus}\Varid{r}){}\<[E]%
\\
\>[18]{}\Conid{Var}\;(\Varid{x},\Varid{t}){}\<[E]%
\\
\>[5]{}\hsindent{2}{}\<[7]%
\>[7]{}){}\<[E]%
\\
\>[3]{}\mathrel{=}\mbox{\commentbegin ~ definition of \ensuremath{\Varid{fmap}\;\Varid{fst}}  \commentend}{}\<[E]%
\\
\>[3]{}\hsindent{2}{}\<[5]%
\>[5]{}\Varid{fmap}\;(\Varid{fmap}\;\Varid{fst})\;{}\<[E]%
\\
\>[5]{}\hsindent{2}{}\<[7]%
\>[7]{}(\lambda \Varid{t}\to \mathbf{do}\;{}\<[18]%
\>[18]{}\Varid{x}\leftarrow \Varid{fmap}\;\Varid{fst}\;(\Varid{h_{Modify1}}\;(\Varid{h_{ND+f}}\;(\Varid{(\Leftrightarrow)}\;\Varid{k}))\;(\Varid{t}\mathbin{\oplus}\Varid{r})){}\<[E]%
\\
\>[18]{}\Conid{Var}\;(\Varid{x},\Varid{t}){}\<[E]%
\\
\>[5]{}\hsindent{2}{}\<[7]%
\>[7]{}){}\<[E]%
\\
\>[3]{}\mathrel{=}\mbox{\commentbegin ~ definition of \ensuremath{\Varid{fmap}}  \commentend}{}\<[E]%
\\
\>[3]{}\hsindent{2}{}\<[5]%
\>[5]{}\Varid{fmap}\;(\Varid{fmap}\;\Varid{fst})\;{}\<[E]%
\\
\>[5]{}\hsindent{2}{}\<[7]%
\>[7]{}(\lambda \Varid{t}\to \mathbf{do}\;{}\<[18]%
\>[18]{}\Varid{x}\leftarrow (\Varid{fmap}\;(\Varid{fmap}\;\Varid{fst})\;(\Varid{h_{Modify1}}\;(\Varid{h_{ND+f}}\;(\Varid{(\Leftrightarrow)}\;\Varid{k}))))\;(\Varid{t}\mathbin{\oplus}\Varid{r}){}\<[E]%
\\
\>[18]{}\Conid{Var}\;(\Varid{x},\Varid{t}){}\<[E]%
\\
\>[5]{}\hsindent{2}{}\<[7]%
\>[7]{}){}\<[E]%
\\
\>[3]{}\mathrel{=}\mbox{\commentbegin ~ definition of \ensuremath{\Varid{fmap}\;(\Varid{fmap}\;\Varid{fst})}  \commentend}{}\<[E]%
\\
\>[3]{}\hsindent{2}{}\<[5]%
\>[5]{}\lambda \Varid{t}\to \mathbf{do}\;{}\<[16]%
\>[16]{}\Varid{x}\leftarrow (\Varid{fmap}\;(\Varid{fmap}\;\Varid{fst})\;(\Varid{h_{Modify1}}\;(\Varid{h_{ND+f}}\;(\Varid{(\Leftrightarrow)}\;\Varid{k}))))\;(\Varid{t}\mathbin{\oplus}\Varid{r}){}\<[E]%
\\
\>[16]{}\Conid{Var}\;\Varid{x}{}\<[E]%
\\
\>[3]{}\mathrel{=}\mbox{\commentbegin ~ monad law  \commentend}{}\<[E]%
\\
\>[3]{}\hsindent{2}{}\<[5]%
\>[5]{}\lambda \Varid{t}\to (\Varid{fmap}\;(\Varid{fmap}\;\Varid{fst})\;(\Varid{h_{Modify1}}\;(\Varid{h_{ND+f}}\;(\Varid{(\Leftrightarrow)}\;\Varid{k}))))\;(\Varid{t}\mathbin{\oplus}\Varid{r}){}\<[E]%
\\
\>[3]{}\mathrel{=}\mbox{\commentbegin ~ definition of \ensuremath{\Varid{h_{GlobalM}}}  \commentend}{}\<[E]%
\\
\>[3]{}\hsindent{2}{}\<[5]%
\>[5]{}\lambda \Varid{t}\to (\Varid{h_{GlobalM}}\;\Varid{k})\;(\Varid{t}\mathbin{\oplus}\Varid{r}){}\<[E]%
\\
\>[3]{}\mathrel{=}\mbox{\commentbegin ~ definition of \ensuremath{\Varid{alg}_{\Varid{LHS}}^{\Varid{S}}}  \commentend}{}\<[E]%
\\
\>[3]{}\hsindent{2}{}\<[5]%
\>[5]{}\Varid{alg}_{\Varid{LHS}}^{\Varid{S}}\;(\Conid{MUpdate}\;\Varid{r}\;(\Varid{h_{GlobalM}}\;\Varid{k})){}\<[E]%
\\
\>[3]{}\mathrel{=}\mbox{\commentbegin ~ definition of \ensuremath{\Varid{fmap}}  \commentend}{}\<[E]%
\\
\>[3]{}\hsindent{2}{}\<[5]%
\>[5]{}\Varid{alg}_{\Varid{LHS}}^{\Varid{S}}\;(\Varid{fmap}\;\Varid{h_{GlobalM}}\;(\Conid{MUpdate}\;\Varid{r}\;\Varid{k})){}\<[E]%
\ColumnHook
\end{hscode}\resethooks
\indentend 
For the second subcondition \refd{}, we can define \ensuremath{\Varid{alg}_{\Varid{LHS}}^{\Varid{ND}}} as
follows.\indentbegin \begin{hscode}\SaveRestoreHook
\column{B}{@{}>{\hspre}l<{\hspost}@{}}%
\column{3}{@{}>{\hspre}l<{\hspost}@{}}%
\column{22}{@{}>{\hspre}l<{\hspost}@{}}%
\column{E}{@{}>{\hspre}l<{\hspost}@{}}%
\>[3]{}\Varid{alg}_{\Varid{LHS}}^{\Varid{ND}}\mathbin{::}\Conid{Functor}\;\Varid{f}\Rightarrow \Varid{Nondet_{F}}\;(\Varid{s}\to \Conid{Free}\;\Varid{f}\;[\mskip1.5mu \Varid{a}\mskip1.5mu])\to (\Varid{s}\to \Conid{Free}\;\Varid{f}\;[\mskip1.5mu \Varid{a}\mskip1.5mu]){}\<[E]%
\\
\>[3]{}\Varid{alg}_{\Varid{LHS}}^{\Varid{ND}}\;\Conid{Fail}{}\<[22]%
\>[22]{}\mathrel{=}\lambda \Varid{s}\to \Conid{Var}\;[\mskip1.5mu \mskip1.5mu]{}\<[E]%
\\
\>[3]{}\Varid{alg}_{\Varid{LHS}}^{\Varid{ND}}\;(\Conid{Or}\;\Varid{p}\;\Varid{q}){}\<[22]%
\>[22]{}\mathrel{=}\lambda \Varid{s}\to \Varid{liftM2}\;(+\!\!+)\;(\Varid{p}\;\Varid{s})\;(\Varid{q}\;\Varid{s}){}\<[E]%
\ColumnHook
\end{hscode}\resethooks
\indentend We prove it by a case analysis on the shape of input \ensuremath{\Varid{op}\mathbin{::}\Varid{Nondet_{F}}\;(\Conid{Free}\;(\Varid{Modify_{F}}\;\Varid{s}\;\Varid{r}\mathrel{{:}{+}{:}}\Varid{Nondet_{F}}\mathrel{{:}{+}{:}}\Varid{f})\;\Varid{a})}.

\noindent \mbox{\underline{case \ensuremath{\Varid{op}\mathrel{=}\Conid{Fail}}}}\indentbegin \begin{hscode}\SaveRestoreHook
\column{B}{@{}>{\hspre}l<{\hspost}@{}}%
\column{3}{@{}>{\hspre}l<{\hspost}@{}}%
\column{4}{@{}>{\hspre}l<{\hspost}@{}}%
\column{5}{@{}>{\hspre}l<{\hspost}@{}}%
\column{E}{@{}>{\hspre}l<{\hspost}@{}}%
\>[5]{}\Varid{h_{GlobalM}}\;(\Varid{alg}\;(\Conid{Inr}\;(\Conid{Inl}\;\Conid{Fail}))){}\<[E]%
\\
\>[3]{}\mathrel{=}\mbox{\commentbegin ~ definition of \ensuremath{\Varid{alg}}  \commentend}{}\<[E]%
\\
\>[3]{}\hsindent{2}{}\<[5]%
\>[5]{}\Varid{h_{GlobalM}}\;(\Conid{Op}\;(\Conid{Inr}\;(\Conid{Inl}\;\Conid{Fail}))){}\<[E]%
\\
\>[3]{}\mathrel{=}\mbox{\commentbegin ~ definition of \ensuremath{\Varid{h_{GlobalM}}}  \commentend}{}\<[E]%
\\
\>[3]{}\hsindent{2}{}\<[5]%
\>[5]{}\Varid{fmap}\;(\Varid{fmap}\;\Varid{fst})\;(\Varid{h_{Modify1}}\;(\Varid{h_{ND+f}}\;(\Varid{(\Leftrightarrow)}\;(\Conid{Op}\;(\Conid{Inr}\;(\Conid{Inl}\;\Conid{Fail})))))){}\<[E]%
\\
\>[3]{}\mathrel{=}\mbox{\commentbegin ~ definition of \ensuremath{\Varid{(\Leftrightarrow)}}  \commentend}{}\<[E]%
\\
\>[3]{}\hsindent{2}{}\<[5]%
\>[5]{}\Varid{fmap}\;(\Varid{fmap}\;\Varid{fst})\;(\Varid{h_{Modify1}}\;(\Varid{h_{ND+f}}\;(\Conid{Op}\;(\Conid{Inl}\;(\Varid{fmap}\;\Varid{(\Leftrightarrow)}\;\Conid{Fail}))))){}\<[E]%
\\
\>[3]{}\mathrel{=}\mbox{\commentbegin ~ definition of \ensuremath{\Varid{fmap}}  \commentend}{}\<[E]%
\\
\>[3]{}\hsindent{2}{}\<[5]%
\>[5]{}\Varid{fmap}\;(\Varid{fmap}\;\Varid{fst})\;(\Varid{h_{Modify1}}\;(\Varid{h_{ND+f}}\;(\Conid{Op}\;(\Conid{Inl}\;\Conid{Fail})))){}\<[E]%
\\
\>[3]{}\mathrel{=}\mbox{\commentbegin ~ definition of \ensuremath{\Varid{h_{ND+f}}}  \commentend}{}\<[E]%
\\
\>[3]{}\hsindent{2}{}\<[5]%
\>[5]{}\Varid{fmap}\;(\Varid{fmap}\;\Varid{fst})\;(\Varid{h_{Modify1}}\;(\Conid{Var}\;[\mskip1.5mu \mskip1.5mu])){}\<[E]%
\\
\>[3]{}\mathrel{=}\mbox{\commentbegin ~ definition of \ensuremath{\Varid{h_{Modify1}}}  \commentend}{}\<[E]%
\\
\>[3]{}\hsindent{2}{}\<[5]%
\>[5]{}\Varid{fmap}\;(\Varid{fmap}\;\Varid{fst})\;(\lambda \Varid{s}\to \Conid{Var}\;([\mskip1.5mu \mskip1.5mu],\Varid{s})){}\<[E]%
\\
\>[3]{}\mathrel{=}\mbox{\commentbegin ~ definition of \ensuremath{\Varid{fmap}} twice and \ensuremath{\Varid{fst}}  \commentend}{}\<[E]%
\\
\>[3]{}\hsindent{2}{}\<[5]%
\>[5]{}\lambda \Varid{s}\to \Conid{Var}\;[\mskip1.5mu \mskip1.5mu]{}\<[E]%
\\
\>[3]{}\mathrel{=}\mbox{\commentbegin ~ definition of \ensuremath{\Varid{alg}_{\Varid{RHS}}^{\Varid{ND}}}   \commentend}{}\<[E]%
\\
\>[3]{}\hsindent{1}{}\<[4]%
\>[4]{}\Varid{alg}_{\Varid{RHS}}^{\Varid{ND}}\;\Conid{Fail}{}\<[E]%
\\
\>[3]{}\mathrel{=}\mbox{\commentbegin ~ definition of \ensuremath{\Varid{fmap}}  \commentend}{}\<[E]%
\\
\>[3]{}\hsindent{1}{}\<[4]%
\>[4]{}\Varid{alg}_{\Varid{RHS}}^{\Varid{ND}}\;(\Varid{fmap}\;\Varid{h_{GlobalM}}\;\Conid{Fail}){}\<[E]%
\ColumnHook
\end{hscode}\resethooks
\indentend \noindent \mbox{\underline{case \ensuremath{\Varid{op}\mathrel{=}\Conid{Or}\;\Varid{p}\;\Varid{q}}}}
From \ensuremath{\Varid{op}} is in the codomain of \ensuremath{\Varid{fmap}\;\Varid{local2global_M}} we obtain \ensuremath{\Varid{p}} and
\ensuremath{\Varid{q}} are in the codomain of \ensuremath{\Varid{local2global_M}}.
\indentbegin \begin{hscode}\SaveRestoreHook
\column{B}{@{}>{\hspre}l<{\hspost}@{}}%
\column{3}{@{}>{\hspre}l<{\hspost}@{}}%
\column{5}{@{}>{\hspre}l<{\hspost}@{}}%
\column{16}{@{}>{\hspre}l<{\hspost}@{}}%
\column{33}{@{}>{\hspre}l<{\hspost}@{}}%
\column{36}{@{}>{\hspre}l<{\hspost}@{}}%
\column{E}{@{}>{\hspre}l<{\hspost}@{}}%
\>[5]{}\Varid{h_{GlobalM}}\;(\Varid{alg}\;(\Conid{Inr}\;(\Conid{Inl}\;(\Conid{Or}\;\Varid{p}\;\Varid{q})))){}\<[E]%
\\
\>[3]{}\mathrel{=}\mbox{\commentbegin ~ definition of \ensuremath{\Varid{alg}}  \commentend}{}\<[E]%
\\
\>[3]{}\hsindent{2}{}\<[5]%
\>[5]{}\Varid{h_{GlobalM}}\;(\Conid{Op}\;(\Conid{Inr}\;(\Conid{Inl}\;(\Conid{Or}\;\Varid{p}\;\Varid{q})))){}\<[E]%
\\
\>[3]{}\mathrel{=}\mbox{\commentbegin ~ definition of \ensuremath{\Varid{h_{GlobalM}}}  \commentend}{}\<[E]%
\\
\>[3]{}\hsindent{2}{}\<[5]%
\>[5]{}\Varid{fmap}\;(\Varid{fmap}\;\Varid{fst})\;(\Varid{h_{Modify1}}\;(\Varid{h_{ND+f}}\;(\Varid{(\Leftrightarrow)}\;(\Conid{Op}\;(\Conid{Inr}\;(\Conid{Inl}\;(\Conid{Or}\;\Varid{p}\;\Varid{q}))))))){}\<[E]%
\\
\>[3]{}\mathrel{=}\mbox{\commentbegin ~ definition of \ensuremath{\Varid{(\Leftrightarrow)}}  \commentend}{}\<[E]%
\\
\>[3]{}\hsindent{2}{}\<[5]%
\>[5]{}\Varid{fmap}\;(\Varid{fmap}\;\Varid{fst})\;(\Varid{h_{Modify1}}\;(\Varid{h_{ND+f}}\;(\Conid{Op}\;(\Conid{Inl}\;(\Varid{fmap}\;\Varid{(\Leftrightarrow)}\;(\Conid{Or}\;\Varid{p}\;\Varid{q})))))){}\<[E]%
\\
\>[3]{}\mathrel{=}\mbox{\commentbegin ~ definition of \ensuremath{\Varid{fmap}}  \commentend}{}\<[E]%
\\
\>[3]{}\hsindent{2}{}\<[5]%
\>[5]{}\Varid{fmap}\;(\Varid{fmap}\;\Varid{fst})\;(\Varid{h_{Modify1}}\;(\Varid{h_{ND+f}}\;(\Conid{Op}\;(\Conid{Inl}\;(\Conid{Or}\;(\Varid{(\Leftrightarrow)}\;\Varid{p})\;(\Varid{(\Leftrightarrow)}\;\Varid{q})))))){}\<[E]%
\\
\>[3]{}\mathrel{=}\mbox{\commentbegin ~ definition of \ensuremath{\Varid{h_{ND+f}}}  \commentend}{}\<[E]%
\\
\>[3]{}\hsindent{2}{}\<[5]%
\>[5]{}\Varid{fmap}\;(\Varid{fmap}\;\Varid{fst})\;(\Varid{h_{Modify1}}\;(\Varid{liftM2}\;(+\!\!+)\;(\Varid{h_{ND+f}}\;(\Varid{(\Leftrightarrow)}\;\Varid{p}))\;(\Varid{h_{ND+f}}\;(\Varid{(\Leftrightarrow)}\;\Varid{q})))){}\<[E]%
\\
\>[3]{}\mathrel{=}\mbox{\commentbegin ~ definition of \ensuremath{\Varid{liftM2}}  \commentend}{}\<[E]%
\\
\>[3]{}\hsindent{2}{}\<[5]%
\>[5]{}\Varid{fmap}\;(\Varid{fmap}\;\Varid{fst})\;(\Varid{h_{Modify1}}\;(\mathbf{do}\;{}\<[36]%
\>[36]{}\Varid{x}\leftarrow \Varid{h_{ND+f}}\;(\Varid{(\Leftrightarrow)}\;\Varid{p}){}\<[E]%
\\
\>[36]{}\Varid{y}\leftarrow \Varid{h_{ND+f}}\;(\Varid{(\Leftrightarrow)}\;\Varid{q}){}\<[E]%
\\
\>[36]{}\Varid{\eta}\;(\Varid{x}+\!\!+\Varid{y}))){}\<[E]%
\\
\>[3]{}\mathrel{=}\mbox{\commentbegin ~ \Cref{lemma:dist-hModify1}  \commentend}{}\<[E]%
\\
\>[3]{}\hsindent{2}{}\<[5]%
\>[5]{}\Varid{fmap}\;(\Varid{fmap}\;\Varid{fst})\;(\lambda \Varid{s}_{0}\to (\mathbf{do}\;{}\<[33]%
\>[33]{}(\Varid{x},\Varid{s}_{1})\leftarrow \Varid{h_{Modify1}}\;(\Varid{h_{ND+f}}\;(\Varid{(\Leftrightarrow)}\;\Varid{p}))\;\Varid{s}_{0}{}\<[E]%
\\
\>[33]{}(\Varid{y},\Varid{s}_{2})\leftarrow \Varid{h_{Modify1}}\;(\Varid{h_{ND+f}}\;(\Varid{(\Leftrightarrow)}\;\Varid{q}))\;\Varid{s}_{1}{}\<[E]%
\\
\>[33]{}\Varid{h_{Modify1}}\;(\Varid{\eta}\;(\Varid{x}+\!\!+\Varid{y}))\;\Varid{s}_{2})){}\<[E]%
\\
\>[3]{}\mathrel{=}\mbox{\commentbegin ~ definition of \ensuremath{\Varid{h_{Modify1}}}  \commentend}{}\<[E]%
\\
\>[3]{}\hsindent{2}{}\<[5]%
\>[5]{}\Varid{fmap}\;(\Varid{fmap}\;\Varid{fst})\;(\lambda \Varid{s}_{0}\to (\mathbf{do}\;{}\<[33]%
\>[33]{}(\Varid{x},\Varid{s}_{1})\leftarrow \Varid{h_{Modify1}}\;(\Varid{h_{ND+f}}\;(\Varid{(\Leftrightarrow)}\;\Varid{p}))\;\Varid{s}_{0}{}\<[E]%
\\
\>[33]{}(\Varid{y},\Varid{s}_{2})\leftarrow \Varid{h_{Modify1}}\;(\Varid{h_{ND+f}}\;(\Varid{(\Leftrightarrow)}\;\Varid{q}))\;\Varid{s}_{1}{}\<[E]%
\\
\>[33]{}\Conid{Var}\;(\Varid{x}+\!\!+\Varid{y},\Varid{s}_{2}))){}\<[E]%
\\
\>[3]{}\mathrel{=}\mbox{\commentbegin ~ \Cref{lemma:modify-state-restore}   \commentend}{}\<[E]%
\\
\>[3]{}\hsindent{2}{}\<[5]%
\>[5]{}\Varid{fmap}\;(\Varid{fmap}\;\Varid{fst})\;(\lambda \Varid{s}_{0}\to (\mathbf{do}\;{}\<[33]%
\>[33]{}(\Varid{x},\Varid{s}_{1})\leftarrow \mathbf{do}\;\{\mskip1.5mu (\Varid{x},\anonymous )\leftarrow \Varid{h_{Modify1}}\;(\Varid{h_{ND+f}}\;(\Varid{(\Leftrightarrow)}\;\Varid{p}))\;\Varid{s}_{0};\Varid{\eta}\;(\Varid{x},\Varid{s}_{0})\mskip1.5mu\}{}\<[E]%
\\
\>[33]{}(\Varid{y},\Varid{s}_{2})\leftarrow \mathbf{do}\;\{\mskip1.5mu (\Varid{y},\anonymous )\leftarrow \Varid{h_{Modify1}}\;(\Varid{h_{ND+f}}\;(\Varid{(\Leftrightarrow)}\;\Varid{q}))\;\Varid{s}_{1};\Varid{\eta}\;(\Varid{x},\Varid{s}_{1})\mskip1.5mu\}{}\<[E]%
\\
\>[33]{}\Conid{Var}\;(\Varid{x}+\!\!+\Varid{y},\Varid{s}_{2}))){}\<[E]%
\\
\>[3]{}\mathrel{=}\mbox{\commentbegin ~ monad laws  \commentend}{}\<[E]%
\\
\>[3]{}\hsindent{2}{}\<[5]%
\>[5]{}\Varid{fmap}\;(\Varid{fmap}\;\Varid{fst})\;(\lambda \Varid{s}_{0}\to (\mathbf{do}\;{}\<[33]%
\>[33]{}(\Varid{x},\anonymous )\leftarrow \Varid{h_{Modify1}}\;(\Varid{h_{ND+f}}\;(\Varid{(\Leftrightarrow)}\;\Varid{p}))\;\Varid{s}_{0}{}\<[E]%
\\
\>[33]{}(\Varid{y},\anonymous )\leftarrow \Varid{h_{Modify1}}\;(\Varid{h_{ND+f}}\;(\Varid{(\Leftrightarrow)}\;\Varid{q}))\;\Varid{s}_{0}{}\<[E]%
\\
\>[33]{}\Conid{Var}\;(\Varid{x}+\!\!+\Varid{y},\Varid{s}_{0}))){}\<[E]%
\\
\>[3]{}\mathrel{=}\mbox{\commentbegin ~ definition of \ensuremath{\Varid{fmap}} (twice) and \ensuremath{\Varid{fst}}  \commentend}{}\<[E]%
\\
\>[3]{}\hsindent{2}{}\<[5]%
\>[5]{}\lambda \Varid{s}_{0}\to (\mathbf{do}\;{}\<[16]%
\>[16]{}(\Varid{x},\anonymous )\leftarrow \Varid{h_{Modify1}}\;(\Varid{h_{ND+f}}\;(\Varid{(\Leftrightarrow)}\;\Varid{p}))\;\Varid{s}_{0}{}\<[E]%
\\
\>[16]{}(\Varid{y},\anonymous )\leftarrow \Varid{h_{Modify1}}\;(\Varid{h_{ND+f}}\;(\Varid{(\Leftrightarrow)}\;\Varid{q}))\;\Varid{s}_{0}{}\<[E]%
\\
\>[16]{}\Conid{Var}\;(\Varid{x}+\!\!+\Varid{y})){}\<[E]%
\\
\>[3]{}\mathrel{=}\mbox{\commentbegin ~ definition of \ensuremath{\Varid{fmap}}, \ensuremath{\Varid{fst}} and monad laws  \commentend}{}\<[E]%
\\
\>[3]{}\hsindent{2}{}\<[5]%
\>[5]{}\lambda \Varid{s}_{0}\to (\mathbf{do}\;{}\<[16]%
\>[16]{}\Varid{x}\leftarrow \Varid{fmap}\;\Varid{fst}\;(\Varid{h_{Modify1}}\;(\Varid{h_{ND+f}}\;(\Varid{(\Leftrightarrow)}\;\Varid{p}))\;\Varid{s}_{0}){}\<[E]%
\\
\>[16]{}\Varid{y}\leftarrow \Varid{fmap}\;\Varid{fst}\;(\Varid{h_{Modify1}}\;(\Varid{h_{ND+f}}\;(\Varid{(\Leftrightarrow)}\;\Varid{q}))\;\Varid{s}_{0}){}\<[E]%
\\
\>[16]{}\Conid{Var}\;(\Varid{x}+\!\!+\Varid{y})){}\<[E]%
\\
\>[3]{}\mathrel{=}\mbox{\commentbegin ~ definition of \ensuremath{\Varid{fmap}}  \commentend}{}\<[E]%
\\
\>[3]{}\hsindent{2}{}\<[5]%
\>[5]{}\lambda \Varid{s}_{0}\to (\mathbf{do}\;{}\<[16]%
\>[16]{}\Varid{x}\leftarrow \Varid{fmap}\;(\Varid{fmap}\;\Varid{fst})\;(\Varid{h_{Modify1}}\;(\Varid{h_{ND+f}}\;(\Varid{(\Leftrightarrow)}\;\Varid{p})))\;\Varid{s}_{0}{}\<[E]%
\\
\>[16]{}\Varid{y}\leftarrow \Varid{fmap}\;(\Varid{fmap}\;\Varid{fst})\;(\Varid{h_{Modify1}}\;(\Varid{h_{ND+f}}\;(\Varid{(\Leftrightarrow)}\;\Varid{q})))\;\Varid{s}_{0}{}\<[E]%
\\
\>[16]{}\Conid{Var}\;(\Varid{x}+\!\!+\Varid{y})){}\<[E]%
\\
\>[3]{}\mathrel{=}\mbox{\commentbegin ~ definition of \ensuremath{\Varid{h_{GlobalM}}}  \commentend}{}\<[E]%
\\
\>[3]{}\hsindent{2}{}\<[5]%
\>[5]{}\lambda \Varid{s}_{0}\to (\mathbf{do}\;{}\<[16]%
\>[16]{}\Varid{x}\leftarrow \Varid{h_{GlobalM}}\;\Varid{p}\;\Varid{s}_{0}{}\<[E]%
\\
\>[16]{}\Varid{y}\leftarrow \Varid{h_{GlobalM}}\;\Varid{q}\;\Varid{s}_{0}{}\<[E]%
\\
\>[16]{}\Conid{Var}\;(\Varid{x}+\!\!+\Varid{y})){}\<[E]%
\\
\>[3]{}\mathrel{=}\mbox{\commentbegin ~ definition of \ensuremath{\Varid{liftM2}}  \commentend}{}\<[E]%
\\
\>[3]{}\hsindent{2}{}\<[5]%
\>[5]{}\lambda \Varid{s}_{0}\to \Varid{liftM2}\;(+\!\!+)\;(\Varid{h_{GlobalM}}\;\Varid{p}\;\Varid{s}_{0})\;(\Varid{h_{GlobalM}}\;\Varid{q}\;\Varid{s}_{0}){}\<[E]%
\\
\>[3]{}\mathrel{=}\mbox{\commentbegin ~ definition of \ensuremath{\Varid{alg}_{\Varid{LHS}}^{\Varid{ND}}}   \commentend}{}\<[E]%
\\
\>[3]{}\hsindent{2}{}\<[5]%
\>[5]{}\Varid{alg}_{\Varid{LHS}}^{\Varid{ND}}\;(\Conid{Or}\;(\Varid{h_{GlobalM}}\;\Varid{p})\;(\Varid{h_{GlobalM}}\;\Varid{q})){}\<[E]%
\\
\>[3]{}\mathrel{=}\mbox{\commentbegin ~ definition of \ensuremath{\Varid{fmap}}  \commentend}{}\<[E]%
\\
\>[3]{}\hsindent{2}{}\<[5]%
\>[5]{}\Varid{alg}_{\Varid{LHS}}^{\Varid{ND}}\;(\Varid{fmap}\;\Varid{hGobal}\;(\Conid{Or}\;\Varid{p}\;\Varid{q})){}\<[E]%
\ColumnHook
\end{hscode}\resethooks
\indentend For the last subcondition \refe{}, we can define \ensuremath{\Varid{fwd}_{\Varid{LHS}}} as follows.\indentbegin \begin{hscode}\SaveRestoreHook
\column{B}{@{}>{\hspre}l<{\hspost}@{}}%
\column{3}{@{}>{\hspre}l<{\hspost}@{}}%
\column{E}{@{}>{\hspre}l<{\hspost}@{}}%
\>[3]{}\Varid{fwd}_{\Varid{LHS}}\mathbin{::}\Conid{Functor}\;\Varid{f}\Rightarrow \Varid{f}\;(\Varid{s}\to \Conid{Free}\;\Varid{f}\;[\mskip1.5mu \Varid{a}\mskip1.5mu])\to (\Varid{s}\to \Conid{Free}\;\Varid{f}\;[\mskip1.5mu \Varid{a}\mskip1.5mu]){}\<[E]%
\\
\>[3]{}\Varid{fwd}_{\Varid{LHS}}\;\Varid{op}\mathrel{=}\lambda \Varid{s}\to \Conid{Op}\;(\Varid{fmap}\;(\mathbin{\$}\Varid{s})\;\Varid{op}){}\<[E]%
\ColumnHook
\end{hscode}\resethooks
\indentend We prove it by the following calculation for input \ensuremath{\Varid{op}\mathbin{::}\Varid{f}\;(\Conid{Free}\;(\Varid{Modify_{F}}\;\Varid{s}\;\Varid{r}\mathrel{{:}{+}{:}}\Varid{Nondet_{F}}\mathrel{{:}{+}{:}}\Varid{f})\;\Varid{a})}.

\indentbegin \begin{hscode}\SaveRestoreHook
\column{B}{@{}>{\hspre}l<{\hspost}@{}}%
\column{3}{@{}>{\hspre}l<{\hspost}@{}}%
\column{5}{@{}>{\hspre}l<{\hspost}@{}}%
\column{E}{@{}>{\hspre}l<{\hspost}@{}}%
\>[5]{}\Varid{h_{GlobalM}}\;(\Varid{alg}\;(\Conid{Inr}\;(\Conid{Inr}\;\Varid{op}))){}\<[E]%
\\
\>[3]{}\mathrel{=}\mbox{\commentbegin ~ definition of \ensuremath{\Varid{alg}}  \commentend}{}\<[E]%
\\
\>[3]{}\hsindent{2}{}\<[5]%
\>[5]{}\Varid{h_{GlobalM}}\;(\Conid{Op}\;(\Conid{Inr}\;(\Conid{Inr}\;\Varid{op}))){}\<[E]%
\\
\>[3]{}\mathrel{=}\mbox{\commentbegin ~ definition of \ensuremath{\Varid{h_{GlobalM}}}  \commentend}{}\<[E]%
\\
\>[3]{}\hsindent{2}{}\<[5]%
\>[5]{}\Varid{fmap}\;(\Varid{fmap}\;\Varid{fst})\;(\Varid{h_{Modify1}}\;(\Varid{h_{ND+f}}\;(\Varid{(\Leftrightarrow)}\;(\Conid{Op}\;(\Conid{Inr}\;(\Conid{Inr}\;\Varid{op})))))){}\<[E]%
\\
\>[3]{}\mathrel{=}\mbox{\commentbegin ~ definition of \ensuremath{\Varid{(\Leftrightarrow)}}  \commentend}{}\<[E]%
\\
\>[3]{}\hsindent{2}{}\<[5]%
\>[5]{}\Varid{fmap}\;(\Varid{fmap}\;\Varid{fst})\;(\Varid{h_{Modify1}}\;(\Varid{h_{ND+f}}\;(\Conid{Op}\;(\Conid{Inr}\;(\Conid{Inr}\;(\Varid{fmap}\;\Varid{(\Leftrightarrow)}\;\Varid{op})))))){}\<[E]%
\\
\>[3]{}\mathrel{=}\mbox{\commentbegin ~ definition of \ensuremath{\Varid{h_{ND+f}}}  \commentend}{}\<[E]%
\\
\>[3]{}\hsindent{2}{}\<[5]%
\>[5]{}\Varid{fmap}\;(\Varid{fmap}\;\Varid{fst})\;(\Varid{h_{Modify1}}\;(\Conid{Op}\;(\Varid{fmap}\;\Varid{h_{ND+f}}\;(\Conid{Inr}\;(\Varid{fmap}\;\Varid{(\Leftrightarrow)}\;\Varid{op}))))){}\<[E]%
\\
\>[3]{}\mathrel{=}\mbox{\commentbegin ~ definition of \ensuremath{\Varid{fmap}}  \commentend}{}\<[E]%
\\
\>[3]{}\hsindent{2}{}\<[5]%
\>[5]{}\Varid{fmap}\;(\Varid{fmap}\;\Varid{fst})\;(\Varid{h_{Modify1}}\;(\Conid{Op}\;(\Conid{Inr}\;(\Varid{fmap}\;\Varid{h_{ND+f}}\;(\Varid{fmap}\;\Varid{(\Leftrightarrow)}\;\Varid{op}))))){}\<[E]%
\\
\>[3]{}\mathrel{=}\mbox{\commentbegin ~ \ensuremath{\Varid{fmap}} fusion  \commentend}{}\<[E]%
\\
\>[3]{}\hsindent{2}{}\<[5]%
\>[5]{}\Varid{fmap}\;(\Varid{fmap}\;\Varid{fst})\;(\Varid{h_{Modify1}}\;(\Conid{Op}\;(\Conid{Inr}\;(\Varid{fmap}\;(\Varid{h_{ND+f}}\hsdot{\circ }{.}\Varid{(\Leftrightarrow)})\;\Varid{op})))){}\<[E]%
\\
\>[3]{}\mathrel{=}\mbox{\commentbegin ~ definition of \ensuremath{\Varid{h_{Modify1}}}  \commentend}{}\<[E]%
\\
\>[3]{}\hsindent{2}{}\<[5]%
\>[5]{}\Varid{fmap}\;(\Varid{fmap}\;\Varid{fst})\;(\lambda \Varid{s}\to \Conid{Op}\;(\Varid{fmap}\;(\mathbin{\$}\Varid{s})\;(\Varid{fmap}\;\Varid{h_{Modify1}}\;(\Varid{fmap}\;(\Varid{h_{ND+f}}\hsdot{\circ }{.}\Varid{(\Leftrightarrow)})\;\Varid{op})))){}\<[E]%
\\
\>[3]{}\mathrel{=}\mbox{\commentbegin ~ \ensuremath{\Varid{fmap}} fusion  \commentend}{}\<[E]%
\\
\>[3]{}\hsindent{2}{}\<[5]%
\>[5]{}\Varid{fmap}\;(\Varid{fmap}\;\Varid{fst})\;(\lambda \Varid{s}\to \Conid{Op}\;(\Varid{fmap}\;(\mathbin{\$}\Varid{s})\;(\Varid{fmap}\;(\Varid{h_{Modify1}}\hsdot{\circ }{.}\Varid{h_{ND+f}}\hsdot{\circ }{.}\Varid{(\Leftrightarrow)})\;\Varid{op}))){}\<[E]%
\\
\>[3]{}\mathrel{=}\mbox{\commentbegin ~ definition of \ensuremath{\Varid{fmap}}  \commentend}{}\<[E]%
\\
\>[3]{}\hsindent{2}{}\<[5]%
\>[5]{}\lambda \Varid{s}\to \Varid{fmap}\;\Varid{fst}\;(\Conid{Op}\;(\Varid{fmap}\;(\mathbin{\$}\Varid{s})\;(\Varid{fmap}\;(\Varid{h_{Modify1}}\hsdot{\circ }{.}\Varid{h_{ND+f}}\hsdot{\circ }{.}\Varid{(\Leftrightarrow)})\;\Varid{op}))){}\<[E]%
\\
\>[3]{}\mathrel{=}\mbox{\commentbegin ~ definition of \ensuremath{\Varid{fmap}}  \commentend}{}\<[E]%
\\
\>[3]{}\hsindent{2}{}\<[5]%
\>[5]{}\lambda \Varid{s}\to \Conid{Op}\;(\Varid{fmap}\;(\Varid{fmap}\;\Varid{fst})\;(\Varid{fmap}\;(\mathbin{\$}\Varid{s})\;(\Varid{fmap}\;(\Varid{h_{Modify1}}\hsdot{\circ }{.}\Varid{h_{ND+f}}\hsdot{\circ }{.}\Varid{(\Leftrightarrow)})\;\Varid{op}))){}\<[E]%
\\
\>[3]{}\mathrel{=}\mbox{\commentbegin ~ \ensuremath{\Varid{fmap}} fusion  \commentend}{}\<[E]%
\\
\>[3]{}\hsindent{2}{}\<[5]%
\>[5]{}\lambda \Varid{s}\to \Conid{Op}\;(\Varid{fmap}\;(\Varid{fmap}\;\Varid{fst}\hsdot{\circ }{.}(\mathbin{\$}\Varid{s}))\;(\Varid{fmap}\;(\Varid{h_{Modify1}}\hsdot{\circ }{.}\Varid{h_{ND+f}}\hsdot{\circ }{.}\Varid{(\Leftrightarrow)})\;\Varid{op}))){}\<[E]%
\\
\>[3]{}\mathrel{=}\mbox{\commentbegin ~ \Cref{eq:comm-app-fmap}  \commentend}{}\<[E]%
\\
\>[3]{}\hsindent{2}{}\<[5]%
\>[5]{}\lambda \Varid{s}\to \Conid{Op}\;(\Varid{fmap}\;((\mathbin{\$}\Varid{s})\hsdot{\circ }{.}\Varid{fmap}\;(\Varid{fmap}\;\Varid{fst}))\;(\Varid{fmap}\;(\Varid{h_{Modify1}}\hsdot{\circ }{.}\Varid{h_{ND+f}}\hsdot{\circ }{.}\Varid{(\Leftrightarrow)})\;\Varid{op}))){}\<[E]%
\\
\>[3]{}\mathrel{=}\mbox{\commentbegin ~ \ensuremath{\Varid{fmap}} fission  \commentend}{}\<[E]%
\\
\>[3]{}\hsindent{2}{}\<[5]%
\>[5]{}\lambda \Varid{s}\to \Conid{Op}\;((\Varid{fmap}\;(\mathbin{\$}\Varid{s})\hsdot{\circ }{.}\Varid{fmap}\;(\Varid{fmap}\;(\Varid{fmap}\;\Varid{fst})))\;(\Varid{fmap}\;(\Varid{h_{Modify1}}\hsdot{\circ }{.}\Varid{h_{ND+f}}\hsdot{\circ }{.}\Varid{(\Leftrightarrow)})\;\Varid{op})){}\<[E]%
\\
\>[3]{}\mathrel{=}\mbox{\commentbegin ~ \ensuremath{\Varid{fmap}} fusion  \commentend}{}\<[E]%
\\
\>[3]{}\hsindent{2}{}\<[5]%
\>[5]{}\lambda \Varid{s}\to \Conid{Op}\;(\Varid{fmap}\;(\mathbin{\$}\Varid{s})\;(\Varid{fmap}\;(\Varid{fmap}\;(\Varid{fmap}\;\Varid{fst})\hsdot{\circ }{.}\Varid{h_{Modify1}}\hsdot{\circ }{.}\Varid{h_{ND+f}}\hsdot{\circ }{.}\Varid{(\Leftrightarrow)})\;\Varid{op})){}\<[E]%
\\
\>[3]{}\mathrel{=}\mbox{\commentbegin ~ definition of \ensuremath{\Varid{h_{GlobalM}}}  \commentend}{}\<[E]%
\\
\>[3]{}\hsindent{2}{}\<[5]%
\>[5]{}\lambda \Varid{s}\to \Conid{Op}\;(\Varid{fmap}\;(\mathbin{\$}\Varid{s})\;(\Varid{fmap}\;\Varid{h_{GlobalM}}\;\Varid{op})){}\<[E]%
\\
\>[3]{}\mathrel{=}\mbox{\commentbegin ~ definition of \ensuremath{\Varid{fwd}_{\Varid{LHS}}}   \commentend}{}\<[E]%
\\
\>[3]{}\hsindent{2}{}\<[5]%
\>[5]{}\Varid{fwd}_{\Varid{LHS}}\;(\Varid{fmap}\;\Varid{h_{GlobalM}}\;\Varid{op}){}\<[E]%
\ColumnHook
\end{hscode}\resethooks
\indentend 
\subsection{Equating the Fused Sides}

We observe that the following equations hold trivially.
\begin{eqnarray*}
\ensuremath{\Varid{gen}_{\Varid{LHS}}} & = & \ensuremath{\Varid{gen}_{\Varid{RHS}}} \\
\ensuremath{\Varid{alg}_{\Varid{LHS}}^{\Varid{S}}} & = & \ensuremath{\Varid{alg}_{\Varid{RHS}}^{\Varid{S}}} \\
\ensuremath{\Varid{alg}_{\Varid{LHS}}^{\Varid{ND}}} & = & \ensuremath{\Varid{alg}_{\Varid{RHS}}^{\Varid{ND}}} \\
\ensuremath{\Varid{fwd}_{\Varid{LHS}}} & = & \ensuremath{\Varid{fwd}_{\Varid{RHS}}}
\end{eqnarray*}

Therefore, the main theorem (\Cref{thm:modify-local-global}) holds.

\subsection{Key Lemma: State Restoration}

Similar to \Cref{app:local-global}, we have a key lemma saying that
\ensuremath{\Varid{local2global_M}} restores the initial state after a computation.

\begin{lemma}[State is Restored] \label{lemma:modify-state-restore} \ \\
For any program \ensuremath{\Varid{p}\mathbin{::}\Conid{Free}\;(\Varid{Modify_{F}}\;\Varid{s}\;\Varid{r}\mathrel{{:}{+}{:}}\Varid{Nondet_{F}}\mathrel{{:}{+}{:}}\Varid{f})\;\Varid{a}} that do
not use the operation \ensuremath{\Conid{OP}\;(\Conid{Inl}\;\Conid{MRestore}\;\anonymous \;\anonymous )}, we have
\[\ba{ll}
  &\ensuremath{\Varid{h_{Modify1}}\;(\Varid{h_{ND+f}}\;(\Varid{(\Leftrightarrow)}\;(\Varid{local2global_M}\;\Varid{p})))\;\Varid{s}} \\
= &\ensuremath{\mathbf{do}\;(\Varid{x},\anonymous )\leftarrow \Varid{h_{Modify1}}\;(\Varid{h_{ND+f}}\;(\Varid{(\Leftrightarrow)}\;(\Varid{local2global_M}\;\Varid{p})))\;\Varid{s};\Varid{\eta}\;(\Varid{x},\Varid{s})}
\ea\]
\end{lemma}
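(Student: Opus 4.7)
The plan is to mirror the proof of Lemma~\ref{lemma:state-restore} from Appendix~\ref{app:local-global}, adapting it to the modification-based setting. I proceed by structural induction on $p$, with six cases (and the \ensuremath{\Conid{MRestore}} case excluded by hypothesis, so the induction is well-defined on the subset of terms we care about). The generator case \ensuremath{\Varid{p}\mathrel{=}\Conid{Var}\;\Varid{y}} and the failure case \ensuremath{\Varid{p}\mathrel{=}\Conid{Op}\;(\Conid{Inr}\;(\Conid{Inl}\;\Conid{Fail}))} are discharged by unfolding \ensuremath{\Varid{local2global_M}}, \ensuremath{\Varid{(\Leftrightarrow)}}, \ensuremath{\Varid{h_{ND+f}}}, and \ensuremath{\Varid{h_{Modify1}}}, then using a monad law to insert a trivial bind that re-pairs the result with the original state.

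The cases \ensuremath{\Conid{MGet}}, \ensuremath{\Conid{Or}}, and the forwarded effect proceed by the induction hypothesis, just as in Appendix~\ref{app:local-global}. For \ensuremath{\Conid{MGet}\;\Varid{k}}, the handler threads $s$ into $k\;s$ without change, so the IH on $k\;s$ finishes the case. For \ensuremath{\Conid{Or}\;\Varid{p}\;\Varid{q}}, we rewrite the handled choice as a \ensuremath{\Varid{liftM2}\;(+\!\!+)}, apply the distributivity lemma for \ensuremath{\Varid{h_{Modify1}}} over bind (the analogue of Lemma~\ref{lemma:dist-hState1} for \ensuremath{\Varid{h_{Modify1}}}, which we need to state and prove separately by induction on the first argument), and then invoke the IH twice to replace each intermediate state with $s$; the monad laws collapse the resulting do-block. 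For the forwarded case \ensuremath{\Conid{Op}\;(\Conid{Inr}\;(\Conid{Inr}\;\Varid{y}))}, the handlers push the state inside \ensuremath{\Conid{Op}\;(\Varid{fmap}\;\ldots)}, and \ensuremath{\Varid{fmap}} fusion together with the IH under the functor $\Varid{y}$ concludes.

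The main obstacle is the \ensuremath{\Conid{MUpdate}\;\Varid{r}\;\Varid{k}} case, because here \ensuremath{\Varid{local2global_M}} unfolds to \ensuremath{(\Varid{update}\;\Varid{r}\mathbin{\talloblong}\Varid{side}\;(\Varid{restore}\;\Varid{r}))>\!\!>\Varid{k}}, which introduces both a choice and a \ensuremath{\Conid{MRestore}}. I first expand along the same pattern as the \ensuremath{\Conid{Put}} case of Lemma~\ref{lemma:state-restore}: the left branch runs $k$ with state $s\oplus r$ (after its own \ensuremath{\Varid{local2global_M}}-induced restoration, which the IH guarantees returns to $s\oplus r$), while the right branch performs \ensuremath{\Conid{MRestore}\;\Varid{r}} on the post-branch state and then fails. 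Using the \ensuremath{\Conid{MRestore}} interpretation of \ensuremath{\Varid{h_{Modify1}}}, which applies $(\ominus r)$, the final state becomes $(s\oplus r)\ominus r$, and the crucial step is Law~\eqref{eq:plus-minus} (\textbf{plus-minus}), $(\ominus r)\circ(\oplus r) = \Varid{id}$, which collapses this back to $s$. Note that the hypothesis excluding user-level \ensuremath{\Conid{MRestore}} is essential: it guarantees that the only \ensuremath{\Conid{MRestore}}s encountered are the ones introduced by \ensuremath{\Varid{local2global_M}} itself, each paired with a matching \ensuremath{\Conid{MUpdate}}, so that \eqref{eq:plus-minus} is always applicable.

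Finally, the two pieces of bookkeeping to establish up front are: the fused form of \ensuremath{\Varid{h_{Modify1}}\mathrel{=}\Varid{run_{StateT}}\hsdot{\circ }{.}\Varid{h_{Modify}}} given in the preliminary, and the distributivity lemma
\[
\ensuremath{\Varid{h_{Modify1}}\;(\Varid{p}>\!\!>\!\!=\Varid{k})\;\Varid{s}} \;=\; \ensuremath{\Varid{h_{Modify1}}\;\Varid{p}\;\Varid{s}>\!\!>\!\!=\lambda (\Varid{x},\Varid{s'})\to \Varid{h_{Modify1}}\;(\Varid{k}\;\Varid{x})\;\Varid{s'}}
\]
whose proof is a routine induction on $p$ with one extra subcase (\ensuremath{\Conid{MRestore}}) relative to Lemma~\ref{lemma:dist-hState1}. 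With these in hand, all six cases become mechanical applications of unfolding, \ensuremath{\Varid{fmap}} fusion, monad laws, the induction hypothesis, and---for \ensuremath{\Conid{MUpdate}} only---the \textbf{plus-minus} law.
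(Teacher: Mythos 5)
Your proposal is correct and follows essentially the same route as the paper's proof: structural induction on \ensuremath{\Varid{p}}, a separately-proved distributivity lemma for \ensuremath{\Varid{h_{Modify1}}} over \ensuremath{(>\!\!>\!\!=)} (the paper's Lemma~\ref{lemma:dist-hModify1}), and the \textbf{plus-minus} law~\eqref{eq:plus-minus} to collapse \ensuremath{(\Varid{s}\mathbin{\oplus}\Varid{r})\mathbin{\ominus}\Varid{r}} to \ensuremath{\Varid{s}} in the \ensuremath{\Conid{MUpdate}} case, with the induction hypothesis on \ensuremath{\Varid{k}} supplying that the left branch returns the state to \ensuremath{\Varid{s}\mathbin{\oplus}\Varid{r}} before the generated \ensuremath{\Conid{MRestore}} fires. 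Your observation about why the no-\ensuremath{\Conid{MRestore}} hypothesis matters also matches the paper's use of it.
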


\begin{proof}
The proof follows the same structure of \Cref{lemma:state-restore}.
We proceed by induction on \ensuremath{\Varid{t}}.

\noindent \mbox{\underline{case \ensuremath{\Varid{t}\mathrel{=}\Conid{Var}\;\Varid{y}}}}\indentbegin \begin{hscode}\SaveRestoreHook
\column{B}{@{}>{\hspre}l<{\hspost}@{}}%
\column{3}{@{}>{\hspre}l<{\hspost}@{}}%
\column{6}{@{}>{\hspre}l<{\hspost}@{}}%
\column{E}{@{}>{\hspre}l<{\hspost}@{}}%
\>[6]{}\Varid{h_{Modify1}}\;(\Varid{h_{ND+f}}\;(\Varid{(\Leftrightarrow)}\;(\Varid{local2global_M}\;(\Conid{Var}\;\Varid{y}))))\;\Varid{s}{}\<[E]%
\\
\>[3]{}\mathrel{=}\mbox{\commentbegin ~  definition of \ensuremath{\Varid{local2global_M}}   \commentend}{}\<[E]%
\\
\>[3]{}\hsindent{3}{}\<[6]%
\>[6]{}\Varid{h_{Modify1}}\;(\Varid{h_{ND+f}}\;(\Varid{(\Leftrightarrow)}\;(\Conid{Var}\;\Varid{y})))\;\Varid{s}{}\<[E]%
\\
\>[3]{}\mathrel{=}\mbox{\commentbegin ~  definition of \ensuremath{\Varid{(\Leftrightarrow)}}   \commentend}{}\<[E]%
\\
\>[3]{}\hsindent{3}{}\<[6]%
\>[6]{}\Varid{h_{Modify1}}\;(\Varid{h_{ND+f}}\;(\Conid{Var}\;\Varid{y}))\;\Varid{s}{}\<[E]%
\\
\>[3]{}\mathrel{=}\mbox{\commentbegin ~  definition of \ensuremath{\Varid{h_{ND+f}}}   \commentend}{}\<[E]%
\\
\>[3]{}\hsindent{3}{}\<[6]%
\>[6]{}\Varid{h_{Modify1}}\;(\Conid{Var}\;[\mskip1.5mu \Varid{y}\mskip1.5mu])\;\Varid{s}{}\<[E]%
\\
\>[3]{}\mathrel{=}\mbox{\commentbegin ~  definition of \ensuremath{\Varid{h_{Modify1}}}   \commentend}{}\<[E]%
\\
\>[3]{}\hsindent{3}{}\<[6]%
\>[6]{}\Conid{Var}\;([\mskip1.5mu \Varid{y}\mskip1.5mu],\Varid{s}){}\<[E]%
\\
\>[3]{}\mathrel{=}\mbox{\commentbegin ~  monad law  \commentend}{}\<[E]%
\\
\>[3]{}\hsindent{3}{}\<[6]%
\>[6]{}\mathbf{do}\;(\Varid{x},\anonymous )\leftarrow \Conid{Var}\;([\mskip1.5mu \Varid{y}\mskip1.5mu],\Varid{s});\Conid{Var}\;(\Varid{x},\Varid{s}){}\<[E]%
\\
\>[3]{}\mathrel{=}\mbox{\commentbegin ~  definition of \ensuremath{\Varid{local2global_M},\Varid{h_{ND+f}},\Varid{(\Leftrightarrow)},\Varid{h_{Modify1}}} and \ensuremath{\Varid{\eta}}   \commentend}{}\<[E]%
\\
\>[3]{}\hsindent{3}{}\<[6]%
\>[6]{}\mathbf{do}\;(\Varid{x},\anonymous )\leftarrow \Varid{h_{Modify1}}\;(\Varid{h_{ND+f}}\;(\Varid{(\Leftrightarrow)}\;(\Varid{local2global_M}\;(\Conid{Var}\;\Varid{y}))))\;\Varid{s};\Varid{\eta}\;(\Varid{x},\Varid{s}){}\<[E]%
\ColumnHook
\end{hscode}\resethooks
\indentend \noindent \mbox{\underline{case \ensuremath{\Varid{t}\mathrel{=}\Conid{Op}\;(\Conid{Inl}\;(\Conid{MGet}\;\Varid{k}))}}}\indentbegin \begin{hscode}\SaveRestoreHook
\column{B}{@{}>{\hspre}l<{\hspost}@{}}%
\column{3}{@{}>{\hspre}l<{\hspost}@{}}%
\column{6}{@{}>{\hspre}l<{\hspost}@{}}%
\column{E}{@{}>{\hspre}l<{\hspost}@{}}%
\>[6]{}\Varid{h_{Modify1}}\;(\Varid{h_{ND+f}}\;(\Varid{(\Leftrightarrow)}\;(\Varid{local2global_M}\;(\Conid{Op}\;(\Conid{Inl}\;(\Conid{MGet}\;\Varid{k}))))))\;\Varid{s}{}\<[E]%
\\
\>[3]{}\mathrel{=}\mbox{\commentbegin ~  definition of \ensuremath{\Varid{local2global_M}}   \commentend}{}\<[E]%
\\
\>[3]{}\hsindent{3}{}\<[6]%
\>[6]{}\Varid{h_{Modify1}}\;(\Varid{h_{ND+f}}\;(\Varid{(\Leftrightarrow)}\;(\Conid{Op}\;(\Conid{Inl}\;(\Conid{MGet}\;(\Varid{local2global_M}\hsdot{\circ }{.}\Varid{k}))))))\;\Varid{s}{}\<[E]%
\\
\>[3]{}\mathrel{=}\mbox{\commentbegin ~  definition of \ensuremath{\Varid{(\Leftrightarrow)}}   \commentend}{}\<[E]%
\\
\>[3]{}\hsindent{3}{}\<[6]%
\>[6]{}\Varid{h_{Modify1}}\;(\Varid{h_{ND+f}}\;(\Conid{Op}\;(\Conid{Inr}\;(\Conid{Inl}\;(\Conid{MGet}\;(\Varid{(\Leftrightarrow)}\hsdot{\circ }{.}\Varid{local2global_M}\hsdot{\circ }{.}\Varid{k}))))))\;\Varid{s}{}\<[E]%
\\
\>[3]{}\mathrel{=}\mbox{\commentbegin ~  definition of \ensuremath{\Varid{h_{ND+f}}}   \commentend}{}\<[E]%
\\
\>[3]{}\hsindent{3}{}\<[6]%
\>[6]{}\Varid{h_{Modify1}}\;(\Conid{Op}\;(\Conid{Inl}\;(\Conid{MGet}\;(\Varid{h_{ND+f}}\hsdot{\circ }{.}\Varid{(\Leftrightarrow)}\hsdot{\circ }{.}\Varid{local2global_M}\hsdot{\circ }{.}\Varid{k}))))\;\Varid{s}{}\<[E]%
\\
\>[3]{}\mathrel{=}\mbox{\commentbegin ~  definition of \ensuremath{\Varid{h_{Modify1}}}   \commentend}{}\<[E]%
\\
\>[3]{}\hsindent{3}{}\<[6]%
\>[6]{}(\Varid{h_{Modify1}}\hsdot{\circ }{.}\Varid{h_{ND+f}}\hsdot{\circ }{.}\Varid{(\Leftrightarrow)}\hsdot{\circ }{.}\Varid{local2global_M}\hsdot{\circ }{.}\Varid{k})\;\Varid{s}\;\Varid{s}{}\<[E]%
\\
\>[3]{}\mathrel{=}\mbox{\commentbegin ~  definition of \ensuremath{(\hsdot{\circ }{.})}   \commentend}{}\<[E]%
\\
\>[3]{}\hsindent{3}{}\<[6]%
\>[6]{}(\Varid{h_{Modify1}}\;(\Varid{h_{ND+f}}\;(\Varid{(\Leftrightarrow)}\;(\Varid{local2global_M}\;(\Varid{k}\;\Varid{s})))))\;\Varid{s}{}\<[E]%
\\
\>[3]{}\mathrel{=}\mbox{\commentbegin ~  induction hypothesis   \commentend}{}\<[E]%
\\
\>[3]{}\hsindent{3}{}\<[6]%
\>[6]{}\mathbf{do}\;(\Varid{x},\anonymous )\leftarrow \Varid{h_{Modify1}}\;(\Varid{(\Leftrightarrow)}\;(\Varid{h_{ND+f}}\;(\Varid{local2global_M}\;(\Varid{k}\;\Varid{s}))))\;\Varid{s};\Varid{\eta}\;(\Varid{x},\Varid{s}){}\<[E]%
\\
\>[3]{}\mathrel{=}\mbox{\commentbegin ~  definition of \ensuremath{\Varid{local2global_M},\Varid{(\Leftrightarrow)},\Varid{h_{ND+f}},\Varid{h_{Modify1}}}   \commentend}{}\<[E]%
\\
\>[3]{}\hsindent{3}{}\<[6]%
\>[6]{}\mathbf{do}\;(\Varid{x},\anonymous )\leftarrow \Varid{h_{Modify1}}\;(\Varid{h_{ND+f}}\;(\Varid{local2global_M}\;(\Conid{Op}\;(\Conid{Inl}\;(\Conid{MGet}\;\Varid{k})))))\;\Varid{s};\Varid{\eta}\;(\Varid{x},\Varid{s}){}\<[E]%
\ColumnHook
\end{hscode}\resethooks
\indentend \noindent \mbox{\underline{case \ensuremath{\Varid{t}\mathrel{=}\Conid{Op}\;(\Conid{Inr}\;(\Conid{Inl}\;\Conid{Fail}))}}}\indentbegin \begin{hscode}\SaveRestoreHook
\column{B}{@{}>{\hspre}l<{\hspost}@{}}%
\column{3}{@{}>{\hspre}l<{\hspost}@{}}%
\column{6}{@{}>{\hspre}l<{\hspost}@{}}%
\column{E}{@{}>{\hspre}l<{\hspost}@{}}%
\>[6]{}\Varid{h_{Modify1}}\;(\Varid{h_{ND+f}}\;(\Varid{(\Leftrightarrow)}\;(\Varid{local2global_M}\;(\Conid{Op}\;(\Conid{Inr}\;(\Conid{Inl}\;\Conid{Fail}))))))\;\Varid{s}{}\<[E]%
\\
\>[3]{}\mathrel{=}\mbox{\commentbegin ~  definition of \ensuremath{\Varid{local2global_M}}   \commentend}{}\<[E]%
\\
\>[3]{}\hsindent{3}{}\<[6]%
\>[6]{}\Varid{h_{Modify1}}\;(\Varid{h_{ND+f}}\;(\Varid{(\Leftrightarrow)}\;(\Conid{Op}\;(\Conid{Inr}\;(\Conid{Inl}\;\Conid{Fail})))))\;\Varid{s}{}\<[E]%
\\
\>[3]{}\mathrel{=}\mbox{\commentbegin ~  definition of \ensuremath{\Varid{(\Leftrightarrow)}}   \commentend}{}\<[E]%
\\
\>[3]{}\hsindent{3}{}\<[6]%
\>[6]{}\Varid{h_{Modify1}}\;(\Varid{h_{ND+f}}\;(\Conid{Op}\;(\Conid{Inl}\;\Conid{Fail})))\;\Varid{s}{}\<[E]%
\\
\>[3]{}\mathrel{=}\mbox{\commentbegin ~  definition of \ensuremath{\Varid{h_{ND+f}}}   \commentend}{}\<[E]%
\\
\>[3]{}\hsindent{3}{}\<[6]%
\>[6]{}\Varid{h_{Modify1}}\;(\Conid{Var}\;[\mskip1.5mu \mskip1.5mu])\;\Varid{s}{}\<[E]%
\\
\>[3]{}\mathrel{=}\mbox{\commentbegin ~  definition of \ensuremath{\Varid{h_{Modify1}}}   \commentend}{}\<[E]%
\\
\>[3]{}\hsindent{3}{}\<[6]%
\>[6]{}\Conid{Var}\;([\mskip1.5mu \mskip1.5mu],\Varid{s}){}\<[E]%
\\
\>[3]{}\mathrel{=}\mbox{\commentbegin ~  monad law  \commentend}{}\<[E]%
\\
\>[3]{}\hsindent{3}{}\<[6]%
\>[6]{}\mathbf{do}\;(\Varid{x},\anonymous )\leftarrow \Conid{Var}\;([\mskip1.5mu \mskip1.5mu],\Varid{s});\Conid{Var}\;(\Varid{x},\Varid{s}){}\<[E]%
\\
\>[3]{}\mathrel{=}\mbox{\commentbegin ~  definition of \ensuremath{\Varid{local2global_M},\Varid{(\Leftrightarrow)},\Varid{h_{ND+f}},\Varid{h_{Modify1}}}   \commentend}{}\<[E]%
\\
\>[3]{}\hsindent{3}{}\<[6]%
\>[6]{}\mathbf{do}\;(\Varid{x},\anonymous )\leftarrow \Varid{h_{Modify1}}\;(\Varid{h_{ND+f}}\;(\Varid{(\Leftrightarrow)}\;(\Varid{local2global_M}\;(\Conid{Op}\;(\Conid{Inr}\;(\Conid{Inl}\;\Conid{Fail}))))))\;\Varid{s};\Varid{\eta}\;(\Varid{x},\Varid{s}){}\<[E]%
\ColumnHook
\end{hscode}\resethooks
\indentend \noindent \mbox{\underline{case \ensuremath{\Varid{t}\mathrel{=}\Conid{Op}\;(\Conid{Inl}\;(\Conid{MUpdate}\;\Varid{r}\;\Varid{k}))}}}\indentbegin \begin{hscode}\SaveRestoreHook
\column{B}{@{}>{\hspre}l<{\hspost}@{}}%
\column{3}{@{}>{\hspre}l<{\hspost}@{}}%
\column{6}{@{}>{\hspre}l<{\hspost}@{}}%
\column{8}{@{}>{\hspre}l<{\hspost}@{}}%
\column{10}{@{}>{\hspre}l<{\hspost}@{}}%
\column{15}{@{}>{\hspre}l<{\hspost}@{}}%
\column{16}{@{}>{\hspre}l<{\hspost}@{}}%
\column{21}{@{}>{\hspre}l<{\hspost}@{}}%
\column{26}{@{}>{\hspre}l<{\hspost}@{}}%
\column{E}{@{}>{\hspre}l<{\hspost}@{}}%
\>[6]{}\Varid{h_{Modify1}}\;(\Varid{h_{ND+f}}\;(\Varid{(\Leftrightarrow)}\;(\Varid{local2global_M}\;(\Conid{Op}\;(\Conid{Inl}\;(\Conid{Put}\;\Varid{t}\;\Varid{k}))))))\;\Varid{s}{}\<[E]%
\\
\>[3]{}\mathrel{=}\mbox{\commentbegin ~  definition of \ensuremath{\Varid{local2global_M}}   \commentend}{}\<[E]%
\\
\>[3]{}\hsindent{3}{}\<[6]%
\>[6]{}\Varid{h_{Modify1}}\;(\Varid{h_{ND+f}}\;(\Varid{(\Leftrightarrow)}\;((\Varid{update}\;\Varid{r}\mathbin{\talloblong}\Varid{side}\;(\Varid{restore}\;\Varid{r}))>\!\!>\Varid{local2global_M}\;\Varid{k})))\;\Varid{s}{}\<[E]%
\\
\>[3]{}\mathrel{=}\mbox{\commentbegin ~  definition of \ensuremath{(\talloblong)}, \ensuremath{\Varid{update}}, \ensuremath{\Varid{restore}}, \ensuremath{\Varid{side}} and \ensuremath{(>\!\!>\!\!=)}   \commentend}{}\<[E]%
\\
\>[3]{}\hsindent{3}{}\<[6]%
\>[6]{}\Varid{h_{Modify1}}\;(\Varid{h_{ND+f}}\;(\Varid{(\Leftrightarrow)}\;({}\<[E]%
\\
\>[6]{}\hsindent{2}{}\<[8]%
\>[8]{}\Conid{Op}\;(\Conid{Inr}\;(\Conid{Inl}\;(\Conid{Or}\;{}\<[26]%
\>[26]{}(\Conid{Op}\;(\Conid{Inl}\;(\Conid{MUpdate}\;\Varid{r}\;(\Varid{local2global_M}\;\Varid{k}))))\;{}\<[E]%
\\
\>[26]{}(\Conid{Op}\;(\Conid{Inl}\;(\Conid{MRestore}\;\Varid{r}\;(\Conid{Op}\;(\Conid{Inr}\;(\Conid{Inl}\;\Conid{Fail}))))))))))))\;\Varid{s}{}\<[E]%
\\
\>[3]{}\mathrel{=}\mbox{\commentbegin ~  definition of \ensuremath{\Varid{(\Leftrightarrow)}}   \commentend}{}\<[E]%
\\
\>[3]{}\hsindent{3}{}\<[6]%
\>[6]{}\Varid{h_{Modify1}}\;(\Varid{h_{ND+f}}\;({}\<[E]%
\\
\>[6]{}\hsindent{2}{}\<[8]%
\>[8]{}\Conid{Op}\;(\Conid{Inl}\;(\Conid{Or}\;{}\<[21]%
\>[21]{}(\Conid{Op}\;(\Conid{Inr}\;(\Conid{Inl}\;(\Conid{MUpdate}\;\Varid{r}\;(\Varid{(\Leftrightarrow)}\;(\Varid{local2global_M}\;\Varid{k}))))))\;{}\<[E]%
\\
\>[21]{}(\Conid{Op}\;(\Conid{Inr}\;(\Conid{Inl}\;(\Conid{MRestore}\;\Varid{r}\;(\Conid{Op}\;(\Conid{Inl}\;\Conid{Fail}))))))))))\;\Varid{s}{}\<[E]%
\\
\>[3]{}\mathrel{=}\mbox{\commentbegin ~  definition of \ensuremath{\Varid{h_{ND+f}}}   \commentend}{}\<[E]%
\\
\>[3]{}\hsindent{3}{}\<[6]%
\>[6]{}\Varid{h_{Modify1}}\;({}\<[E]%
\\
\>[6]{}\hsindent{2}{}\<[8]%
\>[8]{}\Varid{liftM2}\;(+\!\!+)\;{}\<[21]%
\>[21]{}(\Conid{Op}\;(\Conid{Inl}\;(\Conid{MUpdate}\;\Varid{r}\;(\Varid{h_{ND+f}}\;(\Varid{(\Leftrightarrow)}\;(\Varid{local2global_M}\;\Varid{k}))))))\;{}\<[E]%
\\
\>[21]{}(\Conid{Op}\;(\Conid{Inl}\;(\Conid{MRestore}\;\Varid{r}\;(\Conid{Var}\;[\mskip1.5mu \mskip1.5mu])))))\;\Varid{s}{}\<[E]%
\\
\>[3]{}\mathrel{=}\mbox{\commentbegin ~  definition of \ensuremath{\Varid{liftM2}}  \commentend}{}\<[E]%
\\
\>[3]{}\hsindent{3}{}\<[6]%
\>[6]{}\Varid{h_{Modify1}}\;{}\<[16]%
\>[16]{}(\mathbf{do}\;{}\<[21]%
\>[21]{}\Varid{x}\leftarrow \Conid{Op}\;(\Conid{Inl}\;(\Conid{MUpdate}\;\Varid{r}\;(\Varid{h_{ND+f}}\;(\Varid{(\Leftrightarrow)}\;(\Varid{local2global_M}\;\Varid{k}))))){}\<[E]%
\\
\>[21]{}\Varid{y}\leftarrow \Conid{Op}\;(\Conid{Inl}\;(\Conid{MRestore}\;\Varid{r}\;(\Conid{Var}\;[\mskip1.5mu \mskip1.5mu]))){}\<[E]%
\\
\>[21]{}\Conid{Var}\;(\Varid{x}+\!\!+\Varid{y}){}\<[E]%
\\
\>[6]{}\hsindent{9}{}\<[15]%
\>[15]{})\;\Varid{s}{}\<[E]%
\\
\>[3]{}\mathrel{=}\mbox{\commentbegin ~  Lemma~\ref{lemma:dist-hModify1}  \commentend}{}\<[E]%
\\
\>[3]{}\hsindent{3}{}\<[6]%
\>[6]{}\mathbf{do}\;{}\<[10]%
\>[10]{}(\Varid{x},\Varid{s}_{1})\leftarrow \Varid{h_{Modify1}}\;(\Conid{Op}\;(\Conid{Inl}\;(\Conid{MUpdate}\;\Varid{r}\;(\Varid{h_{ND+f}}\;(\Varid{(\Leftrightarrow)}\;(\Varid{local2global_M}\;\Varid{k}))))))\;\Varid{s}{}\<[E]%
\\
\>[10]{}(\Varid{y},\Varid{s}_{2})\leftarrow \Varid{h_{Modify1}}\;(\Conid{Op}\;(\Conid{Inl}\;(\Conid{MRestore}\;\Varid{r}\;(\Conid{Var}\;[\mskip1.5mu \mskip1.5mu]))))\;\Varid{s}_{1}{}\<[E]%
\\
\>[10]{}\Conid{Var}\;(\Varid{x}+\!\!+\Varid{y},\Varid{s}_{2}){}\<[E]%
\\
\>[3]{}\mathrel{=}\mbox{\commentbegin ~  definition of \ensuremath{\Varid{h_{Modify1}}}  \commentend}{}\<[E]%
\\
\>[3]{}\hsindent{3}{}\<[6]%
\>[6]{}\mathbf{do}\;{}\<[10]%
\>[10]{}(\Varid{x},\Varid{s}_{1})\leftarrow \Varid{h_{Modify1}}\;(\Varid{h_{ND+f}}\;(\Varid{(\Leftrightarrow)}\;(\Varid{local2global_M}\;\Varid{k})))\;(\Varid{s}\mathbin{\oplus}\Varid{r}){}\<[E]%
\\
\>[10]{}(\Varid{y},\Varid{s}_{2})\leftarrow \Conid{Var}\;([\mskip1.5mu \mskip1.5mu],\Varid{s}_{1}\mathbin{\ominus}\Varid{r}){}\<[E]%
\\
\>[10]{}\Conid{Var}\;(\Varid{x}+\!\!+\Varid{y},\Varid{s}_{2}){}\<[E]%
\\
\>[3]{}\mathrel{=}\mbox{\commentbegin ~  monad laws  \commentend}{}\<[E]%
\\
\>[3]{}\hsindent{3}{}\<[6]%
\>[6]{}\mathbf{do}\;{}\<[10]%
\>[10]{}(\Varid{x},\Varid{s}_{1})\leftarrow \Varid{h_{Modify1}}\;(\Varid{h_{ND+f}}\;(\Varid{(\Leftrightarrow)}\;(\Varid{local2global_M}\;\Varid{k})))\;(\Varid{s}\mathbin{\oplus}\Varid{r}){}\<[E]%
\\
\>[10]{}\Conid{Var}\;(\Varid{x}+\!\!+[\mskip1.5mu \mskip1.5mu],\Varid{s}_{1}\mathbin{\ominus}\Varid{r}){}\<[E]%
\\
\>[3]{}\mathrel{=}\mbox{\commentbegin ~  right unit of \ensuremath{(+\!\!+)}  \commentend}{}\<[E]%
\\
\>[3]{}\hsindent{3}{}\<[6]%
\>[6]{}\mathbf{do}\;{}\<[10]%
\>[10]{}(\Varid{x},\Varid{s}_{1})\leftarrow \Varid{h_{Modify1}}\;(\Varid{h_{ND+f}}\;(\Varid{(\Leftrightarrow)}\;(\Varid{local2global_M}\;\Varid{k})))\;(\Varid{s}\mathbin{\oplus}\Varid{r}){}\<[E]%
\\
\>[10]{}\Conid{Var}\;(\Varid{x},\Varid{s}_{1}\mathbin{\ominus}\Varid{r}){}\<[E]%
\\
\>[3]{}\mathrel{=}\mbox{\commentbegin ~  induction hypothesis  \commentend}{}\<[E]%
\\
\>[3]{}\hsindent{3}{}\<[6]%
\>[6]{}\mathbf{do}\;{}\<[10]%
\>[10]{}(\Varid{x},\Varid{s}_{1})\leftarrow \mathbf{do}\;\{\mskip1.5mu (\Varid{x},\anonymous )\leftarrow \Varid{h_{Modify1}}\;(\Varid{h_{ND+f}}\;(\Varid{(\Leftrightarrow)}\;(\Varid{local2global_M}\;\Varid{k})))\;(\Varid{s}\mathbin{\oplus}\Varid{r});\Varid{\eta}\;(\Varid{x},\Varid{s}\mathbin{\oplus}\Varid{r})\mskip1.5mu\}{}\<[E]%
\\
\>[10]{}\Conid{Var}\;(\Varid{x},\Varid{s}_{1}\mathbin{\ominus}\Varid{r}){}\<[E]%
\\
\>[3]{}\mathrel{=}\mbox{\commentbegin ~  monad laws   \commentend}{}\<[E]%
\\
\>[3]{}\hsindent{3}{}\<[6]%
\>[6]{}\mathbf{do}\;{}\<[10]%
\>[10]{}(\Varid{x},\anonymous )\leftarrow \Varid{h_{Modify1}}\;(\Varid{h_{ND+f}}\;(\Varid{(\Leftrightarrow)}\;(\Varid{local2global_M}\;\Varid{k})))\;(\Varid{s}\mathbin{\oplus}\Varid{r}){}\<[E]%
\\
\>[10]{}\Conid{Var}\;(\Varid{x},(\Varid{s}\mathbin{\oplus}\Varid{r})\mathbin{\ominus}\Varid{r}){}\<[E]%
\\
\>[3]{}\mathrel{=}\mbox{\commentbegin ~  \Cref{eq:plus-minus}   \commentend}{}\<[E]%
\\
\>[3]{}\hsindent{3}{}\<[6]%
\>[6]{}\mathbf{do}\;{}\<[10]%
\>[10]{}(\Varid{x},\anonymous )\leftarrow \Varid{h_{Modify1}}\;(\Varid{h_{ND+f}}\;(\Varid{(\Leftrightarrow)}\;(\Varid{local2global_M}\;\Varid{k})))\;(\Varid{s}\mathbin{\oplus}\Varid{r}){}\<[E]%
\\
\>[10]{}\Conid{Var}\;(\Varid{x},\Varid{s}){}\<[E]%
\\
\>[3]{}\mathrel{=}\mbox{\commentbegin ~  monad laws   \commentend}{}\<[E]%
\\
\>[3]{}\hsindent{3}{}\<[6]%
\>[6]{}\mathbf{do}\;{}\<[10]%
\>[10]{}(\Varid{x},\anonymous )\leftarrow \mathbf{do}\;\{\mskip1.5mu (\Varid{x},\anonymous )\leftarrow \Varid{h_{Modify1}}\;(\Varid{h_{ND+f}}\;(\Varid{(\Leftrightarrow)}\;(\Varid{local2global_M}\;\Varid{k})))\;(\Varid{s}\mathbin{\oplus}\Varid{r});\Varid{\eta}\;(\Varid{x},\Varid{s})\mskip1.5mu\}{}\<[E]%
\\
\>[10]{}\Conid{Var}\;(\Varid{x},\Varid{s}){}\<[E]%
\\
\>[3]{}\mathrel{=}\mbox{\commentbegin ~  deriviation in reverse   \commentend}{}\<[E]%
\\
\>[3]{}\hsindent{3}{}\<[6]%
\>[6]{}\mathbf{do}\;{}\<[10]%
\>[10]{}(\Varid{x},\anonymous )\leftarrow \Varid{h_{Modify1}}\;(\Varid{h_{ND+f}}\;(\Varid{(\Leftrightarrow)}\;(\Varid{local2global_M}\;(\Conid{Op}\;(\Conid{Inl}\;(\Conid{MUpdate}\;\Varid{r}\;\Varid{k}))))))\;\Varid{s}{}\<[E]%
\\
\>[10]{}\Conid{Var}\;(\Varid{x},\Varid{s}){}\<[E]%
\ColumnHook
\end{hscode}\resethooks
\indentend \noindent \mbox{\underline{case \ensuremath{\Varid{t}\mathrel{=}\Conid{Op}\;(\Conid{Inr}\;(\Conid{Inl}\;(\Conid{Or}\;\Varid{p}\;\Varid{q})))}}}\indentbegin \begin{hscode}\SaveRestoreHook
\column{B}{@{}>{\hspre}l<{\hspost}@{}}%
\column{3}{@{}>{\hspre}l<{\hspost}@{}}%
\column{6}{@{}>{\hspre}l<{\hspost}@{}}%
\column{10}{@{}>{\hspre}l<{\hspost}@{}}%
\column{12}{@{}>{\hspre}l<{\hspost}@{}}%
\column{14}{@{}>{\hspre}l<{\hspost}@{}}%
\column{16}{@{}>{\hspre}l<{\hspost}@{}}%
\column{20}{@{}>{\hspre}l<{\hspost}@{}}%
\column{E}{@{}>{\hspre}l<{\hspost}@{}}%
\>[6]{}\Varid{h_{Modify1}}\;(\Varid{h_{ND+f}}\;(\Varid{(\Leftrightarrow)}\;(\Varid{local2global_M}\;(\Conid{Op}\;(\Conid{Inr}\;(\Conid{Inl}\;(\Conid{Or}\;\Varid{p}\;\Varid{q})))))))\;\Varid{s}{}\<[E]%
\\
\>[3]{}\mathrel{=}\mbox{\commentbegin ~  definition of \ensuremath{\Varid{local2global_M}}   \commentend}{}\<[E]%
\\
\>[3]{}\hsindent{3}{}\<[6]%
\>[6]{}\Varid{h_{Modify1}}\;(\Varid{h_{ND+f}}\;(\Varid{(\Leftrightarrow)}\;(\Conid{Op}\;(\Conid{Inr}\;(\Conid{Inl}\;(\Conid{Or}\;(\Varid{local2global_M}\;\Varid{p})\;(\Varid{local2global_M}\;\Varid{q})))))))\;\Varid{s}{}\<[E]%
\\
\>[3]{}\mathrel{=}\mbox{\commentbegin ~  definition of \ensuremath{\Varid{(\Leftrightarrow)}}   \commentend}{}\<[E]%
\\
\>[3]{}\hsindent{3}{}\<[6]%
\>[6]{}\Varid{h_{Modify1}}\;(\Varid{h_{ND+f}}\;(\Conid{Op}\;(\Conid{Inl}\;(\Conid{Or}\;(\Varid{(\Leftrightarrow)}\;(\Varid{local2global_M}\;\Varid{p}))\;(\Varid{(\Leftrightarrow)}\;(\Varid{local2global_M}\;\Varid{q}))))))\;\Varid{s}{}\<[E]%
\\
\>[3]{}\mathrel{=}\mbox{\commentbegin ~  definition of \ensuremath{\Varid{h_{ND+f}}}   \commentend}{}\<[E]%
\\
\>[3]{}\hsindent{3}{}\<[6]%
\>[6]{}\Varid{h_{Modify1}}\;(\Varid{liftM2}\;(+\!\!+)\;(\Varid{h_{ND+f}}\;(\Varid{(\Leftrightarrow)}\;(\Varid{local2global_M}\;\Varid{p})))\;(\Varid{h_{ND+f}}\;(\Varid{(\Leftrightarrow)}\;(\Varid{local2global_M}\;\Varid{q}))))\;\Varid{s}{}\<[E]%
\\
\>[3]{}\mathrel{=}\mbox{\commentbegin ~  definition of \ensuremath{\Varid{liftM2}}   \commentend}{}\<[E]%
\\
\>[3]{}\hsindent{3}{}\<[6]%
\>[6]{}\Varid{h_{Modify1}}\;(\mathbf{do}\;{}\<[20]%
\>[20]{}\Varid{x}\leftarrow \Varid{h_{ND+f}}\;(\Varid{(\Leftrightarrow)}\;(\Varid{local2global_M}\;\Varid{p})){}\<[E]%
\\
\>[20]{}\Varid{y}\leftarrow \Varid{h_{ND+f}}\;(\Varid{(\Leftrightarrow)}\;(\Varid{local2global_M}\;\Varid{q})){}\<[E]%
\\
\>[20]{}\Conid{Var}\;(\Varid{x}+\!\!+\Varid{y}){}\<[E]%
\\
\>[6]{}\hsindent{8}{}\<[14]%
\>[14]{})\;\Varid{s}{}\<[E]%
\\
\>[3]{}\mathrel{=}\mbox{\commentbegin ~  Lemma~\ref{lemma:dist-hModify1}   \commentend}{}\<[E]%
\\
\>[3]{}\hsindent{3}{}\<[6]%
\>[6]{}\mathbf{do}\;{}\<[10]%
\>[10]{}(\Varid{x},\Varid{s}_{1})\leftarrow \Varid{h_{Modify1}}\;(\Varid{h_{ND+f}}\;(\Varid{(\Leftrightarrow)}\;(\Varid{local2global_M}\;\Varid{p})))\;\Varid{s}{}\<[E]%
\\
\>[10]{}(\Varid{y},\Varid{s}_{2})\leftarrow \Varid{h_{Modify1}}\;(\Varid{h_{ND+f}}\;(\Varid{(\Leftrightarrow)}\;(\Varid{local2global_M}\;\Varid{q})))\;\Varid{s}_{1}{}\<[E]%
\\
\>[10]{}\Varid{h_{Modify1}}\;(\Conid{Var}\;(\Varid{x}+\!\!+\Varid{y}))\;\Varid{s}_{2}{}\<[E]%
\\
\>[3]{}\mathrel{=}\mbox{\commentbegin ~  induction hypothesis   \commentend}{}\<[E]%
\\
\>[3]{}\hsindent{3}{}\<[6]%
\>[6]{}\mathbf{do}\;{}\<[10]%
\>[10]{}(\Varid{x},\Varid{s}_{1})\leftarrow \mathbf{do}\;\{\mskip1.5mu (\Varid{x},\anonymous )\leftarrow \Varid{h_{Modify1}}\;(\Varid{h_{ND+f}}\;(\Varid{(\Leftrightarrow)}\;(\Varid{local2global_M}\;\Varid{p})))\;\Varid{s};\Varid{\eta}\;(\Varid{x},\Varid{s})\mskip1.5mu\}{}\<[E]%
\\
\>[10]{}(\Varid{y},\Varid{s}_{2})\leftarrow \mathbf{do}\;\{\mskip1.5mu (\Varid{y},\anonymous )\leftarrow \Varid{h_{Modify1}}\;(\Varid{h_{ND+f}}\;(\Varid{(\Leftrightarrow)}\;(\Varid{local2global_M}\;\Varid{q})))\;\Varid{s}_{1};\Varid{\eta}\;(\Varid{y},\Varid{s}_{1})\mskip1.5mu\}{}\<[E]%
\\
\>[10]{}\Varid{h_{Modify1}}\;(\Conid{Var}\;(\Varid{x}+\!\!+\Varid{y}))\;\Varid{s}_{2}{}\<[E]%
\\
\>[3]{}\mathrel{=}\mbox{\commentbegin ~  monad laws   \commentend}{}\<[E]%
\\
\>[3]{}\hsindent{3}{}\<[6]%
\>[6]{}\mathbf{do}\;{}\<[10]%
\>[10]{}(\Varid{x},\anonymous )\leftarrow \Varid{h_{Modify1}}\;(\Varid{h_{ND+f}}\;(\Varid{(\Leftrightarrow)}\;(\Varid{local2global_M}\;\Varid{p})))\;\Varid{s}{}\<[E]%
\\
\>[10]{}(\Varid{y},\anonymous )\leftarrow \Varid{h_{Modify1}}\;(\Varid{h_{ND+f}}\;(\Varid{(\Leftrightarrow)}\;(\Varid{local2global_M}\;\Varid{q})))\;\Varid{s}_{1}{}\<[E]%
\\
\>[10]{}\Varid{h_{Modify1}}\;(\Conid{Var}\;(\Varid{x}+\!\!+\Varid{y}))\;\Varid{s}{}\<[E]%
\\
\>[3]{}\mathrel{=}\mbox{\commentbegin ~  definition of \ensuremath{\Varid{h_{Modify1}}}  \commentend}{}\<[E]%
\\
\>[3]{}\hsindent{3}{}\<[6]%
\>[6]{}\mathbf{do}\;{}\<[10]%
\>[10]{}(\Varid{x},\anonymous )\leftarrow \Varid{h_{Modify1}}\;(\Varid{h_{ND+f}}\;(\Varid{(\Leftrightarrow)}\;(\Varid{local2global_M}\;\Varid{p})))\;\Varid{s}{}\<[E]%
\\
\>[10]{}(\Varid{y},\anonymous )\leftarrow \Varid{h_{Modify1}}\;(\Varid{h_{ND+f}}\;(\Varid{(\Leftrightarrow)}\;(\Varid{local2global_M}\;\Varid{q})))\;\Varid{s}{}\<[E]%
\\
\>[10]{}\Varid{\eta}\;(\Varid{x}+\!\!+\Varid{y},\Varid{s}){}\<[E]%
\\
\>[3]{}\mathrel{=}\mbox{\commentbegin ~  monad laws   \commentend}{}\<[E]%
\\
\>[3]{}\hsindent{3}{}\<[6]%
\>[6]{}\mathbf{do}\;{}\<[10]%
\>[10]{}(\Varid{x},\anonymous )\leftarrow ({}\<[E]%
\\
\>[10]{}\hsindent{2}{}\<[12]%
\>[12]{}\mathbf{do}\;{}\<[16]%
\>[16]{}(\Varid{x},\anonymous )\leftarrow \Varid{h_{Modify1}}\;(\Varid{h_{ND+f}}\;(\Varid{(\Leftrightarrow)}\;(\Varid{local2global_M}\;\Varid{p})))\;\Varid{s}{}\<[E]%
\\
\>[16]{}(\Varid{y},\anonymous )\leftarrow \Varid{h_{Modify1}}\;(\Varid{h_{ND+f}}\;(\Varid{(\Leftrightarrow)}\;(\Varid{local2global_M}\;\Varid{q})))\;\Varid{s}_{1}{}\<[E]%
\\
\>[16]{}\Varid{\eta}\;(\Varid{x}+\!\!+\Varid{y},\Varid{s}){}\<[E]%
\\
\>[10]{}\hsindent{2}{}\<[12]%
\>[12]{}){}\<[E]%
\\
\>[10]{}\Varid{\eta}\;(\Varid{x},\Varid{s}){}\<[E]%
\\
\>[3]{}\mathrel{=}\mbox{\commentbegin ~  derivation in reverse (similar to before)   \commentend}{}\<[E]%
\\
\>[3]{}\hsindent{3}{}\<[6]%
\>[6]{}\mathbf{do}\;{}\<[10]%
\>[10]{}(\Varid{x},\anonymous )\leftarrow \Varid{h_{Modify1}}\;(\Varid{h_{ND+f}}\;(\Varid{(\Leftrightarrow)}\;(\Varid{local2global_M}\;(\Conid{Op}\;(\Conid{Inr}\;(\Conid{Inl}\;(\Conid{Or}\;\Varid{p}\;\Varid{q})))))))\;\Varid{s}{}\<[E]%
\\
\>[10]{}\Varid{\eta}\;(\Varid{x},\Varid{s}){}\<[E]%
\ColumnHook
\end{hscode}\resethooks
\indentend %

\noindent \mbox{\underline{case \ensuremath{\Varid{t}\mathrel{=}\Conid{Op}\;(\Conid{Inr}\;(\Conid{Inr}\;\Varid{y}))}}}\indentbegin \begin{hscode}\SaveRestoreHook
\column{B}{@{}>{\hspre}l<{\hspost}@{}}%
\column{3}{@{}>{\hspre}l<{\hspost}@{}}%
\column{6}{@{}>{\hspre}l<{\hspost}@{}}%
\column{10}{@{}>{\hspre}l<{\hspost}@{}}%
\column{E}{@{}>{\hspre}l<{\hspost}@{}}%
\>[6]{}\Varid{h_{Modify1}}\;(\Varid{h_{ND+f}}\;(\Varid{(\Leftrightarrow)}\;(\Varid{local2global_M}\;(\Conid{Op}\;(\Conid{Inr}\;(\Conid{Inr}\;\Varid{y}))))))\;\Varid{s}{}\<[E]%
\\
\>[3]{}\mathrel{=}\mbox{\commentbegin ~  definition of \ensuremath{\Varid{local2global_M}}   \commentend}{}\<[E]%
\\
\>[3]{}\hsindent{3}{}\<[6]%
\>[6]{}\Varid{h_{Modify1}}\;(\Varid{h_{ND+f}}\;(\Varid{(\Leftrightarrow)}\;(\Conid{Op}\;(\Conid{Inr}\;(\Conid{Inr}\;(\Varid{fmap}\;\Varid{local2global_M}\;\Varid{y}))))))\;\Varid{s}{}\<[E]%
\\
\>[3]{}\mathrel{=}\mbox{\commentbegin ~  definition of \ensuremath{\Varid{(\Leftrightarrow)}}; \ensuremath{\Varid{fmap}} fusion   \commentend}{}\<[E]%
\\
\>[3]{}\hsindent{3}{}\<[6]%
\>[6]{}\Varid{h_{Modify1}}\;(\Varid{h_{ND+f}}\;(\Conid{Op}\;(\Conid{Inr}\;(\Conid{Inr}\;(\Varid{fmap}\;(\Varid{(\Leftrightarrow)}\hsdot{\circ }{.}\Varid{local2global_M})\;\Varid{y})))))\;\Varid{s}{}\<[E]%
\\
\>[3]{}\mathrel{=}\mbox{\commentbegin ~  definition of \ensuremath{\Varid{h_{ND+f}}}; \ensuremath{\Varid{fmap}} fusion   \commentend}{}\<[E]%
\\
\>[3]{}\hsindent{3}{}\<[6]%
\>[6]{}\Varid{h_{Modify1}}\;(\Conid{Op}\;(\Conid{Inr}\;(\Varid{fmap}\;(\Varid{h_{ND+f}}\hsdot{\circ }{.}\Varid{(\Leftrightarrow)}\hsdot{\circ }{.}\Varid{local2global_M})\;\Varid{y})))\;\Varid{s}{}\<[E]%
\\
\>[3]{}\mathrel{=}\mbox{\commentbegin ~  definition of \ensuremath{\Varid{h_{Modify1}}}; \ensuremath{\Varid{fmap}} fusion   \commentend}{}\<[E]%
\\
\>[3]{}\hsindent{3}{}\<[6]%
\>[6]{}\Conid{Op}\;(\Varid{fmap}\;((\mathbin{\$}\Varid{s})\hsdot{\circ }{.}\Varid{h_{Modify1}}\hsdot{\circ }{.}\Varid{h_{ND+f}}\hsdot{\circ }{.}\Varid{(\Leftrightarrow)}\hsdot{\circ }{.}\Varid{local2global_M})\;\Varid{y}){}\<[E]%
\\
\>[3]{}\mathrel{=}\mbox{\commentbegin ~  induction hypothesis   \commentend}{}\<[E]%
\\
\>[3]{}\hsindent{3}{}\<[6]%
\>[6]{}\Conid{Op}\;(\Varid{fmap}\;((>\!\!>\!\!=\lambda (\Varid{x},\anonymous )\to \Varid{\eta}\;(\Varid{x},\Varid{s}))\hsdot{\circ }{.}(\mathbin{\$}\Varid{s})\hsdot{\circ }{.}\Varid{h_{Modify1}}\hsdot{\circ }{.}\Varid{h_{ND+f}}\hsdot{\circ }{.}\Varid{(\Leftrightarrow)}\hsdot{\circ }{.}\Varid{local2global_M})\;\Varid{y}){}\<[E]%
\\
\>[3]{}\mathrel{=}\mbox{\commentbegin ~  \ensuremath{\Varid{fmap}} fission; definition of \ensuremath{(>\!\!>\!\!=)}   \commentend}{}\<[E]%
\\
\>[3]{}\hsindent{3}{}\<[6]%
\>[6]{}\mathbf{do}\;{}\<[10]%
\>[10]{}(\Varid{x},\anonymous )\leftarrow \Conid{Op}\;(\Varid{fmap}\;((\mathbin{\$}\Varid{s})\hsdot{\circ }{.}\Varid{h_{Modify1}}\hsdot{\circ }{.}\Varid{h_{ND+f}}\hsdot{\circ }{.}\Varid{(\Leftrightarrow)}\hsdot{\circ }{.}\Varid{local2global_M})\;\Varid{y}){}\<[E]%
\\
\>[10]{}\Varid{\eta}\;(\Varid{x},\Varid{s}){}\<[E]%
\\
\>[3]{}\mathrel{=}\mbox{\commentbegin ~  deriviation in reverse (similar to before)   \commentend}{}\<[E]%
\\
\>[3]{}\hsindent{3}{}\<[6]%
\>[6]{}\mathbf{do}\;{}\<[10]%
\>[10]{}(\Varid{x},\anonymous )\leftarrow \Varid{h_{Modify1}}\;(\Varid{h_{ND+f}}\;(\Varid{(\Leftrightarrow)}\;(\Varid{local2global_M}\;(\Conid{Op}\;(\Conid{Inr}\;(\Conid{Inr}\;\Varid{y}))))))\;\Varid{s}{}\<[E]%
\\
\>[10]{}\Varid{\eta}\;(\Varid{x},\Varid{s}){}\<[E]%
\ColumnHook
\end{hscode}\resethooks
\indentend \end{proof}

\subsection{Auxiliary Lemmas}

The derivations above made use of several auxliary lemmas.
We prove them here.

\begin{lemma}[Distributivity of \ensuremath{\Varid{h_{Modify1}}}] \label{lemma:dist-hModify1} \ \\
\[
\ensuremath{\Varid{h_{Modify1}}\;(\Varid{p}>\!\!>\!\!=\Varid{k})\;\Varid{s}} \quad =\quad \ensuremath{\Varid{h_{Modify1}}\;\Varid{p}\;\Varid{s}>\!\!>\!\!=\lambda (\Varid{x},\Varid{s'})\to \Varid{h_{Modify1}}\;(\Varid{k}\;\Varid{x})\;\Varid{s'}}
\]
\end{lemma}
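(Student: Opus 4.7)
The proof will proceed by structural induction on the computation \ensuremath{\Varid{p}\mathbin{::}\Conid{Free}\;(\Varid{Modify_{F}}\;\Varid{s}\;\Varid{r}\mathrel{{:}{+}{:}}\Varid{f})\;\Varid{a}}, exactly paralleling the proof of Lemma~\ref{lemma:dist-hState1} for \ensuremath{\Varid{h_{State1}}}. There are six cases in total: the base case \ensuremath{\Varid{p}\mathrel{=}\Conid{Var}\;\Varid{x}}, three inductive cases for the three \ensuremath{\Varid{Modify_{F}}} constructors (\ensuremath{\Conid{MGet}}, \ensuremath{\Conid{MUpdate}}, \ensuremath{\Conid{MRestore}}) wrapped in \ensuremath{\Conid{Inl}}, and one forwarding case \ensuremath{\Varid{p}\mathrel{=}\Conid{Op}\;(\Conid{Inr}\;\Varid{y})}.

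For the base case \ensuremath{\Conid{Var}\;\Varid{x}}, I would unfold \ensuremath{(>\!\!>\!\!=)} for the free monad to obtain \ensuremath{\Varid{k}\;\Varid{x}}, apply \ensuremath{\Varid{h_{Modify1}}} to get \ensuremath{\Varid{h_{Modify1}}\;(\Varid{k}\;\Varid{x})\;\Varid{s}}, and then use the monad left-unit law together with the generator clause \ensuremath{\Varid{gen_{S}}\;\Varid{x}\;\Varid{s}\mathrel{=}\Conid{Var}\;(\Varid{x},\Varid{s})} to rewrite this as \ensuremath{\Varid{\eta}\;(\Varid{x},\Varid{s})>\!\!>\!\!=\lambda (\Varid{x},\Varid{s'})\to \Varid{h_{Modify1}}\;(\Varid{k}\;\Varid{x})\;\Varid{s'}}. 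For each of the three \ensuremath{\Conid{Inl}} cases, I would push \ensuremath{(>\!\!>\!\!=\Varid{k})} inside using the \ensuremath{\Varid{fmap}} clause of the free-monad bind, apply \ensuremath{\Varid{h_{Modify1}}} (which triggers the appropriate \ensuremath{\Varid{alg_{S}}} clause: threading the state for \ensuremath{\Conid{MGet}}, feeding \ensuremath{\Varid{s}\mathbin{\oplus}\Varid{r}} for \ensuremath{\Conid{MUpdate}}, and \ensuremath{\Varid{s}\mathbin{\ominus}\Varid{r}} for \ensuremath{\Conid{MRestore}}), and then appeal to the induction hypothesis on the continuation. For the forwarding case, I would use the \ensuremath{\Varid{fwd_{S}}} clause together with \ensuremath{\Varid{fmap}} fusion/fission to shuffle the inner bind past the \ensuremath{\Conid{Op}\;(\Conid{Inr}\;\anonymous )} constructor, again applying the induction hypothesis to each subcomputation in \ensuremath{\Varid{y}}.

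None of the cases present a genuine conceptual difficulty; the structure is essentially identical to Lemma~\ref{lemma:dist-hState1}, with \ensuremath{\Conid{MGet}} playing the role of \ensuremath{\Conid{Get}} and the pair \ensuremath{(\Conid{MUpdate},\Conid{MRestore})} playing the role of \ensuremath{\Conid{Put}} (with \ensuremath{\Varid{s}\mathbin{\oplus}\Varid{r}} and \ensuremath{\Varid{s}\mathbin{\ominus}\Varid{r}} replacing the updated state \ensuremath{\Varid{s}} in \ensuremath{\Conid{Put}\;\Varid{s}\;\Varid{k}}). In particular, the \ensuremath{\Conid{Undo}} laws (specifically \Cref{eq:plus-minus}) are \emph{not} needed here, because the claim merely states how the state thread is distributed across bind, regardless of how the state is transformed along the way.

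The mildest subtlety will be the forwarding case: we must be careful that the \ensuremath{\Varid{fmap}} we introduce inside \ensuremath{\Conid{Op}\;(\Conid{Inr}\;\anonymous )} commutes correctly with the outer bind. This is handled by the same \ensuremath{\Varid{fmap}} fusion chain (\ensuremath{\Varid{fmap}\;\Varid{f}\hsdot{\circ }{.}\Varid{fmap}\;\Varid{g}\mathrel{=}\Varid{fmap}\;(\Varid{f}\hsdot{\circ }{.}\Varid{g})}) and the identity \ensuremath{\Conid{Op}\;(\Varid{fmap}\;\Varid{h}\;\Varid{op})>\!\!>\!\!=\Varid{k}\mathrel{=}\Conid{Op}\;(\Varid{fmap}\;((>\!\!>\!\!=\Varid{k})\hsdot{\circ }{.}\Varid{h})\;\Varid{op})} used in the corresponding step of Lemma~\ref{lemma:dist-hState1}, so no new technical machinery is required.
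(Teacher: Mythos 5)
Your proposal matches the paper's proof exactly: the same structural induction on \ensuremath{\Varid{p}} with the base case, the three \ensuremath{\Conid{Inl}} cases (threading \ensuremath{\Varid{s}} for \ensuremath{\Conid{MGet}}, \ensuremath{\Varid{s}\mathbin{\oplus}\Varid{r}} for \ensuremath{\Conid{MUpdate}}, \ensuremath{\Varid{s}\mathbin{\ominus}\Varid{r}} for \ensuremath{\Conid{MRestore}}), and the forwarding case via \ensuremath{\Varid{fmap}} fusion, all as a direct transcription of Lemma~\ref{lemma:dist-hState1}. Your observation that the \ensuremath{\Conid{Undo}} laws are not needed is also consistent with the paper, which never invokes \Cref{eq:plus-minus} in this proof.
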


\begin{proof}
The proof follows the same structure of \Cref{lemma:dist-hState1}.
We proceed by induction on \ensuremath{\Varid{p}}.

\noindent \mbox{\underline{case \ensuremath{\Varid{p}\mathrel{=}\Conid{Var}\;\Varid{x}}}}
\indentbegin \begin{hscode}\SaveRestoreHook
\column{B}{@{}>{\hspre}l<{\hspost}@{}}%
\column{3}{@{}>{\hspre}l<{\hspost}@{}}%
\column{6}{@{}>{\hspre}l<{\hspost}@{}}%
\column{E}{@{}>{\hspre}l<{\hspost}@{}}%
\>[6]{}\Varid{h_{Modify1}}\;(\Conid{Var}\;\Varid{x}>\!\!>\!\!=\Varid{k})\;\Varid{s}{}\<[E]%
\\
\>[3]{}\mathrel{=}\mbox{\commentbegin ~  monad law   \commentend}{}\<[E]%
\\
\>[3]{}\hsindent{3}{}\<[6]%
\>[6]{}\Varid{h_{Modify1}}\;(\Varid{k}\;\Varid{x})\;\Varid{s}{}\<[E]%
\\
\>[3]{}\mathrel{=}\mbox{\commentbegin ~  monad law   \commentend}{}\<[E]%
\\
\>[3]{}\hsindent{3}{}\<[6]%
\>[6]{}\Varid{\eta}\;(\Varid{x},\Varid{s})>\!\!>\!\!=\lambda (\Varid{x},\Varid{s'})\to \Varid{h_{Modify1}}\;(\Varid{k}\;\Varid{x})\;\Varid{s'}{}\<[E]%
\\
\>[3]{}\mathrel{=}\mbox{\commentbegin ~  definition of \ensuremath{\Varid{h_{Modify1}}}   \commentend}{}\<[E]%
\\
\>[3]{}\hsindent{3}{}\<[6]%
\>[6]{}\Varid{h_{Modify1}}\;(\Conid{Var}\;\Varid{x})\;\Varid{s}>\!\!>\!\!=\lambda (\Varid{x},\Varid{s'})\to \Varid{h_{Modify1}}\;(\Varid{k}\;\Varid{x})\;\Varid{s'}{}\<[E]%
\ColumnHook
\end{hscode}\resethooks
\indentend \noindent \mbox{\underline{case \ensuremath{\Varid{p}\mathrel{=}\Conid{Op}\;(\Conid{Inl}\;(\Conid{MGet}\;\Varid{p}))}}}
\indentbegin \begin{hscode}\SaveRestoreHook
\column{B}{@{}>{\hspre}l<{\hspost}@{}}%
\column{3}{@{}>{\hspre}l<{\hspost}@{}}%
\column{6}{@{}>{\hspre}l<{\hspost}@{}}%
\column{E}{@{}>{\hspre}l<{\hspost}@{}}%
\>[6]{}\Varid{h_{Modify1}}\;(\Conid{Op}\;(\Conid{Inl}\;(\Conid{MGet}\;\Varid{p}))>\!\!>\!\!=\Varid{k})\;\Varid{s}{}\<[E]%
\\
\>[3]{}\mathrel{=}\mbox{\commentbegin ~  definition of \ensuremath{(>\!\!>\!\!=)} for free monad   \commentend}{}\<[E]%
\\
\>[3]{}\hsindent{3}{}\<[6]%
\>[6]{}\Varid{h_{Modify1}}\;(\Conid{Op}\;(\Varid{fmap}\;(>\!\!>\!\!=\Varid{k})\;(\Conid{Inl}\;(\Conid{MGet}\;\Varid{p}))))\;\Varid{s}{}\<[E]%
\\
\>[3]{}\mathrel{=}\mbox{\commentbegin ~  definition of \ensuremath{\Varid{fmap}} for coproduct \ensuremath{(\mathrel{{:}{+}{:}})}   \commentend}{}\<[E]%
\\
\>[3]{}\hsindent{3}{}\<[6]%
\>[6]{}\Varid{h_{Modify1}}\;(\Conid{Op}\;(\Conid{Inl}\;(\Varid{fmap}\;(>\!\!>\!\!=\Varid{k})\;(\Conid{MGet}\;\Varid{p}))))\;\Varid{s}{}\<[E]%
\\
\>[3]{}\mathrel{=}\mbox{\commentbegin ~  definition of \ensuremath{\Varid{fmap}} for \ensuremath{\Conid{MGet}}   \commentend}{}\<[E]%
\\
\>[3]{}\hsindent{3}{}\<[6]%
\>[6]{}\Varid{h_{Modify1}}\;(\Conid{Op}\;(\Conid{Inl}\;(\Conid{MGet}\;(\lambda \Varid{x}\to \Varid{p}\;\Varid{s}>\!\!>\!\!=\Varid{k}))))\;\Varid{s}{}\<[E]%
\\
\>[3]{}\mathrel{=}\mbox{\commentbegin ~  definition of \ensuremath{\Varid{h_{Modify1}}}   \commentend}{}\<[E]%
\\
\>[3]{}\hsindent{3}{}\<[6]%
\>[6]{}\Varid{h_{Modify1}}\;(\Varid{p}\;\Varid{s}>\!\!>\!\!=\Varid{k})\;\Varid{s}{}\<[E]%
\\
\>[3]{}\mathrel{=}\mbox{\commentbegin ~  induction hypothesis   \commentend}{}\<[E]%
\\
\>[3]{}\hsindent{3}{}\<[6]%
\>[6]{}\Varid{h_{Modify1}}\;(\Varid{p}\;\Varid{s})\;\Varid{s}>\!\!>\!\!=\lambda (\Varid{x},\Varid{s'})\to \Varid{h_{Modify1}}\;(\Varid{k}\;\Varid{x})\;\Varid{s'}{}\<[E]%
\\
\>[3]{}\mathrel{=}\mbox{\commentbegin ~  definition of \ensuremath{\Varid{h_{Modify1}}}   \commentend}{}\<[E]%
\\
\>[3]{}\hsindent{3}{}\<[6]%
\>[6]{}\Varid{h_{Modify1}}\;(\Conid{Op}\;(\Conid{Inl}\;(\Conid{MGet}\;\Varid{p})))\;\Varid{s}>\!\!>\!\!=\lambda (\Varid{x},\Varid{s'})\to \Varid{h_{Modify1}}\;(\Varid{k}\;\Varid{x})\;\Varid{s'}{}\<[E]%
\ColumnHook
\end{hscode}\resethooks
\indentend \noindent \mbox{\underline{case \ensuremath{\Varid{p}\mathrel{=}\Conid{Op}\;(\Conid{Inl}\;(\Conid{MUpdate}\;\Varid{r}\;\Varid{p}))}}}
\indentbegin \begin{hscode}\SaveRestoreHook
\column{B}{@{}>{\hspre}l<{\hspost}@{}}%
\column{3}{@{}>{\hspre}l<{\hspost}@{}}%
\column{6}{@{}>{\hspre}l<{\hspost}@{}}%
\column{E}{@{}>{\hspre}l<{\hspost}@{}}%
\>[6]{}\Varid{h_{Modify1}}\;(\Conid{Op}\;(\Conid{Inl}\;(\Conid{MUpdate}\;\Varid{r}\;\Varid{p}))>\!\!>\!\!=\Varid{k})\;\Varid{s}{}\<[E]%
\\
\>[3]{}\mathrel{=}\mbox{\commentbegin ~  definition of \ensuremath{(>\!\!>\!\!=)} for free monad   \commentend}{}\<[E]%
\\
\>[3]{}\hsindent{3}{}\<[6]%
\>[6]{}\Varid{h_{Modify1}}\;(\Conid{Op}\;(\Varid{fmap}\;(>\!\!>\!\!=\Varid{k})\;(\Conid{Inl}\;(\Conid{MUpdate}\;\Varid{r}\;\Varid{p}))))\;\Varid{s}{}\<[E]%
\\
\>[3]{}\mathrel{=}\mbox{\commentbegin ~  definition of \ensuremath{\Varid{fmap}} for coproduct \ensuremath{(\mathrel{{:}{+}{:}})}   \commentend}{}\<[E]%
\\
\>[3]{}\hsindent{3}{}\<[6]%
\>[6]{}\Varid{h_{Modify1}}\;(\Conid{Op}\;(\Conid{Inl}\;(\Varid{fmap}\;(>\!\!>\!\!=\Varid{k})\;(\Conid{MUpdate}\;\Varid{r}\;\Varid{p}))))\;\Varid{s}{}\<[E]%
\\
\>[3]{}\mathrel{=}\mbox{\commentbegin ~  definition of \ensuremath{\Varid{fmap}} for \ensuremath{\Conid{MUpdate}}   \commentend}{}\<[E]%
\\
\>[3]{}\hsindent{3}{}\<[6]%
\>[6]{}\Varid{h_{Modify1}}\;(\Conid{Op}\;(\Conid{Inl}\;(\Conid{MUpdate}\;\Varid{r}\;(\Varid{p}>\!\!>\!\!=\Varid{k}))))\;\Varid{s}{}\<[E]%
\\
\>[3]{}\mathrel{=}\mbox{\commentbegin ~  definition of \ensuremath{\Varid{h_{Modify1}}}   \commentend}{}\<[E]%
\\
\>[3]{}\hsindent{3}{}\<[6]%
\>[6]{}\Varid{h_{Modify1}}\;(\Varid{p}>\!\!>\!\!=\Varid{k})\;(\Varid{s}\mathbin{\oplus}\Varid{r}){}\<[E]%
\\
\>[3]{}\mathrel{=}\mbox{\commentbegin ~  induction hypothesis   \commentend}{}\<[E]%
\\
\>[3]{}\hsindent{3}{}\<[6]%
\>[6]{}\Varid{h_{Modify1}}\;\Varid{p}\;(\Varid{s}\mathbin{\oplus}\Varid{r})>\!\!>\!\!=\lambda (\Varid{x},\Varid{s'})\to \Varid{h_{Modify1}}\;(\Varid{k}\;\Varid{x})\;\Varid{s'}{}\<[E]%
\\
\>[3]{}\mathrel{=}\mbox{\commentbegin ~  definition of \ensuremath{\Varid{h_{Modify1}}}   \commentend}{}\<[E]%
\\
\>[3]{}\hsindent{3}{}\<[6]%
\>[6]{}\Varid{h_{Modify1}}\;(\Conid{Op}\;(\Conid{Inl}\;(\Conid{MUpdate}\;\Varid{r}\;\Varid{p})))\;\Varid{s}>\!\!>\!\!=\lambda (\Varid{x},\Varid{s'})\to \Varid{h_{Modify1}}\;(\Varid{k}\;\Varid{x})\;\Varid{s'}{}\<[E]%
\ColumnHook
\end{hscode}\resethooks
\indentend \noindent \mbox{\underline{case \ensuremath{\Varid{p}\mathrel{=}\Conid{Op}\;(\Conid{Inl}\;(\Conid{MRestore}\;\Varid{r}\;\Varid{p}))}}}
\indentbegin \begin{hscode}\SaveRestoreHook
\column{B}{@{}>{\hspre}l<{\hspost}@{}}%
\column{3}{@{}>{\hspre}l<{\hspost}@{}}%
\column{6}{@{}>{\hspre}l<{\hspost}@{}}%
\column{E}{@{}>{\hspre}l<{\hspost}@{}}%
\>[6]{}\Varid{h_{Modify1}}\;(\Conid{Op}\;(\Conid{Inl}\;(\Conid{MRestore}\;\Varid{r}\;\Varid{p}))>\!\!>\!\!=\Varid{k})\;\Varid{s}{}\<[E]%
\\
\>[3]{}\mathrel{=}\mbox{\commentbegin ~  definition of \ensuremath{(>\!\!>\!\!=)} for free monad   \commentend}{}\<[E]%
\\
\>[3]{}\hsindent{3}{}\<[6]%
\>[6]{}\Varid{h_{Modify1}}\;(\Conid{Op}\;(\Varid{fmap}\;(>\!\!>\!\!=\Varid{k})\;(\Conid{Inl}\;(\Conid{MRestore}\;\Varid{r}\;\Varid{p}))))\;\Varid{s}{}\<[E]%
\\
\>[3]{}\mathrel{=}\mbox{\commentbegin ~  definition of \ensuremath{\Varid{fmap}} for coproduct \ensuremath{(\mathrel{{:}{+}{:}})}   \commentend}{}\<[E]%
\\
\>[3]{}\hsindent{3}{}\<[6]%
\>[6]{}\Varid{h_{Modify1}}\;(\Conid{Op}\;(\Conid{Inl}\;(\Varid{fmap}\;(>\!\!>\!\!=\Varid{k})\;(\Conid{MRestore}\;\Varid{r}\;\Varid{p}))))\;\Varid{s}{}\<[E]%
\\
\>[3]{}\mathrel{=}\mbox{\commentbegin ~  definition of \ensuremath{\Varid{fmap}} for \ensuremath{\Conid{MRestore}}   \commentend}{}\<[E]%
\\
\>[3]{}\hsindent{3}{}\<[6]%
\>[6]{}\Varid{h_{Modify1}}\;(\Conid{Op}\;(\Conid{Inl}\;(\Conid{MRestore}\;\Varid{r}\;(\Varid{p}>\!\!>\!\!=\Varid{k}))))\;\Varid{s}{}\<[E]%
\\
\>[3]{}\mathrel{=}\mbox{\commentbegin ~  definition of \ensuremath{\Varid{h_{Modify1}}}   \commentend}{}\<[E]%
\\
\>[3]{}\hsindent{3}{}\<[6]%
\>[6]{}\Varid{h_{Modify1}}\;(\Varid{p}>\!\!>\!\!=\Varid{k})\;(\Varid{s}\mathbin{\ominus}\Varid{r}){}\<[E]%
\\
\>[3]{}\mathrel{=}\mbox{\commentbegin ~  induction hypothesis   \commentend}{}\<[E]%
\\
\>[3]{}\hsindent{3}{}\<[6]%
\>[6]{}\Varid{h_{Modify1}}\;\Varid{p}\;(\Varid{s}\mathbin{\ominus}\Varid{r})>\!\!>\!\!=\lambda (\Varid{x},\Varid{s'})\to \Varid{h_{Modify1}}\;(\Varid{k}\;\Varid{x})\;\Varid{s'}{}\<[E]%
\\
\>[3]{}\mathrel{=}\mbox{\commentbegin ~  definition of \ensuremath{\Varid{h_{Modify1}}}   \commentend}{}\<[E]%
\\
\>[3]{}\hsindent{3}{}\<[6]%
\>[6]{}\Varid{h_{Modify1}}\;(\Conid{Op}\;(\Conid{Inl}\;(\Conid{MRestore}\;\Varid{r}\;\Varid{p})))\;\Varid{s}>\!\!>\!\!=\lambda (\Varid{x},\Varid{s'})\to \Varid{h_{Modify1}}\;(\Varid{k}\;\Varid{x})\;\Varid{s'}{}\<[E]%
\ColumnHook
\end{hscode}\resethooks
\indentend \noindent \mbox{\underline{case \ensuremath{\Varid{p}\mathrel{=}\Conid{Op}\;(\Conid{Inr}\;\Varid{y})}}}
\indentbegin \begin{hscode}\SaveRestoreHook
\column{B}{@{}>{\hspre}l<{\hspost}@{}}%
\column{3}{@{}>{\hspre}l<{\hspost}@{}}%
\column{6}{@{}>{\hspre}l<{\hspost}@{}}%
\column{E}{@{}>{\hspre}l<{\hspost}@{}}%
\>[6]{}\Varid{h_{Modify1}}\;(\Conid{Op}\;(\Conid{Inr}\;\Varid{y})>\!\!>\!\!=\Varid{k})\;\Varid{s}{}\<[E]%
\\
\>[3]{}\mathrel{=}\mbox{\commentbegin ~  definition of \ensuremath{(>\!\!>\!\!=)} for free monad   \commentend}{}\<[E]%
\\
\>[3]{}\hsindent{3}{}\<[6]%
\>[6]{}\Varid{h_{Modify1}}\;(\Conid{Op}\;(\Varid{fmap}\;(>\!\!>\!\!=\Varid{k})\;(\Conid{Inr}\;\Varid{y})))\;\Varid{s}{}\<[E]%
\\
\>[3]{}\mathrel{=}\mbox{\commentbegin ~  definition of \ensuremath{\Varid{fmap}} for coproduct \ensuremath{(\mathrel{{:}{+}{:}})}   \commentend}{}\<[E]%
\\
\>[3]{}\hsindent{3}{}\<[6]%
\>[6]{}\Varid{h_{Modify1}}\;(\Conid{Op}\;(\Conid{Inr}\;(\Varid{fmap}\;(>\!\!>\!\!=\Varid{k})\;\Varid{y})))\;\Varid{s}{}\<[E]%
\\
\>[3]{}\mathrel{=}\mbox{\commentbegin ~  definition of \ensuremath{\Varid{h_{Modify1}}}   \commentend}{}\<[E]%
\\
\>[3]{}\hsindent{3}{}\<[6]%
\>[6]{}\Conid{Op}\;(\Varid{fmap}\;(\lambda \Varid{x}\to \Varid{h_{Modify1}}\;\Varid{x}\;\Varid{s})\;(\Varid{fmap}\;(>\!\!>\!\!=\Varid{k})\;\Varid{y})){}\<[E]%
\\
\>[3]{}\mathrel{=}\mbox{\commentbegin ~  \ensuremath{\Varid{fmap}} fusion   \commentend}{}\<[E]%
\\
\>[3]{}\hsindent{3}{}\<[6]%
\>[6]{}\Conid{Op}\;(\Varid{fmap}\;((\lambda \Varid{x}\to \Varid{h_{Modify1}}\;(\Varid{x}>\!\!>\!\!=\Varid{k})\;\Varid{s}))\;\Varid{y}){}\<[E]%
\\
\>[3]{}\mathrel{=}\mbox{\commentbegin ~  induction hypothesis   \commentend}{}\<[E]%
\\
\>[3]{}\hsindent{3}{}\<[6]%
\>[6]{}\Conid{Op}\;(\Varid{fmap}\;(\lambda \Varid{x}\to \Varid{h_{Modify1}}\;\Varid{x}\;\Varid{s}>\!\!>\!\!=\lambda (\Varid{x'},\Varid{s'})\to \Varid{h_{Modify1}}\;(\Varid{k}\;\Varid{x'})\;\Varid{s'})\;\Varid{y}){}\<[E]%
\\
\>[3]{}\mathrel{=}\mbox{\commentbegin ~  \ensuremath{\Varid{fmap}} fission  \commentend}{}\<[E]%
\\
\>[3]{}\hsindent{3}{}\<[6]%
\>[6]{}\Conid{Op}\;(\Varid{fmap}\;(\lambda \Varid{x}\to \Varid{x}>\!\!>\!\!=\lambda (\Varid{x'},\Varid{s'})\to \Varid{h_{Modify1}}\;(\Varid{k}\;\Varid{x'})\;\Varid{s'})\;(\Varid{fmap}\;(\lambda \Varid{x}\to \Varid{h_{Modify1}}\;\Varid{x}\;\Varid{s})\;\Varid{y})){}\<[E]%
\\
\>[3]{}\mathrel{=}\mbox{\commentbegin ~  definition of \ensuremath{(>\!\!>\!\!=)}  \commentend}{}\<[E]%
\\
\>[3]{}\hsindent{3}{}\<[6]%
\>[6]{}\Conid{Op}\;((\Varid{fmap}\;(\lambda \Varid{x}\to \Varid{h_{Modify1}}\;\Varid{x}\;\Varid{s})\;\Varid{y}))>\!\!>\!\!=\lambda (\Varid{x'},\Varid{s'})\to \Varid{h_{Modify1}}\;(\Varid{k}\;\Varid{x'})\;\Varid{s'}{}\<[E]%
\\
\>[3]{}\mathrel{=}\mbox{\commentbegin ~  definition of \ensuremath{\Varid{h_{Modify1}}}   \commentend}{}\<[E]%
\\
\>[3]{}\hsindent{3}{}\<[6]%
\>[6]{}\Conid{Op}\;(\Conid{Inr}\;\Varid{y})\;\Varid{s}>\!\!>\!\!=\lambda (\Varid{x'},\Varid{s'})\to \Varid{h_{Modify1}}\;(\Varid{k}\;\Varid{x'})\;\Varid{s'}{}\<[E]%
\ColumnHook
\end{hscode}\resethooks
\indentend \end{proof}

\section{Proofs for Modelling Local State with Trail Stack}
\label{app:immutable-trail-stack}

In this section, we prove the following theorem in
\Cref{sec:trail-stack}.

\localTrail*

The proof follows a similar structure to those in
\Cref{app:local-global} and \Cref{app:modify-local-global}.

As in \Cref{app:modify-local-global}, we fuse \ensuremath{\Varid{run_{StateT}}\hsdot{\circ }{.}\Varid{h_{Modify}}}
into \ensuremath{\Varid{h_{Modify1}}} and use it instead in the following proofs.

\subsection{Main Proof Structure}
The main proof structure of \Cref{thm:trail-local-global} is as follows.
\begin{proof}
The left-hand side is expanded to
\indentbegin \begin{hscode}\SaveRestoreHook
\column{B}{@{}>{\hspre}l<{\hspost}@{}}%
\column{3}{@{}>{\hspre}l<{\hspost}@{}}%
\column{E}{@{}>{\hspre}l<{\hspost}@{}}%
\>[3]{}\Varid{h_{GlobalT}}\mathrel{=}\Varid{fmap}\;(\Varid{fmap}\;\Varid{fst}\hsdot{\circ }{.}\Varid{flip}\;\Varid{run_{StateT}}\;(\Conid{Stack}\;[\mskip1.5mu \mskip1.5mu])\hsdot{\circ }{.}\Varid{h_{State}})\hsdot{\circ }{.}\Varid{h_{GlobalM}}\hsdot{\circ }{.}\Varid{local2trail}{}\<[E]%
\ColumnHook
\end{hscode}\resethooks
\indentend Both the left-hand side and the right-hand side of the equation
consist of function compositions involving one or more folds.
We apply fold fusion separately on both sides to contract each
into a single fold:
\begin{eqnarray*}
\ensuremath{\Varid{h_{GlobalT}}} & = & \ensuremath{\Varid{fold}\;\Varid{gen}_{\Varid{LHS}}\;(\Varid{alg}_{\Varid{LHS}}^{\Varid{S}}\mathbin{\#}\Varid{alg}_{\Varid{RHS}}^{\Varid{ND}}\mathbin{\#}\Varid{fwd}_{\Varid{LHS}})} \\
\ensuremath{\Varid{h_{LocalM}}}& = & \ensuremath{\Varid{fold}\;\Varid{gen}_{\Varid{RHS}}\;(\Varid{alg}_{\Varid{RHS}}^{\Varid{S}}\mathbin{\#}\Varid{alg}_{\Varid{RHS}}^{\Varid{ND}}\mathbin{\#}\Varid{fwd}_{\Varid{RHS}})}
\end{eqnarray*}
Finally, we show that both folds are equal by showing that their
corresponding parameters are equal:
\begin{eqnarray*}
\ensuremath{\Varid{gen}_{\Varid{LHS}}} & = & \ensuremath{\Varid{gen}_{\Varid{RHS}}} \\
\ensuremath{\Varid{alg}_{\Varid{LHS}}^{\Varid{S}}} & = & \ensuremath{\Varid{alg}_{\Varid{RHS}}^{\Varid{S}}} \\
\ensuremath{\Varid{alg}_{\Varid{LHS}}^{\Varid{ND}}} & = & \ensuremath{\Varid{alg}_{\Varid{RHS}}^{\Varid{ND}}} \\
\ensuremath{\Varid{fwd}_{\Varid{LHS}}} & = & \ensuremath{\Varid{fwd}_{\Varid{RHS}}}
\end{eqnarray*}
We elaborate each of these steps below.
\end{proof}

\subsection{Fusing the Right-Hand Side}

We have already fused \ensuremath{\Varid{h_{LocalM}}} in \Cref{app:modify-fusing-rhs}.
We just show the result here for easy reference.
\indentbegin \begin{hscode}\SaveRestoreHook
\column{B}{@{}>{\hspre}l<{\hspost}@{}}%
\column{3}{@{}>{\hspre}l<{\hspost}@{}}%
\column{5}{@{}>{\hspre}l<{\hspost}@{}}%
\column{24}{@{}>{\hspre}l<{\hspost}@{}}%
\column{29}{@{}>{\hspre}l<{\hspost}@{}}%
\column{E}{@{}>{\hspre}l<{\hspost}@{}}%
\>[B]{}\Varid{h_{LocalM}}\mathrel{=}\Varid{fold}\;\Varid{gen}_{\Varid{RHS}}\;(\Varid{alg}_{\Varid{RHS}}^{\Varid{S}}\mathbin{\#}\Varid{alg}_{\Varid{RHS}}^{\Varid{ND}}\mathbin{\#}\Varid{fwd}_{\Varid{RHS}}){}\<[E]%
\\
\>[B]{}\hsindent{3}{}\<[3]%
\>[3]{}\mathbf{where}{}\<[E]%
\\
\>[3]{}\hsindent{2}{}\<[5]%
\>[5]{}\Varid{gen}_{\Varid{RHS}}\mathbin{::}\Conid{Functor}\;\Varid{f}\Rightarrow \Varid{a}\to (\Varid{s}\to \Conid{Free}\;\Varid{f}\;[\mskip1.5mu \Varid{a}\mskip1.5mu]){}\<[E]%
\\
\>[3]{}\hsindent{2}{}\<[5]%
\>[5]{}\Varid{gen}_{\Varid{RHS}}\;\Varid{x}\mathrel{=}\lambda \Varid{s}\to \Conid{Var}\;[\mskip1.5mu \Varid{x}\mskip1.5mu]{}\<[E]%
\\
\>[3]{}\hsindent{2}{}\<[5]%
\>[5]{}\Varid{alg}_{\Varid{RHS}}^{\Varid{S}}\mathbin{::}\Conid{Undo}\;\Varid{s}\;\Varid{r}\Rightarrow \Varid{State_{F}}\;\Varid{s}\;(\Varid{s}\to \Varid{p})\to (\Varid{s}\to \Varid{p}){}\<[E]%
\\
\>[3]{}\hsindent{2}{}\<[5]%
\>[5]{}\Varid{alg}_{\Varid{RHS}}^{\Varid{S}}\;(\Conid{MGet}\;\Varid{k}){}\<[29]%
\>[29]{}\mathrel{=}\lambda \Varid{s}\to \Varid{k}\;\Varid{s}\;\Varid{s}{}\<[E]%
\\
\>[3]{}\hsindent{2}{}\<[5]%
\>[5]{}\Varid{alg}_{\Varid{RHS}}^{\Varid{S}}\;(\Conid{MUpdate}\;\Varid{r}\;\Varid{k}){}\<[29]%
\>[29]{}\mathrel{=}\lambda \Varid{s}\to \Varid{k}\;(\Varid{s}\mathbin{\oplus}\Varid{r}){}\<[E]%
\\
\>[3]{}\hsindent{2}{}\<[5]%
\>[5]{}\Varid{alg}_{\Varid{RHS}}^{\Varid{S}}\;(\Conid{MRestore}\;\Varid{r}\;\Varid{k}){}\<[29]%
\>[29]{}\mathrel{=}\lambda \Varid{s}\to \Varid{k}\;(\Varid{s}\mathbin{\oplus}\Varid{r}){}\<[E]%
\\
\>[3]{}\hsindent{2}{}\<[5]%
\>[5]{}\Varid{alg}_{\Varid{RHS}}^{\Varid{ND}}\mathbin{::}\Conid{Functor}\;\Varid{f}\Rightarrow \Varid{Nondet_{F}}\;(\Varid{s}\to \Conid{Free}\;\Varid{f}\;[\mskip1.5mu \Varid{a}\mskip1.5mu])\to (\Varid{s}\to \Conid{Free}\;\Varid{f}\;[\mskip1.5mu \Varid{a}\mskip1.5mu]){}\<[E]%
\\
\>[3]{}\hsindent{2}{}\<[5]%
\>[5]{}\Varid{alg}_{\Varid{RHS}}^{\Varid{ND}}\;\Conid{Fail}{}\<[24]%
\>[24]{}\mathrel{=}\lambda \Varid{s}\to \Conid{Var}\;[\mskip1.5mu \mskip1.5mu]{}\<[E]%
\\
\>[3]{}\hsindent{2}{}\<[5]%
\>[5]{}\Varid{alg}_{\Varid{RHS}}^{\Varid{ND}}\;(\Conid{Or}\;\Varid{p}\;\Varid{q}){}\<[24]%
\>[24]{}\mathrel{=}\lambda \Varid{s}\to \Varid{liftM2}\;(+\!\!+)\;(\Varid{p}\;\Varid{s})\;(\Varid{q}\;\Varid{s}){}\<[E]%
\\
\>[3]{}\hsindent{2}{}\<[5]%
\>[5]{}\Varid{fwd}_{\Varid{RHS}}\mathbin{::}\Conid{Functor}\;\Varid{f}\Rightarrow \Varid{f}\;(\Varid{s}\to \Conid{Free}\;\Varid{f}\;[\mskip1.5mu \Varid{a}\mskip1.5mu])\to (\Varid{s}\to \Conid{Free}\;\Varid{f}\;[\mskip1.5mu \Varid{a}\mskip1.5mu]){}\<[E]%
\\
\>[3]{}\hsindent{2}{}\<[5]%
\>[5]{}\Varid{fwd}_{\Varid{RHS}}\;\Varid{op}\mathrel{=}\lambda \Varid{s}\to \Conid{Op}\;(\Varid{fmap}\;(\mathbin{\$}\Varid{s})\;\Varid{op}){}\<[E]%
\ColumnHook
\end{hscode}\resethooks
\indentend 

\subsection{Fusing the Left-Hand Side}
\label{sec:trail-fusing-lhs}

As in \Cref{app:local-global}, we fuse \ensuremath{\Varid{run_{StateT}}\hsdot{\circ }{.}\Varid{h_{State}}}
into \ensuremath{\Varid{h_{State1}}}.
For brevity, we define\indentbegin \begin{hscode}\SaveRestoreHook
\column{B}{@{}>{\hspre}l<{\hspost}@{}}%
\column{3}{@{}>{\hspre}l<{\hspost}@{}}%
\column{E}{@{}>{\hspre}l<{\hspost}@{}}%
\>[3]{}\Varid{runStack}\mathrel{=}\Varid{fmap}\;\Varid{fst}\hsdot{\circ }{.}\Varid{flip}\;\Varid{h_{State1}}\;(\Conid{Stack}\;[\mskip1.5mu \mskip1.5mu]){}\<[E]%
\ColumnHook
\end{hscode}\resethooks
\indentend %
The left-hand side is simplified to\indentbegin \begin{hscode}\SaveRestoreHook
\column{B}{@{}>{\hspre}l<{\hspost}@{}}%
\column{3}{@{}>{\hspre}l<{\hspost}@{}}%
\column{E}{@{}>{\hspre}l<{\hspost}@{}}%
\>[3]{}\Varid{fmap}\;\Varid{runStack}\hsdot{\circ }{.}\Varid{h_{GlobalM}}\hsdot{\circ }{.}\Varid{local2trail}{}\<[E]%
\ColumnHook
\end{hscode}\resethooks
\indentend We calculate as follows:
\indentbegin \begin{hscode}\SaveRestoreHook
\column{B}{@{}>{\hspre}l<{\hspost}@{}}%
\column{5}{@{}>{\hspre}l<{\hspost}@{}}%
\column{7}{@{}>{\hspre}l<{\hspost}@{}}%
\column{9}{@{}>{\hspre}l<{\hspost}@{}}%
\column{25}{@{}>{\hspre}l<{\hspost}@{}}%
\column{27}{@{}>{\hspre}l<{\hspost}@{}}%
\column{29}{@{}>{\hspre}l<{\hspost}@{}}%
\column{E}{@{}>{\hspre}l<{\hspost}@{}}%
\>[5]{}\Varid{fmap}\;\Varid{runStack}\hsdot{\circ }{.}\Varid{h_{GlobalM}}\hsdot{\circ }{.}\Varid{local2trail}{}\<[E]%
\\
\>[B]{}\mathrel{=}\mbox{\commentbegin ~  definition of \ensuremath{\Varid{local2trail}}  \commentend}{}\<[E]%
\\
\>[B]{}\hsindent{5}{}\<[5]%
\>[5]{}\Varid{fmap}\;\Varid{runStack}\hsdot{\circ }{.}\Varid{h_{GlobalM}}\hsdot{\circ }{.}\Varid{fold}\;\Conid{Var}\;(\Varid{alg}_{1}\mathbin{\#}\Varid{alg}_{2}\mathbin{\#}\Varid{fwd}){}\<[E]%
\\
\>[5]{}\hsindent{2}{}\<[7]%
\>[7]{}\mathbf{where}{}\<[E]%
\\
\>[7]{}\hsindent{2}{}\<[9]%
\>[9]{}\Varid{alg}_{1}\;(\Conid{MUpdate}\;\Varid{r}\;\Varid{k}){}\<[29]%
\>[29]{}\mathrel{=}\Varid{pushStack}\;(\Conid{Left}\;\Varid{r})>\!\!>\Varid{update}\;\Varid{r}>\!\!>\Varid{k}{}\<[E]%
\\
\>[7]{}\hsindent{2}{}\<[9]%
\>[9]{}\Varid{alg}_{1}\;\Varid{p}{}\<[29]%
\>[29]{}\mathrel{=}\Conid{Op}\hsdot{\circ }{.}\Conid{Inl}\mathbin{\$}\Varid{p}{}\<[E]%
\\
\>[7]{}\hsindent{2}{}\<[9]%
\>[9]{}\Varid{alg}_{2}\;(\Conid{Or}\;\Varid{p}\;\Varid{q}){}\<[29]%
\>[29]{}\mathrel{=}(\Varid{pushStack}\;(\Conid{Right}\;())>\!\!>\Varid{p})\mathbin{\talloblong}(\Varid{untrail}>\!\!>\Varid{q}){}\<[E]%
\\
\>[7]{}\hsindent{2}{}\<[9]%
\>[9]{}\Varid{alg}_{2}\;\Varid{p}{}\<[29]%
\>[29]{}\mathrel{=}\Conid{Op}\hsdot{\circ }{.}\Conid{Inr}\hsdot{\circ }{.}\Conid{Inl}\mathbin{\$}\Varid{p}{}\<[E]%
\\
\>[7]{}\hsindent{2}{}\<[9]%
\>[9]{}\Varid{fwd}\;\Varid{p}{}\<[29]%
\>[29]{}\mathrel{=}\Conid{Op}\hsdot{\circ }{.}\Conid{Inr}\hsdot{\circ }{.}\Conid{Inr}\hsdot{\circ }{.}\Conid{Inr}\mathbin{\$}\Varid{p}{}\<[E]%
\\
\>[7]{}\hsindent{2}{}\<[9]%
\>[9]{}\Varid{untrail}\mathrel{=}\mathbf{do}\;{}\<[25]%
\>[25]{}\Varid{top}\leftarrow \Varid{popStack};{}\<[E]%
\\
\>[25]{}\mathbf{case}\;\Varid{top}\;\mathbf{of}{}\<[E]%
\\
\>[25]{}\hsindent{2}{}\<[27]%
\>[27]{}\Conid{Nothing}\to \Varid{\eta}\;(){}\<[E]%
\\
\>[25]{}\hsindent{2}{}\<[27]%
\>[27]{}\Conid{Just}\;(\Conid{Right}\;())\to \Varid{\eta}\;(){}\<[E]%
\\
\>[25]{}\hsindent{2}{}\<[27]%
\>[27]{}\Conid{Just}\;(\Conid{Left}\;\Varid{r})\to \mathbf{do}\;\Varid{restore}\;\Varid{r};\Varid{untrail}{}\<[E]%
\\
\>[B]{}\mathrel{=}\mbox{\commentbegin ~  fold fusion-post' (Equation \ref{eq:fusion-post-strong})   \commentend}{}\<[E]%
\\
\>[B]{}\hsindent{5}{}\<[5]%
\>[5]{}\Varid{fold}\;\Varid{gen}_{\Varid{LHS}}\;(\Varid{alg}_{\Varid{LHS}}^{\Varid{S}}\mathbin{\#}\Varid{alg}_{\Varid{LHS}}^{\Varid{ND}}\mathbin{\#}\Varid{fwd}_{\Varid{LHS}}){}\<[E]%
\ColumnHook
\end{hscode}\resethooks
\indentend 
This last step is valid provided that the fusion conditions are satisfied:
\[\ba{rclr}
\ensuremath{\Varid{fmap}\;\Varid{runStack}\hsdot{\circ }{.}\Varid{h_{GlobalM}}\hsdot{\circ }{.}\Conid{Var}} & = & \ensuremath{\Varid{gen}_{\Varid{LHS}}} &\refa{}\\
\ea\]
\vspace{-\baselineskip}
\[\ba{rlr}
   &\ensuremath{\Varid{fmap}\;\Varid{runStack}\hsdot{\circ }{.}\Varid{h_{GlobalM}}\hsdot{\circ }{.}(\Varid{alg}_{1}\mathbin{\#}\Varid{alg}_{2}\mathbin{\#}\Varid{fwd})\hsdot{\circ }{.}\Varid{fmap}\;\Varid{local2trail}} \\
= & \ensuremath{(\Varid{alg}_{\Varid{LHS}}^{\Varid{S}}\mathbin{\#}\Varid{alg}_{\Varid{LHS}}^{\Varid{ND}}\mathbin{\#}\Varid{fwd}_{\Varid{LHS}})\hsdot{\circ }{.}\Varid{fmap}\;(\Varid{fmap}\;\Varid{runStack}\hsdot{\circ }{.}\Varid{h_{GlobalM}})\hsdot{\circ }{.}\Varid{fmap}\;\Varid{local2trail}} &\refb{}
\ea\]

The first subcondition \refa{} is met by\indentbegin \begin{hscode}\SaveRestoreHook
\column{B}{@{}>{\hspre}l<{\hspost}@{}}%
\column{3}{@{}>{\hspre}l<{\hspost}@{}}%
\column{E}{@{}>{\hspre}l<{\hspost}@{}}%
\>[3]{}\Varid{gen}_{\Varid{LHS}}\mathbin{::}\Conid{Functor}\;\Varid{f}\Rightarrow \Varid{a}\to (\Varid{s}\to \Conid{Free}\;\Varid{f}\;[\mskip1.5mu \Varid{a}\mskip1.5mu]){}\<[E]%
\\
\>[3]{}\Varid{gen}_{\Varid{LHS}}\;\Varid{x}\mathrel{=}\lambda \Varid{s}\to \Conid{Var}\;[\mskip1.5mu \Varid{x}\mskip1.5mu]{}\<[E]%
\ColumnHook
\end{hscode}\resethooks
\indentend as established in the following calculation:\indentbegin \begin{hscode}\SaveRestoreHook
\column{B}{@{}>{\hspre}l<{\hspost}@{}}%
\column{3}{@{}>{\hspre}l<{\hspost}@{}}%
\column{5}{@{}>{\hspre}l<{\hspost}@{}}%
\column{E}{@{}>{\hspre}l<{\hspost}@{}}%
\>[5]{}\Varid{fmap}\;\Varid{runStack}\mathbin{\$}\Varid{h_{GlobalM}}\;(\Conid{Var}\;\Varid{x}){}\<[E]%
\\
\>[3]{}\mathrel{=}\mbox{\commentbegin ~ definition of \ensuremath{\Varid{h_{GlobalM}}}  \commentend}{}\<[E]%
\\
\>[3]{}\hsindent{2}{}\<[5]%
\>[5]{}\Varid{fmap}\;\Varid{runStack}\mathbin{\$}\Varid{fmap}\;(\Varid{fmap}\;\Varid{fst})\;(\Varid{h_{Modify1}}\;(\Varid{h_{ND+f}}\;(\Varid{(\Leftrightarrow)}\;(\Conid{Var}\;\Varid{x})))){}\<[E]%
\\
\>[3]{}\mathrel{=}\mbox{\commentbegin ~ definition of \ensuremath{\Varid{(\Leftrightarrow)}}  \commentend}{}\<[E]%
\\
\>[3]{}\hsindent{2}{}\<[5]%
\>[5]{}\Varid{fmap}\;\Varid{runStack}\mathbin{\$}\Varid{fmap}\;(\Varid{fmap}\;\Varid{fst})\;(\Varid{h_{Modify1}}\;(\Varid{h_{ND+f}}\;(\Conid{Var}\;\Varid{x}))){}\<[E]%
\\
\>[3]{}\mathrel{=}\mbox{\commentbegin ~ definition of \ensuremath{\Varid{h_{ND+f}}}  \commentend}{}\<[E]%
\\
\>[3]{}\hsindent{2}{}\<[5]%
\>[5]{}\Varid{fmap}\;\Varid{runStack}\mathbin{\$}\Varid{fmap}\;(\Varid{fmap}\;\Varid{fst})\;(\Varid{h_{Modify1}}\;(\Conid{Var}\;[\mskip1.5mu \Varid{x}\mskip1.5mu])){}\<[E]%
\\
\>[3]{}\mathrel{=}\mbox{\commentbegin ~ definition of \ensuremath{\Varid{h_{Modify1}}}  \commentend}{}\<[E]%
\\
\>[3]{}\hsindent{2}{}\<[5]%
\>[5]{}\Varid{fmap}\;\Varid{runStack}\mathbin{\$}\Varid{fmap}\;(\Varid{fmap}\;\Varid{fst})\;(\lambda \Varid{s}\to \Conid{Var}\;([\mskip1.5mu \Varid{x}\mskip1.5mu],\Varid{s})){}\<[E]%
\\
\>[3]{}\mathrel{=}\mbox{\commentbegin ~ definition of \ensuremath{\Varid{fmap}} (twice)  \commentend}{}\<[E]%
\\
\>[3]{}\hsindent{2}{}\<[5]%
\>[5]{}\Varid{fmap}\;\Varid{runStack}\mathbin{\$}\lambda \Varid{s}\to \Conid{Var}\;[\mskip1.5mu \Varid{x}\mskip1.5mu]{}\<[E]%
\\
\>[3]{}\mathrel{=}\mbox{\commentbegin ~ definition of \ensuremath{\Varid{fmap}}  \commentend}{}\<[E]%
\\
\>[3]{}\hsindent{2}{}\<[5]%
\>[5]{}\lambda \Varid{s}\to \Varid{runStack}\mathbin{\$}\Conid{Var}\;[\mskip1.5mu \Varid{x}\mskip1.5mu]{}\<[E]%
\\
\>[3]{}\mathrel{=}\mbox{\commentbegin ~ definition of \ensuremath{\Varid{runStack}}  \commentend}{}\<[E]%
\\
\>[3]{}\hsindent{2}{}\<[5]%
\>[5]{}\lambda \Varid{s}\to \Varid{fmap}\;\Varid{fst}\hsdot{\circ }{.}\Varid{flip}\;\Varid{h_{State1}}\;(\Conid{Stack}\;[\mskip1.5mu \mskip1.5mu])\mathbin{\$}\Conid{Var}\;[\mskip1.5mu \Varid{x}\mskip1.5mu]{}\<[E]%
\\
\>[3]{}\mathrel{=}\mbox{\commentbegin ~ definition of \ensuremath{\Varid{h_{State1}}}  \commentend}{}\<[E]%
\\
\>[3]{}\hsindent{2}{}\<[5]%
\>[5]{}\lambda \Varid{s}\to \Varid{fmap}\;\Varid{fst}\mathbin{\$}(\lambda \Varid{s}\to \Conid{Var}\;([\mskip1.5mu \Varid{x}\mskip1.5mu],\Varid{s}))\;(\Conid{Stack}\;[\mskip1.5mu \mskip1.5mu]){}\<[E]%
\\
\>[3]{}\mathrel{=}\mbox{\commentbegin ~ function application  \commentend}{}\<[E]%
\\
\>[3]{}\hsindent{2}{}\<[5]%
\>[5]{}\lambda \Varid{s}\to \Varid{fmap}\;\Varid{fst}\;(\Conid{Var}\;([\mskip1.5mu \Varid{x}\mskip1.5mu],\Conid{Stack}\;[\mskip1.5mu \mskip1.5mu])){}\<[E]%
\\
\>[3]{}\mathrel{=}\mbox{\commentbegin ~ definition of \ensuremath{\Varid{fmap}}  \commentend}{}\<[E]%
\\
\>[3]{}\hsindent{2}{}\<[5]%
\>[5]{}\lambda \Varid{s}\to \Conid{Var}\;[\mskip1.5mu \Varid{x}\mskip1.5mu]{}\<[E]%
\\
\>[3]{}\mathrel{=}\mbox{\commentbegin ~ definition of \ensuremath{\Varid{gen}_{\Varid{LHS}}}  \commentend}{}\<[E]%
\\
\>[3]{}\hsindent{2}{}\<[5]%
\>[5]{}\Varid{gen}_{\Varid{LHS}}\;\Varid{x}{}\<[E]%
\ColumnHook
\end{hscode}\resethooks
\indentend 
We can split the second fusion condition \refb{} in three
subconditions:
\[\ba{rlr}
  & \ensuremath{\Varid{fmap}\;\Varid{runStack}\hsdot{\circ }{.}\Varid{h_{GlobalM}}\hsdot{\circ }{.}\Varid{alg}_{1}\hsdot{\circ }{.}\Varid{fmap}\;\Varid{local2trail}} \\
= & \ensuremath{\Varid{alg}_{\Varid{LHS}}^{\Varid{S}}\hsdot{\circ }{.}\Varid{fmap}\;(\Varid{fmap}\;\Varid{runStack}\hsdot{\circ }{.}\Varid{h_{GlobalM}})\hsdot{\circ }{.}\Varid{fmap}\;\Varid{local2trail}} &\refc{}\\
  & \ensuremath{\Varid{fmap}\;\Varid{runStack}\hsdot{\circ }{.}\Varid{h_{GlobalM}}\hsdot{\circ }{.}\Varid{h_{GlobalM}}\hsdot{\circ }{.}\Varid{alg}_{2}\hsdot{\circ }{.}\Varid{fmap}\;\Varid{local2trail}} \\
= & \ensuremath{\Varid{alg}_{\Varid{LHS}}^{\Varid{ND}}\hsdot{\circ }{.}\Varid{fmap}\;(\Varid{fmap}\;\Varid{runStack}\hsdot{\circ }{.}\Varid{h_{GlobalM}})\hsdot{\circ }{.}\Varid{fmap}\;\Varid{local2trail}} &\refd{}\\
  & \ensuremath{\Varid{fmap}\;\Varid{runStack}\hsdot{\circ }{.}\Varid{h_{GlobalM}}\hsdot{\circ }{.}\Varid{h_{GlobalM}}\hsdot{\circ }{.}\Varid{fwd}\hsdot{\circ }{.}\Varid{fmap}\;\Varid{local2trail}} \\
= & \ensuremath{\Varid{fwd}_{\Varid{LHS}}\hsdot{\circ }{.}\Varid{fmap}\;(\Varid{fmap}\;\Varid{runStack}\hsdot{\circ }{.}\Varid{h_{GlobalM}})\hsdot{\circ }{.}\Varid{fmap}\;\Varid{local2trail}} &\refe{}
\ea\]

For brevity, we omit the last common part \ensuremath{\Varid{fmap}\;\Varid{local2global_M}} of
these equations. Instead, we assume that the input is in the codomain
of \ensuremath{\Varid{fmap}\;\Varid{local2global_M}}.

For the first subcondition \refc{}, we define \ensuremath{\Varid{alg}_{\Varid{LHS}}^{\Varid{S}}} as follows.\indentbegin \begin{hscode}\SaveRestoreHook
\column{B}{@{}>{\hspre}l<{\hspost}@{}}%
\column{3}{@{}>{\hspre}l<{\hspost}@{}}%
\column{27}{@{}>{\hspre}l<{\hspost}@{}}%
\column{E}{@{}>{\hspre}l<{\hspost}@{}}%
\>[3]{}\Varid{alg}_{\Varid{LHS}}^{\Varid{S}}\mathbin{::}(\Conid{Functor}\;\Varid{f},\Conid{Undo}\;\Varid{s}\;\Varid{r})\Rightarrow \Varid{Modify_{F}}\;\Varid{s}\;\Varid{r}\;(\Varid{s}\to \Conid{Free}\;\Varid{f}\;[\mskip1.5mu \Varid{a}\mskip1.5mu])\to (\Varid{s}\to \Conid{Free}\;\Varid{f}\;[\mskip1.5mu \Varid{a}\mskip1.5mu]){}\<[E]%
\\
\>[3]{}\Varid{alg}_{\Varid{LHS}}^{\Varid{S}}\;(\Conid{MGet}\;\Varid{k}){}\<[27]%
\>[27]{}\mathrel{=}\lambda \Varid{s}\to \Varid{k}\;\Varid{s}\;\Varid{s}{}\<[E]%
\\
\>[3]{}\Varid{alg}_{\Varid{LHS}}^{\Varid{S}}\;(\Conid{MUpdate}\;\Varid{r}\;\Varid{k}){}\<[27]%
\>[27]{}\mathrel{=}\lambda \Varid{s}\to \Varid{k}\;(\Varid{s}\mathbin{\oplus}\Varid{r}){}\<[E]%
\\
\>[3]{}\Varid{alg}_{\Varid{LHS}}^{\Varid{S}}\;(\Conid{MRestore}\;\Varid{r}\;\Varid{k}){}\<[27]%
\>[27]{}\mathrel{=}\lambda \Varid{s}\to \Varid{k}\;(\Varid{s}\mathbin{\ominus}\Varid{r}){}\<[E]%
\ColumnHook
\end{hscode}\resethooks
\indentend We prove it by a case analysis on the shape of input \ensuremath{\Varid{op}\mathbin{::}\Varid{Modify_{F}}\;\Varid{s}\;\Varid{r}\;(\Conid{Free}\;(\Varid{Modify_{F}}\;\Varid{s}\;\Varid{r}\mathrel{{:}{+}{:}}\Varid{Nondet_{F}}\mathrel{{:}{+}{:}}\Varid{f})\;\Varid{a})}.
We use the condition in \Cref{thm:modify-local-global} that the input
program does not use the \ensuremath{\Varid{restore}} operation.
We only need to consider the case that \ensuremath{\Varid{op}} is of form \ensuremath{\Conid{MGet}\;\Varid{k}} or
\ensuremath{\Conid{MUpdate}\;\Varid{r}\;\Varid{k}} where \ensuremath{\Varid{restore}} is also not used in the continuation
\ensuremath{\Varid{k}}.

\vspace{0.5\lineskip}

\noindent \mbox{\underline{case \ensuremath{\Varid{op}\mathrel{=}\Conid{MGet}\;\Varid{k}}}}
In the corresponding case of \Cref{app:modify-fusing-lhs}, we have
calculated that \ensuremath{\Varid{h_{GlobalM}}\;(\Conid{Op}\;(\Conid{Inl}\;(\Conid{MGet}\;\Varid{k})))\mathrel{=}\lambda \Varid{s}\to (\Varid{h_{GlobalM}}\hsdot{\circ }{.}\Varid{k})\;\Varid{s}\;\Varid{s}} \refs{}.
\indentbegin \begin{hscode}\SaveRestoreHook
\column{B}{@{}>{\hspre}l<{\hspost}@{}}%
\column{3}{@{}>{\hspre}l<{\hspost}@{}}%
\column{5}{@{}>{\hspre}l<{\hspost}@{}}%
\column{E}{@{}>{\hspre}l<{\hspost}@{}}%
\>[5]{}\Varid{fmap}\;\Varid{runStack}\mathbin{\$}\Varid{h_{GlobalM}}\;(\Varid{alg}_{1}\;(\Conid{MGet}\;\Varid{k})){}\<[E]%
\\
\>[3]{}\mathrel{=}\mbox{\commentbegin ~ definition of \ensuremath{\Varid{alg}_{1}}  \commentend}{}\<[E]%
\\
\>[3]{}\hsindent{2}{}\<[5]%
\>[5]{}\Varid{fmap}\;\Varid{runStack}\mathbin{\$}\Varid{h_{GlobalM}}\;(\Conid{Op}\;(\Conid{Inl}\;(\Conid{MGet}\;\Varid{k}))){}\<[E]%
\\
\>[3]{}\mathrel{=}\mbox{\commentbegin ~ Equation \refs{}  \commentend}{}\<[E]%
\\
\>[3]{}\hsindent{2}{}\<[5]%
\>[5]{}\Varid{fmap}\;\Varid{runStack}\mathbin{\$}\lambda \Varid{s}\to (\Varid{h_{GlobalM}}\hsdot{\circ }{.}\Varid{k})\;\Varid{s}\;\Varid{s}{}\<[E]%
\\
\>[3]{}\mathrel{=}\mbox{\commentbegin ~ definition of \ensuremath{\Varid{fmap}}  \commentend}{}\<[E]%
\\
\>[3]{}\hsindent{2}{}\<[5]%
\>[5]{}\lambda \Varid{s}\to \Varid{runStack}\mathbin{\$}(\Varid{h_{GlobalM}}\hsdot{\circ }{.}\Varid{k})\;\Varid{s}\;\Varid{s}{}\<[E]%
\\
\>[3]{}\mathrel{=}\mbox{\commentbegin ~ definition of \ensuremath{\Varid{fmap}}  \commentend}{}\<[E]%
\\
\>[3]{}\hsindent{2}{}\<[5]%
\>[5]{}\lambda \Varid{s}\to (\Varid{fmap}\;\Varid{runStack}\hsdot{\circ }{.}\Varid{h_{GlobalM}}\hsdot{\circ }{.}\Varid{k})\;\Varid{s}\;\Varid{s}{}\<[E]%
\\
\>[3]{}\mathrel{=}\mbox{\commentbegin ~ definition of \ensuremath{\Varid{alg}_{\Varid{LHS}}^{\Varid{S}}}  \commentend}{}\<[E]%
\\
\>[3]{}\hsindent{2}{}\<[5]%
\>[5]{}\Varid{alg}_{\Varid{LHS}}^{\Varid{S}}\;(\Conid{MGet}\;(\Varid{fmap}\;\Varid{runStack}\hsdot{\circ }{.}\Varid{h_{GlobalM}}\hsdot{\circ }{.}\Varid{k})){}\<[E]%
\\
\>[3]{}\mathrel{=}\mbox{\commentbegin ~ definition of \ensuremath{\Varid{fmap}}  \commentend}{}\<[E]%
\\
\>[3]{}\hsindent{2}{}\<[5]%
\>[5]{}\Varid{alg}_{\Varid{LHS}}^{\Varid{S}}\;(\Varid{fmap}\;(\Varid{fmap}\;\Varid{runStack}\hsdot{\circ }{.}\Varid{h_{GlobalM}})\;(\Conid{MGet}\;\Varid{k})){}\<[E]%
\ColumnHook
\end{hscode}\resethooks
\indentend \noindent \mbox{\underline{case \ensuremath{\Varid{op}\mathrel{=}\Conid{MUpdate}\;\Varid{r}\;\Varid{k}}}}
From \ensuremath{\Varid{op}} is in the codomain of \ensuremath{\Varid{fmap}\;\Varid{local2global_M}} we obtain \ensuremath{\Varid{k}} is
in the codomain of \ensuremath{\Varid{local2global_M}}.
\indentbegin \begin{hscode}\SaveRestoreHook
\column{B}{@{}>{\hspre}l<{\hspost}@{}}%
\column{3}{@{}>{\hspre}l<{\hspost}@{}}%
\column{5}{@{}>{\hspre}l<{\hspost}@{}}%
\column{7}{@{}>{\hspre}l<{\hspost}@{}}%
\column{9}{@{}>{\hspre}l<{\hspost}@{}}%
\column{11}{@{}>{\hspre}l<{\hspost}@{}}%
\column{E}{@{}>{\hspre}l<{\hspost}@{}}%
\>[5]{}\Varid{fmap}\;\Varid{runStack}\hsdot{\circ }{.}\Varid{h_{GlobalM}}\mathbin{\$}\Varid{alg}_{1}\;(\Conid{MUpdate}\;\Varid{r}\;\Varid{k}){}\<[E]%
\\
\>[3]{}\mathrel{=}\mbox{\commentbegin ~ definition of \ensuremath{\Varid{alg}_{1}}  \commentend}{}\<[E]%
\\
\>[3]{}\hsindent{2}{}\<[5]%
\>[5]{}\Varid{fmap}\;\Varid{runStack}\hsdot{\circ }{.}\Varid{h_{GlobalM}}\mathbin{\$}\Varid{pushStack}\;(\Conid{Left}\;\Varid{r})>\!\!>\Varid{update}\;\Varid{r}>\!\!>\Varid{k}{}\<[E]%
\\
\>[3]{}\mathrel{=}\mbox{\commentbegin ~ definition of \ensuremath{\Varid{pushStack}}  \commentend}{}\<[E]%
\\
\>[3]{}\hsindent{2}{}\<[5]%
\>[5]{}\Varid{fmap}\;\Varid{runStack}\hsdot{\circ }{.}\Varid{h_{GlobalM}}\mathbin{\$}\mathbf{do}{}\<[E]%
\\
\>[5]{}\hsindent{2}{}\<[7]%
\>[7]{}\Conid{Stack}\;\Varid{xs}\leftarrow \Varid{get}{}\<[E]%
\\
\>[5]{}\hsindent{2}{}\<[7]%
\>[7]{}\Varid{put}\;(\Conid{Stack}\;(\Conid{Left}\;\Varid{r}\mathbin{:}\Varid{xs})){}\<[E]%
\\
\>[5]{}\hsindent{2}{}\<[7]%
\>[7]{}\Varid{update}\;\Varid{r}>\!\!>\Varid{k}{}\<[E]%
\\
\>[3]{}\mathrel{=}\mbox{\commentbegin ~ definition of \ensuremath{\Varid{get}}, \ensuremath{\Varid{put}}, and \ensuremath{\Varid{update}}  \commentend}{}\<[E]%
\\
\>[3]{}\hsindent{2}{}\<[5]%
\>[5]{}\Varid{fmap}\;\Varid{runStack}\hsdot{\circ }{.}\Varid{h_{GlobalM}}\mathbin{\$}{}\<[E]%
\\
\>[5]{}\hsindent{2}{}\<[7]%
\>[7]{}\Conid{Op}\hsdot{\circ }{.}\Conid{Inr}\hsdot{\circ }{.}\Conid{Inr}\hsdot{\circ }{.}\Conid{Inl}\mathbin{\$}\Conid{Get}\;(\lambda (\Conid{Stack}\;\Varid{xs})\to {}\<[E]%
\\
\>[7]{}\hsindent{2}{}\<[9]%
\>[9]{}\Conid{Op}\hsdot{\circ }{.}\Conid{Inr}\hsdot{\circ }{.}\Conid{Inr}\hsdot{\circ }{.}\Conid{Inl}\mathbin{\$}\Conid{Put}\;(\Conid{Stack}\;(\Conid{Left}\;\Varid{r}\mathbin{:}\Varid{xs}))\;({}\<[E]%
\\
\>[9]{}\hsindent{2}{}\<[11]%
\>[11]{}\Conid{Op}\hsdot{\circ }{.}\Conid{Inl}\mathbin{\$}\Conid{MUpdate}\;\Varid{r}\;\Varid{k})){}\<[E]%
\\
\>[3]{}\mathrel{=}\mbox{\commentbegin ~ definition of \ensuremath{\Varid{h_{GlobalM}}}  \commentend}{}\<[E]%
\\
\>[3]{}\hsindent{2}{}\<[5]%
\>[5]{}\Varid{fmap}\;\Varid{runStack}\hsdot{\circ }{.}\Varid{fmap}\;(\Varid{fmap}\;\Varid{fst})\hsdot{\circ }{.}\Varid{h_{Modify1}}\hsdot{\circ }{.}\Varid{h_{ND+f}}\hsdot{\circ }{.}\Varid{(\Leftrightarrow)}\mathbin{\$}{}\<[E]%
\\
\>[5]{}\hsindent{2}{}\<[7]%
\>[7]{}\Conid{Op}\hsdot{\circ }{.}\Conid{Inr}\hsdot{\circ }{.}\Conid{Inr}\hsdot{\circ }{.}\Conid{Inl}\mathbin{\$}\Conid{Get}\;(\lambda (\Conid{Stack}\;\Varid{xs})\to {}\<[E]%
\\
\>[7]{}\hsindent{2}{}\<[9]%
\>[9]{}\Conid{Op}\hsdot{\circ }{.}\Conid{Inr}\hsdot{\circ }{.}\Conid{Inr}\hsdot{\circ }{.}\Conid{Inl}\mathbin{\$}\Conid{Put}\;(\Conid{Stack}\;(\Conid{Left}\;\Varid{r}\mathbin{:}\Varid{xs}))\;({}\<[E]%
\\
\>[9]{}\hsindent{2}{}\<[11]%
\>[11]{}\Conid{Op}\hsdot{\circ }{.}\Conid{Inl}\mathbin{\$}\Conid{MUpdate}\;\Varid{r}\;\Varid{k})){}\<[E]%
\\
\>[3]{}\mathrel{=}\mbox{\commentbegin ~ definition of \ensuremath{\Varid{(\Leftrightarrow)}}  \commentend}{}\<[E]%
\\
\>[3]{}\hsindent{2}{}\<[5]%
\>[5]{}\Varid{fmap}\;\Varid{runStack}\hsdot{\circ }{.}\Varid{fmap}\;(\Varid{fmap}\;\Varid{fst})\hsdot{\circ }{.}\Varid{h_{Modify1}}\hsdot{\circ }{.}\Varid{h_{ND+f}}\mathbin{\$}{}\<[E]%
\\
\>[5]{}\hsindent{2}{}\<[7]%
\>[7]{}\Conid{Op}\hsdot{\circ }{.}\Conid{Inr}\hsdot{\circ }{.}\Conid{Inr}\hsdot{\circ }{.}\Conid{Inl}\mathbin{\$}\Conid{Get}\;(\lambda (\Conid{Stack}\;\Varid{xs})\to {}\<[E]%
\\
\>[7]{}\hsindent{2}{}\<[9]%
\>[9]{}\Conid{Op}\hsdot{\circ }{.}\Conid{Inr}\hsdot{\circ }{.}\Conid{Inr}\hsdot{\circ }{.}\Conid{Inl}\mathbin{\$}\Conid{Put}\;(\Conid{Stack}\;(\Conid{Left}\;\Varid{r}\mathbin{:}\Varid{xs}))\;({}\<[E]%
\\
\>[9]{}\hsindent{2}{}\<[11]%
\>[11]{}\Conid{Op}\hsdot{\circ }{.}\Conid{Inr}\hsdot{\circ }{.}\Conid{Inl}\mathbin{\$}\Conid{MUpdate}\;\Varid{r}\;(\Varid{(\Leftrightarrow)}\;\Varid{k}))){}\<[E]%
\\
\>[3]{}\mathrel{=}\mbox{\commentbegin ~ definition of \ensuremath{\Varid{h_{ND+f}}}  \commentend}{}\<[E]%
\\
\>[3]{}\hsindent{2}{}\<[5]%
\>[5]{}\Varid{fmap}\;\Varid{runStack}\hsdot{\circ }{.}\Varid{fmap}\;(\Varid{fmap}\;\Varid{fst})\hsdot{\circ }{.}\Varid{h_{Modify1}}\mathbin{\$}{}\<[E]%
\\
\>[5]{}\hsindent{2}{}\<[7]%
\>[7]{}\Conid{Op}\hsdot{\circ }{.}\Conid{Inr}\hsdot{\circ }{.}\Conid{Inl}\mathbin{\$}\Conid{Get}\;(\lambda (\Conid{Stack}\;\Varid{xs})\to {}\<[E]%
\\
\>[7]{}\hsindent{2}{}\<[9]%
\>[9]{}\Conid{Op}\hsdot{\circ }{.}\Conid{Inr}\hsdot{\circ }{.}\Conid{Inl}\mathbin{\$}\Conid{Put}\;(\Conid{Stack}\;(\Conid{Left}\;\Varid{r}\mathbin{:}\Varid{xs}))\;({}\<[E]%
\\
\>[9]{}\hsindent{2}{}\<[11]%
\>[11]{}\Conid{Op}\hsdot{\circ }{.}\Conid{Inl}\mathbin{\$}\Conid{MUpdate}\;\Varid{r}\;(\Varid{h_{ND+f}}\hsdot{\circ }{.}\Varid{(\Leftrightarrow)}\mathbin{\$}\Varid{k}))){}\<[E]%
\\
\>[3]{}\mathrel{=}\mbox{\commentbegin ~ definition of \ensuremath{\Varid{h_{Modify1}}}  \commentend}{}\<[E]%
\\
\>[3]{}\hsindent{2}{}\<[5]%
\>[5]{}\Varid{fmap}\;\Varid{runStack}\hsdot{\circ }{.}\Varid{fmap}\;(\Varid{fmap}\;\Varid{fst})\mathbin{\$}\lambda \Varid{s}\to {}\<[E]%
\\
\>[5]{}\hsindent{2}{}\<[7]%
\>[7]{}\Conid{Op}\hsdot{\circ }{.}\Conid{Inl}\mathbin{\$}\Conid{Get}\;(\lambda (\Conid{Stack}\;\Varid{xs})\to {}\<[E]%
\\
\>[7]{}\hsindent{2}{}\<[9]%
\>[9]{}\Conid{Op}\hsdot{\circ }{.}\Conid{Inl}\mathbin{\$}\Conid{Put}\;(\Conid{Stack}\;(\Conid{Left}\;\Varid{r}\mathbin{:}\Varid{xs}))\;({}\<[E]%
\\
\>[9]{}\hsindent{2}{}\<[11]%
\>[11]{}(\Varid{h_{Modify1}}\hsdot{\circ }{.}\Varid{h_{ND+f}}\hsdot{\circ }{.}\Varid{(\Leftrightarrow)}\mathbin{\$}\Varid{k})\;(\Varid{s}\mathbin{\oplus}\Varid{r}))){}\<[E]%
\\
\>[3]{}\mathrel{=}\mbox{\commentbegin ~ definition of \ensuremath{\Varid{fmap}\;(\Varid{fmap}\;\Varid{fst})}  \commentend}{}\<[E]%
\\
\>[3]{}\hsindent{2}{}\<[5]%
\>[5]{}\Varid{fmap}\;\Varid{runStack}\mathbin{\$}\lambda \Varid{s}\to {}\<[E]%
\\
\>[5]{}\hsindent{2}{}\<[7]%
\>[7]{}\Conid{Op}\hsdot{\circ }{.}\Conid{Inl}\mathbin{\$}\Conid{Get}\;(\lambda (\Conid{Stack}\;\Varid{xs})\to {}\<[E]%
\\
\>[7]{}\hsindent{2}{}\<[9]%
\>[9]{}\Conid{Op}\hsdot{\circ }{.}\Conid{Inl}\mathbin{\$}\Conid{Put}\;(\Conid{Stack}\;(\Conid{Left}\;\Varid{r}\mathbin{:}\Varid{xs}))\;({}\<[E]%
\\
\>[9]{}\hsindent{2}{}\<[11]%
\>[11]{}(\Varid{fmap}\;(\Varid{fmap}\;\Varid{fst})\hsdot{\circ }{.}\Varid{h_{Modify1}}\hsdot{\circ }{.}\Varid{h_{ND+f}}\hsdot{\circ }{.}\Varid{(\Leftrightarrow)}\mathbin{\$}\Varid{k})\;(\Varid{s}\mathbin{\oplus}\Varid{r}))){}\<[E]%
\\
\>[3]{}\mathrel{=}\mbox{\commentbegin ~ definition of \ensuremath{\Varid{fmap}}  \commentend}{}\<[E]%
\\
\>[3]{}\hsindent{2}{}\<[5]%
\>[5]{}\lambda \Varid{s}\to \Varid{runStack}\mathbin{\$}{}\<[E]%
\\
\>[5]{}\hsindent{2}{}\<[7]%
\>[7]{}\Conid{Op}\hsdot{\circ }{.}\Conid{Inl}\mathbin{\$}\Conid{Get}\;(\lambda (\Conid{Stack}\;\Varid{xs})\to {}\<[E]%
\\
\>[7]{}\hsindent{2}{}\<[9]%
\>[9]{}\Conid{Op}\hsdot{\circ }{.}\Conid{Inl}\mathbin{\$}\Conid{Put}\;(\Conid{Stack}\;(\Conid{Left}\;\Varid{r}\mathbin{:}\Varid{xs}))\;({}\<[E]%
\\
\>[9]{}\hsindent{2}{}\<[11]%
\>[11]{}(\Varid{fmap}\;(\Varid{fmap}\;\Varid{fst})\hsdot{\circ }{.}\Varid{h_{Modify1}}\hsdot{\circ }{.}\Varid{h_{ND+f}}\hsdot{\circ }{.}\Varid{(\Leftrightarrow)}\mathbin{\$}\Varid{k})\;(\Varid{s}\mathbin{\oplus}\Varid{r}))){}\<[E]%
\\
\>[3]{}\mathrel{=}\mbox{\commentbegin ~ definition of \ensuremath{\Varid{h_{GlobalM}}}  \commentend}{}\<[E]%
\\
\>[3]{}\hsindent{2}{}\<[5]%
\>[5]{}\lambda \Varid{s}\to \Varid{runStack}\mathbin{\$}{}\<[E]%
\\
\>[5]{}\hsindent{2}{}\<[7]%
\>[7]{}\Conid{Op}\hsdot{\circ }{.}\Conid{Inl}\mathbin{\$}\Conid{Get}\;(\lambda (\Conid{Stack}\;\Varid{xs})\to {}\<[E]%
\\
\>[7]{}\hsindent{2}{}\<[9]%
\>[9]{}\Conid{Op}\hsdot{\circ }{.}\Conid{Inl}\mathbin{\$}\Conid{Put}\;(\Conid{Stack}\;(\Conid{Left}\;\Varid{r}\mathbin{:}\Varid{xs}))\;({}\<[E]%
\\
\>[9]{}\hsindent{2}{}\<[11]%
\>[11]{}(\Varid{h_{GlobalM}}\;\Varid{k})\;(\Varid{s}\mathbin{\oplus}\Varid{r}))){}\<[E]%
\\
\>[3]{}\mathrel{=}\mbox{\commentbegin ~ definition of \ensuremath{\Varid{runStack}}  \commentend}{}\<[E]%
\\
\>[3]{}\hsindent{2}{}\<[5]%
\>[5]{}\lambda \Varid{s}\to \Varid{fmap}\;\Varid{fst}\hsdot{\circ }{.}\Varid{flip}\;\Varid{h_{State1}}\;(\Conid{Stack}\;[\mskip1.5mu \mskip1.5mu])\mathbin{\$}{}\<[E]%
\\
\>[5]{}\hsindent{2}{}\<[7]%
\>[7]{}\Conid{Op}\hsdot{\circ }{.}\Conid{Inl}\mathbin{\$}\Conid{Get}\;(\lambda (\Conid{Stack}\;\Varid{xs})\to {}\<[E]%
\\
\>[7]{}\hsindent{2}{}\<[9]%
\>[9]{}\Conid{Op}\hsdot{\circ }{.}\Conid{Inl}\mathbin{\$}\Conid{Put}\;(\Conid{Stack}\;(\Conid{Left}\;\Varid{r}\mathbin{:}\Varid{xs}))\;({}\<[E]%
\\
\>[9]{}\hsindent{2}{}\<[11]%
\>[11]{}(\Varid{h_{GlobalM}}\;\Varid{k})\;(\Varid{s}\mathbin{\oplus}\Varid{r}))){}\<[E]%
\\
\>[3]{}\mathrel{=}\mbox{\commentbegin ~ definition of \ensuremath{\Varid{h_{State1}}}  \commentend}{}\<[E]%
\\
\>[3]{}\hsindent{2}{}\<[5]%
\>[5]{}\lambda \Varid{s}\to \Varid{fmap}\;\Varid{fst}\mathbin{\$}(\lambda \Varid{t}\to {}\<[E]%
\\
\>[5]{}\hsindent{2}{}\<[7]%
\>[7]{}(\lambda (\Conid{Stack}\;\Varid{xs})\to \lambda \anonymous \to {}\<[E]%
\\
\>[7]{}\hsindent{2}{}\<[9]%
\>[9]{}((\Varid{fmap}\;\Varid{h_{State1}}\hsdot{\circ }{.}\Varid{h_{GlobalM}}\mathbin{\$}\Varid{k})\;(\Varid{s}\mathbin{\oplus}\Varid{r}))\;(\Conid{Stack}\;(\Conid{Left}\;\Varid{r}\mathbin{:}\Varid{xs})){}\<[E]%
\\
\>[5]{}\hsindent{2}{}\<[7]%
\>[7]{})\;\Varid{t}\;\Varid{t}{}\<[E]%
\\
\>[3]{}\hsindent{2}{}\<[5]%
\>[5]{})(\Conid{Stack}\;[\mskip1.5mu \mskip1.5mu]){}\<[E]%
\\
\>[3]{}\mathrel{=}\mbox{\commentbegin ~ function application  \commentend}{}\<[E]%
\\
\>[3]{}\hsindent{2}{}\<[5]%
\>[5]{}\lambda \Varid{s}\to \Varid{fmap}\;\Varid{fst}\mathbin{\$}{}\<[E]%
\\
\>[5]{}\hsindent{2}{}\<[7]%
\>[7]{}(\lambda (\Conid{Stack}\;\Varid{xs})\to \lambda \anonymous \to {}\<[E]%
\\
\>[7]{}\hsindent{2}{}\<[9]%
\>[9]{}((\Varid{fmap}\;\Varid{h_{State1}}\hsdot{\circ }{.}\Varid{h_{GlobalM}}\mathbin{\$}\Varid{k})\;(\Varid{s}\mathbin{\oplus}\Varid{r}))\;(\Conid{Stack}\;(\Conid{Left}\;\Varid{r}\mathbin{:}\Varid{xs})){}\<[E]%
\\
\>[5]{}\hsindent{2}{}\<[7]%
\>[7]{})\;(\Conid{Stack}\;[\mskip1.5mu \mskip1.5mu])\;(\Conid{Stack}\;[\mskip1.5mu \mskip1.5mu]){}\<[E]%
\\
\>[3]{}\mathrel{=}\mbox{\commentbegin ~ function application  \commentend}{}\<[E]%
\\
\>[3]{}\hsindent{2}{}\<[5]%
\>[5]{}\lambda \Varid{s}\to \Varid{fmap}\;\Varid{fst}\mathbin{\$}{}\<[E]%
\\
\>[5]{}\hsindent{2}{}\<[7]%
\>[7]{}(\lambda \anonymous \to {}\<[E]%
\\
\>[7]{}\hsindent{2}{}\<[9]%
\>[9]{}((\Varid{fmap}\;\Varid{h_{State1}}\hsdot{\circ }{.}\Varid{h_{GlobalM}}\mathbin{\$}\Varid{k})\;(\Varid{s}\mathbin{\oplus}\Varid{r}))\;(\Conid{Stack}\;(\Conid{Left}\;\Varid{r}\mathbin{:}[\mskip1.5mu \mskip1.5mu])){}\<[E]%
\\
\>[5]{}\hsindent{2}{}\<[7]%
\>[7]{})\;(\Conid{Stack}\;[\mskip1.5mu \mskip1.5mu]){}\<[E]%
\\
\>[3]{}\mathrel{=}\mbox{\commentbegin ~ function application  \commentend}{}\<[E]%
\\
\>[3]{}\hsindent{2}{}\<[5]%
\>[5]{}\lambda \Varid{s}\to \Varid{fmap}\;\Varid{fst}\mathbin{\$}{}\<[E]%
\\
\>[5]{}\hsindent{2}{}\<[7]%
\>[7]{}((\Varid{fmap}\;\Varid{h_{State1}}\hsdot{\circ }{.}\Varid{h_{GlobalM}}\mathbin{\$}\Varid{k})\;(\Varid{s}\mathbin{\oplus}\Varid{r}))\;(\Conid{Stack}\;(\Conid{Left}\;\Varid{r}\mathbin{:}[\mskip1.5mu \mskip1.5mu])){}\<[E]%
\\
\>[3]{}\mathrel{=}\mbox{\commentbegin ~ function application  \commentend}{}\<[E]%
\\
\>[3]{}\hsindent{2}{}\<[5]%
\>[5]{}\lambda \Varid{s}\to \Varid{fmap}\;\Varid{fst}\mathbin{\$}{}\<[E]%
\\
\>[5]{}\hsindent{2}{}\<[7]%
\>[7]{}((\Varid{fmap}\;\Varid{h_{State1}}\hsdot{\circ }{.}\Varid{h_{GlobalM}}\mathbin{\$}\Varid{k})\;(\Varid{s}\mathbin{\oplus}\Varid{r}))\;(\Conid{Stack}\;(\Conid{Left}\;\Varid{r}\mathbin{:}[\mskip1.5mu \mskip1.5mu])){}\<[E]%
\\
\>[3]{}\mathrel{=}\mbox{\commentbegin ~ definition of \ensuremath{\Varid{flip}} and reformulation  \commentend}{}\<[E]%
\\
\>[3]{}\hsindent{2}{}\<[5]%
\>[5]{}\lambda \Varid{s}\to (\Varid{fmap}\;(\Varid{fmap}\;\Varid{fst}\hsdot{\circ }{.}\Varid{flip}\;\Varid{h_{State1}}\;(\Conid{Stack}\;[\mskip1.5mu \Conid{Left}\;\Varid{r}\mskip1.5mu]))\hsdot{\circ }{.}\Varid{h_{GlobalM}}\mathbin{\$}\Varid{k})\;(\Varid{s}\mathbin{\oplus}\Varid{r}){}\<[E]%
\\
\>[3]{}\mathrel{=}\mbox{\commentbegin ~ \Cref{lemma:trail-stack-tracks-state} and definition of \ensuremath{\Varid{fmap}} and \ensuremath{\Varid{fst}}  \commentend}{}\<[E]%
\\
\>[3]{}\hsindent{2}{}\<[5]%
\>[5]{}\lambda \Varid{s}\to (\Varid{fmap}\;(\Varid{fmap}\;\Varid{fst}\hsdot{\circ }{.}\Varid{flip}\;\Varid{h_{State1}}\;(\Conid{Stack}\;[\mskip1.5mu \mskip1.5mu]))\hsdot{\circ }{.}\Varid{h_{GlobalM}}\mathbin{\$}\Varid{k})\;(\Varid{s}\mathbin{\oplus}\Varid{r}){}\<[E]%
\\
\>[3]{}\mathrel{=}\mbox{\commentbegin ~ definition of \ensuremath{\Varid{runStack}}  \commentend}{}\<[E]%
\\
\>[3]{}\hsindent{2}{}\<[5]%
\>[5]{}\lambda \Varid{s}\to (\Varid{fmap}\;\Varid{runStack}\hsdot{\circ }{.}\Varid{h_{GlobalM}}\mathbin{\$}\Varid{k})\;(\Varid{s}\mathbin{\oplus}\Varid{r}){}\<[E]%
\\
\>[3]{}\mathrel{=}\mbox{\commentbegin ~ definition of \ensuremath{\Varid{alg}_{\Varid{LHS}}^{\Varid{S}}}  \commentend}{}\<[E]%
\\
\>[3]{}\hsindent{2}{}\<[5]%
\>[5]{}\Varid{alg}_{\Varid{LHS}}^{\Varid{S}}\;(\Conid{MUpdate}\;\Varid{r}\;(\Varid{fmap}\;\Varid{runStack}\hsdot{\circ }{.}\Varid{h_{GlobalM}}\mathbin{\$}\Varid{k})){}\<[E]%
\\
\>[3]{}\mathrel{=}\mbox{\commentbegin ~ definition of \ensuremath{\Varid{fmap}}  \commentend}{}\<[E]%
\\
\>[3]{}\hsindent{2}{}\<[5]%
\>[5]{}\Varid{alg}_{\Varid{LHS}}^{\Varid{S}}\;(\Varid{fmap}\;(\Varid{fmap}\;\Varid{runStack}\hsdot{\circ }{.}\Varid{h_{GlobalM}})\;(\Conid{MUpdate}\;\Varid{r}\;\Varid{k})){}\<[E]%
\ColumnHook
\end{hscode}\resethooks
\indentend 
For the second subcondition \refd{}, we can define \ensuremath{\Varid{alg}_{\Varid{LHS}}^{\Varid{ND}}} as
follows.\indentbegin \begin{hscode}\SaveRestoreHook
\column{B}{@{}>{\hspre}l<{\hspost}@{}}%
\column{3}{@{}>{\hspre}l<{\hspost}@{}}%
\column{22}{@{}>{\hspre}l<{\hspost}@{}}%
\column{E}{@{}>{\hspre}l<{\hspost}@{}}%
\>[3]{}\Varid{alg}_{\Varid{LHS}}^{\Varid{ND}}\mathbin{::}\Conid{Functor}\;\Varid{f}\Rightarrow \Varid{Nondet_{F}}\;(\Varid{s}\to \Conid{Free}\;\Varid{f}\;[\mskip1.5mu \Varid{a}\mskip1.5mu])\to (\Varid{s}\to \Conid{Free}\;\Varid{f}\;[\mskip1.5mu \Varid{a}\mskip1.5mu]){}\<[E]%
\\
\>[3]{}\Varid{alg}_{\Varid{LHS}}^{\Varid{ND}}\;\Conid{Fail}{}\<[22]%
\>[22]{}\mathrel{=}\lambda \Varid{s}\to \Conid{Var}\;[\mskip1.5mu \mskip1.5mu]{}\<[E]%
\\
\>[3]{}\Varid{alg}_{\Varid{LHS}}^{\Varid{ND}}\;(\Conid{Or}\;\Varid{p}\;\Varid{q}){}\<[22]%
\>[22]{}\mathrel{=}\lambda \Varid{s}\to \Varid{liftM2}\;(+\!\!+)\;(\Varid{p}\;\Varid{s})\;(\Varid{q}\;\Varid{s}){}\<[E]%
\ColumnHook
\end{hscode}\resethooks
\indentend We prove it by a case analysis on the shape of input \ensuremath{\Varid{op}\mathbin{::}\Varid{Nondet_{F}}\;(\Conid{Free}\;(\Varid{Modify_{F}}\;\Varid{s}\;\Varid{r}\mathrel{{:}{+}{:}}\Varid{Nondet_{F}}\mathrel{{:}{+}{:}}\Varid{f})\;\Varid{a})}.

\noindent \mbox{\underline{case \ensuremath{\Varid{op}\mathrel{=}\Conid{Fail}}}}
In the corresponding case of \Cref{app:modify-fusing-lhs}, we have
calculated that \ensuremath{\Varid{h_{GlobalM}}\;(\Conid{Op}\;(\Conid{Inr}\;(\Conid{Inl}\;\Conid{Fail})))\mathrel{=}\lambda \Varid{s}\to \Conid{Var}\;[\mskip1.5mu \mskip1.5mu]} \refs{}.
\indentbegin \begin{hscode}\SaveRestoreHook
\column{B}{@{}>{\hspre}l<{\hspost}@{}}%
\column{3}{@{}>{\hspre}l<{\hspost}@{}}%
\column{4}{@{}>{\hspre}l<{\hspost}@{}}%
\column{5}{@{}>{\hspre}l<{\hspost}@{}}%
\column{E}{@{}>{\hspre}l<{\hspost}@{}}%
\>[5]{}\Varid{fmap}\;\Varid{runStack}\mathbin{\$}\Varid{h_{GlobalM}}\;(\Varid{alg}_{2}\;(\Conid{Fail})){}\<[E]%
\\
\>[3]{}\mathrel{=}\mbox{\commentbegin ~ definition of \ensuremath{\Varid{alg}_{2}}  \commentend}{}\<[E]%
\\
\>[3]{}\hsindent{2}{}\<[5]%
\>[5]{}\Varid{fmap}\;\Varid{runStack}\mathbin{\$}\Varid{h_{GlobalM}}\;(\Conid{Op}\;(\Conid{Inr}\;(\Conid{Inl}\;\Conid{Fail}))){}\<[E]%
\\
\>[3]{}\mathrel{=}\mbox{\commentbegin ~ Equation \refs{}  \commentend}{}\<[E]%
\\
\>[3]{}\hsindent{2}{}\<[5]%
\>[5]{}\Varid{fmap}\;\Varid{runStack}\mathbin{\$}\lambda \Varid{s}\to \Conid{Var}\;[\mskip1.5mu \mskip1.5mu]{}\<[E]%
\\
\>[3]{}\mathrel{=}\mbox{\commentbegin ~ definition of \ensuremath{\Varid{fmap}}  \commentend}{}\<[E]%
\\
\>[3]{}\hsindent{2}{}\<[5]%
\>[5]{}\lambda \Varid{s}\to \Varid{runStack}\mathbin{\$}\Conid{Var}\;[\mskip1.5mu \mskip1.5mu]{}\<[E]%
\\
\>[3]{}\mathrel{=}\mbox{\commentbegin ~ definition of \ensuremath{\Varid{runStack}}  \commentend}{}\<[E]%
\\
\>[3]{}\hsindent{2}{}\<[5]%
\>[5]{}\lambda \Varid{s}\to \Varid{fmap}\;\Varid{fst}\hsdot{\circ }{.}\Varid{flip}\;\Varid{h_{State1}}\;(\Conid{Stack}\;[\mskip1.5mu \mskip1.5mu])\mathbin{\$}\Conid{Var}\;[\mskip1.5mu \mskip1.5mu]{}\<[E]%
\\
\>[3]{}\mathrel{=}\mbox{\commentbegin ~ definition of \ensuremath{\Varid{h_{State1}}}  \commentend}{}\<[E]%
\\
\>[3]{}\hsindent{2}{}\<[5]%
\>[5]{}\lambda \Varid{s}\to \Varid{fmap}\;\Varid{fst}\mathbin{\$}\Conid{Var}\;([\mskip1.5mu \mskip1.5mu],\Conid{Stack}\;[\mskip1.5mu \mskip1.5mu]){}\<[E]%
\\
\>[3]{}\mathrel{=}\mbox{\commentbegin ~ definition of \ensuremath{\Varid{fmap}}  \commentend}{}\<[E]%
\\
\>[3]{}\hsindent{2}{}\<[5]%
\>[5]{}\lambda \Varid{s}\to \Conid{Var}\;[\mskip1.5mu \mskip1.5mu]{}\<[E]%
\\
\>[3]{}\mathrel{=}\mbox{\commentbegin ~ definition of \ensuremath{\Varid{alg}_{\Varid{RHS}}^{\Varid{ND}}}   \commentend}{}\<[E]%
\\
\>[3]{}\hsindent{1}{}\<[4]%
\>[4]{}\Varid{alg}_{\Varid{RHS}}^{\Varid{ND}}\;\Conid{Fail}{}\<[E]%
\\
\>[3]{}\mathrel{=}\mbox{\commentbegin ~ definition of \ensuremath{\Varid{fmap}}  \commentend}{}\<[E]%
\\
\>[3]{}\hsindent{1}{}\<[4]%
\>[4]{}\Varid{alg}_{\Varid{RHS}}^{\Varid{ND}}\;(\Varid{fmap}\;(\Varid{fmap}\;\Varid{runStack}\hsdot{\circ }{.}\Varid{h_{GlobalM}})\;\Conid{Fail}){}\<[E]%
\ColumnHook
\end{hscode}\resethooks
\indentend 
\noindent \mbox{\underline{case \ensuremath{\Varid{op}\mathrel{=}\Conid{Or}\;\Varid{p}\;\Varid{q}}}}
From \ensuremath{\Varid{op}} is in the codomain of \ensuremath{\Varid{fmap}\;\Varid{local2global_M}} we obtain \ensuremath{\Varid{p}} and
\ensuremath{\Varid{q}} are in the codomain of \ensuremath{\Varid{local2global_M}}.
\indentbegin \begin{hscode}\SaveRestoreHook
\column{B}{@{}>{\hspre}l<{\hspost}@{}}%
\column{3}{@{}>{\hspre}l<{\hspost}@{}}%
\column{5}{@{}>{\hspre}l<{\hspost}@{}}%
\column{7}{@{}>{\hspre}l<{\hspost}@{}}%
\column{12}{@{}>{\hspre}l<{\hspost}@{}}%
\column{13}{@{}>{\hspre}l<{\hspost}@{}}%
\column{16}{@{}>{\hspre}l<{\hspost}@{}}%
\column{18}{@{}>{\hspre}l<{\hspost}@{}}%
\column{E}{@{}>{\hspre}l<{\hspost}@{}}%
\>[5]{}\Varid{fmap}\;\Varid{runStack}\hsdot{\circ }{.}\Varid{h_{GlobalM}}\mathbin{\$}\Varid{alg}_{2}\;(\Conid{Or}\;\Varid{p}\;\Varid{q}){}\<[E]%
\\
\>[3]{}\mathrel{=}\mbox{\commentbegin ~ definition of \ensuremath{\Varid{alg}_{2}}  \commentend}{}\<[E]%
\\
\>[3]{}\hsindent{2}{}\<[5]%
\>[5]{}\Varid{fmap}\;\Varid{runStack}\hsdot{\circ }{.}\Varid{h_{GlobalM}}\mathbin{\$}(\Varid{pushStack}\;(\Conid{Right}\;())>\!\!>\Varid{p})\mathbin{\talloblong}(\Varid{untrail}>\!\!>\Varid{q}){}\<[E]%
\\
\>[3]{}\mathrel{=}\mbox{\commentbegin ~ definition of \ensuremath{(\talloblong)}  \commentend}{}\<[E]%
\\
\>[3]{}\hsindent{2}{}\<[5]%
\>[5]{}\Varid{fmap}\;\Varid{runStack}\hsdot{\circ }{.}\Varid{h_{GlobalM}}\mathbin{\$}\Conid{Op}\hsdot{\circ }{.}\Conid{Inr}\hsdot{\circ }{.}\Conid{Inl}\mathbin{\$}\Conid{Or}{}\<[E]%
\\
\>[5]{}\hsindent{2}{}\<[7]%
\>[7]{}(\Varid{pushStack}\;(\Conid{Right}\;())>\!\!>\Varid{p})\;(\Varid{untrail}>\!\!>\Varid{q}){}\<[E]%
\\
\>[3]{}\mathrel{=}\mbox{\commentbegin ~ definition of \ensuremath{\Varid{h_{GlobalM}}}  \commentend}{}\<[E]%
\\
\>[3]{}\hsindent{2}{}\<[5]%
\>[5]{}\Varid{fmap}\;\Varid{runStack}\hsdot{\circ }{.}\Varid{fmap}\;(\Varid{fmap}\;\Varid{fst})\hsdot{\circ }{.}\Varid{h_{Modify1}}\hsdot{\circ }{.}\Varid{h_{ND+f}}\hsdot{\circ }{.}\Varid{(\Leftrightarrow)}\mathbin{\$}\Conid{Op}\hsdot{\circ }{.}\Conid{Inr}\hsdot{\circ }{.}\Conid{Inl}\mathbin{\$}\Conid{Or}{}\<[E]%
\\
\>[5]{}\hsindent{2}{}\<[7]%
\>[7]{}(\Varid{pushStack}\;(\Conid{Right}\;())>\!\!>\Varid{p})\;(\Varid{untrail}>\!\!>\Varid{q}){}\<[E]%
\\
\>[3]{}\mathrel{=}\mbox{\commentbegin ~ definition of \ensuremath{\Varid{(\Leftrightarrow)}}  \commentend}{}\<[E]%
\\
\>[3]{}\hsindent{2}{}\<[5]%
\>[5]{}\Varid{fmap}\;\Varid{runStack}\hsdot{\circ }{.}\Varid{fmap}\;(\Varid{fmap}\;\Varid{fst})\hsdot{\circ }{.}\Varid{h_{Modify1}}\hsdot{\circ }{.}\Varid{h_{ND+f}}\mathbin{\$}\Conid{Op}\hsdot{\circ }{.}\Conid{Inl}\mathbin{\$}\Conid{Or}{}\<[E]%
\\
\>[5]{}\hsindent{2}{}\<[7]%
\>[7]{}(\Varid{pushStack}\;(\Conid{Right}\;())>\!\!>\Varid{(\Leftrightarrow)}\;\Varid{p})\;(\Varid{untrail}>\!\!>\Varid{(\Leftrightarrow)}\;\Varid{q}){}\<[E]%
\\
\>[3]{}\mathrel{=}\mbox{\commentbegin ~ definition of \ensuremath{\Varid{h_{ND+f}}} and \ensuremath{\Varid{liftM2}}  \commentend}{}\<[E]%
\\
\>[3]{}\hsindent{2}{}\<[5]%
\>[5]{}\Varid{fmap}\;\Varid{runStack}\hsdot{\circ }{.}\Varid{fmap}\;(\Varid{fmap}\;\Varid{fst})\hsdot{\circ }{.}\Varid{h_{Modify1}}\mathbin{\$}\mathbf{do}{}\<[E]%
\\
\>[5]{}\hsindent{2}{}\<[7]%
\>[7]{}\Varid{x}\leftarrow \Varid{h_{ND+f}}\;(\Varid{(\Leftrightarrow)}\;(\Varid{pushStack}\;(\Conid{Right}\;())))>\!\!>\Varid{h_{ND+f}}\;(\Varid{(\Leftrightarrow)}\;\Varid{p}){}\<[E]%
\\
\>[5]{}\hsindent{2}{}\<[7]%
\>[7]{}\Varid{y}\leftarrow \Varid{h_{ND+f}}\;(\Varid{(\Leftrightarrow)}\;\Varid{untrail})>\!\!>\Varid{h_{ND+f}}\;(\Varid{(\Leftrightarrow)}\;\Varid{q}){}\<[E]%
\\
\>[5]{}\hsindent{2}{}\<[7]%
\>[7]{}\Varid{\eta}\;(\Varid{x}+\!\!+\Varid{y}){}\<[E]%
\\
\>[3]{}\mathrel{=}\mbox{\commentbegin ~ monad law  \commentend}{}\<[E]%
\\
\>[3]{}\hsindent{2}{}\<[5]%
\>[5]{}\Varid{fmap}\;\Varid{runStack}\hsdot{\circ }{.}\Varid{fmap}\;(\Varid{fmap}\;\Varid{fst})\hsdot{\circ }{.}\Varid{h_{Modify1}}\mathbin{\$}\mathbf{do}{}\<[E]%
\\
\>[5]{}\hsindent{2}{}\<[7]%
\>[7]{}\Varid{h_{ND+f}}\;(\Varid{(\Leftrightarrow)}\;(\Varid{pushStack}\;(\Conid{Right}\;()))){}\<[E]%
\\
\>[5]{}\hsindent{2}{}\<[7]%
\>[7]{}\Varid{x}\leftarrow \Varid{h_{ND+f}}\;(\Varid{(\Leftrightarrow)}\;\Varid{p}){}\<[E]%
\\
\>[5]{}\hsindent{2}{}\<[7]%
\>[7]{}\Varid{h_{ND+f}}\;(\Varid{(\Leftrightarrow)}\;\Varid{untrail}){}\<[E]%
\\
\>[5]{}\hsindent{2}{}\<[7]%
\>[7]{}\Varid{y}\leftarrow \Varid{h_{ND+f}}\;(\Varid{(\Leftrightarrow)}\;\Varid{q}){}\<[E]%
\\
\>[5]{}\hsindent{2}{}\<[7]%
\>[7]{}\Varid{\eta}\;(\Varid{x}+\!\!+\Varid{y}){}\<[E]%
\\
\>[3]{}\mathrel{=}\mbox{\commentbegin ~ definition of \ensuremath{\Varid{h_{Modify1}}} and \Cref{lemma:dist-hModify1}  \commentend}{}\<[E]%
\\
\>[3]{}\hsindent{2}{}\<[5]%
\>[5]{}\Varid{fmap}\;\Varid{runStack}\hsdot{\circ }{.}\Varid{fmap}\;(\Varid{fmap}\;\Varid{fst})\mathbin{\$}\lambda \Varid{s}\to \mathbf{do}{}\<[E]%
\\
\>[5]{}\hsindent{2}{}\<[7]%
\>[7]{}(\anonymous ,\Varid{s}_{1}){}\<[16]%
\>[16]{}\leftarrow \Varid{h_{Modify1}}\;(\Varid{h_{ND+f}}\;(\Varid{(\Leftrightarrow)}\;(\Varid{pushStack}\;(\Conid{Right}\;()))))\;\Varid{s}{}\<[E]%
\\
\>[5]{}\hsindent{2}{}\<[7]%
\>[7]{}(\Varid{x},\Varid{s}_{2}){}\<[16]%
\>[16]{}\leftarrow \Varid{h_{Modify1}}\;(\Varid{h_{ND+f}}\;(\Varid{(\Leftrightarrow)}\;\Varid{p}))\;\Varid{s}_{1}{}\<[E]%
\\
\>[5]{}\hsindent{2}{}\<[7]%
\>[7]{}(\anonymous ,\Varid{s3}){}\<[16]%
\>[16]{}\leftarrow \Varid{h_{Modify1}}\;(\Varid{h_{ND+f}}\;(\Varid{(\Leftrightarrow)}\;\Varid{untrail}))\;\Varid{s}_{2}{}\<[E]%
\\
\>[5]{}\hsindent{2}{}\<[7]%
\>[7]{}(\Varid{y},\Varid{s4}){}\<[16]%
\>[16]{}\leftarrow \Varid{h_{Modify1}}\;(\Varid{h_{ND+f}}\;(\Varid{(\Leftrightarrow)}\;\Varid{q}))\;\Varid{s3}{}\<[E]%
\\
\>[5]{}\hsindent{2}{}\<[7]%
\>[7]{}\Varid{\eta}\;(\Varid{x}+\!\!+\Varid{y},\Varid{s4}){}\<[E]%
\\
\>[3]{}\mathrel{=}\mbox{\commentbegin ~ definition of \ensuremath{\Varid{fmap}} (twice)  \commentend}{}\<[E]%
\\
\>[3]{}\hsindent{2}{}\<[5]%
\>[5]{}\Varid{fmap}\;\Varid{runStack}\mathbin{\$}\lambda \Varid{s}\to \mathbf{do}{}\<[E]%
\\
\>[5]{}\hsindent{2}{}\<[7]%
\>[7]{}(\anonymous ,\Varid{s}_{1})\leftarrow \Varid{h_{Modify1}}\;(\Varid{h_{ND+f}}\;(\Varid{(\Leftrightarrow)}\;(\Varid{pushStack}\;(\Conid{Right}\;()))))\;\Varid{s}{}\<[E]%
\\
\>[5]{}\hsindent{2}{}\<[7]%
\>[7]{}(\Varid{x},\Varid{s}_{2})\leftarrow \Varid{h_{Modify1}}\;(\Varid{h_{ND+f}}\;(\Varid{(\Leftrightarrow)}\;\Varid{p}))\;\Varid{s}_{1}{}\<[E]%
\\
\>[5]{}\hsindent{2}{}\<[7]%
\>[7]{}(\anonymous ,\Varid{s3})\leftarrow \Varid{h_{Modify1}}\;(\Varid{h_{ND+f}}\;(\Varid{(\Leftrightarrow)}\;\Varid{untrail}))\;\Varid{s}_{2}{}\<[E]%
\\
\>[5]{}\hsindent{2}{}\<[7]%
\>[7]{}(\Varid{y},{}\<[12]%
\>[12]{}\anonymous )\leftarrow \Varid{h_{Modify1}}\;(\Varid{h_{ND+f}}\;(\Varid{(\Leftrightarrow)}\;\Varid{q}))\;\Varid{s3}{}\<[E]%
\\
\>[5]{}\hsindent{2}{}\<[7]%
\>[7]{}\Varid{\eta}\;(\Varid{x}+\!\!+\Varid{y}){}\<[E]%
\\
\>[3]{}\mathrel{=}\mbox{\commentbegin ~ definition of \ensuremath{\Varid{fmap}} and \ensuremath{\Varid{runStack}}  \commentend}{}\<[E]%
\\
\>[3]{}\hsindent{2}{}\<[5]%
\>[5]{}\lambda \Varid{s}\to \Varid{fmap}\;\Varid{fst}\hsdot{\circ }{.}\Varid{flip}\;\Varid{h_{State1}}\;(\Conid{Stack}\;[\mskip1.5mu \mskip1.5mu])\mathbin{\$}\mathbf{do}{}\<[E]%
\\
\>[5]{}\hsindent{2}{}\<[7]%
\>[7]{}(\anonymous ,\Varid{s}_{1})\leftarrow \Varid{h_{Modify1}}\;(\Varid{h_{ND+f}}\;(\Varid{(\Leftrightarrow)}\;(\Varid{pushStack}\;(\Conid{Right}\;()))))\;\Varid{s}{}\<[E]%
\\
\>[5]{}\hsindent{2}{}\<[7]%
\>[7]{}(\Varid{x},\Varid{s}_{2})\leftarrow \Varid{h_{Modify1}}\;(\Varid{h_{ND+f}}\;(\Varid{(\Leftrightarrow)}\;\Varid{p}))\;\Varid{s}_{1}{}\<[E]%
\\
\>[5]{}\hsindent{2}{}\<[7]%
\>[7]{}(\anonymous ,\Varid{s3})\leftarrow \Varid{h_{Modify1}}\;(\Varid{h_{ND+f}}\;(\Varid{(\Leftrightarrow)}\;\Varid{untrail}))\;\Varid{s}_{2}{}\<[E]%
\\
\>[5]{}\hsindent{2}{}\<[7]%
\>[7]{}(\Varid{y},{}\<[12]%
\>[12]{}\anonymous )\leftarrow \Varid{h_{Modify1}}\;(\Varid{h_{ND+f}}\;(\Varid{(\Leftrightarrow)}\;\Varid{q}))\;\Varid{s3}{}\<[E]%
\\
\>[5]{}\hsindent{2}{}\<[7]%
\>[7]{}\Varid{\eta}\;(\Varid{x}+\!\!+\Varid{y}){}\<[E]%
\\
\>[3]{}\mathrel{=}\mbox{\commentbegin ~ definition of \ensuremath{\Varid{h_{State1}}} and \Cref{lemma:dist-hState1}  \commentend}{}\<[E]%
\\
\>[3]{}\hsindent{2}{}\<[5]%
\>[5]{}\lambda \Varid{s}\to \Varid{fmap}\;\Varid{fst}\mathbin{\$}(\lambda \Varid{t}\to \mathbf{do}{}\<[E]%
\\
\>[5]{}\hsindent{2}{}\<[7]%
\>[7]{}((\anonymous ,\Varid{s}_{1}),\Varid{t}_{1})\leftarrow \Varid{h_{State1}}\;(\Varid{h_{Modify1}}\;(\Varid{h_{ND+f}}\;(\Varid{(\Leftrightarrow)}\;(\Varid{pushStack}\;(\Conid{Right}\;()))))\;\Varid{s})\;\Varid{t}{}\<[E]%
\\
\>[5]{}\hsindent{2}{}\<[7]%
\>[7]{}((\Varid{x},\Varid{s}_{2}),\Varid{t}_{2})\leftarrow \Varid{h_{State1}}\;(\Varid{h_{Modify1}}\;(\Varid{h_{ND+f}}\;(\Varid{(\Leftrightarrow)}\;\Varid{p}))\;\Varid{s}_{1})\;\Varid{t}_{1}{}\<[E]%
\\
\>[5]{}\hsindent{2}{}\<[7]%
\>[7]{}((\anonymous ,\Varid{s3}),\Varid{t3})\leftarrow \Varid{h_{State1}}\;(\Varid{h_{Modify1}}\;(\Varid{h_{ND+f}}\;(\Varid{(\Leftrightarrow)}\;\Varid{untrail}))\;\Varid{s}_{2})\;\Varid{t}_{2}{}\<[E]%
\\
\>[5]{}\hsindent{2}{}\<[7]%
\>[7]{}((\Varid{y},{}\<[13]%
\>[13]{}\anonymous ),\Varid{t4})\leftarrow \Varid{h_{State1}}\;(\Varid{h_{Modify1}}\;(\Varid{h_{ND+f}}\;(\Varid{(\Leftrightarrow)}\;\Varid{q}))\;\Varid{s3})\;\Varid{t3}{}\<[E]%
\\
\>[5]{}\hsindent{2}{}\<[7]%
\>[7]{}\Varid{\eta}\;(\Varid{x}+\!\!+\Varid{y},\Varid{t4}))\;(\Conid{Stack}\;[\mskip1.5mu \mskip1.5mu]){}\<[E]%
\\
\>[3]{}\mathrel{=}\mbox{\commentbegin ~ function application  \commentend}{}\<[E]%
\\
\>[3]{}\hsindent{2}{}\<[5]%
\>[5]{}\lambda \Varid{s}\to \Varid{fmap}\;\Varid{fst}\mathbin{\$}\mathbf{do}{}\<[E]%
\\
\>[5]{}\hsindent{2}{}\<[7]%
\>[7]{}((\anonymous ,\Varid{s}_{1}),\Varid{t}_{1})\leftarrow \Varid{h_{State1}}\;(\Varid{h_{Modify1}}\;(\Varid{h_{ND+f}}\;(\Varid{(\Leftrightarrow)}\;(\Varid{pushStack}\;(\Conid{Right}\;()))))\;\Varid{s})\;(\Conid{Stack}\;[\mskip1.5mu \mskip1.5mu]){}\<[E]%
\\
\>[5]{}\hsindent{2}{}\<[7]%
\>[7]{}((\Varid{x},\Varid{s}_{2}),\Varid{t}_{2})\leftarrow \Varid{h_{State1}}\;(\Varid{h_{Modify1}}\;(\Varid{h_{ND+f}}\;(\Varid{(\Leftrightarrow)}\;\Varid{p}))\;\Varid{s}_{1})\;\Varid{t}_{1}{}\<[E]%
\\
\>[5]{}\hsindent{2}{}\<[7]%
\>[7]{}((\anonymous ,\Varid{s3}),\Varid{t3})\leftarrow \Varid{h_{State1}}\;(\Varid{h_{Modify1}}\;(\Varid{h_{ND+f}}\;(\Varid{(\Leftrightarrow)}\;\Varid{untrail}))\;\Varid{s}_{2})\;\Varid{t}_{2}{}\<[E]%
\\
\>[5]{}\hsindent{2}{}\<[7]%
\>[7]{}((\Varid{y},{}\<[13]%
\>[13]{}\anonymous ),\Varid{t4})\leftarrow \Varid{h_{State1}}\;(\Varid{h_{Modify1}}\;(\Varid{h_{ND+f}}\;(\Varid{(\Leftrightarrow)}\;\Varid{q}))\;\Varid{s3})\;\Varid{t3}{}\<[E]%
\\
\>[5]{}\hsindent{2}{}\<[7]%
\>[7]{}\Varid{\eta}\;(\Varid{x}+\!\!+\Varid{y},\Varid{t4}){}\<[E]%
\\
\>[3]{}\mathrel{=}\mbox{\commentbegin ~ definition of \ensuremath{\Varid{fmap}}  \commentend}{}\<[E]%
\\
\>[3]{}\hsindent{2}{}\<[5]%
\>[5]{}\lambda \Varid{s}\to \mathbf{do}{}\<[E]%
\\
\>[5]{}\hsindent{2}{}\<[7]%
\>[7]{}((\anonymous ,\Varid{s}_{1}),\Varid{t}_{1})\leftarrow \Varid{h_{State1}}\;(\Varid{h_{Modify1}}\;(\Varid{h_{ND+f}}\;(\Varid{(\Leftrightarrow)}\;(\Varid{pushStack}\;(\Conid{Right}\;()))))\;\Varid{s})\;(\Conid{Stack}\;[\mskip1.5mu \mskip1.5mu]){}\<[E]%
\\
\>[5]{}\hsindent{2}{}\<[7]%
\>[7]{}((\Varid{x},\Varid{s}_{2}),\Varid{t}_{2})\leftarrow \Varid{h_{State1}}\;(\Varid{h_{Modify1}}\;(\Varid{h_{ND+f}}\;(\Varid{(\Leftrightarrow)}\;\Varid{p}))\;\Varid{s}_{1})\;\Varid{t}_{1}{}\<[E]%
\\
\>[5]{}\hsindent{2}{}\<[7]%
\>[7]{}((\anonymous ,\Varid{s3}),\Varid{t3})\leftarrow \Varid{h_{State1}}\;(\Varid{h_{Modify1}}\;(\Varid{h_{ND+f}}\;(\Varid{(\Leftrightarrow)}\;\Varid{untrail}))\;\Varid{s}_{2})\;\Varid{t}_{2}{}\<[E]%
\\
\>[5]{}\hsindent{2}{}\<[7]%
\>[7]{}((\Varid{y},{}\<[13]%
\>[13]{}\anonymous ),{}\<[18]%
\>[18]{}\anonymous )\leftarrow \Varid{h_{State1}}\;(\Varid{h_{Modify1}}\;(\Varid{h_{ND+f}}\;(\Varid{(\Leftrightarrow)}\;\Varid{q}))\;\Varid{s3})\;\Varid{t3}{}\<[E]%
\\
\>[5]{}\hsindent{2}{}\<[7]%
\>[7]{}\Varid{\eta}\;(\Varid{x}+\!\!+\Varid{y}){}\<[E]%
\\
\>[3]{}\mathrel{=}\mbox{\commentbegin ~ \Cref{lemma:state-stack-restore}  \commentend}{}\<[E]%
\\
\>[3]{}\hsindent{2}{}\<[5]%
\>[5]{}\lambda \Varid{s}\to \mathbf{do}{}\<[E]%
\\
\>[5]{}\hsindent{2}{}\<[7]%
\>[7]{}((\Varid{x},{}\<[13]%
\>[13]{}\anonymous ),{}\<[18]%
\>[18]{}\anonymous )\leftarrow \Varid{h_{State1}}\;(\Varid{h_{Modify1}}\;(\Varid{h_{ND+f}}\;(\Varid{(\Leftrightarrow)}\;\Varid{p}))\;\Varid{s})\;(\Conid{Stack}\;[\mskip1.5mu \mskip1.5mu]){}\<[E]%
\\
\>[5]{}\hsindent{2}{}\<[7]%
\>[7]{}((\Varid{y},{}\<[13]%
\>[13]{}\anonymous ),{}\<[18]%
\>[18]{}\anonymous )\leftarrow \Varid{h_{State1}}\;(\Varid{h_{Modify1}}\;(\Varid{h_{ND+f}}\;(\Varid{(\Leftrightarrow)}\;\Varid{q}))\;\Varid{s})\;(\Conid{Stack}\;[\mskip1.5mu \mskip1.5mu]){}\<[E]%
\\
\>[5]{}\hsindent{2}{}\<[7]%
\>[7]{}\Varid{\eta}\;(\Varid{x}+\!\!+\Varid{y}){}\<[E]%
\\
\>[3]{}\mathrel{=}\mbox{\commentbegin ~ definition of \ensuremath{\Varid{runStack}}  \commentend}{}\<[E]%
\\
\>[3]{}\hsindent{2}{}\<[5]%
\>[5]{}\lambda \Varid{s}\to \mathbf{do}{}\<[E]%
\\
\>[5]{}\hsindent{2}{}\<[7]%
\>[7]{}(\Varid{x},\anonymous )\leftarrow \Varid{runStack}\;(\Varid{h_{Modify1}}\;(\Varid{h_{ND+f}}\;(\Varid{(\Leftrightarrow)}\;\Varid{p}))\;\Varid{s}){}\<[E]%
\\
\>[5]{}\hsindent{2}{}\<[7]%
\>[7]{}(\Varid{y},\anonymous )\leftarrow \Varid{runStack}\;(\Varid{h_{Modify1}}\;(\Varid{h_{ND+f}}\;(\Varid{(\Leftrightarrow)}\;\Varid{q}))\;\Varid{s}){}\<[E]%
\\
\>[5]{}\hsindent{2}{}\<[7]%
\>[7]{}\Varid{\eta}\;(\Varid{x}+\!\!+\Varid{y}){}\<[E]%
\\
\>[3]{}\mathrel{=}\mbox{\commentbegin ~ definition of \ensuremath{\Varid{h_{GlobalM}}}  \commentend}{}\<[E]%
\\
\>[3]{}\hsindent{2}{}\<[5]%
\>[5]{}\lambda \Varid{s}\to \mathbf{do}{}\<[E]%
\\
\>[5]{}\hsindent{2}{}\<[7]%
\>[7]{}\Varid{x}\leftarrow \Varid{runStack}\;(\Varid{h_{GlobalM}}\;\Varid{p}\;\Varid{s}){}\<[E]%
\\
\>[5]{}\hsindent{2}{}\<[7]%
\>[7]{}\Varid{y}\leftarrow \Varid{runStack}\;(\Varid{h_{GlobalM}}\;\Varid{q}\;\Varid{s}){}\<[E]%
\\
\>[5]{}\hsindent{2}{}\<[7]%
\>[7]{}\Varid{\eta}\;(\Varid{x}+\!\!+\Varid{y}){}\<[E]%
\\
\>[3]{}\mathrel{=}\mbox{\commentbegin ~ definition of \ensuremath{\Varid{fmap}}  \commentend}{}\<[E]%
\\
\>[3]{}\hsindent{2}{}\<[5]%
\>[5]{}\lambda \Varid{s}\to \mathbf{do}{}\<[E]%
\\
\>[5]{}\hsindent{2}{}\<[7]%
\>[7]{}\Varid{x}\leftarrow (\Varid{fmap}\;\Varid{runStack}\hsdot{\circ }{.}\Varid{hGobalM})\;\Varid{p}\;\Varid{s}{}\<[E]%
\\
\>[5]{}\hsindent{2}{}\<[7]%
\>[7]{}\Varid{y}\leftarrow (\Varid{fmap}\;\Varid{runStack}\hsdot{\circ }{.}\Varid{hGobalM})\;\Varid{q}\;\Varid{s}{}\<[E]%
\\
\>[5]{}\hsindent{2}{}\<[7]%
\>[7]{}\Varid{\eta}\;(\Varid{x}+\!\!+\Varid{y})){}\<[E]%
\\
\>[3]{}\mathrel{=}\mbox{\commentbegin ~ definition of \ensuremath{\Varid{liftM2}}  \commentend}{}\<[E]%
\\
\>[3]{}\hsindent{2}{}\<[5]%
\>[5]{}\lambda \Varid{s}\to \Varid{liftM2}\;(+\!\!+)\;((\Varid{fmap}\;\Varid{runStack}\hsdot{\circ }{.}\Varid{hGobalM})\;\Varid{p}\;\Varid{s})\;((\Varid{fmap}\;\Varid{runStack}\hsdot{\circ }{.}\Varid{hGobalM})\;\Varid{q}\;\Varid{s}){}\<[E]%
\\
\>[3]{}\mathrel{=}\mbox{\commentbegin ~ definition of \ensuremath{\Varid{alg}_{\Varid{LHS}}^{\Varid{ND}}}   \commentend}{}\<[E]%
\\
\>[3]{}\hsindent{2}{}\<[5]%
\>[5]{}\Varid{alg}_{\Varid{LHS}}^{\Varid{ND}}\;(\Conid{Or}\;((\Varid{fmap}\;\Varid{runStack}\hsdot{\circ }{.}\Varid{hGobalM})\;\Varid{p})\;((\Varid{fmap}\;\Varid{runStack}\hsdot{\circ }{.}\Varid{hGobalM})\;\Varid{q})){}\<[E]%
\\
\>[3]{}\mathrel{=}\mbox{\commentbegin ~ definition of \ensuremath{\Varid{fmap}}  \commentend}{}\<[E]%
\\
\>[3]{}\hsindent{2}{}\<[5]%
\>[5]{}\Varid{alg}_{\Varid{LHS}}^{\Varid{ND}}\;(\Varid{fmap}\;(\Varid{fmap}\;\Varid{runStack}\hsdot{\circ }{.}\Varid{hGobalM})\;(\Conid{Or}\;\Varid{p}\;\Varid{q})){}\<[E]%
\ColumnHook
\end{hscode}\resethooks
\indentend 
For the last subcondition \refe{}, we can define \ensuremath{\Varid{fwd}_{\Varid{LHS}}} as follows.\indentbegin \begin{hscode}\SaveRestoreHook
\column{B}{@{}>{\hspre}l<{\hspost}@{}}%
\column{3}{@{}>{\hspre}l<{\hspost}@{}}%
\column{E}{@{}>{\hspre}l<{\hspost}@{}}%
\>[3]{}\Varid{fwd}_{\Varid{LHS}}\mathbin{::}\Conid{Functor}\;\Varid{f}\Rightarrow \Varid{f}\;(\Varid{s}\to \Conid{Free}\;\Varid{f}\;[\mskip1.5mu \Varid{a}\mskip1.5mu])\to (\Varid{s}\to \Conid{Free}\;\Varid{f}\;[\mskip1.5mu \Varid{a}\mskip1.5mu]){}\<[E]%
\\
\>[3]{}\Varid{fwd}_{\Varid{LHS}}\;\Varid{op}\mathrel{=}\lambda \Varid{s}\to \Conid{Op}\;(\Varid{fmap}\;(\mathbin{\$}\Varid{s})\;\Varid{op}){}\<[E]%
\ColumnHook
\end{hscode}\resethooks
\indentend We prove it by the following calculation for input \ensuremath{\Varid{op}\mathbin{::}\Varid{f}\;(\Conid{Free}\;(\Varid{Modify_{F}}\;\Varid{s}\;\Varid{r}\mathrel{{:}{+}{:}}\Varid{Nondet_{F}}\mathrel{{:}{+}{:}}\Varid{f})\;\Varid{a})}.
In the corresponding case of \Cref{app:modify-fusing-lhs}, we have
calculated that \ensuremath{\Varid{h_{GlobalM}}\;(\Conid{Op}\;(\Conid{Inr}\;(\Conid{Inr}\;\Varid{op})))\mathrel{=}\lambda \Varid{s}\to \Conid{Op}\;(\Varid{fmap}\;(\mathbin{\$}\Varid{s})\;(\Varid{fmap}\;\Varid{h_{GlobalM}}\;\Varid{op}))} \refs{}.
\indentbegin \begin{hscode}\SaveRestoreHook
\column{B}{@{}>{\hspre}l<{\hspost}@{}}%
\column{3}{@{}>{\hspre}l<{\hspost}@{}}%
\column{5}{@{}>{\hspre}l<{\hspost}@{}}%
\column{7}{@{}>{\hspre}l<{\hspost}@{}}%
\column{E}{@{}>{\hspre}l<{\hspost}@{}}%
\>[5]{}\Varid{fmap}\;\Varid{runStack}\mathbin{\$}\Varid{h_{GlobalM}}\;(\Varid{fwd}\;\Varid{op}){}\<[E]%
\\
\>[3]{}\mathrel{=}\mbox{\commentbegin ~ definition of \ensuremath{\Varid{fwd}}  \commentend}{}\<[E]%
\\
\>[3]{}\hsindent{2}{}\<[5]%
\>[5]{}\Varid{fmap}\;\Varid{runStack}\mathbin{\$}\Varid{h_{GlobalM}}\;(\Conid{Op}\hsdot{\circ }{.}\Conid{Inr}\hsdot{\circ }{.}\Conid{Inr}\hsdot{\circ }{.}\Conid{Inr}\mathbin{\$}\Varid{op}){}\<[E]%
\\
\>[3]{}\mathrel{=}\mbox{\commentbegin ~ Equation \refs{}  \commentend}{}\<[E]%
\\
\>[3]{}\hsindent{2}{}\<[5]%
\>[5]{}\Varid{fmap}\;\Varid{runStack}\mathbin{\$}\lambda \Varid{s}\to \Conid{Op}\;(\Varid{fmap}\;(\mathbin{\$}\Varid{s})\;(\Varid{fmap}\;\Varid{h_{GlobalM}}\;(\Conid{Inr}\;\Varid{op}))){}\<[E]%
\\
\>[3]{}\mathrel{=}\mbox{\commentbegin ~ fmap fusion  \commentend}{}\<[E]%
\\
\>[3]{}\hsindent{2}{}\<[5]%
\>[5]{}\Varid{fmap}\;\Varid{runStack}\mathbin{\$}\lambda \Varid{s}\to \Conid{Op}\;(\Varid{fmap}\;((\mathbin{\$}\Varid{s})\hsdot{\circ }{.}\Varid{h_{GlobalM}})\;(\Conid{Inr}\;\Varid{op})){}\<[E]%
\\
\>[3]{}\mathrel{=}\mbox{\commentbegin ~ reformulation  \commentend}{}\<[E]%
\\
\>[3]{}\hsindent{2}{}\<[5]%
\>[5]{}\Varid{fmap}\;\Varid{runStack}\mathbin{\$}\lambda \Varid{s}\to \Conid{Op}\;(\Varid{fmap}\;(\lambda \Varid{x}\to \Varid{h_{GlobalM}}\;\Varid{x}\;\Varid{s})\;(\Conid{Inr}\;\Varid{op})){}\<[E]%
\\
\>[3]{}\mathrel{=}\mbox{\commentbegin ~ definition of \ensuremath{\Varid{fmap}}  \commentend}{}\<[E]%
\\
\>[3]{}\hsindent{2}{}\<[5]%
\>[5]{}\lambda \Varid{s}\to \Varid{runStack}\mathbin{\$}\Conid{Op}\;(\Varid{fmap}\;(\lambda \Varid{x}\to \Varid{h_{GlobalM}}\;\Varid{x}\;\Varid{s})\;(\Conid{Inr}\;\Varid{op})){}\<[E]%
\\
\>[3]{}\mathrel{=}\mbox{\commentbegin ~ definition of \ensuremath{\Varid{runStack}}  \commentend}{}\<[E]%
\\
\>[3]{}\hsindent{2}{}\<[5]%
\>[5]{}\lambda \Varid{s}\to \Varid{fmap}\;\Varid{fst}\hsdot{\circ }{.}\Varid{flip}\;\Varid{h_{State1}}\;(\Conid{Stack}\;[\mskip1.5mu \mskip1.5mu])\mathbin{\$}{}\<[E]%
\\
\>[5]{}\hsindent{2}{}\<[7]%
\>[7]{}\Conid{Op}\;(\Varid{fmap}\;(\lambda \Varid{x}\to \Varid{h_{GlobalM}}\;\Varid{x}\;\Varid{s})\;(\Conid{Inr}\;\Varid{op})){}\<[E]%
\\
\>[3]{}\mathrel{=}\mbox{\commentbegin ~ definition of \ensuremath{\Varid{h_{State1}}}  \commentend}{}\<[E]%
\\
\>[3]{}\hsindent{2}{}\<[5]%
\>[5]{}\lambda \Varid{s}\to \Varid{fmap}\;\Varid{fst}\mathbin{\$}(\lambda \Varid{t}\to {}\<[E]%
\\
\>[5]{}\hsindent{2}{}\<[7]%
\>[7]{}\Conid{Op}\;(\Varid{fmap}\;(\mathbin{\$}\Varid{t})\hsdot{\circ }{.}\Varid{fmap}\;(\Varid{h_{State1}})\mathbin{\$}\Varid{fmap}\;(\lambda \Varid{x}\to \Varid{h_{GlobalM}}\;\Varid{x}\;\Varid{s})\;\Varid{op}))\;(\Conid{Stack}\;[\mskip1.5mu \mskip1.5mu]){}\<[E]%
\\
\>[3]{}\mathrel{=}\mbox{\commentbegin ~ fmap fusion and reformulation  \commentend}{}\<[E]%
\\
\>[3]{}\hsindent{2}{}\<[5]%
\>[5]{}\lambda \Varid{s}\to \Varid{fmap}\;\Varid{fst}\mathbin{\$}(\lambda \Varid{t}\to {}\<[E]%
\\
\>[5]{}\hsindent{2}{}\<[7]%
\>[7]{}\Conid{Op}\;(\Varid{fmap}\;(\lambda \Varid{x}\to \Varid{h_{State1}}\;(\Varid{h_{GlobalM}}\;\Varid{x}\;\Varid{s})\;\Varid{t})\;\Varid{op}))\;(\Conid{Stack}\;[\mskip1.5mu \mskip1.5mu]){}\<[E]%
\\
\>[3]{}\mathrel{=}\mbox{\commentbegin ~ function application  \commentend}{}\<[E]%
\\
\>[3]{}\hsindent{2}{}\<[5]%
\>[5]{}\lambda \Varid{s}\to \Varid{fmap}\;\Varid{fst}\mathbin{\$}{}\<[E]%
\\
\>[5]{}\hsindent{2}{}\<[7]%
\>[7]{}\Conid{Op}\;(\Varid{fmap}\;(\lambda \Varid{x}\to \Varid{h_{State1}}\;(\Varid{h_{GlobalM}}\;\Varid{x}\;\Varid{s})\;(\Conid{Stack}\;[\mskip1.5mu \mskip1.5mu]))\;\Varid{op}){}\<[E]%
\\
\>[3]{}\mathrel{=}\mbox{\commentbegin ~ definition of \ensuremath{\Varid{fmap}}  \commentend}{}\<[E]%
\\
\>[3]{}\hsindent{2}{}\<[5]%
\>[5]{}\lambda \Varid{s}\to \Conid{Op}\;(\Varid{fmap}\;(\lambda \Varid{x}\to \Varid{fmap}\;\Varid{fst}\;(\Varid{h_{State1}}\;(\Varid{h_{GlobalM}}\;\Varid{x}\;\Varid{s})\;(\Conid{Stack}\;[\mskip1.5mu \mskip1.5mu])))\;\Varid{op}){}\<[E]%
\\
\>[3]{}\mathrel{=}\mbox{\commentbegin ~ reformulation  \commentend}{}\<[E]%
\\
\>[3]{}\hsindent{2}{}\<[5]%
\>[5]{}\lambda \Varid{s}\to \Conid{Op}\;(\Varid{fmap}\;(\lambda \Varid{x}\to \Varid{fmap}\;(\Varid{fmap}\;\Varid{fst}\hsdot{\circ }{.}\Varid{flip}\;\Varid{h_{State1}}\;(\Conid{Stack}\;[\mskip1.5mu \mskip1.5mu]))\hsdot{\circ }{.}\Varid{h_{GlobalM}}\mathbin{\$}\Varid{x}\;\Varid{s})\;\Varid{op}){}\<[E]%
\\
\>[3]{}\mathrel{=}\mbox{\commentbegin ~ reformulation  \commentend}{}\<[E]%
\\
\>[3]{}\hsindent{2}{}\<[5]%
\>[5]{}\lambda \Varid{s}\to \Conid{Op}\;(\Varid{fmap}\;(\lambda \Varid{x}\to (\Varid{fmap}\;\Varid{runStack}\hsdot{\circ }{.}\Varid{h_{GlobalM}}\mathbin{\$}\Varid{x})\;\Varid{s})\;\Varid{op}){}\<[E]%
\\
\>[3]{}\mathrel{=}\mbox{\commentbegin ~ fmap fission  \commentend}{}\<[E]%
\\
\>[3]{}\hsindent{2}{}\<[5]%
\>[5]{}\lambda \Varid{s}\to \Conid{Op}\;(\Varid{fmap}\;(\mathbin{\$}\Varid{s})\;(\Varid{fmap}\;(\Varid{fmap}\;\Varid{runStack}\hsdot{\circ }{.}\Varid{h_{GlobalM}})\;\Varid{op})){}\<[E]%
\\
\>[3]{}\mathrel{=}\mbox{\commentbegin ~ definition of \ensuremath{\Varid{fwd}_{\Varid{LHS}}}   \commentend}{}\<[E]%
\\
\>[3]{}\hsindent{2}{}\<[5]%
\>[5]{}\Varid{fwd}_{\Varid{LHS}}\;(\Varid{fmap}\;(\Varid{fmap}\;\Varid{runStack}\hsdot{\circ }{.}\Varid{h_{GlobalM}})\;\Varid{op}){}\<[E]%
\ColumnHook
\end{hscode}\resethooks
\indentend 
\subsection{Equating the Fused Sides}

We observe that the following equations hold trivially.
\begin{eqnarray*}
\ensuremath{\Varid{gen}_{\Varid{LHS}}} & = & \ensuremath{\Varid{gen}_{\Varid{RHS}}} \\
\ensuremath{\Varid{alg}_{\Varid{LHS}}^{\Varid{S}}} & = & \ensuremath{\Varid{alg}_{\Varid{RHS}}^{\Varid{S}}} \\
\ensuremath{\Varid{alg}_{\Varid{LHS}}^{\Varid{ND}}} & = & \ensuremath{\Varid{alg}_{\Varid{RHS}}^{\Varid{ND}}} \\
\ensuremath{\Varid{fwd}_{\Varid{LHS}}} & = & \ensuremath{\Varid{fwd}_{\Varid{RHS}}}
\end{eqnarray*}

Therefore, the main theorem (\Cref{thm:trail-local-global}) holds.

\subsection{Lemmas}

In this section, we prove the lemmas used in \Cref{sec:trail-fusing-lhs}.

The following lemma shows the relationship between the state and trail
stack. Intuitively, the trail stack contains all the deltas (updates)
that have not been restored in the program. Previous elements in the
trail stack do not influence the result and state of programs.
\begin{lemma}[Trail stack tracks state]~ \label{lemma:trail-stack-tracks-state}
For \ensuremath{\Varid{t}\mathbin{::}\Conid{Stack}\;(\Conid{Either}\;\Varid{r}\;())}, \ensuremath{\Varid{s}\mathbin{::}\Varid{s}}, and \ensuremath{\Varid{p}\mathbin{::}\Conid{Free}\;(\Varid{Modify_{F}}\;\Varid{s}\;\Varid{r}\mathrel{{:}{+}{:}}\Varid{Nondet_{F}}\mathrel{{:}{+}{:}}\Varid{f})\;\Varid{a}} which does not use the \ensuremath{\Varid{restore}} operation, we
have
\indentbegin \begin{hscode}\SaveRestoreHook
\column{B}{@{}>{\hspre}l<{\hspost}@{}}%
\column{3}{@{}>{\hspre}c<{\hspost}@{}}%
\column{3E}{@{}l@{}}%
\column{6}{@{}>{\hspre}l<{\hspost}@{}}%
\column{10}{@{}>{\hspre}l<{\hspost}@{}}%
\column{E}{@{}>{\hspre}l<{\hspost}@{}}%
\>[6]{}\Varid{h_{State1}}\;((\Varid{h_{Modify1}}\hsdot{\circ }{.}\Varid{h_{ND+f}}\hsdot{\circ }{.}\Varid{(\Leftrightarrow)})\;(\Varid{local2trail}\;\Varid{p})\;\Varid{s})\;\Varid{t}{}\<[E]%
\\
\>[3]{}\mathrel{=}{}\<[3E]%
\\
\>[3]{}\hsindent{3}{}\<[6]%
\>[6]{}\mathbf{do}\;{}\<[10]%
\>[10]{}((\Varid{x},\anonymous ),\anonymous )\leftarrow \Varid{h_{State1}}\;((\Varid{h_{Modify1}}\hsdot{\circ }{.}\Varid{h_{ND+f}}\hsdot{\circ }{.}\Varid{(\Leftrightarrow)})\;(\Varid{local2trail}\;\Varid{p})\;\Varid{s})\;(\Conid{Stack}\;[\mskip1.5mu \mskip1.5mu]){}\<[E]%
\\
\>[10]{}\Varid{\eta}\;((\Varid{x},\Varid{fplus}\;\Varid{s}\;\Varid{ys}),\Varid{extend}\;\Varid{t}\;\Varid{ys}){}\<[E]%
\ColumnHook
\end{hscode}\resethooks
\indentend %
for some \ensuremath{\Varid{ys}\mathrel{=}[\mskip1.5mu \Conid{Left}\;\Varid{r}_{\Varid{n}},\mathbin{...},\Conid{Left}\;\Varid{r\char95 1}\mskip1.5mu]}. The functions \ensuremath{\Varid{extend}} and
\ensuremath{\Varid{fplus}} are defined as follows:
\indentbegin \begin{hscode}\SaveRestoreHook
\column{B}{@{}>{\hspre}l<{\hspost}@{}}%
\column{E}{@{}>{\hspre}l<{\hspost}@{}}%
\>[B]{}\Varid{extend}\mathbin{::}\Conid{Stack}\;\Varid{s}\to [\mskip1.5mu \Varid{s}\mskip1.5mu]\to \Conid{Stack}\;\Varid{s}{}\<[E]%
\\
\>[B]{}\Varid{extend}\;(\Conid{Stack}\;\Varid{xs})\;\Varid{ys}\mathrel{=}\Conid{Stack}\;(\Varid{ys}+\!\!+\Varid{xs}){}\<[E]%
\\
\>[B]{}\Varid{fplus}\mathbin{::}\Conid{Undo}\;\Varid{s}\;\Varid{r}\Rightarrow \Varid{s}\to [\mskip1.5mu \Conid{Either}\;\Varid{r}\;\Varid{b}\mskip1.5mu]\to \Varid{s}{}\<[E]%
\\
\>[B]{}\Varid{fplus}\;\Varid{s}\;\Varid{ys}\mathrel{=}\Varid{foldr}\;(\lambda (\Conid{Left}\;\Varid{r})\;\Varid{s}\to \Varid{s}\mathbin{\oplus}\Varid{r})\;\Varid{s}\;\Varid{ys}{}\<[E]%
\ColumnHook
\end{hscode}\resethooks
\indentend \end{lemma}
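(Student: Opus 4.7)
}
The plan is to proceed by structural induction on the program \ensuremath{\Varid{p}}, with each inductive step determining the witness list \ensuremath{\Varid{ys}} from the witnesses of the subterms. Before starting, I would unfold the composition \ensuremath{\Varid{h_{State1}}\hsdot{\circ }{.}\Varid{h_{Modify1}}\hsdot{\circ }{.}\Varid{h_{ND+f}}\hsdot{\circ }{.}\Varid{(\Leftrightarrow)}\hsdot{\circ }{.}\Varid{local2trail}} one \ensuremath{\Conid{Op}}-node at a time and systematically invoke \Cref{lemma:dist-hModify1} and \Cref{lemma:dist-hState1} to push the handlers through sequential composition. This is the same bookkeeping pattern used throughout \Cref{app:modify-local-global} and \Cref{sec:trail-fusing-lhs}, and it lets each case reduce to a statement about a single effectful head operation followed by an inductive continuation.

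For the easy cases the choice of \ensuremath{\Varid{ys}} is forced. In the \ensuremath{\Conid{Var}\;\Varid{x}} and \ensuremath{\Conid{Inr}\;(\Conid{Inl}\;\Conid{Fail})} cases I take \ensuremath{\Varid{ys}\mathrel{=}[\mskip1.5mu \mskip1.5mu]}, and both \ensuremath{\Varid{fplus}\;\Varid{s}\;[\mskip1.5mu \mskip1.5mu]\mathrel{=}\Varid{s}} and \ensuremath{\Varid{extend}\;\Varid{t}\;[\mskip1.5mu \mskip1.5mu]\mathrel{=}\Varid{t}} by definition, so the equation is immediate from the definitions of the handlers. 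The \ensuremath{\Conid{Inl}\;(\Conid{MGet}\;\Varid{k})} case reuses the \ensuremath{\Varid{ys}} produced by the induction hypothesis on \ensuremath{\Varid{k}\;\Varid{s}}, because \ensuremath{\Varid{get}} does not touch the trail stack or the delta state. In the \ensuremath{\Conid{Inl}\;(\Conid{MUpdate}\;\Varid{r}\;\Varid{k})} case, \ensuremath{\Varid{local2trail}} emits \ensuremath{\Varid{pushStack}\;(\Conid{Left}\;\Varid{r})>\!\!>\Varid{update}\;\Varid{r}>\!\!>\Varid{k}}; applying the induction hypothesis to \ensuremath{\Varid{k}} with the new state \ensuremath{\Varid{s}\mathbin{\oplus}\Varid{r}} and the new trail \ensuremath{\Varid{extend}\;\Varid{t}\;[\mskip1.5mu \Conid{Left}\;\Varid{r}\mskip1.5mu]} yields some \ensuremath{\Varid{ys'}}, and the overall witness is \ensuremath{\Varid{ys'}+\!\!+[\mskip1.5mu \Conid{Left}\;\Varid{r}\mskip1.5mu]}, so that \ensuremath{\Varid{fplus}\;\Varid{s}\;(\Varid{ys'}+\!\!+[\mskip1.5mu \Conid{Left}\;\Varid{r}\mskip1.5mu])\mathrel{=}\Varid{fplus}\;(\Varid{s}\mathbin{\oplus}\Varid{r})\;\Varid{ys'}}. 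The forwarding case \ensuremath{\Conid{Inr}\;(\Conid{Inr}\;\Varid{y})} is handled by pushing the equation inside \ensuremath{\Varid{fmap}}; since the continuation sits under an \ensuremath{\Conid{Op}}-node we only need the equality to hold pointwise per branch, which follows from the induction hypothesis applied to each subtree.

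The real obstacle is the \ensuremath{\Conid{Or}\;\Varid{p}\;\Varid{q}} case, because \ensuremath{\Varid{local2trail}} translates it to \ensuremath{(\Varid{pushStack}\;(\Conid{Right}\;())>\!\!>\Varid{p})\mathbin{\talloblong}(\Varid{untrail}>\!\!>\Varid{q})}, and after \ensuremath{\Varid{h_{ND+f}}} this becomes a sequential computation of both branches whose state and trail interact nontrivially. To manage it I plan to prove an auxiliary lemma about \ensuremath{\Varid{untrail}}: if the trail stack has the shape \ensuremath{\Conid{Stack}\;(\Varid{ys}'+\!\!+[\mskip1.5mu \Conid{Right}\;()\mskip1.5mu]+\!\!+\Varid{zs})} with \ensuremath{\Varid{ys}'} a list of \ensuremath{\Conid{Left}} entries, then running \ensuremath{\Varid{untrail}} under \ensuremath{\Varid{h_{State1}}\hsdot{\circ }{.}\Varid{h_{Modify1}}\hsdot{\circ }{.}\Varid{h_{ND+f}}\hsdot{\circ }{.}\Varid{(\Leftrightarrow)}} from state \ensuremath{\Varid{fplus}\;\Varid{s}\;\Varid{ys}'} restores the state to \ensuremath{\Varid{s}} and leaves the trail stack as \ensuremath{\Conid{Stack}\;\Varid{zs}}. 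This is established by induction on the \ensuremath{\Conid{Left}}-prefix \ensuremath{\Varid{ys}'} together with the cancellation law \eqref{eq:plus-minus}. With this in hand, the \ensuremath{\Conid{Or}} case runs as follows: after pushing \ensuremath{\Conid{Right}\;()}, the induction hypothesis on \ensuremath{\Varid{p}} with trail \ensuremath{\Varid{extend}\;\Varid{t}\;[\mskip1.5mu \Conid{Right}\;()\mskip1.5mu]} gives a list \ensuremath{\Varid{ys}_{\Varid{p}}} of \ensuremath{\Conid{Left}} entries and a result \ensuremath{\Varid{x}_{\Varid{p}}}; the untrail lemma then collapses both \ensuremath{\Varid{ys}_{\Varid{p}}} and the marker, returning the state to \ensuremath{\Varid{s}} and the trail to \ensuremath{\Varid{t}}; finally the induction hypothesis on \ensuremath{\Varid{q}} gives \ensuremath{\Varid{ys}_{\Varid{q}}} and \ensuremath{\Varid{x}_{\Varid{q}}}. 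The overall witness is \ensuremath{\Varid{ys}\mathrel{=}\Varid{ys}_{\Varid{q}}} and the final result is \ensuremath{\Varid{x}_{\Varid{p}}+\!\!+\Varid{x}_{\Varid{q}}}, matching the shape demanded by the lemma.

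The only other subtlety I anticipate is ensuring that the restriction ``\ensuremath{\Varid{p}} does not use \ensuremath{\Conid{MRestore}}'' is preserved by all subterms across the induction, which is immediate since neither \ensuremath{\Varid{local2trail}} nor any syntactic decomposition introduces new \ensuremath{\Conid{MRestore}} constructors (the \ensuremath{\Varid{restore}\;\Varid{r}} that appears inside \ensuremath{\Varid{untrail}} is eliminated by the auxiliary untrail lemma before we appeal to the induction hypothesis on continuations). The rest of the work is routine algebraic manipulation of the \ensuremath{\Varid{h_{Modify1}}}/\ensuremath{\Varid{h_{State1}}} distribution lemmas, closely mirroring the style of \Cref{lemma:modify-state-restore}.
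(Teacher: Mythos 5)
Your proposal is correct and follows essentially the same route as the paper: structural induction on \ensuremath{\Varid{p}} with the same choice of witnesses in each case, supported by the distribution lemmas and an auxiliary untrail lemma that is exactly the paper's Lemma~\ref{lemma:undoTrail-undos} (you state it with initial state \ensuremath{\Varid{fplus}\;\Varid{s}\;\Varid{ys}'} ending at \ensuremath{\Varid{s}}, the paper states it with initial state \ensuremath{\Varid{s}} ending at \ensuremath{\Varid{fminus}\;\Varid{s}\;\Varid{ys}}; the two are interchangeable via the \textbf{plus-minus} law~(\ref{eq:plus-minus})).
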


Note that an immediate corollary of
\Cref{lemma:trail-stack-tracks-state} is that in addition to replacing
the stack \ensuremath{\Varid{t}} with the empty stack \ensuremath{\Conid{Stack}\;[\mskip1.5mu \mskip1.5mu]}, we can also replace it
with any other stack. The following equation holds.
\indentbegin \begin{hscode}\SaveRestoreHook
\column{B}{@{}>{\hspre}l<{\hspost}@{}}%
\column{3}{@{}>{\hspre}c<{\hspost}@{}}%
\column{3E}{@{}l@{}}%
\column{6}{@{}>{\hspre}l<{\hspost}@{}}%
\column{10}{@{}>{\hspre}l<{\hspost}@{}}%
\column{E}{@{}>{\hspre}l<{\hspost}@{}}%
\>[6]{}\Varid{h_{State1}}\;((\Varid{h_{Modify1}}\hsdot{\circ }{.}\Varid{h_{ND+f}}\hsdot{\circ }{.}\Varid{(\Leftrightarrow)})\;(\Varid{local2trail}\;\Varid{p})\;\Varid{s})\;\Varid{t}{}\<[E]%
\\
\>[3]{}\mathrel{=}{}\<[3E]%
\\
\>[3]{}\hsindent{3}{}\<[6]%
\>[6]{}\mathbf{do}\;{}\<[10]%
\>[10]{}((\Varid{x},\anonymous ),\anonymous )\leftarrow \Varid{h_{State1}}\;((\Varid{h_{Modify1}}\hsdot{\circ }{.}\Varid{h_{ND+f}}\hsdot{\circ }{.}\Varid{(\Leftrightarrow)})\;(\Varid{local2trail}\;\Varid{p})\;\Varid{s})\;\Varid{t'}{}\<[E]%
\\
\>[10]{}\Varid{\eta}\;((\Varid{x},\Varid{fplus}\;\Varid{s}\;\Varid{ys}),\Varid{extend}\;\Varid{t}\;\Varid{ys}){}\<[E]%
\ColumnHook
\end{hscode}\resethooks
\indentend %
We will also use this corollary in the proofs.

\begin{proof}

We proceed by induction on \ensuremath{\Varid{p}}.

\noindent \mbox{\underline{case \ensuremath{\Varid{p}\mathrel{=}\Conid{Var}\;\Varid{y}}}}
\indentbegin \begin{hscode}\SaveRestoreHook
\column{B}{@{}>{\hspre}l<{\hspost}@{}}%
\column{3}{@{}>{\hspre}l<{\hspost}@{}}%
\column{4}{@{}>{\hspre}l<{\hspost}@{}}%
\column{5}{@{}>{\hspre}l<{\hspost}@{}}%
\column{8}{@{}>{\hspre}l<{\hspost}@{}}%
\column{E}{@{}>{\hspre}l<{\hspost}@{}}%
\>[5]{}\Varid{h_{State1}}\;((\Varid{h_{Modify1}}\hsdot{\circ }{.}\Varid{h_{ND+f}}\hsdot{\circ }{.}\Varid{(\Leftrightarrow)})\;(\Varid{local2trail}\;(\Conid{Var}\;\Varid{y}))\;\Varid{s})\;\Varid{t}{}\<[E]%
\\
\>[3]{}\mathrel{=}\mbox{\commentbegin ~ definition of \ensuremath{\Varid{local2trail}}  \commentend}{}\<[E]%
\\
\>[3]{}\hsindent{2}{}\<[5]%
\>[5]{}\Varid{h_{State1}}\;((\Varid{h_{Modify1}}\hsdot{\circ }{.}\Varid{h_{ND+f}}\hsdot{\circ }{.}\Varid{(\Leftrightarrow)})\;(\Conid{Var}\;\Varid{y})\;\Varid{s})\;\Varid{t}{}\<[E]%
\\
\>[3]{}\mathrel{=}\mbox{\commentbegin ~ definition of \ensuremath{\Varid{(\Leftrightarrow)}}  \commentend}{}\<[E]%
\\
\>[3]{}\hsindent{2}{}\<[5]%
\>[5]{}\Varid{h_{State1}}\;((\Varid{h_{Modify1}}\hsdot{\circ }{.}\Varid{h_{ND+f}})\;(\Conid{Var}\;\Varid{y})\;\Varid{s})\;\Varid{t}{}\<[E]%
\\
\>[3]{}\mathrel{=}\mbox{\commentbegin ~ definition of \ensuremath{\Varid{h_{ND+f}}}  \commentend}{}\<[E]%
\\
\>[3]{}\hsindent{2}{}\<[5]%
\>[5]{}\Varid{h_{State1}}\;(\Varid{h_{Modify1}}\;(\Conid{Var}\;[\mskip1.5mu \Varid{y}\mskip1.5mu])\;\Varid{s})\;\Varid{t}{}\<[E]%
\\
\>[3]{}\mathrel{=}\mbox{\commentbegin ~ definition of \ensuremath{\Varid{h_{Modify1}}} and functiona application  \commentend}{}\<[E]%
\\
\>[3]{}\hsindent{2}{}\<[5]%
\>[5]{}\Varid{h_{State1}}\;(\Conid{Var}\;([\mskip1.5mu \Varid{y}\mskip1.5mu],\Varid{s}))\;\Varid{t}{}\<[E]%
\\
\>[3]{}\mathrel{=}\mbox{\commentbegin ~ definition of \ensuremath{\Varid{h_{State1}}} and functiona application  \commentend}{}\<[E]%
\\
\>[3]{}\hsindent{1}{}\<[4]%
\>[4]{}\Conid{Var}\;(([\mskip1.5mu \Varid{y}\mskip1.5mu],\Varid{s}),\Varid{t}){}\<[E]%
\\
\>[3]{}\mathrel{=}\mbox{\commentbegin ~ similar derivation in reverse   \commentend}{}\<[E]%
\\
\>[3]{}\hsindent{1}{}\<[4]%
\>[4]{}\mathbf{do}\;{}\<[8]%
\>[8]{}((\Varid{x},\anonymous ),\anonymous )\leftarrow \Varid{h_{State1}}\;((\Varid{h_{Modify1}}\hsdot{\circ }{.}\Varid{h_{ND+f}}\hsdot{\circ }{.}\Varid{(\Leftrightarrow)})\;(\Varid{local2trail}\;(\Conid{Var}\;\Varid{y}))\;\Varid{s})\;(\Conid{Stack}\;[\mskip1.5mu \mskip1.5mu]){}\<[E]%
\\
\>[8]{}\Conid{Var}\;((\Varid{x},\Varid{s}),\Varid{t}){}\<[E]%
\ColumnHook
\end{hscode}\resethooks
\indentend 
\noindent \mbox{\underline{case \ensuremath{\Varid{t}\mathrel{=}\Conid{Op}\hsdot{\circ }{.}\Conid{Inr}\hsdot{\circ }{.}\Conid{Inl}\mathbin{\$}\Conid{Fail}}}}
\indentbegin \begin{hscode}\SaveRestoreHook
\column{B}{@{}>{\hspre}l<{\hspost}@{}}%
\column{3}{@{}>{\hspre}l<{\hspost}@{}}%
\column{4}{@{}>{\hspre}l<{\hspost}@{}}%
\column{5}{@{}>{\hspre}l<{\hspost}@{}}%
\column{8}{@{}>{\hspre}l<{\hspost}@{}}%
\column{25}{@{}>{\hspre}l<{\hspost}@{}}%
\column{E}{@{}>{\hspre}l<{\hspost}@{}}%
\>[5]{}\Varid{h_{State1}}\;((\Varid{h_{Modify1}}\hsdot{\circ }{.}\Varid{h_{ND+f}}\hsdot{\circ }{.}\Varid{(\Leftrightarrow)})\;(\Varid{local2trail}\;(\Conid{Op}\hsdot{\circ }{.}\Conid{Inr}\hsdot{\circ }{.}\Conid{Inl}\mathbin{\$}\Conid{Fail}))\;\Varid{s})\;\Varid{t}{}\<[E]%
\\
\>[3]{}\mathrel{=}\mbox{\commentbegin ~ definition of \ensuremath{\Varid{local2trail}}  \commentend}{}\<[E]%
\\
\>[3]{}\hsindent{2}{}\<[5]%
\>[5]{}\Varid{h_{State1}}\;((\Varid{h_{Modify1}}\hsdot{\circ }{.}\Varid{h_{ND+f}}\hsdot{\circ }{.}\Varid{(\Leftrightarrow)})\;(\Conid{Op}\hsdot{\circ }{.}\Conid{Inr}\hsdot{\circ }{.}\Conid{Inl}\mathbin{\$}\Conid{Fail})\;\Varid{s})\;\Varid{t}{}\<[E]%
\\
\>[3]{}\mathrel{=}\mbox{\commentbegin ~ definition of \ensuremath{\Varid{(\Leftrightarrow)}} and \ensuremath{\Varid{h_{ND+f}}}  \commentend}{}\<[E]%
\\
\>[3]{}\hsindent{2}{}\<[5]%
\>[5]{}\Varid{h_{State1}}\;(\Varid{h_{Modify1}}\;(\Conid{Var}\;[\mskip1.5mu \mskip1.5mu])\;\Varid{s})\;\Varid{t}{}\<[E]%
\\
\>[3]{}\mathrel{=}\mbox{\commentbegin ~ definition of \ensuremath{\Varid{h_{Modify1}}} and function application  \commentend}{}\<[E]%
\\
\>[3]{}\hsindent{2}{}\<[5]%
\>[5]{}\Varid{h_{State1}}\;(\Conid{Var}\;([\mskip1.5mu \mskip1.5mu],\Varid{s}))\;\Varid{t}{}\<[E]%
\\
\>[3]{}\mathrel{=}\mbox{\commentbegin ~ definition of \ensuremath{\Varid{h_{State1}}} and functiona application  \commentend}{}\<[E]%
\\
\>[3]{}\hsindent{1}{}\<[4]%
\>[4]{}\Conid{Var}\;(([\mskip1.5mu \mskip1.5mu],\Varid{s}),\Varid{t}){}\<[E]%
\\
\>[3]{}\mathrel{=}\mbox{\commentbegin ~ similar derivation in reverse   \commentend}{}\<[E]%
\\
\>[3]{}\hsindent{1}{}\<[4]%
\>[4]{}\mathbf{do}\;{}\<[8]%
\>[8]{}((\Varid{x},\anonymous ),\anonymous )\leftarrow \Varid{h_{State1}}\;((\Varid{h_{Modify1}}\hsdot{\circ }{.}\Varid{h_{ND+f}}\hsdot{\circ }{.}\Varid{(\Leftrightarrow)}){}\<[E]%
\\
\>[8]{}\hsindent{17}{}\<[25]%
\>[25]{}(\Varid{local2trail}\;(\Conid{Op}\hsdot{\circ }{.}\Conid{Inr}\hsdot{\circ }{.}\Conid{Inl}\mathbin{\$}\Conid{Fail}))\;\Varid{s})\;(\Conid{Stack}\;[\mskip1.5mu \mskip1.5mu]){}\<[E]%
\\
\>[8]{}\Conid{Var}\;((\Varid{x},\Varid{s}),\Varid{t}){}\<[E]%
\ColumnHook
\end{hscode}\resethooks
\indentend 
\noindent \mbox{\underline{case \ensuremath{\Varid{t}\mathrel{=}\Conid{Op}\;(\Conid{Inl}\;(\Conid{MGet}\;\Varid{k}))}}}
\indentbegin \begin{hscode}\SaveRestoreHook
\column{B}{@{}>{\hspre}l<{\hspost}@{}}%
\column{3}{@{}>{\hspre}l<{\hspost}@{}}%
\column{5}{@{}>{\hspre}l<{\hspost}@{}}%
\column{9}{@{}>{\hspre}l<{\hspost}@{}}%
\column{26}{@{}>{\hspre}l<{\hspost}@{}}%
\column{E}{@{}>{\hspre}l<{\hspost}@{}}%
\>[5]{}\Varid{h_{State1}}\;((\Varid{h_{Modify1}}\hsdot{\circ }{.}\Varid{h_{ND+f}}\hsdot{\circ }{.}\Varid{(\Leftrightarrow)})\;(\Varid{local2trail}\;(\Conid{Op}\;(\Conid{Inl}\;(\Conid{MGet}\;\Varid{k}))))\;\Varid{s})\;\Varid{t}{}\<[E]%
\\
\>[3]{}\mathrel{=}\mbox{\commentbegin ~ definition of \ensuremath{\Varid{local2trail}}  \commentend}{}\<[E]%
\\
\>[3]{}\hsindent{2}{}\<[5]%
\>[5]{}\Varid{h_{State1}}\;((\Varid{h_{Modify1}}\hsdot{\circ }{.}\Varid{h_{ND+f}}\hsdot{\circ }{.}\Varid{(\Leftrightarrow)})\;(\Conid{Op}\;(\Conid{Inl}\;(\Conid{MGet}\;(\Varid{local2trail}\hsdot{\circ }{.}\Varid{k}))))\;\Varid{s})\;\Varid{t}{}\<[E]%
\\
\>[3]{}\mathrel{=}\mbox{\commentbegin ~ definition of \ensuremath{\Varid{(\Leftrightarrow)}} and \ensuremath{\Varid{h_{ND+f}}}  \commentend}{}\<[E]%
\\
\>[3]{}\hsindent{2}{}\<[5]%
\>[5]{}\Varid{h_{State1}}\;(\Varid{h_{Modify1}}\;(\Conid{Op}\;(\Conid{Inl}\;(\Conid{MGet}\;(\Varid{h_{ND+f}}\hsdot{\circ }{.}\Varid{(\Leftrightarrow)}\hsdot{\circ }{.}\Varid{local2trail}\hsdot{\circ }{.}\Varid{k}))))\;\Varid{s})\;\Varid{t}{}\<[E]%
\\
\>[3]{}\mathrel{=}\mbox{\commentbegin ~ definition of \ensuremath{\Varid{h_{Modify1}}} and function application  \commentend}{}\<[E]%
\\
\>[3]{}\hsindent{2}{}\<[5]%
\>[5]{}\Varid{h_{State1}}\;((\Varid{h_{Modify1}}\hsdot{\circ }{.}\Varid{h_{ND+f}}\hsdot{\circ }{.}\Varid{(\Leftrightarrow)}\hsdot{\circ }{.}\Varid{local2trail}\hsdot{\circ }{.}\Varid{k})\;\Varid{s}\;\Varid{s})\;\Varid{t}{}\<[E]%
\\
\>[3]{}\mathrel{=}\mbox{\commentbegin ~ reformulation  \commentend}{}\<[E]%
\\
\>[3]{}\hsindent{2}{}\<[5]%
\>[5]{}\Varid{h_{State1}}\;((\Varid{h_{Modify1}}\hsdot{\circ }{.}\Varid{h_{ND+f}}\hsdot{\circ }{.}\Varid{(\Leftrightarrow)})\;(\Varid{local2trail}\;(\Varid{k}\;\Varid{s}))\;\Varid{s})\;\Varid{t}{}\<[E]%
\\
\>[3]{}\mathrel{=}\mbox{\commentbegin ~ induction hypothesis on \ensuremath{\Varid{k}\;\Varid{s}}  \commentend}{}\<[E]%
\\
\>[3]{}\hsindent{2}{}\<[5]%
\>[5]{}\mathbf{do}\;{}\<[9]%
\>[9]{}((\Varid{x},\anonymous ),\anonymous )\leftarrow \Varid{h_{State1}}\;((\Varid{h_{Modify1}}\hsdot{\circ }{.}\Varid{h_{ND+f}}\hsdot{\circ }{.}\Varid{(\Leftrightarrow)})\;(\Varid{local2trail}\;(\Varid{k}\;\Varid{s}))\;\Varid{s})\;(\Conid{Stack}\;[\mskip1.5mu \mskip1.5mu]){}\<[E]%
\\
\>[9]{}\Varid{\eta}\;((\Varid{x},\Varid{fplus}\;\Varid{s}\;\Varid{ys}),\Varid{extend}\;\Varid{t}\;\Varid{ys}){}\<[E]%
\\
\>[3]{}\mathrel{=}\mbox{\commentbegin ~ similar derivation in reverse  \commentend}{}\<[E]%
\\
\>[3]{}\hsindent{2}{}\<[5]%
\>[5]{}\mathbf{do}\;{}\<[9]%
\>[9]{}((\Varid{x},\anonymous ),\anonymous )\leftarrow \Varid{h_{State1}}\;((\Varid{h_{Modify1}}\hsdot{\circ }{.}\Varid{h_{ND+f}}\hsdot{\circ }{.}\Varid{(\Leftrightarrow)}){}\<[E]%
\\
\>[9]{}\hsindent{17}{}\<[26]%
\>[26]{}(\Varid{local2trail}\;(\Conid{Op}\;(\Conid{Inl}\;(\Conid{MGet}\;\Varid{k}))))\;\Varid{s})\;(\Conid{Stack}\;[\mskip1.5mu \mskip1.5mu]){}\<[E]%
\\
\>[9]{}\Varid{\eta}\;((\Varid{x},\Varid{fplus}\;\Varid{s}\;\Varid{ys}),\Varid{extend}\;\Varid{t}\;\Varid{ys}){}\<[E]%
\ColumnHook
\end{hscode}\resethooks
\indentend 
\noindent \mbox{\underline{case \ensuremath{\Varid{t}\mathrel{=}\Conid{Op}\;(\Conid{Inl}\;(\Conid{MUpdate}\;\Varid{r}\;\Varid{k}))}}}
\indentbegin \begin{hscode}\SaveRestoreHook
\column{B}{@{}>{\hspre}l<{\hspost}@{}}%
\column{3}{@{}>{\hspre}l<{\hspost}@{}}%
\column{5}{@{}>{\hspre}l<{\hspost}@{}}%
\column{7}{@{}>{\hspre}l<{\hspost}@{}}%
\column{9}{@{}>{\hspre}l<{\hspost}@{}}%
\column{11}{@{}>{\hspre}l<{\hspost}@{}}%
\column{24}{@{}>{\hspre}l<{\hspost}@{}}%
\column{E}{@{}>{\hspre}l<{\hspost}@{}}%
\>[5]{}\Varid{h_{State1}}\;((\Varid{h_{Modify1}}\hsdot{\circ }{.}\Varid{h_{ND+f}}\hsdot{\circ }{.}\Varid{(\Leftrightarrow)})\;(\Varid{local2trail}\;(\Conid{Op}\;(\Conid{Inl}\;(\Conid{MUpdate}\;\Varid{r}\;\Varid{k}))))\;\Varid{s})\;\Varid{t}{}\<[E]%
\\
\>[3]{}\mathrel{=}\mbox{\commentbegin ~ definition of \ensuremath{\Varid{local2trail}}  \commentend}{}\<[E]%
\\
\>[3]{}\hsindent{2}{}\<[5]%
\>[5]{}\Varid{h_{State1}}\;((\Varid{h_{Modify1}}\hsdot{\circ }{.}\Varid{h_{ND+f}}\hsdot{\circ }{.}\Varid{(\Leftrightarrow)})\;(\mathbf{do}{}\<[E]%
\\
\>[5]{}\hsindent{2}{}\<[7]%
\>[7]{}\Varid{pushStack}\;(\Conid{Left}\;\Varid{r}){}\<[E]%
\\
\>[5]{}\hsindent{2}{}\<[7]%
\>[7]{}\Varid{update}\;\Varid{r}{}\<[E]%
\\
\>[5]{}\hsindent{2}{}\<[7]%
\>[7]{}\Varid{local2trail}\;\Varid{k}{}\<[E]%
\\
\>[3]{}\hsindent{2}{}\<[5]%
\>[5]{})\Varid{s})\;\Varid{t}{}\<[E]%
\\
\>[3]{}\mathrel{=}\mbox{\commentbegin ~ definition of \ensuremath{\Varid{(\Leftrightarrow)}} and \ensuremath{\Varid{h_{ND+f}}}  \commentend}{}\<[E]%
\\
\>[3]{}\hsindent{2}{}\<[5]%
\>[5]{}\Varid{h_{State1}}\;(\Varid{h_{Modify1}}\;(\mathbf{do}{}\<[E]%
\\
\>[5]{}\hsindent{2}{}\<[7]%
\>[7]{}\Varid{h_{ND+f}}\hsdot{\circ }{.}\Varid{(\Leftrightarrow)}\mathbin{\$}\Varid{pushStack}\;(\Conid{Left}\;\Varid{r}){}\<[E]%
\\
\>[5]{}\hsindent{2}{}\<[7]%
\>[7]{}\Varid{h_{ND+f}}\hsdot{\circ }{.}\Varid{(\Leftrightarrow)}\mathbin{\$}\Varid{update}\;\Varid{r}{}\<[E]%
\\
\>[5]{}\hsindent{2}{}\<[7]%
\>[7]{}\Varid{h_{ND+f}}\hsdot{\circ }{.}\Varid{(\Leftrightarrow)}\hsdot{\circ }{.}\Varid{local2trail}\mathbin{\$}\Varid{k}{}\<[E]%
\\
\>[3]{}\hsindent{2}{}\<[5]%
\>[5]{})\Varid{s})\;\Varid{t}{}\<[E]%
\\
\>[3]{}\mathrel{=}\mbox{\commentbegin ~ definition of \ensuremath{\Varid{h_{Modify1}}}, \Cref{lemma:dist-hModify1} and function application  \commentend}{}\<[E]%
\\
\>[3]{}\hsindent{2}{}\<[5]%
\>[5]{}\Varid{h_{State1}}\;(\mathbf{do}{}\<[E]%
\\
\>[5]{}\hsindent{2}{}\<[7]%
\>[7]{}(\anonymous ,\Varid{s}_{1})\leftarrow (\Varid{h_{Modify1}}\hsdot{\circ }{.}\Varid{h_{ND+f}}\hsdot{\circ }{.}\Varid{(\Leftrightarrow)}\mathbin{\$}\Varid{pushStack}\;(\Conid{Left}\;\Varid{r}))\;\Varid{s}{}\<[E]%
\\
\>[5]{}\hsindent{2}{}\<[7]%
\>[7]{}(\anonymous ,\Varid{s}_{2})\leftarrow (\Varid{h_{Modify1}}\hsdot{\circ }{.}\Varid{h_{ND+f}}\hsdot{\circ }{.}\Varid{(\Leftrightarrow)}\mathbin{\$}\Varid{update}\;\Varid{r})\;\Varid{s}_{1}{}\<[E]%
\\
\>[5]{}\hsindent{2}{}\<[7]%
\>[7]{}(\Varid{h_{Modify1}}\hsdot{\circ }{.}\Varid{h_{ND+f}}\hsdot{\circ }{.}\Varid{(\Leftrightarrow)}\hsdot{\circ }{.}\Varid{local2trail}\mathbin{\$}\Varid{k})\;\Varid{s}_{2}{}\<[E]%
\\
\>[3]{}\hsindent{2}{}\<[5]%
\>[5]{})\;\Varid{t}{}\<[E]%
\\
\>[3]{}\mathrel{=}\mbox{\commentbegin ~ definition of \ensuremath{\Varid{h_{Modify1}}} and \ensuremath{\Varid{update}}  \commentend}{}\<[E]%
\\
\>[3]{}\hsindent{2}{}\<[5]%
\>[5]{}\Varid{h_{State1}}\;(\mathbf{do}{}\<[E]%
\\
\>[5]{}\hsindent{2}{}\<[7]%
\>[7]{}(\anonymous ,\Varid{s}_{1})\leftarrow (\Varid{h_{Modify1}}\hsdot{\circ }{.}\Varid{h_{ND+f}}\hsdot{\circ }{.}\Varid{(\Leftrightarrow)}\mathbin{\$}\Varid{pushStack}\;(\Conid{Left}\;\Varid{r}))\;\Varid{s}{}\<[E]%
\\
\>[5]{}\hsindent{2}{}\<[7]%
\>[7]{}(\anonymous ,\Varid{s}_{2})\leftarrow \Varid{\eta}\;([\mskip1.5mu ()\mskip1.5mu],\Varid{s}_{1}\mathbin{\oplus}\Varid{r}){}\<[E]%
\\
\>[5]{}\hsindent{2}{}\<[7]%
\>[7]{}(\Varid{h_{Modify1}}\hsdot{\circ }{.}\Varid{h_{ND+f}}\hsdot{\circ }{.}\Varid{(\Leftrightarrow)}\hsdot{\circ }{.}\Varid{local2trail}\mathbin{\$}\Varid{k})\;\Varid{s}_{2}{}\<[E]%
\\
\>[3]{}\hsindent{2}{}\<[5]%
\>[5]{})\;\Varid{t}{}\<[E]%
\\
\>[3]{}\mathrel{=}\mbox{\commentbegin ~ monad law  \commentend}{}\<[E]%
\\
\>[3]{}\hsindent{2}{}\<[5]%
\>[5]{}\Varid{h_{State1}}\;(\mathbf{do}{}\<[E]%
\\
\>[5]{}\hsindent{2}{}\<[7]%
\>[7]{}(\anonymous ,\Varid{s}_{1})\leftarrow (\Varid{h_{Modify1}}\hsdot{\circ }{.}\Varid{h_{ND+f}}\hsdot{\circ }{.}\Varid{(\Leftrightarrow)}\mathbin{\$}\Varid{pushStack}\;(\Conid{Left}\;\Varid{r}))\;\Varid{s}{}\<[E]%
\\
\>[5]{}\hsindent{2}{}\<[7]%
\>[7]{}(\Varid{h_{Modify1}}\hsdot{\circ }{.}\Varid{h_{ND+f}}\hsdot{\circ }{.}\Varid{(\Leftrightarrow)}\hsdot{\circ }{.}\Varid{local2trail}\mathbin{\$}\Varid{k})\;(\Varid{s}_{1}\mathbin{\oplus}\Varid{r}){}\<[E]%
\\
\>[3]{}\hsindent{2}{}\<[5]%
\>[5]{})\;\Varid{t}{}\<[E]%
\\
\>[3]{}\mathrel{=}\mbox{\commentbegin ~ definition of \ensuremath{\Varid{h_{State1}}}, \Cref{lemma:dist-hState1}, and function application  \commentend}{}\<[E]%
\\
\>[3]{}\hsindent{2}{}\<[5]%
\>[5]{}\mathbf{do}\;{}\<[9]%
\>[9]{}((\anonymous ,\Varid{s}_{1}),\Varid{t}_{1})\leftarrow \Varid{h_{State1}}\;((\Varid{h_{Modify1}}\hsdot{\circ }{.}\Varid{h_{ND+f}}\hsdot{\circ }{.}\Varid{(\Leftrightarrow)}\mathbin{\$}\Varid{pushStack}\;(\Conid{Left}\;\Varid{r}))\;\Varid{s})\;\Varid{t}{}\<[E]%
\\
\>[9]{}\Varid{h_{State1}}\;((\Varid{h_{Modify1}}\hsdot{\circ }{.}\Varid{h_{ND+f}}\hsdot{\circ }{.}\Varid{(\Leftrightarrow)}\hsdot{\circ }{.}\Varid{local2trail}\mathbin{\$}\Varid{k})\;(\Varid{s}_{1}\mathbin{\oplus}\Varid{r}))\;\Varid{t}_{1}{}\<[E]%
\\
\>[3]{}\mathrel{=}\mbox{\commentbegin ~ definition of \ensuremath{\Varid{pushStack}}  \commentend}{}\<[E]%
\\
\>[3]{}\hsindent{2}{}\<[5]%
\>[5]{}\mathbf{do}\;{}\<[9]%
\>[9]{}((\anonymous ,\Varid{s}_{1}),\Varid{t}_{1})\leftarrow \Varid{h_{State1}}\;((\Varid{h_{Modify1}}\hsdot{\circ }{.}\Varid{h_{ND+f}}\hsdot{\circ }{.}\Varid{(\Leftrightarrow)}\mathbin{\$}\mathbf{do}{}\<[E]%
\\
\>[9]{}\hsindent{2}{}\<[11]%
\>[11]{}\Conid{Stack}\;\Varid{xs}\leftarrow \Varid{get}{}\<[E]%
\\
\>[9]{}\hsindent{2}{}\<[11]%
\>[11]{}\Varid{put}\;(\Conid{Stack}\;(\Conid{Left}\;\Varid{r}\mathbin{:}\Varid{xs})))\;\Varid{s})\;\Varid{t}{}\<[E]%
\\
\>[9]{}\Varid{h_{State1}}\;((\Varid{h_{Modify1}}\hsdot{\circ }{.}\Varid{h_{ND+f}}\hsdot{\circ }{.}\Varid{(\Leftrightarrow)}\hsdot{\circ }{.}\Varid{local2trail}\mathbin{\$}\Varid{k})\;(\Varid{s}_{1}\mathbin{\oplus}\Varid{r}))\;\Varid{t}_{1}{}\<[E]%
\\
\>[3]{}\mathrel{=}\mbox{\commentbegin ~ definition of \ensuremath{\Varid{h_{State1}}}, \ensuremath{\Varid{h_{Modify1}}}, \ensuremath{\Varid{h_{ND+f}}}, \ensuremath{\Varid{(\Leftrightarrow)}}, \ensuremath{\Varid{get}}, and \ensuremath{\Varid{put}}; let \ensuremath{\Varid{t}\mathrel{=}\Conid{Stack}\;\Varid{xs}}  \commentend}{}\<[E]%
\\
\>[3]{}\hsindent{2}{}\<[5]%
\>[5]{}\mathbf{do}\;{}\<[9]%
\>[9]{}((\anonymous ,\Varid{s}_{1}),\Varid{t}_{1})\leftarrow \Varid{\eta}\;(([\mskip1.5mu ()\mskip1.5mu],\Varid{s}),\Conid{Stack}\;(\Conid{Left}\;\Varid{r}\mathbin{:}\Varid{xs})){}\<[E]%
\\
\>[9]{}\Varid{h_{State1}}\;((\Varid{h_{Modify1}}\hsdot{\circ }{.}\Varid{h_{ND+f}}\hsdot{\circ }{.}\Varid{(\Leftrightarrow)})\;(\Varid{local2trail}\;\Varid{k})\;(\Varid{s}_{1}\mathbin{\oplus}\Varid{r}))\;\Varid{t}_{1}{}\<[E]%
\\
\>[3]{}\mathrel{=}\mbox{\commentbegin ~ monad law  \commentend}{}\<[E]%
\\
\>[3]{}\hsindent{2}{}\<[5]%
\>[5]{}\Varid{h_{State1}}\;((\Varid{h_{Modify1}}\hsdot{\circ }{.}\Varid{h_{ND+f}}\hsdot{\circ }{.}\Varid{(\Leftrightarrow)})\;(\Varid{local2trail}\;\Varid{k})\;(\Varid{s}\mathbin{\oplus}\Varid{r}))\;(\Conid{Stack}\;(\Conid{Left}\;\Varid{r}\mathbin{:}\Varid{xs})){}\<[E]%
\\
\>[3]{}\mathrel{=}\mbox{\commentbegin ~ by induction hypothesis on \ensuremath{\Varid{k}}, for some \ensuremath{\Varid{ys}} the equation holds  \commentend}{}\<[E]%
\\
\>[3]{}\hsindent{2}{}\<[5]%
\>[5]{}\mathbf{do}\;{}\<[9]%
\>[9]{}((\Varid{x},\anonymous ),\anonymous )\leftarrow \Varid{h_{State1}}\;((\Varid{h_{Modify1}}\hsdot{\circ }{.}\Varid{h_{ND+f}}\hsdot{\circ }{.}\Varid{(\Leftrightarrow)}){}\<[E]%
\\
\>[9]{}\hsindent{15}{}\<[24]%
\>[24]{}(\Varid{local2trail}\;\Varid{k})\;(\Varid{s}\mathbin{\oplus}\Varid{r}))\;(\Conid{Stack}\;[\mskip1.5mu \Conid{Left}\;\Varid{r}\mskip1.5mu]){}\<[E]%
\\
\>[9]{}\Varid{\eta}\;((\Varid{x},\Varid{fplus}\;(\Varid{s}\mathbin{\oplus}\Varid{r})\;\Varid{ys}),\Varid{extend}\;(\Conid{Stack}\;(\Conid{Left}\;\Varid{r}\mathbin{:}\Varid{xs}))\;\Varid{ys}){}\<[E]%
\\
\>[3]{}\mathrel{=}\mbox{\commentbegin ~ definition of \ensuremath{\Varid{fplus}} and \ensuremath{\Varid{extend}}  \commentend}{}\<[E]%
\\
\>[3]{}\hsindent{2}{}\<[5]%
\>[5]{}\mathbf{do}\;{}\<[9]%
\>[9]{}((\Varid{x},\anonymous ),\anonymous )\leftarrow \Varid{h_{State1}}\;((\Varid{h_{Modify1}}\hsdot{\circ }{.}\Varid{h_{ND+f}}\hsdot{\circ }{.}\Varid{(\Leftrightarrow)}){}\<[E]%
\\
\>[9]{}\hsindent{15}{}\<[24]%
\>[24]{}(\Varid{local2trail}\;\Varid{k})\;(\Varid{s}\mathbin{\oplus}\Varid{r}))\;(\Conid{Stack}\;[\mskip1.5mu \Conid{Left}\;\Varid{r}\mskip1.5mu]){}\<[E]%
\\
\>[9]{}\Varid{\eta}\;((\Varid{x},\Varid{fplus}\;\Varid{s}\;(\Varid{ys}+\!\!+[\mskip1.5mu \Conid{Left}\;\Varid{r}\mskip1.5mu])),\Varid{extend}\;(\Conid{Stack}\;\Varid{xs})\;(\Varid{ys}+\!\!+[\mskip1.5mu \Conid{Left}\;\Varid{r}\mskip1.5mu])){}\<[E]%
\\
\>[3]{}\mathrel{=}\mbox{\commentbegin ~ let \ensuremath{\Varid{ys'}\mathrel{=}\Varid{ys}+\!\!+[\mskip1.5mu \Conid{Left}\;\Varid{r}\mskip1.5mu]}; Equation \ensuremath{\Varid{t}\mathrel{=}\Conid{Stack}\;\Varid{xs}}  \commentend}{}\<[E]%
\\
\>[3]{}\hsindent{2}{}\<[5]%
\>[5]{}\mathbf{do}\;{}\<[9]%
\>[9]{}((\Varid{x},\anonymous ),\anonymous )\leftarrow \Varid{h_{State1}}\;((\Varid{h_{Modify1}}\hsdot{\circ }{.}\Varid{h_{ND+f}}\hsdot{\circ }{.}\Varid{(\Leftrightarrow)}){}\<[E]%
\\
\>[9]{}\hsindent{15}{}\<[24]%
\>[24]{}(\Varid{local2trail}\;\Varid{k})\;(\Varid{s}\mathbin{\oplus}\Varid{r}))\;(\Conid{Stack}\;[\mskip1.5mu \Conid{Left}\;\Varid{r}\mskip1.5mu]){}\<[E]%
\\
\>[9]{}\Varid{\eta}\;((\Varid{x},\Varid{fplus}\;\Varid{s}\;\Varid{ys'}),\Varid{extend}\;\Varid{t}\;\Varid{ys'}){}\<[E]%
\\
\>[3]{}\mathrel{=}\mbox{\commentbegin ~ similar derivation in reverse  \commentend}{}\<[E]%
\\
\>[3]{}\hsindent{2}{}\<[5]%
\>[5]{}\mathbf{do}\;{}\<[9]%
\>[9]{}((\Varid{x},\anonymous ),\anonymous )\leftarrow \Varid{h_{State1}}\;((\Varid{h_{Modify1}}\hsdot{\circ }{.}\Varid{h_{ND+f}}\hsdot{\circ }{.}\Varid{(\Leftrightarrow)}){}\<[E]%
\\
\>[9]{}\hsindent{15}{}\<[24]%
\>[24]{}(\Varid{local2trail}\;(\Conid{Op}\;(\Conid{Inl}\;(\Conid{MUpdate}\;\Varid{r}\;\Varid{k}))))\;\Varid{s})\;(\Conid{Stack}\;[\mskip1.5mu \mskip1.5mu]){}\<[E]%
\\
\>[9]{}\Varid{\eta}\;((\Varid{x},\Varid{fplus}\;\Varid{s}\;\Varid{ys'}),\Varid{extend}\;\Varid{t}\;\Varid{ys'}){}\<[E]%
\ColumnHook
\end{hscode}\resethooks
\indentend 
\noindent \mbox{\underline{case \ensuremath{\Varid{t}\mathrel{=}\Conid{Op}\hsdot{\circ }{.}\Conid{Inr}\hsdot{\circ }{.}\Conid{Inl}\mathbin{\$}\Conid{Or}\;\Varid{p}\;\Varid{q}}}}
\indentbegin \begin{hscode}\SaveRestoreHook
\column{B}{@{}>{\hspre}l<{\hspost}@{}}%
\column{3}{@{}>{\hspre}l<{\hspost}@{}}%
\column{5}{@{}>{\hspre}l<{\hspost}@{}}%
\column{7}{@{}>{\hspre}l<{\hspost}@{}}%
\column{9}{@{}>{\hspre}l<{\hspost}@{}}%
\column{11}{@{}>{\hspre}l<{\hspost}@{}}%
\column{15}{@{}>{\hspre}l<{\hspost}@{}}%
\column{16}{@{}>{\hspre}l<{\hspost}@{}}%
\column{20}{@{}>{\hspre}l<{\hspost}@{}}%
\column{24}{@{}>{\hspre}l<{\hspost}@{}}%
\column{28}{@{}>{\hspre}l<{\hspost}@{}}%
\column{30}{@{}>{\hspre}l<{\hspost}@{}}%
\column{E}{@{}>{\hspre}l<{\hspost}@{}}%
\>[5]{}\Varid{h_{State1}}\;((\Varid{h_{Modify1}}\hsdot{\circ }{.}\Varid{h_{ND+f}}\hsdot{\circ }{.}\Varid{(\Leftrightarrow)})\;(\Varid{local2trail}\;(\Conid{Op}\hsdot{\circ }{.}\Conid{Inr}\hsdot{\circ }{.}\Conid{Inl}\mathbin{\$}\Conid{Or}\;\Varid{p}\;\Varid{q}))\;\Varid{s})\;\Varid{t}{}\<[E]%
\\
\>[3]{}\mathrel{=}\mbox{\commentbegin ~ definition of \ensuremath{\Varid{local2trail}}  \commentend}{}\<[E]%
\\
\>[3]{}\hsindent{2}{}\<[5]%
\>[5]{}\Varid{h_{State1}}\;((\Varid{h_{Modify1}}\hsdot{\circ }{.}\Varid{h_{ND+f}}\hsdot{\circ }{.}\Varid{(\Leftrightarrow)})\;({}\<[E]%
\\
\>[5]{}\hsindent{2}{}\<[7]%
\>[7]{}(\Varid{pushStack}\;(\Conid{Right}\;())>\!\!>\Varid{local2trail}\;\Varid{p})\mathbin{\talloblong}(\Varid{untrail}>\!\!>\Varid{local2trail}\;\Varid{q}))\;\Varid{s})\;\Varid{t}{}\<[E]%
\\
\>[3]{}\mathrel{=}\mbox{\commentbegin ~ definition of \ensuremath{(\talloblong)}  \commentend}{}\<[E]%
\\
\>[3]{}\hsindent{2}{}\<[5]%
\>[5]{}\Varid{h_{State1}}\;((\Varid{h_{Modify1}}\hsdot{\circ }{.}\Varid{h_{ND+f}}\hsdot{\circ }{.}\Varid{(\Leftrightarrow)})\;(\Conid{Op}\hsdot{\circ }{.}\Conid{Inr}\hsdot{\circ }{.}\Conid{Inl}\mathbin{\$}\Conid{Or}{}\<[E]%
\\
\>[5]{}\hsindent{2}{}\<[7]%
\>[7]{}(\Varid{pushStack}\;(\Conid{Right}\;())>\!\!>\Varid{local2trail}\;\Varid{p}){}\<[E]%
\\
\>[5]{}\hsindent{2}{}\<[7]%
\>[7]{}(\Varid{untrail}>\!\!>\Varid{local2trail}\;\Varid{q}))\;\Varid{s})\;\Varid{t}{}\<[E]%
\\
\>[3]{}\mathrel{=}\mbox{\commentbegin ~ definition of \ensuremath{\Varid{h_{ND+f}}}, \ensuremath{\Varid{(\Leftrightarrow)}}, and \ensuremath{\Varid{liftM2}}  \commentend}{}\<[E]%
\\
\>[3]{}\hsindent{2}{}\<[5]%
\>[5]{}\Varid{h_{State1}}\;(\Varid{h_{Modify1}}\;(\mathbf{do}{}\<[E]%
\\
\>[5]{}\hsindent{2}{}\<[7]%
\>[7]{}\Varid{x}\leftarrow \Varid{h_{ND+f}}\;(\Varid{(\Leftrightarrow)}\;(\Varid{pushStack}\;(\Conid{Right}\;())))>\!\!>\Varid{h_{ND+f}}\;(\Varid{(\Leftrightarrow)}\;(\Varid{local2trail}\;\Varid{p})){}\<[E]%
\\
\>[5]{}\hsindent{2}{}\<[7]%
\>[7]{}\Varid{y}\leftarrow \Varid{h_{ND+f}}\;(\Varid{(\Leftrightarrow)}\;\Varid{untrail})>\!\!>\Varid{h_{ND+f}}\;(\Varid{(\Leftrightarrow)}\;(\Varid{local2trail}\;\Varid{q})){}\<[E]%
\\
\>[5]{}\hsindent{2}{}\<[7]%
\>[7]{}\Varid{\eta}\;(\Varid{x}+\!\!+\Varid{y}){}\<[E]%
\\
\>[3]{}\hsindent{2}{}\<[5]%
\>[5]{})\Varid{s})\;\Varid{t}{}\<[E]%
\\
\>[3]{}\mathrel{=}\mbox{\commentbegin ~ monad law  \commentend}{}\<[E]%
\\
\>[3]{}\hsindent{2}{}\<[5]%
\>[5]{}\Varid{h_{State1}}\;(\Varid{h_{Modify1}}\;(\mathbf{do}{}\<[E]%
\\
\>[5]{}\hsindent{2}{}\<[7]%
\>[7]{}\Varid{h_{ND+f}}\;(\Varid{(\Leftrightarrow)}\;(\Varid{pushStack}\;(\Conid{Right}\;()))){}\<[E]%
\\
\>[5]{}\hsindent{2}{}\<[7]%
\>[7]{}\Varid{x}\leftarrow \Varid{h_{ND+f}}\;(\Varid{(\Leftrightarrow)}\;(\Varid{local2trail}\;\Varid{p})){}\<[E]%
\\
\>[5]{}\hsindent{2}{}\<[7]%
\>[7]{}\Varid{h_{ND+f}}\;(\Varid{(\Leftrightarrow)}\;\Varid{untrail}){}\<[E]%
\\
\>[5]{}\hsindent{2}{}\<[7]%
\>[7]{}\Varid{y}\leftarrow \Varid{h_{ND+f}}\;(\Varid{(\Leftrightarrow)}\;(\Varid{local2trail}\;\Varid{q})){}\<[E]%
\\
\>[5]{}\hsindent{2}{}\<[7]%
\>[7]{}\Varid{\eta}\;(\Varid{x}+\!\!+\Varid{y}){}\<[E]%
\\
\>[3]{}\hsindent{2}{}\<[5]%
\>[5]{})\Varid{s})\;\Varid{t}{}\<[E]%
\\
\>[3]{}\mathrel{=}\mbox{\commentbegin ~ definition of \ensuremath{\Varid{h_{Modify1}}}, \Cref{lemma:dist-hModify1}, and function application  \commentend}{}\<[E]%
\\
\>[3]{}\hsindent{2}{}\<[5]%
\>[5]{}\Varid{h_{State1}}\;(\mathbf{do}{}\<[E]%
\\
\>[5]{}\hsindent{2}{}\<[7]%
\>[7]{}(\anonymous ,\Varid{s}_{1}){}\<[16]%
\>[16]{}\leftarrow \Varid{h_{Modify1}}\;(\Varid{h_{ND+f}}\;(\Varid{(\Leftrightarrow)}\;(\Varid{pushStack}\;(\Conid{Right}\;()))))\;\Varid{s}{}\<[E]%
\\
\>[5]{}\hsindent{2}{}\<[7]%
\>[7]{}(\Varid{x},\Varid{s}_{2}){}\<[16]%
\>[16]{}\leftarrow \Varid{h_{Modify1}}\;(\Varid{h_{ND+f}}\;(\Varid{(\Leftrightarrow)}\;(\Varid{local2trail}\;\Varid{p})))\;\Varid{s}_{1}{}\<[E]%
\\
\>[5]{}\hsindent{2}{}\<[7]%
\>[7]{}(\anonymous ,\Varid{s3}){}\<[16]%
\>[16]{}\leftarrow \Varid{h_{Modify1}}\;(\Varid{h_{ND+f}}\;(\Varid{(\Leftrightarrow)}\;\Varid{untrail}))\;\Varid{s}_{2}{}\<[E]%
\\
\>[5]{}\hsindent{2}{}\<[7]%
\>[7]{}(\Varid{y},\Varid{s4}){}\<[16]%
\>[16]{}\leftarrow \Varid{h_{Modify1}}\;(\Varid{h_{ND+f}}\;(\Varid{(\Leftrightarrow)}\;(\Varid{local2trail}\;\Varid{q})))\;\Varid{s3}{}\<[E]%
\\
\>[5]{}\hsindent{2}{}\<[7]%
\>[7]{}\Varid{\eta}\;(\Varid{x}+\!\!+\Varid{y},\Varid{s4}){}\<[E]%
\\
\>[3]{}\hsindent{2}{}\<[5]%
\>[5]{})\;\Varid{t}{}\<[E]%
\\
\>[3]{}\mathrel{=}\mbox{\commentbegin ~ definition of \ensuremath{\Varid{h_{State1}}}, \Cref{lemma:dist-hState1}, and function application  \commentend}{}\<[E]%
\\
\>[3]{}\hsindent{2}{}\<[5]%
\>[5]{}\mathbf{do}\;{}\<[9]%
\>[9]{}((\anonymous ,\Varid{s}_{1}),\Varid{t}_{1}){}\<[24]%
\>[24]{}\leftarrow \Varid{h_{State1}}\;(\Varid{h_{Modify1}}\;(\Varid{h_{ND+f}}\;(\Varid{(\Leftrightarrow)}\;(\Varid{pushStack}\;(\Conid{Right}\;()))))\;\Varid{s})\;\Varid{t}{}\<[E]%
\\
\>[9]{}((\Varid{x},\Varid{s}_{2}),\Varid{t}_{2}){}\<[24]%
\>[24]{}\leftarrow \Varid{h_{State1}}\;(\Varid{h_{Modify1}}\;(\Varid{h_{ND+f}}\;(\Varid{(\Leftrightarrow)}\;(\Varid{local2trail}\;\Varid{p})))\;\Varid{s}_{1})\;\Varid{t}_{1}{}\<[E]%
\\
\>[9]{}((\anonymous ,\Varid{s3}),\Varid{t3}){}\<[24]%
\>[24]{}\leftarrow \Varid{h_{State1}}\;(\Varid{h_{Modify1}}\;(\Varid{h_{ND+f}}\;(\Varid{(\Leftrightarrow)}\;\Varid{untrail}))\;\Varid{s}_{2})\;\Varid{t}_{2}{}\<[E]%
\\
\>[9]{}((\Varid{y},\Varid{s4}),\Varid{t4}){}\<[24]%
\>[24]{}\leftarrow \Varid{h_{State1}}\;(\Varid{h_{Modify1}}\;(\Varid{h_{ND+f}}\;(\Varid{(\Leftrightarrow)}\;(\Varid{local2trail}\;\Varid{q})))\;\Varid{s3})\;\Varid{t3}{}\<[E]%
\\
\>[9]{}\Varid{\eta}\;((\Varid{x}+\!\!+\Varid{y},\Varid{s4}),\Varid{t4}){}\<[E]%
\\
\>[3]{}\mathrel{=}\mbox{\commentbegin ~ definition of \ensuremath{\Varid{pushStack}}  \commentend}{}\<[E]%
\\
\>[3]{}\hsindent{2}{}\<[5]%
\>[5]{}\mathbf{do}\;{}\<[9]%
\>[9]{}((\anonymous ,\Varid{s}_{1}),\Varid{t}_{1}){}\<[24]%
\>[24]{}\leftarrow \Varid{h_{State1}}\;(\Varid{h_{Modify1}}\;(\Varid{h_{ND+f}}\;(\Varid{(\Leftrightarrow)}\;(\mathbf{do}{}\<[E]%
\\
\>[9]{}\hsindent{2}{}\<[11]%
\>[11]{}\Conid{Stack}\;\Varid{xs}\leftarrow \Varid{get}{}\<[E]%
\\
\>[9]{}\hsindent{2}{}\<[11]%
\>[11]{}\Varid{put}\;(\Conid{Stack}\;(\Conid{Right}\;()\mathbin{:}\Varid{xs})))))\;\Varid{s})\;\Varid{t}{}\<[E]%
\\
\>[9]{}((\Varid{x},\Varid{s}_{2}),\Varid{t}_{2}){}\<[24]%
\>[24]{}\leftarrow \Varid{h_{State1}}\;(\Varid{h_{Modify1}}\;(\Varid{h_{ND+f}}\;(\Varid{(\Leftrightarrow)}\;(\Varid{local2trail}\;\Varid{p})))\;\Varid{s}_{1})\;\Varid{t}_{1}{}\<[E]%
\\
\>[9]{}((\anonymous ,\Varid{s3}),\Varid{t3}){}\<[24]%
\>[24]{}\leftarrow \Varid{h_{State1}}\;(\Varid{h_{Modify1}}\;(\Varid{h_{ND+f}}\;(\Varid{(\Leftrightarrow)}\;\Varid{untrail}))\;\Varid{s}_{2})\;\Varid{t}_{2}{}\<[E]%
\\
\>[9]{}((\Varid{y},\Varid{s4}),\Varid{t4}){}\<[24]%
\>[24]{}\leftarrow \Varid{h_{State1}}\;(\Varid{h_{Modify1}}\;(\Varid{h_{ND+f}}\;(\Varid{(\Leftrightarrow)}\;(\Varid{local2trail}\;\Varid{q})))\;\Varid{s3})\;\Varid{t3}{}\<[E]%
\\
\>[9]{}\Varid{\eta}\;((\Varid{x}+\!\!+\Varid{y},\Varid{s4}),\Varid{t4}){}\<[E]%
\\
\>[3]{}\mathrel{=}\mbox{\commentbegin ~ definition of \ensuremath{\Varid{h_{State1}},\Varid{h_{Modify1}},\Varid{h_{ND+f}},\Varid{(\Leftrightarrow)},\Varid{get},\Varid{put}}; let \ensuremath{\Varid{t}\mathrel{=}\Conid{Stack}\;\Varid{xs}}  \commentend}{}\<[E]%
\\
\>[3]{}\hsindent{2}{}\<[5]%
\>[5]{}\mathbf{do}\;{}\<[9]%
\>[9]{}((\anonymous ,\Varid{s}_{1}),\Varid{t}_{1}){}\<[24]%
\>[24]{}\leftarrow \Varid{\eta}\;(([\mskip1.5mu ()\mskip1.5mu],\Varid{s}),\Conid{Stack}\;(\Conid{Right}\;()\mathbin{:}\Varid{xs})){}\<[E]%
\\
\>[9]{}((\Varid{x},\Varid{s}_{2}),\Varid{t}_{2}){}\<[24]%
\>[24]{}\leftarrow \Varid{h_{State1}}\;(\Varid{h_{Modify1}}\;(\Varid{h_{ND+f}}\;(\Varid{(\Leftrightarrow)}\;(\Varid{local2trail}\;\Varid{p})))\;\Varid{s}_{1})\;\Varid{t}_{1}{}\<[E]%
\\
\>[9]{}((\anonymous ,\Varid{s3}),\Varid{t3}){}\<[24]%
\>[24]{}\leftarrow \Varid{h_{State1}}\;(\Varid{h_{Modify1}}\;(\Varid{h_{ND+f}}\;(\Varid{(\Leftrightarrow)}\;\Varid{untrail}))\;\Varid{s}_{2})\;\Varid{t}_{2}{}\<[E]%
\\
\>[9]{}((\Varid{y},\Varid{s4}),\Varid{t4}){}\<[24]%
\>[24]{}\leftarrow \Varid{h_{State1}}\;(\Varid{h_{Modify1}}\;(\Varid{h_{ND+f}}\;(\Varid{(\Leftrightarrow)}\;(\Varid{local2trail}\;\Varid{q})))\;\Varid{s3})\;\Varid{t3}{}\<[E]%
\\
\>[9]{}\Varid{\eta}\;((\Varid{x}+\!\!+\Varid{y},\Varid{s4}),\Varid{t4}){}\<[E]%
\\
\>[3]{}\mathrel{=}\mbox{\commentbegin ~ monad law  \commentend}{}\<[E]%
\\
\>[3]{}\hsindent{2}{}\<[5]%
\>[5]{}\mathbf{do}\;{}\<[9]%
\>[9]{}((\Varid{x},\Varid{s}_{2}),\Varid{t}_{2}){}\<[24]%
\>[24]{}\leftarrow {}\<[28]%
\>[28]{}\Varid{h_{State1}}\;(\Varid{h_{Modify1}}\;(\Varid{h_{ND+f}}\;(\Varid{(\Leftrightarrow)}{}\<[E]%
\\
\>[28]{}\hsindent{2}{}\<[30]%
\>[30]{}(\Varid{local2trail}\;\Varid{p})))\;\Varid{s})\;(\Conid{Stack}\;(\Conid{Right}\;()\mathbin{:}\Varid{xs})){}\<[E]%
\\
\>[9]{}((\anonymous ,\Varid{s3}),\Varid{t3}){}\<[24]%
\>[24]{}\leftarrow {}\<[28]%
\>[28]{}\Varid{h_{State1}}\;(\Varid{h_{Modify1}}\;(\Varid{h_{ND+f}}\;(\Varid{(\Leftrightarrow)}\;\Varid{untrail}))\;\Varid{s}_{2})\;\Varid{t}_{2}{}\<[E]%
\\
\>[9]{}((\Varid{y},\Varid{s4}),\Varid{t4}){}\<[24]%
\>[24]{}\leftarrow {}\<[28]%
\>[28]{}\Varid{h_{State1}}\;(\Varid{h_{Modify1}}\;(\Varid{h_{ND+f}}\;(\Varid{(\Leftrightarrow)}\;(\Varid{local2trail}\;\Varid{q})))\;\Varid{s3})\;\Varid{t3}{}\<[E]%
\\
\>[9]{}\Varid{\eta}\;((\Varid{x}+\!\!+\Varid{y},\Varid{s4}),\Varid{t4}){}\<[E]%
\\
\>[3]{}\mathrel{=}\mbox{\commentbegin ~ reformulation  \commentend}{}\<[E]%
\\
\>[3]{}\hsindent{2}{}\<[5]%
\>[5]{}\mathbf{do}\;{}\<[9]%
\>[9]{}((\Varid{x},\Varid{s}_{2}),\Varid{t}_{2}){}\<[24]%
\>[24]{}\leftarrow {}\<[28]%
\>[28]{}\Varid{h_{State1}}\;((\Varid{h_{Modify1}}\hsdot{\circ }{.}\Varid{h_{ND+f}}\hsdot{\circ }{.}\Varid{(\Leftrightarrow)}){}\<[E]%
\\
\>[28]{}\hsindent{2}{}\<[30]%
\>[30]{}(\Varid{local2trail}\;\Varid{p})\;\Varid{s})\;(\Conid{Stack}\;(\Conid{Right}\;()\mathbin{:}\Varid{xs})){}\<[E]%
\\
\>[9]{}((\anonymous ,\Varid{s3}),\Varid{t3}){}\<[24]%
\>[24]{}\leftarrow {}\<[28]%
\>[28]{}\Varid{h_{State1}}\;((\Varid{h_{Modify1}}\hsdot{\circ }{.}\Varid{h_{ND+f}}\hsdot{\circ }{.}\Varid{(\Leftrightarrow)})\;\Varid{untrail}\;\Varid{s}_{2})\;\Varid{t}_{2}{}\<[E]%
\\
\>[9]{}((\Varid{y},\Varid{s4}),\Varid{t4}){}\<[24]%
\>[24]{}\leftarrow {}\<[28]%
\>[28]{}\Varid{h_{State1}}\;((\Varid{h_{Modify1}}\hsdot{\circ }{.}\Varid{h_{ND+f}}\hsdot{\circ }{.}\Varid{(\Leftrightarrow)})\;(\Varid{local2trail}\;\Varid{q})\;\Varid{s3})\;\Varid{t3}{}\<[E]%
\\
\>[9]{}\Varid{\eta}\;((\Varid{x}+\!\!+\Varid{y},\Varid{s4}),\Varid{t4}){}\<[E]%
\\
\>[3]{}\mathrel{=}\mbox{\commentbegin ~ by induction hypothesis on \ensuremath{\Varid{p}}, for some \ensuremath{\Varid{ys}} the equation holds   \commentend}{}\<[E]%
\\
\>[3]{}\hsindent{2}{}\<[5]%
\>[5]{}\mathbf{do}\;{}\<[9]%
\>[9]{}((\Varid{x},{}\<[15]%
\>[15]{}\anonymous ),{}\<[20]%
\>[20]{}\anonymous ){}\<[24]%
\>[24]{}\leftarrow {}\<[28]%
\>[28]{}\Varid{h_{State1}}\;((\Varid{h_{Modify1}}\hsdot{\circ }{.}\Varid{h_{ND+f}}\hsdot{\circ }{.}\Varid{(\Leftrightarrow)}){}\<[E]%
\\
\>[28]{}\hsindent{2}{}\<[30]%
\>[30]{}(\Varid{local2trail}\;\Varid{p})\;\Varid{s})\;(\Conid{Stack}\;(\Conid{Right}\;())){}\<[E]%
\\
\>[9]{}((\anonymous ,\Varid{s3}),\Varid{t3}){}\<[24]%
\>[24]{}\leftarrow {}\<[28]%
\>[28]{}\Varid{h_{State1}}\;((\Varid{h_{Modify1}}\hsdot{\circ }{.}\Varid{h_{ND+f}}\hsdot{\circ }{.}\Varid{(\Leftrightarrow)}){}\<[E]%
\\
\>[28]{}\hsindent{2}{}\<[30]%
\>[30]{}\Varid{untrail}\;(\Varid{fplus}\;\Varid{s}\;\Varid{ys}))\;(\Varid{extend}\;(\Conid{Stack}\;(\Conid{Right}\;()\mathbin{:}\Varid{xs}))\;\Varid{ys}){}\<[E]%
\\
\>[9]{}((\Varid{y},\Varid{s4}),\Varid{t4}){}\<[24]%
\>[24]{}\leftarrow {}\<[28]%
\>[28]{}\Varid{h_{State1}}\;((\Varid{h_{Modify1}}\hsdot{\circ }{.}\Varid{h_{ND+f}}\hsdot{\circ }{.}\Varid{(\Leftrightarrow)})\;(\Varid{local2trail}\;\Varid{q})\;\Varid{s3})\;\Varid{t3}{}\<[E]%
\\
\>[9]{}\Varid{\eta}\;((\Varid{x}+\!\!+\Varid{y},\Varid{s4}),\Varid{t4}){}\<[E]%
\\
\>[3]{}\mathrel{=}\mbox{\commentbegin ~ \Cref{lemma:undoTrail-undos}   \commentend}{}\<[E]%
\\
\>[3]{}\hsindent{2}{}\<[5]%
\>[5]{}\mathbf{do}\;{}\<[9]%
\>[9]{}((\Varid{x},{}\<[15]%
\>[15]{}\anonymous ),{}\<[20]%
\>[20]{}\anonymous ){}\<[24]%
\>[24]{}\leftarrow {}\<[28]%
\>[28]{}\Varid{h_{State1}}\;((\Varid{h_{Modify1}}\hsdot{\circ }{.}\Varid{h_{ND+f}}\hsdot{\circ }{.}\Varid{(\Leftrightarrow)}){}\<[E]%
\\
\>[28]{}\hsindent{2}{}\<[30]%
\>[30]{}(\Varid{local2trail}\;\Varid{p})\;\Varid{s})\;(\Conid{Stack}\;(\Conid{Right}\;())){}\<[E]%
\\
\>[9]{}((\Varid{y},\Varid{s4}),\Varid{t4}){}\<[24]%
\>[24]{}\leftarrow {}\<[28]%
\>[28]{}\Varid{h_{State1}}\;((\Varid{h_{Modify1}}\hsdot{\circ }{.}\Varid{h_{ND+f}}\hsdot{\circ }{.}\Varid{(\Leftrightarrow)}){}\<[E]%
\\
\>[28]{}\hsindent{2}{}\<[30]%
\>[30]{}(\Varid{local2trail}\;\Varid{q})\;(\Varid{fminus}\;(\Varid{fplus}\;\Varid{s}\;\Varid{ys})\;\Varid{ys}))\;(\Conid{Stack}\;\Varid{xs}){}\<[E]%
\\
\>[9]{}\Varid{\eta}\;((\Varid{x}+\!\!+\Varid{y},\Varid{s4}),\Varid{t4}){}\<[E]%
\\
\>[3]{}\mathrel{=}\mbox{\commentbegin ~ \Cref{eq:plus-minus} gives \ensuremath{\Varid{fminus}\;(\Varid{fplus}\;\Varid{s}\;\Varid{ys})\;\Varid{ys}\mathrel{=}\Varid{s}}   \commentend}{}\<[E]%
\\
\>[3]{}\hsindent{2}{}\<[5]%
\>[5]{}\mathbf{do}\;{}\<[9]%
\>[9]{}((\Varid{x},{}\<[15]%
\>[15]{}\anonymous ),{}\<[20]%
\>[20]{}\anonymous ){}\<[24]%
\>[24]{}\leftarrow {}\<[28]%
\>[28]{}\Varid{h_{State1}}\;((\Varid{h_{Modify1}}\hsdot{\circ }{.}\Varid{h_{ND+f}}\hsdot{\circ }{.}\Varid{(\Leftrightarrow)}){}\<[E]%
\\
\>[28]{}\hsindent{2}{}\<[30]%
\>[30]{}(\Varid{local2trail}\;\Varid{p})\;\Varid{s})\;(\Conid{Stack}\;(\Conid{Right}\;())){}\<[E]%
\\
\>[9]{}((\Varid{y},\Varid{s4}),\Varid{t4}){}\<[24]%
\>[24]{}\leftarrow {}\<[28]%
\>[28]{}\Varid{h_{State1}}\;((\Varid{h_{Modify1}}\hsdot{\circ }{.}\Varid{h_{ND+f}}\hsdot{\circ }{.}\Varid{(\Leftrightarrow)}){}\<[E]%
\\
\>[28]{}\hsindent{2}{}\<[30]%
\>[30]{}(\Varid{local2trail}\;\Varid{q})\;\Varid{s})\;(\Conid{Stack}\;\Varid{xs}){}\<[E]%
\\
\>[9]{}\Varid{\eta}\;((\Varid{x}+\!\!+\Varid{y},\Varid{s4}),\Varid{t4}){}\<[E]%
\\
\>[3]{}\mathrel{=}\mbox{\commentbegin ~ by induction hypothesis on \ensuremath{\Varid{p}}, for some \ensuremath{\Varid{ys'}} the equation holds   \commentend}{}\<[E]%
\\
\>[3]{}\hsindent{2}{}\<[5]%
\>[5]{}\mathbf{do}\;{}\<[9]%
\>[9]{}((\Varid{x},{}\<[15]%
\>[15]{}\anonymous ),{}\<[20]%
\>[20]{}\anonymous ){}\<[24]%
\>[24]{}\leftarrow {}\<[28]%
\>[28]{}\Varid{h_{State1}}\;((\Varid{h_{Modify1}}\hsdot{\circ }{.}\Varid{h_{ND+f}}\hsdot{\circ }{.}\Varid{(\Leftrightarrow)}){}\<[E]%
\\
\>[28]{}\hsindent{2}{}\<[30]%
\>[30]{}(\Varid{local2trail}\;\Varid{p})\;\Varid{s})\;(\Conid{Stack}\;(\Conid{Right}\;())){}\<[E]%
\\
\>[9]{}((\Varid{y},{}\<[15]%
\>[15]{}\anonymous ),{}\<[20]%
\>[20]{}\anonymous ){}\<[24]%
\>[24]{}\leftarrow {}\<[28]%
\>[28]{}\Varid{h_{State1}}\;((\Varid{h_{Modify1}}\hsdot{\circ }{.}\Varid{h_{ND+f}}\hsdot{\circ }{.}\Varid{(\Leftrightarrow)}){}\<[E]%
\\
\>[28]{}\hsindent{2}{}\<[30]%
\>[30]{}(\Varid{local2trail}\;\Varid{q})\;\Varid{s})\;(\Conid{Stack}\;[\mskip1.5mu \mskip1.5mu]){}\<[E]%
\\
\>[9]{}\Varid{\eta}\;((\Varid{x}+\!\!+\Varid{y},\Varid{fplus}\;\Varid{s}\;\Varid{ys'}),\Varid{extend}\;(\Conid{Stack}\;\Varid{xs})\;\Varid{ys'}){}\<[E]%
\\
\>[3]{}\mathrel{=}\mbox{\commentbegin ~ similar derivation in reverse   \commentend}{}\<[E]%
\\
\>[3]{}\hsindent{2}{}\<[5]%
\>[5]{}\mathbf{do}\;{}\<[9]%
\>[9]{}((\Varid{x},{}\<[15]%
\>[15]{}\anonymous ),{}\<[20]%
\>[20]{}\anonymous ){}\<[24]%
\>[24]{}\leftarrow {}\<[28]%
\>[28]{}\Varid{h_{State1}}\;((\Varid{h_{Modify1}}\hsdot{\circ }{.}\Varid{h_{ND+f}}\hsdot{\circ }{.}\Varid{(\Leftrightarrow)}){}\<[E]%
\\
\>[28]{}\hsindent{2}{}\<[30]%
\>[30]{}(\Varid{local2trail}\;(\Conid{Op}\hsdot{\circ }{.}\Conid{Inr}\hsdot{\circ }{.}\Conid{Inl}\mathbin{\$}\Conid{Or}\;\Varid{p}\;\Varid{q}))\;\Varid{s})\;(\Conid{Stack}\;[\mskip1.5mu \mskip1.5mu]){}\<[E]%
\\
\>[9]{}\Varid{\eta}\;((\Varid{x},\Varid{fplus}\;\Varid{s}\;\Varid{ys'}),\Varid{extend}\;\Varid{t}\;\Varid{ys'}){}\<[E]%
\ColumnHook
\end{hscode}\resethooks
\indentend 
\noindent \mbox{\underline{case \ensuremath{\Varid{t}\mathrel{=}\Conid{Op}\hsdot{\circ }{.}\Conid{Inr}\hsdot{\circ }{.}\Conid{Inr}\mathbin{\$}\Varid{y}}}}
\indentbegin \begin{hscode}\SaveRestoreHook
\column{B}{@{}>{\hspre}l<{\hspost}@{}}%
\column{3}{@{}>{\hspre}l<{\hspost}@{}}%
\column{5}{@{}>{\hspre}l<{\hspost}@{}}%
\column{7}{@{}>{\hspre}l<{\hspost}@{}}%
\column{9}{@{}>{\hspre}l<{\hspost}@{}}%
\column{11}{@{}>{\hspre}l<{\hspost}@{}}%
\column{E}{@{}>{\hspre}l<{\hspost}@{}}%
\>[5]{}\Varid{h_{State1}}\;((\Varid{h_{Modify1}}\hsdot{\circ }{.}\Varid{h_{ND+f}}\hsdot{\circ }{.}\Varid{(\Leftrightarrow)})\;(\Varid{local2trail}\;(\Conid{Op}\hsdot{\circ }{.}\Conid{Inr}\hsdot{\circ }{.}\Conid{Inr}\mathbin{\$}\Varid{y}))\;\Varid{s})\;\Varid{t}{}\<[E]%
\\
\>[3]{}\mathrel{=}\mbox{\commentbegin ~ definition of \ensuremath{\Varid{local2trail}}  \commentend}{}\<[E]%
\\
\>[3]{}\hsindent{2}{}\<[5]%
\>[5]{}\Varid{h_{State1}}\;((\Varid{h_{Modify1}}\hsdot{\circ }{.}\Varid{h_{ND+f}}\hsdot{\circ }{.}\Varid{(\Leftrightarrow)})\;(\Conid{Op}\hsdot{\circ }{.}\Conid{Inr}\hsdot{\circ }{.}\Conid{Inr}\hsdot{\circ }{.}\Conid{Inr}\hsdot{\circ }{.}\Varid{fmap}\;\Varid{local2trail}\mathbin{\$}\Varid{y})\;\Varid{s})\;\Varid{t}{}\<[E]%
\\
\>[3]{}\mathrel{=}\mbox{\commentbegin ~ definition of \ensuremath{\Varid{(\Leftrightarrow)}} and \ensuremath{\Varid{h_{ND+f}}}  \commentend}{}\<[E]%
\\
\>[3]{}\hsindent{2}{}\<[5]%
\>[5]{}\Varid{h_{State1}}\;(\Varid{h_{Modify1}}\;(\Conid{Op}\hsdot{\circ }{.}\Conid{Inr}\hsdot{\circ }{.}\Conid{Inr}\hsdot{\circ }{.}\Varid{fmap}\;(\Varid{h_{ND+f}}\hsdot{\circ }{.}\Varid{(\Leftrightarrow)}\hsdot{\circ }{.}\Varid{local2trail})\mathbin{\$}\Varid{y})\;\Varid{s})\;\Varid{t}{}\<[E]%
\\
\>[3]{}\mathrel{=}\mbox{\commentbegin ~ definition of \ensuremath{\Varid{h_{Modify1}}} and function application  \commentend}{}\<[E]%
\\
\>[3]{}\hsindent{2}{}\<[5]%
\>[5]{}\Varid{h_{State1}}\;(\Conid{Op}\hsdot{\circ }{.}\Conid{Inr}\hsdot{\circ }{.}\Varid{fmap}\;((\mathbin{\$}\Varid{s})\hsdot{\circ }{.}\Varid{h_{Modify1}}\hsdot{\circ }{.}\Varid{h_{ND+f}}\hsdot{\circ }{.}\Varid{(\Leftrightarrow)}\hsdot{\circ }{.}\Varid{local2trail})\mathbin{\$}\Varid{y})\;\Varid{t}{}\<[E]%
\\
\>[3]{}\mathrel{=}\mbox{\commentbegin ~ definition of \ensuremath{\Varid{h_{State1}}} and function application  \commentend}{}\<[E]%
\\
\>[3]{}\hsindent{2}{}\<[5]%
\>[5]{}\Conid{Op}\hsdot{\circ }{.}\Varid{fmap}\;((\mathbin{\$}\Varid{t})\hsdot{\circ }{.}\Varid{h_{State1}}\hsdot{\circ }{.}(\mathbin{\$}\Varid{s})\hsdot{\circ }{.}\Varid{h_{Modify1}}\hsdot{\circ }{.}\Varid{h_{ND+f}}\hsdot{\circ }{.}\Varid{(\Leftrightarrow)}\hsdot{\circ }{.}\Varid{local2trail})\mathbin{\$}\Varid{y}{}\<[E]%
\\
\>[3]{}\mathrel{=}\mbox{\commentbegin ~ reformulation  \commentend}{}\<[E]%
\\
\>[3]{}\hsindent{2}{}\<[5]%
\>[5]{}\Conid{Op}\hsdot{\circ }{.}\Varid{fmap}\;(\lambda \Varid{k}\to \Varid{h_{State1}}\;((\Varid{h_{Modify1}}\hsdot{\circ }{.}\Varid{h_{ND+f}}\hsdot{\circ }{.}\Varid{(\Leftrightarrow)})\;(\Varid{local2trail}\;\Varid{k})\;\Varid{s})\;\Varid{t})\mathbin{\$}\Varid{y}{}\<[E]%
\\
\>[3]{}\mathrel{=}\mbox{\commentbegin ~ by induction hypothesis on \ensuremath{\Varid{y}}, for some \ensuremath{\Varid{ys}} the equation holds  \commentend}{}\<[E]%
\\
\>[3]{}\hsindent{2}{}\<[5]%
\>[5]{}\Conid{Op}\hsdot{\circ }{.}\Varid{fmap}\;(\lambda \Varid{k}\to \mathbf{do}{}\<[E]%
\\
\>[5]{}\hsindent{2}{}\<[7]%
\>[7]{}((\Varid{x},\anonymous ),\anonymous )\leftarrow \Varid{h_{State1}}\;((\Varid{h_{Modify1}}\hsdot{\circ }{.}\Varid{h_{ND+f}}\hsdot{\circ }{.}\Varid{(\Leftrightarrow)})\;(\Varid{local2trail}\;\Varid{k})\;\Varid{s})\;\Varid{t}{}\<[E]%
\\
\>[5]{}\hsindent{2}{}\<[7]%
\>[7]{}\Varid{\eta}\;((\Varid{x},\Varid{fplus}\;\Varid{s}\;\Varid{ys}),\Varid{extend}\;\Varid{t}\;\Varid{ys}))\mathbin{\$}\Varid{y}{}\<[E]%
\\
\>[3]{}\mathrel{=}\mbox{\commentbegin ~ definition of free monad  \commentend}{}\<[E]%
\\
\>[3]{}\hsindent{2}{}\<[5]%
\>[5]{}\mathbf{do}\;{}\<[9]%
\>[9]{}((\Varid{x},\anonymous ),\anonymous )\leftarrow \Conid{Op}\hsdot{\circ }{.}\Varid{fmap}\;(\lambda \Varid{k}\to {}\<[E]%
\\
\>[9]{}\hsindent{2}{}\<[11]%
\>[11]{}\Varid{h_{State1}}\;((\Varid{h_{Modify1}}\hsdot{\circ }{.}\Varid{h_{ND+f}}\hsdot{\circ }{.}\Varid{(\Leftrightarrow)})\;(\Varid{local2trail}\;\Varid{k})\;\Varid{s})\;\Varid{t})\mathbin{\$}\Varid{y}{}\<[E]%
\\
\>[9]{}\Varid{\eta}\;((\Varid{x},\Varid{fplus}\;\Varid{s}\;\Varid{ys}),\Varid{extend}\;\Varid{t}\;\Varid{ys}){}\<[E]%
\\
\>[3]{}\mathrel{=}\mbox{\commentbegin ~ similar derivation in reverse  \commentend}{}\<[E]%
\\
\>[3]{}\hsindent{2}{}\<[5]%
\>[5]{}\mathbf{do}\;{}\<[9]%
\>[9]{}((\Varid{x},\anonymous ),\anonymous )\leftarrow \Varid{h_{State1}}\;((\Varid{h_{Modify1}}\hsdot{\circ }{.}\Varid{h_{ND+f}}\hsdot{\circ }{.}\Varid{(\Leftrightarrow)})\;(\Varid{local2trail}\;(\Conid{Op}\hsdot{\circ }{.}\Conid{Inr}\hsdot{\circ }{.}\Conid{Inr}\mathbin{\$}\Varid{y}))\;\Varid{s})\;\Varid{t}{}\<[E]%
\\
\>[9]{}\Varid{\eta}\;((\Varid{x},\Varid{fplus}\;\Varid{s}\;\Varid{ys}),\Varid{extend}\;\Varid{t}\;\Varid{ys}){}\<[E]%
\ColumnHook
\end{hscode}\resethooks
\indentend \end{proof}

The following lemma shows that the \ensuremath{\Varid{untrail}} function restores all
the updates in the trail stack until it reaches a time stamp \ensuremath{\Conid{Right}\;()}.
\begin{lemma}[UndoTrail undos]~ \label{lemma:undoTrail-undos}
For \ensuremath{\Varid{t}\mathrel{=}\Conid{Stack}\;(\Varid{ys}+\!\!+(\Conid{Right}\;()\mathbin{:}\Varid{xs}))} and \ensuremath{\Varid{ys}\mathrel{=}[\mskip1.5mu \Conid{Left}\;\Varid{r}_{1},\mathbin{...},\Conid{Left}\;\Varid{r}_{\Varid{n}}\mskip1.5mu]}, we have\indentbegin \begin{hscode}\SaveRestoreHook
\column{B}{@{}>{\hspre}l<{\hspost}@{}}%
\column{3}{@{}>{\hspre}c<{\hspost}@{}}%
\column{3E}{@{}l@{}}%
\column{6}{@{}>{\hspre}l<{\hspost}@{}}%
\column{E}{@{}>{\hspre}l<{\hspost}@{}}%
\>[6]{}\Varid{h_{State1}}\;((\Varid{h_{Modify1}}\hsdot{\circ }{.}\Varid{h_{ND+f}}\hsdot{\circ }{.}\Varid{(\Leftrightarrow)})\;\Varid{untrail}\;\Varid{s})\;\Varid{t}{}\<[E]%
\\
\>[3]{}\mathrel{=}{}\<[3E]%
\\
\>[3]{}\hsindent{3}{}\<[6]%
\>[6]{}\Varid{\eta}\;(([\mskip1.5mu ()\mskip1.5mu],\Varid{fminus}\;\Varid{s}\;\Varid{ys}),\Conid{Stack}\;\Varid{xs}){}\<[E]%
\ColumnHook
\end{hscode}\resethooks
\indentend The function \ensuremath{\Varid{fminus}} is defined as follows:
\indentbegin \begin{hscode}\SaveRestoreHook
\column{B}{@{}>{\hspre}l<{\hspost}@{}}%
\column{E}{@{}>{\hspre}l<{\hspost}@{}}%
\>[B]{}\Varid{fminus}\mathbin{::}\Conid{Undo}\;\Varid{s}\;\Varid{r}\Rightarrow \Varid{s}\to [\mskip1.5mu \Conid{Either}\;\Varid{r}\;\Varid{b}\mskip1.5mu]\to \Varid{s}{}\<[E]%
\\
\>[B]{}\Varid{fminus}\;\Varid{s}\;\Varid{ys}\mathrel{=}\Varid{foldl}\;(\lambda \Varid{s}\;(\Conid{Left}\;\Varid{r})\to \Varid{s}\mathbin{\ominus}\Varid{r})\;\Varid{s}\;\Varid{ys}{}\<[E]%
\ColumnHook
\end{hscode}\resethooks
\indentend \end{lemma}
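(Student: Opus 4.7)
The plan is to prove this lemma by structural induction on the list \ensuremath{\Varid{ys}} (equivalently, induction on its length \ensuremath{\Varid{n}}, the number of \ensuremath{\Conid{Left}} entries preceding the \ensuremath{\Conid{Right}\;()} marker in the trail stack \ensuremath{\Varid{t}}). In both cases, I would unfold \ensuremath{\Varid{untrail}} once, beginning with a \ensuremath{\Varid{popStack}} operation that, by the state laws, inspects the current trail stack and pushes back its tail.

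For the base case \ensuremath{\Varid{n}\mathrel{=}\mathrm{0}}, we have \ensuremath{\Varid{ys}\mathrel{=}[\mskip1.5mu \mskip1.5mu]} and \ensuremath{\Varid{t}\mathrel{=}\Conid{Stack}\;(\Conid{Right}\;()\mathbin{:}\Varid{xs})}. Unfolding \ensuremath{\Varid{popStack}} through the composed handlers \ensuremath{\Varid{h_{Modify1}}\hsdot{\circ }{.}\Varid{h_{ND+f}}\hsdot{\circ }{.}\Varid{(\Leftrightarrow)}} followed by \ensuremath{\Varid{h_{State1}}} yields \ensuremath{\Conid{Just}\;(\Conid{Right}\;())} with the trail stack updated to \ensuremath{\Conid{Stack}\;\Varid{xs}}. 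The \ensuremath{\mathbf{case}} inside \ensuremath{\Varid{untrail}} then matches the \ensuremath{\Conid{Just}\;(\Conid{Right}\;())} branch and returns \ensuremath{\Varid{\eta}\;()}, so the application state \ensuremath{\Varid{s}} is unchanged. Since \ensuremath{\Varid{fminus}\;\Varid{s}\;[\mskip1.5mu \mskip1.5mu]\mathrel{=}\Varid{s}} by definition of \ensuremath{\Varid{foldl}}, the resulting triple \ensuremath{\Varid{\eta}\;(([\mskip1.5mu ()\mskip1.5mu],\Varid{s}),\Conid{Stack}\;\Varid{xs})} already matches the required right-hand side.

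For the inductive step, let \ensuremath{\Varid{ys}\mathrel{=}\Conid{Left}\;\Varid{r}_{1}\mathbin{:}\Varid{ys'}}, so \ensuremath{\Varid{t}\mathrel{=}\Conid{Stack}\;(\Conid{Left}\;\Varid{r}_{1}\mathbin{:}\Varid{ys'}+\!\!+(\Conid{Right}\;()\mathbin{:}\Varid{xs}))}. The initial \ensuremath{\Varid{popStack}} now produces \ensuremath{\Conid{Just}\;(\Conid{Left}\;\Varid{r}_{1})} and sets the trail stack to \ensuremath{\Conid{Stack}\;(\Varid{ys'}+\!\!+(\Conid{Right}\;()\mathbin{:}\Varid{xs}))}. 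The \ensuremath{\mathbf{case}} then enters the third branch: \ensuremath{\Varid{restore}\;\Varid{r}_{1}} is handled by \ensuremath{\Varid{h_{Modify1}}}, updating the application state from \ensuremath{\Varid{s}} to \ensuremath{\Varid{s}\mathbin{\ominus}\Varid{r}_{1}}, after which \ensuremath{\Varid{untrail}} is invoked recursively. Applying the induction hypothesis to this recursive call — with state \ensuremath{\Varid{s}\mathbin{\ominus}\Varid{r}_{1}}, trail stack \ensuremath{\Conid{Stack}\;(\Varid{ys'}+\!\!+(\Conid{Right}\;()\mathbin{:}\Varid{xs}))}, and list \ensuremath{\Varid{ys'}} of length \ensuremath{\Varid{n}\mathbin{-}\mathrm{1}} — gives \ensuremath{\Varid{\eta}\;(([\mskip1.5mu ()\mskip1.5mu],\Varid{fminus}\;(\Varid{s}\mathbin{\ominus}\Varid{r}_{1})\;\Varid{ys'}),\Conid{Stack}\;\Varid{xs})}. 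The identity \ensuremath{\Varid{fminus}\;\Varid{s}\;(\Conid{Left}\;\Varid{r}_{1}\mathbin{:}\Varid{ys'})\mathrel{=}\Varid{fminus}\;(\Varid{s}\mathbin{\ominus}\Varid{r}_{1})\;\Varid{ys'}} — immediate from unfolding \ensuremath{\Varid{foldl}} one step — completes the case.

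The main obstacle is not any single conceptual step but rather the bookkeeping involved in pushing each monadic bind through the four layers \ensuremath{\Varid{(\Leftrightarrow)}}, \ensuremath{\Varid{h_{ND+f}}}, \ensuremath{\Varid{h_{Modify1}}}, and \ensuremath{\Varid{h_{State1}}} in the correct order, and keeping the application state \ensuremath{\Varid{s}} separate from the trail-stack state threaded by \ensuremath{\Varid{h_{State1}}}. The key tools here are the distributivity lemmas \Cref{lemma:dist-hModify1} and \Cref{lemma:dist-hState1}, which let me commute the handlers past \ensuremath{(>\!\!>\!\!=)}; each \ensuremath{\Varid{get}}/\ensuremath{\Varid{put}} inside \ensuremath{\Varid{popStack}} must additionally be forwarded through \ensuremath{\Varid{h_{Modify1}}} to the outer state layer. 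Once these mechanical simplifications are carried out, the remaining equational reasoning is straightforward.
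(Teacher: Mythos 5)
Your proposal is correct and matches the paper's own proof: both proceed by structural induction on \ensuremath{\Varid{ys}}, unfolding \ensuremath{\Varid{untrail}} and \ensuremath{\Varid{popStack}} through the handler stack so that the base case hits the \ensuremath{\Conid{Just}\;(\Conid{Right}\;())} branch and the inductive case performs \ensuremath{\Varid{restore}\;\Varid{r}_{1}} (taking \ensuremath{\Varid{s}} to \ensuremath{\Varid{s}\mathbin{\ominus}\Varid{r}_{1}}) before recursing, closing with the one-step \ensuremath{\Varid{foldl}} unfolding \ensuremath{\Varid{fminus}\;\Varid{s}\;(\Conid{Left}\;\Varid{r}_{1}\mathbin{:}\Varid{ys'})\mathrel{=}\Varid{fminus}\;(\Varid{s}\mathbin{\ominus}\Varid{r}_{1})\;\Varid{ys'}}. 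The only (inessential) difference is that the paper factors the handler-unfolding of \ensuremath{\Varid{untrail}} into a single preliminary case analysis before the induction, whereas you redo it inside each case.
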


\begin{proof}~

We first calculate as follows:
\indentbegin \begin{hscode}\SaveRestoreHook
\column{B}{@{}>{\hspre}l<{\hspost}@{}}%
\column{3}{@{}>{\hspre}l<{\hspost}@{}}%
\column{5}{@{}>{\hspre}l<{\hspost}@{}}%
\column{7}{@{}>{\hspre}l<{\hspost}@{}}%
\column{9}{@{}>{\hspre}l<{\hspost}@{}}%
\column{18}{@{}>{\hspre}l<{\hspost}@{}}%
\column{19}{@{}>{\hspre}l<{\hspost}@{}}%
\column{20}{@{}>{\hspre}l<{\hspost}@{}}%
\column{27}{@{}>{\hspre}l<{\hspost}@{}}%
\column{29}{@{}>{\hspre}l<{\hspost}@{}}%
\column{32}{@{}>{\hspre}l<{\hspost}@{}}%
\column{E}{@{}>{\hspre}l<{\hspost}@{}}%
\>[5]{}\Varid{h_{State1}}\;((\Varid{h_{Modify1}}\hsdot{\circ }{.}\Varid{h_{ND+f}}\hsdot{\circ }{.}\Varid{(\Leftrightarrow)})\;\Varid{untrail}\;\Varid{s})\;\Varid{t}{}\<[E]%
\\
\>[3]{}\mathrel{=}\mbox{\commentbegin ~ definition of \ensuremath{\Varid{untrail}}  \commentend}{}\<[E]%
\\
\>[3]{}\hsindent{2}{}\<[5]%
\>[5]{}\Varid{h_{State1}}\;((\Varid{h_{Modify1}}\hsdot{\circ }{.}\Varid{h_{ND+f}}\hsdot{\circ }{.}\Varid{(\Leftrightarrow)})\;(\mathbf{do}{}\<[E]%
\\
\>[5]{}\hsindent{2}{}\<[7]%
\>[7]{}\Varid{top}\leftarrow \Varid{popStack}{}\<[E]%
\\
\>[5]{}\hsindent{2}{}\<[7]%
\>[7]{}\mathbf{case}\;\Varid{top}\;\mathbf{of}{}\<[E]%
\\
\>[7]{}\hsindent{2}{}\<[9]%
\>[9]{}\Conid{Nothing}\to \Varid{\eta}\;(){}\<[E]%
\\
\>[7]{}\hsindent{2}{}\<[9]%
\>[9]{}\Conid{Just}\;(\Conid{Right}\;())\to \Varid{\eta}\;(){}\<[E]%
\\
\>[7]{}\hsindent{2}{}\<[9]%
\>[9]{}\Conid{Just}\;(\Conid{Left}\;\Varid{r})\to \mathbf{do}\;\Varid{restore}\;\Varid{r};\Varid{untrail}{}\<[E]%
\\
\>[3]{}\hsindent{2}{}\<[5]%
\>[5]{})\Varid{s})\;\Varid{t}{}\<[E]%
\\
\>[3]{}\mathrel{=}\mbox{\commentbegin ~ definition of \ensuremath{\Varid{popStack}}  \commentend}{}\<[E]%
\\
\>[3]{}\hsindent{2}{}\<[5]%
\>[5]{}\Varid{h_{State1}}\;((\Varid{h_{Modify1}}\hsdot{\circ }{.}\Varid{h_{ND+f}}\hsdot{\circ }{.}\Varid{(\Leftrightarrow)})\;(\mathbf{do}{}\<[E]%
\\
\>[5]{}\hsindent{2}{}\<[7]%
\>[7]{}\Varid{top}\leftarrow \mathbf{do}\;{}\<[18]%
\>[18]{}\Conid{Stack}\;\Varid{xs}\leftarrow \Varid{get}{}\<[E]%
\\
\>[18]{}\mathbf{case}\;\Varid{xs}\;\mathbf{of}{}\<[E]%
\\
\>[18]{}\hsindent{2}{}\<[20]%
\>[20]{}[\mskip1.5mu \mskip1.5mu]{}\<[29]%
\>[29]{}\to \Varid{\eta}\;\Conid{Nothing}{}\<[E]%
\\
\>[18]{}\hsindent{2}{}\<[20]%
\>[20]{}(\Varid{x}\mathbin{:}\Varid{xs'}){}\<[29]%
\>[29]{}\to \mathbf{do}\;\Varid{put}\;(\Conid{Stack}\;\Varid{xs'});\Varid{\eta}\;(\Conid{Just}\;\Varid{x}){}\<[E]%
\\
\>[5]{}\hsindent{2}{}\<[7]%
\>[7]{}\mathbf{case}\;\Varid{top}\;\mathbf{of}{}\<[E]%
\\
\>[7]{}\hsindent{2}{}\<[9]%
\>[9]{}\Conid{Nothing}\to \Varid{\eta}\;(){}\<[E]%
\\
\>[7]{}\hsindent{2}{}\<[9]%
\>[9]{}\Conid{Just}\;(\Conid{Right}\;())\to \Varid{\eta}\;(){}\<[E]%
\\
\>[7]{}\hsindent{2}{}\<[9]%
\>[9]{}\Conid{Just}\;(\Conid{Left}\;\Varid{r})\to \mathbf{do}\;\Varid{restore}\;\Varid{r};\Varid{untrail}{}\<[E]%
\\
\>[3]{}\hsindent{2}{}\<[5]%
\>[5]{})\Varid{s})\;\Varid{t}{}\<[E]%
\\
\>[3]{}\mathrel{=}\mbox{\commentbegin ~ monad law and case split  \commentend}{}\<[E]%
\\
\>[3]{}\hsindent{2}{}\<[5]%
\>[5]{}\Varid{h_{State1}}\;((\Varid{h_{Modify1}}\hsdot{\circ }{.}\Varid{h_{ND+f}}\hsdot{\circ }{.}\Varid{(\Leftrightarrow)})\;(\mathbf{do}{}\<[E]%
\\
\>[5]{}\hsindent{2}{}\<[7]%
\>[7]{}\Conid{Stack}\;\Varid{xs}\leftarrow \Varid{get}{}\<[E]%
\\
\>[5]{}\hsindent{2}{}\<[7]%
\>[7]{}\mathbf{case}\;\Varid{xs}\;\mathbf{of}{}\<[E]%
\\
\>[7]{}\hsindent{2}{}\<[9]%
\>[9]{}[\mskip1.5mu \mskip1.5mu]{}\<[27]%
\>[27]{}\to \Varid{\eta}\;(){}\<[E]%
\\
\>[7]{}\hsindent{2}{}\<[9]%
\>[9]{}(\Conid{Right}\;()\mathbin{:}\Varid{xs'}){}\<[27]%
\>[27]{}\to \mathbf{do}\;\Varid{put}\;(\Conid{Stack}\;\Varid{xs'});\Varid{\eta}\;(){}\<[E]%
\\
\>[7]{}\hsindent{2}{}\<[9]%
\>[9]{}(\Conid{Left}\;\Varid{r}{}\<[19]%
\>[19]{}\mathbin{:}\Varid{xs'}){}\<[27]%
\>[27]{}\to \mathbf{do}\;\Varid{put}\;(\Conid{Stack}\;\Varid{xs'});\Varid{restore}\;\Varid{r};\Varid{untrail}{}\<[E]%
\\
\>[3]{}\hsindent{2}{}\<[5]%
\>[5]{})\Varid{s})\;\Varid{t}{}\<[E]%
\\
\>[3]{}\mathrel{=}\mbox{\commentbegin ~ definition of \ensuremath{\Varid{get}}  \commentend}{}\<[E]%
\\
\>[3]{}\hsindent{2}{}\<[5]%
\>[5]{}\Varid{h_{State1}}\;((\Varid{h_{Modify1}}\hsdot{\circ }{.}\Varid{h_{ND+f}}\hsdot{\circ }{.}\Varid{(\Leftrightarrow)})\;(\Conid{Op}\hsdot{\circ }{.}\Conid{Inr}\hsdot{\circ }{.}\Conid{Inr}\hsdot{\circ }{.}\Conid{Inl}\hsdot{\circ }{.}\Conid{Get}\mathbin{\$}\lambda (\Conid{Stack}\;\Varid{xs})\to {}\<[E]%
\\
\>[5]{}\hsindent{2}{}\<[7]%
\>[7]{}\mathbf{case}\;\Varid{xs}\;\mathbf{of}{}\<[E]%
\\
\>[7]{}\hsindent{2}{}\<[9]%
\>[9]{}[\mskip1.5mu \mskip1.5mu]{}\<[27]%
\>[27]{}\to \Varid{\eta}\;(){}\<[E]%
\\
\>[7]{}\hsindent{2}{}\<[9]%
\>[9]{}(\Conid{Right}\;()\mathbin{:}\Varid{xs'}){}\<[27]%
\>[27]{}\to \mathbf{do}\;\Varid{put}\;(\Conid{Stack}\;\Varid{xs'});\Varid{\eta}\;(){}\<[E]%
\\
\>[7]{}\hsindent{2}{}\<[9]%
\>[9]{}(\Conid{Left}\;\Varid{r}{}\<[19]%
\>[19]{}\mathbin{:}\Varid{xs'}){}\<[27]%
\>[27]{}\to \mathbf{do}\;\Varid{put}\;(\Conid{Stack}\;\Varid{xs'});\Varid{restore}\;\Varid{r};\Varid{untrail}{}\<[E]%
\\
\>[3]{}\hsindent{2}{}\<[5]%
\>[5]{})\Varid{s})\;\Varid{t}{}\<[E]%
\\
\>[3]{}\mathrel{=}\mbox{\commentbegin ~ definition of \ensuremath{\Varid{h_{ND+f}}} and \ensuremath{\Varid{(\Leftrightarrow)}}  \commentend}{}\<[E]%
\\
\>[3]{}\hsindent{2}{}\<[5]%
\>[5]{}\Varid{h_{State1}}\;(\Varid{h_{Modify1}}\;(\Conid{Op}\hsdot{\circ }{.}\Conid{Inr}\hsdot{\circ }{.}\Conid{Inl}\hsdot{\circ }{.}\Conid{Get}\mathbin{\$}\lambda (\Conid{Stack}\;\Varid{xs})\to {}\<[E]%
\\
\>[5]{}\hsindent{2}{}\<[7]%
\>[7]{}\mathbf{case}\;\Varid{xs}\;\mathbf{of}{}\<[E]%
\\
\>[7]{}\hsindent{2}{}\<[9]%
\>[9]{}[\mskip1.5mu \mskip1.5mu]{}\<[27]%
\>[27]{}\to \Varid{\eta}\;[\mskip1.5mu ()\mskip1.5mu]{}\<[E]%
\\
\>[7]{}\hsindent{2}{}\<[9]%
\>[9]{}(\Conid{Right}\;()\mathbin{:}\Varid{xs'}){}\<[27]%
\>[27]{}\to \Varid{h_{ND+f}}\hsdot{\circ }{.}\Varid{(\Leftrightarrow)}\mathbin{\$}\mathbf{do}\;\Varid{put}\;(\Conid{Stack}\;\Varid{xs'});\Varid{\eta}\;(){}\<[E]%
\\
\>[7]{}\hsindent{2}{}\<[9]%
\>[9]{}(\Conid{Left}\;\Varid{r}{}\<[19]%
\>[19]{}\mathbin{:}\Varid{xs'}){}\<[27]%
\>[27]{}\to \Varid{h_{ND+f}}\hsdot{\circ }{.}\Varid{(\Leftrightarrow)}\mathbin{\$}\mathbf{do}\;\Varid{put}\;(\Conid{Stack}\;\Varid{xs'});\Varid{restore}\;\Varid{r};\Varid{untrail}{}\<[E]%
\\
\>[3]{}\hsindent{2}{}\<[5]%
\>[5]{})\Varid{s})\;\Varid{t}{}\<[E]%
\\
\>[3]{}\mathrel{=}\mbox{\commentbegin ~ definition of \ensuremath{\Varid{h_{Modify1}}} and function application  \commentend}{}\<[E]%
\\
\>[3]{}\hsindent{2}{}\<[5]%
\>[5]{}\Varid{h_{State1}}\;(\Conid{Op}\hsdot{\circ }{.}\Conid{Inl}\hsdot{\circ }{.}\Conid{Get}\mathbin{\$}\lambda (\Conid{Stack}\;\Varid{xs})\to {}\<[E]%
\\
\>[5]{}\hsindent{2}{}\<[7]%
\>[7]{}\mathbf{case}\;\Varid{xs}\;\mathbf{of}{}\<[E]%
\\
\>[7]{}\hsindent{2}{}\<[9]%
\>[9]{}[\mskip1.5mu \mskip1.5mu]{}\<[27]%
\>[27]{}\to \Varid{\eta}\;([\mskip1.5mu ()\mskip1.5mu],\Varid{s}){}\<[E]%
\\
\>[7]{}\hsindent{2}{}\<[9]%
\>[9]{}(\Conid{Right}\;()\mathbin{:}\Varid{xs'}){}\<[27]%
\>[27]{}\to \Varid{h_{Modify1}}\;(\Varid{h_{ND+f}}\hsdot{\circ }{.}\Varid{(\Leftrightarrow)}\mathbin{\$}\mathbf{do}\;\Varid{put}\;(\Conid{Stack}\;\Varid{xs'});\Varid{\eta}\;())\;\Varid{s}{}\<[E]%
\\
\>[7]{}\hsindent{2}{}\<[9]%
\>[9]{}(\Conid{Left}\;\Varid{r}{}\<[19]%
\>[19]{}\mathbin{:}\Varid{xs'}){}\<[27]%
\>[27]{}\to \Varid{h_{Modify1}}\;(\Varid{h_{ND+f}}\hsdot{\circ }{.}\Varid{(\Leftrightarrow)}\mathbin{\$}\mathbf{do}\;\Varid{put}\;(\Conid{Stack}\;\Varid{xs'});\Varid{restore}\;\Varid{r};\Varid{untrail})\;\Varid{s}{}\<[E]%
\\
\>[3]{}\hsindent{2}{}\<[5]%
\>[5]{})\;\Varid{t}{}\<[E]%
\\
\>[3]{}\mathrel{=}\mbox{\commentbegin ~ definition of \ensuremath{\Varid{h_{State1}}} and function application; let \ensuremath{\Varid{t}\mathrel{=}\Conid{Stack}\;(\Varid{ys}+\!\!+(\Conid{Right}\;()\mathbin{:}\Varid{xs}))}  \commentend}{}\<[E]%
\\
\>[3]{}\hsindent{4}{}\<[7]%
\>[7]{}\mathbf{case}\;(\Varid{ys}+\!\!+(\Conid{Right}\;()\mathbin{:}\Varid{xs}))\;\mathbf{of}{}\<[E]%
\\
\>[7]{}\hsindent{2}{}\<[9]%
\>[9]{}[\mskip1.5mu \mskip1.5mu]{}\<[27]%
\>[27]{}\to \Varid{\eta}\;(([\mskip1.5mu ()\mskip1.5mu],\Varid{s}),\Varid{t}){}\<[E]%
\\
\>[7]{}\hsindent{2}{}\<[9]%
\>[9]{}(\Conid{Right}\;()\mathbin{:}\Varid{xs'}){}\<[27]%
\>[27]{}\to \Varid{\eta}\;(([\mskip1.5mu ()\mskip1.5mu],\Varid{s}),\Conid{Stack}\;\Varid{xs'}){}\<[E]%
\\
\>[7]{}\hsindent{2}{}\<[9]%
\>[9]{}(\Conid{Left}\;\Varid{r}{}\<[19]%
\>[19]{}\mathbin{:}\Varid{xs'}){}\<[27]%
\>[27]{}\to \Varid{h_{State1}}\;(\Varid{h_{Modify1}}\;(\Varid{h_{ND+f}}\hsdot{\circ }{.}\Varid{(\Leftrightarrow)}\mathbin{\$}{}\<[E]%
\\
\>[27]{}\hsindent{5}{}\<[32]%
\>[32]{}\mathbf{do}\;\Varid{put}\;(\Conid{Stack}\;\Varid{xs'});\Varid{restore}\;\Varid{r};\Varid{untrail})\;\Varid{s})\;\Varid{t}{}\<[E]%
\ColumnHook
\end{hscode}\resethooks
\indentend Then, we proceed by an induction on the structure of \ensuremath{\Varid{ys}}.

\noindent \mbox{\underline{case \ensuremath{\Varid{ys}\mathrel{=}[\mskip1.5mu \mskip1.5mu]}}}\indentbegin \begin{hscode}\SaveRestoreHook
\column{B}{@{}>{\hspre}l<{\hspost}@{}}%
\column{3}{@{}>{\hspre}l<{\hspost}@{}}%
\column{5}{@{}>{\hspre}l<{\hspost}@{}}%
\column{7}{@{}>{\hspre}l<{\hspost}@{}}%
\column{17}{@{}>{\hspre}l<{\hspost}@{}}%
\column{25}{@{}>{\hspre}l<{\hspost}@{}}%
\column{30}{@{}>{\hspre}l<{\hspost}@{}}%
\column{E}{@{}>{\hspre}l<{\hspost}@{}}%
\>[5]{}\mathbf{case}\;(\Varid{ys}+\!\!+(\Conid{Right}\;()\mathbin{:}\Varid{xs}))\;\mathbf{of}{}\<[E]%
\\
\>[5]{}\hsindent{2}{}\<[7]%
\>[7]{}[\mskip1.5mu \mskip1.5mu]{}\<[25]%
\>[25]{}\to \Varid{\eta}\;(([\mskip1.5mu ()\mskip1.5mu],\Varid{s}),\Varid{t}){}\<[E]%
\\
\>[5]{}\hsindent{2}{}\<[7]%
\>[7]{}(\Conid{Right}\;()\mathbin{:}\Varid{xs'}){}\<[25]%
\>[25]{}\to \Varid{\eta}\;(([\mskip1.5mu ()\mskip1.5mu],\Varid{s}),\Conid{Stack}\;\Varid{xs'}){}\<[E]%
\\
\>[5]{}\hsindent{2}{}\<[7]%
\>[7]{}(\Conid{Left}\;\Varid{r}{}\<[17]%
\>[17]{}\mathbin{:}\Varid{xs'}){}\<[25]%
\>[25]{}\to \Varid{h_{State1}}\;(\Varid{h_{Modify1}}\;(\Varid{h_{ND+f}}\hsdot{\circ }{.}\Varid{(\Leftrightarrow)}\mathbin{\$}{}\<[E]%
\\
\>[25]{}\hsindent{5}{}\<[30]%
\>[30]{}\mathbf{do}\;\Varid{put}\;(\Conid{Stack}\;\Varid{xs'});\Varid{restore}\;\Varid{r};\Varid{untrail})\;\Varid{s})\;\Varid{t}{}\<[E]%
\\
\>[3]{}\mathrel{=}\mbox{\commentbegin ~ case split  \commentend}{}\<[E]%
\\
\>[3]{}\hsindent{2}{}\<[5]%
\>[5]{}\Varid{\eta}\;(([\mskip1.5mu ()\mskip1.5mu],\Varid{s}),\Conid{Stack}\;\Varid{xs}){}\<[E]%
\\
\>[3]{}\mathrel{=}\mbox{\commentbegin ~ definition of \ensuremath{\Varid{fminus}}  \commentend}{}\<[E]%
\\
\>[3]{}\hsindent{2}{}\<[5]%
\>[5]{}\Varid{\eta}\;(([\mskip1.5mu ()\mskip1.5mu],\Varid{fminus}\;\Varid{s}\;[\mskip1.5mu \mskip1.5mu]),\Conid{Stack}\;\Varid{xs}){}\<[E]%
\ColumnHook
\end{hscode}\resethooks
\indentend \noindent \mbox{\underline{case \ensuremath{\Varid{ys}\mathrel{=}(\Conid{Left}\;\Varid{r}\mathbin{:}\Varid{ys'})}}}\indentbegin \begin{hscode}\SaveRestoreHook
\column{B}{@{}>{\hspre}l<{\hspost}@{}}%
\column{3}{@{}>{\hspre}l<{\hspost}@{}}%
\column{5}{@{}>{\hspre}l<{\hspost}@{}}%
\column{7}{@{}>{\hspre}l<{\hspost}@{}}%
\column{17}{@{}>{\hspre}l<{\hspost}@{}}%
\column{25}{@{}>{\hspre}l<{\hspost}@{}}%
\column{30}{@{}>{\hspre}l<{\hspost}@{}}%
\column{E}{@{}>{\hspre}l<{\hspost}@{}}%
\>[5]{}\mathbf{case}\;(\Varid{ys}+\!\!+(\Conid{Right}\;()\mathbin{:}\Varid{xs}))\;\mathbf{of}{}\<[E]%
\\
\>[5]{}\hsindent{2}{}\<[7]%
\>[7]{}[\mskip1.5mu \mskip1.5mu]{}\<[25]%
\>[25]{}\to \Varid{\eta}\;(([\mskip1.5mu ()\mskip1.5mu],\Varid{s}),\Varid{t}){}\<[E]%
\\
\>[5]{}\hsindent{2}{}\<[7]%
\>[7]{}(\Conid{Right}\;()\mathbin{:}\Varid{xs'}){}\<[25]%
\>[25]{}\to \Varid{\eta}\;(([\mskip1.5mu ()\mskip1.5mu],\Varid{s}),\Conid{Stack}\;\Varid{xs'}){}\<[E]%
\\
\>[5]{}\hsindent{2}{}\<[7]%
\>[7]{}(\Conid{Left}\;\Varid{r}{}\<[17]%
\>[17]{}\mathbin{:}\Varid{xs'}){}\<[25]%
\>[25]{}\to \Varid{h_{State1}}\;(\Varid{h_{Modify1}}\;(\Varid{h_{ND+f}}\hsdot{\circ }{.}\Varid{(\Leftrightarrow)}\mathbin{\$}{}\<[E]%
\\
\>[25]{}\hsindent{5}{}\<[30]%
\>[30]{}\mathbf{do}\;\Varid{put}\;(\Conid{Stack}\;\Varid{xs'});\Varid{restore}\;\Varid{r};\Varid{untrail})\;\Varid{s})\;\Varid{t}{}\<[E]%
\\
\>[3]{}\mathrel{=}\mbox{\commentbegin ~ case split  \commentend}{}\<[E]%
\\
\>[3]{}\hsindent{2}{}\<[5]%
\>[5]{}\Varid{h_{State1}}\;(\Varid{h_{Modify1}}\;(\Varid{h_{ND+f}}\hsdot{\circ }{.}\Varid{(\Leftrightarrow)}\mathbin{\$}{}\<[E]%
\\
\>[5]{}\hsindent{2}{}\<[7]%
\>[7]{}\mathbf{do}\;\Varid{put}\;(\Conid{Stack}\;(\Varid{ys'}+\!\!+(\Conid{Right}\;()\mathbin{:}\Varid{xs})));\Varid{restore}\;\Varid{r};\Varid{untrail})\;\Varid{s})\;\Varid{t}{}\<[E]%
\\
\>[3]{}\mathrel{=}\mbox{\commentbegin ~ definition of \ensuremath{\Varid{h_{State1}},\Varid{h_{Modify1}},\Varid{h_{ND+f}},\Varid{(\Leftrightarrow)}} and reformulation  \commentend}{}\<[E]%
\\
\>[3]{}\hsindent{2}{}\<[5]%
\>[5]{}\Varid{h_{State1}}\;((\Varid{h_{Modify1}}\hsdot{\circ }{.}\Varid{h_{ND+f}}\hsdot{\circ }{.}\Varid{(\Leftrightarrow)})\;\Varid{untrail}\;(\Varid{s}\mathbin{\ominus}\Varid{r}))\;{}\<[E]%
\\
\>[5]{}\hsindent{2}{}\<[7]%
\>[7]{}(\Conid{Stack}\;(\Varid{ys'}+\!\!+(\Conid{Right}\;()\mathbin{:}\Varid{xs}))){}\<[E]%
\\
\>[3]{}\mathrel{=}\mbox{\commentbegin ~ induction hypothesis on \ensuremath{\Varid{ys'}}  \commentend}{}\<[E]%
\\
\>[3]{}\hsindent{2}{}\<[5]%
\>[5]{}\Varid{\eta}\;(([\mskip1.5mu ()\mskip1.5mu],\Varid{fminus}\;(\Varid{s}\mathbin{\ominus}\Varid{r})\;\Varid{ys'}),\Conid{Stack}\;\Varid{xs}){}\<[E]%
\\
\>[3]{}\mathrel{=}\mbox{\commentbegin ~ definition of \ensuremath{\Varid{fminus}}   \commentend}{}\<[E]%
\\
\>[3]{}\hsindent{2}{}\<[5]%
\>[5]{}\Varid{\eta}\;(([\mskip1.5mu ()\mskip1.5mu],\Varid{fminus}\;\Varid{s}\;(\Conid{Left}\;\Varid{r}\mathbin{:}\Varid{ys'})),\Conid{Stack}\;\Varid{xs}){}\<[E]%
\\
\>[3]{}\mathrel{=}\mbox{\commentbegin ~ definition of \ensuremath{\Varid{ys}}   \commentend}{}\<[E]%
\\
\>[3]{}\hsindent{2}{}\<[5]%
\>[5]{}\Varid{\eta}\;(([\mskip1.5mu ()\mskip1.5mu],\Varid{fminus}\;\Varid{s}\;\Varid{ys}),\Conid{Stack}\;\Varid{xs}){}\<[E]%
\ColumnHook
\end{hscode}\resethooks
\indentend \end{proof}

The following lemma is obvious from
\Cref{lemma:trail-stack-tracks-state} and \Cref{lemma:undoTrail-undos}.
It shows that we can restore the previous state and stack by pushing a
time stamp on the trail stack and use the function \ensuremath{\Varid{untrail}}
afterwards.

\begin{lemma}[State and stack are restored]~ \label{lemma:state-stack-restore}
For \ensuremath{\Varid{t}\mathbin{::}\Conid{Stack}\;(\Conid{Either}\;\Varid{r}\;())}, \ensuremath{\Varid{s}\mathbin{::}\Varid{s}}, and \ensuremath{\Varid{p}\mathbin{::}\Conid{Free}\;(\Varid{Modify_{F}}\;\Varid{s}\;\Varid{r}\mathrel{{:}{+}{:}}\Varid{Nondet_{F}}\mathrel{{:}{+}{:}}\Varid{f})\;\Varid{a}} which does not use the \ensuremath{\Varid{restore}} operation, we
have
\indentbegin \begin{hscode}\SaveRestoreHook
\column{B}{@{}>{\hspre}l<{\hspost}@{}}%
\column{3}{@{}>{\hspre}c<{\hspost}@{}}%
\column{3E}{@{}l@{}}%
\column{4}{@{}>{\hspre}l<{\hspost}@{}}%
\column{8}{@{}>{\hspre}l<{\hspost}@{}}%
\column{14}{@{}>{\hspre}l<{\hspost}@{}}%
\column{19}{@{}>{\hspre}l<{\hspost}@{}}%
\column{23}{@{}>{\hspre}l<{\hspost}@{}}%
\column{E}{@{}>{\hspre}l<{\hspost}@{}}%
\>[4]{}\mathbf{do}\;{}\<[8]%
\>[8]{}((\anonymous ,\Varid{s}_{1}),\Varid{t}_{1}){}\<[23]%
\>[23]{}\leftarrow \Varid{h_{State1}}\;((\Varid{h_{Modify1}}\hsdot{\circ }{.}\Varid{h_{ND+f}}\hsdot{\circ }{.}\Varid{(\Leftrightarrow)})\;(\Varid{pushStack}\;(\Conid{Right}\;()))\;\Varid{s})\;\Varid{t}{}\<[E]%
\\
\>[8]{}((\Varid{x},\Varid{s}_{2}),\Varid{t}_{2}){}\<[23]%
\>[23]{}\leftarrow \Varid{h_{State1}}\;((\Varid{h_{Modify1}}\hsdot{\circ }{.}\Varid{h_{ND+f}}\hsdot{\circ }{.}\Varid{(\Leftrightarrow)})\;(\Varid{local2trail}\;\Varid{p})\;\Varid{s}_{1})\;\Varid{t}_{1}{}\<[E]%
\\
\>[8]{}((\anonymous ,\Varid{s3}),\Varid{t3}){}\<[23]%
\>[23]{}\leftarrow \Varid{h_{State1}}\;((\Varid{h_{Modify1}}\hsdot{\circ }{.}\Varid{h_{ND+f}}\hsdot{\circ }{.}\Varid{(\Leftrightarrow)})\;\Varid{untrail}\;\Varid{s}_{2})\;\Varid{t}_{2}{}\<[E]%
\\
\>[8]{}\Varid{\eta}\;((\Varid{x},\Varid{s3}),\Varid{t3}){}\<[E]%
\\
\>[3]{}\mathrel{=}{}\<[3E]%
\\
\>[3]{}\hsindent{1}{}\<[4]%
\>[4]{}\mathbf{do}\;{}\<[8]%
\>[8]{}((\Varid{x},{}\<[14]%
\>[14]{}\anonymous ),{}\<[19]%
\>[19]{}\anonymous ){}\<[23]%
\>[23]{}\leftarrow \Varid{h_{State1}}\;((\Varid{h_{Modify1}}\hsdot{\circ }{.}\Varid{h_{ND+f}}\hsdot{\circ }{.}\Varid{(\Leftrightarrow)})\;(\Varid{local2trail}\;\Varid{p})\;\Varid{s})\;(\Conid{Stack}\;[\mskip1.5mu \mskip1.5mu]){}\<[E]%
\\
\>[8]{}\Varid{\eta}\;((\Varid{x},\Varid{s}),\Varid{t}){}\<[E]%
\ColumnHook
\end{hscode}\resethooks
\indentend \end{lemma}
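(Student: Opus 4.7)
The plan is to reduce the left-hand side by evaluating the three bindings in sequence---relying on the two preceding auxiliary lemmas (Lemma~\ref{lemma:trail-stack-tracks-state} and Lemma~\ref{lemma:undoTrail-undos}) together with the plus-minus law~(\ref{eq:plus-minus})---and then to reconcile the final form with the right-hand side via one further application of Lemma~\ref{lemma:trail-stack-tracks-state}.

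First, I would unfold the first binding by direct computation: running \ensuremath{\Varid{pushStack}\;(\Conid{Right}\;())} through the composition \ensuremath{\Varid{h_{State1}}\hsdot{\circ }{.}\Varid{h_{Modify1}}\hsdot{\circ }{.}\Varid{h_{ND+f}}\hsdot{\circ }{.}\Varid{(\Leftrightarrow)}}, using the definitions of \ensuremath{\Varid{get}}, \ensuremath{\Varid{put}}, and each handler, yields \ensuremath{\Varid{s}_{1}\mathrel{=}\Varid{s}} and \ensuremath{\Varid{t}_{1}\mathrel{=}\Conid{Stack}\;(\Conid{Right}\;()\mathbin{:}\Varid{xs})}, where \ensuremath{\Varid{t}\mathrel{=}\Conid{Stack}\;\Varid{xs}}. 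Next I would invoke Lemma~\ref{lemma:trail-stack-tracks-state} on the middle binding with state \ensuremath{\Varid{s}} and trail \ensuremath{\Varid{t}_{1}}, producing some \ensuremath{\Varid{ys}\mathrel{=}[\mskip1.5mu \Conid{Left}\;\Varid{r}_{\Varid{n}},\mathbin{...},\Conid{Left}\;\Varid{r}_{1}\mskip1.5mu]} with \ensuremath{\Varid{s}_{2}\mathrel{=}\Varid{fplus}\;\Varid{s}\;\Varid{ys}} and \ensuremath{\Varid{t}_{2}\mathrel{=}\Varid{extend}\;\Varid{t}_{1}\;\Varid{ys}\mathrel{=}\Conid{Stack}\;(\Varid{ys}+\!\!+(\Conid{Right}\;()\mathbin{:}\Varid{xs}))}. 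This shape of \ensuremath{\Varid{t}_{2}} exactly matches the precondition of Lemma~\ref{lemma:undoTrail-undos}, which I would then apply to the third binding to get \ensuremath{\Varid{s3}\mathrel{=}\Varid{fminus}\;(\Varid{fplus}\;\Varid{s}\;\Varid{ys})\;\Varid{ys}} and \ensuremath{\Varid{t3}\mathrel{=}\Conid{Stack}\;\Varid{xs}\mathrel{=}\Varid{t}}. Iterated use of (\ref{eq:plus-minus}) collapses \ensuremath{\Varid{fminus}\;(\Varid{fplus}\;\Varid{s}\;\Varid{ys})\;\Varid{ys}} to \ensuremath{\Varid{s}}, so the left-hand side reduces to \ensuremath{\Varid{\eta}\;((\Varid{x},\Varid{s}),\Varid{t})}.

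For the right-hand side, I would apply Lemma~\ref{lemma:trail-stack-tracks-state} once more to \ensuremath{\Varid{p}} with state \ensuremath{\Varid{s}} and empty trail \ensuremath{\Conid{Stack}\;[\mskip1.5mu \mskip1.5mu]}. The pattern \ensuremath{((\Varid{x},\anonymous ),\anonymous )} in the inner binding discards the intermediate state and stack entirely, so the outer \ensuremath{\Varid{\eta}} returns \ensuremath{((\Varid{x},\Varid{s}),\Varid{t})}, matching the reduced LHS.

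The main obstacle will be establishing that the same witness \ensuremath{\Varid{ys}} appears on both sides. Lemma~\ref{lemma:trail-stack-tracks-state} is existentially quantified over \ensuremath{\Varid{ys}}, so two invocations could \emph{a priori} pick different witnesses. However, \ensuremath{\Varid{ys}} is in fact uniquely determined by the recursive structure of \ensuremath{\Varid{p}} and the initial state \ensuremath{\Varid{s}} alone: the initial trail appears only under \ensuremath{\Varid{extend}} in the lemma's conclusion and is never inspected in its inductive proof. This independence follows by inspecting that proof, and if necessary I would strengthen the statement of Lemma~\ref{lemma:trail-stack-tracks-state} to make the functional dependence of \ensuremath{\Varid{ys}} on \ensuremath{\Varid{p}} and \ensuremath{\Varid{s}} explicit. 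With that clarification the remainder of the argument is a mechanical assembly of the prior results.
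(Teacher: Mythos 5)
Your calculation follows the same route as the paper's proof: evaluate the \ensuremath{\Varid{pushStack}\;(\Conid{Right}\;())} binding directly to get \ensuremath{\Varid{s}_{1}\mathrel{=}\Varid{s}} and \ensuremath{\Varid{t}_{1}\mathrel{=}\Conid{Stack}\;(\Conid{Right}\;()\mathbin{:}\Varid{xs})}, apply \Cref{lemma:trail-stack-tracks-state} to the \ensuremath{\Varid{local2trail}\;\Varid{p}} binding, apply \Cref{lemma:undoTrail-undos} to the \ensuremath{\Varid{untrail}} binding, and cancel \ensuremath{\Varid{fminus}\;(\Varid{fplus}\;\Varid{s}\;\Varid{ys})\;\Varid{ys}} to \ensuremath{\Varid{s}} with the \textbf{plus-minus} law~(\ref{eq:plus-minus}). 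All of that is right and is exactly what the paper does.

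The one place you go astray is in how you close the argument. \Cref{lemma:trail-stack-tracks-state} is an equation between monadic computations: it rewrites the run on trail \ensuremath{\Varid{t}_{1}} as the run on \ensuremath{\Conid{Stack}\;[\mskip1.5mu \mskip1.5mu]} \emph{followed by} a pure correction of the state and stack. After that single rewriting and a reassociation by the monad laws, the term \ensuremath{\Varid{h_{State1}}\;((\Varid{h_{Modify1}}\hsdot{\circ }{.}\Varid{h_{ND+f}}\hsdot{\circ }{.}\Varid{(\Leftrightarrow)})\;(\Varid{local2trail}\;\Varid{p})\;\Varid{s})\;(\Conid{Stack}\;[\mskip1.5mu \mskip1.5mu])} is already sitting at the head of the left-hand side and is carried along symbolically; once the trailing correction simplifies to \ensuremath{\Varid{\eta}\;((\Varid{x},\Varid{s}),\Varid{t})} you have arrived literally at the right-hand side. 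No lemma is ever applied to the right-hand side, so there are not two existential witnesses to reconcile, and no strengthening of \Cref{lemma:trail-stack-tracks-state} is needed. Relatedly, the left-hand side does not reduce to the pure value \ensuremath{\Varid{\eta}\;((\Varid{x},\Varid{s}),\Varid{t})}: the variable \ensuremath{\Varid{x}} is bound by a computation in \ensuremath{\Conid{Free}\;\Varid{f}} that may still branch, so the reduced form is a \ensuremath{\mathbf{do}}-block, not a tuple, and treating \ensuremath{\Varid{s}_{2}} and \ensuremath{\Varid{t}_{2}} as single concrete values is what manufactures the apparent witness problem. Your fallback of showing that \ensuremath{\Varid{ys}} depends only on \ensuremath{\Varid{p}} and \ensuremath{\Varid{s}} would also work, but it is machinery the calculation does not need.
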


\begin{proof}~
Suppose \ensuremath{\Varid{t}\mathrel{=}\Conid{Stack}\;\Varid{xs}}. We calculate as follows.
\indentbegin \begin{hscode}\SaveRestoreHook
\column{B}{@{}>{\hspre}l<{\hspost}@{}}%
\column{3}{@{}>{\hspre}l<{\hspost}@{}}%
\column{4}{@{}>{\hspre}l<{\hspost}@{}}%
\column{8}{@{}>{\hspre}l<{\hspost}@{}}%
\column{23}{@{}>{\hspre}l<{\hspost}@{}}%
\column{28}{@{}>{\hspre}l<{\hspost}@{}}%
\column{30}{@{}>{\hspre}l<{\hspost}@{}}%
\column{45}{@{}>{\hspre}l<{\hspost}@{}}%
\column{E}{@{}>{\hspre}l<{\hspost}@{}}%
\>[4]{}\mathbf{do}\;{}\<[8]%
\>[8]{}((\anonymous ,\Varid{s}_{1}),\Varid{t}_{1}){}\<[23]%
\>[23]{}\leftarrow \Varid{h_{State1}}\;((\Varid{h_{Modify1}}\hsdot{\circ }{.}\Varid{h_{ND+f}}\hsdot{\circ }{.}\Varid{(\Leftrightarrow)})\;(\Varid{pushStack}\;(\Conid{Right}\;()))\;\Varid{s})\;(\Conid{Stack}\;\Varid{xs}){}\<[E]%
\\
\>[8]{}((\Varid{x},\Varid{s}_{2}),\Varid{t}_{2}){}\<[23]%
\>[23]{}\leftarrow \Varid{h_{State1}}\;((\Varid{h_{Modify1}}\hsdot{\circ }{.}\Varid{h_{ND+f}}\hsdot{\circ }{.}\Varid{(\Leftrightarrow)})\;(\Varid{local2trail}\;\Varid{p})\;\Varid{s}_{1})\;\Varid{t}_{1}{}\<[E]%
\\
\>[8]{}((\anonymous ,\Varid{s3}),\Varid{t3}){}\<[23]%
\>[23]{}\leftarrow \Varid{h_{State1}}\;((\Varid{h_{Modify1}}\hsdot{\circ }{.}\Varid{h_{ND+f}}\hsdot{\circ }{.}\Varid{(\Leftrightarrow)})\;\Varid{untrail}\;\Varid{s}_{2})\;\Varid{t}_{2}{}\<[E]%
\\
\>[8]{}\Varid{\eta}\;((\Varid{x},\Varid{s3}),\Varid{t3}){}\<[E]%
\\
\>[3]{}\mathrel{=}\mbox{\commentbegin ~ definition of \ensuremath{\Varid{h_{State1}},\Varid{h_{Modify1}},\Varid{h_{ND+f}},\Varid{(\Leftrightarrow)},\Varid{pushStack}}   \commentend}{}\<[E]%
\\
\>[3]{}\hsindent{1}{}\<[4]%
\>[4]{}\mathbf{do}\;{}\<[8]%
\>[8]{}((\anonymous ,\Varid{s}_{1}),\Varid{t}_{1}){}\<[23]%
\>[23]{}\leftarrow \Varid{\eta}\;(([\mskip1.5mu ()\mskip1.5mu],\Varid{s}),\Conid{Stack}\;(\Conid{Right}\;()\mathbin{:}\Varid{xs})){}\<[E]%
\\
\>[8]{}((\Varid{x},\Varid{s}_{2}),\Varid{t}_{2}){}\<[23]%
\>[23]{}\leftarrow \Varid{h_{State1}}\;((\Varid{h_{Modify1}}\hsdot{\circ }{.}\Varid{h_{ND+f}}\hsdot{\circ }{.}\Varid{(\Leftrightarrow)})\;(\Varid{local2trail}\;\Varid{p})\;\Varid{s}_{1})\;\Varid{t}_{1}{}\<[E]%
\\
\>[8]{}((\anonymous ,\Varid{s3}),\Varid{t3}){}\<[23]%
\>[23]{}\leftarrow \Varid{h_{State1}}\;((\Varid{h_{Modify1}}\hsdot{\circ }{.}\Varid{h_{ND+f}}\hsdot{\circ }{.}\Varid{(\Leftrightarrow)})\;\Varid{untrail}\;\Varid{s}_{2})\;\Varid{t}_{2}{}\<[E]%
\\
\>[8]{}\Varid{\eta}\;((\Varid{x},\Varid{s3}),\Varid{t3}){}\<[E]%
\\
\>[3]{}\mathrel{=}\mbox{\commentbegin ~ monad law   \commentend}{}\<[E]%
\\
\>[3]{}\hsindent{1}{}\<[4]%
\>[4]{}\mathbf{do}\;{}\<[8]%
\>[8]{}((\Varid{x},\Varid{s}_{2}),\Varid{t}_{2}){}\<[23]%
\>[23]{}\leftarrow \Varid{h_{State1}}\;((\Varid{h_{Modify1}}\hsdot{\circ }{.}\Varid{h_{ND+f}}\hsdot{\circ }{.}\Varid{(\Leftrightarrow)})\;(\Varid{local2trail}\;\Varid{p})\;\Varid{s})\;{}\<[E]%
\\
\>[23]{}\hsindent{5}{}\<[28]%
\>[28]{}(\Conid{Stack}\;(\Conid{Right}\;()\mathbin{:}\Varid{xs})){}\<[E]%
\\
\>[8]{}((\anonymous ,\Varid{s3}),\Varid{t3}){}\<[23]%
\>[23]{}\leftarrow \Varid{h_{State1}}\;((\Varid{h_{Modify1}}\hsdot{\circ }{.}\Varid{h_{ND+f}}\hsdot{\circ }{.}\Varid{(\Leftrightarrow)})\;\Varid{untrail}\;\Varid{s}_{2})\;\Varid{t}_{2}{}\<[E]%
\\
\>[8]{}\Varid{\eta}\;((\Varid{x},\Varid{s3}),\Varid{t3}){}\<[E]%
\\
\>[3]{}\mathrel{=}\mbox{\commentbegin ~ by \Cref{lemma:trail-stack-tracks-state}, for some \ensuremath{\Varid{ys}\mathrel{=}[\mskip1.5mu \Conid{Left}\;\Varid{r}_{1},\mathbin{...},\Conid{Left}\;\Varid{r}_{\Varid{n}}\mskip1.5mu]} the equation holds   \commentend}{}\<[E]%
\\
\>[3]{}\hsindent{1}{}\<[4]%
\>[4]{}\mathbf{do}\;{}\<[8]%
\>[8]{}((\Varid{x},\Varid{s}_{2}),\Varid{t}_{2}){}\<[23]%
\>[23]{}\leftarrow \mathbf{do}\;{}\<[30]%
\>[30]{}((\Varid{x},\anonymous ),\anonymous )\leftarrow \Varid{h_{State1}}\;((\Varid{h_{Modify1}}\hsdot{\circ }{.}\Varid{h_{ND+f}}\hsdot{\circ }{.}\Varid{(\Leftrightarrow)}){}\<[E]%
\\
\>[30]{}\hsindent{15}{}\<[45]%
\>[45]{}(\Varid{local2trail}\;\Varid{p})\;\Varid{s})\;(\Conid{Stack}\;[\mskip1.5mu \mskip1.5mu]){}\<[E]%
\\
\>[30]{}\Varid{\eta}\;((\Varid{x},\Varid{fplus}\;\Varid{s}\;\Varid{ys}),\Varid{extend}\;(\Conid{Stack}\;(\Conid{Right}\;()\mathbin{:}\Varid{xs}))\;\Varid{ys}){}\<[E]%
\\
\>[8]{}((\anonymous ,\Varid{s3}),\Varid{t3}){}\<[23]%
\>[23]{}\leftarrow \Varid{h_{State1}}\;((\Varid{h_{Modify1}}\hsdot{\circ }{.}\Varid{h_{ND+f}}\hsdot{\circ }{.}\Varid{(\Leftrightarrow)})\;\Varid{untrail}\;\Varid{s}_{2})\;\Varid{t}_{2}{}\<[E]%
\\
\>[8]{}\Varid{\eta}\;((\Varid{x},\Varid{s3}),\Varid{t3}){}\<[E]%
\\
\>[3]{}\mathrel{=}\mbox{\commentbegin ~ monad law and definition of \ensuremath{\Varid{extend}}  \commentend}{}\<[E]%
\\
\>[3]{}\hsindent{1}{}\<[4]%
\>[4]{}\mathbf{do}\;{}\<[8]%
\>[8]{}((\Varid{x},\anonymous ),\anonymous )\leftarrow \Varid{h_{State1}}\;((\Varid{h_{Modify1}}\hsdot{\circ }{.}\Varid{h_{ND+f}}\hsdot{\circ }{.}\Varid{(\Leftrightarrow)})\;(\Varid{local2trail}\;\Varid{p})\;\Varid{s})\;(\Conid{Stack}\;[\mskip1.5mu \mskip1.5mu]){}\<[E]%
\\
\>[8]{}((\anonymous ,\Varid{s3}),\Varid{t3}){}\<[23]%
\>[23]{}\leftarrow \Varid{h_{State1}}\;((\Varid{h_{Modify1}}\hsdot{\circ }{.}\Varid{h_{ND+f}}\hsdot{\circ }{.}\Varid{(\Leftrightarrow)}){}\<[E]%
\\
\>[23]{}\hsindent{5}{}\<[28]%
\>[28]{}\Varid{untrail}\;(\Varid{fplus}\;\Varid{s}\;\Varid{ys}))\;(\Conid{Stack}\;(\Varid{ys}+\!\!+(\Conid{Right}\;()\mathbin{:}\Varid{xs}))){}\<[E]%
\\
\>[8]{}\Varid{\eta}\;((\Varid{x},\Varid{s3}),\Varid{t3}){}\<[E]%
\\
\>[3]{}\mathrel{=}\mbox{\commentbegin ~ \Cref{lemma:undoTrail-undos} and monad law  \commentend}{}\<[E]%
\\
\>[3]{}\hsindent{1}{}\<[4]%
\>[4]{}\mathbf{do}\;{}\<[8]%
\>[8]{}((\Varid{x},\anonymous ),\anonymous )\leftarrow \Varid{h_{State1}}\;((\Varid{h_{Modify1}}\hsdot{\circ }{.}\Varid{h_{ND+f}}\hsdot{\circ }{.}\Varid{(\Leftrightarrow)})\;(\Varid{local2trail}\;\Varid{p})\;\Varid{s})\;(\Conid{Stack}\;[\mskip1.5mu \mskip1.5mu]){}\<[E]%
\\
\>[8]{}\Varid{\eta}\;((\Varid{x},\Varid{fminus}\;(\Varid{fplus}\;\Varid{s}\;\Varid{ys})\;\Varid{ys}),\Conid{Stack}\;\Varid{xs}){}\<[E]%
\\
\>[3]{}\mathrel{=}\mbox{\commentbegin ~ \Cref{eq:plus-minus} gives \ensuremath{\Varid{fminus}\;(\Varid{fplus}\;\Varid{s}\;\Varid{ys})\;\Varid{ys}\mathrel{=}\Varid{s}}   \commentend}{}\<[E]%
\\
\>[3]{}\hsindent{1}{}\<[4]%
\>[4]{}\mathbf{do}\;{}\<[8]%
\>[8]{}((\Varid{x},\anonymous ),\anonymous )\leftarrow \Varid{h_{State1}}\;((\Varid{h_{Modify1}}\hsdot{\circ }{.}\Varid{h_{ND+f}}\hsdot{\circ }{.}\Varid{(\Leftrightarrow)})\;(\Varid{local2trail}\;\Varid{p})\;\Varid{s})\;(\Conid{Stack}\;[\mskip1.5mu \mskip1.5mu]){}\<[E]%
\\
\>[8]{}\Varid{\eta}\;((\Varid{x},\Varid{s}),\Conid{Stack}\;\Varid{xs}){}\<[E]%
\ColumnHook
\end{hscode}\resethooks
\indentend \end{proof}

\label{lastpage01}
\end{document}